\newcommand{\diag}[1]{\texttt{Diag}\big({#1}\big)}
\newcommand{\diagtwo}[1]{\texttt{diag}\big({#1}\big)}
\newcommand\tab[1][1cm]{\hspace*{#1}}
\newcommand{\vardbtilde}[1]{\tilde{\raisebox{0pt}[0.85\height]{$\tilde{#1}$}}}
\newcommand{\logdet}[1]{\text{logdet}(#1)}
\newcommand\numberthis{\addtocounter{equation}{1}\tag{\theequation}}
 \newcommand{\polylog}{\mathrm{polylog}} 
\newtheorem{theorem}{Theorem}[section]
\newtheorem{corollary}{Corollary}[theorem]
\newtheorem{lemma}[theorem]{Lemma}
\newtheorem{definition}{Definition}
\newcommand{\tr}{\texttt{tr}}
\newcommand{\indic}{\mathbbm{1}}
\newcommand{\R}{\mathbb R}
\newcommand{\Rn}{\mathbb R^n}
\newcommand{\A}{\mathrm{A}_x}
\newcommand{\G}{\mathbf{G}_x}
\newcommand{\W}{\mathbf{W}_x}
\newcommand{\Lambdax}{\mathbf{\Lambda}_x}
\newcommand{\Ptwo}{\mathbf{P}^{(2)}_x}
\newcommand{\Px}{\mathbf{P}_x}
\newcommand{\Pxv}{\mathbf{\tilde P}_{x,v}}
\newcommand{\Pxz}{\mathbf{\tilde P}_{x,z}}
\newcommand{\Rxv}{{\mathrm{R}_{x,v}}}
\newcommand{\Rxz}{{\mathrm{R}_{x,z}}}
\newcommand{\Rxu}{{\mathrm{R}_{x,u}}}
\newcommand{\Sxv}{{\mathrm{S}_{x,v}}}
\newcommand{\Sxz}{{\mathrm{S}_{x,z}}}
\newcommand{\Sxu}{{\mathrm{S}_{x,u}}}
\newcommand{\sxv}{{s_{x,v}}}
\newcommand{\sxz}{{s_{x,z}}}
\newcommand{\rxv}{{r_{x,v}}}
\newcommand{\rxu}{{r_{x,u}}}
\newcommand{\rxz}{{r_{x,z}}}
\newcommand{\D}{\mathrm D}
\newcommand{\trr}{\triangleright}
\newcommand{\trrr}{\triangleright\triangleright}
\newcommand{\Wxv}{\mathbf{W'}_{x,v}}
\newcommand{\Wxz}{\mathbf{W'}_{x,z}}
\newcommand{\brackets}[1]{\Big[ #1\Big]}
\newcommand{\bracketss}[1]{\Big[ #1\Big]^{1/2}}
\newcommand{\qq}{\mathrm{q}}
\newcommand{\Ricci}{\texttt{Ricci}}
\newcommand{\Rr}{\mathrm{R}}
\newcommand{\gv}{y}
\newcommand{\sv}{s}
\newcommand{\Pxhat}{\mathbf{\hat P}_x}
\newcommand{\jon}[1]{}
\newcommand{\khasha}[1]{}
\newcommand{\santosh}[1]{}
\title{Sampling with Barriers: Faster Mixing via Lewis Weights}
\author{Khashayar Gatmiry\thanks{MIT, gatmiry@mit.edu. Part of this work was done while visiting Georgia Tech and supported by NSF award CCF-2007443.}, Jonathan Kelner\thanks{MIT, kelner@mit.edu}, Santosh S. Vempala\thanks{Georgia Tech. vempala@gatech.edu. Supported in part by NSF awards CCF-2007443 and CCF-2106444.}}
\date{}
\begin{document}

\maketitle

\begin{abstract}
    We analyze Riemannian Hamiltonian Monte Carlo (RHMC) for sampling a polytope defined by $m$ inequalities in $\R^n$ endowed with the metric defined by the Hessian of a convex barrier function. The advantage of RHMC over Euclidean methods such as the ball walk, hit-and-run and the Dikin walk is in its ability to take longer steps. However, in all previous work, the mixing rate has a linear dependence on the number of inequalities. We introduce a hybrid of the Lewis weights barrier and the standard logarithmic barrier and prove that the mixing rate for the corresponding RHMC is bounded by $\tilde O(m^{1/3}n^{4/3})$, improving on the previous best bound of $\tilde O(mn^{2/3})$ (based on the log barrier). This continues the general parallels between optimization and sampling, with the latter typically leading to new tools and more refined analysis. To prove our main results, we have to overcomes several challenges relating to the smoothness of Hamiltonian curves and the self-concordance properties of the barrier. In the process, we give a general framework for the analysis of Markov chains on Riemannian manifolds, derive new smoothness bounds on Hamiltonian curves, a central topic of comparison geometry, and extend self-concordance to the infinity norm, which gives sharper bounds; these properties appear to be of independent interest.
\end{abstract}

\newpage

\tableofcontents
\thispagestyle{empty}
\newpage
\pagenumbering{arabic}

\section{Introduction}
Generating nearly uniform random samples from a high-dimensional polytope is a fundamental algorithmic problem with a rich history and powerful applications, notably including the only known 
fully polynomial-time approximation schemes for computing a polytope's volume.
All efficient algorithms known for this problem work by designing a Markov chain whose stationary distribution is uniform over the polytope and showing that it mixes in a small number of steps.

In this paper, our main result is that we can construct such a Markov chain with an improved bound on its mixing time.
For a polytope given by $m$ linear inequalities in $\R^{n}$, we describe chain that mixes in $\tilde{O}\left(m^{1/3}n^{4/3}\right)$ steps, improving on the best previous bound of  $\tilde{O}\left(mn^{2/3}\right)$.
This allows us to approximate the volume within relative error $\epsilon$ using $\tilde{O}\left(m^{1/3}n^{4/3}/\epsilon^2\right)$ steps, which is a similar improvement over the best existing bound of $\tilde{O}\left(mn^{2/3}/\epsilon^2\right)$. 

\subsection{Background and Related Work}
In their seminal work~\cite{dyer1991random}, Dyer, Frieze and Kannan gave the first polynomial-time algorithm for this problem, as well as for the more general problem of sampling from a convex body specified by a membership oracle.
The Markov chain in their algorithm was a \emph{grid walk}, which takes steps along the edges of the graph obtained by intersecting the convex body with a discrete grid supported on $\delta \mathbb{Z}^n$ for some $\delta=1/\mathrm{poly}(n)$.
This graph is heavily dependent on the coordinate system---its diameter is proportional to the diameter of the convex body, and its conductance can be arbitrarily small if the convex body is scaled so that is very long in some directions but short in others.
However, they showed that, if one changes to a basis in which the convex body is appropriately ``well-rounded,'' the grid walk mixes in polynomial time and that one can use a random sample from the grid to obtain a one from the convex body.

The polynomial for the mixing time in~\cite{dyer1991random} was quite large, and a sequence of later papers improved this by modifying the Markov chains and refining the analysis.  
Because one often wants to draw many samples from the body, these papers typically provide two bounds on the number of steps required: a bound when starting from an arbitrary point and including the cost of any preprocessing; and a bound when given a \emph{warm start}, where the preprocessing has already been performed and the starting point is drawn from a distribution that is not too far from uniform.

In~\cite{kannan1997random}, Kannan, Lov{\'a}sz, and Simonovits showed that a \emph{ball walk} whose steps are chosen uniformly from a Euclidean ball around the current point mixes in $\tilde{O}(n^3)$ steps from a warm start and $\tilde{O}(n^5)$ steps from an arbitrary starting point and including preprocessing.
Later, Lov{\'a}sz and Vempala~\cite{lovasz2006hit} studied the ``hit-and-run'' walk, which chooses a line in a random direction from the current point and then picks the next point randomly from the intersection of this line with the body, and they showed it also mixed in $\tilde{O}(n^3)$ steps from a warm start but needed only $\tilde{O}(n^4)$ steps for first sample and preprocessing.
These algorithms work on general convex bodies presented by oracles, but like the grid walk, they are strongly coordinate dependent, and they thus require strong additional assumptions about the coordinate system.
In particular, analyses of these algorithms typically assume that body is close to isotropic, i.e., that the covariance matrix of a random sample from the body is approximately the identity, and applying these algorithms to more general bodies requires costly preprocessing.

The dependence on the coordinate system in the aforementioned Markov chains comes from the dependence of the transition probabilities on the extrinsic geometry of the ambient Euclidean space.
The impact of this extends beyond the overhead from the isotropy requirements.
The geometry of the ambient space does not incorporate any information about how close a point is to the boundary, which typically leads to difficulties making progress with steps near the boundary.
For example, if one is running a ball walk with step radius $\delta$ an $n$-dimensional cube, and the current point is some distance $d\ll \delta$ from one of the corners, a random point from the radius $\delta$ ball will lie outside the cube with probabability exponentially close to 1, so naively trying random points until obtaining one in the cube would take a large number of tries.
Moreover, even if one could sample a random point in the intersection of the ball with the cube, restricting the step to points inside the cube would distort the stationary distribution, and it would no longer be uniform.
Remedying such difficulties typically involves (depending on the paper) some combination of taking smaller steps, enlarging the convex body (and failing if the walk ends up at a point outside the original body), and employing rejection sampling or a Metropolis filter to correct the stationary probabilities, all of which increase the required number of steps.

For polytopes specified by an explicit collection of linear constraints, one can use the barrier functions employed by interior point methods to design random walks whose steps depend only on the intrinsic geometry of the polytope and are independent of the basis chosen for the ambient space.
The idea behind these random walks is to use the Hessian of the barrier function to define a local norm/Riemannian metric on the interior of the polytope and specify the steps in terms of the resulting geometry.
This mitigates some of the problems described above and has led to Markov chains whose mixing times grow with the number of constraints but depend more mildly on the dimension.

In the first such work, Kannan and Narayanan~\cite{kannan2012random} introduced the \emph{Dikin walk} and gave a mixing time bound of $O(mn)$ from a warm start for a polytope with $m$ facets in $\R^n$.
This walk is similar to the ball walk, but it chooses its steps from Dikin ellipsoids, which are balls with respect to the Hessian of the standard logarithmic barrier function on the polytope.
In \cite{laddha2020strong}, Laddha, Lee, and Vempala studied the analogous walk with respect to any self-concordant barrier and showed that it mixes in $\tilde{O}(n\bar{\nu})$ steps, where $\bar{\nu}$ is a parameter they called the 
\emph{barrier parameter}. By bounding this parameter for a different barrier function (a variant of a barrier due to Lee and Sidford \cite{lee2014path}), they obtained an improved mixing rate bound of $\tilde{O}(n^2)$.

In 2017, Lee and Vempala~\cite{lee2017geodesic} reduced the mixing rate to $\tilde{O}\left(mn^{3/4}\right)$ using  a process they called the \emph{geodesic walk}.
Like in the Dikin Walk, the steps are constructed using the Hessian of a barrier function.
However, instead of using this to define a Euclidean ellipse, they use it to define a Riemannian metric, and they then solve a differential equation in each step to follow geodesics on the resulting manifold.
These geodesics tend to curve away from the polytope's boundary, which lets them take longer steps in each iteration.

In 2018, Lee and Vempala~\cite{lee2018convergence} improved this to $\tilde{O}\left(mn^{2/3}\right)$ using Riemannian Hamiltonian Monte Carlo (RHMC)~\cite{girolami2011riemann}, which is the class of processes we'll use in this paper.
While there is a large literature on using RHMC and related methods to sample \emph{smooth} densities~\cite{dalalyan2017further,durmus2019analysis,chewi2021analysis,vempala2019rapid,li2022mirror,chewi2022log}, there are relatively few provable results about applying it in constrained non-smooth settings like polytope sampling. 
Roughly speaking, this improvement over the geodesic walk came from RHMC's ability to avoid the use of a Metropolis filter, which the geodesic walk requires in order to obtain the correct stationary distribution (even when the target distribution is uniform).
RHMC chooses its trajectories according to a different differential equation that, remarkably, yields a reversible random walk with the desired stationary distribution, thus eliminating the need for a Metropolis filter and allowing greater progress in each step.

Advances in self-concordant barriers in the past decade as well as the improvement in the analysis of the Dikin walk suggest that a smaller dependence on $m$, the number of inequalities, which can be much higher than the dimension, should be possible. Nevertheless, improving on the bound of $mn^{2/3}$ has been a major open problem for the past 5 years. Moreover, the new techniques developed as a result of progress on non-Euclidean algorithms suggest that this is a fertile area for further TCS research.

\begin{table}[h]
    \centering{}%
    \begin{tabular}{|c|l|c|}
    \hline 
    Year  & Algorithm & Steps\tabularnewline
    \hline 
    \hline 
    1997 \cite{kannan1997random} & Ball walk\textsuperscript{\#} & $n^{3}$ (+$n^5$) \tabularnewline
    \hline 
    2003 \cite{lovasz2006hit} & Hit-and-run\textsuperscript{\#} & $n^{3}$ (+$n^4$) \tabularnewline
    \hline 
    2009 \cite{kannan2012random} & Dikin walk & $mn$ \tabularnewline
    \hline 
    2017 \cite{lee2017geodesic} & Geodesic walk & $mn^{3/4}$ \tabularnewline
    \hline
    2018 \cite{lee2018convergence} & RHMC with log barrier & $mn^{2/3}$ \tabularnewline
    \hline
    2020 \cite{laddha2020strong} & Weighted Dikin walk & $n^2$ \tabularnewline
    \hline 
    2021 \cite{jia2021reducing} & Ball walk\textsuperscript{\#} & $n^2$ (+$n^3$)\tabularnewline
    \hline 
    \textcolor{red}{This paper} & \textcolor{red}{RHMC with Hybrid barrier} & \textcolor{red}{$m^{1/3}n^{4/3}$} \tabularnewline
    \hline 
    \end{tabular}\caption{\label{tab:sampling}The complexity of uniformly sampling a polytope from
    a warm start. All algorithms have a logarithmic dependence on the warm start parameter and each uses $\widetilde{O}(n)$
    bit of randomness. The entries marked \protect\textsuperscript{\#}
    are for general convex bodies presented by oracles, while the rest
    are for polytopes. The additive terms are pre-processing costs for rounding the polytope.}
\end{table}

\subsection{Background on Riemannian Hamiltonian Monte Carlo}

The motivation for RHMC comes from the Hamiltonian formulation of classical Newtonian mechanics.
Hamiltonian mechanics parameterizes a physical system in terms of a \emph{position} vector $x$ and a corresponding \emph{momentum} vector  $v$ (which is also referred to as ``velocity'' in some prior work on sampling polytopes with RHMC).
The physics of the system are encoded in its \emph{Hamiltonian} $H(x,v)$, which is simply the energy of the system written as a function of $x$ and $v$, and its time evolution is determined by \emph{Hamilton's equations}:
\begin{align*}
    \frac{dx}{dt} & =\frac{\partial H}{\partial v}(x,v)\\
    \frac{dv}{dt} & =-\frac{\partial H}{\partial x}(x,v).
\end{align*}
With the appropriate choice of $H$, these reproduce Newton's laws of motion, but they also generalize quite broadly, including to Riemannian manifolds.

In RHMC, one defines a Markov chain by choosing a Hamiltonian that appropriately encodes the target distribution.
At each step, the Markov chain chooses a random momentum vector and then finds the next point by numerically solving a differential equation to follow the trajectory given by Hamilton's equations.

One can show that the value of the Hamiltonian (i.e., the energy) and the volume element in the space of pairs $(x,v)$ are conserved along the trajectory, which can be used to show that the trajectories are preserved by time reversal (i.e., running time backwards).
One can then use this to show that, if one uses the Hamiltonian defined below, the marginal distribution of $x$ will converge to the desired target distribution without requiring a Metropolis filter.  (See~\cite{girolami2011riemann} for the derivation for general RHMC and~\cite{lee2018convergence} for the specific class of Hamiltonians given below.) 

More precisely, let the Hamiltonian at a point $x\in \R^n$ for a vector $v \in \R^n$ be defined as 
\begin{align}
    H(x,v)=f(x)+\frac{1}{2} v^{\top}g^{-1}(x) v+\frac{1}{2}\log\det g(x)\label{eq:hmcode},
\end{align}
where $g(x)$ is a positive definite matrix defining a Riemannian metric at each point $x$ as $\|u\|_g \triangleq \|u\|_{g(x)} \triangleq \sqrt{u^\top g(x) u}$, and the target density to be sampled is proportional to $e^{-f}$ restricted to the support of $g$. One step of RHMC consists of the following: first pick $v$ from the Gaussian $\mathcal N(x,g(x)^{-1})$. Then for time $\delta$ follow the Hamiltonian curve jointly on $(x,v)$:
\begin{align}
\frac{dx}{dt} & =\frac{\partial H}{\partial v}(x,v)=g^{-1}(x)v\nonumber \\
\frac{dv}{dt} & =-\frac{\partial H}{\partial x}(x,v)=-\nabla f(x)+\frac{1}{2}\mbox{tr}(g(x)^{-1}Dg(x))-\frac{1}{2} Dg(x)\left[\frac{dx}{dt},\frac{dx}{dt}\right].\label{eq:ham}
\end{align}

The final $x$ at time $\delta$ is the sampled point from the Markov Kernel. A natural choice for the metric $g$ turns out to be the Hessian of a self-concordant barrier function inside the polytope $\mathcal P$.
The standard logarithmic barrier, $\phi_\ell(x)=-\sum_{i=1}^m \log (a_i^\top x - b_i)$, was used in \cite{lee2018convergence} to prove that the resulting RHMC mixes in $mn^{2/3}$ steps. Improving on this bound is our motivating open problem. 

Using the log barrier implies that the mixing rate has a linear dependence on $m$, the number of inequalities. So we have to look for a ``better" barrier, and what exactly this entails will become clear presently. 
As we will see below, the barrier parameter of the self-concordant function, which is $m$ for the logarithmic barrier, plays an important role in the mixing time of this Markov chain.
Given that there are efficiently-computable barriers for which this parameter is $O(n)$~\cite{lee2014path}, one might hope to obtain faster mixing by simply replacing the logarithmic barrier with one of these.
However, it turns out that just bounding the barrier parameter is insufficient, and we need to choose a barrier that also possesses certain stronger smoothness and stability properties.
One of our primary technical challenges will be to define a notion that is stringent enough to guarantee the stronger properties required while still admitting a construction that improves upon the logarithmic barrier.

\subsection{Results}

In this paper, we use a hybrid barrier based on the $p$ Lewis weight barrier $\phi_p$ defined as
\begin{align}
    \phi_p(x) \triangleq \log\det{ \left( \A^\top\W^{1-2/p}\A \right) },\label{eq:lewisbarrier}
\end{align}
where $\W$ is a diagonal matrix whose diagonal entries are the $p$-Lewis weights of the rescaled matrix $\A = S_x^{-1}\mathrm A$ and $S_x$ is the diagonal matrix whose entries are the slacks at point $x$, i.e., $(S_x)_{ii} = a_{i}^\top x-b_i$.

We define a hybrid barrier $\phi$ for a polytope as follows.
\begin{definition}[Hybrid barrier]\label{def:hybridbarrier}
We define the {\em hybrid} barrier $\phi$ inside a polytope $Ax \ge b$ as
\begin{equation*}
    \phi(x) \triangleq  -\left(\frac{m}{n}\right)^\frac{2}{p+2}\left(\log\det{\A^\top\W^{1-2/p}\A} + \frac{n}{m} \sum_i \log(s_i)\right),\numberthis\label{eq:hybridbarrier}
\end{equation*}
where $s_i = a_i^\top x - b_i$ are the slacks at point $x$. We denote the normalizing factor of $\phi$ by $\alpha_0\triangleq (\frac{m}{n})^{\frac{2/p}{1+2/p}}$. 
\end{definition}
For background on Lewis weights see Section~\ref{sec:prelim}.
Our main theorem is a bound on the mixing rate of RHMC with this hybrid barrier. 

\begin{theorem}[Mixing]\label{thm:mixing}
Given a polytope 
$\mathcal P$, let $\pi$ be the distribution with density proportional to $e^{-\alpha \phi(x)}$ over the open set inside $\mathcal P$. Then,
RHMC with stationary distribution $\pi$ on the manifold of the open set inside $P$ equipped with metric $g$ defined by the Hessian of the hybrid barrier $\phi$ 
with $p=4-(1/\log(m))$ has mixing rate bounded by 
\[
\min\{\alpha^{-1}n^{2/3}+\alpha^{-1/3}n^{5/9}m^{1/9} + n^{1/3}m^{1/6}, n^{4/3}m^{1/3}\}.
\]
In particular, for the uniform distribution over $\mathcal P$ (with $\alpha=0$), the mixing rate is
\begin{align*}
    \tilde O\left(m^{1/3}n^{4/3}\right).
\end{align*}
More specifically, 
the Markov chain starting at $\pi_0$ reaches $\pi_t$ with TV-distance at most $\epsilon$ to the target after
\begin{align*}
    O\left(m^{1/3}n^{4/3}\log(M/\epsilon) \log(M)\right)
\end{align*}
steps, where $M \triangleq sup_{x \in P} \frac{d\pi_0(x)}{d\pi(x)}$ and $\tilde O$ hide $\polylog(m)$ factors.
\end{theorem}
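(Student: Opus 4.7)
The plan is to follow the standard conductance-based route to mixing for Riemannian HMC, while working out the new ingredients required to replace the log barrier by the hybrid Lewis-weight barrier $\phi$. By a Lovász--Simonovits type inequality, a lower bound on the conductance $\Phi$ of the chain yields a mixing time of $O(\Phi^{-2}\log(M/\epsilon))$, so it suffices to establish $\Phi \gtrsim (m^{1/3}n^{4/3})^{-1/2}$ (up to $\polylog$ factors). The step size $\delta$ of the Hamiltonian integrator is the single free parameter that I will tune, and I would choose it so that all error sources balance at the target rate; roughly $\delta = \tilde\Theta(m^{-1/3} n^{-4/3})$ after squaring the relevant $\delta^{1/2}$-scale.

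The conductance bound itself proceeds by the familiar overlap/isoperimetry dichotomy. First I would invoke a Riemannian isoperimetric inequality for $\mathcal P$ equipped with the metric $g = \nabla^2 \phi$, using that the hybrid barrier is self-concordant and has a bounded barrier parameter $\bar\nu \asymp \alpha_0 n$; this gives a Cheeger constant on the order of $1/\mathrm{diam}_g(\mathcal P)$. Second, I would show that if $x,y$ lie in a good subset $\mathcal G \subset \mathcal P$ with $\pi(\mathcal G) \ge 1/2$ and have Riemannian distance $d_g(x,y) \le r = \tilde\Theta(\delta^{1/2})$, then the one-step kernels $P_x, P_y$ satisfy $\|P_x - P_y\|_{\mathrm{TV}} \le \tfrac34$. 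Combining these two statements in the usual way yields the conductance bound and hence the mixing claim, with the more refined $\alpha$-dependent statement coming from tracking the entropy functional $e^{-\alpha \phi}$ through the same argument.

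The overlap statement is the technical core and is where the new ingredients of the paper feed in. Since $v \sim \mathcal N(0, g(x)^{-1})$, a coupling of Gaussians reduces the overlap bound to three claims: (i) with high probability over $v$, the Hamiltonian trajectory $(x(t), v(t))$ stays inside $\mathcal G$ for $t \in [0,\delta]$ and the metric varies slowly along it; (ii) the endpoint map $(x,v)\mapsto x(\delta)$ has Jacobian determinant close to $1$ under $O(r)$-perturbations of the initial condition in the $g$-metric; (iii) the pushforward density of $v$ under this map is stable under the $x\!\to\!y$ shift. Item (i) would be controlled by combining a self-concordance-based bound on how fast $g(x(t))$ can change with the new $\ell_\infty$ self-concordance properties of $\phi$ established earlier in the paper, which upgrade the standard operator-norm bounds into per-coordinate control on slacks and on Lewis weights. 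Item (ii) requires high-order smoothness of the Hamiltonian curve; I would estimate the Jacobi field along $(x(t),v(t))$ using the Ricci curvature of $g$ together with the stability of Lewis weights, which keeps the contribution of $d\W_x/dx$ under control when differentiating the Hamiltonian system \eqref{eq:ham}.

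The main obstacle will be establishing the smoothness bounds of item (ii) for the hybrid barrier. The log-barrier analysis of \cite{lee2018convergence} can be redone term by term with closed-form derivatives; here, derivatives of $g = \nabla^2\phi$ pick up additional cross-terms from $d\W_x/dx$, and these must be bounded using the implicit definition of Lewis weights and their stability under reweighting. A second difficulty, specific to attaining the $m^{1/3}$ improvement, is that operator-norm self-concordance alone forces $\delta$ down to the $mn^{2/3}$ scale. To push past this, I would replace norm-wise self-concordance by the coordinatewise/$\ell_\infty$ version promised in the introduction: this converts per-facet slack changes into $\|\cdot\|_\infty$ bounds on the vector of log-slacks, which, together with the Lewis-weight choice $p = 4 - 1/\log m$ that balances the barrier parameter against smoothness, tightens both the trajectory containment argument and the Jacobian stability argument. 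Once all three items are in place at the chosen $\delta$, optimizing $p$ and applying the conductance-to-mixing reduction produces the claimed $\tilde O(m^{1/3}n^{4/3}\log(M/\epsilon))$ bound; the $\alpha$-dependent version follows by tracking the extra $e^{-\alpha \phi}$ factor in the overlap and isoperimetric steps.
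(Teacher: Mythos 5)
Your overall route is the same as the paper's: bound conductance via (a) isoperimetry of $e^{-\alpha\phi}$ in the metric $g=\nabla^2\phi$, using a symmetric barrier parameter $\bar\nu\asymp\alpha_0 n\asymp n^{2/3}m^{1/3}$ at $p=4-1/\log m$, and (b) a one-step overlap bound for nearby starting points, proved by following a family of Hamiltonian curves with common endpoint and controlling the Gaussian weight of the initial velocity together with the Jacobian of the endpoint map through Ricci/curvature-type quantities, whose stability is extracted from the third-order $\ell_\infty$ self-concordance of the hybrid barrier. This is exactly the paper's decomposition (isoperimetry theorem, $(R_1,R_2,R_3)$-normality of nice curves, stability of niceness along the variation, one-step coupling, conductance-to-mixing), so the structure of your plan is sound.

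There is, however, a genuine quantitative gap in your parameter accounting, and as stated it does not yield the claimed rate. With your own isoperimetric constant $\psi\asymp\bar\nu^{-1/2}\asymp n^{-1/3}m^{-1/6}$, the conductance is $\Omega(\psi r)$ where $r$ is the distance at which one-step kernels overlap, and mixing is $\tilde O\!\left(\log(M)(\psi r)^{-2}\right)$; to reach $m^{1/3}n^{4/3}$ you need $r\asymp n^{-1/3}$, and in this scheme the overlap radius is of the same order as the integration time, so each RHMC step must be run for time $\delta=\tilde\Theta(n^{-1/3})$ — not $\delta=\tilde\Theta(m^{-1/3}n^{-4/3})$, and not with $r\asymp\delta^{1/2}$. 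Plugging your stated choices into your own reduction gives a bound of order $\psi^{-2}\delta^{-1}$ (or $(\psi\delta^{1/2})^{-2}$), i.e.\ roughly $n^{2}m^{2/3}$, far worse than claimed. The real content hidden behind this numerology is what the smoothness analysis must deliver: for $(c,\delta)$-nice curves one needs $R_1\lesssim(c^2+\alpha\sqrt{\alpha_0})\sqrt n$ for the Frobenius norm of the variational operator, $R_2\lesssim(c^2+\alpha\sqrt{\alpha_0})n$ for the derivative of its trace, and $R_3\lesssim c^2(\sqrt n+n\delta)+n\delta c\alpha\sqrt{\alpha_0}$, so that the constraints $\delta^2R_1\le1$, $\delta^3R_2\le1$, $\delta^2R_3\le1$, together with stability of niceness as the starting point moves, admit $\delta\asymp n^{-1/3}$; this is precisely why the $\ell_\infty$ self-concordance (which gives $\tilde O(1)$ rather than $\sqrt n$ control in random directions) is indispensable, and it is the step your sketch leaves implicit. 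Once those bounds are established and the step size is balanced as above, your argument coincides with the paper's, including the $\alpha$-dependent refinement obtained by keeping the $\alpha\sqrt{\alpha_0}$ terms and the $\sqrt\alpha$ isoperimetry.
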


Note that without a warm start, the $\log(M)$ dependence in Theorem~\ref{thm:mixing} could be another factor of $n$ to the mixing time. However, applying the Gaussian Cooling framework~\cite{cousins2018gaussian} extended to manifolds~\cite{lee2018convergence} lets us sample from $e^{-\alpha \phi}$ for any $\alpha$ without a warm start penalty, and also allows us to compute the volume of the polytope without a significant overhead.

\begin{corollary}[Any start; Volume]\label{cor:warmstart}
For the manifold Gaussian Cooling scheme in~\cite{lee2018convergence} with the hybrid barrier \eqref{eq:hybridbarrier} applied to sample from the density $e^{-\alpha \phi(x)}$ inside a given polytope starting from $\arg\min \phi(x)$, the total number of RHMC steps for any $\alpha \geq 0$ is bounded by
\begin{align*}
\tilde{O}\left(m^{1/3}n^{4/3}\log(1/\epsilon)\right),
\end{align*}
Moreover, to compute the integral of $e^{-\alpha \phi}$ in the polytope and in particular the volume of the polytope up to multiplicative error $1\pm \epsilon'$, the total number of RHMC steps is bounded by $\tilde O(m^{1/3}n^{4/3}/\epsilon'^2)$.
\end{corollary}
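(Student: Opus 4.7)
The plan is to reduce both parts to Theorem~\ref{thm:mixing} by running an annealing chain whose warm-start parameter remains $O(1)$ at every stage, following the manifold Gaussian Cooling template of~\cite{lee2018convergence} (which extends~\cite{cousins2018gaussian}). I would pick a schedule $\alpha_0 > \alpha_1 > \cdots > \alpha_K$ ending at the target weight ($\alpha_K = \alpha$ for the sampling claim, $\alpha_K = 0$ for the volume claim), with $\alpha_0$ large enough that $\pi_{\alpha_0}\propto e^{-\alpha_0\phi}$ is sharply concentrated around $x^\star := \arg\min\phi$. Strong self-concordance of the hybrid barrier then makes the Gaussian approximation at $x^\star$ (equivalently, the point mass at $x^\star$ viewed through one smoothing step) an $O(1)$-warm start for $\pi_{\alpha_0}$, so the algorithm can be initialized deterministically with no extra warm-start penalty.

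At phase $i$ I would sample from $\pi_i\propto e^{-\alpha_i\phi}$ via RHMC initialized from the previous phase. The gaps $\alpha_{i-1}-\alpha_i$ are chosen to enforce $M_i := \sup_x d\pi_{i-1}(x)/d\pi_i(x) = O(1)$; the standard computation, using the variance bound $\mathrm{Var}_{\pi_i}(\phi)=\tilde O(n)$ inherited from the self-concordance estimates in the body of the paper, shows that a geometric schedule with ratio $1-\Theta(1/\sqrt{n})$ suffices, yielding $K=\tilde O(\sqrt n\,\log(\alpha_0/\alpha_K))$ phases. Each phase applies Theorem~\ref{thm:mixing} with its $\alpha_i$-dependent rate $\tilde O(\alpha_i^{-1}n^{2/3}+\alpha_i^{-1/3}n^{5/9}m^{1/9}+n^{1/3}m^{1/6})$, capped by $\tilde O(m^{1/3}n^{4/3})$. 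On a geometric schedule the moderate-$\alpha_i$ contributions telescope and the sum is dominated by the saturated regime near $\alpha_K$, where the per-phase cost is $\tilde O(m^{1/3}n^{4/3})$; including the final high-accuracy phase that brings the sample to TV-distance $\epsilon$ gives the claimed $\tilde O(m^{1/3}n^{4/3}\log(1/\epsilon))$ for the first part.

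For the volume claim I would use the identity $\int e^{-\alpha_K\phi} = \int e^{-\alpha_0\phi}\cdot\prod_{i=0}^{K-1} \mathbb E_{\pi_i}\!\left[e^{-(\alpha_{i+1}-\alpha_i)\phi}\right]$, estimating the base integral by the Laplace approximation at $x^\star$ and each expectation by the empirical mean of samples drawn by RHMC from $\pi_i$. Because $M_i = O(1)$, the summand $e^{-(\alpha_{i+1}-\alpha_i)\phi}$ has squared coefficient of variation $O(1)$ under $\pi_i$, so a Chebyshev argument on the product variance requires $\tilde O(K/\epsilon'^2)$ total samples. Each sample costs $\tilde O(m^{1/3}n^{4/3})$ amortized (the $\log(M)$ factor in Theorem~\ref{thm:mixing} is a constant for subsequent samples within the same phase), giving a total of $\tilde O(m^{1/3}n^{4/3}/\epsilon'^2)$ RHMC steps.

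The main obstacle is not the cooling bookkeeping but checking that every Riemannian input underlying~\cite{lee2018convergence} transfers cleanly to the hybrid barrier in place of the log barrier: one needs uniform one-step Wasserstein/cross-ratio contraction for RHMC, the variance bound $\mathrm{Var}_{\pi_i}(\phi)=\tilde O(n)$ that fixes the schedule length, and a polylog-universal isoperimetric constant ensuring the conductance lower bound behind Theorem~\ref{thm:mixing} holds at every $\alpha_i$. All three quantities must be propagated through the strong and infinity-norm self-concordance bounds and the Hamiltonian-curve smoothness estimates developed earlier in the paper, and the real work beyond Theorem~\ref{thm:mixing} is showing that each of them degrades by at most $\polylog(m)$ across the whole schedule.
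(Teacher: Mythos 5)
Your high-level plan (anneal in $\alpha$, keep the warm-start parameter $O(1)$, invoke Theorem~\ref{thm:mixing} per phase) is the same template the paper uses, but the quantitative accounting has a genuine gap, and it is exactly where the corollary's exponents come from. The paper does not use a uniform geometric schedule with $K=\tilde O(\sqrt n\log(\cdot))$ phases; it uses the two-regime schedule of~\cite{lee2018convergence} driven by the barrier parameter $\nu=\alpha_0 n$ of the hybrid barrier: in the concentrated regime ($\sigma^2\le \nu/n$) one takes $\sqrt n$ phases per doubling of $\sigma^2$ with $\tilde\Theta(\sqrt n/\epsilon^2)$ samples per phase, and in the flat regime the schedule accelerates to $\sigma^2\mapsto\sigma^2(1+\sigma/\sqrt\nu)$, i.e.\ only $\sim\sqrt\nu/\sigma$ phases per doubling with $\sim(\sqrt\nu/\sigma+1)/\epsilon^2$ samples each. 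Crucially, the per-sample cost charged in each regime is the $\alpha$-dependent rate $\tilde O(\alpha^{-1}n^{2/3}+\alpha^{-1/3}n^{5/9}m^{1/9}+n^{1/3}m^{1/6})$ from Theorem~\ref{thm:mixing}, not the cap $m^{1/3}n^{4/3}$: in the concentrated regime ($\alpha\ge 1/\alpha_0$) a sample costs only about $\alpha_0 n^{2/3}\approx m^{1/3}n^{1/3}$, and it is the product $\sqrt n \times \sqrt n/\epsilon^2 \times m^{1/3}n^{1/3}$ (resp.\ $(\sqrt\nu/\sigma)\times(\sqrt\nu/\sigma)/\epsilon^2\times\sigma^2 n^{2/3}$ in the flat regime) that balances to $\tilde O(m^{1/3}n^{4/3}/\epsilon^2)$ per doubling. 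Your accounting breaks on both ends. For the volume part, with per-term squared coefficient of variation $O(1)$ a Chebyshev argument needs on the order of $K/\epsilon'^2$ samples \emph{per phase} (so $K^2/\epsilon'^2$ in total), not $K/\epsilon'^2$ in total; and even taking your figures at face value, $K/\epsilon'^2$ samples at amortized cost $m^{1/3}n^{4/3}$ gives $\tilde O(m^{1/3}n^{11/6}/\epsilon'^2)$, which misses the claimed bound by $\sqrt n$. The correct bound is recovered only because most samples are drawn in high-$\alpha$ phases where mixing is a factor $n$ cheaper.

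The same issue affects your first claim: with a fixed $1-\Theta(1/\sqrt n)$ schedule you still have $\sim\sqrt n$ phases per doubling in the flat (small-$\alpha$) regime, where each phase costs up to the cap $m^{1/3}n^{4/3}$, so sampling would cost $\tilde O(\sqrt n\, m^{1/3}n^{4/3})$ unless the schedule accelerates there. So the missing ingredient is not an extra smoothness or isoperimetry estimate (those are already packaged inside Theorem~\ref{thm:mixing} and the barrier-parameter bound $\nu=\alpha_0 n$, Lemma~\ref{lem:barrierparameter}); it is the regime-dependent bookkeeping that pairs the $\alpha$-dependent mixing rate with the $\alpha$-dependent phase and sample counts of manifold Gaussian Cooling. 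Redo the sums separately for $\sigma^2\le\nu/n$ and $\sigma^2>\nu/n$ as in the paper's appendix, and both stated totals follow; without that, the bounds as you have argued them are off by polynomial factors.
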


This improves on the previous best bound of $mn^{2/3}$ due to \cite{lee2018convergence} based on the standard logarithmic barrier. The proof of Theorem~\ref{thm:mixing} requires the development of several technical ingredients. We summarize a few that are likely to be of independent interest.

The first is a new isoperimetric inequality for this hybrid barrier (see Section~\ref{sec:markovchains} for the definition of isoperimetry).

\begin{theorem}\label{thm:hybrid-iso}[Isoperimetry of Hybrid Barrier]
Let $g$ be a metric corresponding to Hessian of the hybrid barrier, with support given by a polytope defined by $m$ inequalities in $\R^n$. 

Then for $\alpha \ge 0$, the distribution with density proportional to $e^{-\alpha\phi}$  has isoperimetric constant at least 
\[
\max\{\frac{1}{\sqrt{n}}(\frac{n}{m})^{\frac{1}{p+2}} , \sqrt{\alpha}\}.
\]
\end{theorem}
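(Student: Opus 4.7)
The plan is to split the maximum into its two terms and prove each bound separately. The $\sqrt{\alpha}$ term dominates in the strongly log-concave regime and the $n^{-1/2}(n/m)^{1/(p+2)}$ term dominates for small $\alpha$ (including the uniform case $\alpha=0$).

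For the $\sqrt{\alpha}$ bound, I would use a Bakry--Émery-type argument on the Riemannian manifold $(\mathrm{int}(\mathcal P),g)$ with $g=\nabla^2\phi$. The density $e^{-\alpha\phi}\,dV_g$ is effectively $\alpha$-strongly log-concave in the Riemannian sense because $g$ is the Hessian metric of $\phi$; combined with a non-negative Ricci curvature lower bound $\mathrm{Ric}_g\succeq 0$ on the hybrid metric (a separate technical lemma that has to be proved), this yields Cheeger constant at least $\sqrt{\alpha}$ by the standard curvature-dimension ($\mathrm{CD}(\alpha,\infty)$) inequality.

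For the $n^{-1/2}(n/m)^{1/(p+2)}$ bound, I would bound the Riemannian diameter $D$ of the polytope under the hybrid metric and invoke a Cheeger-type estimate $\psi\gtrsim 1/D$ for log-concave distributions on manifolds with bounded-below Ricci (Buser / Payne--Weinberger style). Each ingredient of the hybrid barrier ($-\log\det(\A^\top\W^{1-2/p}\A)$ and $-(n/m)\sum\log s_i$) is $\tilde O(n)$-self-concordant, so after multiplying by the normalizing factor $\alpha_0=(m/n)^{2/(p+2)}$ the hybrid barrier has parameter $\tilde O(\alpha_0 n)=\tilde O(n(m/n)^{2/(p+2)})$, and the associated Riemannian diameter is bounded by the square root of this parameter up to logarithmic factors, giving $D\lesssim \sqrt{n}(m/n)^{1/(p+2)}$. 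Alternatively, the same bound can be derived from the Lovász--Simonovits cross-ratio isoperimetric inequality, using the pointwise lower bound $g\succeq (n/m)^{p/(p+2)}\,\A^\top\A$ (which follows from the log-barrier component alone, after dropping the PSD Lewis-weight contribution to the Hessian) together with the standard $\ell_2\to\ell_\infty$ conversion that costs a factor of $\sqrt{n}$ when relating Riemannian speed to the infinitesimal cross-ratio speed along a geodesic.

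The main technical obstacle is establishing the Ricci lower bound $\mathrm{Ric}_g\succeq 0$ for the hybrid metric, which requires a delicate analysis of the third derivatives of $\log\det(\A^\top\W^{1-2/p}\A)$: the Lewis weights $\W$ are defined implicitly through a fixed-point equation and their derivatives satisfy Riccati-type equations. This is where the infinity-norm self-concordance machinery announced in the abstract is expected to play an essential role, both in bounding the Ricci curvature and in tightening the diameter estimate to match the stated exponents without spurious polylogarithmic losses. A secondary difficulty is handling the non-PSD character of individual components of the Hessian when verifying that the lower bound $g\succeq (n/m)^{p/(p+2)}\,\A^\top\A$ actually holds pointwise.
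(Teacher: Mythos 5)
Your proposal has genuine gaps in both halves. For the $\sqrt{\alpha}$ term, the curvature-dimension route you sketch needs two things that are neither established nor needed: a Ricci lower bound $\mathrm{Ric}_g\succeq 0$ for the hybrid metric, and the claim that $e^{-\alpha\phi}$ is ``$\alpha$-strongly log-concave in the Riemannian sense.'' The latter is not automatic from $g=\nabla^2\phi$: the Riemannian Hessian of $\alpha\phi$ with respect to $g$ differs from $\alpha g$ by Christoffel corrections of the form $-\tfrac{\alpha}{2}\D g(g^{-1}\D\phi)$, and if you work with $dV_g=\sqrt{\det g}\,dx$ as reference measure the Bakry--\'Emery potential picks up an additional $\tfrac12\log\det g$ term, so ``CD$(\alpha,\infty)$'' does not follow from convexity of $\phi$ alone; moreover nonnegative Ricci is not expected to hold for barrier Hessian metrics. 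The paper avoids all of this: the $\sqrt{\alpha}$ bound is a one-line application of the Brascamp--Lieb inequality, which gives isoperimetry $\gtrsim\sqrt{\alpha}$ for $e^{-\alpha\phi}$ with respect to the metric $\nabla^2\phi$ with no curvature computation whatsoever.

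For the uniform term, your primary (diameter) route fails at the first step: the Riemannian diameter of a Hessian barrier metric is infinite (approaching a facet the length element behaves like $ds/s$, so distances to the boundary diverge logarithmically), so $D\lesssim\sqrt{n}(m/n)^{1/(p+2)}$ is false, and the Buser/Payne--Weinberger step again requires the unproven curvature hypothesis. Your alternative route names the right mechanism (Lov\'asz--Simonovits with cross-ratio distance) but uses the metric comparison in the wrong direction. To transfer cross-ratio isoperimetry to $d_g$-isoperimetry one must \emph{upper} bound $\|v\|_g$ by the infinitesimal cross-ratio speed, i.e.\ prove $\|v\|_g\lesssim\sqrt{pn}\,(m/n)^{1/(p+2)}\|s_{x,v}\|_\infty$; this is exactly Lemma~\ref{lem:ellipsebysup}, and it requires bounding the Lewis-weight block of the Hessian (via $\sum_i w_i=n$, $w_i\le 1$) rather than dropping it. Your pointwise bound $g\succeq(n/m)^{p/(p+2)}\A^\top\A$ only \emph{lower} bounds $\|v\|_g$, hence lower bounds $d_g$, which is useless for this step. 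The paper additionally needs the complementary containment $\|s_{x,v}\|_\infty\le\|v\|_g$ (Lemma~\ref{lem:infwithgnorm}), which for the hybrid barrier is not trivial: it rests on the estimate $a_i^\top g^{-1}a_i\le 1$ obtained by balancing the Lewis and log-barrier parts (Lemmas~\ref{lem:matrixcomparison} and~\ref{ginversebound}), and this is precisely where the normalization $\alpha_0=(m/n)^{2/(p+2)}$ enters; the two containments then bound the symmetric self-concordance parameter $\bar\nu\lesssim pn(m/n)^{2/(p+2)}$ and the isoperimetry $1/\sqrt{\bar\nu}$ follows from the result of Laddha--Lee--Vempala. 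Finally, note that no third derivatives, Ricci estimates, or infinity-norm self-concordance appear anywhere in the isoperimetry proof; that machinery is used for the one-step coupling, not for Theorem~\ref{thm:hybrid-iso}.
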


As part of the proof, we develop stronger self-concordance properties of the Lewis weight barrier. The usual self-concordance~\cite{nesterov1994interior} for barrier $\phi$ implies a control on the third order derivative of $\phi$ by its second derivative, which can be seen as a property of the metric $g = \nabla^2 \phi'$,
\[
-\|v\|_g g \preccurlyeq \D g(v) \preccurlyeq \|v\|_g g,
\]
where $\D g(v)$ is the directional derivative of $g$ along direction $v$. 
We will need to extend this self-concordance to third-order derivatives of $g$. 
These types of estimates for the derivatives of the metric are known as Calabi estimates in the Differential Geometry literature~\cite{szekelyhidi2014introduction,wang2006schauder}.

\begin{lemma}[Manifold self-concordance of Hybrid barrier]\label{lem:calabitype}
The hybrid barrier is third-order self-concordant with respect to the manifold's metric $g$, namely 
\begin{gather*}
    -\|v\|_{g}g \preccurlyeq \D g(v) \preccurlyeq \|v\|_{g} g\\ 
    -\|v\|_{g}\|z\|_{g} g \preccurlyeq \D^2g(v,z) \preccurlyeq \|v\|_{g}\|z\|_{g} g\\
    -\|v\|_{g}\|z\|_{g}\|u\|_{g} g \preccurlyeq \D^3g(v,z,u) \preccurlyeq \|v\|_{g}\|z\|_{g}\|u\|_{g} g.
\end{gather*}
\end{lemma}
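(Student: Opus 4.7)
My plan is to reduce the three claimed inequalities to analogous inequalities for the Lewis-weight log-det barrier $\phi_p$ and the logarithmic barrier $\phi_\ell$ separately, using the decomposition
\[
\phi = -\alpha_0\bigl(\phi_p + (n/m)\phi_\ell\bigr).
\]
Since $g = \nabla^2\phi$ is a nonnegative combination of the two Hessians, any matrix bound $\pm D^k g_i \preccurlyeq \bigl(\prod_j \|v_j\|_{g_i}\bigr)\,g_i$ upgrades to the same bound for the combined $g$, by the monotonicity $g_i \preccurlyeq g$ and $\|\cdot\|_{g_i} \leq \|\cdot\|_g$. This removes the prefactor $\alpha_0$ and lets me work with the bare barriers. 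The first-order inequality is ordinary self-concordance, which is straightforward once self-concordance of each barrier is established and which I expect has already been verified earlier in the paper.

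For the log-barrier piece I would compute explicitly: $\nabla^2\phi_\ell = \A^\top\A$, and each higher differential produces a diagonal weighted sum of rank-one terms $a_ia_i^\top/s_i^{k+2}$ with coefficients of the form $(\A v)_i(\A z)_i(\A u)_i$. These coefficients are pointwise controlled by the infinity-norm self-concordance of the slacks announced in the abstract, so each such matrix is dominated by $\A^\top\A$ up to the product of metric norms, exactly as required.

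The main obstacle is the Lewis-weight piece, because $\phi_p$ depends on $x$ through $\W$ which is defined only implicitly by the fixed-point equation $w_i = \sigma_i(\W^{(1-2/p)/2}\A)$. Linearizing this equation writes $D\W(v)$ as the solution of a well-conditioned linear system whose operator is $I$ minus a stable contraction (this is the standard framework in the Lewis-weight literature), so $D\W$, $D^2\W$, and $D^3\W$ can all be extracted by iterating implicit differentiation. After unfolding the chain rule for $g = \nabla^2\phi_p$, each $D^k g$ takes the shape $\A^\top \mathbf{D}_k \A$ where $\mathbf{D}_k$ is a diagonal built from slack derivatives $(\A v)_i$, iterated Lewis-weight derivatives $D^j\W$, and entries of the projection matrix $\Px$ and its Schur square $\Ptwo$. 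Each piece is controlled by the standard inequality $\Ptwo \preccurlyeq \W$, the infinity-norm self-concordance of the slacks, and, crucially, by the lower-order versions of the very lemma being proved. The third-derivative bound is the most delicate step, since it requires combining three iterated implicit derivatives of $\W$ without losing any factor of $w_i$; the natural strategy is induction on $k$, using the $(k-1)$-st bound to absorb all cross terms arising from differentiating $\W$ one more time.
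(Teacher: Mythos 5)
Your high-level architecture (differentiate the Lewis weights implicitly, control everything through $\Ptwo \preccurlyeq \W$ and infinity-norm bounds on slacks, handle the log barrier explicitly, combine) matches the paper's machinery, but your opening reduction contains a genuine gap. You propose to discard $\alpha_0$ and prove, for each piece separately, the metric-norm bound $\pm\D^k g_i \preccurlyeq \bigl(\prod_j\|v_j\|_{g_i}\bigr)g_i$, in particular for the bare Lewis-weight Hessian $g_1$ with its own norm $\|\cdot\|_{g_1}$. The bounds your computation will actually produce (and the ones the paper produces in Section~\ref{sec:secondorderselfconcordance} and Appendix~\ref{sec:thirdorderselfconcordance}) are in terms of $\|s_{x,v}\|_\infty$, and converting them to $\|v\|_{g_1}$ requires $\|s_{x,v}\|_\infty \lesssim \|v\|_{g_1}$, i.e.\ that the Dikin ellipsoid of the bare Lewis barrier sits inside the symmetrized polytope. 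For $p<4$ this fails: the paper points out explicitly that a large portion of this ellipsoid lies outside the symmetrized polytope, equivalently $a_i^\top g_1^{-1}a_i$ can be orders of magnitude larger than $1$, since $\A^\top\W\A$ and $\A^\top\W^{1-2/p}\A$ are not spectrally comparable for small $p$. The paper's own proof of this lemma is precisely designed around this obstacle: it first proves the infinity-norm statement (Lemma~\ref{lem:thirdorderself}) for the full hybrid metric $g$, and then deduces the $\|\cdot\|_g$ version from the single comparison $\|v\|_{x,\infty}\leq\|v\|_g$ (Lemma~\ref{lem:infwithgnorm}), which in turn rests on $a_i^\top g^{-1}a_i\leq 1$ (Lemma~\ref{ginversebound}) — a bound obtained by balancing the Lewis part against the $\frac{n}{m}$-scaled log part and using exactly the $\alpha_0$ rescaling you remove at the start. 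So the two barrier pieces and the normalization cannot be decoupled at the point where the norm conversion happens; they can only be decoupled at the level of the infinity-norm estimates.

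Two smaller points. First, dropping $\alpha_0$ is not a harmless normalization: since $\|v\|_g=\sqrt{\alpha_0}\,\|v\|_{g''}$ with $\alpha_0\geq 1$, proving the scale-free inequality for the unscaled metric is strictly stronger than the stated lemma, and it is not what the paper's estimates give (they give $\pm\D^kg''\preccurlyeq \alpha_0^{k/2}\prod_j\|v_j\|_{g''}\,g''$ after conversion). Second, your claim that the linearized fixed-point system for $\D\W$ is "$I$ minus a stable contraction" is exactly where the hypothesis $p<4$ enters: the relevant operator is $\G=\frac{2}{p}\W+(1-\frac{2}{p})\Ptwo$, and the $\|\cdot\|_{\infty\to\infty}$ bound $\|\G^{-1}\W s\|_\infty\leq\frac{1}{4/p-1}\|s\|_\infty$ (Lemma~\ref{lem:ginfnorm}) is what makes all the iterated derivative bounds close in the infinity norm; your outline should make this dependence explicit, as the argument breaks for $p\geq 4$.
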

Here $\preccurlyeq$ is the L\"{o}wner ordering between matrices ignoring logarithmic factors.
The Calabi-type estimates in Lemma~\ref{lem:calabitype} turn out to be insufficient to improve the mixing rate. Hence, as one of our main contributions, we develop a new type of self-concordance, where instead of the local norm $\|.\|_g$, we measure the spectral change of the metric in a different local norm $\|.\|_{x,\infty}$. An intuitive description of $\|.\|_{x,\infty}$ is via its unit ball; namely, $\|.\|_{x,\infty}$ is the unique norm whose unit ball is the symmetrized polytope $\mathcal P \cap 2x - \mathcal P$ around $x$, as illustrated in Figure~\ref{fig:symmetrizedpolytope}. ($2x - \mathcal P$ is the reflection of $\mathcal P$ around $x$.) 
\begin{lemma}[Infinity norm Third-order Self-concordance of Hybrid barrier]\label{lem:thirdorderself}
The hybrid barrier, defined in~\eqref{eq:hybridbarrier}, is third-order self-concordant with respect to the local infinity norm $\|.\|_{x,\infty}$. Namely, 
\begin{gather*}
    -\|v\|_{x,\infty}g \preccurlyeq \D g(v) \preccurlyeq \|v\|_{x,\infty} g,\\ 
    -\|v\|_{x,\infty}\|z\|_{x,\infty} g \preccurlyeq \D^2 g(v,z) \preccurlyeq \|v\|_{x,\infty}\|z\|_{x,\infty} g,\\
    -\|v\|_{x,\infty}\|z\|_{x,\infty}\|u\|_{x,\infty} g \preccurlyeq \D^3g(v,z,u) \preccurlyeq \|v\|_{x,\infty}\|z\|_{x,\infty}\|u\|_{x,\infty} g.\numberthis\label{eq:infinityselfconcordance}
\end{gather*}
\end{lemma}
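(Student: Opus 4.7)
The plan is to decompose the hybrid barrier into its two constituents and verify the three inequalities of \eqref{eq:infinityselfconcordance} for each separately. From Definition~\ref{def:hybridbarrier},
\[
\phi(x) = -\alpha_0 \log\det\bigl(\A^\top \W^{1-2/p}\A\bigr) + \frac{\alpha_0 n}{m}\,\phi_\ell(x),
\]
where $\phi_\ell(x) = -\sum_i \log s_i$ is the standard logarithmic barrier. Writing the Hessian as a positive combination $g = c_1 g_{LS} + c_2 g_\ell$ of the two individual Hessians, it suffices to establish \eqref{eq:infinityselfconcordance} with $g$ replaced by either $g_\ell$ or $g_{LS}$, since $\preccurlyeq$ is preserved under non-negative linear combinations. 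The recurring scalar bound I would use throughout is that $D_v := \diagtwo{a_i^\top v/s_i}$ satisfies $\|D_v\|_{\mathrm{op}} = \|v\|_{x,\infty}$.

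For the log-barrier piece the computation is immediate by induction: one checks that
\[
\D^k g_\ell(v_1,\dots,v_k) = (-1)^k (k+1)!\, A^\top S_x^{-1}\, D_{v_1}\cdots D_{v_k}\, S_x^{-1}\, A
\]
for $k=1,2,3$. Since the diagonal product satisfies $-\prod_j \|v_j\|_{x,\infty}\,I \preccurlyeq D_{v_1}\cdots D_{v_k} \preccurlyeq \prod_j \|v_j\|_{x,\infty}\,I$, pre- and post-multiplying by $S_x^{-1}A$ yields the required sandwich of $\D^k g_\ell$ between $\pm\prod_j\|v_j\|_{x,\infty}\,g_\ell$ up to constants absorbed into $\preccurlyeq$.

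For the Lewis-weight piece $g_{LS} = -\nabla^2\log\det(\A^\top \W^{1-2/p}\A)$ the argument is substantially more involved because $\W$ is defined implicitly by the fixed-point equation $w_i = \sigma_i(\W^{1/2-1/p}\A)$. I would first differentiate this fixed point once to obtain a linear system
\[
\bigl(I - (1 - \tfrac{2}{p})\,\Ptwo\bigr)\,\frac{\dot w}{w} \;=\; q(v), \qquad |q(v)_i| \lesssim \frac{|a_i^\top v|}{s_i} \le \|v\|_{x,\infty},
\]
where $\Ptwo$ is the Schur square of the Lewis-weighted projection $\Px$. Since $\Ptwo$ is entrywise non-negative with row sums equal to the Lewis weights themselves, it acts substochastically on the weighted $\ell_\infty$ space, and for $p = 4 - 1/\log m$ the operator $(I - (1-2/p)\Ptwo)^{-1}$ is bounded on $\ell_\infty(\W)$ with operator norm $O(\polylog(m))$ (absorbed into $\preccurlyeq$). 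This gives $\|\dot w/w\|_\infty \lesssim \|v\|_{x,\infty}$. Differentiating the fixed point a second and third time reproduces the same left-hand operator with right-hand sides that are products of already-controlled lower-order derivatives, so by induction $\|\D^k w/w\|_\infty \lesssim \prod_j \|v_j\|_{x,\infty}$ for $k = 1, 2, 3$.

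Finally I would feed these Lewis-weight derivative bounds, together with the pointwise slack bounds $|\D s_i(v)|/s_i \le \|v\|_{x,\infty}$, into the chain-rule expansion of $\D^k g_{LS}$. Every differentiation either hits a slack or a weight and so pulls out a diagonal rescaling bounded in operator norm by $\|v_j\|_{x,\infty}$, while the matrix structure that survives is an appropriately rescaled copy of $g_{LS}$ plus lower-rank projection corrections absorbed into $\preccurlyeq$. The hardest part will be the bootstrap through the implicit equation: $(I - (1-2/p)\Ptwo)^{-1}$ is applied once per differentiation, and this is precisely where the infinity-norm framework becomes essential, since $\ell_\infty$ bounds compose cleanly under pointwise multiplication whereas the $\|\cdot\|_g$ bounds used in Lemma~\ref{lem:calabitype} would lose a factor of order $\sqrt{m/n}$ per derivative. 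A careful bookkeeping of the $\ell_\infty(\W)$-boundedness of $(I - (1-2/p)\Ptwo)^{-1}$, combined with distinguishing diagonal $\ell_\infty$ terms from genuine matrix objects at each step, closes the induction at the third derivative and yields \eqref{eq:infinityselfconcordance}.
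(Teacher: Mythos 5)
Your high-level skeleton matches the paper's: split $g$ into the log-barrier and Lewis-weight Hessians, handle the log barrier by the explicit formula (your $k=1,2,3$ coefficients $-2,6,-24$ are exactly what the paper computes in Lemma~\ref{lem:logbarrierself}), and for the Lewis part derive $\ell_\infty$ control of the weight derivatives via the fixed point, which is where $p<4$ enters (the paper's Lemma~\ref{lem:ginfnorm}, whose constant $\tfrac{1}{4/p-1}=O(\log m)$ matches your $\polylog(m)$ claim). Two points, one minor and one substantive. Minor: your justification of the inverse-operator bound is not quite the right mechanism. The system the paper actually inverts is $Gy=\W s$ with $G=\tfrac{2}{p}\W+(1-\tfrac{2}{p})\Ptwo$, a \emph{plus} sign in front of $\Ptwo$; the $\ell_\infty$ bound comes from requiring $\tfrac{2}{p}>1-\tfrac{2}{p}$, i.e.\ $p<4$, and degrades as $\tfrac{1}{4/p-1}$. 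A Neumann-series/substochasticity argument for an operator of the form $I-(1-\tfrac2p)(\cdot)$ would converge for every $p$ and so cannot explain why $p<4$ is essential; as written your sketch does not isolate the correct reason the threshold appears.

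Substantive gap: the final step, where you claim that after feeding the $\ell_\infty$ bounds on $\D^k w/w$ and $\D s_i/s_i$ into the chain rule ``the matrix structure that survives is an appropriately rescaled copy of $g_{LS}$ plus lower-rank projection corrections absorbed into $\preccurlyeq$,'' is precisely the part that does not follow from diagonal bookkeeping and is where the bulk of the paper's proof lives. Differentiating $g_1$ produces genuinely non-diagonal objects: $\D\Px(v)=-\Px\Rxv-\Rxv\Px+2\Px\Rxv\Px$, and the metric derivatives contain Hadamard products such as $\A^\top\big(\Px\odot(\Px\Rxv\Px)\big)\G^{-1}\Lambdax\A$ and, at third order, terms like $\Px\odot(\Px\Rxu\Px\Rxv\Px)$. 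These cannot be ``absorbed into $\preccurlyeq$'' by citing operator-norm bounds on diagonal factors; one must prove spectral domination by $\prod_j\|v_j\|_{x,\infty}\,\A^\top\W\A$ term by term, which in the paper requires the Hadamard--L\"{o}wner lemmas (Lemmas~\ref{lem:hadamardlowner},~\ref{lem:lownerhelper},~\ref{lem:ptildebound}), the reduction $\|r_{x,v}\|_\infty\lesssim\|s_{x,v}\|_\infty$, and repeated Cauchy--Schwarz splittings of asymmetric cross terms (the $(\trr)$/$(\trrr)$ case analysis of Lemmas~\ref{lem:gderivative},~\ref{lem:star4},~\ref{lem:gsecondderivativebound} and Appendix~\ref{sec:thirdorderselfconcordance}). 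Your proposal names the right scalar inputs but asserts rather than supplies this matrix-level argument, so as it stands the induction does not close at the second derivative, let alone the third.
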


In fact, the norm $\|.\|_{x,\infty}$ measures the ratio of the change of the distance to the $i$th facet after taking step $v$ divided by the distance to facet $i$, then taking maximum of this ratio over all facets. 
These estimates will allow us to prove important smoothness properties of certain quantities on the manifold that we are interested in. In the following, we sometimes refer to our notion of strong third-order self-concordance as infinity norm self-concordance, as it involves the local norm $\|.\|_{x,\infty}$.

\begin{figure}[th]
\centering
\begin{subfigure}{0.48\textwidth}
\centering
\includegraphics[height=1.23\linewidth]{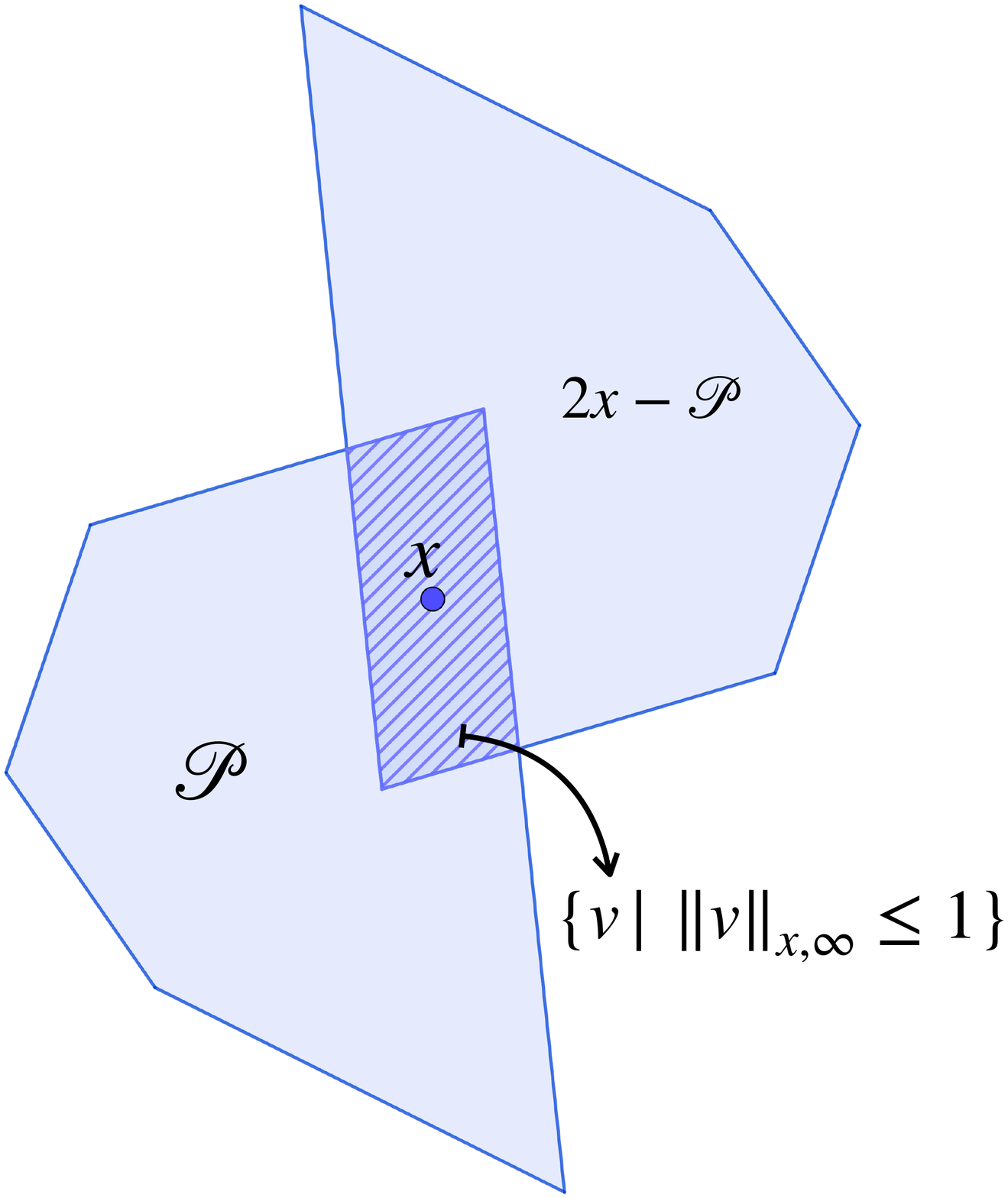}
\caption{The unit ball of the local norm $\|.\|_{x,\infty}$ is the symmetrized polytope around $x \in \mathcal P$.}
\label{fig:symmetrizedpolytope}
\end{subfigure}
\begin{subfigure}{0.48\textwidth}
\centering
\includegraphics[width=0.95\linewidth]{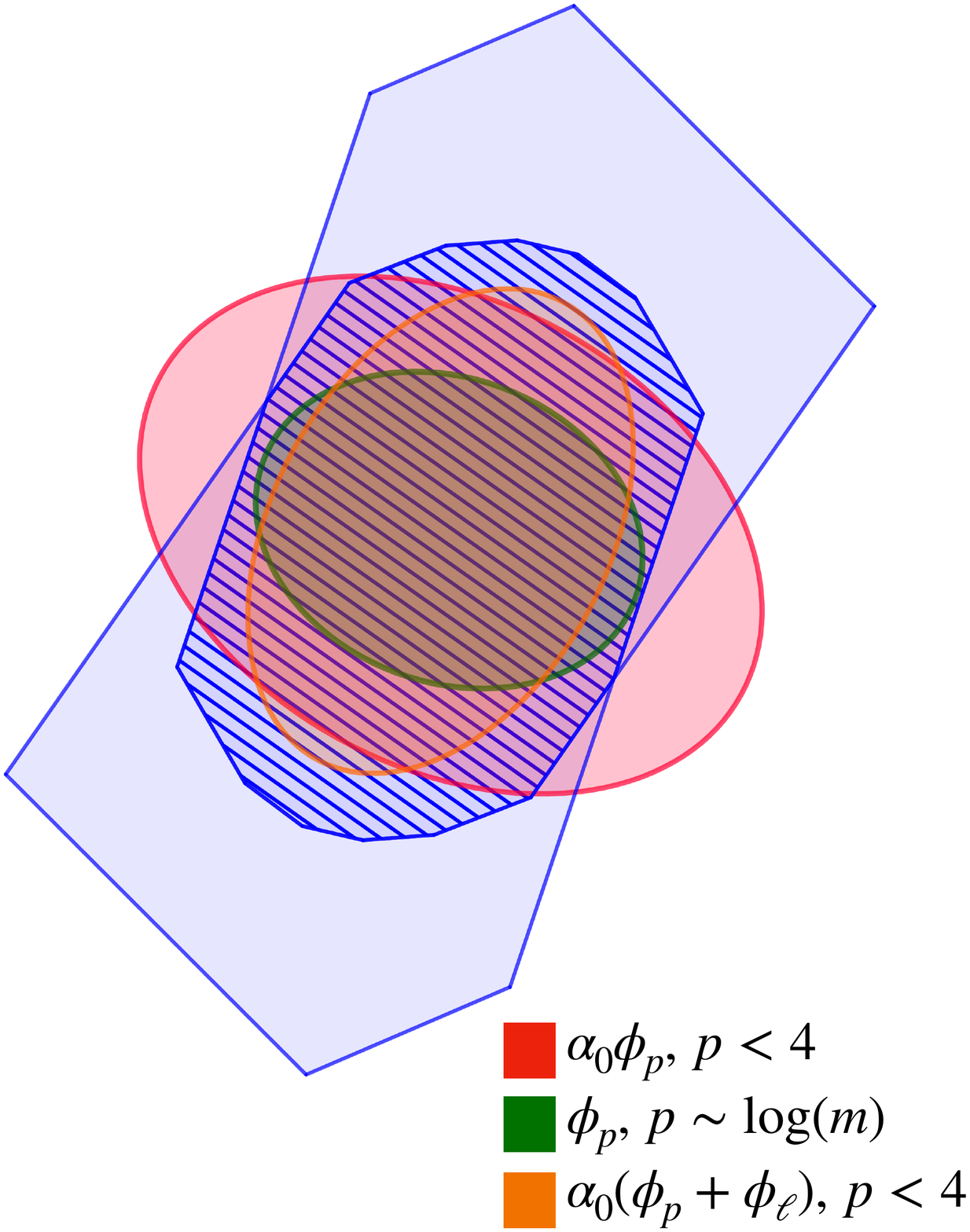}
\caption{The Lewis weights barrier with $p\simeq \log(n)$, with $p < 4$, and our hybrid barrier $\phi$ which is regularized with the log barrier.}
\label{fig:comparisonofellipsoids}
\end{subfigure}
\end{figure}


\subsection{Technical overview}\label{sec:technicaloverview}

\paragraph{Mixing and Conductance.} Our general approach to bounding the mixing rate is based on bounding the conductance~\cite{lovasz1993random}. 
The standard approach to bounding the conductance of geometric walks of this type
is to show an isoperimetric inequality for the underlying metric space and then prove that steps of the random walk behave well with respect to the underlying metric.  
Formally, we show two properties for the manifold $\mathcal M$ obtained by equipping the interior of the polytope $\mathcal P$ with the metric $g = \nabla^2 \phi$: 
\begin{itemize}
   \item {\bf Isoperimetry.} 
The target density  $e^{-\alpha \phi(x)}$ has a good isoperimetry constant on $\mathcal M$.
 \item {\bf One-step Coupling.} The one-step distributions of the Markov chain given two close-by points $x_0, x_1$ on the manifold are close in TV-distance. Namely, for some parameter $\delta > 0$, after excluding a tiny set $S^c \subseteq \mathcal M$, given any two points $x_0, x_1 \in \mathcal S$ with $d(x_0, x_1) \leq \delta$   we show 
    \begin{align}
     TV(\mathcal{T}_{x_0}, \mathcal{T}_{x_1}) \leq 0.01,\label{eq:onestepcoupling}
    \end{align}
    where  $\mathcal{T}_x$ denotes the Markov kernel starting from $x$.   
\end{itemize}

\paragraph{Isoperimetry.}  The log barrier metric gives an isoperimetric coefficient of $1/\sqrt{m}$, which leads to a factor of $m$ in the conductance. In principle, this can be improved to $\tilde{O}(n)$ by using a barrier with barrier parameter $\nu = \tilde{O}(n)$, as the general bound on the isoperimetry is $1/\sqrt{\nu}$ for any strongly self-concordant barrier with barrier parameter $\nu$~\cite{laddha2020convergence}. While the universal and entropic barriers have $\nu=O(n)$, they are expensive to compute. The LS barrier~\cite{lee2014path} has $\nu=\tilde{O}(n)$ while being efficient to compute. However, as we will see in more detail, as far as we know, the metric and its derivatives are not ``smooth" enough in most of the directions in the tangent space, which means we would have to take rather small steps while running RHMC. 

We will prove that the hybrid barrier has significantly better isoperimetry (Thm.~\ref{thm:hybrid-iso}) than the log barrier while maintaining sufficient smoothness.

\paragraph{Smoothness of Hamiltonian Curves and Comparison Geometry.}
The starting point of our analysis is the fact that one can look at the ordinary differential equation of RHMC in Equation~\eqref{eq:ham} as a second-order ODE on the manifold $\mathcal M$ of the open set inside the polytope with metric $g$. We will introduce this alternative form shortly. Looking at the Markov Kernel $\mathcal{T}_{x_0}$ of RHMC for a fixed point $x_0$, the randomness to define this kernel comes from the initial velocity $v_0$, which can be viewed as a vector on the tangent space of $x_0$ on the manifold $\mathcal M$ distributed as a standard Gaussian with respect to the local metric, namely $\mathcal N(0,g(x)^{-1})$ in the Euclidean chart. In order to show the One-step Coupling (Lemma ~\ref{eq:onestepcoupling}) for the Markov kernel of RHMC, we bound the difference between the densities $\mathcal{T}_{x_0}(y)$ and $\mathcal{T}_{x_1}(y)$ at a given point $y$ on the manifold. These densities are the pushforwards of the Gaussian density in the tangent space of $x_0$ and $x_1$ respectively, onto the manifold through the Hamiltonian map $Ham^{\delta}(x_0, v_{x_0})$ for some fixed time $\delta$, which maps the initial velocity $v_{x_0}$ to the solution of the ODE $y = x(\delta)$ at time $\delta$. The key to bound the change of density is to understand how the Hamiltonian curves vary as we change the initial point from $x_0$ to $x_1$ for a  fixed destination $y$, given the particular geometry imposed by our hybrid barrier inside a polytope. In fact, understanding the extremal scenarios of the behavior of geometric quantities on a certain class of manifolds is the topic of Comparison Geometry~\cite{cheeger1975comparison}~\cite{petersen2006riemannian}~\cite{ballmann2000riemannian}. In particular, to argue that the Hamiltonian curve changes sufficiently slowly, we need the metric $g$ of the manifold and its derivatives to be ``stable". 
 The simplest form of stability of the metric is the so-called self-concordance property, namely, $g$ is self-concordant if the derivative of $g(x)$ in a unit direction in the tangent space is controlled by $g$ itself. This type of self-concordance for the first derivative of the metric is already known for the $p$-Lewis weights barrier~\cite{lee2019solving}. However, this notion of stability is too weak for our use since a typical Gaussian vector $v$ in the tangent space of $x$ has norm of order $\|v\|_g \sim \sqrt n$. Nonetheless, one can hope to obtain estimates for $Dg(v)$ with respect to a different norm whose value is typically much smaller than the $\|.\|_g$ norm. 
 We show that self-concordance of the metric of the $p$-Lewis weights barrier for $p < 4$ with respect to the infinity norm of a re-parameterized version of $v$ is effective for characterizing the stability of Hamiltonian curves. 
 This local infinity norm, which we denote by $\|.\|_{x,\infty}$, can be regarded as the maximum ratio of the length of $v$ projected onto the normal of a facet divided by the distance of $x$ from that facet; its unit ball is the symmetrized polytope $\mathcal P \cap 2x-\mathcal P$ around $x$. Importantly, 
 one can see that for a typical Gaussian vector $v \sim \mathcal N(0,g^{-1})$, $\|v\|_{x,\infty}$ is of order $\tilde O(1)$ instead of $\sqrt n$. In fact, the $\|.\|_{x,\infty}$ norm of the tangent vector to the RHMC curve remains small for all times with high probability. This is favorable as we need a bound on the rate of change of the density only for typical values of $v$ and can ignore sets with small probability in bounding the conductance.  
 An important part of our contribution is to derive self-concordance estimates for the derivatives of the metric of the $p$-Lewis weights for $p < 4$ up to third order, with respect to this $\|.\|_{x,\infty}$ local norm. We introduce our approach up to second order self-concordance in Section~\ref{sec:secondorderselfconcordance} and defer the third-order self-concordance to Appendix~\ref{sec:thirdorderselfconcordance}. 
Although the number of terms that are created from differentiating the Lewis weights metric up to third order grows quite large, many subtensors are common, which enables us to treat in a similar fashion. 
To avoid repetition, we gather the common L\"{o}wner inequalities that we use for various matrices in section~\ref{sec:derivativestability} which we reuse to prove the self-concordance of the $p$ Lewis weights barrier. The infinity norm third-order self-concordance of the hybrid barrier follows from combining the infinity norm third-order self-concordance of the $p$-Lewis weights barrier and the log barrier (see section~\ref{sec:secondorderselfconcordance}).

The $p < 4$ threshold is essential to obtain our estimates. In particular, we can still control the derivative of the metric $Dg(v)$ with respect to $\|v\|_g$ for the LS barrier, which is a $p$ Lewis weights barrier for polylogarithmically large $p$, but it is an overestimate of the $\|.\|_{x,\infty}$ norm with high probability for a Gaussian vector in the tangent space of $x$. 
Nonetheless, for small $p$'s the ellipsoid of the $p$-Lewis weights does not approximate the symmetrized polytope as well as larger $p$'s; in particular a large portion of the ellipsoid lies outside the symmetrized polytope. This means that we need to scale down the unit norm ellipsoid so that it fits inside the polytope, which then means we have to to scale it up by a larger constant to make it contain the symmetrized polytope. As a result, the barrier parameter is large (see~\cite{laddha2020strong} for definition of barrier parameter), which in turn results in a poor isoperimetric constant.

We would like to have an ellipsoid at each point $x$ inside the polytope that approximates the symmetrized polytope around $x$ more accurately and is also stable as $x$ moves in random directions. For this, we go back to an idea of Vaidya from optimization and use a hybrid barrier by ``regularizing" the $p$-Lewis weight barrier for $p < 4$ with the standard log barrier
We can give a better bound on the barrier parameter of this hybrid barrier compared to the log barrier, which implies that the corresponding metric has better isoperimetry. Moroever, the regularization does not harm the stability of the metric as the log barrier already enjoys stability with respect to the local infinity norm $\|\|_{x,\infty}$. In particular, we show that our hybrid barrier has stable higher-order derivatives in arbitrary directions based on the local norm $\|.\|_{x,\infty}$. The particular choice of our barrier is essential to simultaneously prove third order infinity-norm self-concordance and good isoperimetry.

\paragraph{Hamiltonian curves and variations.} 
To see the high-level idea of how we show the one-step coupling of the Markov kernel, consider the shortest path between two points $x_0$ and $x_1$, which is a geodesic on the manifold. Geodesics are generalization of straight lines in the Euclidean space to arbitrary manifolds and naturally define the curve with the smallest possible length between two points on the manifold. Let the curve $\gamma_s$, parameterized by $s \in [0,s']$, be a length-minimizing geodesic connecting $x_0 = \gamma_0$ to $x_1 = \gamma_{s'}$ with distance $d(x_0, x_1)$. 
Suppose that running the Hamiltonian ODE with initial location $x_0 \in \mathcal P$ and initial velocity $v_{x_0}$ up to time $\delta$ takes us to a point $y$ on the manifold. As we start moving toward $x_1$ on the geodesic, $\gamma_s$ parameterized by $s \in [0,s']$, we consider the variation of the initial Hamiltonian curve; namely a family of Hamiltonian curves parameterized by $s$, where the $s$-curve starts from point $\gamma_s$, perhaps with a different initial velocity $v_{\gamma_s}$, but ends up to the same destination $y$ at time $\delta$.  The geodesic $\gamma_s$ from $x_0$ to $x_1$ and the corresponding Hamiltonian curves are illustrated in Figure~\ref{fig:hamiltonianfamily}. 

Looking at the the value of the density $\mathcal{T}_{\gamma_s}(y)$ at point $y$ after taking one step of the Markov chain starting from $\gamma_s$, we observe it depends on two major components: (1) the Gaussian density of the initial velocity $v_{\gamma_s}$ which is proportional to $\exp{\{-\frac{\|v_{\gamma_s}\|_{g}^2}{2}\}}$, and (2) the determinant of the Jacobian or the differential of the map from the initial velocity $v_{\gamma_s}$ to the destination point $y$, denoted by $J^{v_{\gamma_s}}_{y}$. Therefore, to study how quickly the density $\mathcal{T}_{\gamma_s}(y)$ changes from $x_0$ to $x_1$, we need to study the rate of change of the initial velocities $v_{\gamma_s}$ and the Jacobians $J^{v_{\gamma_s}}_{y}$; the latter  will depend on the rate of change of the Ricci tensor on the manifold. To study the variation of the Hamiltonian curve, we start by defining these manifold concepts.

As we mentioned earlier, one can identify the location variable $x$ in the Hamiltonian ODE~\eqref{eq:ham} as a point on the manifold $\mathcal M$ with metric $g$, and the velocity variable $v$ as a vector in the tangent space of $x$, $T_x(\mathcal M)$. Then, one can write the Hamiltonian ODE in Equation~\eqref{eq:ham} as a second-order ODE on the manifold $\mathcal M$ using the covariant derivative of $\mathcal M$, illustrated in Lemma~\ref{lem:ham_curve}. For background on Riemannian geometry and covariant differentiation, we refer the reader to Appendix~\ref{sec:riemanniangeometry}.

\begin{lemma}\label{lem:ham_curve} 
The Hamiltonian ODE in Equation~\ref{eq:ham} can be written using the covariant derivative of the manifold in a simplified form:
\begin{align}
    \nabla_{\gamma'(t)} \gamma'(t) = \mu(\gamma(t)).\label{eq:hamiltonianmanifold}
\end{align}
Above, $\nabla$ is the covariant derivative and $\mu(x)$ is the bias (drift) vector field of the Hamiltonian curve, defined as
\begin{align}
    \mu(x) \triangleq g^{-1}\D f(x) - \frac{1}{2}g(x)^{-1}\tr[g(x)^{-1}\D g(x)],\label{eq:biasdefinition}.
\end{align}
\end{lemma}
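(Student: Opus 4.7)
The plan is a direct coordinate computation translating Hamilton's equations~\eqref{eq:ham} into a second-order ODE on the manifold. Using Euclidean coordinates $x^i$ and writing $g_{ij}, g^{ij}$ for the components of $g$ and $g^{-1}$, the first Hamilton equation is equivalent to $v_k = g_{kl}(x)\dot x^l$. Differentiating this identity in $t$ gives $\dot v_k = (\partial_i g_{kl})\,\dot x^i\dot x^l + g_{kl}\,\ddot x^l$. Substituting the second Hamilton equation for $\dot v_k$ and solving for $g_{kl}\,\ddot x^l$ yields
\[
g_{kl}\,\ddot x^l \;=\; -\partial_k f \;-\; \tfrac12\,\tr(g^{-1}\partial_k g) \;-\; \tfrac12\bigl(\partial_i g_{kj} + \partial_j g_{ki} - \partial_k g_{ij}\bigr)\dot x^i \dot x^j,
\]
where the trilinear $\dot x\,\dot x$ term is obtained by symmetrizing $(\partial_i g_{kl})\dot x^i \dot x^l$ over $(i,l)$ and combining it with the $\tfrac12\dot x^\top(\partial_k g)\dot x$ term that comes from differentiating $\tfrac12 v^\top g^{-1}v$ in $x$ (using $\partial_k g^{-1} = -g^{-1}(\partial_k g)\,g^{-1}$ together with $g^{-1}v = \dot x$).

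The parenthesized quantity is precisely $2\,\Gamma_{k,ij}$, the lowered Christoffel symbol of $g$. Contracting with $g^{mk}$ raises this to $\Gamma^m_{ij}$ and moves it to the left-hand side, giving
\[
\ddot x^m + \Gamma^m_{ij}\,\dot x^i \dot x^j \;=\; -g^{mk}\partial_k f \;-\; \tfrac12\, g^{mk}\,\tr(g^{-1}\partial_k g).
\]
The left-hand side is by definition the $m$-th component of the covariant acceleration $\nabla_{\dot\gamma}\dot\gamma$, while the right-hand side is $-\mu(\gamma)^m$ in the notation of~\eqref{eq:biasdefinition} (up to the sign convention of the paper, which is fixed by a direct sign check in the differentiation of $H$). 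This is the claim.

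The derivation is routine once the identification $v = g\dot x$ is made, so the only real obstacle is index bookkeeping: one has to track carefully how two distinct $(\partial g)\dot x\dot x$ contributions --- one coming from $\partial_t(g_{kl}\dot x^l)$ and one from $\partial_k(v^\top g^{-1}v)$ --- combine, via the symmetry $\dot x^i\dot x^l = \dot x^l\dot x^i$, into exactly the symmetric combination $\partial_i g_{kj} + \partial_j g_{ki} - \partial_k g_{ij}$ defining the Christoffel symbols. Conceptually, the kinetic term $\tfrac12 v^\top g^{-1}v$ in the Hamiltonian is what produces the geodesic piece $\Gamma^m_{ij}\dot x^i\dot x^j$, the potential $f$ produces the $g^{-1}\D f$ drift, and the normalizing summand $\tfrac12\log\det g$ contributes the $\tfrac12 g^{-1}\tr(g^{-1}\D g)$ correction needed to make the marginal in $x$ reversible with stationary density proportional to $e^{-f}$. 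No special property of the hybrid barrier is used, so the lemma holds for any Hamiltonian of the form~\eqref{eq:hmcode} on an arbitrary Riemannian manifold.
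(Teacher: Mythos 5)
Your proposal is correct and follows essentially the same route as the paper's Appendix~\ref{sec:hmc} derivation: differentiate the identity $v = g(x)\dot x$ along the trajectory, substitute the second Hamilton equation, and recognize the quadratic-in-$\dot x$ terms as the Christoffel correction, so that the left side becomes $\nabla_{\dot\gamma}\dot\gamma$; the only cosmetic difference is that you symmetrize to obtain the general-metric lowered Christoffel symbols $\tfrac12(\partial_i g_{kj}+\partial_j g_{ki}-\partial_k g_{ij})$, while the paper merges the two contributions into $\tfrac12 g^{-1}\D g(\dot x)\dot x$ using the Hessian-manifold symmetry $\partial_k g_{ij}=\partial_i g_{kj}$ and Lemma~\ref{lem:cristoffelsymbols}, which makes your argument valid for an arbitrary Riemannian metric. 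Your remaining hedge about the sign of the $f$-drift is not a gap on your side: the $-g^{-1}\D f$ you obtain matches the appendix's definition of $\mu$, and the discrepancy with the $+g^{-1}\D f$ in the lemma statement (and with the signs in Equation~\eqref{eq:ham}) is an inconsistency internal to the paper.
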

In the above notation, $\tr[g(x)^{-1}\D g(x)]$ is a vector whose $i$th entry is $\tr[g(x)^{-1}\D_ig(x)]$. See Appendix~\ref{sec:hmc} for a proof of Lemma~\ref{lem:ham_curve}. The above ODE~\eqref{eq:hamiltonianmanifold} for Hamiltonian curves is similar to the second order ODE for geodesics; for the latter the bias vector $\mu$ is zero, i.e., the geodesic Equation is given by~\cite{do2016differential}
\begin{align}
    \nabla_{\gamma'(t)}\gamma'(t) = 0.\label{eq:geodesiceq}
\end{align}
In physics, the Hamiltonian ODE in Equation~\ref{eq:hamiltonianmanifold} is important as it models the motion of a particle on a manifold acting under a force field devised by $\mu$. Next, we define the notion of a family of Hamiltonian curves.


\begin{definition}[Family of Hamiltonian curves]
We say $\big(\gamma_s(t)\big)$ is a family of Hamiltonian curves ending at some fixed $y$ whose starting point varies from $x_0 = \gamma_0(0)$ to $x_1 = \gamma_{s_1}(0)$ if for every fixed time $0 \leq s \leq s_1$, $\gamma_s(t)$ is a Hamiltonian curve in $t$, and $\gamma_s(0)$ as a function of $s$ is a geodesic on $\mathcal M$ from $x_0$ to $x_1$. Unless specified otherwise, whenever we talk about the curve $\gamma_s(t)$ we mean the curve $\gamma_s(t)$ as a function of $t$ for a fixed $s$. We write $\gamma'_s(t) = \partial_t \gamma_s(t)$ to refer to the derivative of the curve with respect to $t$.
\end{definition}

Before studying the variations of Hamiltonian fields, to given some high level intuition, we start by variations of geodesics here. More precisely, suppose $\gamma_s(t)$ is a variation of geodesics, i.e. $\gamma_s(t)$ is a geodesic in $t$ for every fixed $s \in [0,s']$ (recall that the curve $\gamma_0(s)$ in parameter $s$ is also a geodesic from $x_0$ to $x_1$). For brevity, we sometimes refer to the curve $\gamma_0(t)$ by $\gamma(t)$. To see how fast the geodesics $\gamma_s(t)$ changes as a function of $s$ at time $s=0$, for a fixed $t$ we take the derivative of $\gamma_s(t)$ with respect to $s$ at time $s=0$; this gives us a vector field $J(t)$ along $\gamma_0(t)$:
\begin{align*}
    J(t) = \partial_s \gamma_0(t) = \partial_s \gamma_s(t)\Big|_{s = 0},
\end{align*}
This vector field, called a {\em Jacobi field}, is a fundamental object in studying the variations of geodesics. Importantly, one can write a second-order ODE to describe how $J(t)$ evolves along the geodesic given initial conditions $J(0), J'(0)$
\begin{align}
    D^2_tJ(t) = R(J,\gamma'(t))\gamma'(t).\label{eq:jacobiODE}
\end{align}
where the second derivative $J''(t)$ is the covariant derivative on the manifold with respect to $\gamma'_0(t)$, i.e., $D_t \triangleq \nabla_{\gamma'_0(t)}$, and $R$ is the Riemann tensor. We will provide some intuition on the role of Riemann tensor and its role in the behavior of geodesics presently. An important point to observe here is that the covariant derivative of $J$ at $t = 0$ is equal to the covariant derivative of the initial velocity of the geodesic, namely $\frac{d}{dt}\gamma_s(t)$, with respect to $s$ (see Lemma~\ref{lem:commutingderivatives} for a proof):

\begin{align}
    J'(0) = D_s \frac{d}{dt}\gamma_s(t)\Big|_{s=0,t = 0} = D_s v_{\gamma_s}\Big|_{s=0}.\label{eq:jacobiinitial}
\end{align}
So the initial values that uniquely specify the Jacobi field $J$ are $J(0)$, which specifies how fast we change the starting point of the geodesic, and $D_s v_{\gamma_s}$, which is how fast we change the initial velocity of the geodesic. This means that one can study the Jacobi field ODE to obtain estimates on how fast the initial velocity should change along the geodesic from $x_0$ to $x_1$, for this family of Hamiltonian curves with the same destination $y$. Now consider a direction $e$ perpendicular to the velocity $\gamma'(t) = \gamma'_0(t)$ of the geodesic at time $t$, i.e., $\langle \gamma'(t), e\rangle_g = 0$. Looking at the dot product of the vector $R(e,\gamma'(t))\gamma'(t)$ on the right hand side of the Jacobi field ODE in~\eqref{eq:jacobiODE} to $e$ itself, the quantity $\langle e, R(e,\gamma'(t))\gamma'(t) \rangle$ is intuitively measuring how much the Jacobi field is growing or shrinking in direction $e$, meaning whether the geodesics $\gamma_s(t)$ parameterized by $s$ are converging or diverging in direction $e$ at time $s=0$. This quantity is known as the sectional curvature of the plane spanned by $e$ and $\gamma'(t)$. Now consider a unit orthonormal parallelepiped at time $t = 0$, denoted by a set of orthonormal vectors $\{e_i\}_{i=1}^n$ in the tangent space of $\gamma(0)$, where $e_1 = \gamma'(0)$, and look at the evolution of its volume along the geodesic when each $e_i$ evolves according to the Jacobi Equation; in each directions $e_i$, the parallelepiped is either expanding or squeezing, depending on if the geodesics are converging or diverging in that direction which depends on the sign of the sectional curvature $\langle e_i, R(e_i, \gamma'(0))\gamma'(0) \rangle$. Indeed, one can characterize the rate of change of this parallelepiped along the geodesic by summing the sectional curvatures for all $\{e_i\}_{i=2}^n$; this is the Ricci curvature of the manifold at $\gamma(0)$ in the direction $\gamma'(0)$:
\begin{align*}
    \Ricci(\gamma'(0),\gamma'(0)) = \sum_{i=1}^n \langle e_i, R(e_i, \gamma'(0))\gamma'(0)\rangle. 
\end{align*}
Note that the Ricci curvature is nothing but the trace of the Riemann tensor $R(.,\gamma'(0))\gamma'(0)$.
On the other hand, the determinant of the Jacobian $J^{v_{\gamma_s}}_y$ of the Hamiltonian map, a quantity of our interest to bound the change of density from $x_0$ to $x_1$, can be characterized by the ratio of the volume of this parallelepiped at the beginning and the ending time $t$.
Indeed, we see later on that the log determinant of $J^{v_{\gamma_s}}_y$ can be written as a time-weighted integral of the Ricci curvature along the geodesic. 


One can extend these arguments to variations of Hamiltonian curves instead of geodesics. As a result, instead of the Riemann tensor in the Jacobi fields Equation~\eqref{eq:jacobiODE}, we end up with a slightly different operator $\Phi(t)$ which can be decomposed into a ``geometric part," the Riemann tensor, and a ``bias part," $M_x$, which comes from the derivative of the Hamiltonian bias $\mu(x)$, defined in Equation~\eqref{eq:biasdefinition}. We define this fundamental operator rigorously.
\begin{definition}[Operators $\Phi$ and $M_x$]\label{def:operatorPhi}
    At any point $x \in \mathcal M$, we define the operator $M_x$ as
    \begin{align*}
        \forall u \in T_x(\mathcal M), M_x(u) \triangleq \nabla_u \mu(x),
    \end{align*}
    where $\nabla$ is the covariant derivative on the manifold and $\mu$ is the Hamiltonian bias.
    Given the Hamiltonian curve $\gamma(t)$, we define the operator $\Phi(t)$ on the tangent space $T_{\gamma(t)}(\mathcal M)$ as
    \begin{align*}
        \Phi(t) \triangleq R(.,\gamma'(t))\gamma'(t) + M_{\gamma'(t)}.
    \end{align*}
    where $R$ is the Riemann tensor.
\end{definition}
 
Similar to Jacobi fields, for a given family of Hamiltonian curves $(\gamma_s(t))$, one can write a second order ODE for the variational vector field $\tilde{J}(t) = \frac{d}{ds}\gamma_s(t)$ along the Hamiltonian curve, which depends on operator $\Phi$ (for the proof see Appendix~\ref{sec:hmc}):

\begin{lemma}[ODE for Hamiltonian fields]\label{lem:hmcfieldODE}
    Given a family of Hamiltonian curves $\big(\gamma_s(t)\big)$, the vector field $\tilde J(t) \triangleq  \partial_s \gamma_s(t)\Big|_{s = 0}$ is characterized by the following second order ODE:
    \begin{align}
        \tilde J''(t) = \Phi(t)\tilde J(t),\label{eq:hamilvarODE}
    \end{align}
    where $\Phi(t)$ is defined in ~\ref{def:operatorPhi}. We refer to $\tilde J$ as a Hamiltonian field.
\end{lemma}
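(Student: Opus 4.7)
The plan is to differentiate the defining Hamiltonian ODE for $\gamma_s$ with respect to the variation parameter $s$ at $s=0$ and then convert the resulting mixed covariant derivative into the desired form via the symmetry lemma and the Riemann curvature identity for a parametrized surface. This is the standard derivation of the Jacobi equation~\eqref{eq:jacobiODE}, with a single extra term coming from the drift $\mu$.

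Concretely, I would regard $(s,t)\mapsto \gamma_s(t)$ as a parametrized surface in $\mathcal M$, with coordinate vector fields $T(s,t) = \partial_t \gamma_s(t) = \gamma'_s(t)$ and $S(s,t) = \partial_s \gamma_s(t)$, so that $\tilde J(t) = S(0,t)$ and $\gamma'(t) = T(0,t)$. By Lemma~\ref{lem:ham_curve}, each $\gamma_s$ is a Hamiltonian curve, i.e.\ $D_t T = \mu(\gamma_s(t))$, where $D_t,D_s$ denote covariant differentiation along the $t$- and $s$-parameter curves. Applying $D_s$ to both sides and then specializing to $s=0$, the right-hand side becomes $\nabla_{\tilde J(t)}\mu = M_{\gamma(t)}\tilde J(t)$ by the chain rule for covariant differentiation of a vector field pulled back to a curve, together with the definition of $M_x$ in Definition~\ref{def:operatorPhi}.

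For the left-hand side, the coordinate fields $S$ and $T$ commute ($[S,T]=0$), so torsion-freeness of the Levi-Civita connection gives the symmetry lemma $D_S T = D_T S$, and the Riemann curvature identity gives $D_S D_T T - D_T D_S T = R(S,T)T$. Combining, $D_s D_t T = D_t D_s T + R(S,T)T = D_t(D_t S) + R(S,T)T$, and evaluating at $s=0$ yields $D_s D_t T\big|_{s=0} = \tilde J''(t) + R(\tilde J(t),\gamma'(t))\gamma'(t)$. Equating with the right-hand side computed above produces $\tilde J''(t) + R(\tilde J,\gamma')\gamma' = M_{\gamma(t)}\tilde J(t)$, which, once $R$ is interpreted in the same sign convention used in~\eqref{eq:jacobiODE}, is exactly $\tilde J''(t) = \Phi(t)\tilde J(t)$ with $\Phi(t) = R(\cdot,\gamma'(t))\gamma'(t) + M_{\gamma(t)}$ as in Definition~\ref{def:operatorPhi}.

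The only mild obstacle I anticipate is sign bookkeeping for the Riemann tensor: the paper's equation~\eqref{eq:jacobiODE} is written so that pure-geodesic Jacobi fields satisfy $J''=R(J,\gamma')\gamma'$, opposite to the most common textbook convention. Fixing that convention at the outset (so that the $\mu\equiv 0$ specialization of the calculation above recovers~\eqref{eq:jacobiODE}) makes the derivation routine. The only nonstandard ingredient is the identification $\nabla_{\tilde J}\mu = M_{\gamma(t)}\tilde J$, which is immediate from Definition~\ref{def:operatorPhi}; everything else is the variational calculus of parametrized surfaces.
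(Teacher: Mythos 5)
Your argument is correct and is essentially the paper's own proof (Appendix on Hamiltonian curves): differentiate $D_t\gamma_s' = \mu(\gamma_s(t))$ in $s$, use $[\partial_s\gamma,\partial_t\gamma]=0$ together with the symmetry $D_s\partial_t\gamma = D_t\partial_s\gamma$ and the curvature commutation identity, and identify $D_s\mu(\gamma_s(t))\big|_{s=0}=\nabla_{\tilde J}\mu = M_{\gamma(t)}\tilde J$. Your explicit note about fixing the sign convention so that the $\mu\equiv 0$ case reproduces~\eqref{eq:jacobiODE} matches (and if anything is cleaner than) the paper's own sign bookkeeping, so no gap remains.
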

The difference between the ODE of Hamiltonian fields~\ref{eq:hamilvarODE} and that of Jacobi fields~\ref{eq:jacobiODE} comes from the fact that the primary Hamiltonian Equation~\eqref{eq:hamiltonianmanifold} includes an additional bias vector $\mu$ compared to the geodesic Equation~\eqref{eq:geodesiceq}.

Now similar to the case of variations of geodesics, for variation of Hamiltonian curves, the log determinant of the Jacobian of the Hamiltonian map $J_y^{v_{\gamma_s}}$ can be characterized by a weighted integral of the trace of $\Phi(t)$ instead of the Ricci tensor. Therefore, to study the rate of change of $\det (J_y^{v_{\gamma_s}})$ as we move from $x_0$ to $x_1$, we need to study the rate of change of $\tr(\Phi(t))$ along the variation of Hamiltonian curves $(\gamma_s(t))$, which in turn depends on the rate of change of the Ricci tensor and the trace of operator $M_x$, the two parts of the operator $\Phi(t)$. These ideas are formalized as the $(R_1, R_2, R_3)$-normality of the Hamiltonian curve in the definition below.

\begin{figure}[t]
\includegraphics[scale=0.54]{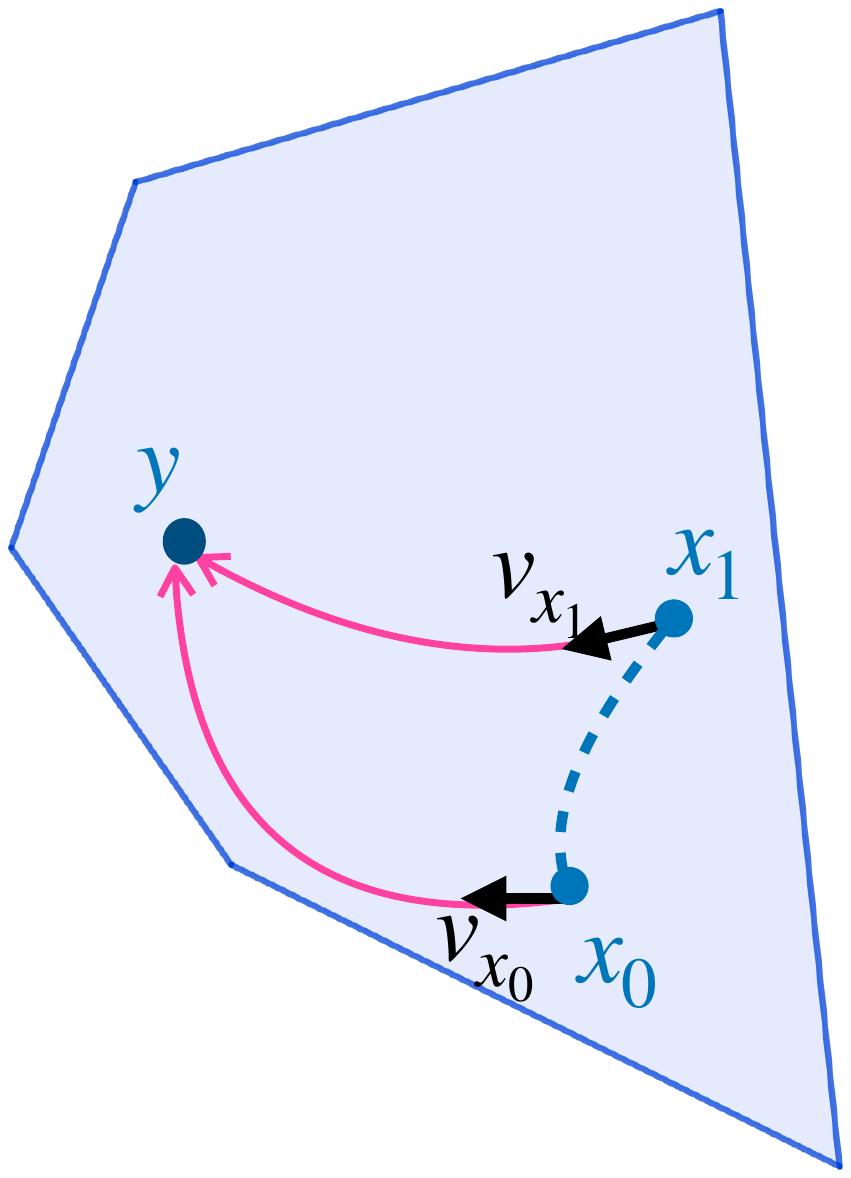}
\centering
\caption{Family of Hamiltonian curves $\gamma_s(t)$ all ending in $y$ with starting point varying from $x_0 = \gamma_0(0)$ to $x_1 = \gamma_{s'}(0)$, where $\gamma_s(0)$ is a geodesic in $s$.}
\label{fig:hamiltonianfamily}
\end{figure}

\begin{definition}\label{def:normality}
We say a Hamiltonian curve $\gamma(t)$ is $(R_1, R_2, R_3)$-normal up to time $\delta$ if for all $0 \leq t \leq \delta$ if
it satisfies the following:
\begin{itemize}
    \item Bound on the Frobenius norm of $\Phi$ (with respect to the metric $g$):
    $$\|\Phi(t)\|_F \leq R_1.$$
    \item For all times $0 \leq t \leq \delta$ and unit direction $z$ in the tangent space of $\gamma(t)$:
    \begin{align*}
        |D(tr(\Phi(t)))(z)| \leq R_2\|z\|_g.
    \end{align*}
    \item  For $\zeta(t)$ defined as the parallel transport of $\gamma'(0)$ along the curve:
    \begin{align*}
        \|\Phi(t)\zeta(t)\|_g \leq R_3.
    \end{align*}
\end{itemize}
\end{definition}

Parallel transport of a vector on the manifold is a generalization of shifting vectors in Euclidean space, using the covariant derivative of the manifold (see Appendix~\ref{sec:riemanniangeometry} for the rigorous definition.) In order to show the $(R_1, R_2, R_3)$-normal property for the family of Hamiltonian curves, we need to define a more fundamental regularity condition for the Hamiltonian curves which states that both $\|.\|_g$ and $\|.\|_{x,\infty}$ norms remain small for the tangent vector along the Hamiltonian curve. 
\begin{definition}[Nice Hamiltonian curve]\label{def:nicedef}
We say a Hamiltonian curve $\gamma(t)$ is $(\delta,c)$-nice if for $0 \leq t \leq \delta$:
\begin{align*}
    &\|\gamma'(t)\|_g \leq c\sqrt n,\\
    &\|\gamma'(t)\|_\infty \leq c.
\end{align*}
\end{definition}

In order to show the closeness of one step distributions between $x_0$ and $x_1$, we need the $(R_1, R_2, R_3)$-normality for the family of Hamiltonian curves $(\gamma_s(t))$ for all $0 \leq s \leq \delta$ as we defined in~\ref{def:normality}. Therefore, we need to show that the $(c,\delta)$-niceness property is stable for our hybrid barrier. We show this in Lemma~\ref{lem:stabilityinfnorm}, proved in Section~\ref{sec:stability}.
Our $(c,\delta)$-niceness framework is a simpler and more general framework and avoids the technical machinery of auxiliary functions on curves used in~\cite{lee2018convergence}, which needs additional parameters that need to be bounded.

\begin{lemma}[Stability of norms]\label{lem:stabilityinfnorm}
In the same setting as Theorem~\ref{thm:parameterbounds}, given a family of Hamiltonian curves $\gamma_s(t)$ for which $\gamma_0(t)$ is $(c,\delta)$-nice for 
\begin{align*}
    \delta \leq \delta' \triangleq \frac{1}{\sqrt{c^2 + \alpha \sqrt \alpha_0}n^{1/4}},
\end{align*}
then $(\gamma_s(t))$ is a $(O(c), \delta)$-nice family of Hamiltonian curves in the interval $s \in (0 , \delta)$.
\end{lemma}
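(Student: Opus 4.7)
The plan is to run a continuity (bootstrap) argument in the parameter $s$. Fix a large absolute constant $C$ and set
\[
s^* \triangleq \sup\bigl\{\sigma \in [0,\delta]\;:\;\gamma_\tau(\cdot)\text{ is }(Cc,\delta)\text{-nice for every }\tau \in [0,\sigma]\bigr\}.
\]
Continuity of $\gamma_s$ in $s$ together with the hypothesis that $\gamma_0$ is $(c,\delta)$-nice gives $s^*>0$; the aim is to show $s^*=\delta$ by deriving a contradiction from $s^*<\delta$, namely by proving that on $[0,s^*]$ the velocities $\gamma_s'(t)$ stay strictly inside the $(Cc,\delta)$-niceness region.

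First I would bound the operator $\Phi_\sigma(t)$ of Definition~\ref{def:operatorPhi} uniformly on the bootstrap interval, in both the $\|\cdot\|_g$-operator norm and an infinity-norm sense. Since $\|\gamma_\sigma'(t)\|_\infty\le Cc$ on this interval, the third-order infinity-norm self-concordance of Lemma~\ref{lem:thirdorderself} yields bounds on every derivative of $g$ along the curve in terms of powers of $Cc$; combined with formula~\eqref{eq:biasdefinition} for the drift $\mu$ (whose $\alpha$-dependent piece comes from $\nabla f=\alpha\nabla\phi$ and whose logdet piece is $\tfrac{1}{2}g^{-1}\tr[g^{-1}Dg]$), these yield $\|\Phi_\sigma(t)\|_{\mathrm{op}}\lesssim (c^2+\alpha\sqrt{\alpha_0})\sqrt n$ together with an analogous infinity-norm bound. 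Using $\delta\le\delta'$ this makes $\|\Phi\|_{\mathrm{op}}\delta^2\lesssim 1/\sqrt n\ll 1$.

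Next I would invoke Lemma~\ref{lem:hmcfieldODE}: the Hamiltonian field $\tilde J_\sigma(t)\triangleq\partial_s\gamma_s(t)\big|_{s=\sigma}$ satisfies the two-point boundary value problem
\[
\tilde J_\sigma''(t)=\Phi_\sigma(t)\tilde J_\sigma(t),\qquad \tilde J_\sigma(0)=\partial_s\gamma_s(0)\big|_{s=\sigma},\qquad \tilde J_\sigma(\delta)=0,
\]
where $\tilde J_\sigma(0)$, being the tangent to the geodesic $s\mapsto\gamma_s(0)$, has unit $g$-norm. Thanks to the smallness of $\|\Phi\|_{\mathrm{op}}\delta^2$ this is a tiny perturbation of the flat problem $\tilde J''=0$: a single Picard iteration against the Green's function $G(t,t')=-(t\wedge t')(\delta-t\vee t')/\delta$ gives $\|\tilde J_\sigma(t)\|_g=O(1)$ and $\|\tilde J_\sigma'(t)\|_g=O(1/\delta)$ for $t\in[0,\delta]$, while running the same argument with the infinity-norm version of the $\Phi$ bound yields $\|\tilde J_\sigma(t)\|_\infty=O(1/\sqrt n)$ and $\|\tilde J_\sigma'(t)\|_\infty=O(1/(\delta\sqrt n))$.

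Since $D_s\gamma_s'(t)=\tilde J_s'(t)$ by the commutation of covariant derivatives (Lemma~\ref{lem:commutingderivatives}), integrating over $\sigma\in[0,s]$ for any $s\le s^*$ gives
\[
\|\gamma_s'(t)-\gamma_0'(t)\|_g\le\int_0^s\|\tilde J_\sigma'(t)\|_g\,d\sigma=O(s/\delta)=O(1),
\]
and analogously $\|\gamma_s'(t)-\gamma_0'(t)\|_\infty=O(s/(\delta\sqrt n))=O(1/\sqrt n)$. Hence $\gamma_s$ is $(c+O(1),\delta)$-nice, strictly inside the $Cc$ threshold once $C$ is a sufficiently large absolute constant, contradicting the maximality of $s^*$ and forcing $s^*=\delta$. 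The main technical obstacle is the very first step: bounding $M_x=\nabla_{\cdot}\mu$ in the infinity-norm sense, because $\mu$ is built from $g^{-1}$, $\alpha\nabla\phi$, and $\tr[g^{-1}Dg]$, and differentiating this expression requires iterating the first and second estimates of Lemma~\ref{lem:thirdorderself} while carefully tracking how the normalization factor $\alpha_0$ of the hybrid barrier enters each term.
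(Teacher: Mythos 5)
Your overall skeleton (bootstrap in $s$, Frobenius/operator bound on $\Phi$ via Lemma~\ref{lem:R1bound}, the two-point boundary value problem of Lemma~\ref{lem:hmcfieldODE} with $\tilde J(\delta)=0$ and $\|\tilde J(0)\|_g=1$ giving $\|\tilde J\|_g=O(1)$, $\|\tilde J'\|_g=O(1/\delta)$, then integration in $s$) matches the paper's proof in Section~\ref{sec:stability}, which delegates the Green's-function step to Lemma~23 of \cite{lee2018convergence}. The gap is in how you control the \emph{infinity} norm. You assert an ``infinity-norm version'' of the bound on $\Phi$ and conclude $\|\tilde J_\sigma(t)\|_\infty=O(1/\sqrt n)$ and $\|\tilde J'_\sigma(t)\|_\infty=O(1/(\delta\sqrt n))$. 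Neither is available nor true in general: no operator bound of $\Phi$ on $\|\cdot\|_{x,\infty}$ is established anywhere (the $R_3$-type estimates are mixed-norm bounds on $\Phi$ applied to a specific parallel-transported vector, not operator bounds), and the boundary datum $\tilde J_\sigma(0)$ is a unit $g$-norm geodesic tangent in an arbitrary direction, for which Lemma~\ref{lem:infwithgnorm} only gives $\|s_{\tilde J(0)}\|_\infty\le 1$ — it can genuinely be of order $1$, so the claimed $O(1/\sqrt n)$ fails already at $t=0$. (Also, with $\delta=\delta'$ you get $\delta^2 R_1=1$, not $\lesssim 1/\sqrt n$; this is harmless since only $\delta^2R_1\lesssim1$ is needed, but it feeds the same over-optimistic $1/\sqrt n$ accounting.)

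There is a second, related omission: the quantity you must propagate is $\|s_{\gamma'_s(t)}\|_\infty=\|A_{\gamma_s(t)}\gamma'_s(t)\|_\infty$, and its $s$-derivative is \emph{not} $\tilde J'(t)$ in any norm. Since $A_x$ is rescaled by the slacks and the covariant derivative differs from the ordinary one, $\frac{d}{ds}(A\gamma') = A\nabla_{c'}\gamma' - \tfrac12 Ag^{-1}\D g(c')\gamma' - (Ac')\odot(A\gamma')$, and the last two terms must be bounded separately; the paper bounds them by $O(c)$ using first-order infinity-norm self-concordance (Lemma~\ref{lem:thirdorderself}) and $\|s_{c'}\|_\infty\le\|c'\|_g=1$. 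The repair of your argument is exactly this: do not seek small infinity-norm bounds on $\tilde J$; instead transfer the $g$-norm bound via $\|Au\|_\infty\le\|u\|_g$ (Lemma~\ref{lem:infwithgnorm}) to get $\|\frac{d}{ds}(A\gamma')\|_\infty\lesssim 1/\delta + c$, integrate over $s\in[0,\delta]$ to obtain an additive change $O(1+\delta c)=O(1)\le O(c)$, and handle the $g$-norm threshold by $\frac{d}{ds}\|\gamma'\|_g\le\|\nabla_{c'}\gamma'\|_g\lesssim1/\delta$, which is what the paper does inside the same bootstrap.
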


A major part of our contribution is that we relate this abstract notion of $(R_1, R_2, R_3)$-normality to (a generalized notion of) metric self-concordance or Calabi-type estimates,  which (1) crucially uses a different notion of norm to bound the derivatives of the metric and (2) needs to be satisfied for higher derivatives of the metric up to third order. Our framework can potentially be reused on other manifolds and distributions. 

\begin{theorem}[Smoothness]\label{thm:parameterbounds}
Given a Hessian manifold defined by the metric $g = \nabla^2 \phi$ for our hybrid barrier (see Definition~\ref{def:hybridbarrier}) for $p < 4$, define a Hamiltonian curve $\gamma(t)$ by the ODE in Equation~\eqref{eq:hamiltonianmanifold} with target log density $f = \alpha \phi$. Assume that $\gamma$ is $(c,\delta)$-nice (see definition of niceness in~\ref{def:nicedef}), then it is also $(R_1, R_2, R_3)$-normal with parameters
\begin{align*}
&R_1 = (c^2 + \sqrt \alpha_0 \alpha)\sqrt{n},\\
&R_2 = (c^2 + \sqrt{\alpha_0}\alpha)n,\\
&R_3 = c^2(\sqrt n + n\delta) + n\delta c\alpha \sqrt{\alpha_0}.    
\end{align*}
\end{theorem}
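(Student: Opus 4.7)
The plan is to expand the operator $\Phi(t) = R(\cdot,\gamma'(t))\gamma'(t) + M_{\gamma'(t)}$ on the Hessian manifold, where the Christoffel symbols take the compact form $\Gamma_{ij}^k = \tfrac{1}{2}g^{k\ell}\partial_i g_{j\ell}$, so both the Riemann tensor and $M_x = \nabla_u \mu$ expand into polynomial expressions in $g^{-1}$, $\D g$, $\D^2 g$, and (through $f = \alpha\phi$) $\D\phi$ and $\D^2\phi$. Each of these ingredients is then bounded by pairing the slots with $\gamma'(t)$, using Lemma~\ref{lem:thirdorderself} to replace $\gamma'(t)$ contributions by $\|\gamma'(t)\|_{x,\infty}\le c$ whenever possible, and using niceness to bound the remaining $g$-norm slots by $\|\gamma'(t)\|_g \le c\sqrt{n}$. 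The contribution of the target density through $\alpha\phi$ propagates a factor $\alpha$ and, via the normalizing constant of the hybrid barrier, a factor $\sqrt{\alpha_0}$ in the appropriate places.

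For $R_1$, a two-derivative application of the infinity-norm self-concordance yields the L\"{o}wner bound $\Phi(t) \preccurlyeq C(c^2 + \sqrt{\alpha_0}\,\alpha)\, g$, since both Riemann-tensor and $M_x$ pieces pair with two copies of $\gamma'(t)$ whose infinity norm is at most $c$. Converting a L\"{o}wner bound of the form $\Phi \preccurlyeq \lambda g$ into a Frobenius bound $\|\Phi\|_F = \sqrt{\tr(\Phi g^{-1}\Phi g^{-1})} \le \lambda\sqrt{n}$ gives exactly $R_1 = (c^2+\sqrt{\alpha_0}\,\alpha)\sqrt{n}$. For $R_2$, the extra directional derivative on $\tr(\Phi(t))$ invokes the \emph{third}-order infinity-norm self-concordance with two slots filled by $\gamma'(t)$ (each contributing $c$) and one slot by $z$; because we only control $\|z\|_g$ rather than $\|z\|_{x,\infty}$, I would write the derivative of the trace as a $g$-inner product against $\D g^{-1}\D g\,g^{-1}$-type tensors and apply Cauchy–Schwarz, producing the extra $\sqrt{n}$ that turns $\sqrt{n}$ into $n$, yielding $R_2 = (c^2+\sqrt{\alpha_0}\,\alpha)n$.

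The main obstacle is $R_3 = \|\Phi(t)\zeta(t)\|_g$. I would first exploit the antisymmetry $R(v,v)v = 0$: writing $\zeta(t) = \gamma'(t) + \big(\zeta(t)-\gamma'(t)\big)$, the Riemann piece applied to $\gamma'(t)$ vanishes, and the bias piece $M_{\gamma'(t)}\gamma'(t)$ expands into terms of the form $g^{-1}\D^2 g[\gamma',\gamma']\gamma'$ and $\alpha g^{-1}\D^2\phi[\gamma']$ which, by infinity-norm self-concordance evaluated on two $\gamma'$ slots with $\|\gamma'\|_{x,\infty}\le c$ and one $g$-norm slot with $\|\gamma'\|_g \le c\sqrt n$, produce the $c^2\sqrt n + c\alpha\sqrt{\alpha_0}\sqrt n$ contribution. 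The residual error $\zeta(t)-\gamma'(t)$ requires tracking the drift between the parallel transport of $\gamma'(0)$ and the Hamiltonian velocity: subtracting the geodesic equation $\nabla_{\gamma'}\zeta = 0$ from the Hamiltonian equation $\nabla_{\gamma'}\gamma' = \mu$ and integrating gives $\|\zeta(t)-\gamma'(t)\|_g \lesssim t\,(c^2\sqrt n + \alpha\sqrt{\alpha_0}\sqrt n)$, which, combined with the operator bound $\Phi(t) \preccurlyeq (c^2+\sqrt{\alpha_0}\,\alpha)g$ from the $R_1$ analysis, yields the $c^2 n\delta + c\alpha\sqrt{\alpha_0}\,n\delta$ pieces. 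The hard part is maintaining the infinity-norm control throughout this drift argument rather than slipping to the weaker $g$-norm self-concordance; Lemma~\ref{lem:stabilityinfnorm} ensures that $\|\gamma'(t)\|_{x,\infty}$ stays $O(c)$ along the curve for $t\le\delta$, which is what lets the infinity-norm self-concordance of Lemma~\ref{lem:thirdorderself} apply uniformly in $t$, and the precise bookkeeping of $\alpha$ and $\sqrt{\alpha_0}$ factors in the hybrid barrier's Hessian is essential to match the claimed parameters.
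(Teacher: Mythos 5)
There is a genuine gap, and it sits at the heart of your $R_1$ argument. You propose to prove a pointwise L\"{o}wner bound $\Phi(t)\preccurlyeq C(c^2+\sqrt{\alpha_0}\,\alpha)\,g$ and then convert it to a Frobenius bound by paying $\sqrt n$. No such L\"{o}wner bound is available: the cross term of the Riemann tensor, $g^{-1}\D g(\ell)g^{-1}\D g(\gamma')\gamma'$, pairs one slot with the \emph{test} direction $\ell$, and for a worst-case $\ell$ you can only control $\|s_{x,\ell}\|_\infty$ by $\|\ell\|_g$ (Lemma~\ref{lem:infwithgnorm}); this yields an operator/quadratic-form bound of order $c^2+c\sqrt n$, exactly as recorded in Lemma~\ref{lem:Roperatorbound}, not $c^2$. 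Starting from the correct operator bound, the conversion $\|\Phi\|_F\le \lambda\sqrt n$ gives only $\sim c\,n$, which is a $\sqrt n$ too large. The paper never takes this route: $\|\Phi\|_F$ is bounded directly as a Gaussian average $\mathbb E_{v_1,v_2\sim\mathcal N(0,g^{-1})}\langle v_1,\Phi v_2\rangle^2$ (Lemmas~\ref{lem:Rfrobeniusbound} and~\ref{lem:Mfrobeniusnorm}), where the test directions are random and hence have $\|s_{x,v_i}\|_\infty=O(1)$ in expectation; that is precisely the mechanism by which the infinity-norm self-concordance buys the $\sqrt n$ saving, and it cannot be recovered from a worst-case L\"{o}wner statement. (Also note $M_x$ does not ``pair with two copies of $\gamma'$'': $\mu$ is independent of the velocity, and its Frobenius bound comes from $\|\nabla(\alpha\phi)\|_g\le\alpha\sqrt{\alpha_0 n}$ and from the trace field $\xi=g^{-1}\tr(g^{-1}\D g)$.)

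The same unproven L\"{o}wner bound contaminates your $R_3$ step: applying it (or the true operator bound) to the drift $\zeta-\gamma'$, whose $g$-norm you correctly estimate as $\lesssim\delta(1+\alpha\sqrt{\alpha_0})\sqrt n$, produces terms quadratic in $\alpha\sqrt{\alpha_0}$ (or an extra $\sqrt n$), which do not match the claimed $c^2(\sqrt n+n\delta)+n\delta c\alpha\sqrt{\alpha_0}$ once $\alpha$ is large. The paper's Lemma~\ref{lem:R3bound} instead controls $\|s_{x,\zeta(t)}\|_\infty\le\delta c\sqrt n$ directly through an ODE for $\A\zeta$ along the curve (Lemma~\ref{lem:parallelinfnorm}) and then uses the mixed-norm bounds $\|R(\ell,\gamma')\gamma'\|_g\lesssim c^2\|\ell\|_g+c\sqrt n\|s_{x,\ell}\|_\infty$ and $\|M\ell\|_g\lesssim\|\ell\|_g+(1+\alpha\sqrt{\alpha_0})\sqrt n\|s_{x,\ell}\|_\infty$, with no decomposition $\zeta=\gamma'+(\zeta-\gamma')$ needed. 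Finally, for $R_2$ your sketch omits that in $\frac{d}{ds}\tr\Phi(t)$ the velocity $v=\gamma_s'(t)$ itself varies with $s$; the paper must separately bound $\|\D v(z)\|_g\lesssim c+1/\delta$ (Lemma~\ref{lem:vderivative}, which in turn uses the $R_1$ bound) to handle the terms where the derivative hits $v$, and this is an essential part of Lemma~\ref{lem:R2bound} that a purely ``third-order self-concordance plus Cauchy--Schwarz'' argument does not cover.
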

\begin{proof}
The result follows from the key Lemmas~\ref{lem:R1bound},~\ref{lem:R2bound}, and~\ref{lem:R3bound}.
\end{proof}

 To understand the effect of self-concordance on the density of the push-forward measure, note that the more slowly the metric changes, the more slowly the geodesics will converge or diverge from one another, so we have smaller scalar and Ricci curvatures. 
 As an example, one can see that the Ricci curvature $\Ricci(\gamma'(t),\gamma'(t))$ can be written formally using the metric and its first derivative on Hessian manifolds (see Equation~\eqref{eq:riccitensor}).
As a result, the rate of change of the Ricci tensor, which corresponds to the $R_2$ parameter in Definition~\ref{thm:parameterbounds}, depends on the derivatives of the metric $g$ up to second order, and in particular can be bounded efficiently given that the metric satisfies some form of second-order self-concordance. In this regard, a question that comes up is the following: in which norm should we measure the self-concordance of the metric? 

A key to notice here is that in measuring the change of $\Ricci(\gamma'(t),\gamma'(t))$, the Ricci tensor itself involves the change of the metric in a \textbf{random direction} as we can show that $\gamma'(t)$, the tangent of the Hamiltonian curve, is distributed as a Gaussian. Now if one uses the conventional framework of self-concordance in optimization which measures the derivative of the metric in direction $v$ with respect to its local norm $\|v\|_g$, then the typical value of the quantity $\|\gamma'(t)\|_g$ is of order $\sqrt n$. This indicates a major reason we choose to measure self-concordance in the $\|.\|_{x,\infty}$ norm, which is $\tilde O(1)$ for a typical Gaussian vector $\mathcal N(0,g^{-1})$.
Importantly, we use our third-order infinity norm self-concordance in Lemma~\ref{lem:thirdorderself} in a black-box manner to show the $(R_1, R_2, R_3)$-normality of the Hamiltonian curve. 
On the other hand, even though the log barrier satisfies this type of self-concordance with respect to $\|.\|_{x,\infty}$, it does not approximate the local geometry of the polytope well, which results in poor isoperimetry and slow mixing. For this reason, we develop infinity-norm self-concordance for the $p$-Lewis weights barrier whose local ellipsoids are better approximations for the symmetrized polytope. Our approach to develop the infinity norm self-concordance estimates crucially depends on $p < 4$. Therefore, to further enhance the isoperimetry of the metric, we regularize the Lewis weights barrier with the log barrier, which results in our final hybrid barrier in Equation~\eqref{eq:hybridbarrier}.

\paragraph{Structure of the paper.} The rest of the paper is organized as follows: 
In Section~\ref{sec:prelim} we discuss the basic tools and notation that we use throughout the paper. In Section~\ref{sec:thirdorderselfconcordance}, we give our proof of second-order infinity norm self-concordance estimates for the Lewis weights barrier (we defer the proof of strong third-order self-concordance to Appendix~\ref{sec:thirdorderselfconcordance}). In Section~\ref{sec:boundingconductance}, we bound the mixing time by combining multiple components, namely the stability of the Hamiltonian curves, the isoperimetry of the stationary distribution with respect to the chosen metric, and the smoothness of the manifold with our hybrid barrier. We relate the change of density of the Markov kernel between two points to the smoothness of the manifold.   In Section~\ref{sec:smoothness}, we show how we use infinity-norm third-order self-concordance to control the smoothness of the metric. Namely, we bound the norm of an important operator $\Phi$ related to the Riemann tensor and the Hamiltonian potential, which appears in the ODE of variations of Hamiltonian curves (parameter $R_1$). To bound the determinant of the Jacobian of the RHMC map, which is a component in the pushforward density of the Gaussian distribution in the tangent space onto the manifold, we bound the rate of change of the trace of $\Phi$, which includes the Ricci tensor (parameter $R_2$) and another component originating from the Hamiltonian bias $\mu$. Finally, we bound the norm of $\Phi$ applied to the initial velocity of the Hamiltonian curve parallel transported along the curve. In Section~\ref{sec:stability}, we prove the stability of the smoothness properties of the Hamiltonian curves as we start varying the initial location and velocity of the curve. In Section~\ref{sec:isoperimetry}, we prove an isoperimetry inequality on the Riemannian manifold $\mathcal M$ equipped with metric $g$, the Hessian of our hybrid barrier. In Appendix~\ref{sec:riemanniangeometry}, we give some background on Differential Geometry. In Appendix~\ref{sec:hmc}, we describe how to derive the second order Hamiltonian ODE based on the covariant derivative on the manifold. In Appendix~\ref{sec:thirdorderselfconcordance}, we show the infinite-norm third-order self-concordance of the metric for our hybrid barrier~\eqref{eq:hybridbarrier}. Appendix~\ref{sec:derivativestability} is devoted to obtaining spectral bounds for the derivatives of our metric, which includes Lewis weights and its derivatives, which we use in our self-concordance arguments. Finally, in Appendix~\ref{sec:remainingproofs} we include missing proofs.

\section{Preliminaries}\label{sec:prelim}
To work with the metric $g$ imposed by our hybrid barrier $\phi$, it is convenient to rescale the rows of the LP matrix $\mathrm A$ by the slack variables, namely we define
\begin{align*}
    \A = \diag{\big( (a_i^\top x - b_i)^{-1}\big)_{i=1}^m} \mathrm A.
\end{align*}
 In our equations we treat hadamard product of matrices with higher priority, namely $A B\odot C$ is equivalent to $A(B\odot C)$. We refer to the $p$-Lewis weights vector of $A_x$ by $w_x$ and its diagonal matrix version by $\W \triangleq \diag{w_x}$. 
To work with a vector $v$ in the tangent space of $x$, there is an important reparameterization of $v$ defined as
\begin{align*}
    &s_{v,x} \triangleq \A v,
    \\
    &\Sxv \triangleq \diag{s_{x,v}}.\numberthis\label{eq:reparameterizedvector}
\end{align*}
Define the log barrier by $\phi_\ell$:
 \begin{align*}
 \phi_\ell(x) \triangleq -\sum_{i=1}^m \log(a_i^\top x - b_i).     
 \end{align*}
We denote the Hessian of the log barrier by $g_2 = \nabla^2 \phi_\ell(x)$. We see $g_1$ as a metric inside the polytope, such that for $v \in \mathbb R^n$ it defines a local metric $\|v\|_{g_2}^2 = v^\top g_2 v$. It is easy to check that the norm of a vector $v$ with respect to $g_2$, i.e. $v^\top g_2 v$, is given by the $\ell_2$ norm of the reparameterized vector $s_{x,v}$ defined in Equation~\eqref{eq:reparameterizedvector}.
\begin{align*}
    v^\top g_2 v = v^\top {\A}^\top \A v = \|s_{x,v}\|_2^2.
\end{align*}
For a given point $x$ inside polytope $\mathcal P$, we define the symmetrized polytope $\mathcal P \cap 2x-\mathcal P$ around $x$ as the following: we reflect $\mathcal P$ around $x$ and intersect it with the $\mathcal P$ namely $\mathcal P \cap 2x - \mathcal P$, as illustrated in Figure~\ref{fig:symmetrizedpolytope}. The approximation of the symmetrized body by the ellipsoids corresponding to the Hessian of the barrier function plays a key role in bounding the isoperimetry constant, as we describe in Section~\ref{sec:isoperimetry}.

\subsection{John Ellipsoid and Lewis weights}
Proving good isoperimetry for a specific barrier can be reduced to how well the ellipsoids corresponding to the Hessian of the barrier at each point $x$ inside the polytope approximate the symmetrized polytope around $x$. A natural way to approximate a symmetric polytope is via its John Ellipsoid, i.e. the ellipsoid of maximum volume contained in the polytope. Parametrizing the  John ellipsoid as $A_x^\top W A_x$ for a positive diagonal matrix $W$, i.e., a weighted sum of the outer product of the rows of $A_x$, the weights are characterized by the following optimization problem:
\begin{align}
&\max_{w \in \mathbb R_{\geq 0}^n} \log\det(\A^\top \mathrm{W} \A) \label{eq:jonellipse}\\ 
&s.t. \ \indic^\top w = n. \nonumber
\end{align}
where $\mathrm W = \diag{w}$ is the diagonal matrix corresponding to the vector $w$. The John ellipsoid approximates the symmetrized polytope in the sense that (1) it is inside the ellipsoid and (2) scaling it up by $\sqrt n$ will make it contain the symmetrized polytope.

On the other hand,
in order to prove smoothness of the HMC curves, we need to pick a barrier whose Hessian does not change too fast as a function of $x$. Unfortunately the John ellipsoid is not stable. In particular, the weights $\mathrm W$ which maximizes~\eqref{eq:jonellipse} are not even continuous with respect to $x$. An alternative is to use the $p$-Lewis weights to define the ellipsoid, obtained as the solution to a relaxation of the program in~\eqref{eq:jonellipse}:
\begin{align}
    w_x \triangleq & \text{argmax}_{w \in \mathbb R_{\geq 0}^n} -\log\det(\A^\top \mathrm{W}^{1-2/p}\A) + (1-2/p)\indic^\top w,\label{eq:lewisweightsdef}
\end{align}
where $\mathrm{W} = \diag{w}$. Moreover, the optimal value of the program in~\eqref{eq:lewisweightsdef} is denoted by the $p$ Lewis weights barrier at $x$ as defined next.

\begin{definition}[Lewis weights barrier]
The $p$-Lewis weights barrier can be defined as the solution of the following optimization problem:
\begin{align}
    \phi_p(x) \triangleq \max_{w\in \mathbb R^n_{>0}}-\log\det (\A^\top \mathrm{W}^{1-2/p}\A) + (1-2/p) \indic^\top w,\label{eq:lewisweightdefinition}
\end{align}
\end{definition}
Let $g_1 = \nabla^2 \phi_p$ be the metric defined by the Hessian of the $p$ Lewis weights barrier. It is known (Lemma 31 in~\cite{lee2019solving}) that the ellipsoid corresponding to $g_1$ is roughly the same as the one defined by the Lewis weights, i.e. $\A^\top \W \A$.

\begin{lemma}[Lewis weights metric]\label{lem:lewisweightbarrier}\label{lem:lsbarrierapprox}
For the Lewis weight barrier $\phi_p$ we can bound the local norm of its Hessian as
\begin{align}
     \|s_{x,v}\|_{w}^2\leq v^\top g_1 v \leq (1+p)  \|s_{x,v}\|_{w}^2,\label{eq:spectralapprox}
\end{align}
where for a vector $s_{x,v}\in \R^m$,
\begin{align*}
    \|s_{x,v}\|_{w}^2 \triangleq \sum_{i=1}^m {w_x}_i {s_{x,v}}_i^2.
\end{align*}
Equivalently 
\begin{align*}
    \A^\top \W \A \preccurlyeq g_1 \preccurlyeq (1+p)\A^\top \W \A.
\end{align*}
\end{lemma}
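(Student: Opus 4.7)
The plan is to compute $g_1 = \nabla^2 \phi_p$ directly from the variational definition in \eqref{eq:lewisweightdefinition}, handling first derivatives with the envelope theorem and second derivatives with implicit differentiation of the Lewis-weight fixed point. (As the statement notes, this identity already appears as Lemma~31 of \cite{lee2019solving}; the plan below amounts to re-deriving it in our notation.)

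First, since $w_x$ maximizes the objective in \eqref{eq:lewisweightdefinition}, the envelope theorem gives
\begin{align*}
\nabla \phi_p(x) \;=\; -\, \nabla_x \log\det\!\bigl(\A^\top \W^{1-2/p} \A\bigr)\Big|_{W=W_x},
\end{align*}
so at first order one may freeze $W$ at its optimum. Writing $\A = S_x^{-1}\mathrm{A}$ and using $\partial \log\det(M) = \tr(M^{-1}\partial M)$, a short calculation rewrites $\nabla \phi_p$ as a linear functional of $\mathrm{diag}(\Px)$, where
\begin{align*}
\Px \;\triangleq\; \W^{1/2 - 1/p} \A \bigl(\A^\top \W^{1-2/p}\A\bigr)^{-1} \A^\top \W^{1/2-1/p}
\end{align*}
is the orthogonal projection onto the column span of $\W^{1/2-1/p}\A$. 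The Lewis-weight fixed-point condition is exactly $\mathrm{diag}(\Px) = w_x$, and $0 \preccurlyeq \Px \preccurlyeq I$.

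Next, differentiate once more. Now the implicit dependence of $W_x$ on $x$ contributes, and $\partial_x w_x$ is obtained by implicitly differentiating the fixed-point equation $w_i = \sigma_i(\W^{1/2-1/p}\A)$; this produces a linear system of the form $\bigl(I - (1-\tfrac{2}{p})\Ptwo\bigr)\,\partial_x w_x = (\text{explicit data})$, where $\Ptwo \triangleq \Px\odot\Px$ is the Hadamard square. Because $\Px$ is a projection with diagonal $w_x$, Schur's inequality gives $\Ptwo \preccurlyeq \W$, so the left-hand operator is invertible. Assembling the ``direct'' piece (with $W$ frozen) and the ``implicit'' correction (coming from $\partial_x w_x$) expresses $v^\top g_1 v$ as a quadratic in the reparametrized vector $s_{x,v} = \A v$ with coefficients built from $\Px$ and $\Ptwo$.

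Finally, bound the two pieces. Using $\mathrm{diag}(\Px)=w_x$, the direct piece collapses to exactly $\|s_{x,v}\|_w^2 = v^\top \A^\top \W \A v$; the implicit correction is a manifest sum of squares and hence PSD, which gives the lower bound. For the upper bound, it is enough to bound $(I - (1-\tfrac{2}{p})\Ptwo)^{-1}$ spectrally, which via $\Ptwo \preccurlyeq \W \preccurlyeq I$ costs a factor of at most $p$ on top of $\|s_{x,v}\|_w^2$, yielding the $(1+p)$ constant. The main obstacle is purely bookkeeping in the second step: several terms arise from the product rule on $\A^\top \W^{1-2/p} \A$ and from implicit differentiation of the Lewis-weight equations, and the only real work is to collapse them into clean expressions in $\Px$, $\Ptwo$, and $s_{x,v}$ before applying the spectral bounds.
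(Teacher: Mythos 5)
Your outline is essentially correct, but it takes a genuinely different route from the paper: the paper does not prove this lemma at all — its proof is a one-line citation to Lemma~31 of \cite{lee2019solving} — whereas you propose to re-derive that external result. Your derivation is the standard one, and it is consistent with machinery the paper assembles elsewhere for other purposes: the envelope/fixed-point computation you describe is exactly what produces the closed form $g_1 = \A^\top(\W+2\Lambdax)\A + 2(1-\tfrac{2}{p})\A^\top\Lambdax\G^{-1}\Lambdax\A$ of Lemma~\ref{lem:metricsecondform}, after which the stated bounds follow in two lines from Lemma~\ref{lem:levelzerobounds} ($\Lambdax\preccurlyeq\W$ and $\tfrac{2}{p}\W\preccurlyeq\G\preccurlyeq\W$), since $\Lambdax\G^{-1}\Lambdax\preccurlyeq\tfrac{p}{2}\Lambdax\W^{-1}\Lambdax\preccurlyeq\tfrac{p}{2}\W$ gives $g_1\preccurlyeq\A^\top\big(\W+2\W+(p-2)\W\big)\A=(1+p)\A^\top\W\A$, while positivity of the correction gives the lower bound; so what your approach buys is a self-contained proof at the cost of redoing Lee--Sidford's computation, whereas the paper buys brevity by outsourcing it. Two details in your sketch need fixing when you carry it out. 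First, implicit differentiation of the fixed point $w=\sigma(\W^{1/2-1/p}\A)$ yields the linear system $\G\W^{-1}\,\D w(v)=(\text{explicit data})$ with $\G=\tfrac{2}{p}\W+(1-\tfrac{2}{p})\Ptwo$ (equivalently $\D\W(v)=-2\diag{\Lambdax\G^{-1}\W s_{x,v}}$, as in the paper), not $\big(I-(1-\tfrac{2}{p})\Ptwo\big)\D w(v)=\cdots$; invertibility and the factor-$p$ loss should come from $\G\succcurlyeq\tfrac{2}{p}\W$, i.e.\ $\G^{-1}\preccurlyeq\tfrac{p}{2}\W^{-1}$, which is precisely what produces the constant $1+p$. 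Second, the implicit correction is not a manifest sum of squares: after symmetrizing it equals $\Lambdax+(1-\tfrac{2}{p})\Lambdax\G^{-1}\Lambdax$, which is PSD because $\Lambdax=\W-\Ptwo\succcurlyeq 0$ and $1-\tfrac{2}{p}\ge 0$; this lower-bound step therefore uses $p\ge 2$ (harmless here, since the paper takes $p=4-1/\log m$), and for $p<2$ it would require a separate argument.
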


Next, we define another important local norm at a point $x$ inside the polytope:
\begin{align*}
    \|v\|_{x,\infty} \triangleq \|s_{x,v}\|_\infty.
\end{align*}
This norm plays a key role in our definition of strong self-concordance in Equation~\eqref{eq:infinityselfconcordance}. 
For any point $x$ inside the polytope, we define $\Px$ to be the projection matrix of $\A$ reweighted by $\W^{1-2/p}$
\begin{definition}[Projection matrix]\label{def:projecrtion}
    we define the projection matrix $\Px$, implicitly depending on $x$, as
    \begin{align*}
    \Px \triangleq P(\W^{1/2-1/p}\A) \triangleq \W^{1/2 - 1/p}\A (\A^\top \W^{1-2/p}\A)^{-1} {\A}^\top \W^{1/2 - 1/p},
\end{align*}
where $\W$ is the $p$-Lewis weights calculated at $x$. Moreover, we denote the Hadamard square $\Px^{\odot 2}$ of the projection matrix by $P^{(2)}$:
\begin{align*}
    (\Ptwo)_{ij} \triangleq (\mathbf{P_x}^{\odot 2})_{ij} = (\Px)_{ij}^2.
\end{align*}
\end{definition}
To show the estimates in Lemma~\ref{lem:thirdorderself} for the $p$-Lewis-weights barrier $\phi_p$, we need to calculate the derivatives of the Lewis weights. The following Lemma presents the form of the Jacobian of the lewis weights as a function of $x$, by taking its directional derivative in direction $v$.
\begin{lemma}[Derivative of the Lewis weights]
    For arbitrary direction $v \in \Rn$, the directional derivative $\D\W(v)$ can be calculated as 
    \begin{align*}
        \D\W(v) = -2\diag{\Lambdax \G^{-1}\W s_{x,v}},
    \end{align*}
    where we define
    \begin{align*}
        &\Lambdax \triangleq \W - \Ptwo,\\
        &\G \triangleq \W - (1-2/p)\Lambdax.\numberthis\label{eq:GLambdadefs}
    \end{align*}
    Due to the importance and repetition of the vector $\Lambda \G^{-1}\W s_v$ in our calculations later on, we give it a separate notation
    \begin{align*}
        &r_{x,v} \triangleq \G^{-1}\W s_{x,v},\\
        &\Rxv \triangleq  \diag{r_{x,v}}.\numberthis\label{eq:Rdefinition}
    \end{align*}
    Then, the derivative of $\W$ can be written as
    \begin{align*}
        \D\W(v) = -2\diag{\Lambdax r_{x,v}}.
    \end{align*}
\end{lemma}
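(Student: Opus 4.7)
The plan is to derive the formula by implicit differentiation of the defining equation for the Lewis weights. Recall that the optimality conditions for the concave program~\eqref{eq:lewisweightdefinition} reduce to $w_i = (\Px)_{ii}$, i.e.\ each Lewis weight equals the corresponding leverage score, which unpacks to
\[
F_i(w,x) \;\triangleq\; a_i^\top M^{-1} a_i \;-\; w_i^{2/p} \;=\; 0, \qquad M \triangleq \A^\top \W^{1-2/p} \A,
\]
where $a_i$ is the $i$-th row of $\A$. Since this holds identically in $x$ (with $w = w_x$), the implicit function theorem gives $\D w(v) = -(\partial_w F)^{-1} \partial_x F(v)$, and the bulk of the work is computing these two partial derivatives.

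For $\partial_w F$, holding $\A$ fixed and using $\partial_{w_k} M = (1-2/p)\,w_k^{-2/p}\, a_k a_k^\top$, a direct computation yields
\[
\partial_{w_k} F_i \;=\; -(1-2/p)\,w_k^{-2/p}\,(a_i^\top M^{-1} a_k)^2 \;-\; (2/p)\,w_i^{2/p-1}\,\delta_{ik}.
\]
Converting $(a_i^\top M^{-1} a_k)^2$ into $(\Ptwo)_{ik}\,w_i^{2/p-1}\,w_k^{2/p-1}$ via the definition of $\Px$ and collecting exponents of $W$, one obtains
\[
\partial_w F \;=\; -\,W^{2/p-1}\bigl[(1-2/p)\,\Ptwo + (2/p)\,W\bigr] W^{-1} \;=\; -\,W^{2/p-1}\,\G\,W^{-1},
\]
using $\G = (2/p)W + (1-2/p)\Ptwo$. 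For $\partial_x F(v)$ I use the rescaling identity $\D\A(v) = -\Sxv\A$ and hence $\D M(v) = -2\A^\top \Sxv \W^{1-2/p}\A$ and $\D a_i(v) = -s_{x,v,i}\,a_i$. A parallel calculation reduces the squared terms to entries of $\Ptwo$ and gives
\[
\partial_x F(v) \;=\; -2\,W^{2/p} s_{x,v} + 2\,W^{2/p-1}\,\Ptwo s_{x,v} \;=\; -2\,W^{2/p-1}\,\Lambdax s_{x,v}.
\]

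Combining and cancelling the $W^{2/p-1}$ factor, the implicit function theorem yields
\[
\D w(v) \;=\; -\,\bigl(-W^{2/p-1}\G W^{-1}\bigr)^{-1}\!\bigl(-2\,W^{2/p-1}\Lambdax s_{x,v}\bigr) \;=\; -2\,W\,\G^{-1}\,\Lambdax\,s_{x,v}.
\]
The final step, which I expect to be the one that looks suspicious but is actually the crux, is to match this to the stated $-2\,\Lambdax\G^{-1}W s_{x,v}$ by proving the commutation $W \G^{-1} \Lambdax = \Lambdax \G^{-1} W$. Writing $\Lambdax = -\tfrac{1}{1-2/p}(\G - W)$, one gets
\[
W\G^{-1}\Lambdax \;=\; -\tfrac{1}{1-2/p}\bigl(W - W\G^{-1}W\bigr) \;=\; \Lambdax \G^{-1} W,
\]
where symmetry of $W\G^{-1}W$ matches the two sides. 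Plugging in the definition $\rxv = \G^{-1} W s_{x,v}$ gives $\D W(v) = -2\,\diag{\Lambdax \rxv}$, as claimed.

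The main obstacle is bookkeeping: one must track three independent sources of $x$-dependence (the slacks inside $\A$, the weights inside $W$, and the resulting Gram matrix $M$), and the final commutation identity, although algebraically short, is what makes the compact form $\Lambdax \G^{-1} W s_{x,v}$ possible. No further ingredients beyond the optimality characterization of Lewis weights, the rescaling derivative $\D\A(v) = -\Sxv \A$, and the definitions of $\Ptwo, \Lambdax, \G$ are needed.
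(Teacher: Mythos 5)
Your proof is correct: implicit differentiation of the fixed-point identity $a_i^\top(\A^\top\W^{1-2/p}\A)^{-1}a_i = w_i^{2/p}$, together with $\D\A(v)=-\Sxv\A$, is exactly the standard derivation behind this formula (the paper itself states the lemma without proof, relying on the Lewis-weight computations of Lee--Sidford), and your partial derivatives $\partial_w F = -\W^{2/p-1}\G\W^{-1}$, $\partial_x F(v) = -2\W^{2/p-1}\Lambdax s_{x,v}$ and the final commutation $\W\G^{-1}\Lambdax = \Lambdax\G^{-1}\W$ all check out. The only implicit hypotheses worth a word are the invertibility of $\G$ (clear since $\G = (2/p)\W + (1-2/p)\Ptwo \succ 0$ for the relevant $2<p<4$) and $p\neq 2$ when dividing by $1-2/p$, neither of which is an issue here.
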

In the above Lemma, note that $\Lambdax$, $\G$, $r_{x,v}$, and $\Rxv$ are all functions of the location variable $x$, but we drop $x$ for clarity in our calculations.
    Furthermore, when $v$ is clear from the context, we denote $\D\W(v)$ in short by $\Wxv$.
Next, we calculate the derivative of the projection matrix $\Px$ onto the column space of $\W^{1/2-1/p}\A$ which is appropriately reweighted by the Lewis weights, as defined in Definition~\ref{def:projecrtion}.
\begin{lemma}[Derivative of the projection matrix]\label{lem:projderivative}
    The derivative of the projection matrix $\Px = \mathbf{P}(\W^{1/2-1/p}\A)$ in direction $v$ is given by
    \begin{align*}
        \D \Px(v) = -\Px \Rxv - \Rxv\Px + 2\Px\Rxv\Px,
    \end{align*}
    where $\Rxv$ is defined in Equation~\eqref{eq:Rdefinition}. When $v$ is clear from the context, we refer to $\Px \Rxv \Px$ by $\Pxv$ for brevity. Moreover, controlling the spectral norm of the diagonal matrix $\Rxv = \diag{\G^{-1}\W s_{x,v}}$ by the infinity norm of $s_{x,v}$ is one of the key ideas that allows us to break the mixing time. 
\end{lemma}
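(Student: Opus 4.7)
The plan is to start from the explicit parameterization $\Px = B(B^\top B)^{-1}B^\top$ with $B = \W^{1/2-1/p}\A$, and to reduce the computation to the standard variational formula for a symmetric orthogonal projector. Differentiating the defining identities $\Px^2 = \Px$ and $\Px B = B$ and using the symmetry of $\Px$ yields
\begin{align*}
\D \Px(v) \;=\; (I-\Px)\, \D B(v)\, B^+ \;+\; (B^+)^\top (\D B(v))^\top (I-\Px),
\end{align*}
where $B^+ = (B^\top B)^{-1}B^\top$ is the Moore--Penrose pseudoinverse. This identity alone does all of the ``projection geometry,'' and reduces the problem to computing $\D B(v)$ and manipulating diagonal factors.

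Setting $q = 1/2 - 1/p$, I would expand $\D B(v) = q\W^{q-1}\D\W(v)\A + \W^q \D\A(v)$ with two ingredients: the derivative $\D\A(v) = -\Sxv \A$, which follows immediately from $\A = S_x^{-1}\mathrm A$ and the chain rule (noting that row $i$ of $\A$ is $a_i/s_i$ and $(s_{x,v})_i = a_i^\top v/s_i$), and the derivative of the Lewis weights $\D\W(v) = -2\diag{\Lambdax r_{x,v}}$ supplied by the preceding lemma. A key bookkeeping observation is that $\W$, $\D\W(v)$, $\Sxv$ are diagonal, while $\Px$, $\Lambdax$, $\G$ are generally dense. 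Exploiting the identity $\W^q \A(B^\top B)^{-1}\A^\top \W^q = \Px$ to factor out the projector from $\D B(v)\, B^+$ gives
\begin{align*}
\D B(v)\, B^+ \;=\; \bigl(q\W^{-1}\D\W(v) - \Sxv\bigr)\Px \;=\; D\,\Px,
\end{align*}
where $D \triangleq q\W^{-1}\D\W(v) - \Sxv$ is diagonal and therefore symmetric. Substituting back collapses the two terms of the projection formula into
\begin{align*}
\D\Px(v) \;=\; D\Px + \Px D - 2\Px D \Px.
\end{align*}

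The final and only conceptual step is to identify $D = -\Rxv$. Plugging $\D\W(v) = -2\diag{\Lambdax r_{x,v}}$ into $D$ and using $2q = 1 - 2/p$, the claimed equality $q\W^{-1}\D\W(v) - \Sxv = -\Rxv$ is equivalent, at the level of the diagonals viewed as vectors and after multiplying through by $\W$, to
\begin{align*}
\bigl(\W - (1-2/p)\Lambdax\bigr)\, r_{x,v} \;=\; \W s_{x,v},
\end{align*}
which is precisely $\G r_{x,v} = \W s_{x,v}$, i.e., the defining relation $r_{x,v} = \G^{-1}\W s_{x,v}$. Substituting $D = -\Rxv$ then yields $\D\Px(v) = -\Rxv \Px - \Px \Rxv + 2 \Px \Rxv \Px$, and the identification $\Pxv = \Px \Rxv \Px$ gives the stated formula. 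The main obstacle is bookkeeping rather than anything conceptual --- one must carefully track which matrices are diagonal, and one must recognize that the particular linear combination $q\W^{-1}\D\W(v) - \Sxv$ collapses to $-\Rxv$ \emph{precisely} because $r_{x,v}$ was engineered via $\G r_{x,v} = \W s_{x,v}$; this is exactly what motivates bundling the dense pieces into $\G$, $\Lambdax$, and $r_{x,v}$ in the first place.
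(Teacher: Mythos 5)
Your derivation is correct: the variational formula $\D\Px(v) = (I-\Px)\,\D B(v)\,B^{+} + (B^{+})^{\top}\D B(v)^{\top}(I-\Px)$ for the orthogonal projector onto the column space of $B=\W^{1/2-1/p}\A$, the factorization $\D B(v) = D B$ with diagonal $D = (1/2-1/p)\W^{-1}\D\W(v) - \Sxv$, and the identification $D = -\Rxv$ via $\G r_{x,v} = \W s_{x,v}$ combine to give exactly $\D\Px(v) = -\Rxv\Px - \Px\Rxv + 2\Px\Rxv\Px$. The paper states this lemma without an explicit proof, treating it as a standard computation downstream of the Lewis-weight derivative lemma, and your argument is precisely that intended computation, so nothing further is needed.
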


To reduce notation, in the proof we also make the dependence of $A_x$ to $x$ implicit and drop the index $x$.


We denote the target probability distribution inside the polytope by $\pi(.)$. We use $g$ for the Hessian of our hybrid barrier $\phi$. We refer to the Hessian of the Lewis-p-weight before rescaling by $g_1$, and the Hessian of $n/m$ scaled log barrier by $g_2$, i.e.
\begin{align*}
    &g_1 \triangleq \nabla^2 \logdet{\A^\top \W^{1-2/p}\A},\\
    &g_2 \triangleq \frac{n}{m} \A^\top \A.\\
    &g \triangleq \alpha_0 (g_1 + g_2).
\end{align*}
Throughout the proof, we use the notation $\lesssim$ to show an inequality with ignoring the logarithmic factors. We use $D$ for Euclidean derivative and $\nabla$ and $D_t$ for covariant differentiation with respect to the metric structure on the manifold. Moreover, we use $\preccurlyeq$ to show L\"{o}wner inequalities up to universal constants.

\subsection{Markov chains}\label{sec:markovchains}

For a Markov chain with state space $\mathcal{M}$, stationary distribution $Q$ and next step distribution $p_{u}(\cdot)$ for any $u\in\mathcal{M}$, the conductance of the Markov chain is defined as
\[
\Phi\triangleq\inf_{S\subset\mathcal{M}}\frac{\int_{S}p_{u}(\mathcal{M}\setminus S)dQ(u)}{\min\left\{ Q(S),Q(\mathcal{M}\setminus S)\right\} }.
\]
The conductance of an ergodic Markov chain allows us to bound its mixing time, i.e., the rate of convergence to its stationary distribution, e.g., via the following theorem of Lovász and Simonovits.
\begin{theorem}\label{thm:LS_mixing}
Let $Q_{t}$ be the distribution of the current point after $t$ steps of a Markov chain with stationary distribution $q$ and conductance at least $\Phi$, starting from initial distribution $Q_{0}$. For any $\varepsilon>0$, 
\[
d_{TV}(Q_{t},Q)\le\varepsilon+\sqrt{\frac{1}{\varepsilon}E_{x\sim Q_{0}}\frac{dQ_{0}(x)}{dQ(x)}}\left(1-\frac{\Phi^{2}}{2}\right)^{t}.
\]
\end{theorem}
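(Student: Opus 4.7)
The plan is to use the Lovász--Simonovits curve technique. For each $t$, let $f_t \triangleq dQ_t/dQ$ be the density ratio (assumed to exist; the general case follows by regularization) and define
\[
h_t(y) \triangleq \sup\{Q_t(A) : A \subseteq \mathcal{M},\ Q(A) = y\}, \quad y \in [0,1].
\]
The supremum is realized by superlevel sets of $f_t$, so $h_t$ is the integral of the decreasing rearrangement of $f_t$ up to the $y$-th quantile; in particular it is concave, nondecreasing, and satisfies $h_t(0) = 0$, $h_t(1) = 1$. Since $d_{TV}(Q_t, Q) = \sup_{y \in [0,1]}\{h_t(y) - y\}$, it suffices to track the evolution of this one-dimensional curve.

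For the initial bound, Cauchy--Schwarz gives for any $A$ with $Q(A) = y$ that
\[
Q_0(A) = \int_A f_0\, dQ \leq \sqrt{y}\sqrt{\int_A f_0^2\, dQ} \leq \sqrt{y M},
\]
where $M \triangleq E_{x\sim Q_0}[dQ_0(x)/dQ(x)] = \int f_0^2\, dQ$. Hence $h_0(y) \leq \sqrt{yM}$. Next I derive a one-step averaging inequality from conductance: for $A$ with $Q(A) = y$, reversibility under $Q$ makes the stationary inflow and outflow across $\partial A$ equal, and the definition of $\Phi$ yields $\int_A p_u(\mathcal{M}\setminus A)\, dQ(u) \geq \Phi \min(y, 1-y)$. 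Applying this to (nearly) extremal sets in the definition of $h_{t+1}$ and using that the extremum is achieved on superlevel sets of $f_t$ yields, after a laziness modification absorbed into constants,
\[
h_{t+1}(y) \leq \tfrac{1}{2}\, h_t(y - \psi) + \tfrac{1}{2}\, h_t(y + \psi), \qquad \psi \triangleq \Phi\, \min(y, 1-y).
\]

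The geometric decay then follows by iteration. One verifies inductively that if $h_t(y) \leq y + C\sqrt{y(1-y)}$ then $h_{t+1}(y) \leq y + C(1 - \Phi^2/2)\sqrt{y(1-y)}$, via the elementary concavity inequality
\[
\tfrac{1}{2}\sqrt{(y-\psi)(1-y+\psi)} + \tfrac{1}{2}\sqrt{(y+\psi)(1-y-\psi)} \leq \left(1 - \tfrac{\Phi^2}{2}\right)\sqrt{y(1-y)}
\]
valid for $\psi = \Phi\min(y, 1-y)$. Combined with the initial bound $h_0(y) \leq y + \sqrt{My}$, iterating $t$ times gives $h_t(y) - y \leq \sqrt{M/\varepsilon}\,(1 - \Phi^2/2)^t$ whenever $y \in [\varepsilon, 1-\varepsilon]$, while for $y$ outside this range the trivial bound $h_t(y) - y \leq \min(y, 1-y) \leq \varepsilon$ applies. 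Taking the supremum over $y$ in the TV formula produces the stated bound.

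The main obstacle is the sharp propagation inequality for the concave profile: verifying the contraction factor $(1-\Phi^2/2)$ requires case analysis according to whether $y$ lies in $[0, 1/2]$ or $[1/2, 1]$ and whether $y \pm \psi$ remain in $[0,1]$, and identifying the right ``profile'' $C\sqrt{y(1-y)}$ to propagate is the technical core of the argument. A secondary subtlety is justifying the averaging recursion in the continuous setting by approximating arbitrary measurable sets by superlevel sets of $f_t$ and applying Fubini together with reversibility of the kernel under $Q$.
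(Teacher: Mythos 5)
The paper does not actually prove this statement: it is quoted as a classical theorem of Lov\'asz and Simonovits, so there is no in-paper proof to compare against. Your overall strategy --- the Lov\'asz--Simonovits curve $h_t$, the Cauchy--Schwarz bound $h_0(y)\le \sqrt{My}$, a conductance-driven one-step averaging inequality, and propagation of a concave profile --- is exactly the right family of techniques. However, two steps as written are genuinely broken.

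First, your central ``elementary concavity inequality'' is false with the displacement $\psi=\Phi\min(y,1-y)$: for $y\to 0$ (so $\psi=\Phi y$) the left side is $\tfrac{1}{2}\sqrt{y}\bigl(\sqrt{1-\Phi}+\sqrt{1+\Phi}\bigr)(1+o(1)) = \sqrt{y}\,\bigl(1-\Phi^2/8+O(\Phi^4)\bigr)$, which exceeds $(1-\Phi^2/2)\sqrt{y(1-y)}$. The contraction factor $1-\Phi^2/2$ in the theorem comes from the lazy-chain one-step lemma of Lov\'asz--Simonovits, which has displacement $2\Phi\min(y,1-y)$, i.e.\ $h_{t+1}(y)\le \tfrac12 h_t\bigl(y-2\Phi\hat y\bigr)+\tfrac12 h_t\bigl(y+2\Phi\hat y\bigr)$ with $\hat y=\min(y,1-y)$; then $\tfrac12\bigl(\sqrt{1-2\Phi}+\sqrt{1+2\Phi}\bigr)\le 1-\Phi^2/2$ gives the stated base. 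Saying the laziness issue is ``absorbed into constants'' does not recover the theorem, because the constant in the exponent base is part of the claim. Second, your handling of $\varepsilon$ fails at the left endpoint: the ``trivial bound'' $h_t(y)-y\le \min(y,1-y)$ is false for small $y$, since $h_t(y)$ can be close to $1$ while $Q(A)=y$ is tiny when $Q_t$ is far from stationarity (indeed, if your propagation and endpoint claims were both valid you would deduce $d_{TV}\le\sqrt{M}\lambda^t$ with no $\varepsilon$ at all, a sign something is off). The standard way the $\varepsilon$ and the factor $\sqrt{M/\varepsilon}$ enter with $M=E_{x\sim Q_0}[dQ_0/dQ]$ is to truncate the start: by Markov's inequality the set where $dQ_0/dQ> M/\varepsilon$ has $Q_0$-measure at most $\varepsilon$; restricting $Q_0$ to its complement gives an $L^\infty$-warm start of quality $O(M/\varepsilon)$, to which the curve argument (or the $L^\infty$ version of the theorem) applies, and the discarded mass contributes the additive $\varepsilon$. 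With those two repairs --- the correct one-step lemma for lazy reversible chains and the truncation of $Q_0$ --- your outline becomes the standard proof.
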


To bound the conductance, we will reduce it to geometric isoperimetry.

\begin{definition}
The isoperimetry of a metric space $\mathcal{M}$ with target distribution $\pi$ is
\[
\psi=\inf_{\delta>0}\min_{S\subseteq\mathcal{M}}\frac{\int_{\{x| \ d(S,x)\le\delta\}}\pi(x)dx-\pi(S)}{\delta\min\left\{ \pi(S),\pi(\mathcal{M}\setminus S)\right\} }
\]
where $d$ is the shortest path distance in $\mathcal{M}$. 
\end{definition}

For a proof of the following theorem, see e.g.,~\cite{vempala2005geometric}.

\begin{lemma}\label{thm:mixing_proof}
Given a metric space $\mathcal{M}$ and a time-reversible Markov chain $p$ on $\mathcal{M}$ with stationary distribution $Q$, fix any $r>0$ and suppose that for any $x,y\in\mathcal{M}$ with $d(x,z)<r$, we have $d_{TV}(p_{x},p_{y})\leq 0.9$. Then, the conductance of the Markov chain is $\Omega(r\psi)$. 
\end{lemma}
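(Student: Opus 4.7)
The plan is to reduce the conductance bound to the geometric isoperimetric inequality by the standard Lov\'asz--Simonovits argument. Fix an arbitrary measurable set $S\subseteq\mathcal{M}$; writing $F = \int_S p_x(\mathcal{M}\setminus S)\,dQ(x)$ for the one-step flow out of $S$, the goal is to show $F = \Omega(r\psi)\min\{Q(S), Q(\mathcal{M}\setminus S)\}$. I pick a constant $\epsilon \in (0, 1/20)$ and partition $\mathcal{M}$ into three pieces according to the one-step escape probability:
\[
S_1 = \{x \in S : p_x(\mathcal{M}\setminus S) < \epsilon\}, \qquad S_2 = \{y \in \mathcal{M}\setminus S : p_y(S) < \epsilon\}, \qquad S_3 = \mathcal{M}\setminus (S_1\cup S_2).
\]
The critical geometric observation is that $d(S_1, S_2) \geq r$: for any $x\in S_1$ and $y\in S_2$ one has $p_x(S) > 1-\epsilon$ and $p_y(S) < \epsilon$, so $d_{TV}(p_x, p_y) \geq 1-2\epsilon > 0.9$, and the hypothesis therefore forces $d(x,y)\geq r$.

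First I dispose of the degenerate regimes. If $Q(S_1) \leq Q(S)/2$, then every $x \in S\setminus S_1$ contributes at least $\epsilon$ to its escape probability, giving $F \geq \epsilon\, Q(S)/2$. By time-reversibility, $F = \int_{\mathcal{M}\setminus S} p_y(S)\,dQ(y)$, and the symmetric argument handles the case $Q(S_2)\leq Q(\mathcal{M}\setminus S)/2$, yielding $F \geq \epsilon\, Q(\mathcal{M}\setminus S)/2$. In either sub-case $F \geq (\epsilon/2)\min\{Q(S), Q(\mathcal{M}\setminus S)\}$, which is $\Omega(1)$ and dominates $\Omega(r\psi)$ in the only non-trivial regime $r\psi = O(1)$.

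In the remaining main case both $Q(S_1) > Q(S)/2$ and $Q(S_2) > Q(\mathcal{M}\setminus S)/2$. Set $N = \{x \in \mathcal{M} : d(S_1, x) \leq r/2\}$. Since $d(S_1, S_2)\geq r$, the neighborhood $N$ is disjoint from $S_2$, so $N\setminus S_1 \subseteq S_3$. Applying the definition of $\psi$ with $\delta = r/2$ to the set $S_1$, and using $S_2 \subseteq \mathcal{M}\setminus S_1$, gives
\[
Q(S_3) \;\geq\; Q(N) - Q(S_1) \;\geq\; \tfrac{r}{2}\,\psi\,\min\{Q(S_1), Q(\mathcal{M}\setminus S_1)\} \;\geq\; \tfrac{r}{2}\,\psi\,\min\{Q(S_1), Q(S_2)\}.
\]
Combined with reversibility and the defining property of $S_3$,
\[
2F = \int_S p_x(\mathcal{M}\setminus S)\,dQ + \int_{\mathcal{M}\setminus S} p_y(S)\,dQ \;\geq\; \epsilon\bigl(Q(S\setminus S_1) + Q((\mathcal{M}\setminus S)\setminus S_2)\bigr) = \epsilon\,Q(S_3),
\]
so $F \geq (\epsilon r \psi /4)\min\{Q(S_1), Q(S_2)\} = \Omega(r\psi)\min\{Q(S), Q(\mathcal{M}\setminus S)\}$, using the main-case lower bounds on $Q(S_1)$ and $Q(S_2)$ in the last step.

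There is no substantive obstacle: the proof is standard bookkeeping around the isoperimetric-to-conductance reduction. The only points requiring minor care are (i) choosing the threshold $\epsilon$ strictly below $(1-0.9)/2$ so that the TV hypothesis separates $S_1$ from $S_2$ at distance at least $r$, and (ii) invoking the isoperimetric inequality at radius $r/2 < r$ so that the neighborhood of $S_1$ remains disjoint from $S_2$ and one can cleanly charge $Q(N) - Q(S_1)$ to $Q(S_3)$.
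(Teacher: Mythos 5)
Your proof is correct and is essentially the standard Lov\'asz--Simonovits reduction from isoperimetry to conductance that the paper cites (and that also underlies the paper's own proof of its $s$-conductance refinement, Theorem~\ref{prop:conductance}, in Appendix~\ref{app:sconductance}): threshold the escape probability, use the TV hypothesis to force $d(S_1,S_2)\geq r$, and charge the isoperimetric mass of the $r/2$-neighborhood of $S_1$ to the cross-flow. The only caveat, which you already flag, is that the degenerate case yields $\Omega(1)$ rather than $\Omega(r\psi)$, so the argument literally proves conductance $\Omega(\min\{1, r\psi\})$; this is the standard reading of the statement and is how it is used in the paper (where $\psi\Delta\leq 1/2$).
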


We will need a more refined notion of $s$-conductance, to be able to ignore small subsets when proving isoperimetry.
\begin{definition}
[$s$-conductance] Consider a Markov chain with a state space $\mathcal M$,
a transition distribution $\mathcal{T}_{x}$ and stationary distribution
$\pi$. For any $s\in[0,1/2)$, the $s$-\emph{conductance} of the
Markov chain is defined by 
\[
\Phi_{s} \triangleq \inf_{\pi(S)\in(s,1-s)}\frac{\int_{S}{\mathcal T}_{x}(S^{c})\pi(x)dx}{\min(\pi(S)-s,\pi(S^{c})-s)}.
\]
\end{definition}

A lower bound on the $s$-conductance
of a Markov chain leads to an upper bound on its mixing rate.
\begin{lemma}
\cite{lovasz1993random} \label{lem:tvDecrease} Let $\pi_{t}$
be the distribution of the points obtained after $t$ steps of a lazy
reversible Markov chain with the stationary distribution $\pi$.  For $0<s\leq 1/2$ and $H_{s}=\sup\{|\pi_{0}(A)-\pi(A)|:A\subset\mathcal M,\,\pi(A)\leq s\}$,
it follows that 
\[
d_{TV}(\pi_{t},\pi)\leq H_{s}+\frac{H_{s}}{s}\left(1-\frac{\Phi_{s}^{2}}{2}\right)^{t}.
\]
\end{lemma}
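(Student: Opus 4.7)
The plan is to follow the classical Lovász--Simonovits potential-function argument, whose core object is the extremal non-decreasing concave envelope of $\pi_t$ on the scale of $\pi$-measure. First I would introduce the function
\[
g_t(u) \;\triangleq\; \sup_{A\subseteq\mathcal{M},\,\pi(A)=u}\pi_t(A)-u, \qquad u\in[0,1],
\]
and verify three structural facts: (i) $g_t$ is concave on $[0,1]$ with $g_t(0)=g_t(1)=0$; (ii) $g_t$ is achieved by some measurable set $A_u$ for each $u$; (iii) $d_{TV}(\pi_t,\pi)=\max_u g_t(u)$. Concavity follows by a standard splitting/averaging argument: given $A,B$ with $\pi(A)=u_1$, $\pi(B)=u_2$, one can interpolate by replacing an appropriate sub-slice, and $\pi_t(\cdot)-\pi(\cdot)$ inherits the concave envelope.

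Next I would prove the initial-layer bound $g_0(u)\le H_s$ for $u\in[s,1-s]$ (directly from the definition of $H_s$) together with $g_0(u)\le H_s + \min(u,1-u)$ for $u\in[0,s]\cup[1-s,1]$ (using $g_0\le \min(u,1-u)+H_s$ trivially on the edges). The crucial step is the one-step contraction, which is where the $s$-conductance and laziness enter. For a lazy, reversible chain, for any set $A$ with $\pi(A)=u\in(s,1-s)$ one has
\[
\pi_t(A)\;\le\;\tfrac{1}{2}\bigl(\pi_{t-1}(A')+\pi_{t-1}(A'')\bigr)
\]
where $A'\subseteq A$ and $A''\supseteq A$ are constructed so that $\pi(A')=u-C_s$ and $\pi(A'')=u+C_s$ with $C_s\ge \Phi_s\,\min(u-s,1-u-s)$; here $C_s$ comes from integrating the transition probabilities using reversibility, the $s$-conductance bound, and the $\tfrac12$ laziness factor (the laziness is what converts a conductance lower bound into a \emph{two-sided} averaging inequality rather than a mere drift). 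Combining with concavity of $g_{t-1}$ gives
\[
g_t(u)\;\le\;\tfrac{1}{2}\bigl(g_{t-1}(u-C_s)+g_{t-1}(u+C_s)\bigr).
\]

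The final step is to iterate this recursion. Using that $g_{t-1}$ is concave and the chord-inequality $g(u\pm \eta)\le g(u)\pm \eta g'(u)+O(\eta^2\|g''\|)$, one obtains the quantitative contraction $\max_u g_t(u)\le (1-\Phi_s^2/2)\max_u g_{t-1}(u)$ on the interior $[s,1-s]$, while on $[0,s]\cup[1-s,1]$ the bound $g_t(u)\le H_s$ is maintained by monotonicity and the initial estimate. Combining the interior exponential decay with the boundary floor $H_s$, and recalling $d_{TV}(\pi_t,\pi)=\max_u g_t(u)$, gives the stated inequality
\[
d_{TV}(\pi_t,\pi)\;\le\;H_s+\tfrac{H_s}{s}\Bigl(1-\tfrac{\Phi_s^2}{2}\Bigr)^{\!t}.
\]
I expect the main obstacle to be the contraction step: carefully constructing the two comparison sets $A',A''$ using reversibility and $s$-conductance (so that the step size $C_s$ really scales like $\Phi_s\min(u-s,1-u-s)$) is the technical heart of the argument. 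Everything afterwards is concavity and bookkeeping.
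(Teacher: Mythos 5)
The paper does not prove this lemma at all --- it is quoted verbatim from Lovász--Simonovits \cite{lovasz1993random} --- so what you are really doing is reconstructing their argument, and your outline has the right skeleton (extremal function over sets of fixed $\pi$-measure, concavity, a lazy one-step averaging inequality driven by $s$-conductance, then iteration). However, two of the quantitative steps, which are precisely the ones that produce the constants $H_s/s$ and $(1-\Phi_s^2/2)^t$, do not work as you state them. First, the initial-layer bound ``$g_0(u)\le H_s$ for $u\in[s,1-s]$'' is false: $H_s$ only controls sets of $\pi$-measure at most $s$ (and, by complementation, at least $1-s$). If $\pi_0$ is uniform over a set $B$ with $\pi(B)=1/2$ and $s=0.1$, then $g_0(1/2)=1/2$ while $H_s=0.1$. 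The correct base estimate is obtained by covering a set of measure $u$ by roughly $u/s$ pieces of measure at most $s$, giving $g_0(u)\le H_s+\frac{u}{s}H_s$ (and, in the Lovász--Simonovits induction, a comparison function of the form $H_s+\frac{H_s}{s}\min(\sqrt{u-s},\sqrt{1-u-s})$). This covering step is exactly where the factor $H_s/s$ in the conclusion comes from; in your sketch that factor appears in the final display without ever being generated by the argument.

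Second, the iteration step is not valid: from $g_t(u)\le\frac12\bigl(g_{t-1}(u-C_s)+g_{t-1}(u+C_s)\bigr)$ and concavity of $g_{t-1}$ you only get $g_t\le g_{t-1}$, with no multiplicative gain. Your appeal to a chord inequality with $O(\eta^2\|g''\|)$ fails because there is no lower bound on the curvature of $g_{t-1}$ --- it can be affine on the relevant interval, in which case the averaging yields no contraction of $\max_u g_t$ in that step; moreover the maximum can sit in $[0,s]\cup[1-s,1]$, where the one-step inequality does not even apply. The Lovász--Simonovits proof avoids both problems by never contracting $g_t$ against itself: it carries through the induction an explicit strictly concave barrier, $h_t(u)\le H_s+\frac{H_s}{s}\bigl(1-\frac{\Phi_s^2}{2}\bigr)^t\min\bigl(\sqrt{u-s},\sqrt{1-u-s}\bigr)$, and the decay factor arises from the elementary inequality $\frac12\bigl(\sqrt{1-y}+\sqrt{1+y}\bigr)\le 1-\tfrac{y^2}{8}$ applied to that square-root barrier with $y\asymp 2\Phi_s$, not from any curvature of $g_{t-1}$. (Relatedly, to make concavity, attainment, and the one-step inequality clean you should define the extremal function over fractional sets $0\le f\le 1$ with $\int f\,d\pi=u$, rather than over genuine sets of exact measure $u$.) So the architecture of your proposal is the right one, but as written the two steps that actually produce the claimed bound are gaps that need the covering argument and the square-root comparison function to close.
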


The following theorem (see ~\cite{kook2022condition}) illustrates how one-step coupling with the
isoperimetry leads to a lower bound on the $s$-conductance. Its proof is similar to that of Lemma 13 in \cite{lee2018convergence} and can be found in full detail in Appendix~\ref{app:sconductance}.
\begin{theorem}
\label{prop:conductance} For a Riemannian manifold $(\mathcal M,g)$,
let $\pi$ be the stationary distribution of a reversible Markov chain
on $\mathcal M$ with a transition distribution $P_{x}$. Let ${\mathcal M}'\subset{\mathcal M}$
be a subset with $\pi({\mathcal M}')\geq 1-\rho$ for some $\rho<\frac{1}{2}$.
We assume the following one-step coupling: if $d_{g}(x,x')\leq\Delta\leq1$
for $x\in{\mathcal M}'$, then $d_{TV}(\mathcal T_{x},\mathcal T_{x'})\leq0.9$. Then
for any $\rho/(\Delta \psi_{\mathcal M})\leq s<\frac{1}{2}$ and given $\psi_{\mathcal M} \Delta \leq 1/2$, the $s$-conductance is bounded
below by 
\[
\Phi_{s}=\Omega(\psi_{\mathcal M}\Delta).
\]
\end{theorem}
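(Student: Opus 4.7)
The plan is to adapt the standard Lovász–Simonovits three-set conductance argument to account for the exceptional region $\mathcal{M} \setminus \mathcal{M}'$ of measure $\leq \rho$. Fix any $S \subset \mathcal{M}$ with $\pi(S) \in (s, 1-s)$; without loss of generality assume $\pi(S) \leq 1/2$, so the denominator in the definition of $\Phi_s$ equals $\pi(S) - s$. My goal is to show $\Phi(S) := \int_S \mathcal{T}_x(S^c)\,d\pi(x) \geq \Omega(\psi_{\mathcal{M}} \Delta) \cdot (\pi(S) - s)$.

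Fix a small absolute constant $c_0 = 0.05$ and define the ``bad'' subsets
\[
A := \{x \in S : \mathcal{T}_x(S^c) < c_0\}, \quad B := \{x \in S^c : \mathcal{T}_x(S) < c_0\}, \quad A' := A \cap \mathcal{M}', \quad B' := B \cap \mathcal{M}'.
\]
If $\pi(A) \leq \pi(S)/2$, then the set $S \setminus A$ has measure $\geq \pi(S)/2$ and every point there contributes flow $\geq c_0$, so $\Phi(S) \geq c_0 \pi(S)/2 \geq \Omega(\pi(S) - s) \geq \Omega(\psi_{\mathcal{M}}\Delta)(\pi(S) - s)$ using the assumption $\psi_{\mathcal{M}}\Delta \leq 1/2$. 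The symmetric easy case $\pi(B) \leq \pi(S^c)/2$ follows by reversibility, after rewriting the flow as $\int_{S^c} \mathcal{T}_x(S)\,d\pi(x)$.

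In the remaining hard case, both $\pi(A) > \pi(S)/2$ and $\pi(B) > \pi(S^c)/2$; in particular $A'$ and $B'$ are nonempty because $\rho \leq s\psi_{\mathcal{M}}\Delta \leq s/2 < \pi(S)/2$. The key geometric step is that $A'$ and $B'$ lie at distance $> \Delta$ apart on $\mathcal{M}$: for any $x \in A'$, $y \in B'$, one has $d_{TV}(\mathcal{T}_x, \mathcal{T}_y) \geq |\mathcal{T}_x(S) - \mathcal{T}_y(S)| > 1 - 2c_0 = 0.9$, so the contrapositive of the one-step coupling hypothesis forces $d_g(x, y) > \Delta$. Since $\pi(A') \leq \pi(S) \leq 1/2$ and $B'$ lies outside the $\Delta$-neighborhood of $A'$, the isoperimetric inequality yields
\[
\pi(\mathcal{M} \setminus (A' \cup B')) \geq \pi(\{x : d(A',x) \leq \Delta\} \setminus A') \geq \psi_{\mathcal{M}} \Delta \cdot \pi(A').
\]
Because $\mathcal{M} \setminus (A' \cup B')$ decomposes as the exceptional region (of measure $\leq \rho$) together with the part of $(S \setminus A) \cup (S^c \setminus B)$ inside $\mathcal{M}'$, and because by reversibility every point of $S \setminus A$ and of $S^c \setminus B$ contributes at least $c_0$ to the flow $\Phi(S)$, I obtain
\[
2\Phi(S) \geq c_0 \bigl(\psi_{\mathcal{M}} \Delta \cdot \pi(A') - \rho\bigr) \geq c_0 \bigl(\psi_{\mathcal{M}} \Delta \cdot (\pi(S)/2 - \rho) - \rho\bigr).
\]

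The final step is numerical bookkeeping. Substituting the hypothesis $s \geq \rho/(\psi_{\mathcal{M}}\Delta)$, equivalently $\psi_{\mathcal{M}} \Delta \cdot s \geq \rho$, and using $\pi(S) > s$, I absorb every $\rho$ term into a constant fraction of the leading $\psi_{\mathcal{M}} \Delta \cdot \pi(S)$, obtaining $\Phi(S) \gtrsim \psi_{\mathcal{M}} \Delta \cdot (\pi(S) - s)$, which is the desired bound. The only real obstacle is aligning the constants so that the $\rho$-loss from restricting to $\mathcal{M}'$ is strictly dominated by the isoperimetric gain $\psi_{\mathcal{M}} \Delta \cdot \pi(A')$; the hypothesis $s \geq \rho/(\psi_{\mathcal{M}}\Delta)$ is calibrated to be exactly sufficient (up to the absolute constant hidden in $\Omega$) for this absorption to go through.
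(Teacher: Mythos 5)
Your overall route is the classical two-sided Lov\'asz--Simonovits argument (bad sets $A\subset S$, $B\subset S^c$, two easy cases, TV-separation of $A'=A\cap\mathcal M'$ and $B'=B\cap\mathcal M'$, isoperimetry applied to $A'$, and reversibility to count both $S\setminus A$ and $S^c\setminus B$), whereas the paper argues by contradiction with only a one-sided threshold set inside $S$ (the points of $S$ that escape with probability at least $0.01$), expands the non-escaping core by $\Delta$, and uses reversibility to turn the induced $S^c\to S$ flow into a contradiction. Both are legitimate variants of the same mechanism, and all of your structural steps check out: the strict inequalities give $d(A',B')>\Delta$ correctly (you only need $x\in\mathcal M'$ for the coupling hypothesis), the isoperimetry application with $\pi(A')\le 1/2$ is valid, and the identity $\int_S\mathcal T_x(S^c)\,d\pi=\int_{S^c}\mathcal T_x(S)\,d\pi$ is exactly what reversibility provides.

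The genuine problem is the last "bookkeeping" step, and it is not fixable by the constant hidden in $\Omega$. From $2\Phi(S)\ge c_0\bigl(\psi_{\mathcal M}\Delta(\pi(S)/2-\rho)-\rho\bigr)$ and the stated hypothesis $\rho\le s\,\psi_{\mathcal M}\Delta$ with $\psi_{\mathcal M}\Delta\le 1/2$, the best you get is $2\Phi(S)\ge (c_0/2)\,\psi_{\mathcal M}\Delta\,(\pi(S)-3s)$, and this is \emph{negative} whenever $\pi(S)<3s$ (e.g.\ $\pi(S)=2s$, $\rho=s\psi_{\mathcal M}\Delta$, $\psi_{\mathcal M}\Delta=1/2$), while the target $\psi_{\mathcal M}\Delta(\pi(S)-s)$ is strictly positive there; no choice of the absolute constant in the conclusion can rescue a vacuous lower bound, and the easy cases do not cover this regime. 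The culprit is the standalone $-\rho$ loss (the mass of $\mathcal M\setminus\mathcal M'$ removed from the expansion), which under the hypothesis with constant exactly $1$ can be as large as the main term $\psi_{\mathcal M}\Delta\,\pi(A')$. The repair is routine but should be stated: one needs $s\ge C\rho/(\psi_{\mathcal M}\Delta)$ for a modest absolute constant (e.g.\ $C=6$ suffices in your scheme, giving $\psi_{\mathcal M}\Delta\,\pi(A')-\rho\ge \tfrac12\psi_{\mathcal M}\Delta(\pi(S)-s)$), i.e.\ the loss must be a small fraction of $s\psi_{\mathcal M}\Delta$ rather than comparable to it. Note the paper has the same slack --- its appendix proof invokes the condition with an extra factor of $8$ relative to the theorem statement --- so this is a shared constant-calibration issue rather than a conceptual flaw, but your claim that the stated hypothesis is "exactly sufficient up to the $\Omega$ constant" is false as written and the inequality chain must be amended accordingly.
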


\section{Hybrid barrier metric and second-order self-concordance}\label{sec:secondorderselfconcordance}

The goal of this section is to prove the strong self-concordance properties of our hybrid barrier as defined in Lemma~\ref{lem:calabitype}. We start by developing some basic properties of Lewis weights, the corresponding metric, and their derivatives, which we exploit throughout the proof. For sake of clarity of the calculations, we denote the matrix $\Px \Rxv \Px$ regarding vector $v$, which will appear a number of times by $\Pxv$.
Here we show the infinity norm self-concordance for the first and second order derivative of the metric as a warm up. For the proof of our third order self-concordance, we refer the reader to section~\ref{sec:thirdorderselfconcordance}.
In this section, for sake of brevity and clarity of the proof, we do not track the constants (which depends on $\frac{1}{4/p-1}$) and all of our inequalities $\lesssim, \preccurlyeq$ are up to log factors.

The following Lemma is proved in appendix~\ref{app:metricform}.
\begin{lemma}[$p$-Lewis-weight metric]\label{lem:metricsecondform}
The p-Lewis weight metric $g_1 = \nabla^2 \log\det{\left( \A^T\W^{1-2/p}\A \right)}$ can be written in the following form
\begin{align}
    g_1(x) &= \A^\top (\W + 2\Lambdax)\A + 2(1-2/p)\A^\top \Lambdax {\G}^{-1} \Lambdax \A,\label{lem:initialform}
\end{align}
or alternatively
\begin{align*}
    g_1(x) &= \A^\top (\W + 2\Lambdax)\A + (p^2/2) (1-2/p)\A^\top \G \A \\
    &-p^2 (1-2/p) \A^\top \Ptwo \A + (p^2/2) (1-2/p) \A^\top 
\Ptwo\G^{-1}\Ptwo\A.\numberthis\label{eq:metricotherform}
\end{align*}
\end{lemma}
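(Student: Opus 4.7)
My strategy is to differentiate $\logdet{\A^\top \W^{1-2/p}\A}$ twice as a function of $x$, simplify using the Lewis-weights optimality relations, and then deduce the second form from the first by a short algebraic identity. The key ingredients are the formulas $\D\A(v) = -\Sxv \A$ and $\D\W(v) = -2\diag{\Lambdax \rxv}$ stated earlier, together with the defining Lewis-weights identity $(\Px)_{ii} = w_i$.

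Setting $M = \A^\top\W^{1-2/p}\A$, I first evaluate $\D\logdet{M}(v) = \tr(M^{-1}\D M(v))$. The $\A$-contribution reduces, via the identity $\W^{1/2-1/p}\A M^{-1}\A^\top \W^{1/2-1/p} = \Px$ together with $(\Px)_{ii} = w_i$, to $-2\,\indic^\top \W \sxv$; the $\W$-contribution reduces to $-2(1-2/p)\,\indic^\top \Lambdax \rxv$. Crucially, this second term vanishes because $\indic^\top \Lambdax = 0$: the $j$th entry of $\indic^\top \Ptwo$ equals $\sum_i (\Px)_{ij}^2 = (\Px^2)_{jj} = (\Px)_{jj} = w_j$, so $\indic^\top \Ptwo = \indic^\top \W$. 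This cancellation is the envelope-theorem manifestation of the optimality of $w_x$, and yields $\D\logdet{M}(v) = -2\,\indic^\top \W \sxv$.

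Differentiating once more with $v$ held fixed gives $\D^2\logdet{M}(v,v) = -2\,\indic^\top \D\W(v)\,\sxv - 2\,\indic^\top \W\,\D \sxv(v)$. Substituting $\D\W(v) = -2\diag{\Lambdax \rxv}$ and $\D \sxv(v) = \D(\A v)(v) = -\Sxv \sxv$, the right-hand side simplifies to $4\,\rxv^\top \Lambdax \sxv + 2\,\sxv^\top \W \sxv$ (up to the overall normalization fixed by the statement). The decomposition $\W = \G + (1-2/p)\Lambdax$ gives $\W\G^{-1}\Lambdax = \Lambdax + (1-2/p)\Lambdax\G^{-1}\Lambdax$, so $\rxv^\top \Lambdax \sxv = \sxv^\top \W\G^{-1}\Lambdax \sxv = \sxv^\top\Lambdax\sxv + (1-2/p)\sxv^\top \Lambdax\G^{-1}\Lambdax\sxv$. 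Note that $\W\G^{-1}\Lambdax$ coincides with its transpose $\Lambdax\G^{-1}\W$, despite $\W,\G,\Lambdax$ not pairwise commuting---again a direct consequence of $\W = \G + (1-2/p)\Lambdax$. Collecting terms and re-expressing as a bilinear form in $v$ produces the first form.

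For the second form I use the identity $\G - \Ptwo = (2/p)\Lambdax$, immediate from $\G = (2/p)\W + (1-2/p)\Ptwo$ and $\Lambdax = \W - \Ptwo$. Expanding,
\begin{align*}
\G - 2\Ptwo + \Ptwo\G^{-1}\Ptwo = (\G - \Ptwo)\G^{-1}(\G - \Ptwo) = (4/p^2)\,\Lambdax\G^{-1}\Lambdax,
\end{align*}
so $2(1-2/p)\Lambdax\G^{-1}\Lambdax = (p^2/2)(1-2/p)[\G - 2\Ptwo + \Ptwo\G^{-1}\Ptwo]$, and substituting into the first form delivers the second. The main bookkeeping hurdles are the first-order cancellation $\indic^\top\Lambdax = 0$ and the symmetry of $\W\G^{-1}\Lambdax$ at second order; both reflect the optimality conditions defining $w_x$ in the $p$-Lewis weight program, and once they are identified the rest of the derivation is routine product-rule calculus.
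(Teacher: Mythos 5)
Your argument is correct in substance, and for the first identity it takes a genuinely different route from the paper: the paper obtains \eqref{lem:initialform} simply by citing Eq.\ (5.5) of Lemma 31 in \cite{lee2019solving}, whereas you rederive it from scratch by differentiating $\log\det(\A^\top\W^{1-2/p}\A)$ twice, using $\D\A(v)=-\Sxv\A$, $\D\W(v)=-2\diag{\Lambdax \rxv}$, the fixed-point identity $(\Px)_{ii}=(w_x)_i$, the cancellation $\indic^\top\Lambdax=0$ (valid since $\indic^\top\Ptwo=\indic^\top\W$ for the projection $\Px$), and the symmetry $\W\G^{-1}\Lambdax=\Lambdax+(1-2/p)\Lambdax\G^{-1}\Lambdax=\Lambdax\G^{-1}\W$; all of these steps check out. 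For the second identity, your manipulation $(\G-\Ptwo)\G^{-1}(\G-\Ptwo)=(4/p^2)\,\Lambdax\G^{-1}\Lambdax$ is exactly the paper's substitution $\Lambdax=\tfrac{p}{2}(\G-\Ptwo)$, so there the two proofs coincide. What your route buys is a self-contained verification (independent of the external reference) that also explains why the $\D\W$ term drops out of the gradient; what it costs is the constant: carried out exactly, your second derivative equals $2\,\sxv^\top(\W+2\Lambdax)\sxv+4(1-2/p)\,\sxv^\top\Lambdax\G^{-1}\Lambdax\sxv$, i.e.\ \emph{twice} the quadratic form in \eqref{lem:initialform} (at $p=2$ your computation gives $\A^\top(6\W-4\Ptwo)\A$, while \eqref{lem:initialform} gives the classical volumetric-barrier Hessian $\A^\top(3\W-2\Ptwo)\A$, which corresponds to the $\tfrac12\log\det$ normalization in the cited work). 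Since this lemma is only ever used up to universal constants, the discrepancy is harmless, but your phrase ``up to the overall normalization fixed by the statement'' should be made precise: either carry a factor $\tfrac12$ in the definition of $g_1$ or double the constants in \eqref{lem:initialform} and \eqref{eq:metricotherform}.
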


In the following Lemma we state a vital $\|.\|_{\infty \rightarrow \infty}$ norm bound for the matrix $\G^{-1}\W$ which enables us to obtain L\"{o}wner inequalities by pulling off the $\|.\|_{x,\infty}$ norm of $v$, the direction of the derivative. Note that condition $p < 4$ is vital for this norm bound.

\begin{lemma}[Operator infinity norm bound]
For $\gv = \G^{-1}\W \sv$, given any vector $\sv$ and $p < 4$, we have
\begin{align*}
    \|\gv\|_\infty \leq \frac{1}{4/p - 1}\|\sv\|_\infty.
\end{align*}
\end{lemma}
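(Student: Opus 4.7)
My plan is to work directly with the defining equation $\G\gv = \W\sv$ and exploit two structural facts: the algebraic decomposition of $\G$, and the row-sum identity for the orthogonal projection $\Px$. First, using $\Lambdax = \W - \Ptwo$ and $\G = \W - (1 - 2/p)\Lambdax$, substitution yields the clean form
$$\G \;=\; \tfrac{2}{p}\,\W \;+\; \bigl(1 - \tfrac{2}{p}\bigr)\,\Ptwo.$$
Writing $\G\gv = \W\sv$ coordinate-wise, the $i$-th row reads
$$\tfrac{2}{p}\,w_i\,\gv_i \;+\; \bigl(1 - \tfrac{2}{p}\bigr)\sum_{j}(\Px)_{ij}^{2}\,\gv_j \;=\; w_i\,\sv_i,$$
where $w_i = (\W)_{ii}$ is the $i$-th Lewis weight.

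The crucial structural input is that $\Px$ is the orthogonal projection onto the range of $\W^{1/2 - 1/p}\A$, so it is symmetric and idempotent. Hence $\sum_j (\Px)_{ij}^{2} = (\Px^{2})_{ii} = (\Px)_{ii}$, and by the defining property of the Lewis weights, $(\Px)_{ii} = w_i$. Thus the $i$-th row of $\Ptwo$ is a nonnegative vector of total mass exactly $w_i$, which yields the key estimate
$$\Bigl|\sum_{j}(\Px)_{ij}^{2}\,\gv_j\Bigr| \;\le\; w_i\,\|\gv\|_\infty.$$

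The finish is then essentially algebraic. Pick $i^{\ast}$ maximizing $|\gv_i|$, isolate $\tfrac{2}{p}\,w_{i^{\ast}}\,\gv_{i^{\ast}}$ in the row-$i^{\ast}$ equation, apply the triangle inequality, and divide by $w_{i^{\ast}}$ to obtain
$$\Bigl(\tfrac{2}{p} - \bigl|1 - \tfrac{2}{p}\bigr|\Bigr)\,\|\gv\|_\infty \;\le\; \|\sv\|_\infty.$$
In the regime $2 \le p < 4$ (the one the paper uses, since the hybrid barrier takes $p = 4 - 1/\log m$), the left-hand factor equals $4/p - 1 > 0$, which rearranges to exactly the claimed bound; for $p \le 2$ the same manipulation gives the even sharper conclusion $\|\gv\|_\infty \le \|\sv\|_\infty$. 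There is no real obstacle here beyond recognising the right decomposition of $\G$ and invoking the Lewis-weight projection identity $(\Px)_{ii} = w_i$ together with idempotence. The threshold $p < 4$ is sharp: at $p = 4$ the coefficient $4/p - 1$ vanishes, the cross-term exactly cancels the principal term, and the argument degenerates.
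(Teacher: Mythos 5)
Your proposal is correct and follows essentially the same route as the paper's own proof: the decomposition $\G = \tfrac{2}{p}\W + (1-\tfrac{2}{p})\Ptwo$, the row-sum identity $\sum_j (\Px)_{ij}^2 = w_i$ (from idempotence of $\Px$ and the Lewis-weight fixed point), and an estimate at the index maximizing $|\gv_i|$. The only difference is presentational — you isolate and divide where the paper argues by contradiction — and your remark about the $p<2$ versus $2\le p<4$ regimes matches what the paper's argument actually delivers.
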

\begin{proof}
    The proof can be found in Appendix~\ref{sec:infnormcomparison}.
\end{proof}

Next, we state a lemma regarding the expansion of the directional derivative of the Lewis weights metric $g_1$.

\begin{lemma}[Derivative of the $p$-Lewis weights metric]\label{lem:gderivative}
Given arbitrary direction $v$, we have
\begin{align*}
    \D g_1(v) = & - 2\A^\top \Sxv \big(\W + 2\Lambdax + 2(1-2/p)\Lambdax \G^{-1}\Lambdax\big) \A \tab &&(\trr 1)\\
    & - 2\A^\top \big(\W + 2\Lambdax + 2(1-2/p)\Lambdax \G^{-1}\Lambdax\big) \Sxv \A\\
    & + 3\A^\top \Wxv \A - 4\A^T \Big(\Px \odot (-\Rxv \Px - \Px \Rxv + 2\Pxv) \Big) \A \tab &&(\trr 2)\\  
    & - 4(1-2/p)\A^\top \Big(\Px \odot (-\Rxv \Px - \Px\Rxv) \Big) \G^{-1} \Lambdax \A \tab &&(\trr 3)\\
    & - 8(1-2/p)\A^\top \big(\Px \odot \Pxv \big) \G^{-1} \Lambdax \A \tab &&(\trr 4)\\
    & - 4(1-2/p)\A^\top \Lambdax \G^{-1} \Big(\Px \odot (-\Rxv \Px - \Px \Rxv + 2\Pxv) \Big) \A\\
    & + 2(1-2/p) \A^\top \Wxv\G^{-1}\Lambdax \A \tab &&(\trr 5)\\
    & + 2(1-2/p) \A^\top \Lambdax \G^{-1}\Wxv \A\\
    &-2(1 - 2/p)(2/p)\A^\top \Lambdax \G^{-1} \Wxv\G^{-1}\Lambdax \A. \tab  &&(\trr 5)'\\
    &-2(1 - 2/p)\A^\top \Lambdax \G^{-1} 2(1-2/p)\big(\Px \odot (-\Rxv \Px - \Px\Rxv) \big)\G^{-1}\Lambdax \A. \tab &&(\trr 6)\\
    &-2(1 - 2/p)\A^\top \Lambdax \G^{-1} 4(1-2/p)\big(\Px \odot \Pxv \big)\G^{-1}\Lambdax \A. \tab &&(\trr 7)\numberthis\label{eq:clarifyterms}
\end{align*}

We have numbered the terms above by $(\trr \dots)$ to refer to them later on. 
\end{lemma}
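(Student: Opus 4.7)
The plan is to start from the explicit form of $g_1$ given by Lemma~\ref{lem:metricsecondform}, namely
\[ g_1 = \A^\top \W \A + 2\A^\top \Lambdax \A + 2(1-2/p)\A^\top \Lambdax \G^{-1} \Lambdax \A, \]
and differentiate each summand term by term using the product rule. Before starting, I collect the atomic derivative identities I will need: $\D\A(v) = -\Sxv \A$, which follows from the chain rule applied to the rescaled rows $a_i^\top/s_i$; $\D\W(v) = \Wxv = -2\diag{\Lambdax \rxv}$ from the previous lemma on the derivative of Lewis weights; $\D\Px(v) = -\Px\Rxv - \Rxv\Px + 2\Pxv$ from Lemma~\ref{lem:projderivative}; $\D\Ptwo(v) = 2\Px\odot\D\Px(v) = 2\Px\odot(-\Rxv\Px - \Px\Rxv + 2\Pxv)$, obtained directly from $\Ptwo = \Px\odot\Px$; and, using $\Lambdax = \W - \Ptwo$ and the identity $\G = (2/p)\W + (1-2/p)\Ptwo$, the relations $\D\Lambdax(v) = \Wxv - \D\Ptwo(v)$ and $\D\G(v) = (2/p)\Wxv + (1-2/p)\D\Ptwo(v)$. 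Finally I will use the standard formula $\D(\G^{-1})(v) = -\G^{-1}\D\G(v)\G^{-1}$.

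Next, I apply these identities to each of the three summands. Differentiating $\A^\top\W\A$ produces one $\Sxv$-term from each $\A$-factor and the contribution $\A^\top\Wxv\A$ from the middle. Differentiating $2\A^\top\Lambdax\A$ similarly yields two more outer $\Sxv$-terms, together with $2\A^\top\D\Lambdax(v)\A$; substituting $\D\Lambdax(v) = \Wxv - 2\Px\odot(-\Rxv\Px - \Px\Rxv + 2\Pxv)$ contributes a further $2\A^\top\Wxv\A$ (bringing the total $\Wxv$-coefficient to $3$) along with the Hadamard-structured term $(\trr 2)$. Differentiating the third summand $2(1-2/p)\A^\top\Lambdax\G^{-1}\Lambdax\A$ splits into five factor-by-factor pieces: the outermost $\A^\top$ and $\A$ produce the remaining $\Sxv$-terms; the two pieces $\D\Lambdax(v)\G^{-1}\Lambdax$ and its mirror $\Lambdax\G^{-1}\D\Lambdax(v)$ generate $(\trr 3)$, $(\trr 4)$ and the two symmetric copies of $(\trr 5)$; and $\Lambdax\D(\G^{-1})(v)\Lambdax = -\Lambdax\G^{-1}\D\G(v)\G^{-1}\Lambdax$, once the three pieces $(2/p)\Wxv$, $2(1-2/p)\Px\odot(-\Rxv\Px - \Px\Rxv)$, $4(1-2/p)\Px\odot\Pxv$ of $\D\G(v)$ are inserted, yields $(\trr 5)'$, $(\trr 6)$, $(\trr 7)$ respectively.

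Finally, the $\Sxv$-contributions assembled from all three summands regroup exactly as the sum
\[ -\A^\top \Sxv(\W + 2\Lambdax + 2(1-2/p)\Lambdax\G^{-1}\Lambdax)\A \;-\; \A^\top(\W + 2\Lambdax + 2(1-2/p)\Lambdax\G^{-1}\Lambdax)\Sxv\A, \]
since the matrix in parentheses is precisely the $M$ appearing in the compact writing $g_1 = \A^\top M\A$ from Lemma~\ref{lem:metricsecondform}; this is the content of $(\trr 1)$. There is no genuinely hard step; the main obstacle is bookkeeping, since Hadamard products of the form $\Px\odot(-\Rxv\Px - \Px\Rxv)$ and $\Px\odot\Pxv$ reappear at several nesting depths and are multiplied by coefficients built from $(1-2/p)$ and $(2/p)$. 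The only care required is to keep the $\G^{-1}$-sandwiched copies separate from the bare ones and to identify the symmetric pair produced by $\D\Lambdax(v)\G^{-1}\Lambdax$ and $\Lambdax\G^{-1}\D\Lambdax(v)$. Once all substitutions are performed and like terms collected, the enumeration in the statement emerges directly.
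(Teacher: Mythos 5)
Your proposal is correct and follows essentially the same route as the paper's proof: differentiate the first formulation of $g_1$ from Lemma~\ref{lem:metricsecondform} term by term, using $\D\A(v)=-\Sxv\A$, the Lewis-weights derivative, Lemma~\ref{lem:projderivative} (equivalently $\D\Ptwo(v)=2\Px\odot\D\Px(v)$), and $\D(\G^{-1})(v)=-\G^{-1}\D\G(v)\G^{-1}$, then collect the resulting terms into $(\trr 1)$ through $(\trr 7)$. The only discrepancy is the constant on the $\Sxv$-terms in $(\trr 1)$ (your product rule yields coefficient $1$ on each side rather than the stated $2$), which is immaterial here since the paper tracks this lemma and its downstream Löwner bounds only up to universal constants.
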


In order to show the first, second, and third self-concordance of our metric, we need to control the terms above as well as their first and second derivatives. We give the proof for the first and second order self-concordance in this section and delay the proof of third order self-concordance to appendix~\ref{sec:thirdorderselfconcordance}. Here, we start with a lemma which illustrates the calculation of the derivative of the $(\star 4)$ term above.
Ultimately we derive spectral bounds for each of the terms in these derivatives. We do not care about constants and factors of $p$ in these calculations (note that with the choice $p = 4 - 1/\log(m)$ these factors are at most polylogarithmic). Therefore, to simplify our calculation a bit, we ignore these constants.

\begin{lemma}\label{lem:star4}
The derivative of the term $(\trr 4)$ in Equation~\eqref{eq:clarifyterms} in direction $z$ is given by (up to constants)
\begin{align*}
    \D\big(\trr 4\big)(z) &\rightarrow \A^\top \Sxz \Px\odot \Pxv \G^{-1} \Lambdax \A &&(\trrr 1)\\
    &+ \A^\top \big(-\Rxz\Px - \Px\Rxz\big) \odot \Pxv \G^{-1}\Lambdax \A  &&(\trrr 2)\\
    &+ \A^\top \Pxz \odot \Pxv \G^{-1}\Lambdax \A &&(\trrr 3)\\
    & + \A^\top \Px\odot \Pxv \G^{-1} \D\G(z)\G^{-1}\Lambdax \A && (\trrr 4)\\
    & + \A^\top \Px \odot \Pxv \G^{-1} \D\Lambdax(z) \A &&(\trrr 5)\\
    & + \A^\top \Px\odot \Pxv \G^{-1} \Lambdax \Sxz \A &&(\trrr 6)\\
    & + \A^\top \Px\odot \Big(\big(-\Rxz \Px - \Px\Rxz + \Px\Rxz\Px\big)\Rxv \Px\Big) \G^{-1}\Lambdax \A &&(\trrr 7)\\
    & + \A^\top \Px \odot (\Px \D(\Rxv)(z) \Px) \G^{-1}\Lambdax \A &&(\trrr 8),
\end{align*}
where in the last term $\D(\Rxv)(z)$ we are considering $v$ as a fixed vector (i.e. the derivative in direction $z$ does not hit $v$).
\end{lemma}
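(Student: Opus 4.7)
\textbf{Proof plan for Lemma~\ref{lem:star4}.}

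The strategy is a direct application of the Leibniz (product) rule to the six-factor expression
\[
\A^{\top} \,(\Px \odot \Pxv)\, \G^{-1}\, \Lambdax\, \A,
\]
differentiating in the direction $z$. The only non-standard ingredient is the mixed Hadamard-matrix factor $\Px \odot \Pxv$, which itself has to be expanded via $\D(A \odot B)(z) = \D A(z)\odot B + A \odot \D B(z)$ and via the explicit product rule inside $\Pxv = \Px\Rxv\Px$. Because the lemma only asserts equality up to constants, all $O(1)$ combinatorial prefactors (including the sign from $\D\G^{-1}(z) = -\G^{-1}\D\G(z)\G^{-1}$) are absorbed into the $\rightarrow$ notation.

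\textbf{Ingredients.} I will collect the following derivative identities, each of which is either stated earlier in the excerpt or an elementary consequence of the definitions:
\begin{itemize}
\item $\D\A(z) = -\Sxz\A$, directly from $\A = S_x^{-1}\mathrm{A}$ and $\D s_i(z) = a_i^{\top}z = s_i(s_{x,z})_i$; hence $\D\A^{\top}(z) = -\A^{\top}\Sxz$.
\item $\D\Px(z) = -\Px\Rxz - \Rxz\Px + 2\Pxz$, which is Lemma~\ref{lem:projderivative} applied with direction $z$.
\item $\D\Pxv(z) = \D(\Px\Rxv\Px)(z) = (\D\Px(z))\Rxv\Px + \Px\,\D(\Rxv)(z)\,\Px + \Px\Rxv(\D\Px(z))$, with $v$ held fixed as stipulated.
\item $\D\G^{-1}(z) = -\G^{-1}\D\G(z)\G^{-1}$ and $\D\Lambdax(z)$ computed from $\Lambdax = \W - \Ptwo$ using the already-established derivatives of $\W$ and $\Px$.
\end{itemize}

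\textbf{Matching to the eight terms.} Applying the product rule factor by factor to $\A^{\top}(\Px\odot\Pxv)\G^{-1}\Lambdax\A$ produces six families of contributions, one per factor. The first (from $\D\A^{\top}(z) = -\A^{\top}\Sxz$) yields $(\trrr 1)$. Differentiating the Hadamard factor splits as $(\D\Px(z))\odot\Pxv + \Px\odot\D\Pxv(z)$; the first piece, after substituting $\D\Px(z) = -\Rxz\Px-\Px\Rxz + 2\Pxz$, produces $(\trrr 2)$ from the $-\Rxz\Px-\Px\Rxz$ part and $(\trrr 3)$ from the $2\Pxz$ part. Inside $\Px\odot\D\Pxv(z)$, the outer-left and outer-right $\D\Px$ contributions are merged into the single $(\trrr 7)$ (here I use that Hadamard product with the symmetric $\Px$ and subsequent sandwich between $\A^{\top}\cdot\A$ makes the two mirror pieces equal up to constants, so the three summands of $\D\Px$ reappear inside $(\trrr 7)$), and the middle piece $\Px\D(\Rxv)(z)\Px$ gives $(\trrr 8)$. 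Differentiating $\G^{-1}$ gives $(\trrr 4)$, differentiating $\Lambdax$ gives $(\trrr 5)$, and differentiating the final $\A$ gives $(\trrr 6)$. Altogether the expansion produces exactly the eight asserted terms.

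\textbf{Expected obstacles.} The step is not hard in principle, but the bookkeeping is delicate in two places. First, one must be careful that the derivative $\D(\Rxv)(z)$ only acts on the $x$-dependence through $\G^{-1}\W$ while $v$ itself is frozen; this is what lets me combine the two outer contributions of $\D\Pxv(z)$ with the $(\D\Px(z))\odot\Pxv$ term without double-counting. Second, the coefficients are suppressed by the $\lesssim$-convention of this section, and I need to verify that none of the terms actually vanishes or cancels against another before being absorbed into a constant; concretely, the $2\Pxz$ summand in $\D\Px(z)$ must not collapse with its mirror image inside $\D\Pxv(z)$. A careful accounting shows it does not, which is why $(\trrr 3)$ and $(\trrr 7)$ are listed as separate entries. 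Once these two points are handled, the listing follows mechanically.
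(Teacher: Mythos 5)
Your proposal is correct and takes essentially the same route as the paper, whose entire proof is ``ordinary differentiation and applying Lemma~\ref{lem:projderivative}''; you simply spell out the Leibniz bookkeeping term by term. One small imprecision worth noting: the mirror contribution $\Px\odot(\Px\Rxv\,\D\Px(z))$ is not literally equal to the $(\trrr 7)$ factor once sandwiched, because the right factor $\G^{-1}\Lambdax\A$ breaks the symmetry — it is the transpose of that Hadamard factor (with the middle sub-piece coinciding exactly since diagonal matrices commute), and it is absorbed only because the lemma is stated up to constants and every later use is a quadratic-form/L\"{o}wner bound that treats a matrix and its transpose identically, which is also how the paper implicitly omits it.
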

\begin{proof}
Follows from ordinary differentiation and applying Lemma~\ref{lem:projderivative}.
\end{proof}

In order to get a handle on these matrices via L\"{o}wner ordering, we derive various stability Lemmas for the derivatives of the Lewis weights and their related matrices $\Lambdax$, $\G$, etc and the stability of their derivatives. For example, we show the following third order self-concordance type property for Lewis weights themselves. The following Lemma is proved in Appendix~\ref{sec:lownerineq} in Lemma~\ref{lem:wprimesecondderivative}.
\begin{lemma}[Third derivative bound for Lewis weights]
We have
\begin{align*}
    -\|s_z\|_\infty \|s_u\|_\infty \|s_v\|_\infty \W \preccurlyeq \D^2(\Wxv)(z,u) \preccurlyeq \|s_z\|_\infty \|s_u\|_\infty \|s_v\|_\infty \W,
\end{align*}
where recall $\Wxv = \D\W(v)$.
\end{lemma}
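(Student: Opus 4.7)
The plan is to reduce the claimed L\"{o}wner inequality to an entry-wise infinity bound and then expand $\D^2 \Wxv(z,u)$ by the product rule, controlling each resulting term with first-order tools already proved in Appendix~\ref{sec:derivativestability}. First I would observe that, since $\Wxv = -2\diag{\Lambdax r_{x,v}}$ is diagonal, $\D^2 \Wxv(z,u)$ is diagonal as well, and an inequality of the form $-c\W \preccurlyeq M \preccurlyeq c\W$ for a diagonal $M$ is equivalent to $|M_{ii}|\le c\, w_i$ for every $i$. It therefore suffices to prove
\[
\bigl\|\W^{-1}\D^2\Wxv(z,u)\bigr\|_\infty \;\lesssim\; \|s_v\|_\infty \|s_z\|_\infty \|s_u\|_\infty .
\]

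Next I would apply the product rule to $\Lambdax r_{x,v}$ with $r_{x,v}=\G^{-1}\W s_{x,v}$, producing a finite list of terms of the form $\diag{\D^a\Lambdax\cdot \D^b r_{x,v}}$ with $a+b=2$, symmetrized over $(z,u)$. The inner derivatives reduce to known pieces: $\D\W(v)=-2\diag{\Lambdax r_{x,v}}$; $\D\G^{-1} = -\G^{-1}(\D\G)\G^{-1}$, where $\D\G$ decomposes further via $\D\W$ and $\D\Lambdax$; $\D s_{x,v}(z)=-\Sxz s_{x,v}$; and $\D\Lambdax = \D\W - \D\Ptwo$ with $\D\Ptwo$ handled through $\D\Px(z)=-\Px\Rxz-\Rxz\Px+2\Px\Rxz\Px$ (Lemma~\ref{lem:projderivative}). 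The second derivative $\D^2\Ptwo(z,u)$ is in turn obtained by one more application of this formula, producing Hadamard products of $\Px$ with $\D\Px$, $\Rxz$, $\Rxu$, and $\D\Rxz(u)$.

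The bounding step relies on three workhorse inputs. The first is the operator infinity-norm bound $\|\G^{-1}\W\|_{\infty\to\infty}\lesssim 1$, valid because $p<4$, which converts any $\G^{-1}\W$ acting on a vector into an $\|\cdot\|_\infty$ control on that vector; in particular $\|\Rxv\|_{\rm op}=\|r_{x,v}\|_\infty \lesssim \|s_v\|_\infty$. The second is the first-order analogue of the present lemma (already recorded in Appendix~\ref{sec:derivativestability}), which gives $\|\W^{-1}\Wxv\|_\infty \lesssim \|s_v\|_\infty$ and the corresponding bounds for $\Lambdax$, $\G$, and $\G^{-1}$. The third is the projection bound $\|\Px\|_{\rm op}\le 1$, plus the Hadamard fact $\|\diag{\Px M\Px}/\W\|_\infty \lesssim \|M\|_{\rm op}$ (since $(\Px)_{ii}=w_i$), which lets us extract a factor of $\W$ when reading off diagonals of expressions like $\Px\odot(\Px\Rxz\Px)$. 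With these three tools, each of the dozen or so product-rule terms contributes exactly one factor $\|s_v\|_\infty$, $\|s_z\|_\infty$, or $\|s_u\|_\infty$ per derivative, and the sum produces the required bound up to the usual polylogarithmic $\tfrac{1}{4/p-1}$ constants.

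The main obstacle is bookkeeping rather than genuine analytical difficulty: expanding $\D^2 r_{x,v}(z,u)$ creates cross-terms from differentiating $\G^{-1}$ twice (where we must expand $\D^2\G^{-1}$ using $\G^{-1}\D\G\,\G^{-1}\D\G\,\G^{-1}$ plus $\G^{-1}\D^2\G\,\G^{-1}$), and $\D^2\Ptwo(z,u)$ creates mixed Hadamard terms such as $\D\Px(z)\odot\D\Px(u)$ and $\Px\odot\D^2\Px(z,u)$ that require the per-term treatment indicated above. Organizing these into the reusable L\"{o}wner inequalities of Appendix~\ref{sec:derivativestability} — used in the same style as in Lemma~\ref{lem:star4} — and then summing the uniform $\|\cdot/\W\|_\infty$ estimates across terms completes the proof.
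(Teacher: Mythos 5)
Your proposal is correct and follows essentially the same route as the paper: since $\Wxv=-2\diag{\Lambdax \G^{-1}\W s_{x,v}}$ is diagonal, the L\"{o}wner bound is reduced to the entrywise estimate $\|\W^{-1}\D^2(\Wxv)(z,u)\|_\infty\lesssim\|s_v\|_\infty\|s_z\|_\infty\|s_u\|_\infty$, which the paper (Lemma~\ref{lem:wprimesecondderivative}) proves exactly by the same product-rule expansion of $\Lambdax\G^{-1}\W s_{x,v}$, controlling each resulting term with the $p<4$ bound $\|\G^{-1}\W\|_{\infty\to\infty}\lesssim 1$, the first-order stability lemmas for $\W,\Lambdax,\G$, and the projection/Hadamard estimates of Appendix~\ref{sec:derivativestability} (Lemmas~\ref{lem:infnorm1}--\ref{lem:infnormhelper2}). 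The only difference is bookkeeping: the paper differentiates the four-term first-derivative expansion term by term rather than grouping by $\D^a\Lambdax\cdot\D^b r_{x,v}$, which is the same computation.
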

Recall that the symbol $\preccurlyeq$ means L\"{o}wner order up to a constant factor. For more details and the proofs, we refer the reader to Appendix~\ref{sec:derivativestability}. Next, we proceed to show our first- and second-order strong self-concordance for the Lewis weight barrier. Note that strong self-concordance is easily checked for the log barrier, so the major remaining challenge is to prove it for the Lewis weights barrier. The general theme of the proof is that we pull out the infinity norm of the directional derivative vectors $v,w,u$ from the tensors that are generated as a result of differentiation. This requires us to develop estimates on various fundamental matrix quantities that we defined in section~\ref{sec:prelim}, namely $\G, \Lambdax, \Rxv$ at any point $x$ inside the polytope. Importantly, we develop these estimates with respect to the $\|.\|_{x,\infty}$ norm instead of the usual metric norm $\|.\|_g$, which crucially requires $p < 4$. This constraint on $p$ has its root in controlling the $\|.\|_{\infty \rightarrow \infty}$ norm of the matrix $\G^{-1}\W$ in Lemma~\ref{lem:ginfnorm}.

\begin{lemma}[First order infinity norm self-concordance]\label{lem:gfirstderivativebound}
For a direction $v$ we have
\begin{align*}
   -\|s_{x,v}\|_\infty g_1 \preccurlyeq \D g_1(v) \preccurlyeq \|s_{x,v}\|_\infty g_1.
\end{align*}
\end{lemma}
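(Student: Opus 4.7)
The plan is to bound each of the seven labelled pieces in the expansion of $\D g_1(v)$ from Lemma~\ref{lem:gderivative} in L\"owner order by $\|s_{x,v}\|_\infty g_1$, up to constants that depend on $(4/p-1)^{-1}$. Three structural ingredients drive the argument. First, from Lemma~\ref{lem:metricsecondform} together with $\Lambdax \succeq 0$, we have both $g_1 \succeq \A^\top \W \A$ and $g_1 \succeq 2(1-2/p)\,\A^\top \Lambdax \G^{-1}\Lambdax \A$, so it suffices to dominate each summand by $\|s_{x,v}\|_\infty$ times any one of $\A^\top \W \A$, $\A^\top \Lambdax \A$, or $\A^\top \Lambdax\G^{-1}\Lambdax \A$. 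Second, the Schur product theorem applied to $\Px \succeq 0$ yields $\Ptwo \succeq 0$, hence $\Lambdax = \W - \Ptwo \preccurlyeq \W$ and consequently $\G = \W - (1-2/p)\Lambdax \succeq (2/p)\W$ (using $p>2$), so $\G^{-1} \preccurlyeq (p/2)\W^{-1}$ and $\Lambdax\G^{-1}\Lambdax \preccurlyeq (p/2)\W$. Third, the infinity-norm operator bound $\|\G^{-1}\W s_{x,v}\|_\infty \le (4/p-1)^{-1}\|s_{x,v}\|_\infty$ from the $p<4$ assumption is equivalent to $\pm \Rxv \preccurlyeq (4/p-1)^{-1}\|s_{x,v}\|_\infty I$, and is the mechanism that converts infinity-norm control into L\"owner control for every summand containing $\Rxv$.

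For the symmetrization piece $(\trr 1) = -2(\A^\top \Sxv \mathbf M \A + \A^\top \mathbf M \Sxv \A)$ with $\mathbf M = \W + 2\Lambdax + 2(1-2/p)\Lambdax \G^{-1}\Lambdax$ (so that $\A^\top \mathbf M \A = g_1$), I would apply the AM--GM-type L\"owner inequality
$$\pm\bigl(\A^\top \Sxv \mathbf M \A + \A^\top \mathbf M \Sxv \A\bigr) \preccurlyeq \epsilon^{-1}\,\A^\top \Sxv \mathbf M \Sxv \A + \epsilon\, g_1,$$
which is a consequence of $(\Sxv \A \pm \epsilon \A)^\top \mathbf M (\Sxv \A \pm \epsilon \A) \succeq 0$, and choose $\epsilon = \|s_{x,v}\|_\infty$. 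It then remains to verify $\A^\top \Sxv \mathbf M \Sxv \A \preccurlyeq \|s_{x,v}\|_\infty^2 g_1$ summand by summand: the diagonal part gives $\Sxv \W \Sxv = \Sxv^2 \W \preccurlyeq \|s_{x,v}\|_\infty^2 \W$; the $\Lambdax$ part uses $\Lambdax \preccurlyeq \W$ to write $\Sxv \Lambdax \Sxv \preccurlyeq \Sxv \W \Sxv \preccurlyeq \|s_{x,v}\|_\infty^2 \W$; and the $\Lambdax\G^{-1}\Lambdax$ part uses the reduction $\Lambdax\G^{-1}\Lambdax \preccurlyeq (p/2)\W$ from the previous paragraph and the same sandwich.

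For the remaining terms $(\trr 2)$--$(\trr 7)$, which involve $\Wxv$, the derivative $\D\Px(v) = -\Px\Rxv - \Rxv \Px + 2\Pxv$ from Lemma~\ref{lem:projderivative}, and Hadamard products of $\Px$ against $\Rxv \Px$, $\Px\Rxv$, and $\Pxv$, I would invoke the auxiliary L\"owner estimates collected in Appendix~\ref{sec:derivativestability}. The first-order bound $\pm\Wxv \preccurlyeq \|s_{x,v}\|_\infty \W$ (a direct consequence of $\Wxv = -2\diag{\Lambdax r_{x,v}}$ together with ingredient (c)) handles the terms whose central factor is $\Wxv$. For the Hadamard expressions, I would use Schur-product estimates together with $\|\Rxv\|_{\mathrm{op}} \lesssim \|s_{x,v}\|_\infty$ and $\Ptwo \preccurlyeq \W$ to reduce each term to $\|s_{x,v}\|_\infty$ times one of $\A^\top \W \A$, $\A^\top \Lambdax \A$, or $\A^\top \Lambdax \G^{-1}\Lambdax \A$, each of which is $\preccurlyeq g_1$. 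Summing across the seven labelled pieces then yields the claimed two-sided L\"owner bound.

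The main obstacle is controlling the non-diagonal cross-products $\Sxv \Lambdax \Sxv$ and the Hadamard expressions like $\Px \odot (\Px \Rxv \Px)$ in the L\"owner sense: the scalar bound $|\Sxv| \preccurlyeq \|s_{x,v}\|_\infty I$ alone is not enough when sandwiching non-diagonal PSD matrices (one can construct $2\times 2$ counterexamples showing that $\Sxv\Lambdax\Sxv\not\preccurlyeq\|s_{x,v}\|_\infty^2\Lambdax$ in general), and one must combine the Schur positivity $\Ptwo \succeq 0$ (to trade the non-diagonal $\Lambdax$ for the diagonal $\W$) with the $p<4$ infinity-norm control on $\G^{-1}\W$ (to pull $\|s_{x,v}\|_\infty$ out of factors like $\Rxv$) in a coordinated way.
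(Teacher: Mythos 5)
Your proposal is correct and follows essentially the same route as the paper: the paper compresses the argument into a one-line citation of Lemmas~\ref{lem:Glambdabound} and~\ref{lem:lem:wbound}, but the underlying mechanism is exactly what you spell out, namely bounding each piece of the expansion in Lemma~\ref{lem:gderivative} by $\|s_{x,v}\|_\infty\,\A^\top\W\A \preccurlyeq \|s_{x,v}\|_\infty\, g_1$ using $\Lambdax \preccurlyeq \W$, $\G \succeq (2/p)\W$, the $p<4$ bound on $\G^{-1}\W$ from Lemma~\ref{lem:ginfnorm}, $\pm\Wxv \preccurlyeq \|s_{x,v}\|_\infty\W$, and a Cauchy--Schwarz/AM--GM symmetrization for the asymmetric cross terms (the same device the paper uses explicitly in the second-order proof). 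Your treatment is simply more detailed than the paper's, and the extra care you note about not sandwiching non-diagonal PSD matrices by $\|s_{x,v}\|_\infty^2\Lambdax$ is handled correctly by trading $\Lambdax$ for $\W$.
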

\begin{proof}
Direct consequence of Lemmas~\ref{lem:Glambdabound} and~\ref{lem:lem:wbound}.
\end{proof}

In the rest of this section, we bring the proof of the second order strong self-concordance of our metric.
\begin{lemma}[Second order infinity norm self-concordance]\label{lem:gsecondderivativebound}
The second derivatives of the metric $g_1$ of our hybrid barrier is bounded as
\begin{align*}
    -\|s_z\|_\infty \|s_v\|_\infty g_1 \preccurlyeq \D g_1(z,v) \preccurlyeq \|s_z\|_\infty \|s_v\|_\infty g_1.
\end{align*}
\end{lemma}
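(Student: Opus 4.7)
The plan is to differentiate the expansion of $\D g_1(v)$ from Lemma~\ref{lem:gderivative} once more in direction $z$ and bound each of the resulting sub-terms in L\"owner order by $\|s_{x,v}\|_\infty\|s_{x,z}\|_\infty\,\A^\top\W\A$, up to polylogarithmic factors. Since Lemma~\ref{lem:lewisweightbarrier} gives $\A^\top\W\A\preccurlyeq g_1$, such a bound immediately implies the claim. The template for this second differentiation is Lemma~\ref{lem:star4}, which lists the eight sub-terms $(\trrr 1)$--$(\trrr 8)$ produced by differentiating the single term $(\trr 4)$; each other $(\trr i)$ in \eqref{eq:clarifyterms} generates a bounded number of sub-terms of exactly the same flavor. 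Every sub-term has the form $\A^\top M\A$ where $M$ is a product of diagonal matrices drawn from $\{\Sxv,\Sxz,\Wxv,\Wxz,\Rxv,\Rxz,\W,\G,\Lambdax\}$, the inverse $\G^{-1}$, projection matrices $\Px$, Hadamard products of these, and first-order derivative pieces $\D\G(z),\D\Lambdax(z),\D\Rxv(z),\D\W(z)$.

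\textbf{Key ingredients.} I would use three main ingredients to bound each $\A^\top M\A$ uniformly by $\|s_{x,v}\|_\infty\|s_{x,z}\|_\infty\,\A^\top\W\A$ up to polylog factors. First, every diagonal factor carrying a $v$ or $z$ index pulls out an $\|s_{x,v}\|_\infty$ or $\|s_{x,z}\|_\infty$: for $\Sxv$ and $\Wxv$ this is immediate from the definitions, while for $\Rxv=\diag{\G^{-1}\W s_{x,v}}$ it uses the operator bound $\|\G^{-1}\W\|_{\infty\to\infty}\lesssim 1$, which is exactly where $p<4$ enters. The higher-order pieces $\D\G(z),\D\Lambdax(z),\D\Rxv(z),\D^2\W(z,v)$ are handled analogously via the stability lemmas of Appendix~\ref{sec:derivativestability}. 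Second, the spectral inequalities $\G\asymp\W$ and $\Lambdax\preccurlyeq\W$ let me absorb middle factors of $\G,\G^{-1},\Lambdax$ against the weights $\W$ up to constants. Third, any Hadamard factor $\Px\odot B$ with $\|B\|_{\max}\le C$ is dominated entrywise by $C\,|\Px|$, so by the Schur product theorem it is spectrally $\preccurlyeq C\,\Ptwo\preccurlyeq C\,\W$, using $\mathrm{diag}(\Ptwo)=w_x$. Together these three ingredients reduce each sub-term to the desired form.

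\textbf{Main obstacle.} The trickiest sub-terms are those carrying two coupled Hadamard products, exemplified by $(\trrr 3)=\A^\top(\Pxz\odot\Pxv)\G^{-1}\Lambdax\A$, together with the new pieces arising from $\D\Rxv(z)$ sandwiched inside a Hadamard such as $(\trrr 7)$ and $(\trrr 8)$. For the double-Hadamard case, the key observation is that $\Pxv=\Px\Rxv\Px$ has entries dominated by $\|r_{x,v}\|_\infty\,|\Px|$ entrywise (Cauchy--Schwarz on $\sum_k(\Px)_{ik}(\Rxv)_{kk}(\Px)_{kj}$), so $\Pxz\odot\Pxv$ is pointwise bounded by $\|r_{x,v}\|_\infty\|r_{x,z}\|_\infty\,\Ptwo$, and the outer sandwich with $\A^\top$ and $\G^{-1}\Lambdax\A$ then closes via $\G,\Lambdax\preccurlyeq\W$; finally $\|r_{x,v}\|_\infty\lesssim\|s_{x,v}\|_\infty$ by the $p<4$ operator bound. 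For the $\D\Rxv(z)$ pieces I would use that $\D\Rxv(z)$ is itself a diagonal matrix whose entries come from $\G^{-1}\W$ applied to an explicit vector controlled in $\|\cdot\|_\infty$ by $\|s_{x,v}\|_\infty\|s_{x,z}\|_\infty$, using the Lewis-weights second derivative stability of Appendix~\ref{sec:derivativestability}. Summing the resulting estimates over all sub-terms generated by differentiating $(\trr 1),\dots,(\trr 7)$ yields $\D^2 g_1(z,v)\preccurlyeq\|s_{x,z}\|_\infty\|s_{x,v}\|_\infty\, g_1$; the matching lower bound follows by the same argument applied to $-\D^2 g_1(z,v)$, since all the inequalities above are two-sided in their origin.
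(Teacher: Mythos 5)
Your overall strategy coincides with the paper's proof of this lemma: differentiate the expansion of Lemma~\ref{lem:gderivative} once more, pull the factors $\|s_{x,v}\|_\infty,\|s_{x,z}\|_\infty$ out of each sub-term using the $p<4$ operator bound of Lemma~\ref{lem:ginfnorm} together with the stability lemmas for $\D\G,\D\Lambdax,\D\Rxv,\D^2\W$, absorb $\G,\G^{-1},\Lambdax$ against $\W$, and finish with $\A^\top\W\A\preccurlyeq g_1$ (Lemma~\ref{lem:lewisweightbarrier}). However, your treatment of the coupled Hadamard terms --- which you correctly identify as the main obstacle --- has a genuine gap. First, the entrywise claim $|(\Pxv)_{ij}|\le\|r_{x,v}\|_\infty\,|(\Px)_{ij}|$ is false: Cauchy--Schwarz applied to $\sum_k(\Px)_{ik}(r_{x,v})_k(\Px)_{kj}$ only yields $|(\Pxv)_{ij}|\le\|r_{x,v}\|_\infty\sqrt{w_iw_j}$, and indeed $(\Px)_{ij}$ can vanish while $(\Px\Rxv\Px)_{ij}$ does not. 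Second, even a correct entrywise domination does not imply a L\"owner bound: the Schur product theorem requires PSD (L\"owner) relations between the Hadamard factors, not max-entry bounds. For instance, with $\Px=\frac1m\indic\indic^\top$ and $B$ a symmetric random sign matrix one has $\|\Px\odot B\|\approx 1/\sqrt m$ while $\|\Ptwo\|=1/m$, so ``$\|B\|_{\max}\le C$ implies $\Px\odot B\preccurlyeq C\,\Ptwo$'' fails by a $\sqrt m$ factor. If you instead repair your route with the correct entrywise bound $|(\Pxz\odot\Pxv)_{ij}|\le\|r_{x,z}\|_\infty\|r_{x,v}\|_\infty\,w_iw_j$ and a spectral-radius comparison, the dominating matrix $(\sqrt{w_iw_j})_{ij}$ has norm $\sum_iw_i=n$, so you lose a factor of $n$ --- fatal for the mixing bound.

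The missing ingredient is the paper's Hadamard--L\"owner estimate, Lemma~\ref{lem:hadamardlowner}: using the Cholesky decomposition $\Px=\sum_iu_iu_i^\top$ one gets the dimension-free two-sided bound $-\|D_1\|\|D_2\|\,\Ptwo\preccurlyeq(\Px D_1\Px)\odot(\Px D_2\Px)\preccurlyeq\|D_1\|\|D_2\|\,\Ptwo$, which, combined with Cauchy--Schwarz splits of the quadratic form $\qq^\top(\cdot)\qq$ and $\Ptwo\preccurlyeq\W$ (Lemma~\ref{lem:levelzerobounds}), disposes of $(\trrr 3)$, $(\trrr 7)$, $(\trrr 8)$ and all analogous coupled-Hadamard terms in Lemma~\ref{lem:star4}. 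Your ingredient on Hadamard factors should accordingly be restated for inner factors of the special form $\Px D\Px$ (or the diagonal factor should be pulled out of the Hadamard product, as in $\Px\odot(\Rxz\Px)=\Rxz\Ptwo$), since a bare max-entry hypothesis on $B$ is insufficient. Two smaller remarks: since most sub-terms are not symmetric, the bounds should be phrased for $|\qq^\top(\cdot)\qq|$, i.e.\ for the symmetrized terms, which is how the paper argues; and the derivatives in which both directions hit $\A$ or $\G^{-1}$ are most conveniently collected through the second formulation of the metric, Equation~\eqref{eq:metricotherform}, as in Equation~\eqref{eq:Ginverseterms} --- your plan covers them implicitly, so this is bookkeeping rather than a gap.
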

\begin{proof}
The goal is to look at the quadratic form of $\D g(z,v)$ on arbitrary vector $\qq$, i.e. $\qq^\top \D g(z,v)\qq$ and control it with $\|s_z\|_\infty \|s_v\|_\infty \|\qq\|_g^2$.  First, we consider each of the subterms as a result of differentiating $(\trr 4)$ in Lemma~\ref{lem:gderivative}, in direction $z$. This derivative is expanded in Lemma~\ref{lem:star4}. Regarding the term $(\trrr 1)$ of this expansion in Lemma~\ref{lem:star4}, we have
\begin{align*}
    (\trrr 1):\ 
    & \Big|\qq^\top \A^\top \Sxz \Px\odot \Pxv \G^{-1}\Lambdax \A\qq \Big|\\
    & \leq 
    \brackets{\qq^\top \A^\top \Sxz \W \Sxz \A \qq}^{1/2}
    \brackets{\qq^\top \A^\top \Lambda \G^{-1}\Px\odot \Pxv \W^{-1} P\odot \tilde \Px \G^{-1}\Lambda A\qq}^{1/2}\\
    & \leq\|s_z\|_\infty \|s_v\|_\infty \bracketss{\qq^\top \A^\top \W \A\qq} \bracketss{\qq^\top \A^\top \Lambdax \G^{-1}\W\G^{-1}\Lambdax \A \qq}\\
    & \lesssim \|s_z\|_\infty \|s_v\|_\infty \|\qq\|_{g_1}^2,\numberthis\label{eq:trrr1} 
\end{align*}

Next, for the $(\trrr 2)$ term in Lemma~\ref{lem:star4}:
\begin{align*}
  (\trrr 2):\ 
  &\qq^\top \A^\top (-\Rxz\Px - \Px\Rxz)\odot \Pxv \G^{-1}\Lambdax \A \qq \\
  &\leq 
  \Big|\qq^\top \A^\top \Rxz \Px\odot \Pxv \G^{-1}\Lambdax \A\qq\Big| + 
  \Big|\qq^\top \A^\top \Px \odot \Pxv \Rxz \G^{-1}\Lambdax \A \qq\Big|.
\end{align*}
The first part is similar to the handle of term $(\trrr 1)$ in Equation~\eqref{eq:trrr1}. 
For the second part:
\begin{align*}
    &\Big|\qq^\top \A^\top \Px \odot \Pxv \Rxz \G^{-1}\Lambdax \A \qq\Big| \\
    & \leq \bracketss{\qq^\top \A^\top \Px\odot \Pxv \W^{-1} \Px\odot \Pxv \A \qq}
    \bracketss{\qq^\top \A^\top \Lambdax \G^{-1}\Rxz \W \Rxz \G^{-1}\Lambdax \A \qq}\\
    & \lesssim  \|s_v\|_\infty \|s_z\|_\infty \|\qq\|_{g_1}^2.
\end{align*}

For the $(\trrr 3)$ term in Lemma~\ref{lem:star4}:
\begin{align*}
 &\qq^\top \A^\top \Pxz \odot \Pxv \G^{-1}\Lambdax \A\qq \\
 &\lesssim
    \bracketss{\qq^\top \A^\top \Pxz \odot \Pxv \W^{-1} \Pxz \odot \Pxv \A\qq} \bracketss{\qq^\top A^\top\Lambdax \G^{-1}\W\G^{-1}\Lambdax \A \qq}
    \\
    &\lesssim
    \bracketss{\qq^\top \A^\top \W^{1/2}(\W^{-1/2} \Pxz \odot \Pxv \W^{-1/2})^2\W^{1/2} \A \qq}\bracketss{\qq^\top \A^\top \W \A \qq}
    \\
    &\lesssim
    \|s_z\|_\infty \|s_v\|_\infty \|\qq\|_{g_1}^2.\numberthis\label{star4ast3}
\end{align*}
where we used Lemma~\ref{lem:hadamardlowner} and~\ref{lem:levelzerobounds}.
Next, for term $(\trrr 4)$:
\begin{align*}
&\qq^\top \A^\top \Px\odot \Pxv \G^{-1} \D\G(z)\G^{-1}\Lambdax \A \qq\\
&\leq
\bracketss{\qq^\top \A^\top \Px\odot \Pxv \W^{-1} \Px\odot \Pxv \A\qq}\brackets{\qq^\top \A^\top \Lambdax \G^{-1}\D\G(z)\G^{-1}\W \G^{-1} \D\G(z)\G^{-1}\Lambdax \A
\qq}\\
&\lesssim 
 \brackets{\qq^\top \A^\top \W^{1/2}(\W^{-1/2}\Px\odot \Pxv \W^{-1/2})^2\W^{1/2}\A\qq} \brackets{\qq^T \A^\top (\W^{1/2}\G^{-1}\D\G(z)\G^{-1}\W^{1/2})^2  \A\qq}\\
&\lesssim \|s_v\|_\infty\|s_z\|_\infty\|\qq\|_{g_1}^2.
\end{align*}

Terms $(\trrr 5)$ and $(\trrr 6)$ are similar. For term $(\trrr 7)$, for the first term $\A^\top \Px\odot (\Rxz \Px\Rxv)\Px \G^{-1}\Lambdax \A$, note that
\begin{align*}
     \A^T \Px\odot (\Rxz \Px\Rxv\Px)\G^{-1}\Lambdax \A 
    = \A^\top \Rxz \Px\odot (\Px\Rxv\Px) \G^{-1}\Lambdax \A.
\end{align*}
which can be dealt with similar to $(\trrr 1)$ term using Lemma~\ref{lem:ginfnorm}.
The second term $\A^\top \Px\odot (\Px\Rxz\Rxv\Px)\G^{-1}\Lambdax \A$ in $(\trrr 7)$ is also similar to $(\trrr 1)$. For the last term in $(\trr 7)$, note that
\begin{align*}
    \qq^\top \Px \Rxz\Px\Rxv \Px \qq & \leq
    \brackets{\qq^\top \Px \Rxz \Px \Rxz \Px\qq}\brackets{\qq^\top \Px \Rxv \Px \Rxv \P \qq}\\
    & \leq \brackets{\qq^\top \Px \Rxz \Rxz \Px\qq}\brackets{\qq^\top \Px \Rxv  \Rxv \Px \qq}\\
    & \leq \|r_z\|_\infty\|r_v\|_\infty \qq^T \Px \qq,
\end{align*}
which implies
\begin{align*}
    \Px\Rxz \Px \Rxv \Px + \Px\Rxv \Px \Rxz \Px \preccurlyeq \|r_z\|_\infty \|r_v\|_\infty \Px.
\end{align*}
As a result, 
\begin{align*}
    \qq^\top &\A^\top \Px \odot (\Px\Rxz\Px\Rxv \Px + \Px\Rxv\Px\Rxz \Px) \G^{-1}\Lambdax \A\qq
    \\
    \leq &\bracketss{\qq^\top \A^\top (\Px \odot (\Px\Rxz\Px\Rxv \Px + \Px\Rxv\Px\Rxz \Px)) \W^{-1} \\
    &(\Px \odot (\Px\Rxz\Px\Rxv \Px + \Px\Rxv\Px\Rxz \Px)) \A\qq}\\
    &\times \bracketss{\qq^\top \A^\top \Lambdax \G^{-1}\W\G^{-1}\Lambdax \A\qq} \\
    \leq &\|s_v\|_\infty\|s_z\|_\infty \bracketss{\qq^\top \A^\top \Ptwo\W^{-1}\Ptwo\A\qq}\bracketss{\qq^\top \A^\top \Lambdax \G^{-1}\W \G^{-1}\Lambdax \A\qq}\\
    \lesssim &\|s_v\|_\infty\|s_z\|_\infty \|\qq\|_{g_1}^2.
\end{align*}

The bound for term $(\trrr 8)$ in Lemma~\ref{lem:star4} follows similarly, using Lemma~\ref{lem:derivativeofr}:
\begin{align*}
    &\qq^\top \A^\top \Px \odot (\Px \D(\Rxv)(z) \Px) \G^{-1}\Lambdax \A\qq \\
    & \leq \bracketss{\qq^\top \A^\top \Px\odot \Px\D(\Rxz)\Px \W^{-1} \Px\odot \Px\D(\Rxv)(z)\Px \A\qq}\bracketss{\qq^\top \A^\top \Lambdax \G^{-1}\W\G^{-1}\Lambdax \A\qq} \\
    &\lesssim \|\D(\Rxv)(z)\|_\infty\|\qq\|_{g_1}^2 \lesssim \|s_v\|_\infty \|s_z\|_\infty \|\qq\|_{g_1}^2.
\end{align*}

Next, we move on to bound the directional derivative of term $(\trr 5)$ in Lemma~\ref{lem:gderivative}, in direction $z$. This derivative is calculated in Lemma~\ref{lem:star5terms} in the Appendix.
For subterm $(\trrr 1)$ of $(\trr 5)$ defined in Lemma~\ref{lem:star5terms}, using Lemma~\ref{lem:lem:wbound}:
\begin{align*}
    \qq^\top \A^\top \Sxz \Wxv \G^{-1}\Lambdax \A\qq
    &\leq \bracketss{\qq^\top \A^\top \Sxz \W \Sxz \A\qq}\bracketss{\qq^\top \A^\top \Lambdax \G^{-1}\Wxv \W^{-1}\Wxv\G^{-1}\Lambdax \A\qq}\\
    &\lesssim \|s_z\|_\infty \|\qq\|_{g_1}^2.
\end{align*}
For subterm $(\trrr 2)$ of $(\trr 5)$ defined in Lemma~\ref{lem:lem:wbound}, we have using Lemmas~\ref{lem:derivativeofwprime} and~\ref{lem:levelzerobounds}:
\begin{align*}
        \qq^\top \A^\top \D(\Wxv)(z)\G^{-1}\Lambdax \A \qq
    & \leq \bracketss{\qq^\top \A^\top \D(\Wxv)(z)\W^{-1}\D(\Wxv)(z)\A\qq}\bracketss{\qq^\top \A^\top \Lambdax \G^{-1}\W\G^{-1}\Lambdax \A\qq}\\
    & \lesssim \|\qq\|_{g_1}^2 \|s_z\|_\infty\|s_v\|_\infty.
\end{align*}
For subterm $(\trrr 3)$ of $(\trr 5)$, using Lemmas~\ref{lem:levelzerobounds},~\ref{lem:Glambdabound}, and~\ref{lem:lem:wbound}:
\begin{align*}
    \qq^\top \A^\top \Wxv\G^{-1}\D\G(z)\G^{-1}\Lambdax \A\qq 
    \leq &\bracketss{\qq^\top \A^\top \Wxv\G^{-1}\Wxv \A\qq}\\
    &\bracketss{\qq^\top \A^\top \Sxz \Lambdax \G^{-1}\D\G(z)\G^{-1}\D\G(z)\G^{-1}\Lambdax \A\qq}\\
    \lesssim &\|\qq\|_{g_1}^2 \|s_v\|_\infty \|s_z\|_\infty. 
\end{align*}
Subterm $(\trrr 4)$ of $(\trr 5)$ is similar to $(\trrr 3)$ and subterm $(\trrr 5)$ is similar to subterm $(\trrr 1)$.

Now considering the second formulation of the metric presented in Lemma~\ref{lem:metricsecondform}, in Equation~\eqref{eq:metricotherform}, above we handled the case where one of the directional derivatives, with respect to either $v$ or $z$, hits the $P^{(2)}$ part in the last term of the metric in Equation~\eqref{eq:metricotherform}. Hence, regarding this last term, the remaining terms in its derivative are the ones for which the derivative with respect to both of $v$ and $z$ hit either the $\A$ matrix or the $\G^{-1}$ matrix, i.e.
\begin{align*}
    &\text{remaining terms of } \qq^\top\D^2(\A^\top \Ptwo \G^{-1}\Ptwo\A)(v,z)\qq \rightarrow\\
    & \qq^T \A^\top \Ptwo\G^{-1}\D\G(z,v)\G^{-1}\Ptwo\A\qq \\
    +& \qq^\top \A^\top \Ptwo\G^{-1}\D\G(z)\G^{-1}\D\G(v)\G^{-1}\Ptwo\A\qq + \qq^\top \A^\top \Ptwo\G^{-1}\D\G(v)\G^{-1}\D\G(z)\G^{-1}\Ptwo\A\qq \\
    +& \qq^\top\A^\top \Sxv\Sxz \Ptwo \G^{-1}\Ptwo \A\qq + \qq^\top\A^\top \Ptwo \G^{-1}\Ptwo \Sxv\Sxz \A\qq \\
    +& \qq^\top\A^\top \Ptwo \G^{-1}\Ptwo \Sxv\Sxz \A\qq \\
    +& \qq^\top \A^\top \Sxv \Ptwo \G^{-1}\Ptwo \Sxz\A\qq + \qq^\top \A^\top \Sxz \Ptwo \G^{-1}\Ptwo \Sxv\A\qq\\
    +& \qq^\top \A^\top \Sxv \Ptwo G^{-1}\D \G(z)\G^{-1} \Ptwo \A\qq + \qq^\top \A^\top  \Ptwo \G^{-1}\D \G(z)\G^{-1} \Ptwo \Sxv A\qq\\
    +& \qq^\top \A^\top \Sxz \Ptwo G^{-1}\D \G(v)\G^{-1} \Ptwo \A\qq + \qq^\top \A^\top  \Ptwo \G^{-1}\D \G(v)\G^{-1} \Ptwo \Sxz \A\qq.
    \numberthis\label{eq:Ginverseterms}
\end{align*}

All of the terms in~\eqref{eq:Ginverseterms} can be bounded by $O(\|s_\ell\|_w^2 \|s_z\|_\infty\|s_v\|_\infty)$.
For terms in the first line of Equation~\eqref{eq:Ginverseterms} we use Lemmas~\ref{lem:levelzerobounds} and~\ref{lem:Gsecondderivative}. For the second line we use Lemmas~\ref{lem:levelzerobounds} and~\ref{lem:Glambdabound}, and~\ref{lem:ginfnorm}. 
The $O(\|\qq\|_{g_1}^2 \|s_z\|_\infty\|s_v\|_\infty)$ bound on the rest of the terms in Equation~\eqref{eq:Ginverseterms} follows from Lemmas~\ref{lem:levelzerobounds} and~\ref{lem:ginfnorm} as well. 
Hence, overall we have shown for the last term in Equation~\eqref{eq:metricotherform}:
\begin{align*}
    -\|s_v\|_\infty \|s_z\|_\infty \A^\top \W \A \preccurlyeq \D(\A^\top \Ptwo\G^{-1}\Ptwo\A)(v,z) \preccurlyeq \|s_v\|_\infty \|s_z\|_\infty \A^\top \W \A.
\end{align*}
On the other hand, the derivative of the initial terms $\A^\top \W \A$, $\A^\top \Lambda \A$, $\A^\top \G \A$ in Equation~\eqref{eq:metricotherform} are similarly handled using Lemmas~\ref{lem:wprimesecondderivative},~\ref{lem:Gsecondderivative}, and~\ref{lem:levelzerobounds}, and~\ref{lem:ginfnorm}. This completes the proof of the second order strong self-concordance for $g_1$.

\end{proof}

Next, we move on to the third order self-concordance. For this, the number of terms grow quite large but luckily bounding them uses a similar approach. Hence, to give the essential ideas and derivations, we omit the proofs for the similar terms and only illustrate with the directional derivative of the $(\trr 4)$ term in Equation~\eqref{lem:gderivative}, which is the most complicated to handle. We state our final result for the directional derivatives of $(\trr 4)$ in Lemma~\ref{lem:twoderivativesstar4} below (for the proof, see Appendix~\ref{sec:thirdorderselfconcordance}).

\begin{lemma}[Second derivative of $(\trr 4)$]\label{lem:twoderivativesstar4}
Let $\mathbf{B}_{x,v}$ be the symmetrized version of the $(\trr 4)$ term in Lemma~\ref{lem:gderivative}:
\begin{align*}
    \mathbf{B}_{x,v} \triangleq \A^\top \Px\odot \Pxv \G^{-1}\Lambdax \A + \A^\top \Lambdax \G^{-1}\Px \odot \Pxv \A.
\end{align*}
where recall $\Pxv = \Px\Rxv\Px$. Then, two times derivative of $B_{x,v}$ in directions $z$ and $u$ can be spectrally controlled by the metric norm as the following:
\begin{align*}
     -\|v\|_{x,\infty}\|u\|_{x,\infty} \|z\|_{x,\infty} \A^\top \W \A 
     \preccurlyeq \D(\mathbf{B}_{x,v})(u,z) 
     \preccurlyeq \|u\|_{x,\infty} \|z\|_{x,\infty}\|v\|_{x,\infty} \A^\top \W \A,
\end{align*}
\end{lemma}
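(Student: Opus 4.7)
Since $\mathbf{B}_{x,v}$ is the symmetrization of $\A^\top (\Px \odot \Pxv) \G^{-1} \Lambdax \A$, it suffices to bound $\D^2$ of the latter and add its transpose. I would apply Leibniz's rule to $\D^2(u,z)$, distributing the two derivatives over the differentiable factors: the outer $\A^\top$ and $\A$ (each differentiation producing an $\Sxz$ or $\Sxu$ factor), the $\Px$ inside the Hadamard, the three subfactors $\Px\Rxv\Px$ of $\Pxv$, the $\G^{-1}$ (which differentiates as $-\G^{-1}\D\G\,\G^{-1}$), and the $\Lambdax$. This yields a sum of terms of two kinds: (I) both derivatives land on a single factor, giving a second derivative of that factor ($\D^2\Px$, $\D^2\Rxv$, $\D^2\G^{-1}$, $\D^2\Lambdax$, or a double $\Sxu\Sxz$ outer factor); (II) the two derivatives land on two distinct factors, giving a product of two first derivatives.

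Each resulting term has the schematic shape $\A^\top X \A$ for a structured inner matrix $X$, and I would bound $\qq^\top \A^\top X \A\,\qq$ by Cauchy-Schwarz, splitting $X$ symmetrically, exactly as was done in the proof of Lemma~\ref{lem:gsecondderivativebound} for the subterms $(\trrr 1)$--$(\trrr 8)$. Each half reduces to a quadratic form of the form $\qq^\top \A^\top Y \A \qq$ that is bounded by $C\,\|\qq\|_{g_1}^2$ after pulling out the appropriate $\|s_{x,v}\|_\infty$, $\|s_{x,z}\|_\infty$, $\|s_{x,u}\|_\infty$ factors. The ingredients invoked are exactly those collected in Appendix~\ref{sec:derivativestability}: the first-order estimates $\D\W(v),\D\G(v),\D\Lambdax(v) \preccurlyeq \|s_{x,v}\|_\infty \W$ (Lemmas~\ref{lem:lem:wbound},~\ref{lem:Glambdabound}), the second-order refinements $\D^2\W(v,z),\D^2\G(v,z),\D^2\Lambdax(v,z) \preccurlyeq \|s_{x,v}\|_\infty \|s_{x,z}\|_\infty \W$ (Lemma~\ref{lem:wprimesecondderivative} and its $\G,\Lambdax$ analogues), the projection-matrix derivative (Lemma~\ref{lem:projderivative}), the derivative-of-$\Rxv$ bound (Lemma~\ref{lem:derivativeofr}), the Hadamard-product L\"owner bound (Lemma~\ref{lem:hadamardlowner}), and crucially the operator-infinity norm bound $\|\G^{-1}\W\|_{\infty\to\infty}\lesssim 1$ (Lemma~\ref{lem:ginfnorm}), which converts diagonal $\Rxv$ factors into $\|s_{x,v}\|_\infty$ scalars and is the one place where $p<4$ enters.

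\textbf{Main obstacle.} The difficulty is combinatorial rather than conceptual: the raw Leibniz expansion of $\D^2\bigl(\A^\top (\Px\odot\Px\Rxv\Px)\G^{-1}\Lambdax \A\bigr)(u,z)$ contains on the order of several dozen terms after the three-factor structure of $\Pxv$ and the $\G^{-1}$ derivatives are unfolded. To keep the argument manageable I would group the terms into equivalence classes indexed by the multiset of subfactors that receive derivatives, exhibit a single sharp estimate for each class by the Cauchy-Schwarz template above, and record which appendix lemma supplies the base inequality. The genuinely new class, relative to Lemma~\ref{lem:gsecondderivativebound}, is the one where both derivatives hit $\G^{-1}$, producing $\G^{-1}\D\G(u)\G^{-1}\D\G(z)\G^{-1} + \G^{-1}\D^2\G(u,z)\G^{-1}$; together with the bound $\|\W^{1/2}\G^{-1}\D\G(v)\G^{-1}\W^{1/2}\|_{\mathrm{op}} \lesssim \|s_{x,v}\|_\infty$, this fits the same template. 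Summing the bounds over all classes yields $\pm\,\|v\|_{x,\infty}\|z\|_{x,\infty}\|u\|_{x,\infty}\,\A^\top \W \A$, which after symmetrization gives the claim.
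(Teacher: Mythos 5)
Your plan is correct and matches the paper's own proof in essentially every respect: the paper likewise expands $\D^2$ of the $(\trr 4)$ term by the Leibniz rule (building on the intermediate expansion of Lemma~\ref{lem:star4}) and bounds each resulting term by the same Cauchy--Schwarz template, pulling out $\|s_{x,v}\|_\infty\|s_{x,z}\|_\infty\|s_{x,u}\|_\infty$ via Lemmas~\ref{lem:ginfnorm},~\ref{lem:hadamardlowner},~\ref{lem:lownerhelper},~\ref{lem:derivativeofr},~\ref{lem:Gsecondderivative}, and~\ref{lem:wprimesecondderivative}, including the class where both derivatives hit $\G^{-1}$. The only difference is one of completeness, not of method: the paper carries out the full term-by-term enumeration that your proposal describes but does not execute.
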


Finally, it is not hard to see that the log barrier also satisfies the infinity norm strong self-concordance.
For completeness, we state this in the following Lemma, proved in Appendix~\ref{sec:logbarrierself}. 

\begin{lemma}[Infinity self-concordance of the log barrier]\label{lem:logbarrierself}
    The metric $g_2 = \nabla^2 \phi_\ell$ regarding the log barrier $\phi_\ell(x) = -\sum_{i=1}^m \log(a_i^T x - b_i)$ in the polytope satisfies infinity norm third order strong self-concordance:
    \begin{gather*}
    -\|v\|_{x,\infty}g_2 \preccurlyeq \D g_2(v) \preccurlyeq \|v\|_{x,\infty} g_2,\\ 
    -\|v\|_{x,\infty}\|z\|_{x,\infty} g_2 \preccurlyeq \D g_2(v,z) \preccurlyeq \|v\|_{x,\infty}\|z\|_{x,\infty} g_2,\\
    -\|v\|_{x,\infty}\|z\|_{x,\infty}\|u\|_{x,\infty} g_2 \preccurlyeq \D^3g_2(v,z,u) \preccurlyeq \|v\|_{x,\infty}\|z\|_{x,\infty}\|u\|_{x,\infty} g_2.
\end{gather*}
\end{lemma}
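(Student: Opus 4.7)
The plan is to exploit the special structure of the log-barrier metric $g_2 = \A^\top \A$, whose only $x$-dependence is through the row rescalings by the slacks $s_i$. Since $\A$ has rows $a_i^\top/s_i$ and $\D s_i(v) = a_i^\top v$, direct differentiation gives the basic identity
\begin{equation*}
\D \A(v) = -\Sxv \A,
\end{equation*}
and, holding $v$ fixed, $\D \Sxv(z) = \diag{\D(\A v)(z)} = -\Sxz \Sxv$, where we use that diagonal matrices commute.

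Applying the product rule to $g_2 = \A^\top \A$ and using the identity above, each successive differentiation produces a new $-\Sxw$ factor, either from differentiating one of the two $\A$ factors (which inserts $-\Sxw$ on the appropriate side) or from differentiating an interior diagonal factor already present (which left-multiplies it by $-\Sxw$). By commutativity of diagonals, all such factors collapse into a single diagonal product, and a short induction yields
\begin{align*}
\D g_2(v) &= -2\,\A^\top \Sxv \A, \\
\D^2 g_2(v,z) &= 6\,\A^\top \Sxv \Sxz \A, \\
\D^3 g_2(v,z,u) &= -24\,\A^\top \Sxv \Sxz \Sxu \A.
\end{align*}

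The last step is the spectral bound. For any diagonal matrix $M$, one has $-\|M\|_{\mathrm{op}}\,\A^\top\A \preccurlyeq \A^\top M\A \preccurlyeq \|M\|_{\mathrm{op}}\,\A^\top\A$, and for a product of diagonal matrices $\|\Sxv \Sxz \Sxu\|_{\mathrm{op}} = \max_i |(s_{x,v})_i (s_{x,z})_i (s_{x,u})_i| \leq \|v\|_{x,\infty}\|z\|_{x,\infty}\|u\|_{x,\infty}$. Since $g_2 = \A^\top\A$, combining these bounds with the three identities above immediately yields the claimed L\"owner sandwiches, with the absolute constants $2,6,24$ absorbed into $\preccurlyeq$.

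There is no real obstacle. Unlike the Lewis-weight case, no projection matrices, Lewis weights, or $\G^{-1}\W$ factors appear, so the $p<4$ condition and the operator infinity-norm bound of Lemma~\ref{lem:ginfnorm} are not needed; the only task is sign and multiplicity bookkeeping in the product-rule expansion, which is purely mechanical.
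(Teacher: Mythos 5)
Your proposal is correct and follows essentially the same route as the paper: compute the explicit derivatives $\D g_2(v) = -2\A^\top\Sxv\A$, $\D^2 g_2(v,z) = 6\A^\top\Sxv\Sxz\A$, $\D^3 g_2(v,z,u) = -24\A^\top\Sxv\Sxz\Sxu\A$ and then bound the diagonal middle factor in operator norm by the product of the $\|\cdot\|_{x,\infty}$ norms. The only difference is that you spell out the product-rule bookkeeping via $\D\A(v) = -\Sxv\A$, which the paper states implicitly.
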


Combining Lemma~\ref{lem:logbarrierself} with the infinity norm self-concordance of the $p$ Lewis weights metric proves the infinity self-concordance of the metric regarding our hybrid barrier.
\begin{proof}[Proof of Lemmas~\ref{lem:thirdorderself} and~\ref{lem:calabitype}]
    Proof of Lemma~\ref{lem:thirdorderself} is a direct consequence of Lemmas~\ref{lem:gfirstderivativebound},~\ref{lem:gsecondderivativebound}, and~\ref{lem:thirdorderselfconcordance}, and~\ref{lem:logbarrierself}. Proof ofo Lemma~\ref{lem:calabitype} follows from Lemma~\ref{lem:thirdorderself} and noting the fact that the $\|.\|_{x,\infty}$ norm can be upper bounded by the $\|.\|_g$ norm according to Lemma~\ref{lem:infwithgnorm}.
\end{proof}

\section{Bounding conductance and mixing time}\label{sec:boundingconductance}
The goal of this section is to illustrate how we combine different pieces together to prove Theorem~\ref{thm:mixing}. To this end, we prove a general purpose mixing time on a manifold in Theorem~\ref{thm:mixingtimetheorem}. The key to show Theorem~\ref{thm:mixingtimetheorem} is Lemma~\ref{lem:onestepcoupling} which we defer its proof to later. We start by defining an important concept of a "Nice set," which links the initial velocity $v_{x_0}$ to the $(R_1,R_2,R_3)$ normality.

\begin{definition}[Nice set]
    Given $x_0 \in \mathcal M$, we say a set $Q_{x_0} \subseteq T_{x_0}(\mathcal M)$ is $(R_1, R_2, R_3, \delta)$-nice if for $v_{x_0} \sim \mathcal N(0, g(x_0)^{-1})$, we have
    \begin{enumerate}
    \item $\mathbb P(v_{x_0} \notin Q_{x_0})\leq 0.001$.
    \item for every $x_1$ with $d(x_1, x_0) \leq \delta$, the Hamiltonian family of curves between $x_0$ and $x_1$ ending at  $Ham^\delta(x_0, v_{0})$ is $(R_1, R_2, R_3)$-normal. 
    \end{enumerate}
\end{definition}
\begin{theorem}\label{thm:mixingtimetheorem}
     Suppose we want to sample from some distribution $\pi$ on the manifold $\mathcal M$, starting from distribution $\pi_0$ with $M = \sup_{x\in \mathcal M} \frac{d\pi_0(x)}{d\pi(x)}$. Suppose there exists a set $S \subseteq \mathcal M$ with $\pi(S) \geq 1 - O(\epsilon/M)$, such that for every $x_0 \in S$ there exists an $(R_1, R_2, R_3, \delta)$-nice set $Q_{x_0} \subseteq T_{x_0}(\mathcal M)$. Moreover, let $\psi$ be the isoperimetric constant of the pair $(\mathcal M, g)$. Then, for any $\delta$ satisfying $\delta^2 R_1 \leq 1$, $\delta^2 R_3 \leq 1$, $\delta^3 R_2 \leq 1$, the mixing time to reach a distribution within TV distance $\epsilon$ of $\pi$ is bounded by
    \begin{align*}
        O(\log(M)(\psi\delta)^{-2}).
    \end{align*}
\end{theorem}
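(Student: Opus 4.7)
The plan is to reduce the claim to the standard $s$-conductance recipe: combine Theorem~\ref{prop:conductance} with Lemma~\ref{lem:tvDecrease}. The two ingredients required are a one-step coupling on a set of large $\pi$-measure, and the isoperimetric constant $\psi$; the first is exactly what the $(R_1,R_2,R_3,\delta)$-niceness of the sets $Q_{x_0}$ is designed to deliver, once the step-size constraints $\delta^2 R_1, \delta^2 R_3, \delta^3 R_2 \lesssim 1$ are translated into closeness of one-step distributions.

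First I would establish the one-step coupling: for any $x_0 \in S$ and $x_1$ with $d(x_0,x_1) \leq c\delta$ (small absolute constant $c$), one has $d_{TV}(\mathcal T_{x_0}, \mathcal T_{x_1}) \leq 0.9$. Given $v_{x_0} \in Q_{x_0}$, consider the family of Hamiltonian curves $(\gamma_s(t))_s$ sharing the terminal point $y = \mathrm{Ham}^\delta(x_0,v_{x_0})$ as $s$ varies from $0$ to $d(x_0,x_1)$ along the geodesic between $x_0$ and $x_1$. The density of the pushforward $\mathcal T_{\gamma_s(0)}$ at $y$ is proportional to $\exp(-\tfrac12\|v_{\gamma_s}\|_g^2)\,|\det J^{v_{\gamma_s}}_y|^{-1}$, so the TV-distance is controlled by the logarithmic change in $s$ of these two factors. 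Solving the Hamiltonian-field ODE of Lemma~\ref{lem:hmcfieldODE}, the Gaussian exponent changes by at most $O(\delta^2 R_3)$ via $\|\Phi\zeta\|_g \leq R_3$, while the log-Jacobian --- expressible as a time-integral of $\operatorname{tr}\Phi$ along the curve --- changes by $O(\delta^2 R_1 + \delta^3 R_2)$ via $\|\Phi\|_F \leq R_1$ and $|D(\operatorname{tr}\Phi)(z)| \leq R_2\|z\|_g$. Under the assumed step-size constraints each is $O(1)$, and a union bound against $\mathbb P(v_{x_0}\notin Q_{x_0}) \leq 0.001$ closes the coupling.

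Given the coupling, I would apply Theorem~\ref{prop:conductance} with $\mathcal M' = S$, $\rho = 1-\pi(S) = O(\epsilon/M)$, and step size $c\delta$, obtaining $\Phi_s = \Omega(\psi\delta)$ for $s = \Theta(\epsilon/M)$ (which satisfies $s \geq \rho/(\Delta \psi)$ in the nontrivial regime; otherwise the bound is vacuous). Using the warm-start estimate $H_s \leq sM$ inside Lemma~\ref{lem:tvDecrease} gives
\begin{equation*}
d_{TV}(\pi_t,\pi) \;\leq\; H_s + \tfrac{H_s}{s}\bigl(1-\tfrac{\Phi_s^2}{2}\bigr)^t \;\leq\; \tfrac{\epsilon}{2} + M\bigl(1-\tfrac{\Phi_s^2}{2}\bigr)^t,
\end{equation*}
so requiring the exponential term to drop to $\epsilon/2$ forces $t = O((\psi\delta)^{-2}\log(M/\epsilon))$, matching the claimed mixing rate up to absorbing $\log(1/\epsilon)$ into $\log M$.

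The hardest part will be the one-step coupling, specifically the Jacobi-style analysis of the variation of a Hamiltonian curve under a simultaneous shift of its starting point and initial velocity chosen to preserve the terminal endpoint $y$. The rate $D_s v_{\gamma_s}$ is determined implicitly by a terminal condition of the second-order Hamiltonian-field ODE, and carefully relating it to the three curvature-type parameters $R_1, R_2, R_3$ is precisely where the step-size budgets $\delta^2 R_1, \delta^2 R_3, \delta^3 R_2 \lesssim 1$ are consumed; the three factors correspond to integrating $\Phi$ twice in $t$ for the Jacobian, once more in $s$ for its derivative, and twice in $t$ for the initial velocity. Everything downstream --- plugging into the $s$-conductance theorem and converting to a TV mixing time --- is routine once the coupling is in place.
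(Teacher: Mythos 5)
Your proposal follows essentially the same route as the paper: the paper's proof simply invokes its Lemma~\ref{lem:onestepcoupling} (whose proof, via the approximate pushforward density, the family of Hamiltonian curves, and the $(R_1,R_2,R_3)$ budgets, is exactly the coupling argument you sketch inline), then applies Theorem~\ref{prop:conductance} with $\rho = O(\epsilon/M)$ and Lemma~\ref{lem:tvDecrease} with $H_s \leq Ms$, just as you do. Your version is correct, and in fact slightly cleaner on two points the paper glosses over: the $s$-conductance lower bound should read $\Phi_s = \Omega(\psi\delta)$ (the paper's displayed $\Omega((\psi\delta)^{-2})$ is a typo), and the iteration count genuinely carries a $\log(M/\epsilon)$ factor, which you correctly note is absorbed into the stated $O(\log(M)(\psi\delta)^{-2})$ bound.
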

\begin{proof}
    Now with this choice of $\delta$, Lemma~\ref{lem:onestepcoupling}, which given a nice set for $x_0$ shows a bound on the closeness of the one step distributions, implies
for every $x_0 \in S$ and every $x_1$ within distance $d(x_0, x_1) \leq \delta$:
\begin{align*}
    TV(\mathcal{T}_{x_0}, \mathcal{T}_{x_1}) \leq 0.01.
\end{align*}
 Using Theorem~\ref{prop:conductance}, for $\rho = \mathbb P(S^c) = O(\epsilon/M)$ we get a lower bound on the $s$-conductance for $s = O(\epsilon/M)$:
\begin{align*}
    \Phi_s \geq \Omega((\psi \delta)^{-2}).
\end{align*}

Now using Lemma~\ref{lem:tvDecrease} with the same choice of $s$,
\begin{align*}
    d_{TV}(\pi_{t},\pi)\leq H_{s}+\frac{H_{s}}{s}\left(1-\frac{\Phi_{s}^{2}}{2}\right)^{t} \leq \epsilon,
\end{align*}
where we used the fact that $H_s \leq Ms = O(\epsilon)$ (recall the definition of $M$) and the fact that we pick $t$ of the order $\log(M)(\psi \delta)^2$ as $H_s/s \leq \epsilon$. The proof is complete.

\end{proof}

What remains to show is Lemma~\ref{lem:onestepcoupling} regarding the closeness of the one step distributions of the Markov chain. which is the main content of this section. This is vital in proving Theorem~\ref{thm:mixingtimetheorem} as it is one of the main building blocks, in addition ot the isoperimetry of the target measure, to bound the conductance of the chain.

To prove Lemma~\ref{lem:onestepcoupling}, we start with some definitions. The overall plan is that we approximate the density of a Hamiltonian step as written in Equation~\eqref{eq:thedensity} as in Equation~\eqref{eq:approx} and bound its change going from $x_0$ to $x_1$ for most of the vectors $v_{x_0}$ within a nice set in the tangent space of $x_0$.

\begin{definition}
Consider a family of Hamiltonian curves $\gamma_s(t)$ for time interval $s,t \in [0,\delta]$ all ending at $y$, where $\gamma(0) = x$, and $\gamma'(0) = v_x$.
Define the local push-forward density of $v_x \sim \mathcal N(0, g^{-1})$ onto $y$ by
\begin{align}
   P^{v_x}(y) = det(J^{v_x}_{y})\frac{\sqrt{ |g(y)|}}{\sqrt{(2\pi)^n}} e^{-\|v_x\|_g^2/2},\label{eq:thedensity}
\end{align}
where $J^{v_x}_{y}$ is the inverse Jacobian of the Hamiltonian after time $\delta$, sending $v_x$ to $y$, which we denoted by $Ham^\delta$. we consider the Jacobian as an operator between the tangent spaces. The push forward density at $y$ with respect to the manifold measure is given by
\begin{align*}
    P(y) = \sum_{v_x: \ Ham^{\delta}(x,v_x) = y} P^{v_x}(y).
\end{align*}
 Note that $dg(y)$ refers to the manifold measure. Define the approximate local push-forward density of $v_x$ as
\begin{align}
    \tilde P^{v_x}(y) = \exp{\big(-\int_{t=0}^\delta \frac{t(\delta - t)}{2} tr(\Phi(t)) dt \big)} \sqrt{ |g(y)|}/\sqrt{(2\pi)^n} e^{-\|v_x\|_g^2/2}.\label{eq:approx}
\end{align}
\end{definition}

\begin{lemma}[Lemma 22 in~\cite{lee2018convergence}]
For an $R_1$-normal Hamiltonian curve, for $0 \leq \delta^2 \leq \frac{1}{R_1}$ we have
\begin{align}
    |\log(\tilde P^{v_x}(y)) - \log(P^{v_x}(y))| \lesssim (\delta^2 R_1)^2.\label{eq:densityratio}
\end{align}
\end{lemma}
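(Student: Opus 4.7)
The plan is to represent $\log\det(J^{v_x}_y)$ via the Hamiltonian field ODE of Lemma~\ref{lem:hmcfieldODE}, expand it in a Neumann series in $\Phi$, and match the result against the kernel $\tfrac{t(\delta-t)}{2}$ appearing in $\tilde P^{v_x}$. Fixing a parallel orthonormal frame $\{E_i(t)\}$ along $\gamma$, I would let $A(t)$ be the matrix whose $i$-th column is the coordinate expansion of the Hamiltonian field $\tilde J_i(t)$ with $\tilde J_i(0)=0$ and $D_t\tilde J_i(0)=E_i(0)$. Then $A$ solves
\[
A''(t)=\Phi(t)A(t), \qquad A(0)=0,\quad A'(0)=I,
\]
and $A(\delta)$ is exactly the differential of the map $v_x\mapsto y=\mathrm{Ham}^\delta(x,v_x)$ in these orthonormal frames, so $\det J^{v_x}_y$ reduces to (the inverse of) $\det A(\delta)$ up to the $\delta^n$ normalization that the flat Hamiltonian map produces and that the measure factor in the definition of $\tilde P^{v_x}$ absorbs.

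The next step is to pass to the Volterra form $A(t)=tI+\int_0^t(t-s)\Phi(s)A(s)\,ds$ and iterate. A Gronwall argument using $\|\Phi\|_F\le R_1$ and $\delta^2 R_1\le 1$ yields $\|A(s)-sI\|_F\lesssim s^3 R_1$ on $[0,\delta]$. Feeding that back into the integral once gives
\[
A(\delta)=\delta I+\int_0^\delta (\delta-s)s\,\Phi(s)\,ds+E,\qquad \|E\|_F\lesssim \delta^5 R_1^{2}.
\]
Writing $A(\delta)=\delta(I+\delta^{-1}X)$ with $\|\delta^{-1}X\|_{\mathrm{op}}\le C\delta^2R_1\le 1/2$ and expanding $\log\det(I+Y)=\tr(Y)-\tfrac12\|Y\|_F^{2}+O(\|Y\|_F^{3})$, I obtain
\[
\log\det A(\delta)=n\log\delta+\int_{0}^{\delta}\tfrac{s(\delta-s)}{\delta}\,\tr\Phi(s)\,ds+O\bigl((\delta^2R_1)^2\bigr).
\]
The crucial point is to bound $|\tr(E)|$ via Cauchy--Schwarz as $|\tr(\Phi(s)(A(s)-sI))|\le\|\Phi(s)\|_F\|A(s)-sI\|_F$, rather than an operator-norm argument that would introduce a spurious $\sqrt{n}$ factor and spoil the target error.

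Subtracting $n\log\delta$ against the $\delta^n$ Jacobian absorbed into the normalization of $J^{v_x}_y$ and matching the linear-in-$\Phi$ kernel against $-\int_0^\delta\tfrac{t(\delta-t)}{2}\tr\Phi(t)\,dt$ then gives the claim. The main obstacle is not the analytical content --- iterating a Volterra equation and expanding a log-determinant is routine --- but the delicate bookkeeping that connects the intrinsic Hamiltonian-field determinant $\det A(\delta)$ to the specific normalization of $J^{v_x}_y$ used in defining $P^{v_x}$, and keeping all trace bounds at Frobenius-vs-Frobenius level so that the final remainder stays at $O((\delta^2R_1)^2)$.
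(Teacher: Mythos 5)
Your argument is correct and is essentially the same proof as in the cited source: the paper itself does not reprove this lemma but imports Lemma 22 of Lee--Vempala, whose proof is exactly your route — write the differential of $v_x\mapsto \mathrm{Ham}^\delta(x,v_x)$ as the solution of $A''=\Phi A$, $A(0)=0$, $A'(0)=I$, iterate the Volterra form with Frobenius--operator submultiplicativity to get $\|A(s)-sI\|_F\lesssim s^3R_1$, and expand $\log\det$ so that both the trace of the error term and the quadratic term are $O\bigl((\delta^2R_1)^2\bigr)$. Note only that your derivation produces the kernel $\tfrac{t(\delta-t)}{\delta}$ (the form the paper itself uses later, in the proof of Lemma~\ref{lem:familyof}), so the $\tfrac{t(\delta-t)}{2}$ in the displayed definition of $\tilde P^{v_x}$ should be read as a normalization typo rather than something your matching step needs to reproduce.
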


\begin{lemma}[Lemma 32 in~\cite{lee2018convergence}]\label{lem:derivativeoftangent}
In the setting of Lemma~\ref{lem:familyof}, for an $(R_1, R_3)$ normal $\gamma_0$, denoting $\frac{d}{ds}\gamma_s(0)$ by $z$, 
we have
\begin{align*}
    \delta\frac{d}{ds}\|\gamma'_s(0)\|^2 \leq |\langle v_x, z\rangle| + \delta^2R_3 \|z\|.
\end{align*}
\end{lemma}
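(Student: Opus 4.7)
The plan is to recast the quantity $\frac{d}{ds}\|\gamma'_s(0)\|_g^2$ at $s=0$ as an inner product between the Hamiltonian field and the initial velocity. Since the norm is the metric norm and $\gamma'_0(0)=v_x$, we have
\[
\tfrac12\tfrac{d}{ds}\|\gamma'_s(0)\|_g^2\Big|_{s=0} = \big\langle D_s \gamma'_s(0)\big|_{s=0},\, v_x\big\rangle_g,
\]
and by the commuting-derivatives identity (Lemma~\ref{lem:commutingderivatives}) together with equation~\eqref{eq:jacobiinitial}, $D_s\gamma'_s(0)|_{s=0} = \tilde J'(0)$, where $\tilde J(t)=\partial_s\gamma_s(t)|_{s=0}$ is the Hamiltonian field from Lemma~\ref{lem:hmcfieldODE}. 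This field satisfies the second-order ODE $\tilde J''(t) = \Phi(t)\tilde J(t)$ with the two-point boundary conditions $\tilde J(0)=z$ and $\tilde J(\delta)=0$ (the latter because all curves in the family end at $y$).

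Next I would introduce the parallel transport $\zeta(t)$ of $v_x$ along $\gamma_0$, so $\zeta(0)=v_x$ and $D_t\zeta\equiv 0$, and study the scalar $f(t)=\langle \tilde J(t),\zeta(t)\rangle_g$. Then $f(0)=\langle z,v_x\rangle$, $f(\delta)=0$, and $f'(0)=\langle \tilde J'(0),v_x\rangle$, so the goal becomes bounding $\delta f'(0)$. Differentiating twice and using parallelism,
\[
f''(t) = \langle \Phi(t)\tilde J(t),\zeta(t)\rangle_g = \langle \tilde J(t),\Phi(t)\zeta(t)\rangle_g,
\]
where the second equality uses self-adjointness of $\Phi(t)$: the Riemann operator $R(\cdot,\gamma')\gamma'$ is always self-adjoint, and the bias operator $M_x=\nabla\mu$ is the covariant Hessian of the scalar $f+\tfrac12\log\det g$ (see~\eqref{eq:biasdefinition}), hence also symmetric. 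Cauchy--Schwarz together with the $R_3$ clause of Definition~\ref{def:normality} then gives $|f''(t)|\le R_3\|\tilde J(t)\|_g$. Applying Taylor's formula with integral remainder,
\[
0 = f(\delta) = f(0) + \delta f'(0) + \int_0^\delta (\delta-t)f''(t)\,dt,
\]
so
\[
\delta|f'(0)| \le |\langle z,v_x\rangle| + R_3\int_0^\delta (\delta-t)\|\tilde J(t)\|_g\,dt.
\]

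The remaining step, and the main technical obstacle, is to show $\|\tilde J(t)\|_g \lesssim \|z\|_g$ uniformly for $t\in[0,\delta]$. Here I would treat the ODE $\tilde J''=\Phi\tilde J$ as a small perturbation of $\tilde J''=0$: the $R_1$ bound $\|\Phi(t)\|_F\le R_1$ together with the regime $\delta^2 R_1\le 1$ lets a Gronwall/contraction argument on the two-point boundary value problem show that $\tilde J(t)$ is within an $O(\delta^2 R_1)$ multiplicative factor of the straight-line interpolation $(1-t/\delta)z$, so in particular $\|\tilde J(t)\|_g \le C\|z\|_g$ on $[0,\delta]$. Substituting this into the integral bound yields $\int_0^\delta (\delta-t)\|\tilde J(t)\|_g\,dt \lesssim \delta^2\|z\|_g$, and hence $\delta|f'(0)|\lesssim |\langle v_x,z\rangle| + \delta^2 R_3\|z\|_g$. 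Since $\frac{d}{ds}\|\gamma'_s(0)\|_g^2 = 2f'(0)$ at $s=0$, this delivers the claimed inequality (the factor of $2$ being absorbed in the implicit constants).
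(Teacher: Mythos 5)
The paper itself does not prove this lemma; it is imported as Lemma 32 of~\cite{lee2018convergence}, so there is no internal proof to compare against. Your reconstruction is correct and is essentially the argument underlying the cited lemma: you reduce $\tfrac{d}{ds}\|\gamma'_s(0)\|_g^2$ at $s=0$ to $2\langle \tilde J'(0), v_x\rangle_g$ via metric compatibility and the symmetry of mixed covariant derivatives, pair the Hamiltonian field with the parallel transport $\zeta$ of $v_x$ so that the $R_3$ clause of Definition~\ref{def:normality} becomes usable after moving $\Phi$ across the inner product, and close with Taylor's formula applied to $f(t)=\langle \tilde J(t),\zeta(t)\rangle_g$ together with $f(\delta)=0$. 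The two points that genuinely need justification are both handled correctly: (i) self-adjointness of $\Phi(t)$ is indispensable here, since you only control $\|\Phi\zeta\|_g$ and not $\|\Phi\tilde J\|_g$, and it holds because $R(\cdot,\gamma')\gamma'$ is symmetric by the pair symmetry of the curvature tensor while $M_x=\nabla\mu$ is the covariant Hessian of the scalar $\alpha\phi+\tfrac12\log\det g$; (ii) the uniform bound $\|\tilde J(t)\|_g\lesssim\|z\|_g$ needs $\delta^2 R_1\le 1$, which is indeed part of the setting of Lemma~\ref{lem:familyof}, and your Green's-function/contraction estimate for the boundary value problem $\tilde J''=\Phi\tilde J$, $\tilde J(0)=z$, $\tilde J(\delta)=0$ is precisely the content of Lemma 23 in~\cite{lee2018convergence}, which this paper invokes elsewhere for the same purpose. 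The only mismatch is cosmetic: your derivation gives $\delta\,\bigl|\tfrac{d}{ds}\|\gamma'_s(0)\|_g^2\bigr| \le 2|\langle v_x,z\rangle| + O(\delta^2 R_3\|z\|_g)$, i.e.\ the stated inequality up to absolute constants, which is consistent with the conventions used throughout the paper.
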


\begin{lemma}[Change of the pushforward density]\label{lem:familyof}
Consider the family of smooth Hamiltonian curves $\gamma_s(t)$ up to time $\delta$ from $x_0$ to $x_1$ pointing towards $y$, namely $\gamma_0(0) = x_0$, $\gamma_0(\delta) = y$, and $\gamma_s'(0) = v_x$ regarding a point $x = \gamma_s(0)$ along the geodesic between $x_0$ to $x_1$ whose tangent to the geodesic is $z \triangleq \frac{d}{ds}\gamma_s(0)$. Then, given that $\gamma_s(t)$ is $(R_1, R_2, R_3)$ normal for $0\leq s,t \leq \delta$ and $\delta^2 \leq \frac{1}{R_1}$, we have
\begin{align*}
    \delta \frac{d}{ds} \log(\tilde P^{v_x}(y)) \leq |\langle v_x, z \rangle| + \delta^3 R_2 + \delta^2 R_3.
\end{align*}
\end{lemma}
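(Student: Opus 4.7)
The plan is to differentiate $\log \tilde P^{v_x}(y)$ term by term, using its explicit form
\[
\log \tilde P^{v_x}(y) \;=\; -\int_{0}^\delta \frac{t(\delta-t)}{2}\tr\bigl(\Phi(t)\bigr)\,dt \;+\; \tfrac{1}{2}\log|g(y)| \;-\; \tfrac{n}{2}\log(2\pi) \;-\; \tfrac{1}{2}\|v_x\|_g^2.
\]
Since the endpoint $y$ is held fixed as $s$ varies, $\tfrac{1}{2}\log|g(y)|$ contributes nothing to $\tfrac{d}{ds}$, so it suffices to bound the ``kinetic'' piece $\tfrac{\delta}{2}\tfrac{d}{ds}\|v_x\|_g^2$ and the ``curvature'' piece $\delta\int_0^\delta \tfrac{t(\delta-t)}{2}\tfrac{d}{ds}\tr(\Phi(t))\,dt$ separately.

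For the kinetic piece, apply Lemma~\ref{lem:derivativeoftangent} directly with $\|z\|=1$: it yields $|\delta \tfrac{d}{ds}\|v_x\|_g^2| \le |\langle v_x,z\rangle| + \delta^2 R_3$, which accounts for the $|\langle v_x,z\rangle|$ and $\delta^2 R_3$ terms of the claim (the factor $1/2$ is absorbed). For the curvature piece, introduce the Hamiltonian variational field $\tilde J(t) \triangleq \partial_s \gamma_s(t)$, so that by the chain rule $\tfrac{d}{ds}\tr(\Phi(t)) = D(\tr\Phi)(\tilde J(t))$, and the $R_2$-clause of the normality definition bounds this by $R_2 \|\tilde J(t)\|_g$. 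Because all curves in the family share the common endpoint $y$ and share initial velocity locus $\gamma_s(0)$ varying along the geodesic, the two-point boundary conditions for the field are $\tilde J(0) = z$ and $\tilde J(\delta) = 0$.

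The crux is the uniform estimate $\|\tilde J(t)\|_g = O(1)$ on $[0,\delta]$. This follows from the Hamiltonian-field ODE $\tilde J''(t) = \Phi(t)\tilde J(t)$ of Lemma~\ref{lem:hmcfieldODE}, the operator bound $\|\Phi(t)\|_{\mathrm{op}}\le \|\Phi(t)\|_F\le R_1$, and the hypothesis $\delta^2 R_1 \le 1$: the BVP solution is a convergent Neumann-series perturbation of the linear interpolant $(1-t/\delta)z$, and the perturbation is controlled by the Green's function of $[0,\delta]$ times $\|\Phi\| \cdot \|\tilde J\|$, which under $\delta^2 R_1\le 1$ gives $\|\tilde J(t)\|_g \le C\|z\|_g = O(1)$.

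Plugging in,
\[
\delta \int_0^\delta \frac{t(\delta-t)}{2}\Big|\tfrac{d}{ds}\tr\Phi(t)\Big|\,dt \;\le\; \delta \cdot R_2 \cdot O(1) \cdot \frac{\delta^3}{12} \;=\; O(\delta^4 R_2) \;\le\; O(\delta^3 R_2),
\]
where the last inequality uses $\delta \le 1$, which is forced by $\delta^2 R_1 \le 1$ in the regime of interest ($R_1 \ge 1$ from Theorem~\ref{thm:parameterbounds}). Summing the kinetic and curvature bounds yields the claim. The main obstacle is the uniform estimate on $\|\tilde J(t)\|_g$ for the boundary-value problem: this is precisely what the $R_1$ clause of normality together with the step-size constraint $\delta^2 R_1 \le 1$ are designed to furnish; everything else is bookkeeping using $R_2$ and $R_3$.
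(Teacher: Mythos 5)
Your proposal is correct and follows essentially the same route as the paper: differentiate the explicit formula for $\tilde P^{v_x}(y)$ (the $\log|g(y)|$ term drops since $y$ is fixed), bound the kinetic term by Lemma~\ref{lem:derivativeoftangent} with $\|z\|_g=1$, bound the trace term via the $R_2$ clause of normality, and integrate the weight to get $\lesssim\delta^3R_2$. The one place you go beyond the paper is the explicit BVP/Green's-function estimate $\|\tilde J(t)\|_g=O(1)$ from $\tilde J''=\Phi\tilde J$, $\tilde J(0)=z$, $\tilde J(\delta)=0$ and $\delta^2R_1\le1$; the paper instead applies Lemma~\ref{lem:R2bound} directly with the unit-speed normalization at $s$, so your added estimate is a legitimate (and arguably welcome) justification of that normalization rather than a different method. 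One small caveat: writing $\frac{d}{ds}\tr\Phi(t)=D(\tr\Phi)(\tilde J(t))$ ignores that $\Phi$ also depends on $\gamma_s'(t)$, so the $s$-derivative has an additional contribution from $D_s\gamma_s'(t)$; in the paper this contribution is folded into the value of $R_2$ itself (the terms involving $Dv(z)$, bounded via Lemma~\ref{lem:vderivative}, in the proof of Lemma~\ref{lem:R2bound}), so you should either adopt that convention for the $R_2$ clause or add those terms explicitly -- either way the stated bound is unaffected.
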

\begin{proof}
Simply differentiating Equation~\eqref{eq:approx}:
\begin{align*}
    \delta \big|\frac{d}{ds}\log(\tilde P^{v_x}(y))| &= \big| -\delta \frac{d}{ds}\big(\int_{t=0}^\delta \frac{t(\delta - t)}{\delta} tr(\Phi(t)) dt \big) - \delta\frac{d}{ds}\|v_x\|_g^2/2 \big|\\
    &\leq \delta \int_{t=0}^\delta \frac{t(\delta - t)}{\delta} |\frac{d}{ds}tr(\Phi(t))\big| dt + |\langle v_x, z \rangle| + \delta^2 R_3 \|z\|.
\end{align*}
where we used Lemma~\ref{lem:derivativeoftangent}.
Furthermore, using Lemma~\ref{lem:R2bound} and noting our assumption $\|z\| = \|\frac{d}{ds}\gamma_s(0)\| = 1$:
\begin{align*}
LHS  &\lesssim |\langle v_x, z \rangle| + \delta^3 R_2 + \delta^2 R_3.
\end{align*}
\end{proof}

\begin{lemma}[Change in probability of events under approximate density]\label{lem:eventbound}
Let $Q_{x_0} \subseteq T_{x_0}(\mathcal M)$ be a $(R_1,R_2,R_3,\delta)$ nice set 
in the tangent space of $x_0$ and let $x$ be an arbitrary point in the geodesic between $x_0$ and $x_1$. For vector $v_x$ in the tangent space of $x$ with $Ham^\delta(x,v_x) = y$ we can consider the family of hamiltonian curves $\gamma_s(t)$ between $x_0 = \gamma_0(0)$ and $x_1 = \gamma_\delta(0)$ with $\gamma_s(\delta) = y$ for all $0\leq s\leq \delta$.
Now let $p_{n}$ be the finite measure obtained by restricting the normal distribution in the tangent space of $x$ to vectors $v_x$ for which the corresponding $v_{x_0} = \gamma'_0(0) \in Q_{x_0}$. For a point $y \in \mathcal M$, let $\tilde P_x^{n}(y)$ be the approximate pushforward density of $p_n$ onto $\mathcal M$, defined as
\begin{align}
    \tilde P^{n}(y) = \tilde P_x^{n}(y)= \Big(\sum_{v_x: \ Ham^{\delta}(x,v_x) = y, \ v_{x_0} \in Q_{x_0}} \tilde P^{v_x}(y)\Big) dg(y),\label{eq:tildepndef}
\end{align}
where $\tilde P_x^{v_x}(y)$ is defined in~\eqref{eq:approx}. We define $\tilde P^n(.)$ to be the corresponding finite measure. Now given a fixed event $Y \subset \mathcal M$ with probability $ \tilde P^n(Y) \geq n^{-10}$, we have
\begin{align}
    \delta \Big|\frac{d}{ds}\log(\tilde P^{n}(Y))\Big| \lesssim
    1 + \delta^3 R_2 + \delta^2 R_3,\label{eq:case1}
\end{align}
 and for all $Y$:
 \begin{align*}
      \delta \Big|\frac{d}{ds}\log(\tilde P^{n}(Y))\Big| \lesssim
    \sqrt n + \delta^3 R_2 + \delta^2 R_3.
 \end{align*}
 
 Note that $\tilde P_x^{n}$ depends on $x = \gamma_s(0)$, and we are fixing the set $Q_{x_0}$ in the tangent space of $x_0$. 
\end{lemma}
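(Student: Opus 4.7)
The plan is to change the integration variable in $\tilde P^n(Y)$ from $y\in\mathcal M$ to $v_{x_0}\in T_{x_0}(\mathcal M)$ so that both the domain of integration and all Jacobian factors become independent of $s$; the $s$-dependence is then confined to the density $\tilde P^{v_x}(y)$ for which Lemma~\ref{lem:familyof} already provides a pointwise derivative bound.

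For each $v_{x_0}\in Q_{x_0}$ the endpoint $y_{v_{x_0}}:=Ham^\delta(x_0,v_{x_0})$ and the Jacobian $K(v_{x_0}):=|\det(dy_{v_{x_0}}/dv_{x_0})|$ depend only on $x_0$ and $v_{x_0}$, not on $s$; only the unique $v_x\in T_x(\mathcal M)$ in the family with $Ham^\delta(x,v_x)=y_{v_{x_0}}$ varies with $s$. Reparametrizing~\eqref{eq:tildepndef}, then differentiating under the integral and using $\partial_s F=F\cdot\partial_s\log F$, yields
\[
\delta\,\Bigl|\tfrac{d}{ds}\log \tilde P^n(Y)\Bigr|\;\le\;\mathbb E_{\mu_s}\!\Bigl[\delta\,\bigl|\tfrac{d}{ds}\log\tilde P^{v_x}(y)\bigr|\Bigr]\;\le\; \mathbb E_{\mu_s}\bigl[|\langle v_x,z\rangle|\bigr]+\delta^3R_2+\delta^2R_3,
\]
where $\mu_s$ is the probability measure on $\{v_{x_0}\in Q_{x_0},\,y_{v_{x_0}}\in Y\}$ with density proportional to $\tilde P^{v_x}(y_{v_{x_0}})K(v_{x_0})$, and the final inequality is Lemma~\ref{lem:familyof} applied pointwise with $\|z\|_g=1$ (unit-speed geodesic).

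It then remains to bound $\mathbb E_{\mu_s}[|\langle v_x,z\rangle|]$, which I would split into the two claimed regimes. For the universal $\sqrt n$ bound, I use that the family on $Q_{x_0}$ is nice: Lemma~\ref{lem:stabilityinfnorm} propagates the $(O(c),\delta)$-niceness of $\gamma_0$ to the whole family, so $\|v_x\|_g\lesssim \sqrt n$ pointwise on the support of $\mu_s$, and Cauchy-Schwarz in $g$ with $\|z\|_g=1$ immediately yields $|\langle v_x,z\rangle|\lesssim \sqrt n$ pointwise and hence in expectation. For the sharper bound $\lesssim 1$ when $\tilde P^n(Y)\ge n^{-10}$, I would exploit Equation~\eqref{eq:densityratio}: since $\delta^2R_1\le 1$, the approximate and true local pushforward densities differ by at most a bounded multiplicative constant, so $\mu_s$ is comparable (up to universal constants) to the conditional distribution of $v_x\sim\mathcal N(0,g(x)^{-1})$ given $\{Ham^\delta(x,v_x)\in Y,\,v_{x_0}\in Q_{x_0}\}$. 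Under that Gaussian, $\langle v_x,z\rangle$ is a standard normal, and integrating its tail against an event of probability at least $\Omega(n^{-10})$ yields a first absolute moment of order $\sqrt{\log n}$, absorbed into $\lesssim 1$ by the paper's convention.

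The main obstacle is the second case: one must transfer moment estimates from the approximate measure $\mu_s$ back to the true Gaussian pushforward measure via Equation~\eqref{eq:densityratio} together with $\delta^2R_1\le 1$, so that a standard Gaussian tail bound applies to $\langle v_x,z\rangle$. Without this transfer one would have to directly control sub-Gaussian tails of the approximation correction factor $\exp\bigl(-\int_0^\delta \tfrac{t(\delta-t)}{2}\,tr(\Phi(t))\,dt\bigr)$ along the whole family, reintroducing precisely the smoothness parameters that the density-ratio estimate lets us bypass.
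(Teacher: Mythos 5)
Your proposal is correct and follows essentially the same route as the paper's proof: the pointwise derivative bound of Lemma~\ref{lem:familyof}, the constant-factor comparison between $\tilde P^{v_x}$ and the true Gaussian pushforward from~\eqref{eq:densityratio}, a Gaussian tail estimate played against the $n^{-10}$ lower bound on $\tilde P^n(Y)$, and the niceness/stability bound $\|v_x\|_g\lesssim\sqrt n$ for the unconditional case. The only difference is organizational: after pulling the integral back to the fixed $v_{x_0}$-domain you bound the conditional first moment $\mathbb E_{\mu_s}\big[|\langle v_x,z\rangle|\big]$ directly, whereas the paper splits $\tilde P^n=\tilde P_1^n+\tilde P_2^n$ according to whether $|\langle v_x,z\rangle|\lesssim 1$ and bounds the two pieces separately --- the same inputs in slightly different packaging.
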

\begin{proof}
Let $\tilde P_1^n$ be the density of further restricting $\tilde P^n$ to $v_x$'s for which $\langle v_x , z\rangle \lesssim 1$ where recall $z \triangleq \frac{d}{ds}\gamma_s(0)$, and $\tilde P_2^n$ be such that $\tilde P^n(y) = \tilde P_1^n(y) + \tilde P_2^n(y)$. Note that
\begin{align*}
    \Big|\frac{\frac{d}{ds}\tilde P^{n}(Y)}{\tilde P^{n}(Y)}\Big|
    & =\Big(\frac{\frac{d}{ds}\tilde P_1^{n}(Y)}{\tilde P_1^{n}(Y)}\Big)\Big(\frac{\tilde P_1^{n}(Y)}{\tilde P^{n}(Y)}\Big)
     + \Big(\frac{\frac{d}{ds}\tilde P_2^{n}(Y)}{\tilde P_2^{n}(Y)}\Big)\Big(\frac{\tilde P_2^{n}(Y)}{\tilde P^{n}(Y)}\Big)\\
    &=\text{LHS}_1 + \text{LHS}_2.\numberthis\label{eq:root}
\end{align*}
But note that for the first term
\begin{align*}
    \text{LHS}_1 & \leq
     \int_{Y}\sum_{v_x: \ Ham^{\delta}(x,v_x) = y, \ v_{x_0} \in Q_{x_0}, \ \langle v_x, z\rangle \lesssim 1} \Big| \Big(\frac{\frac{d}{ds}\tilde P^{v_x}(y)}{\tilde P^{v_x}(y)}\Big)\Big(\frac{P^{v_x}(y)}{\tilde P_1^n(y)}\Big)\Big(\frac{\tilde P_1^{n}(y)}{\tilde P_1^{n}(Y)}\Big)\Big| dg(y)\\
    & \leq \int_{Y} \Big(\frac{\tilde P_1^{n}(y)}{\tilde P_1^{(n)}(Y)}\Big) \Big(\Big|\langle v_x, z \rangle\Big| + \delta^3 R_2 + \delta^2 R_3\Big)\delta^{-1}\\
    & \lesssim (1 + \delta^3 R_2 + \delta^2 R_3)/\delta.
\end{align*}
To see why the second line holds, note that the hamiltonian curve from $x$ to $y$ is $(R_1, R_2, R_3)$ normal from our assumption for time $t\in (0,\delta)$.
The second line follows from Lemma~\ref{lem:familyof}. The third line follows simply by the choice $\langle v_x, z \rangle \leq 1$.

Similarly for the second term
\begin{align*}
    \text{LHS}_2 \leq \Big(|\langle v_x, z \rangle| + \delta^3 R_2 + \delta^2 R_3\Big)/\delta \leq \Big(\sqrt n + \delta^3 R_2 + \delta^2 R_3\Big)/\delta,
\end{align*}
where we used $|\langle v_x, z \rangle| \leq \|v_x\|_g\|z\|_g$. Combining these and putting back in~\eqref{eq:root} implies
\begin{align*}
    \delta\Big|\frac{d}{ds}\log(\tilde P^{n}(Y))\Big| \leq 1 + \delta^3 R_2 + \delta^2 R_3 + \Big(\frac{\tilde P_2^{n}(Y)}{\tilde P^{n}(Y)}\Big) \big(\sqrt n + \delta^3 R_2 + \delta^2 R_3\big).
\end{align*}
To show case~\eqref{eq:case1}, using the fact that the densities regarding $\tilde P^n$ and $P^n$ are within constant of one another~\eqref{eq:densityratio}:
\begin{align*}
     n^{-10} \lesssim P^n(Y) \lesssim \tilde P^n(Y),
\end{align*}
which follows from assumption on $Y$ while 
\begin{align*}
    \tilde P_2^n(Y) \lesssim P_2^n(Y) \leq n^{-10},
\end{align*}
which follows becuae $1 \lesssim \langle v_x, z\rangle$ is a low probability event using gaussian tail bound. This completes the proof.
\end{proof}

Using the bounds on smoothness, we will show that one-step distributions of RHMC from two nearby points will have large overlap (and hence TV distance less than $1$).
\begin{lemma}[One-step coupling for RHMC]\label{lem:onestepcoupling}
Consider two points $x_0$ and $x_1$ and suppose $Q_{x_0}$ is a $(R_1, R_2, R_3, \delta)$-nice set in the tangent space of $x_0$.
Now given step size $\delta$ such that $\delta^2 \leq \frac{1}{R_1}, \ \delta^3R_2 \leq 1, \ \delta^2R_3 \leq 1$ and close by point $x_1$ such that $d(x_0, x_1) \leq \delta$, where $d$ is the distance on the manifold, the total variation distance between $P_{x_0}$ and $P_{x_1}$ is bounded by $0.01$.
\end{lemma}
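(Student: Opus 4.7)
The plan is to control $TV(P_{x_0}, P_{x_1})$ by comparing the approximate restricted push-forward measures $\tilde P^n_{\gamma_s(0)}$ from Equation~\eqref{eq:tildepndef} at the endpoints of a unit-speed geodesic $\gamma_s(0)$ joining $x_0$ to $x_1$. I first make two preliminary reductions. Since $Q_{x_0}$ is nice, excluding $v_{x_0} \notin Q_{x_0}$ costs at most $0.001$ in total variation, so I may replace $P_{x_0}$ by its restriction $P^n_{x_0}$. Next, under the assumption $\delta^2 \leq 1/R_1$, Equation~\eqref{eq:densityratio} gives $|\log \tilde P^{v_x}(y) - \log P^{v_x}(y)| \lesssim (\delta^2 R_1)^2 \lesssim 1$, so the approximate density $\tilde P^n$ and the true density $P^n$ agree up to a controlled multiplicative constant, which I absorb into the final error.

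Next, fix an arbitrary measurable event $Y \subseteq \mathcal{M}$. For each $v_{x_0} \in Q_{x_0}$, set $y = Ham^\delta(x_0, v_{x_0})$; the niceness of $Q_{x_0}$ then supplies the family of Hamiltonian curves $\gamma_s(t)$ with $\gamma_0(0) = x_0$, $\gamma_{s_1}(0) = x_1$ (where $s_1 = d(x_0,x_1) \leq \delta$), and $\gamma_s(\delta) = y$, which is $(R_1, R_2, R_3)$-normal. Lemma~\ref{lem:eventbound} then applies, and the hypotheses $\delta^3 R_2 \leq 1$ and $\delta^2 R_3 \leq 1$ collapse its right-hand side, giving, for every $Y$ with $\tilde P^n(Y) \geq n^{-10}$,
\[
\delta \left|\tfrac{d}{ds} \log \tilde P^n_{\gamma_s(0)}(Y)\right| \lesssim 1.
\]
Integrating over $s \in [0, s_1]$ with $s_1 \leq \delta$ shows that $\tilde P^n_{x_1}(Y)$ and $\tilde P^n_{x_0}(Y)$ differ by at most a bounded multiplicative factor.

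Finally, to turn this family of per-event multiplicative comparisons into a total-variation estimate, I pick $Y^\star$ to be any event achieving (up to $n^{-10}$) the supremum in $TV(\tilde P^n_{x_0}, \tilde P^n_{x_1})$. Events $Y$ with $\tilde P^n(Y) < n^{-10}$ contribute $O(n^{-10})$ and are negligible; for $Y^\star$ of larger mass, the ratio bound from the previous step converts directly into an additive bound on $\tilde P^n_{x_1}(Y^\star) - \tilde P^n_{x_0}(Y^\star)$. Chaining this with (a) the $0.001$ loss from restricting to $Q_{x_0}$, (b) the $\tilde P \leftrightarrow P$ conversion, and (c) the analogous comparison at $x_1$ — where the preimage of $Q_{x_0}$ under the family map is a subset of $T_{x_1}(\mathcal{M})$ whose Gaussian mass is close to $1$, as follows from applying Lemma~\ref{lem:eventbound} with $Y=\mathcal{M}$ — yields the claimed bound $TV(P_{x_0}, P_{x_1}) \leq 0.01$. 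The main obstacle is that Lemma~\ref{lem:eventbound} only gives a multiplicative bound at the $O(1)$ log scale, so achieving the particular constant $0.01$ (rather than a generic small constant) requires either tightening the implicit constants throughout the chain, or imposing the step-size hypotheses with an additional small constant factor on their right-hand sides; the structural content of the argument, however, lies entirely in the geodesic-variation estimate above.
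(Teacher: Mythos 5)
Your proposal is correct and follows essentially the same route as the paper: restrict to the nice set $Q_{x_0}$, pass to the approximate push-forward $\tilde P^n$ via Equation~\eqref{eq:densityratio}, apply Lemma~\ref{lem:eventbound} along the geodesic and integrate in $s$, then convert the resulting multiplicative bound back into an additive TV bound, with the same acknowledged looseness in constants that the paper resolves by ``picking small enough constants.'' The only cosmetic differences are that the paper argues by contradiction on a violating event rather than on a near-maximizing $Y^\star$, and it handles the $x_1$ side via the trivial inequality $P^n_{x_1}(Y) \leq P_{x_1}(Y)$, so your extra estimate on the Gaussian mass of the preimage of $Q_{x_0}$ in $T_{x_1}(\mathcal M)$ is not needed.
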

\begin{proof}
Similar to~\eqref{eq:tildepndef}, we define
\begin{align*}
    P_x^{n}(y)= \Big(\sum_{v_x: \ Ham^{\delta}(x,v_x) = y, \ v_{x_0} \in Q_{x_0}} P^{v_x}(y)\Big) dg(y).
\end{align*}
First, note that for any event $Z \subseteq \mathcal M$, we have using Lemma~\ref{lem:highprobset}
\begin{align*}
|P^n_{x_0}(Z)- P_{x_0}(Z)| \leq \mathbb P(v_x \notin Q_{x_0})\leq 0.001.
\end{align*}
Suppose $Y \subseteq \mathcal M$ be a set for which 
\begin{align*}
    P_{x_0}(Y) - P_{x_1}(Y) > 0.01.
\end{align*}
This means $P_{x_0}(Y) \geq 0.01$, and in particular from~\eqref{eq:aval}
\begin{align}
    P^n_{x_0}(Y) - P^n_{x_1}(Y) \geq P_{x_0}(Y) - P_{x_1}(Y) - \mathbb P(v_{x_0} \notin Q_{x_0}) \geq 0.005,\label{eq:tv3}
\end{align}
which also implies
\begin{align*}
    P^n_{x_0}(Y) \geq 0.005.
\end{align*}
Now from
~\eqref{eq:densityratio} we have $\tilde P^n(Y) \geq 0.001$. But now using the assumptions on $R_2$ and $R_3$ and plugging it into Equation~\eqref{eq:case1} in Lemma~\ref{lem:eventbound} we can state
\begin{align*}
    \delta \Big|\frac{d}{ds}\log(\tilde P^n_x(Y))\Big| \lesssim 1,
\end{align*}
which implies at time $s = \delta$ we have
\begin{align*}
     \log(\tilde P^n_{\gamma_0(0)}(Y)) - \log(\tilde P^n_{\gamma_\delta(0)}(Y)) \lesssim 1,
\end{align*}
or in other words
\begin{align*}
    \tilde P^n_{x_1}(Y) / \tilde P^n_{x_0}(Y) \lesssim 1.
\end{align*}
Now again applying the constant boundedness of the ratio between $\tilde P^n$ and $P^n$, we obtain
\begin{align}
  P^n_{x_0}(Y) / P^n_{x_1}(Y) \lesssim 1.\label{eq:ratioboundedness} 
\end{align}
By picking small enough constants, Equation~\eqref{eq:ratioboundedness} implies
\begin{align*}
    P^n_{x_0}(Y) - P^n_{x_1}(Y) < 0.001.
\end{align*}
This further implies from~\eqref{eq:aval}:
\begin{align*}
    P_{x_0}(Y) - P_{x_1}(Y) \leq P^n_{x_0}(Y) - P^n_{x_1}(Y) + 0.001 \leq 0.002,
\end{align*}
which contradicts Equation~\eqref{eq:tv3}. This completes the proof.
\end{proof}

Finally, Combining Theorems~\ref{thm:mixingtimetheorem} and~\ref{thm:parameterbounds} and Lemma~\ref{lem:stabilityinfnorm}, we prove the main Theorem~\ref{thm:mixing}.
\\
\begin{proof}[Proof of Theorem~\ref{thm:mixing}]\label{eq:proofofmaintheorem}
    Given a fixed parameter $c > 1$, using Lemma~\ref{lem:highprobset}, there exists a high probability set $S = S_c \subseteq \mathcal M$,
\begin{align}
\pi(S) \geq 1 - poly(m)e^{-c^2/2},\label{eq:Sprob}
\end{align}
    such that every
    $x_0 \in S$ has a corresponding nice set $Q_{x_0} \in T_p(\mathcal M)$.
    

 (Recall $\pi$ is the distribution supported on the polytope with density $e^{-\alpha \phi}$.)

    Now for the same arbitrary $c > 1$ we considered above, we wish to satisfy the conditions in Theorem~\ref{thm:mixingtimetheorem} on $\delta$, namely $\delta^2 R_1(c) \leq 1$, $\delta^2 R_3(c) \leq 1$, $\delta^3R_2(c) \leq 1$ (We have used this notation to emphasize that $R_1, R_2, R_3$ are function of $c$). But according to Theorem~\ref{thm:parameterbounds}, these parameters can be set as:
    \begin{align*}
        &R_1 = (c^2 + \alpha \sqrt \alpha_0)\sqrt n,\\
        &R_2 = (c^2 + \alpha \sqrt{\alpha_0}) n,\\
        &R_3 = c^2(\sqrt n + n \delta) + n\delta c\alpha \sqrt{\alpha_0},
    \end{align*}
    plus Lemma~\ref{lem:stabilityinfnorm} imposes the following condition $\delta$:
    \begin{align*}
        \delta \leq \delta' = \frac{1}{\sqrt{c^2 + \alpha \sqrt \alpha_0}n^{1/4}}.
    \end{align*}
    Hence, the conditions on $\delta$ translates into
    \begin{align*}
        &\delta \leq \frac{1}{n^{1/4}c},\\
        &\delta \leq \frac{1}{n^{1/3}c^{2/3}},\\
        &\delta \leq \frac{1}{n^{1/3}},\\
        &\delta \leq \frac{1}{n^{1/3}c^{1/3}(\alpha \sqrt{\alpha_0})^{1/3}},\\
        &\delta \leq \frac{1}{\alpha^{1/2} {\alpha_0}^{1/4} n^{1/4}}
    \end{align*}
    Note that a sufficient condition on $\delta$ which satisfies all of the above constraints is (assuming $c \geq 1$) 
    \begin{align*}
        \delta = \frac{1}{c}\min\{\frac{1}{n^{1/3}}, \frac{1}{n^{1/3}(\alpha \sqrt{\alpha_0})^{1/3}}, \frac{1}{\alpha^{1/2} {\alpha_0}^{1/4}n^{1/4}}\}.\numberthis\label{eq:delta}
    \end{align*}
   
    Now to satisfy the condition $P(S) \geq 1 - O(\epsilon)$ in Theorem~\ref{thm:mixingtimetheorem}, noting Equation~\eqref{eq:Sprob}, we set 
    \begin{align*}
    c = \sqrt{\log\big(poly(n)M/(\epsilon \Delta \psi)\big)} = \Theta\big(\sqrt{\log(Mm/\epsilon)}\big).    
    \end{align*}

   On the other hand, from Theorem~\ref{thm:hybrid-iso}, we see that for the choice of $p = 4 - \lambda$ converging to $4$ from below ($\lambda$ is a small constant), the square of the isoperimetry constant is $\psi^2 =\Theta(\max\{m^{-\frac{2/p}{2/p + 1}} n^{-\frac{1}{2/p + 1}}, \alpha\})$.
   Now plugging this $\psi$ and $\delta$ from~\eqref{eq:delta} into Theorem~\ref{thm:mixingtimetheorem} and noting the choice of $c$ we get the following mixing bound:
   \begin{align*}
       \min\{\alpha^{-1}, n^{2/3}m^{1/3}\}\max\{n^{2/3}, n^{2/3}(\alpha\sqrt{\alpha_0})^{2/3}, n^{1/2}\alpha \sqrt{\alpha_0}\}\log(M)\log(Mm/\epsilon).
   \end{align*}
   But it is easy to check that picking $\lambda = \Theta(1/\log(n))$ only adds a $1/poly(\log(m))$ factor to $\delta$. Note that with this choice of $\lambda$, we have $m^{-\frac{2/p}{2/p + 1}} n^{-\frac{1}{2/p + 1}} = \Theta(n^{2/3}m^{1/3})$, hence the mixing time becomes
   \begin{align*}
       \min\{\alpha^{-1}, n^{2/3}m^{1/3}\}\max\{n^{2/3}, n^{2/3}(\alpha\sqrt{\alpha_0})^{2/3}, n^{1/2}\alpha \sqrt{\alpha_0}\}poly(\log(m))\log(M)\log(Mm/\epsilon).
   \end{align*}
     But note that if $\alpha\sqrt{\alpha_0} n^{1/2} \geq n^{2/3}$ or $n^{2/3}(\alpha\sqrt{\alpha_0})^{2/3} \geq n^{2/3}$, then $\alpha^{-1} \leq n^{2/3}m^{1/3}$. Hence, the mixing time boils down to
     \begin{align*}
         &\min\{\alpha^{-1}(n^{2/3}+n^{2/3}(\alpha \sqrt{\alpha_0})^{2/3} + \alpha\sqrt{\alpha_0} n^{1/2}), n^{4/3}m^{1/3}\}poly(\log(m))\log(M)\log(Mm/\epsilon),\\
         &=\min\{\alpha^{-1}n^{2/3}+\alpha^{-1/3}n^{5/9}m^{1/9} + m^{1/6}n^{1/3}, n^{4/3}m^{1/3}\}poly(\log(m))\log(M)\log(Mm/\epsilon).
     \end{align*}

\end{proof}

\section{On the Geometry and Stability of Hessian Manifolds}\label{sec:smoothness}
In this section, we prove the smoothness of the operator $\Phi(t)$, namely we show with that a nice Hamiltonian curve is $(R_1, R_2, R_3)$ normal. Our proof does not open up the definition of the mtric $g$ and its derivatives for our hybrid barrier, instead we exploit the strong-self concordance property in Lemma~\ref{lem:thirdorderself} to show the desired smoothness bounds, hence our framework potentially can be applied in other settings.  Interestingly, in order to bound the trace of certain operators that arise from bounding the smoothness of the Hamiltonian curves on manifold, it turns out that writing them as the average of random low rank tensors will enable us to apply our strong self-concordance estimates more efficiently and provide sufficient bounds to break the mixing time.
\subsection{Bounding $R_1$}

\begin{lemma}\label{lem:R1bound}
For the parameter $R_1$ regarding the Frobenius norm bound of $\Phi(t)$, given the control over the infinity norm of $\|s_v\|_\infty \lesssim c$, $\|v\|_g \lesssim c\sqrt n$ (note that the vector $v$ is inherent in the definition of $\Phi$), then we have
\begin{align*}
    R_1 \lesssim (c^2 + \alpha\sqrt{\alpha_0})\sqrt n..
\end{align*}
\end{lemma}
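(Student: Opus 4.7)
The overall plan is to decompose $\Phi(t) = R(\cdot,\gamma'(t))\gamma'(t) + M_{\gamma'(t)}$, bound each summand as an operator on $T_{\gamma(t)}\mathcal M$ in the $g$-induced operator norm, and then pass to Frobenius via the trivial inequality $\|T\|_F \le \sqrt n\,\|T\|_{g\to g}$, which applies whenever the operator is spectrally dominated by a scalar multiple of the identity. By Lemma~\ref{lem:thirdorderself}, $g^{-1}\D g(v)$ and $g^{-1}\D^2 g(v,v)$ are dominated in operator norm by $\|v\|_{x,\infty}$ and $\|v\|_{x,\infty}^2$ respectively, and by $(c,\delta)$-niceness we have $\|\gamma'(t)\|_{x,\infty}\le c$; these are the only self-concordance bounds the argument needs.

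First I would handle the curvature piece $u\mapsto R(u,\gamma')\gamma'$. On a Hessian manifold the Levi--Civita Christoffel symbols are $\Gamma^k_{ij}=\tfrac12 g^{kl}\partial_l g_{ij}$, so the Riemann tensor becomes a fixed polynomial in the blocks $g^{-1}\D g$ and $g^{-1}\D^2 g$, each contracted against $\gamma'$ or $u$. Substituting and using first- and second-order infinity-norm self-concordance in Lemma~\ref{lem:thirdorderself}, every block is dominated in operator norm by $\|\gamma'\|_{x,\infty}\le c$ or $\|\gamma'\|_{x,\infty}^2\le c^2$. Composing, $\|R(\cdot,\gamma')\gamma'\|_{g\to g}\lesssim c^2$, and therefore $\|R(\cdot,\gamma')\gamma'\|_F\lesssim c^2\sqrt n$.

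Next I would handle the drift operator $M_{\gamma'}(u)=\nabla_u\mu$, with $\mu = g^{-1}\D f - \tfrac12 g^{-1}\tr[g^{-1}\D g]$, splitting $\mu=\mu_1+\mu_2$ accordingly. The $\mu_2$-part expands after differentiation into contractions of at most two copies of $g^{-1}\D g(\gamma')$ and one copy of $g^{-1}\D^2 g(\gamma',\cdot)$, each spectrally bounded by $c^2$ by Lemma~\ref{lem:thirdorderself}. For $\mu_1 = \alpha g^{-1}\D\phi$ with $g=\nabla^2\phi$, a direct covariant-differentiation calculation on a Hessian manifold gives
\[
\nabla_u\mu_1 \;=\; \alpha\,u \;-\; \tfrac{\alpha}{2}\,\bigl(g^{-1}\D g(w)\bigr)u,\qquad w \triangleq g^{-1}\D\phi,
\]
so $\|\nabla_u\mu_1\|_{g\to g}\le \alpha + \tfrac{\alpha}{2}\|w\|_{x,\infty}$ by first-order infinity-norm self-concordance. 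Using the hybrid barrier's Newton-decrement bound $\|g^{-1}\D\phi\|_{x,\infty}\lesssim\sqrt{\alpha_0}$ (a consequence of the hybrid barrier parameter and the containment $E_g\subseteq B_{x,\infty}\subseteq\sqrt{\nu}E_g$), this contributes $\lesssim \alpha\sqrt{\alpha_0}$. Summing the three contributions and multiplying by $\sqrt n$ yields $R_1\lesssim(c^2+\alpha\sqrt{\alpha_0})\sqrt n$.

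The main obstacle is the clean derivation of the Hessian-manifold Riemann tensor and bias-gradient in purely $g^{-1}\D g$ / $g^{-1}\D^2 g$ form so that Lemma~\ref{lem:thirdorderself} applies directly, together with the infinity-norm gradient estimate $\|g^{-1}\D\phi\|_{x,\infty}\lesssim\sqrt{\alpha_0}$: this last bound is precisely where the log-barrier regularization in the hybrid barrier pays off, and it has to be verified by translating the self-concordance/barrier-parameter inequalities into the local $\|\cdot\|_{x,\infty}$ geometry induced by the symmetrized polytope $\mathcal P\cap(2x-\mathcal P)$.
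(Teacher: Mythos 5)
Your decomposition $\Phi=R(\cdot,\gamma')\gamma'+M$ is the paper's, but the reduction $\|T\|_F\le\sqrt n\,\|T\|_{g\to g}$ is exactly the step that cannot work here, because the $g\to g$ operator norms you claim are not available. For the curvature piece, write (as in \eqref{eq:riemann}) $R(\ell,v)v=g^{-1}\D g(v)g^{-1}\D g(v)\ell+g^{-1}\D g(\ell)g^{-1}\D g(v)v$ with $v=\gamma'$. The first summand is indeed an operator of norm $\lesssim\|s_{x,v}\|_\infty^2\le c^2$, but in the second the input $\ell$ sits in the \emph{differentiation} slot, and for a worst-case unit vector one only has $\|s_{x,\ell}\|_\infty\le\|\ell\|_g=1$ (Lemma~\ref{lem:infwithgnorm}); its norm is therefore only bounded by $\|s_{x,\ell}\|_\infty\,\|s_{x,v}\|_\infty\,\|v\|_g\lesssim c^2\sqrt n$. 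This is precisely the paper's Lemma~\ref{lem:Roperatorbound}, whose bound is $c^2+c\sqrt n$, not $c^2$; multiplying by $\sqrt n$ would give $R_1\sim c^2 n$, a factor $\sqrt n$ too large. The same loss occurs for $M$: the paper's operator bound (Lemma~\ref{lem:Moperatorbound}) necessarily has the mixed form $\|\ell\|_g+(1+\alpha\sqrt{\alpha_0})\sqrt n\,\|s_{x,\ell}\|_\infty$. The whole point of the infinity-norm self-concordance is that a \emph{random} direction $v'\sim\mathcal N(0,g^{-1})$ has $\|s_{x,v'}\|_\infty=\tilde O(1)$ while $\|v'\|_g\sim\sqrt n$; the paper therefore computes $\|\Phi\|_F^2=\mathbb E_{v_1,v_2\sim\mathcal N(0,g^{-1})}\langle v_1,\Phi v_2\rangle^2$ directly, using the Hessian-manifold symmetry of $\D g$ to move the Gaussian test vectors into the differentiation slots (Lemmas~\ref{lem:Rfrobeniusbound} and~\ref{lem:Mfrobeniusnorm}). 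Passing to a worst-case operator norm first throws away exactly this gain, and no rearrangement of the blocks recovers it.

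A second unsupported step is the ``Newton-decrement'' bound $\|g^{-1}\D\phi\|_{x,\infty}\lesssim\sqrt{\alpha_0}$ that your treatment of $\mu_1$ requires. The paper only establishes $\|\D(\alpha\phi)\|_{g^{-1}}\le\alpha\sqrt{n\alpha_0}$ (Lemma~\ref{lem:philemma}), and the generic conversion $\|\cdot\|_{x,\infty}\le\|\cdot\|_g$ then gives $\sqrt{n\alpha_0}$, i.e.\ it costs the very factor $\sqrt n$ you are trying to save; nothing in the barrier-parameter containments $E_g\subseteq B_{x,\infty}\subseteq\sqrt\nu\,E_g$ yields the stronger infinity-norm estimate, and it is not proved anywhere in the paper. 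The paper's proof of Lemma~\ref{lem:Mfrobeniusnorm} avoids needing it by the same symmetry trick: $\mathbb E_{v_1,v_2}\big(v_1^\top\D g(\nabla(\alpha\phi))v_2\big)^2=\mathbb E_{v_1}\,\nabla(\alpha\phi)^\top\D g(v_1)g^{-1}\D g(v_1)\nabla(\alpha\phi)\le\mathbb E_{v_1}\|s_{x,v_1}\|_\infty^2\,\|\nabla(\alpha\phi)\|_g^2\lesssim n\alpha^2\alpha_0$, so only the $g$-norm of the gradient is ever used. (Your description of the $\mu_2$-part is also off: $M_x(u)=\nabla_u\mu$ does not contract against $\gamma'$ at all; its terms involve $\xi=g^{-1}\tr(g^{-1}\D g)$ and trace expressions in $u$ and a free slot, which the paper again controls by writing $g^{-1}$ as a Gaussian second moment, together with $\|\xi\|_g\lesssim\sqrt n$ from Lemma~\ref{lem:boundonxi}.) To repair your argument you would have to abandon the operator-norm-times-$\sqrt n$ shortcut and estimate the Frobenius norm through Gaussian second moments as the paper does.
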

\begin{proof}
Directly follows from Lemmas~\ref{lem:Roperatorbound} and~\ref{lem:Moperatorbound}.
\end{proof}

First, recall the definition of the Frobenius norm:
\begin{align*}
    \|\Phi(t)\|_F^2 = \mathbb E_{v_1, v_2 \sim \mathcal N(0,g^{-1})} \mathbb E\langle v_1, \Phi(t)v_2\rangle^2.
\end{align*}
To bound $R_1$, i.e. the Frobenius norm of $\Phi(t)$, note that
\begin{align*}
    \Phi(t) = R(.,v)v + M(.),
\end{align*}
where $R$ is the Riemann tensor and $M$ is obtained from the bias vector $\mu$. 
In particular, we have
\begin{align*}
    R(\ell,v)v &= g^{-1}\D g(v)g^{-1}\D g(v)\ell\\
    &+ g^{-1}\D g(\ell)g^{-1}\D g(v)v,\numberthis\label{eq:riemann}\\
    M(\ell) & = \nabla_{\ell}(\nabla (\alpha\phi)) + \frac{1}{2}\nabla_{\ell}(g^{-1}\tr(g^{-1}\D g)).
\end{align*}

We start from the Riemann tensor.  The proof of this bound follows directly from the infinity norm second-order self-concordance of $g$. 
\begin{lemma}[Frobenius norm of random Riemann tensor]\label{lem:Rfrobeniusbound}
Assuming $\|s_v\|_\infty \lesssim c, \ \|v\|_g \lesssim c\sqrt n$, we have
\begin{align*}
    \|R(.,v)v\|_F \leq c^2\sqrt n.
\end{align*}
\end{lemma}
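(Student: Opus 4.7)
The plan is to bound $\|R(\cdot,v)v\|_F$ by decomposing the operator according to the given formula
\[
R(\ell,v)v \;=\; \underbrace{g^{-1}\D g(v)\,g^{-1}\D g(v)}_{A_1}\,\ell \;+\; \underbrace{g^{-1}\D g(\ell)\,g^{-1}\D g(v)v}_{A_2\ell},
\]
and bounding each piece using the first-order infinity-norm self-concordance from Lemma~\ref{lem:thirdorderself}. The main tool in both cases is the observation that, since $g=\nabla^2\phi$, the cubic form $\partial_i\partial_j\partial_k\phi$ is totally symmetric, and therefore $g^{-1}\D g(w)$ is $g$-self-adjoint for every $w$. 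Consequently the first-order estimate $-\|w\|_{x,\infty}g\preccurlyeq \D g(w)\preccurlyeq \|w\|_{x,\infty}g$ translates into the clean $g$-operator-norm bound $\|g^{-1}\D g(w)\|_{\mathrm{op},g}\le \|w\|_{x,\infty}$. Since the $g$-Frobenius norm is at most $\sqrt{n}$ times the $g$-operator norm, any such self-adjoint $g^{-1}\D g(w)$ has Frobenius norm at most $\sqrt n\,\|w\|_{x,\infty}$.

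For the first term $A_1=(g^{-1}\D g(v))^2$, the niceness assumption $\|s_{x,v}\|_\infty\lesssim c$ together with $\|v\|_{x,\infty}=\|s_{x,v}\|_\infty$ directly gives $\|A_1\|_{\mathrm{op},g}\le c^2$, and hence $\|A_1\|_F\lesssim c^2\sqrt n$. For the second term, the key step is to exploit the total symmetry of $\D g$ once more, rewriting $\D g(\ell)u=\D g(u)\ell$ with $u:=g^{-1}\D g(v)v$, so that $A_2=g^{-1}\D g(u)$ is itself a $g$-self-adjoint operator and the same reduction gives $\|A_2\|_F\lesssim \sqrt n\,\|u\|_{x,\infty}$.

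The main obstacle is the final step: showing $\|u\|_{x,\infty}\lesssim c^2$, because the naive route through $\|u\|_g\le \|v\|_g\cdot \|g^{-1}\D g(v)\|_{\mathrm{op},g}\le c^2\sqrt n$ followed by $\|u\|_{x,\infty}\lesssim \|u\|_g$ is an order of $\sqrt n$ too weak. To close this gap I plan to open up $u$ component by component via $s_{x,u}=A_x u$, using the explicit formulas for $\D g$ on the Lewis-weight and log-barrier pieces of the hybrid metric (the formulas in Lemma~\ref{lem:gderivative} and the log-barrier analogue). Each resulting term is a product of a bounded projection/diagonal factor with the reparameterized slack $s_{x,v}$, so that pairing the $\ell_\infty$ bound $\|s_{x,v}\|_\infty\le c$ with the $\|\cdot\|_{\infty\to\infty}$ operator-norm control on $\G^{-1}\W$ from Lemma~\ref{lem:ginfnorm} (which is precisely where the constraint $p<4$ is used) yields the required $\|s_{x,u}\|_\infty\lesssim c^2$, up to the polylogarithmic factors already hidden in $\preccurlyeq$.

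Combining the two estimates by triangle inequality gives $\|R(\cdot,v)v\|_F\le \|A_1\|_F+\|A_2\|_F\lesssim c^2\sqrt n$, matching the statement of the lemma. I expect the last step (bounding $\|u\|_{x,\infty}$ through a term-by-term expansion of $A_x g^{-1}\D g(v)v$ for the hybrid barrier) to be the one requiring the most care, since it ties together the structural properties of Lewis weights with the infinity-norm self-concordance machinery developed in Section~\ref{sec:secondorderselfconcordance}.
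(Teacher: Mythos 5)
Your decomposition of $R(\cdot,v)v$ and your treatment of the first term $(g^{-1}\D g(v))^2$ coincide with the paper: the $g$-operator bound $\|g^{-1}\D g(v)\|_{\mathrm{op},g}\lesssim\|s_{x,v}\|_\infty\lesssim c$ plus the $\sqrt n$ from passing to the Frobenius norm is exactly what the paper's Gaussian computation yields. The gap is in the second term. Writing $A_2=g^{-1}\D g(u)$ with $u=g^{-1}\D g(v)v$ and bounding $\|A_2\|_F\le\sqrt n\,\|u\|_{x,\infty}$ commits you to proving $\|u\|_{x,\infty}\lesssim c^2$, and this pointwise estimate is neither available nor, apparently, true. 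Nowhere does the paper prove anything of this type: whenever such vectors appear (e.g.\ in Lemmas~\ref{lem:parallelinfnorm} and~\ref{lem:infnormstabilityalongt}) their $\ell_\infty$ norm is controlled only through $\|u\|_{x,\infty}\le\|u\|_g\le\|s_{x,v}\|_\infty\|v\|_g\lesssim c^2\sqrt n$, a full factor $\sqrt n$ weaker than what you need (this loss is precisely why $R_3$ carries $n\delta$ terms). Moreover, already for the log-barrier block one has $s_{x,u}\propto \A g^{-1}\A^\top s_{x,v}^{\odot 2}$, a dense projection-type matrix applied to $s_{x,v}^{\odot 2}$; under niceness the constraints are only $\|s_{x,v}\|_\infty\lesssim c$ and an $\ell_2$ bound on $s_{x,v}$ at scale $\sqrt n$ (or worse), so a row of the projector with $\Theta(1/\sqrt m)$ entries of consistent sign can pick up an inner product of order $c^2\sqrt n$ rather than $c^2$. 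The $\|\cdot\|_{\infty\to\infty}$ bound on $\G^{-1}\W$ from Lemma~\ref{lem:ginfnorm} tames diagonal reweightings but has no effect on this dense factor, so the planned term-by-term expansion has no mechanism to close the gap.

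The repair is to avoid any $\ell_\infty$ control of $u$ altogether: keep the Frobenius norm in its Gaussian form and spend the infinity-norm self-concordance on the random slot instead of on $u$. Concretely, using total symmetry of $\D g$ twice,
\begin{align*}
\|g^{-1}\D g(u)\|_F^2=\mathbb E_{v_1,v_2\sim\mathcal N(0,g^{-1})}\big(v_1^\top\D g(v_2)u\big)^2=\mathbb E_{v_2}\,u^\top\D g(v_2)g^{-1}\D g(v_2)u\lesssim\mathbb E_{v_2}\|s_{x,v_2}\|_\infty^2\,\|u\|_g^2\lesssim\|u\|_g^2,
\end{align*}
since each $a_i^\top v_2$ has variance at most $1$ (Lemmas~\ref{ginversebound} and~\ref{lem:infnormbound}), making $\mathbb E\|s_{x,v_2}\|_\infty^2$ polylogarithmic. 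Combined with $\|u\|_g\le\|s_{x,v}\|_\infty\|v\|_g\lesssim c^2\sqrt n$ this gives $\|A_2\|_F\lesssim c^2\sqrt n$, which together with your first-term bound finishes the lemma; this Gaussian-average step is exactly the paper's proof of the second term.
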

\begin{proof}
For the first term of $R(.,v)v$ as written in~\eqref{eq:riemann}:
\begin{align*}
    \mathbb E_{v_1, v_2 \sim \mathcal N(0, g^{-1})} (v_1^T \D g(v)g^{-1}\D g(v)v_2)^2
    &=\mathbb E v_1^T \D g(v)g^{-1}\D g(v)v_2v_2^T \D g(v)g^{-1}\D g(v)v_1\\
    &= \mathbb E v_1^T \D g(v)g^{-1}\D g(v)g^{-1} \D g(v)g^{-1}\D g(v)v_1\\
    & \lesssim {\|s_v\|^4_\infty} \mathbb E v_1^T g v_1\leq \|s_v\|_\infty^4 n\lesssim c^4 n.
\end{align*}
For the second term of the Riemann tensor:
\begin{align*}
   \mathbb E_{v_1, v_2 \sim \mathcal N(0, g^{-1})} (v_1^T \D g(v_2)g^{-1}\D g(v)v)^2
   &=\mathbb E v^T\D g(v)g^{-1}\D g(v_2)v_1v_1^T \D g(v_2)g^{-1}\D g(v)v\\
   &=\mathbb E v^T\D g(v)g^{-1}\D g(v_2)g^{-1} \D g(v_2)g^{-1}\D g(v)v\\
   & \leq \mathbb E \|s_v\|_\infty^2\|s_{v_2}\|_\infty^2 v^T g v\\
   & \lesssim \|s_v\|_\infty^2 v^T g v \lesssim c^4 n.
\end{align*}
\end{proof}
Lemma~\ref{lem:Rfrobeniusbound} states $\sqrt n$ as an upper bound on the Frobenius norm of $R(\ell, v)v$ given that the curve is nice. 

Next, we prove a lemma regarding the expansion of the operator $M$, applying the covariant derivative.
\begin{lemma}[Subterms for operator $M$]\label{lem:mterms}
We have the following expansion for the subterms of operator $M$:
\begin{align*}
    &\langle \nabla_{v_1}(\nabla (\alpha\phi)), v_2\rangle 
    = v_2^\top \D g(\nabla (\alpha\phi))v_1 + v_2^\top \D^2(\alpha\phi) v_1,\\
    & \langle \nabla_{v_1}(g^{-1}tr(g^{-1}Dg)), v_2\rangle = 
    v_2^\top \D g(\xi)v_1 + v_2^\top\D(g\xi)v_1,\numberthis\label{eq:secondpartt}
\end{align*}
where
\begin{align*}
    \xi \triangleq g^{-1}\tr(g^{-1}Dg).
\end{align*}
Moreover,
\begin{align}
    v_2^\top \D(g\xi)v_1 = -\tr(g^{-1}\D g(v_1)g^{-1}\D g(v_2))
    + \tr(g^{-1}\D^2g(v_1. v_2)).\label{eq:Mlastterm}
\end{align}
\end{lemma}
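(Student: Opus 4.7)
The plan is to expand both covariant derivatives in the Euclidean chart using the Levi-Civita Christoffel symbols, and then exploit the fact that on a Hessian manifold the three-tensor $\D g$ (in coordinates, $\partial_i g_{jk} = \partial_i\partial_j\partial_k \phi$) is completely symmetric in all three indices; this is what collapses the Koszul sum of Christoffel contributions onto a single $\D g$-contraction. I would set up once and for all the identity $\nabla_{v_1} Y = D_{v_1}Y + \Gamma(v_1, Y)$ in the chart and then specialize $Y$ twice.

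For the first identity, let $Y := \nabla(\alpha\phi) = g^{-1}\D(\alpha\phi)$. The Euclidean piece $v_2^\top g\, D_{v_1}Y$ expands via $\D(g^{-1})(v_1) = -g^{-1}\D g(v_1)g^{-1}$ into a $v_2^\top \D^2(\alpha\phi)v_1$ term plus a $\D g$ term. The Christoffel piece, written as $\tfrac{1}{2}(\partial_i g_{jk}+\partial_j g_{ik}-\partial_k g_{ij})$ contracted with $v_1^i v_2^k Y^j$, becomes three copies of the same scalar $v_1^i v_2^j Y^k\partial_i\partial_j\partial_k\phi$ by Hessian symmetry. Collecting these with the Euclidean piece yields the stated $v_2^\top \D g(\nabla(\alpha\phi))v_1 + v_2^\top \D^2(\alpha\phi)v_1$. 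The second identity follows by the same template with $Y = \xi = g^{-1}\omega$ for $\omega = \tr(g^{-1}\D g)$: now the Euclidean chain rule on $g^{-1}\omega$ produces $\D\omega(v_1) = \D(g\xi)(v_1)$ in place of $\D^2(\alpha\phi)v_1$, while the Christoffel collapse is identical.

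For the final matrix identity \eqref{eq:Mlastterm}, I would recognize $g\xi = \tr(g^{-1}\D g) = \D(\log\det g)$ via Jacobi's formula, so that $v_2^\top \D(g\xi)v_1$ is nothing but the Euclidean Hessian $\D^2(\log\det g)(v_1,v_2)$. Differentiating $\D\log\det g(v_1) = \tr(g^{-1}\D g(v_1))$ once more in direction $v_2$ by the product rule, using $\D(g^{-1})(v_2) = -g^{-1}\D g(v_2)g^{-1}$ and cyclicity of the trace, produces precisely $-\tr(g^{-1}\D g(v_1)\,g^{-1}\D g(v_2)) + \tr(g^{-1}\D^2 g(v_1,v_2))$.

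The main obstacle is the careful bookkeeping of signs and factors when combining the $-g^{-1}\D g(v_1)g^{-1}$ correction from differentiating the inverse metric with the Koszul sum of Christoffel symbols; the Hessian-metric symmetry of $\D g$ is essential because it equates the three Koszul summands and lets them be written through a single symmetric $\D g(\cdot)$-contraction. Without this symmetry the formulas would retain a three-term Koszul form and the subsequent Frobenius-norm bounds on $M_x$ via Lemma~\ref{lem:Moperatorbound} would lose the clean structure used in the proof of Lemma~\ref{lem:R1bound}.
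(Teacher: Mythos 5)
Your proposal is correct and follows essentially the same route as the paper: expand the covariant derivative in the Euclidean chart via the Christoffel symbols $\Gamma=\tfrac{1}{2}g^{-1}\D g$ of the Hessian metric, use the total symmetry of $\D g$ to collapse the Koszul sum, and obtain \eqref{eq:Mlastterm} by differentiating $\tr(g^{-1}\D g(v_1))$, i.e.\ the Hessian of $\log\det g$. The only caveat is that careful bookkeeping produces a constant factor (in fact $-\tfrac{1}{2}$) in front of the $\D g(\nabla(\alpha\phi))$ term rather than $+1$, but since the lemma is only invoked through norm bounds that hold up to constants this is immaterial, and the paper itself is loose about the same factor.
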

\begin{proof}
By differentiating the first term:
\begin{align*}
    \langle \nabla_{v_1}(\nabla (\alpha\phi)) , v_2\rangle = \langle -g^{-1}\D g(v_1)g^{-1}\D(\alpha\phi) + g^{-1}\D^2(\alpha\phi)[v_1] + g^{-1}\D g(\nabla (\alpha\phi))v_1, v_2\rangle.
\end{align*}
But noting that $\nabla (\alpha\phi) = g^{-1}\D(\alpha\phi)$, the first and third terms are the same and we get the result. For the second term:
\begin{align}
    \langle \nabla_{v_1}(g^{-1}\tr(g^{-1}\D g)), v_2\rangle = 
    v_2^\top \D g(\xi)v_1 + v_2^\top \D(g\xi)v_1.
\end{align}
Finally, for the second argument of the Lemma
\begin{align*}
   v_2^\top \D(g\xi)v_1 & = v_2^\top \D(g\xi)(v_1) \\
   &= -v_2^\top tr(g^{-1}\D g(v_1)g^{-1}\D g) + v_2^\top \tr(g^{-1}\D^2g(v_1,.))\\
    & = -\tr(g^{-1}\D g(v_1)g^{-1}\D g(v_2))
    + tr(g^{-1}\D^2g(v_1, v_2)).
\end{align*}
\end{proof}

Next, we bound the Frobenius norm of the $M$ part in the following lemma, again only using infinity norm second-order self-concordance of $g$ to bound each of the four terms.
\begin{lemma}[Frobenius norm of operator $M$]\label{lem:Mfrobeniusnorm}
We have
\begin{align*}
    \|M(x)\|_F \lesssim \alpha \sqrt{\alpha_0 n}.
\end{align*}
\end{lemma}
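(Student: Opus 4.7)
The plan is to decompose the operator $M$ into four bilinear forms using Lemma~\ref{lem:mterms} and then estimate the Frobenius norm contribution of each. Recalling that the Frobenius norm here is $\|M\|_F^2 = \mathbb E_{v_1, v_2 \sim \mathcal N(0, g^{-1})}[\langle M v_1, v_2\rangle^2]$, Cauchy--Schwarz reduces the task to bounding each of
\[
T_1 = v_2^\top \D g(\nabla(\alpha\phi))v_1,\ T_2 = \alpha v_2^\top g v_1,\ T_3 = -\tfrac12 v_2^\top \D g(\xi)v_1,\ T_4 = -\tfrac12 v_2^\top \D(g\xi)v_1,
\]
where $\xi \triangleq g^{-1}\tr(g^{-1}\D g)$ is, up to sign, the Riemannian gradient of $\log\det g$.

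The deterministic Hessian-of-potential term $T_2$ is immediate: $\mathbb E[T_2^2] = \alpha^2 \tr(gg^{-1}gg^{-1}) = \alpha^2 n$. For $T_1$ and $T_3$, first-order infinity-norm self-concordance (Lemma~\ref{lem:thirdorderself}) gives $-\|w\|_{x,\infty}g \preccurlyeq \D g(w) \preccurlyeq \|w\|_{x,\infty}g$ for any direction $w$, so the self-adjoint (with respect to $g$) operator $g^{-1}\D g(w)$ has all eigenvalues in $[-\|w\|_{x,\infty},\|w\|_{x,\infty}]$. The Frobenius identity then gives $\mathbb E[T_1^2] = \tr\!\bigl((g^{-1}\D g(\nabla(\alpha\phi)))^2\bigr) \leq \|\nabla(\alpha\phi)\|_{x,\infty}^2\, n$, and likewise $\mathbb E[T_3^2] \leq \|\xi\|_{x,\infty}^2\, n$. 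For $T_4$, expanding via the third identity of Lemma~\ref{lem:mterms} produces traces of products of $g^{-1}\D g$ and $g^{-1}\D^2 g$, each of which is spectrally controlled by first- and second-order infinity-norm self-concordance; I would combine $T_3$ and $T_4$ together (both arising from the volume-distortion bias $\nabla\log\det g$) so that the terms lacking an explicit $\alpha$ factor can be reabsorbed into an estimate that scales with the same barrier-parameter budget as $T_1$.

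The main obstacle, and the heart of the argument, is establishing the sharp infinity-norm bounds on the two deterministic vectors $\nabla(\alpha\phi)$ and $\xi$, so that each $T_i$-contribution fits within $\alpha\sqrt{\alpha_0 n}$. The plan for $\nabla(\alpha\phi)$ is to exploit the hybrid barrier's parameter: writing $\phi = \alpha_0\tilde\phi$ where $\tilde\phi$ is self-concordant with parameter $\tilde\nu = \tilde O(n)$, the identity $\|\nabla(\alpha\phi)\|_g^2 = \alpha^2 \alpha_0 (\D\tilde\phi)^\top \tilde g^{-1}\D\tilde\phi$ yields $\|\nabla(\alpha\phi)\|_g \lesssim \alpha\sqrt{\alpha_0 n}$; the refinement to the infinity norm uses the Dikin-type comparison $\|\cdot\|_{x,\infty}\leq \|\cdot\|_{g_2}$ coming from the log-barrier component together with $g \succeq \alpha_0 g_2$, which effectively saves a factor $\sqrt{n}$ relative to the crude $\|\cdot\|_{x,\infty}\leq \|\cdot\|_g$ bound. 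A parallel analysis for $\xi = \nabla\log\det g$, exploiting that on a Hessian manifold $\log\det g$ is itself (up to constants) self-concordant with the same parameter budget, gives the analogous estimate. Summing the four contributions $\lesssim \alpha\sqrt{\alpha_0 n}$ yields the stated bound $\|M\|_F \lesssim \alpha\sqrt{\alpha_0 n}$.
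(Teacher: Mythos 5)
Your decomposition into $T_1,\dots,T_4$ is the same one the paper uses (Lemma~\ref{lem:mterms}), and your treatment of $T_2$ is fine, but the route you take for $T_1$ and $T_3$ has a genuine gap. You apply the infinity-norm self-concordance in the \emph{deterministic} direction, so your bounds $\mathbb E[T_1^2]\le \|\nabla(\alpha\phi)\|_{x,\infty}^2\,n$ and $\mathbb E[T_3^2]\le\|\xi\|_{x,\infty}^2\,n$ force you to prove $\|\nabla(\alpha\phi)\|_{x,\infty}\lesssim\alpha\sqrt{\alpha_0}$ and $\|\xi\|_{x,\infty}\lesssim 1$, i.e.\ bounds a full $\sqrt n$ stronger than the $g$-norm estimates $\|\nabla(\alpha\phi)\|_g\lesssim\alpha\sqrt{\alpha_0 n}$ and $\|\xi\|_g\lesssim\sqrt n$. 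Neither is established anywhere, and your proposed mechanism for getting them is quantitatively backwards: from $\|v\|_{x,\infty}=\|\A v\|_\infty\le\|\A v\|_2$ and $g\succeq\alpha_0\frac nm \A^\top\A$ one only gets $\|v\|_{x,\infty}\le\sqrt{m/(\alpha_0 n)}\,\|v\|_g$, and since $\alpha_0 n\le m$ (for $p\approx 4$, $\alpha_0 n=n^{2/3}m^{1/3}$) this is \emph{weaker}, not stronger, than the direct comparison $\|v\|_{x,\infty}\le\|v\|_g$ of Lemma~\ref{lem:infwithgnorm}. So the log-barrier "Dikin-type" detour saves nothing, and the claimed $\sqrt n$ gain is unsupported; plugging in the only available bound gives $\|T_1\|_F\lesssim\alpha\sqrt{\alpha_0}\,n$, off by $\sqrt n$.

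The paper avoids ever needing an infinity-norm bound on the deterministic vectors. Because the manifold is Hessian, $\D g$ is a fully symmetric trilinear form, so one can swap arguments: $v_1^\top\D g(\nabla(\alpha\phi))v_2=\nabla(\alpha\phi)^\top\D g(v_1)v_2$. Integrating out $v_2$ gives $\mathbb E\,[T_1^2]=\mathbb E\,\nabla(\alpha\phi)^\top\D g(v_1)g^{-1}\D g(v_1)\nabla(\alpha\phi)\le\mathbb E\,\|s_{v_1}\|_\infty^2\,\|\nabla(\alpha\phi)\|_g^2\lesssim\alpha^2\alpha_0 n$, since the infinity-norm self-concordance is now applied in the \emph{Gaussian} direction $v_1$, whose norm satisfies $\mathbb E\|s_{v_1}\|_\infty^2=\tilde O(1)$; only Lemma~\ref{lem:philemma} is needed for the deterministic factor. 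The term $T_3$ is handled identically using $\|\xi\|_g\le\sqrt n$ (Lemma~\ref{lem:boundonxi}), and $T_4$ by writing each trace as a Gaussian expectation ($g^{-1}=\mathbb E\,vv^\top$) so that, again, all directional derivatives of $g$ are taken in random directions. If you rework $T_1$, $T_3$, $T_4$ with this argument-swapping trick your outline goes through; as written, the "heart of the argument" you identify is exactly the step that fails.
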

\begin{proof}
To bound the Frobenius norm of the first part of the first term of operator $M$ stated in Lemma~\ref{lem:mterms}:
\begin{align*}
    \mathbb E_{v_1, v_2 \sim \mathcal N(0, g^{-1})}(v_1^\top \D g(\nabla \phi) v_2)^2
    &= \mathbb E v_1^\top \D g(\nabla \phi)v_2v_2^\top \D g(\nabla \phi) v_1 \\
    &= \mathbb E v_1^\top \D g(\nabla \phi)g^{-1}\D g(\nabla \phi)v_1\\
    &= \mathbb E {\nabla \phi}^\top Dg(v_1)g^{-1}Dg(v_1)\nabla \phi\\
    &\leq \mathbb E \|s_{v_1}\|_\infty^2 \|\nabla \phi\|_g^2 \lesssim n \alpha^2 \alpha_0, 
\end{align*}
where in the second line we are rewriting $v_1^\top Dg(\nabla \phi)$ as $\nabla\phi^\top Dg(v_1)$ which is true due to the symmetry of the derivatives of the metric on Hessian manifolds, i.e. $\partial_k g_{ij} = \partial_i g_{jk} = \partial_j g_{ik}$. Furthermore, we used Lemma~\ref{lem:philemma} in the last line.
For the second part of first term of $M$, note that $D^2\phi = g$, so the Frobenius norm is at most $n$ automatically. Next, for the first part of the second term of $M$, again based on Lemma~\ref{lem:mterms}
\begin{align*}
    \mathbb E_{v_1, v_2 \sim \mathcal N(0, g^{-1})} (v_1^TDg(\xi) v_2)^2
    &= \mathbb E v_2^\top Dg(\xi) v_1v_1^\top Dg(\xi)v_2\\
    &= \mathbb E \xi^\top Dg(v_2)g^{-1}Dg(v_2)\xi \\
    & \leq \mathbb E \|s_{v_2}\|_\infty^2 \xi^\top g \xi \leq n,
\end{align*}
where in the last line we used Lemma~\ref{lem:boundonxi}.
For the second part of the second term of $M$, from Lemma~\ref{lem:mterms}:
\begin{align*}
    \mathbb E_{v_1, v_2} (v_1^\top \D(g\zeta) v_2)^2 \lesssim 
    \mathbb E \tr^2(g^{-1}\D g(v_1)g^{-1}\D g(v_2))
    + \mathbb E \tr^2(g^{-1}\D^2g(v_1. v_2))
\end{align*}
for the first part
\begin{align*}
\mathbb E \tr^2(g^{-1}\D g(v_1)g^{-1}\D g(v_2)) &= \mathbb E_{v_1, v_2} ( \mathbb E_{v \sim \mathcal N(0,g^{-1})}v^\top \D g(v_1)g^{-1}\D g(v_2)v)^2\\
&\leq \mathbb E_{v_1, v_2, v} (v^\top \D g(v_1)g^{-1}\D g(v_2)v)^2\\
&= \mathbb E_{v_1, v_2, v} v^\top \D g(v_1)g^{-1}\D g(v)v_2v_2^\top \D g(v)g^{-1}\D g(v_1)v\\
&= \mathbb E v^\top \D g(v_1)g^{-1}\D g(v)g^{-1} \D g(v)g^{-1}\D g(v_1)v\\
&\leq \mathbb E \|s_v\|_\infty^2 \|s_{v_1}\|_\infty^2 \|v\|_g^2 \lesssim n. 
\end{align*}
For the second part:
\begin{align*}
    \mathbb E_{v_1, v_2} tr^2(g^{-1}\D^2g(v_1, v_2))
    &\leq \mathbb E_{v_1, v_2, v} (v^\top \D^2g(v_1, v_2) v)^2\\
    &= \mathbb E_{v_1, v_2, v} (v^\top \D^2g(v_1, v) v_2)^2\\
    &=\mathbb E_{v_1, v_2, v} v^\top \D^2g(v_1, v) v_2 {v_2}^\top \D^2g(v_1, v) v\\
    &=\mathbb E_{v_1, v} v^T \D^2g(v_1, v) g^{-1} \D^2g(v_1, v) v\\
    &\leq \mathbb E_{v_1, v} \|s_{v_1}\|_\infty^2 \|s_v\|_\infty^2 \|v\|_g^2 \lesssim n. 
\end{align*}
\end{proof}

Combining Lemmas~\ref{lem:Mfrobeniusnorm} and~\ref{lem:Rfrobeniusbound} concludes 
\begin{align*}
    R_1 \leq (c^2 + \alpha\sqrt{\alpha_0})\sqrt n.
\end{align*}

\subsection{Bounding $R_2$}\label{sec:R2}
Here we state the bound on $R_2$.
\begin{lemma}\label{lem:R2bound}
    For point $x = \gamma_s(t)$ on a $(c,\delta)$-nice Hamiltonian curve with $v= \gamma'_s(t)$, namely that $\|s_{\gamma'_s}\|_\infty \leq c$ and $\|\gamma'_s\|_g \leq c \sqrt n$ along the curve up to time $t = \delta$, suppose now we move on the unit direction $z$ parameterized by $s$. Then, the change in the trace of the operator $\Phi$ can be bounded as
    \begin{align*}
        R_2 = |\frac{d}{ds}tr(\Phi(t))| \leq n(c^2+\sqrt{\alpha_0}\alpha).
    \end{align*}
\end{lemma}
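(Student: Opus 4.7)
The plan is to expand $\tr(\Phi(t)) = \Ricci(v,v) + \tr(M_x)$ via the Hessian-manifold Riemann formula~\eqref{eq:riemann} and the decomposition of the bias operator $M$ in Lemma~\ref{lem:mterms}, then differentiate in the direction $z$ and bound each resulting trace term-by-term. The core recipe is to pair Calabi-type self-concordance (Lemma~\ref{lem:calabitype}) on the $z$-factors, extracting $\|z\|_g$, with infinity-norm self-concordance (Lemma~\ref{lem:thirdorderself}) on the $v$-factors, extracting $\|s_v\|_\infty \le c$. Any trace of an operator whose $g$-normalization has bounded operator norm then contributes a single dimensional factor $n$, so each term ends up $\lesssim c^2 n \|z\|_g$.

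For the Ricci contribution, $\Ricci(v,v) = \tr(g^{-1}\D g(v)g^{-1}\D g(v)) + \tr(g^{-1}\D g(g^{-1}\D g(v)v))$, and applying $\D_z$ yields terms of two flavors: either $\D_z$ lands on a factor of $g^{-1}$, producing $g^{-1}\D g(z) g^{-1}$; or it lands on a factor of $\D g(v)$, producing $\D^2 g(z,v)$. In the first case, $\D g(z) \preccurlyeq \|z\|_g\, g$ (Calabi first-order) pulls $\|z\|_g$ outside the trace, leaving $\tr((g^{-1}\D g(v))^2) \lesssim c^2 n$ by infinity-norm first-order self-concordance of the hybrid barrier. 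In the second case, Cauchy--Schwarz on the trace, $|\tr(g^{-1}\D^2 g(z,v) g^{-1}\D g(v))|^2 \le \tr((g^{-1}\D^2 g(z,v))^2)\,\tr((g^{-1}\D g(v))^2)$, reduces the problem to bounding $\tr((g^{-1}\D^2 g(z,v))^2) \lesssim c^2 n \|z\|_g^2$, which I obtain from infinity-norm second-order self-concordance $\D^2 g(z,v) \preccurlyeq \|z\|_{x,\infty}\|v\|_{x,\infty}\, g$ combined with the comparison $\|z\|_{x,\infty} \lesssim \|z\|_g$ available for our hybrid metric.

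For the $\tr(M_x)$ contribution, Lemma~\ref{lem:mterms} splits $\langle\nabla_u\mu, u\rangle$ into four trace-types involving $\D^2(\alpha\phi)$, $\D g(\nabla(\alpha\phi))$, $\D g(\xi)$, and $\D(g\xi)$ with $\xi = g^{-1}\tr(g^{-1}\D g)$; differentiating in $z$ and applying the same Calabi-plus-infinity-norm recipe, together with the auxiliary estimates $\|\nabla\phi\|_g^2 \lesssim \alpha_0 n$ (Lemma~\ref{lem:philemma}) and $\|\xi\|_g \lesssim \sqrt n$ (as in the proof of Lemma~\ref{lem:Mfrobeniusnorm}), produces the $\sqrt{\alpha_0}\alpha\,n\,\|z\|_g$ contribution. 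The main obstacle is the mixed Ricci term $\tr(g^{-1}\D^2 g(z,v) g^{-1}\D g(v))$: applying the Calabi second-order bound $\D^2 g(z,v) \preccurlyeq \|z\|_g\|v\|_g\, g$ overshoots by $\sqrt n$ because $\|v\|_g$ can be as large as $c\sqrt n$. The resolution is to keep $v$ in the $\|s_v\|_\infty$-norm throughout by invoking infinity-norm second-order self-concordance for the $v$-direction and charging only the Calabi-type bound for $z$. Summing all Ricci and $M$ contributions gives $|\D_z \tr(\Phi(t))| \lesssim n(c^2 + \sqrt{\alpha_0}\alpha)\|z\|_g$, yielding the claimed $R_2 \le n(c^2 + \sqrt{\alpha_0}\alpha)$ for unit-norm $z$.
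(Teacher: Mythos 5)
Your overall route is the same as the paper's: decompose $\tr\Phi(t)$ into the Ricci part and $\tr (M_x)$, differentiate term by term, and pair the comparison $\|z\|_{x,\infty}\lesssim\|z\|_g$ with infinity-norm self-concordance in the $v$-slots; the auxiliary estimates you invoke ($\|\D(\alpha\phi)\|_{g^{-1}}\lesssim\alpha\sqrt{n\alpha_0}$, $\|\xi\|_g\lesssim\sqrt n$, Cauchy--Schwarz/Frobenius bounds on traces) are exactly those used in Lemmas~\ref{lem:riccichange} and~\ref{lem:Mchangebound}, and your handling of the mixed term $\tr(g^{-1}\D^2g(z,v)g^{-1}\D g(v))$ matches the paper's.

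There is, however, a genuine gap: you differentiate as if $v$ were a fixed vector, letting the derivative land only on the factors $g^{-1}$, $\D g(v)$, $\nabla(\alpha\phi)$ and $\xi$. But the derivative in this lemma is $\frac{d}{ds}$ along the family $\gamma_s$, and $v=\gamma'_s(t)$ itself depends on $s$, so the chain rule produces additional terms proportional to $\D v(z)=\frac{d}{ds}\gamma'_s(t)$, e.g.\ $\tr\bigl(g^{-1}\D g(\D v(z))g^{-1}\D g(v)\bigr)$ in the first Ricci term and $\D v(z)^\top\D g(v)g^{-1}\tr(g^{-1}\D g)$ in the second. These are not controlled by any self-concordance estimate alone: $\|\D v(z)\|_g$ can be of order $1/\delta$, and bounding it requires the separate Jacobi-type input of Lemma~\ref{lem:vderivative} (which itself rests on the $R_1$ Frobenius bound and Lemma 23 of~\cite{lee2018convergence}) giving $\|\D v(z)\|_g\lesssim c+1/\delta$; only then are the extra terms bounded by roughly $c\sqrt n\,(c+1/\delta)\lesssim n^{5/6}c+n^{1/2}c^2\le nc^2$, so they fit under the claimed $R_2$ but only after this additional argument, which your proposal omits. (For the $\tr (M_x)$ contribution there is no such issue, since $M_x$ depends only on the base point, not on $\gamma'_s(t)$.)
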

\begin{proof}
Directly from Lemmas~\ref{lem:Mchangebound} and~\ref{lem:riccichange}.
\end{proof}
In sections~\ref{sec:Mopchangebound} and~\ref{sec:riccichangebound}, we bound the change in the $M$ part and the Ricci part of $\Phi$ respectively.

\subsubsection{Bounding the change in Operator $M_x$}\label{sec:Mopchangebound}
Given a distribution $e^{-\phi(x)}$ that we want to sample from, we study the properties of the derivatives of the corresponding operator $M$ which is defined as 
\begin{align}
    M_x(v_1,v_2) = \langle \nabla_{v_1} \mu(x), v_2 \rangle,\label{eq:Mdefinition}
\end{align}
where 
\begin{align*}
    \mu(x) = \nabla_{g} (\alpha\phi)(x) + \frac{1}{2}g^{-1}tr(g^{-1}Dg) = 
    g^{-1}\D(\alpha\phi) + \frac{1}{2}g^{-1}tr(g^{-1}\D g),
\end{align*}

Recall from Lemma~\ref{lem:mterms}:
\begin{align}
    \text{LHS} = \langle \nabla_{v_1}(\nabla (\alpha\phi)) + \frac{1}{2}\nabla_{v_1}(g^{-1}tr(g^{-1}\D g)), v_2\rangle = \langle A_1(v_1), v_2\rangle + \langle A_2(v_1), v_2 \rangle.\label{eq:mform}
\end{align}

where we defined matrices $A_1(v_1)$ and $A_2(v_1)$. Here we introduce the main lemma of this section which bounds the derivative of the trace of $M$:
\begin{lemma}[Bound on the change of operator $M$]\label{lem:Mchangebound}
For operator $M$ defined in~\eqref{eq:Mdefinition} for any unit direction $z$ we have
\begin{align*}
    |\D(\tr{M(x)})(z)| \lesssim (1 + \sqrt{\alpha_0} \alpha)n\|z\|_g. 
\end{align*}
\end{lemma}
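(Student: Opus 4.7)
My plan is to differentiate term-by-term the four-summand decomposition of $\tr(M(x))$ implied by Lemma~\ref{lem:mterms}, exploit the fact that one summand is identically constant, and bound the derivatives of the other three using the third-order infinity-norm self-concordance (Lemma~\ref{lem:thirdorderself}) together with the $g$-norm bounds on $\nabla\phi$ and $\xi$ (Lemmas~\ref{lem:philemma} and~\ref{lem:boundonxi}).

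Summing the bilinear forms from Lemma~\ref{lem:mterms} against a $g$-orthonormal basis $\{e_i\}$ yields
$$\tr(M)=\alpha\,\tr(g^{-1}\D^2\phi)+\alpha\,\tr(g^{-1}\D g(\nabla\phi))+\tfrac12\tr(g^{-1}\D g(\xi))+\tfrac12\tr(g^{-1}\D(g\xi)),$$
where $\xi=g^{-1}\tr(g^{-1}\D g)$. Because the hybrid barrier satisfies $g=\nabla^2\phi$, the first summand is exactly $\alpha n$, constant in $x$, and its $\D_z$ derivative vanishes. I then differentiate the three remaining summands using the product rule, the identity $\D_z(g^{-1})=-g^{-1}\D g(z)g^{-1}$, the expansion $\D_z\nabla\phi=z-g^{-1}\D g(z)\nabla\phi$ (which uses $\D^2\phi=g$), and an analogous but longer expansion of $\D_z\xi$ in terms of $g^{-1},\D g,\D^2 g$. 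After the dust settles, every resulting scalar is either a trace of a product $\tr\bigl(g^{-1}\D^{k_1}g(\cdot)\,g^{-1}\cdots g^{-1}\D^{k_r}g(\cdot)\bigr)$ with $\sum k_j\le 3$, or an inner product $u^\top\D^k g(\cdot)v$ with $u,v\in\{\nabla\phi,\xi\}$ and $k\le 2$, in both cases with $z$ occupying one of the directional slots.

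Each term is bounded by combining three ingredients: (i) for pure-trace terms, Lemma~\ref{lem:thirdorderself} gives $|\tr(g^{-1}\D^kg(w_1,\dots,w_k))|\le n\prod_i\|w_i\|_{x,\infty}$, and Lemma~\ref{lem:infwithgnorm} converts $\|z\|_{x,\infty}\lesssim\|z\|_g$ (up to log factors) whenever a $\|z\|_g$-dependence is needed; (ii) for inner-product terms, Cauchy--Schwarz against $g$ gives $|u^\top\D^kg(\cdot)v|\le\|u\|_g\|v\|_g\prod_i\|w_i\|_{x,\infty}$, and Lemmas~\ref{lem:philemma},~\ref{lem:boundonxi} supply $\|\nabla\phi\|_g\lesssim\sqrt{\alpha_0 n}$ and $\|\xi\|_g\lesssim\sqrt n$; (iii) for the isolated first-order term $\tr(g^{-1}\D g(z))$, I use the classical self-concordance bound $|\tr(g^{-1}\D g(z))|\lesssim n\|z\|_g$. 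Summing, the terms carrying a $\nabla\phi$ factor (with the outer $\alpha$) contribute $\lesssim\alpha\sqrt{\alpha_0}\,n\|z\|_g$ and the rest contribute $\lesssim n\|z\|_g$, giving the claimed $(1+\sqrt{\alpha_0}\alpha)n\|z\|_g$.

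The main obstacle is the last summand $\tfrac12\tr(g^{-1}\D(g\xi))=\tfrac12\sum_i\bigl[-\tr(g^{-1}\D g(e_i)g^{-1}\D g(e_i))+\tr(g^{-1}\D^2g(e_i,e_i))\bigr]$, whose $\D_z$ derivative produces, among other pieces, the third-order trace $\sum_i\tr(g^{-1}\D^3g(z,e_i,e_i))$. Applying Lemma~\ref{lem:thirdorderself} naively would cost $n\|z\|_{x,\infty}\sum_i\|e_i\|_{x,\infty}^2$, and the latter sum is not small. I plan to exploit the total symmetry of $\D^3g$ on a Hessian manifold (the identity $\partial_kg_{ij}=\partial_ig_{jk}$ and its consequences) to re-pair indices so that $z$ is pulled into a single contraction controlled by $\|z\|_g$ via the classical bound, while the pair $(e_i,e_i)$ collapses into a $g^{-1}$ trace of operator norm $O(1)$, saving a factor of $\sqrt n$ over the naive estimate. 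An analogous re-pairing using $\D g(v)[w_1,w_2]=\D g(w_1)[v,w_2]$ handles the mixed $(g^{-1}\D g)^2$ cross terms arising from the derivative.
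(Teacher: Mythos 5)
Your decomposition of $\tr(M)$ into the four summands, the observation that the $\alpha\,\tr(g^{-1}\D^2\phi)=\alpha n$ piece is constant, and the use of $\|\nabla(\alpha\phi)\|_g\lesssim\alpha\sqrt{n\alpha_0}$ and $\|\xi\|_g\lesssim\sqrt n$ all match the paper's route (Lemma~\ref{lem:mterms} and Lemmas~\ref{lem:firstpartM}--\ref{lem:forthpartM}). You have also correctly located the crux: the term $\sum_j\tr\bigl(g^{-1}\D^3g(z,e_j,e_j)\bigr)$ coming from $\tfrac12\tr(g^{-1}\D(g\xi))$ cannot be handled by applying the infinity-norm self-concordance to a fixed $g$-orthonormal basis, since $\sum_j\|e_j\|_{x,\infty}^2$ can be $\Theta(n)$ (already for the hypercube, where every $\|e_j\|_{x,\infty}$ is $\Theta(1)$), so the naive bound overshoots by up to a factor of $n$, not $\sqrt n$.

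The gap is in your proposed repair. Re-pairing indices via the total symmetry of $\D^5\phi$ does not change the value of the contraction, and whichever two slots you designate as the quadratic-form slots for the L\"{o}wner estimate of Lemma~\ref{lem:thirdorderself}, the remaining slots carrying basis vectors still contribute $\sum_j\|e_j\|_{x,\infty}^2$ or $\sum_j\|e_j\|_{x,\infty}\|e_j\|_g$, both of which can be $\Theta(n)$; so symmetry alone does not "collapse the pair $(e_i,e_i)$ into an $O(1)$ trace." The ingredient the paper uses instead — and which is absent from your toolkit — is probabilistic: write $g^{-1}=\mathbb E_{v\sim\mathcal N(0,g^{-1})}[vv^\top]$ so that every such contraction becomes an expectation over a single Gaussian direction, then apply the infinity-norm self-concordance to $v$ and use that $\mathbb E\|s_{x,v}\|_\infty^2=\tilde O(1)$, which rests on the hybrid barrier's property $a_i^\top g^{-1}a_i\le1$ (Lemmas~\ref{ginversebound}, \ref{lem:infnormbound}, \ref{lem:momentbound1}). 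This is exactly how the paper bounds $\mathbb E_v\tr(g^{-1}\D g(v,v,z))\lesssim n\|z\|_g$ in Lemma~\ref{lem:forthpartM}. Moreover the same device is needed for the second-order pieces, e.g. $\tr(g^{-1}\D^2 g(z,\nabla(\alpha\phi)))$ in \eqref{eq:sameupperbound}: your ingredient (i) gives $n\|z\|_g\|\nabla(\alpha\phi)\|_{x,\infty}$, and $\|\nabla(\alpha\phi)\|_{x,\infty}$ has no better general bound than $\alpha\sqrt{n\alpha_0}$, so you lose $\sqrt n$; your ingredient (ii) requires swapping $\nabla\phi$ into a quadratic-form slot, which puts a basis vector into a directional slot and runs into the same $\sum_j\|e_j\|_{x,\infty}\|e_j\|_g=\Theta(n)$ loss, whereas the Gaussian representation yields $\mathbb E\,\|s_{x,v'}\|_\infty\|v'\|_g\lesssim\sqrt n$. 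The same issue recurs in your "longer expansion of $\D_z\xi$," which the paper controls in Lemma~\ref{lem:zetabound} again by Gaussian averaging. Without this averaging device (or an equivalent, e.g. a randomized/flat basis argument with the same effect), several of your terms exceed the target $(1+\sqrt{\alpha_0}\alpha)n\|z\|_g$ by factors of $\sqrt n$ or $n$, so the proof as sketched does not go through.
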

\begin{proof}
    To prove Lemma~\ref{lem:Mchangebound}, we bound the derivative of $tr(A_1)$ and $tr(A_2)$ in direction $z$ separately in Lemmas~\ref{lem:firstpartM} and~\ref{lem:secondpartM}. As a result, the proof of Lemma~\ref{lem:Mchangebound} directly follows from Lemmas~\ref{lem:firstpartM} and~\ref{lem:secondpartM}.
\end{proof}
We start from $tr(A_1)$ in the following Lemma.
\begin{lemma}[Trace of $A_1$]\label{lem:firstpartM}
    Regarding the operator $A_1(v_1) = \nabla_{v_1}(\nabla (\alpha\phi))$, we have
    \begin{align*}
       \D(\tr(A_1))(z) \lesssim \alpha \sqrt{n\alpha_0} \|z\|_g.
    \end{align*}
\end{lemma}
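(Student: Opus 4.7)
The plan is to begin by expanding $\tr(A_1)$ via Lemma~\ref{lem:mterms}. Since
\[
\langle A_1(v_1), v_2\rangle \;=\; v_2^\top \D g(\nabla(\alpha\phi))\,v_1 \;+\; v_2^\top \D^2(\alpha\phi)\,v_1,
\]
computing the operator trace in a $g$-orthonormal basis yields
\[
\tr(A_1) \;=\; \alpha\,\tr\!\bigl(g^{-1}\D g(\nabla\phi)\bigr) \;+\; \alpha\,\tr\!\bigl(g^{-1}\D^2\phi\bigr) \;=\; \alpha\,\tr\!\bigl(g^{-1}\D g(\nabla\phi)\bigr) \;+\; \alpha n,
\]
using the Hessian-manifold identity $\D^2\phi = g$. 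The additive $\alpha n$ is constant in $x$, so $\D_z(\alpha n)=0$ and I only need to differentiate the first summand.

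Setting $u := \nabla\phi = g^{-1}\D\phi$ and $L := \log\det g$, I would differentiate $\tr(g^{-1}\D g(u))$ in direction $z$. Using $\D_z(g^{-1}) = -g^{-1}\D g(z)g^{-1}$, the identity $\D_z u = z - g^{-1}\D g(z)\,u$ (which follows from $\D^2\phi = g$), and the standard identity $\tr(g^{-1}\D g(v)) = \D_v L$, the four summands from the product rule consolidate into three geometrically meaningful pieces:
\[
\D_z\,\tr\!\bigl(g^{-1}\D g(u)\bigr) \;=\; \D_z L \;-\; u^\top \D g(z)\,g^{-1}\nabla L \;+\; u^\top \nabla^2 L\,z.
\]

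Each piece would be bounded by combining infinity-norm self-concordance (Lemma~\ref{lem:thirdorderself}) with the barrier-parameter bound $\|\nabla\phi\|_g \lesssim \sqrt{\alpha_0 n}$ from Lemma~\ref{lem:philemma} and a companion bound on $\|\nabla_g L\|_g$. For the first term, Cauchy--Schwarz gives $|\D_z L|\le \|z\|_g\,\|\nabla_g L\|_g$. For the third term, the Hessian-manifold expansion $\partial_i\partial_j L = -\tr(g^{-1}\partial_i g\,g^{-1}\partial_j g) + \tr(g^{-1}\partial_i\partial_j g)$ reduces it to traces controllable by first- and second-order strong self-concordance. For the cross term, the polarization permitted by $-\|z\|_{x,\infty}\,g\preccurlyeq \D g(z)\preccurlyeq \|z\|_{x,\infty}\,g$ yields $|u^\top \D g(z)\,g^{-1}\nabla L|\le 2\,\|z\|_{x,\infty}\|u\|_g\|\nabla_g L\|_g$.

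The main obstacle is the cross term, because both $\|u\|_g = \|\nabla\phi\|_g$ and $\|\nabla_g L\|_g$ are individually of order $\sqrt{\alpha_0 n}$, and a naive application of conventional self-concordance (in the $g$-norm) would pay a further $\|z\|_g$ factor, producing $\alpha_0 n\|z\|_g$ rather than $\sqrt{\alpha_0 n}\|z\|_g$. The infinity-norm strength of Lemma~\ref{lem:thirdorderself} is exactly what closes the gap: it replaces the $\|z\|_g$ factor on $\D g(z)$ by the (typically much smaller) $\|z\|_{x,\infty}$, and together with Lemma~\ref{lem:infwithgnorm} (converting $\|z\|_{x,\infty}$ back to $\|z\|_g$ at the final step) and sharp barrier-parameter bounds on both $\|\nabla_g\phi\|_g$ and $\|\nabla_g L\|_g$, the desired $\alpha\sqrt{n\alpha_0}\|z\|_g$ estimate can be assembled. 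Without the stronger infinity-norm form of self-concordance developed earlier in the paper, this clean bound would be unattainable.
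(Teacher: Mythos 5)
Your argument is essentially the paper's: after observing that $\D^2(\alpha\phi)=\alpha g$ makes that contribution constant, your regrouping through $L=\log\det g$ (so that $\tr(g^{-1}\D g(u))=\D L^\top u$ and the Euclidean Hessian of $L$ collects the $\D^2 g$ and $g^{-1}$-derivative terms) is algebraically the same three-term expansion the paper writes, bounded with the same ingredients: Lemma~\ref{lem:philemma} for $\|\nabla\phi\|_g$, the bound $\|\xi\|_g\lesssim\sqrt n$ of Lemma~\ref{lem:boundonxi} for $\|\nabla_g L\|_g$, the infinity-norm self-concordance estimates, and Lemma~\ref{lem:infwithgnorm} to return to $\|z\|_g$. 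One caveat: your term-by-term estimates, exactly like the paper's displayed bounds, actually deliver $\alpha\sqrt{\alpha_0}\,n\,\|z\|_g$ (your cross term alone is of this order, and the infinity norm in the $z$-slot buys nothing once you convert back via $\|z\|_{x,\infty}\le\|z\|_g$ — the real gain is in the random trace direction), which is the scale that Lemma~\ref{lem:Mchangebound} and the $R_2$ bound consume, so the sharper-looking $\alpha\sqrt{n\alpha_0}\,\|z\|_g$ you assert at the end is inherited from the lemma statement rather than produced by the estimates — a discrepancy present in the paper itself and harmless downstream.
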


\begin{proof}
Note that from Lemma~\ref{lem:mterms}:
\begin{align*}
    \D(\tr(A_1))(z) &= \langle \D(\nabla (\alpha\phi))[v_1] + \frac{1}{2}g^{-1}\D g(\nabla (\alpha\phi))v_1, v_2\rangle\\
    & = v_2^\top \D g(\nabla (\alpha\phi))v_1 + v_2^\top \D^2\phi v_1.\numberthis\label{eq:Mfirstterm}
\end{align*}

For the second part, note that $D^2(\alpha\phi) = \alpha g$. Hence
\begin{align*}
    \D(\tr(g^{-1}\D^2\phi))(z) = 0.
\end{align*}

So we only need to handle the derivative of the first part. First, we bound the $g$-norm of the vector $\nabla \phi$ in the following helper lemma.
\begin{lemma}\label{lem:philemma}
 For the gradient of the potential $\phi$ we have
\begin{align*}
    \|\D (\alpha\phi)\|_{g^{-1}} \leq \alpha \sqrt{n\alpha_0}.
\end{align*}
\end{lemma}
\begin{proof}
We decompose the potential as $(\alpha\phi) = \phi_1 +  \phi_2$ for 
\begin{align*}
&\phi_1 = \alpha \alpha_0\log\det{A^T W^{1-2/p}A},\\
&\phi_2 = \alpha \alpha_0\frac{n}{m}\sum_i \log(s_i),
\end{align*}
 where $\W$ are the $p$-Lewis weights.

Now using Lemma~\ref{lem:lsbarriergradient}, we have
\begin{align*}
    \D\phi_1(x) = \alpha_0 \alpha A^\top w.
\end{align*}
\begin{align*}
    \|\D \phi_1\|_{g^{-1}} & = \alpha\alpha_0\sqrt{w_x^\top \A{g}^{-1}\A^\top w_x}\\
    & \leq \alpha\sqrt{\alpha_0} \sqrt{{w_x^{1/2}}^\top \Pxhat w_x^{1/2}} \leq \alpha\sqrt{\alpha_0} \|w_x^{1/2}\|_2 \leq \alpha\sqrt{\alpha_0} \sqrt n.
\end{align*}
where $\Pxhat = \mathbf{P}(\W^{1/2}\A) = \mathbf{P}(\W^{1/2}\A)$ is the projection matrix regarding the reweighted matrix $\W^{1/2}\A$ by the Lewis weights $w_x$. Note that we are using Lemma~\ref{lem:lewisweightbarrier} to conclude that $\A^\top \W \A \preccurlyeq g$.
For the log barrier part, similarly:
\begin{align*}
    \|\D\phi_2\|_{g^{-1}} &= \alpha \alpha_0 \sqrt{(\frac{n}{m})^2 1^\top \A g^{-1}\A^\top 1}\\
    & \leq \alpha\alpha_0 \sqrt{(\frac{n}{m})^2 1^\top \A g_2^{-1}\A^\top 1} \leq \alpha \sqrt{n\alpha_0},
\end{align*}
which completes the proof. Now we handle the first term of the $M$ operator, namely the first term in~\eqref{eq:mform} using the helper Lemmas.
\end{proof}

Now we got back to bound the first term in~\eqref{eq:Mfirstterm}, which we can expand as
\begin{align}
   \D(\tr(g^{-1}\D g(\nabla (\alpha\phi))))[z]
   & = \tr(g^{-1}\D g(z, \nabla (\alpha\phi))) + \tr(g^{-1}\D g(\D(\nabla (\alpha\phi))[z])) - \tr(g^{-1}\D g(z)g^{-1}\D g(\nabla (\alpha\phi))).\label{eq:subpart}
\end{align}

For the first term in~\eqref{eq:subpart}, according to Lemma~\ref{lem:philemma}:
\begin{align*}
    &\tr(g^{-1}\D g(z, \nabla(\alpha\phi)))\\
    & =  \mathbb E_{v' \sim \mathcal N(0,g^{-1})} {v'}^\top \D g(z, \nabla(\alpha\phi))v' \\
    & =
     \mathbb E_{v' \sim \mathcal N(0,g^{-1})} {v'}^T \D g(z, v')\nabla(\alpha\phi)\\
    & \leq \mathbb E_{v' \sim \mathcal N(0,g^{-1})} \|s_z\|_\infty \|s_{v'}\|_\infty \sqrt{{v'}^\top g v'} \sqrt{{\nabla(\alpha\phi)}^\top g \nabla(\alpha\phi)}\\ 
    & \leq \alpha\sqrt{\alpha_0} \sqrt n \sqrt n  \|s_z\|_\infty\\
     & \leq \alpha\sqrt{\alpha_0} n  \|z\|_g,\numberthis\label{eq:sameupperbound}
\end{align*}
where we used Lemma~\ref{lem:momentbound1} to bound $\mathbb E_{v'} \|s_{v'}\|_\infty \sqrt{{v'}^T g v'}$ and used Lemma~\ref{lem:infwithgnorm}. For the second term in~\eqref{eq:subpart}, we follow a similar reasoning:
\begin{align*}
    tr(g^{-1}\D g(\D(\nabla (\alpha\phi))(z))) 
    &=\mathbb E_{v'\sim \mathcal N(0,g^{-1})} {v'}^\top \D g(\D(\nabla (\alpha \phi))(z))v'\\
    &=\mathbb E_{v'\sim \mathcal N(0,g^{-1})} {v'}^\top \D g(v')\D(\nabla (\alpha \phi))(z)\\
    &\leq \mathbb E_{v'}\|v'\|_\infty \sqrt{{v'}^\top g v'} \sqrt{{\D(\nabla (\alpha\phi))(z)}^\top g \D(\nabla (\alpha\phi))(z)}.\numberthis\label{eq:factoredargument}
\end{align*}
Therefore, bounding $\tr(g^{-1}\D g(\D(\nabla (\alpha\phi))(z)))$ boils down to bounding $\|\D(\nabla (\alpha \phi))(z)\|_g$. Focusing on the subterm of ${\D(\nabla (\alpha\phi))(z)}^\top g \D(\nabla (\alpha\phi))(z)$ regarding $\phi_1$, namely $\|\D(\nabla \phi_1)(z)\|_g$
\begin{align*}
    \D (\nabla \phi_1)(z)^\top g \D (\nabla \phi_1)(z) & \leq \D^2(\phi_1)(z)^\top g^{-1}  \D^2(\phi_1)(z) + \D\phi_1^\top g^{-1} \D g(z)g^{-1}\D g(z)g^{-1} \D\phi_1\\
    &\leq \alpha^2 w_x^\top \Sxz\A g^{-1} \A^\top \Sxz w_x + \alpha^2 {w'_{x}}^\top \A g^{-1}\A^\top w'_x  + \|s_z\|_\infty^2 \D\phi_1^\top g^{-1}\D\phi_1\\
    & \leq \alpha_0\alpha^2\|\Sxz w_x^{1/2}\|_2^2 + \alpha_0\alpha^2 \|w'_x/w_x^{1/2}\|_2^2 + \alpha_0 \alpha^2 \|s_{x,z}\|_\infty^2 n\\
    & \leq \alpha_0\alpha^2 \|s_{x,z}\|_\infty^2 n\\
     &\leq \alpha_0\alpha^2 \|z\|_g^2 n,\numberthis\label{eq:Dphifirst}
\end{align*}
where we used Lemma~\ref{lem:philemma} and Lemma~\ref{lem:infwithgnorm}. Similarly for $\phi_2$:
\begin{align*}
    \D (\nabla \phi_2)(z)^\top g \D (\nabla \phi_2)(z) & \leq \D^2(\phi_2)(z)^\top g^{-1}  \D^2(\phi_2)(z) + \D\phi_2^\top g^{-1} \D g(z)g^{-1}\D g(z)g^{-1} \D\phi_2\\
    &\leq \alpha_0\alpha^2 (\frac{n}{m})^2 1^\top \Sxz\A g_2^{-1} \A^\top\Sxz 1 +  \alpha^2 \|s_{x,z}\|_\infty^2 \D\phi_1^\top g^{-1}\D\phi_1\\
    & \leq \alpha_0 \alpha^2 \frac{n}{m}\|s_{x,z}\|_2^2 + \alpha_0\alpha^2 \|s_{x,z}\|_\infty^2 n\\
    & \leq \alpha_0 \alpha^2 \|s_{x,z}\|_\infty^2 n,\\
    &\leq \alpha_0 \alpha^2 n\|z\|_g^2.\numberthis\label{eq:Dphisecond}
\end{align*}
 where we used Lemma~\ref{lem:infwithgnorm}. Combining the above with the inequality
\begin{align*}
     \sqrt{\D(\nabla (\alpha\phi))(z) g \D(\nabla (\alpha\phi))(z)} \leq \sqrt{\D(\nabla \phi_1)(z) g \D(\nabla \phi_1)(z)} + \sqrt{\D(\nabla \phi_2)(z) g \D(\nabla \phi_2)(z)}
\end{align*}
and plugging back into Equation~\eqref{eq:factoredargument} implies the following bound on the second term in Equation~\eqref{eq:subpart}, we have for the second term in Equation~\eqref{eq:subpart}:
\begin{align}
   \tr(g^{-1}\D g(\D(\nabla (\alpha\phi))(z))) \leq n\sqrt \alpha_0\alpha \|z\|_g.\label{eq:secondtermbound}
\end{align}

For the third term in~\eqref{eq:subpart}, we reduce it to the first group of terms. Note that
\begin{align*}
    \tr(g^{-1}\D g(z)g^{-1}\D g(\nabla (\alpha\phi)))
    & =  \mathbb E_{v' \sim \mathcal N(0,g^{-1})} {v'}^T \D g(z)g^{-1}\D g( v')\nabla (\alpha\phi)\\
    & = \mathbb E_{v'}\sqrt{{v'}^\top \D g(z)g^{-1}\D g(z)v'} \sqrt{\nabla (\alpha\phi)^\top \D g(v')g^{-1}\D g(v') \nabla (\alpha\phi)}\\
    & \leq \mathbb E_{v'}\|s_{x,v'}\|_\infty\|s_{x,z}\|_\infty \|v'\|_g \sqrt{\nabla (\alpha\phi)^\top g^{-1} \nabla (\alpha\phi)}\\
    &\leq \alpha\sqrt{\alpha_0} n  \|z\|_g,\numberthis\label{eq:thirdtermupperbound}
\end{align*}
which is the same upper bound obtained in Equation~\eqref{eq:sameupperbound} and~\eqref{eq:secondtermbound}. Hence, combining Equations~\eqref{eq:sameupperbound},~\eqref{eq:secondtermbound}, and~\eqref{eq:thirdtermupperbound} we conclude
\begin{align*}
       \D(\tr(A_1))(z) \lesssim \alpha \sqrt{n\alpha_0} \|z\|_g.
\end{align*}
\end{proof}

Next, we focus on the second term in~\eqref{eq:mform} and bound the derivative of the trace of the operator $A_2(v_1) = \nabla_{v_1}(g^{-1}\tr(g^{-1}\D g))$. 

\begin{lemma}[Trace of $A_2$]\label{lem:secondpartM}
    For operator $A_2(v_1) = \nabla_{v_1}(g^{-1}\tr(g^{-1}\D g))$ as defined in Equation~\eqref{eq:mform} we have
    \begin{align*}
        |\D(\tr(A_2))(z)| \leq n\|z\|_g.
    \end{align*}
\end{lemma}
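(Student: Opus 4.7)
The plan is to realize $\tr(A_2)$ as an expectation under $v\sim\mathcal N(0,g^{-1})$ and then differentiate in $z$. Combining the Gaussian--trace identity $\tr T=\mathbb E_v\langle Tv,v\rangle_g$ with Lemma~\ref{lem:mterms} applied at $v_1=v_2=v$, together with the formula \eqref{eq:Mlastterm}, gives
\[
\tr(A_2)\;=\;\underbrace{\tr(g^{-1}\D g(\xi))}_{T_1}\;-\;\underbrace{\mathbb E_v\tr(g^{-1}\D g(v)g^{-1}\D g(v))}_{T_2}\;+\;\underbrace{\mathbb E_v\tr(g^{-1}\D^2 g(v,v))}_{T_3}.
\]

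Next I would differentiate each $T_j$ in direction $z$ using the product rule and $\D(g^{-1})(z)=-g^{-1}\D g(z)g^{-1}$. For $T_1$ this yields
\[
\D T_1(z)=-\tr(g^{-1}\D g(z)g^{-1}\D g(\xi))+\tr(g^{-1}\D^2 g(z,\xi))+\tr(g^{-1}\D g(\D\xi(z))),
\]
while $\D T_2(z)$ and $\D T_3(z)$ decompose into analogous traces involving $\D g$, $\D^2 g$, and $\D^3 g$ averaged over $v$. Each resulting scalar has the form of a trace of a product of factors drawn from $\{g^{-1},\D g,\D^2 g,\D^3 g\}$ with arguments drawn from $\{z,v,\xi,\D\xi(z)\}$.

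To bound each such trace by $n\|z\|_g$ I would apply the template from the proof of Lemma~\ref{lem:firstpartM}: use the total symmetry of the derivatives of $g$ on the Hessian manifold to shuffle arguments inside the Gaussian expectation, then apply Cauchy--Schwarz against $\sqrt{\cdot\,g\,\cdot}$ factors. For example,
\[
|\tr(g^{-1}\D g(z)g^{-1}\D g(\xi))|=|\mathbb E_{v'}v'^\top \D g(z)g^{-1}\D g(v')\xi|\,\lesssim\,\|\xi\|_g\,\|z\|_{x,\infty}\,\mathbb E_{v'}[\|v'\|_{x,\infty}\|v'\|_g].
\]
The ingredients (i) infinity-norm third-order self-concordance (Lemma~\ref{lem:thirdorderself}), (ii) $\|\xi\|_g\lesssim\sqrt n$ (Lemma~\ref{lem:boundonxi}), (iii) the Gaussian concentration $\mathbb E[\|v'\|_{x,\infty}\|v'\|_g]=\widetilde O(\sqrt n)$ (Lemma~\ref{lem:momentbound1}), and (iv) $\|z\|_{x,\infty}\le\|z\|_g$ (Lemma~\ref{lem:infwithgnorm}) combine to give $\sqrt n\cdot\sqrt n\,\|z\|_g=n\|z\|_g$ for each term; the terms involving $\D^2 g$ and $\D^3 g$ reduce to the same template after analogous symmetry manipulations and the operator-norm bounds for second- and third-order derivatives.

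The main obstacle is the single term $\tr(g^{-1}\D g(\D\xi(z)))$, since a naive bound $\|\D\xi(z)\|_g\lesssim n\|z\|_{x,\infty}$ obtained by differentiating $\xi=g^{-1}\tr(g^{-1}\D g)$ is off by a factor of $\sqrt n$ and would only yield $n^{3/2}\|z\|_g$. I would avoid estimating $\|\D\xi(z)\|_g$ in isolation and instead expand $\D\xi(z)=-g^{-1}\D g(z)\xi+g^{-1}\bigl(-\tr(g^{-1}\D g(z)g^{-1}\D g)+\tr(g^{-1}\D^2 g(z,\cdot))\bigr)$ and insert it directly into $\mathbb E_{v'}[\D\xi(z)^\top\D g(v')v']$, so that each piece becomes a trace of the same template as $\D T_1(z)$ itself, now containing two independent Gaussian vectors. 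Two nested Cauchy--Schwarz applications --- one against the factor $g^{-1}\D g(v')v'$, whose $g$-norm is at most $\|v'\|_{x,\infty}\|v'\|_g$, and one inside the resulting inner trace --- together with the same concentration bound give $n\|z\|_g$ for each piece, closing the argument.
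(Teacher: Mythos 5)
Your proposal is correct and follows essentially the same route as the paper: the same decomposition of $\tr(A_2)$ coming from Lemma~\ref{lem:mterms} and Equation~\eqref{eq:Mlastterm}, the Gaussian-trace trick $\tr(T)=\mathbb E_v\langle Tv,v\rangle_g$, infinity-norm self-concordance, the bound $\|\xi\|_g\lesssim\sqrt n$ (Lemma~\ref{lem:boundonxi}), and the moment bound of Lemma~\ref{lem:momentbound1}. The only place you diverge is the term $\tr(g^{-1}\D g(\D\xi(z)))$, and the obstacle you describe there is not actually present: the paper's Lemma~\ref{lem:zetabound} shows $\|\D\xi(z)\|_g\lesssim\sqrt n\,\|z\|_g$ (each of the three pieces of $\|\D\xi(z)\|_g^2$ is bounded by $n\|z\|_g^2$ using exactly the nested Gaussian-expectation and Cauchy--Schwarz manipulations you invoke, together with the Hessian symmetry of $\D g$), after which Lemma~\ref{lem:factoredlemma} gives $\sqrt n\cdot\sqrt n\,\|z\|_g=n\|z\|_g$ directly; your in-line expansion of $\D\xi(z)$ inside the trace is this same computation unrolled, so it is equally valid but not needed to avoid any loss.
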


Let
\begin{align*}
    \xi = g^{-1}\tr(g^{-1}\D g).
\end{align*}

\begin{proof}
From Lemma~\ref{lem:mterms}, we have
\begin{align}
    \langle \nabla_{v_1}(g^{-1}tr(g^{-1}\D g)), v_2\rangle & = 
   v_2^T \D g(\xi)v_1 + v_2^T \D(g\xi)v_1,\numberthis\label{eq:nonphipart}
\end{align}
We bound the derivatives of the two terms in Equation~\eqref{eq:nonphipart} separately in Lemmas~\ref{lem:thirdpartM} and~\ref{lem:forthpartM}. Hence, the proof of Lemma~\ref{lem:secondpartM} directly follows from these Lemmas.
\end{proof}
We start from bounding the derivative of the first term in Equation~\eqref{eq:nonphipart}, i.e. we wish to bound $|D(tr(g^{-1}Dg(\xi)))(z)|$.

\begin{lemma}\label{lem:thirdpartM}
    Regarding the first quadratic form in Equation~\eqref{eq:nonphipart}, we can bound its trace as
    \begin{align*}
        |\D(tr(g^{-1}\D g(\xi)))(z)| \lesssim n\|z\|_g.
    \end{align*}
\end{lemma}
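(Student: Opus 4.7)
The plan is to apply the product rule to $\D(\tr(g^{-1}\D g(\xi)))(z)$: the derivative in direction $z$ can hit $g^{-1}$, the first slot of $\D g$, or the vector field $\xi$, producing the three summands
\[
\D(\tr(g^{-1}\D g(\xi)))(z) = -\tr(g^{-1}\D g(z)g^{-1}\D g(\xi))+\tr(g^{-1}\D^2 g(z,\xi))+\tr(g^{-1}\D g(\D\xi(z))).
\]
I will bound each by $O(n\|z\|_g)$, leveraging the infinity-norm self-concordance from Lemma~\ref{lem:thirdorderself}, the $g$-norm control $\|\xi\|_g^2\lesssim n$ (Lemma~\ref{lem:boundonxi}), and the inequality $\|s_{x,z}\|_\infty\lesssim\|z\|_g$ (Lemma~\ref{lem:infwithgnorm}).

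The algebra is cleanest after a preliminary simplification. Using the Hessian symmetry $\partial_k g_{ij}=\partial_i g_{jk}$ and the identity $T_i\triangleq\tr(g^{-1}\partial_i g)=\partial_i\log\det g$, one has $\xi=g^{-1}T$ and the third trace above equals $(\D\xi(z))^\top T=\xi^\top g\,\D\xi(z)$. Substituting $\D\xi(z)=-g^{-1}\D g(z)\xi+g^{-1}\D T(z)$ into this and combining with the other two traces collapses the expression to
\[
\D(\tr(g^{-1}\D g(\xi)))(z)=2\,\D^2(\log\det g)(z,\xi)-\xi^\top \D g(z)\xi,
\]
where $\D^2(\log\det g)(z,\xi)=\tr(g^{-1}\D^2 g(z,\xi))-\tr(g^{-1}\D g(z)g^{-1}\D g(\xi))$. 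The second summand is immediately $\lesssim\|s_{x,z}\|_\infty\|\xi\|_g^2\lesssim n\|z\|_g$ by first-order infinity-norm self-concordance.

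For the Hessian-of-$\log\det g$ piece, I bound each of its two constituent traces by writing it as $\mathbb E_{v\sim\mathcal N(0,g^{-1})}\,v^\top(\cdot)v$, then applying Cauchy--Schwarz and the first- and second-order infinity-norm self-concordance: for the double-derivative trace $|v^\top\D^2 g(z,\xi)v|\le\|s_{x,z}\|_\infty\|s_{x,\xi}\|_\infty\|v\|_g^2$, and for the squared-derivative trace Cauchy--Schwarz gives $|v^\top\D g(z)g^{-1}\D g(\xi)v|\le\|s_{x,z}\|_\infty\|s_{x,\xi}\|_\infty\|v\|_g^2$; taking expectations yields $|T_1|,|T_2|\lesssim n\|s_{x,z}\|_\infty\|s_{x,\xi}\|_\infty$.

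The main obstacle is that the crude estimate $\|s_{x,\xi}\|_\infty\le\|\xi\|_g\lesssim\sqrt n$ (the only bound directly offered by Lemma~\ref{lem:infwithgnorm}) would leave an off-by-$\sqrt n$ bound of $n^{3/2}\|z\|_g$. To close this gap I will establish the sharper estimate $\|s_{x,\xi}\|_\infty\lesssim 1$ by unfolding $\xi=g^{-1}\nabla\log\det g$ for the hybrid barrier: since $\log\det g$ is, up to log factors, a sum of a Lewis-weight-type log-determinant and the scaled log barrier, its Euclidean gradient rescaled by $g^{-1}$ can be expressed as a weighted combination of the quantities $\W$, $\Lambdax$, and $\G^{-1}\W$ for which the operator-infinity-norm stability estimates of Appendix~\ref{sec:derivativestability} (notably Lemma~\ref{lem:ginfnorm}) yield entrywise control $\|s_{x,\xi}\|_\infty\lesssim 1$. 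This matches the intuition, visible already for the pure log barrier on the orthant, that the analog of $s_{x,\xi}$ is a constant vector. With this sharpened bound in hand, both traces are $O(n\|z\|_g)$ and combine with the $\xi^\top\D g(z)\xi$ contribution to give the claimed estimate.
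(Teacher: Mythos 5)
Your decomposition via the product rule and the algebraic collapse to $2\,\D^2(\log\det g)(z,\xi)-\xi^\top\D g(z)\xi$ is correct (it is a legitimate repackaging of the three traces, using $\tr(g^{-1}\D g(u))=\D(\log\det g)(u)$ and Hessian symmetry), and the bound $|\xi^\top\D g(z)\xi|\lesssim\|s_{x,z}\|_\infty\|\xi\|_g^2\lesssim n\|z\|_g$ is fine. The genuine gap is the last step: your whole estimate for the $\D^2(\log\det g)(z,\xi)$ piece hinges on the asserted sharpened bound $\|s_{x,\xi}\|_\infty\lesssim 1$, which you do not prove and which does not follow from the tools you cite. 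Lemma~\ref{lem:ginfnorm} controls $\G^{-1}\W$ acting on slack vectors, a quite different object from $\A g^{-1}\tr(g^{-1}\D g)$; the only general comparison available in the paper is Lemma~\ref{lem:infwithgnorm}, which gives exactly the crude $\|s_{x,\xi}\|_\infty\le\|\xi\|_g\lesssim\sqrt n$ you are trying to beat. Establishing entrywise (slack-relative) control of the metric-bias vector for the hybrid barrier would be a substantive new estimate — the orthant/log-barrier intuition does not transfer, since for a general polytope the analogue of $s_{x,\xi}$ is a projection-type vector $P\sigma$ whose sup-norm is not obviously $O(1)$ — and the paper neither proves nor needs such a statement. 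As written, your argument only yields $n^{3/2}\|z\|_g$.

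The fix, and the route the paper actually takes, is to never place $\xi$ in a slot that gets measured in the infinity norm. Writing each trace as a Gaussian expectation and using the total symmetry of the derivatives of $g$ on a Hessian manifold, one moves $\xi$ into a vector slot and the Gaussian test vector into the direction slot: for instance $\tr(g^{-1}\D^2 g(z,\xi))=\mathbb E_{v'}\,{v'}^\top\D^2 g(z,v')\,\xi\le\mathbb E_{v'}\|s_{x,z}\|_\infty\|s_{x,v'}\|_\infty\|v'\|_g\|\xi\|_g\lesssim\sqrt n\,\|z\|_g\|\xi\|_g\lesssim n\|z\|_g$, using Lemma~\ref{lem:momentbound1} and only $\|\xi\|_g\lesssim\sqrt n$ (Lemma~\ref{lem:boundonxi}); the mixed trace $\tr(g^{-1}\D g(z)g^{-1}\D g(\xi))$ is handled identically (Lemma~\ref{lem:factoredlemma3}), and the paper treats the $\D\xi(z)$ term directly via $\|\D\xi(z)\|_g\lesssim\sqrt n\|z\|_g$ (Lemma~\ref{lem:zetabound}) together with Lemma~\ref{lem:factoredlemma}, whereas your collapse identity sidesteps that lemma — a minor simplification that is worth keeping. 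If you replace your step that invokes $\|s_{x,\xi}\|_\infty\lesssim 1$ with this symmetry trick, your proof closes and matches the intended bound.
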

\begin{proof}
To this end, we repeat a similar arguemnt as we did in Equation~\eqref{eq:subpart} for bounding 
$$\D(\tr(g^{-1}\D g(\nabla(\alpha \phi))))[z].$$ In particular, our argument regarding  $\nabla (\alpha\phi)$ in Equations~\eqref{eq:Dphifirst} and~\eqref{eq:Dphisecond} only cares about the bound on $\|\nabla (\alpha\phi)\|_{g}$ and $\|\D(\nabla (\alpha\phi))(z)\|_{g}$. We show a similar bound for $\xi$. As a warmup, we start by bounding the norm $\|\xi\|_g$, then we move on to bounding $\|\D(\xi)(z)\|_g$.

\begin{lemma}\label{lem:boundonxi}
We have 
\begin{align*}
    \|\xi\|_g \leq \sqrt n.
\end{align*}
\end{lemma}
\begin{proof}
We have
\begin{align*}
    \|\xi\|_g^2 & = {\tr(g^{-1}\D g)}^\top g^{-1}\tr(g^{-1}\D g)\\
    & = \mathbb E_{v,v'\sim \mathcal N(0,g^{-1})} {v}^\top\D g(v)g^{-1}\D g(v')v'\\
    & \leq \mathbb E_{v\sim \mathcal N(0,g^{-1})} {v}^\top \D g(v)g^{-1}\D g(v)v\\
    &\leq \mathbb E_v \|v\|_\infty^2 v^\top g v\lesssim n,\numberthis\label{eq:xinormbound} 
\end{align*}
where $\tr(g^{-1}\D g)$ is a vector with its $i$th entry equal to $tr(g^{-1}{D_ig})$. The first inequality above is due to Cauchy-Schwarz, and the second one is due to Lemma~\ref{lem:momentbound1}.
\end{proof}

Furthermore, we have the following bound on $\|\D(\xi)(z)\|_g^2$:
\begin{lemma}\label{lem:zetabound}
For the derivative of $\xi$ in direction $z$ we have
\begin{align*}
\|\D(\xi)(z)\|_g^2 \lesssim n.
\end{align*}
\end{lemma}
\begin{proof}
Note that
\begin{align*}
    \|\D(\xi)(z)\|_g^2 &= {\tr(g^{-1}\D g)}^\top g^{-1}\D g(z)g^{-1}\D g(z)g^{-1}\tr(g^{-1}\D g)  &&\text{LHS}_1\\
    &+ \tr(g^{-1}\D g(z)g^{-1}\D g)^\top g^{-1}tr(g^{-1}\D g(z)g^{-1}\D g) && \text{LHS}_2\\
    &+ {\tr(g^{-1}\D^2g(z,.))}^\top g^{-1}\tr(g^{-1}\D^2g(z,.)). &&\text{LHS}_3
\end{align*}
For the first term above, 
\begin{align*}
{tr(g^{-1}Dg)}^Tg^{-1}Dg(z)g^{-1}Dg(z)g^{-1}tr(g^{-1}Dg)
&\leq {\tr(g^{-1}\D g)}^\top g^{-1/2}(g^{-1/2}\D g(z)g^{-1/2})^2 g^{-1/2}\tr(g^{-1}\D g)\\
&\leq \|z\|_g^2 {\tr(g^{-1}\D g)}^\top g^{-1}\tr(g^{-1}\D g).
\end{align*}

following our argument in~\eqref{eq:xinormbound}:
\begin{align*}
    \text{LHS}_1 \leq \|s_z\|_\infty^2 n.
\end{align*}
For the second term, we write the second $g^{-1}$ within the tracec as an expectation $\mathbb E_{v' \sim \mathcal N(0,I)} v'v'^T$, i.e.
\begin{align*}
    \tr(g^{-1}\D g(z)g^{-1}\D g) &= \tr(\D g g^{-1}\D g(z)g^{-1})\\
    &= \mathbb E_{v'} \tr(\D g g^{-1}\D g(z)v'v'^\top)\\
    &= \mathbb E_{v'} \D g(v')g^{-1}\D g(z)v'.
\end{align*}
Therefore, using independent normal vectors $v,v' \sim \mathcal N(0,g^{-1})$, we can rewrite the second term as
\begin{align*}
    \text{LHS}_2 &= \mathbb E_{v,v'}v^\top\D g(z)g^{-1}\D g(v)g^{-1} \D g(v')g^{-1}\D g(z)v'\\
    &= \mathbb E_{v,v'} z^\top \D g(v)g^{-1}\D g(v)g^{-1} \D g(v')g^{-1}\D g(v')z\\
    &\leq \mathbb E_v z^\top \D g(v)g^{-1}\D g(v)g^{-1} \D g(v)g^{-1}\D g(v)z\\
    &\leq \mathbb E_v \|z\|_g^2 \|s_v\|_\infty^4 \lesssim \|z\|_g^2.
\end{align*}
where the first inequality follows from Cauchy-Schwarz and the second one follows from Lemma~\ref{lem:exponentiallownerinequality} and the fact that $\D g(v) \lesssim \|s_v\|_\infty g$. For the third term similarly
\begin{align*}
    \text{LHS}_3 &= \mathbb E_{v,v' \sim \mathcal N(0,g^{-1})} v^\top\D g(z,v)g^{-1}\D g(z,v')v'\\
    &= \mathbb E_{v,v'} z^\top\D g(v,v)g^{-1}\D g(v',v')z\\
    &\leq \mathbb E_{v} z^\top\D g(v,v)g^{-1}\D g(v,v)z\\
    & \leq \mathbb E_v\|z\|_g^2 \|s_v\|_\infty^4 \lesssim \|z\|_g^2.
\end{align*}
Combining all three bounds similar to our argument for $\nabla (\alpha\phi)$ we conclude
\begin{align}
    \|\D(\xi)(z)\|_g \lesssim \|s_z\|_\infty \sqrt n \leq \|z\|_g \sqrt n.\label{eq:dxinormbound}
\end{align}
\end{proof}

According to Lemma~\ref{lem:zetabound}, similar to our bound for $\nabla (\alpha\phi)$ by substituting $w$ with $D\xi(z)$ in Lemma~\ref{lem:factoredlemma} we get
\begin{align}
    |\tr(g^{-1}\D g(\D\xi(z)))|\leq \sqrt n \|\D\xi(z)\|_g \lesssim \|z\|_g n.\label{eq:avalii}
\end{align}
Moreover, according to Lemma~\ref{lem:factoredlemma2} and Lemma~\ref{lem:boundonxi}:
\begin{align}
    |\tr(g^{-1}\D g(\xi, z))| \leq \sqrt n \|\xi\|_g \|z\|_g \leq n\|z\|_g.\label{eq:dovomii}
\end{align}
Further, using Lemma~\ref{lem:factoredlemma} combined with Lemma~\ref{lem:boundonxi}:
\begin{align}
    |\tr(g^{-1}\D g(z)g^{-1}\D g(\xi))| \leq \sqrt n \|z\|_g \|\xi\|_g \leq n \|z\|_g.\label{eq:sevomii}
\end{align}
Hence, combining Equations~\eqref{eq:avalii},~\eqref{eq:dovomii}, and~\eqref{eq:sevomii},
\begin{align*}
    |\D(\tr(g^{-1}\D g(\xi)))(z)| &\lesssim |\tr(g^{-1}\D g(\D\xi(z)))| + |\tr(g^{-1}\D g(\xi,z))| + |\tr(g^{-1}\D g(z)g^{-1}\D g(\xi))|\\
    &\lesssim n\|z\|_g,
\end{align*}
which completes the bound for the trace of the first part $Dg(\xi)$ of the operator in Equation~\eqref{eq:nonphipart}.
\end{proof}

Finally, we move on to bound derivative of the trace of the second operator in Equation~\eqref{eq:nonphipart}, namely $D(tr(g^{-1}D(g\xi)))(z)$.

\begin{lemma}\label{lem:forthpartM}
   We can bound the derivative of the trace of the second operator in Equation~\eqref{eq:nonphipart} as
   \begin{align*}
       |\D(\tr(g^{-1}\D(g\xi)))(z)| \lesssim n\|z\|_g.
   \end{align*}
\end{lemma}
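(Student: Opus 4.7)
The plan is to differentiate $F(x) \triangleq \tr(g^{-1}\D(g\xi))$ in direction $z$ directly using the chain rule, then invoke the identity from Lemma~\ref{lem:mterms} to reduce everything to contractions involving derivatives of $g$ up to third order. Explicitly, by the product rule,
\begin{align*}
    \D F(z) = -\tr\bigl(g^{-1}\D g(z)\,g^{-1}\D(g\xi)\bigr) + \tr\bigl(g^{-1}\D^2(g\xi)(z,\cdot)\bigr).
\end{align*}
Apply Lemma~\ref{lem:mterms} (Equation~\eqref{eq:Mlastterm}) to rewrite $v_2^\top \D(g\xi) v_1$ and differentiate the identity once more in direction $z$ for fixed $v_1, v_2$; this replaces $v_2^\top \D^2(g\xi)(z,v_1)$ with a sum of six contractions built from $\D g(z)$, $\D g(v_1)$, $\D g(v_2)$, $\D^2 g(z,v_1)$, $\D^2 g(z,v_2)$, $\D^2 g(v_1,v_2)$, and $\D^3 g(z,v_1,v_2)$, each sandwiched with $g^{-1}$.

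Next, I would write $\tr(g^{-1}\cdot)$ of each of these matrix quantities as $\mathbb{E}_{v\sim\mathcal N(0,g^{-1})} v^\top(\cdot)v$ (so both $v_1$ and $v_2$ are replaced by the same Gaussian $v$), exactly as in Lemma~\ref{lem:thirdpartM}. After this substitution, $\D F(z)$ is a sum of a bounded number of terms, each a product of two copies of first/second/third-order self-concordance factors involving $\{z, v, v\}$ or $\{z\}$ together with an ``expectation over $v$'' of a matrix whose trace is at most $n$. Concretely, representative terms are
\begin{align*}
    \mathbb{E}_v \tr\bigl(g^{-1}\D g(z)\,g^{-1}\D g(v)\,g^{-1}\D g(v)\bigr),\quad
    \mathbb{E}_v \tr\bigl(g^{-1}\D^2 g(z,v)\,g^{-1}\D g(v)\bigr),\quad
    \mathbb{E}_v \tr\bigl(g^{-1}\D^3 g(z,v,v)\bigr),
\end{align*}
together with analogous terms coming from the first piece of $\D F(z)$ after expanding $\D(g\xi)$.

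For each such term I invoke the infinity-norm self-concordance of Lemma~\ref{lem:thirdorderself}: the L\"{o}wner bounds $\pm\|z\|_{x,\infty}g\preccurlyeq\D g(z)\preccurlyeq\|z\|_{x,\infty}g$, and similarly for $\D^2 g$ and $\D^3 g$, pull out the scalar factors $\|z\|_{x,\infty}\|v\|_{x,\infty}^2$ (or $\|z\|_{x,\infty}\|v\|_{x,\infty}$ for the cross terms) and leave behind a trace bounded by the ambient $\tr(g^{-1}g)=n$. Taking the expectation over $v\sim\mathcal N(0,g^{-1})$, Lemma~\ref{lem:momentbound1} gives $\mathbb{E}\|v\|_{x,\infty}^2\lesssim 1$ (up to logs), so each term is at most $\|z\|_{x,\infty}\cdot n\lesssim \|z\|_g\cdot n$ by Lemma~\ref{lem:infwithgnorm}. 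Summing the finitely many contributions yields $|\D F(z)|\lesssim n\|z\|_g$, which together with Lemma~\ref{lem:thirdpartM} completes the proof of Lemma~\ref{lem:secondpartM}, hence of Lemma~\ref{lem:Mchangebound}.

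The main obstacle I anticipate is the bookkeeping when the term $\mathbb{E}_v \tr(g^{-1}\D g(z)\,g^{-1}\D(g\xi))$ is expanded: after substituting the Lemma~\ref{lem:mterms} identity for $\D(g\xi)$, the contraction involves three independent Gaussian samples and must be reorganized so that Cauchy--Schwarz is applied in the direction that pairs each $\D g(v)$ with another $\D g(v)$ (to get $(g^{-1}\D g(v))^2\preccurlyeq \|v\|_{x,\infty}^2 I$) rather than across $z$ and $v$. This is the same trick used in the treatment of $\text{LHS}_2$ in Lemma~\ref{lem:zetabound}, so the argument should go through; one just has to be careful that the third-order self-concordance is needed (rather than the weaker second-order) precisely for the $\D^3 g(z,v,v)$ term, which is why Lemma~\ref{lem:thirdorderself} in its full strength is invoked.
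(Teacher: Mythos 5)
Your proposal is correct and follows essentially the same route as the paper: expand the derivative of $\tr(g^{-1}\D(g\xi))$, substitute the identity of Lemma~\ref{lem:mterms}, represent the remaining $g^{-1}$'s as Gaussian expectations, and bound each resulting contraction (including the $\D^3 g(z,v,v)$ term, which is exactly where the third-order infinity-norm self-concordance is needed) via $\pm\|\cdot\|_{x,\infty}$ L\"{o}wner bounds, $\mathbb{E}\|s_v\|_\infty^2\lesssim 1$, and $\|s_z\|_\infty\le\|z\|_g$. The only cosmetic difference is that for the term where the derivative hits the outer $g^{-1}$, the paper sidesteps your anticipated bookkeeping by noting $B_1=\mathbb{E}_{v'}\D g(v')g^{-1}\D g(v')$ is PSD (so $g^{-1}\D g(z)g^{-1}\preccurlyeq\|s_z\|_\infty g^{-1}$ applies directly) and by a symmetry observation for the $B_2$ piece, rather than the Cauchy--Schwarz pairing you describe.
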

\begin{proof}
Recall from Lemma~\ref{lem:mterms}:
\begin{align}
    v_2^\top \D(g\xi)v_1 = -\tr(g^{-1}\D g(v_1)g^{-1}\D g(v_2))
    + \tr(g^{-1}\D^2g(v_1. v_2))\\
    = -v_1^\top B_1 v_2 + v_1^\top B_2 v_2.\label{eq:recall}
\end{align}

Now we wish to calculate the derivative of the trace of this operator, namely 
\begin{align}
\D(\tr(g^{-1}\D(g\xi)))(z).\label{eq:tmp2}
\end{align}
We separate the case when the derivation w.r.t $z$ is taken with respect to the outer $g^{-1}$ in~\eqref{eq:tmp2}.
First, we calculate the derivative with respect to the outer $g^{-1}$ regarding the term $\tr(g^{-1}B_1)$:
\begin{align}
    |\tr(\D(g^{-1})(z) B_1)| = |\tr(g^{-1}\D g(z)g^{-1}B_1)|.\label{eq:tmp3first}
\end{align}
Note that
\begin{align*}
    v_1^\top B_1 v_2 = \tr(g^{-1}\D g(v_1)g^{-1}\D g(v_2)).
\end{align*}
Note that this 2-form is symmetric and PSD since
\begin{align*}
    \tr(g^{-1}\D g(v_1)g^{-1}\D g(v_1)) = \tr((g^{-1/2}\D g(v_1)g^{-1/2})^2) \geq 0.
\end{align*}
Moreover, note that
\begin{align*}
    g^{-1}\D g(z)g^{-1} \leq \|s_z\|_\infty g^{-1}.
\end{align*}
Hence, Equation~\eqref{eq:tmp3first} can further be upper bounded as
\begin{align*}
    \|s_z\|_\infty tr(g^{-1} B_1) = \|s_z\|_\infty  \mathbb E_{v' \sim \mathcal N(0, g^{-1})} \tr(g^{-1}\D g(v')g^{-1}\D g(v')).
\end{align*}
But we have already bounded the operator norm of $\tr(g^{-1}\D g(v')g^{-1}\D g(v'))$ in Lemma~\ref{lem:operatornormbound1} by $\tilde O(\|s_{v'}\|_\infty^2)$, which implies its trace can be at most $n\tilde O(\|s_{v'}\|^2)$. Taking expectation, we have
\begin{align*}
     \mathbb E_{v' \sim \mathcal N(0, g^{-1})} \tr(g^{-1}\D g(v')g^{-1}\D g(v')) \lesssim n.
\end{align*}
Hence, we conclude
\begin{align}
    |\tr(\D(g^{-1})(z)B_1)| \lesssim n\|s_z\|_\infty.\label{eq:conclude1}
\end{align}
 On the other hand, note that for the second term in Equation~\eqref{eq:recall}, there is a symmetry between the inner and outer $g^{-1}$:
 \begin{align*}
     \tr(g^{-1}B_2) = \tr(g^{-1}\D^2g[g^{-1}]).
 \end{align*}
 Hence, it is sufficient to bound when taking derivative with respect $z$ hit one of them, namely the inner $g^{-1}$.
 \\
Therefore, we move on to taking derivative with respect to the $D(g\xi)$ part of $tr(g^{-1}D(g\xi))$. For this, we can again use the trick of writing $g^{-1}$ as $\mathbb E_{v\sim \mathcal N(0,g^{-1})}$:
\begin{align*}
    \tr(g^{-1}\D(g\xi)) = \mathbb E_{v} v^\top \D(g\xi)v.
\end{align*}
But from Equation~\eqref{eq:recall}, we have
\begin{align*}
   \tr(g^{-1}\D(g\xi)) = -\tr(g^{-1}\D g(v)g^{-1}\D g(v))
    + \tr(g^{-1}\D^2g(v, v)). 
\end{align*}
Now taking derivative with respect to $z$:
\begin{align}
    |\D(\tr(g^{-1}\D(g\xi)))(z)| & \leq |\mathbb  E_v \D( \tr(g^{-1}\D g(v)g^{-1}\D g(v)))(z)| + |\mathbb E_v \D(\tr(g^{-1}\D g(v,v)))(z)|\\
    & \text{LHS}_1 + \text{LHS}_2.\label{eq:subpart2}
\end{align}
But for the first term in~\eqref{eq:subpart2}, we can write:
\begin{align*}
    \text{LHS}_1 & \leq 2\mathbb E_v|\tr(g^{-1}\D g(z)g^{-1}\D g(v)g^{-1}\D g(v))| + 2\mathbb E_v|\tr(g^{-1}\D^2g(v,z)g^{-1}\D g(v))|\\
    &\lesssim \mathbb E_v\|z\|_\infty \tr((g^{-1/2}\D g(v)g^{-1/2})^2) + \mathbb E_v\|g^{-1/2}\D g(v,z)g^{-1/2}\|_1\|g^{-1/2}\D g(v)g^{-1/2}\|_{op}\\
    & \lesssim \mathbb E_v\|s_v\|_\infty^2 \|s_z\|_\infty n\\
    &\leq \mathbb E_v \|s_v\|_\infty^2 \|z\|_g n\\
    &\lesssim \|z\|_g n.\numberthis\label{eq:conclude2}
\end{align*}
For the second term in~\eqref{eq:subpart2}:
\begin{align*}
    \text{LSH}_2 & \leq |\tr(g^{-1}\D g(z)g^{-1}\D g(v,v))|
    + |\tr(g^{-1}\D g(v,v,z))| \\
    & \leq \|g^{-1/2}\D g(z)g^{-1/2}\|_{op}\|g^{-1/2}\D g(v,v)g^{-1/2}\|_1 + \|s_v\|_\infty^2 \|s_z\|_\infty \tr(g^{-1}g) \lesssim n \|s_z\|_\infty \|s_v\|_\infty^2\\
    &\leq \mathbb E_v n \|z\|_g\|s_v\|_\infty^2\\
    &\lesssim n\|z\|_g.\numberthis\label{eq:conclude3}
\end{align*}
where we used the third order self-concordance property of $g$ with respect to the infinity norm, as shown in section~\ref{sec:thirdorderselfconcordance}, and also Lemma~\ref{lem:infwithgnorm}. Combining Equations~\eqref{eq:conclude1},~\eqref{eq:conclude2}, and~\eqref{eq:conclude3} completes the porof of Lemma~\ref{lem:forthpartM}.
\end{proof}

\subsubsection{Bounding the change in the Ricci Tensor}\label{sec:riccichangebound}

First, we state the main result of this section, which is a bound on the change of the Ricci tensor.
\begin{lemma}[Bound on the change of Ricci tensor]\label{lem:riccichange}
Given the assumptions of Lemma~\ref{lem:R2bound}, we have
\begin{align*}
    \big|\frac{d}{ds} \Ricci(v, v)\big| \lesssim nc^2.
\end{align*}
\end{lemma}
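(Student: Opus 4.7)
The plan is to begin from the closed-form expression of the Ricci tensor on a Hessian manifold referenced in Equation~\eqref{eq:riccitensor}, which writes $\Ricci(v,v)$ as a finite combination of traces of the form
\begin{align*}
    \tr\!\big(g^{-1}\D g(v)\, g^{-1}\D g(v)\big) \quad \text{and} \quad \tr\!\big(g^{-1}\D^2 g(v,v)\big).
\end{align*}
Differentiating in the unit direction $z$ via the Leibniz rule produces only a bounded number of new summands, each carrying one additional factor that is either $g^{-1}\D g(z)\,g^{-1}$ (when $\frac{d}{ds}$ hits an inner $g^{-1}$), $\D^2 g(v,z)$ (when it hits an inner $\D g(v)$), or $\D^3 g(v,v,z)$ (when it hits $\D^2 g(v,v)$).

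The strategy for bounding each summand is the one already exercised throughout Section~\ref{sec:Mopchangebound}: rewrite every remaining $g^{-1}$ inside a trace as $\mathbb E_{v' \sim \mathcal N(0,g^{-1})} v'{v'}^\top$, apply Cauchy--Schwarz to split the resulting product into $g$-norms of quantities of the form $\D^k g$ contracted against some combination of $v, v'$, and $z$, and then invoke the infinity-norm third-order self-concordance (Lemma~\ref{lem:thirdorderself}) to collapse each such factor into $\|s_v\|_\infty^{a}\|s_z\|_\infty^{b}$ times a plain $g$-norm. Finally, the random-direction moment bound (Lemma~\ref{lem:momentbound1}) absorbs expectations like $\mathbb E\,\|s_{v'}\|_\infty^{O(1)} \|v'\|_g^{O(1)}$, leaving only a single factor of $n$ coming from the dimension of the trace.

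Plugging in the niceness hypotheses $\|s_v\|_\infty \lesssim c$ and $\|v\|_g \lesssim c\sqrt n$, together with $\|s_z\|_\infty \leq \|z\|_g = 1$ (Lemma~\ref{lem:infwithgnorm}), each surviving summand will scale as $\|s_v\|_\infty^2\|s_z\|_\infty \cdot n \lesssim c^2 n$. For instance, the hardest-looking term $\tr(g^{-1}\D^3 g(v,v,z))$ becomes $\mathbb E_{v'}{v'}^\top \D^3 g(v,v,z)\,v'$, which is directly controlled by third-order infinity-norm self-concordance as $\lesssim c^2 n$; and terms of the form $\tr(g^{-1}\D g(z) g^{-1}\D^2 g(v,v))$ are bounded by exactly the arithmetic used for $\text{LHS}_1, \text{LHS}_2$ in Equations~\eqref{eq:conclude2}--\eqref{eq:conclude3}, yielding the same $c^2 n$ magnitude.

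The main obstacle is organizational rather than analytic: one must check that \emph{every} summand produced by Leibniz admits a clean infinity-norm bound, and in particular that none of them is forced to spend a factor of $\|v\|_g^2 \sim c^2 n$ where only $\|s_v\|_\infty^2 \lesssim c^2$ is available. The $p<4$ threshold underlying Lemma~\ref{lem:thirdorderself} is precisely what converts second- and third-derivative bounds on $g$ into infinity-norm estimates and so prevents this blow-up. Summing the constant number of resulting terms will then yield $\big|\tfrac{d}{ds}\Ricci(v,v)\big| \lesssim n c^2$, as claimed.
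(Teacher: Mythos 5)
There is a genuine gap: you treat $v$ as a fixed vector throughout the differentiation, but in Lemma~\ref{lem:riccichange} the vector $v=\gamma'_s(t)$ is itself a function of $s$ (the paper says so explicitly right after the statement). Consequently the Leibniz rule produces, in addition to the terms you list, terms in which $\frac{d}{ds}$ hits $v$, e.g.\ $\tr\big(g^{-1}\D g(\D v(z))\,g^{-1}\D g(v)\big)$ and $\D v(z)^\top \D g(v)\,g^{-1}\tr(g^{-1}\D g)$. These cannot be handled by infinity-norm self-concordance plus the Gaussian moment bound alone: one needs a quantitative bound on $\|\D v(z)\|_g$, which the paper obtains in Lemma~\ref{lem:vderivative} as $\|\D v(z)\|_g \lesssim c + 1/\delta$, and that bound rests on the Frobenius estimate $R_1$ of Lemma~\ref{lem:R1bound} together with the matrix-ODE comparison (Lemma 23 of \cite{lee2018convergence}) for the variation of Hamiltonian curves, under the step-size condition $\delta^2 R_1\lesssim 1$. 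This is an essential, separate ingredient of the paper's proof (it is exactly why the hypotheses refer to a nice \emph{family} of Hamiltonian curves and a finite step size), and your argument as written simply never produces or bounds these terms.

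A secondary inaccuracy: your starting formula for the Ricci tensor is off. On a Hessian manifold the second derivative of the metric cancels, and Lemma~\ref{lem:riccitensor} gives
\begin{align*}
 \Ricci(v,v) = -\tfrac{1}{4}\tr\big(g^{-1}\D g(v)g^{-1}\D g(v)\big) + \tfrac{1}{4}\,v^\top \D g\big(g^{-1}\tr(g^{-1}\D g)\big)v,
\end{align*}
with no $\tr(g^{-1}\D^2 g(v,v))$ term; in particular no $\D^3 g(v,v,z)$ arises after differentiation (third derivatives enter only through the $M$-operator part of $R_2$, not the Ricci part). For the terms that do not involve differentiating $v$, your machinery (rewriting inner $g^{-1}$ as $\mathbb E_{v'}v'v'^\top$, Cauchy--Schwarz, first/second-order infinity-norm self-concordance, and Lemma~\ref{lem:momentbound1}) matches the paper's treatment and yields the $nc^2\|z\|_g$ bound, so the fix is to start from the correct formula and to add the $\D v(z)$ terms, bounding them via Lemma~\ref{lem:vderivative}.
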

Note that in the above, $v$ is implicitly a function of $s$ as well.
\begin{proof}
According to Lemma~\ref{lem:riccitensor} has two terms. We start analyzing the first term:
\paragraph{ $A_1:= -\frac{1}{4}\tr(g^{-1}\D g(v_1)g^{-1}\D g(v_2))$ term}

Taking derivative of this subterm of Ricci tensor in direction $z$:
\begin{align*}
\D A_1(z) = -\frac{1}{4}\tr(g^{-1}\D g(v,z)g^{-1}\D g(v)) + \frac{1}{4} \tr(g^{-1}\D g(v)g^{-1}\D g(z)g^{-1}\D g(v)).    
\end{align*}

Now we use Lemmas~\ref{lem:gsecondderivativebound} and~\ref{lem:gfirstderivativebound} to bound these terms:
\begin{align*}
    \tr(g^{-1/2}\D g(v,z)g^{-1/2} g^{-1/2}\D g(v)g^{-1/2}) & \leq 
    \|g^{-1/2}\D g(v,z)g^{-1/2}\|_F \|g^{-1/2}\D g(v)g^{-1/2}\|_F\\
    & \leq \|s_v\|_\infty^2 \|s_w\|_\infty\|g^{-1/2}gg^{-1/2}\|_F^2\\
    & \leq n \|s_v\|_\infty^2\|s_z\|_\infty\\
    & \lesssim n \|s_v\|_\infty^2 \|z\|_g \lesssim n c^2 \|z\|_g.
\end{align*}
Similarly
\begin{align*}
    & \tr(g^{-1/2}\D g(z)g^{-1/2}g^{-1/2}\D g(v)g^{-1/2}g^{-1/2}\D g(v)g^{-1/2}) \\
    &\leq \|g^{-1/2}\D g(v)g^{-1/2}\|_{op}\|g^{-1/2}\D g(z)g^{-1/2}\|_F \|g^{-1/2}\D g(v)g^{-1/2}\|_F \leq n \|s_v\|_\infty^2 \|z\|_g.
\end{align*}

\paragraph{Terms in the derivative of $A_1$ that involves the derivative of $v$}

Differentiating $v$ with respect to $z$, we get
\begin{align*}
   \tr(g^{-1}\D g(Dv(z))g^{-1}\D g(v)) &= \mathbb E_{v' \sim \mathcal N(0,g^{-1})} v'^\top \D g(\D v(z))g^{-1}\D g(v)v'\\
   & = \D v(z)^T\top \D g(v')g^{-1}\D g(v')v\\
   & \leq \sqrt{\D v(z)^\top\D g(v')g^{-1}\D g(v')\D v(z)}\sqrt{v^\top\D g(v')g^{-1}\D g(v')v}\\
   & \leq \|v'\|_\infty^2 \|\D v(z)\|_g \|v\|_g \leq n^{5/6} c + n^{1/2}c^2,
\end{align*}
where we used Lemma~\ref{lem:vderivative} to bound $\|\D v(z)\|_g$.
\end{proof}

\paragraph{Second part of the Ricci Tensor.}
We should take derivative of $v^\top Dg(g^{-1}tr(g^{-1}Dg))v$ in direction $z$, which is the second term in the Ricci tensor according to Lemma~\ref{lem:riccitensor}. As a warm up, we first bound the value of this term before taking derivative:
\paragraph{Before taking derivative w.r.t $z$}

Note that the second part of the Ricci tensor is
\begin{align*}
    v^\top\D g(v)g^{-1}\tr(g^{-1}\D g) & = \mathbb E_{v'} v^\top \D g(v)g^{-1} \D g(v')v'\\
    & \leq \sqrt{v^\top \D g(v)g^{-1}\D g(v)v}\sqrt{v'^\top \D g(v')g^{-1}\D g(v')v'}.
\end{align*}
Hence, we only need to bound one of the RHS terms with high probability. We have
\begin{align*}
    v^\top \D g(v)g^{-1}\D g(v)v \leq \|s_v\|_\infty^2 v^\top g v \leq n.
\end{align*}
Now to bound the derivative of this part of the Ricci tensor, first we pretend that $v$ is fixed. Then
\begin{align*}
    &\D(v^\top \D g(v)g^{-1}\D g(v)v)(z) \\
    &=
    v^\top \D g(v,z)g^{-1}\D g(v)v + v^\top \D g(v)g^{-1}\D g(z)g^{-1}\D g(v)v + v^\top \D g(v)g^{-1}\D g(v,z)v,
\end{align*}
which we further bound as 
\begin{align*}
    \D(v^\top \D g(v)g^{-1}\D g(v)v)(z) \lesssim \|s_v\|_\infty^2 \|z\|_\infty v^\top g v \lesssim n \|z\|_\infty c^2.
\end{align*}

Next, we take derivative in direction $z$ from the second term of the Ricci tensor.
\paragraph{Taking derivative in direction $z$.}
First, we differentiate the inner $g^{-1}$ term in $v^TDg(v)g^{-1}tr(g^{-1}Dg)$:
\begin{align*}
    \D(v^\top \D g(v)g^{-1}\tr(g^{-1}\D g))(z) &\rightarrow v^\top\D g(v)\tr(g^{-1}\D g(z)g^{-1}\D g)\\
    &= \mathbb E_{v'} v^\top\D g(v)g^{-1}\tr(v'^\top \D g g^{-1}\D g(z)v')\\
    & = \mathbb E_{v'} v^\top\D g(v)g^{-1}\D g(v')g^{-1}\D g(v')z\\
    & \lesssim \mathbb E \|s_{v'}\|_\infty^2  \|s_{v}\|_\infty \|s_{v}\|_g \|z\|_g \lesssim c^2 \sqrt n \|z\|_g.
\end{align*}
For the remaining derivatives we can substitute the inner $g^{-1}$ by $\mathbb E_{v \sim \mathcal N(0, g^{-1})} v'{v'}^T$. Now for the remaining derivatives which does not involve differentiating $v$:
\begin{align*}
    \mathbb E_{v'} |\D(v^\top\D g(v)g^{-1}\D g(v')v')(z)|
    &\leq \mathbb E_{v'}|v^\top\D g(v)g^{-1}\D g(v',z)v'| + \mathbb E_{v'} |v^\top\D g(v)g^{-1}\D g(z)g^{-1}\D g(v')v'|\\
    & \leq c^2 \|s_z\|_\infty n \leq nc^2 \|z\|_g. 
\end{align*}

Finally we have to check when $z$ differentiates $v$:
\begin{align*}
    \D(v^\top\D g(v)g^{-1}\tr(g^{-1}\D g))(z) &\rightarrow 
    \D v(z)^\top\D g(v)g^{-1}\tr(g^{-1}\D g)\\
    & = \mathbb E_{v'} \D v(z)^\top\D g(v)g^{-1}\D g(v')v'\\
    &\leq \mathbb E_{v'}\sqrt{\D v(z)^\top\D g(v)g^{-1}\D g(v) \D v(z)}\sqrt{v'^\top\D g(v')g^{-1}\D g(v')v'}\\
    &\leq \mathbb E_{v'}\|\D v(z)\|_g \|s_v\|_\infty \|s_{v'}\|_\infty \|v'\|_g \lesssim n^{1/2}(n^{1/3} + c)c,
\end{align*}
where we used Lemma~\ref{lem:vderivative} to bound $\|\D v(z)\|_g$.

\subsection{Bounding $R_3$}
Here we bound the parameter $R_3$ which is defined as the maximum possible value of the norm of $\Phi(t)\zeta(t)$, where $\zeta(t)$ is the parallel transport of the initial velocity. The idea is to bound the infinity norm of $\zeta(t)$ along the Hamiltonian curve, then show a more efficient bound compared to the naive operator norm of $\Phi(t)$ which works with both of the norms $\|s_{\zeta(t)}\|_\infty$ and $\|\zeta(t)\|_g$.

Recall the definition of the parameter $R_3$:
\begin{align*}
    \|\Phi(t) \zeta(t)\|_g \leq R_3
\end{align*}
where $\zeta(t)$ is the parallel transport of $\gamma'(0)$ along the Hamiltonian curve $\gamma(t)$. 
\begin{lemma}[Bound on $R_3$]\label{lem:R3bound}
Given that $\gamma$ is $(c,\delta)$-nice, we have
\begin{align*}
    R_3 \leq c^2(\sqrt n + n\delta) + n\delta c \alpha \sqrt{\alpha_0},
\end{align*}
up to time $\delta$.
\end{lemma}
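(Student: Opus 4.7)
The plan is to decompose $\Phi(t)\zeta(t)$ into its geometric and drift pieces, namely $\Phi(t)\zeta(t) = R(\zeta(t),\gamma'(t))\gamma'(t) + M_{\gamma(t)}\zeta(t)$, and to bound the $g$-norm of each one separately. Writing $v=\gamma'(t)$, the Riemann piece on a Hessian manifold takes the form (up to constants) $R(\zeta,v)v = g^{-1}\D g(v)g^{-1}\D g(v)\zeta + g^{-1}\D g(\zeta)g^{-1}\D g(v)v$, analogous to the expansion used in Lemma~\ref{lem:Rfrobeniusbound}. Pulling $g^{-1/2}$'s to the outside and applying the first-order infinity-norm self-concordance of Lemma~\ref{lem:thirdorderself} termwise yields the pointwise bounds $\|g^{-1}\D g(v)g^{-1}\D g(v)\zeta\|_g \lesssim \|s_v\|_\infty^2 \|\zeta\|_g$ and $\|g^{-1}\D g(\zeta)g^{-1}\D g(v)v\|_g \lesssim \|s_\zeta\|_\infty \|s_v\|_\infty \|v\|_g$. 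For the drift piece I will use Lemma~\ref{lem:mterms} to split $M_x\zeta = \nabla_\zeta \mu$ into a $\nabla\phi$-part that carries an $\alpha$-factor and a tr$(g^{-1}\D g)$-part that is handled by the same infinity-norm self-concordance estimates used in Section~\ref{sec:Mopchangebound}; the $\nabla\phi$-part is controlled using $\|\nabla(\alpha\phi)\|_g \leq \alpha\sqrt{n\alpha_0}$ from Lemma~\ref{lem:philemma}.

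From the $(c,\delta)$-niceness hypothesis we immediately have $\|s_v\|_\infty \leq c$ and $\|v\|_g \leq c\sqrt n$. Crucially, parallel transport preserves the Riemannian norm, so $\|\zeta(t)\|_g = \|\gamma'(0)\|_g \leq c\sqrt n$ uniformly in $t$. The main obstacle, and the focus of the proof, is controlling $\|s_{\zeta(t)}\|_\infty$, since the infinity norm is \emph{not} preserved by parallel transport.

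To handle this I will derive a Gronwall-type bound for $\|s_{\zeta(t)}\|_\infty$ along the curve. Differentiating $s_\zeta = \A \zeta$ entrywise gives
\[
\partial_t s_{\zeta} = (\partial_t \A)\zeta + \A\,\partial_t \zeta = \Sxv s_{\zeta} + \A\bigl(-\tfrac{1}{2}g^{-1}\D g(v)\zeta + (\text{symmetric Christoffel terms})\bigr),
\]
where the first term comes from the motion of the rescaling along the Euclidean direction $v$ and is pointwise bounded by $\|s_v\|_\infty \|s_\zeta\|_\infty \leq c\|s_\zeta\|_\infty$. The Christoffel piece is not pointwise controlled by $\|s_\zeta\|_\infty$, but using the infinity-norm self-concordance of Lemma~\ref{lem:thirdorderself} together with the preserved bound $\|\zeta\|_g \leq c\sqrt n$, one gets an entrywise growth rate of order $c \sqrt n$ (the $\sqrt n$ coming from converting the $g$-norm of $\zeta$ to an infinity estimate, not from $\|s_\zeta\|_\infty$ itself). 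Integrating with initial condition $\|s_{\zeta(0)}\|_\infty = \|s_{\gamma'(0)}\|_\infty \leq c$ yields $\|s_{\zeta(t)}\|_\infty \lesssim c + c\sqrt n \cdot t$ for $t\leq\delta$.

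Plugging everything back in: the Riemann contribution to $\|\Phi(t)\zeta(t)\|_g$ is at most $\|s_v\|_\infty^2\|\zeta\|_g + \|s_\zeta\|_\infty \|s_v\|_\infty \|v\|_g \lesssim c^2\cdot c\sqrt n + (c + c\sqrt n\,\delta)\cdot c\cdot c\sqrt n \lesssim c^2(\sqrt n + n\delta)$, absorbing a poly-logarithmic factor of $c$. The drift contribution contributes the remaining $n\delta c\,\alpha\sqrt{\alpha_0}$ piece, where the $\alpha\sqrt{\alpha_0}$ comes from Lemma~\ref{lem:philemma} and the $n\delta$ arises from the same integrated infinity-norm bound on $\|s_\zeta\|_\infty$ feeding into the $\nabla\phi$ term of $M_x$. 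Summing these gives the claimed bound $R_3 \leq c^2(\sqrt n + n\delta) + n\delta c\,\alpha\sqrt{\alpha_0}$. The main technical difficulty, as flagged above, is the Gronwall argument for $\|s_\zeta(t)\|_\infty$, which must carefully use the preserved $g$-norm on $\zeta$ alongside the infinity-norm self-concordance of the metric to avoid a spurious $\sqrt n$ blow-up.
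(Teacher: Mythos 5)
Your proposal is correct and follows essentially the same route as the paper: it decomposes $\Phi(t)\zeta(t)$ into the Riemann part and the drift part $M_x\zeta$, bounds each by mixed-norm operator estimates that pair one factor of $\|s_\zeta\|_\infty$ with the preserved $g$-norm of $\zeta$ (the paper's Lemmas~\ref{lem:Roperatorbound} and~\ref{lem:Moperatorbound}), and your Gronwall argument for $\|s_{\zeta(t)}\|_\infty \lesssim c + c\sqrt{n}\,t$ is exactly the paper's Lemma~\ref{lem:parallelinfnorm} (including the implicit requirement $\delta \lesssim 1/c$, which the paper states explicitly there). The bookkeeping of constants, with factors of $c$ absorbed polylogarithmically, matches the paper's own treatment.
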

\begin{proof}
From the definition of niceness, we have a $c$ upper bound on the infinity norm $\|s_{\gamma'}\|_\infty$. Using that, we can apply Lemma~\ref{lem:parallelinfnorm} to obtain 
\begin{align*}
    \|s_\zeta\|_\infty \leq \delta c \sqrt n.
\end{align*}
Finally combining this with Lemmas~\ref{lem:Roperatorbound} and~\ref{lem:Moperatorbound}:
\begin{align*}
   \|\Phi(t)\zeta\|_g \leq c^2\|\zeta\|_g + \|\zeta\|_\infty (c + \alpha\sqrt{\alpha_0})\sqrt n \leq c^2n^{1/2} + c\sqrt n \delta (c + \alpha\sqrt{\alpha_0}) \sqrt n = c^2(\sqrt n + n\delta) + n\delta c \alpha \sqrt{\alpha_0}.
\end{align*}
\end{proof}

Here we show a norm bound for $\Phi(t)$ which we used to bound $R_3$. To this end, we show bounds on the Riemann tensor $R(,v)v$ and operator $M$ separately in Lemmas~\ref{lem:Roperatorbound} and~\ref{lem:Moperatorbound}.
\begin{lemma}[Operator norm of random Riemann tensor]\label{lem:Roperatorbound}
Assuming $\|s_v\|_\infty \lesssim c, \ \|v\|_g \lesssim \sqrt n$, we have
\begin{align*}
    \|R(\ell,v)v\|_g \leq c^2\|\ell\|_g + c\sqrt n \|s_\ell\|_\infty \leq (c^2 + c\sqrt n) \|\ell\|_g.
\end{align*}
\end{lemma}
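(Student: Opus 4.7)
Using the expansion \eqref{eq:riemann} for Hessian manifolds,
$$R(\ell,v)v = g^{-1}\D g(v)g^{-1}\D g(v)\ell \;+\; g^{-1}\D g(\ell)g^{-1}\D g(v)v,$$
I will bound each of the two summands separately in $g$-norm and then apply the triangle inequality. The only tool I need is the first-order infinity-norm self-concordance $\D g(u) \preccurlyeq \|s_u\|_\infty\, g$ from Lemma~\ref{lem:gfirstderivativebound} (equivalently, $g^{-1/2}\D g(u) g^{-1/2}$ has operator norm $\lesssim \|s_u\|_\infty$), together with the niceness hypotheses $\|s_v\|_\infty \lesssim c$ and $\|v\|_g \lesssim \sqrt n$. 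No higher-order self-concordance should be required.

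For the first summand, I will compute
$$\|g^{-1}\D g(v)g^{-1}\D g(v)\ell\|_g^2 = \ell^\top \D g(v)\,g^{-1}\D g(v)\,g^{-1}\D g(v)\,g^{-1}\D g(v)\,\ell.$$
Iterating the L\"owner bound $g^{-1/2}\D g(v) g^{-1/2} \preccurlyeq \|s_v\|_\infty I$ four times (inserting $g^{1/2}g^{-1/2}$ in the middle to sandwich) collapses the product to $\|s_v\|_\infty^4\,\ell^\top g\ell \lesssim c^4\|\ell\|_g^2$, contributing a term of order $c^2\|\ell\|_g$.

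For the second summand I will split it as $g^{-1}\D g(\ell)\,u$ with $u \triangleq g^{-1}\D g(v)v$, and bound the two layers in sequence. First,
$$\|u\|_g^2 = v^\top \D g(v)\,g^{-1}\D g(v)\,v \;\leq\; \|s_v\|_\infty^2\, v^\top g v \;\leq\; c^2 n$$
by a single application of self-concordance combined with the niceness bound on $\|v\|_g$. Next, the same operator estimate in direction $\ell$ gives
$$\|g^{-1}\D g(\ell)u\|_g^2 = u^\top \D g(\ell)\,g^{-1}\D g(\ell)\,u \;\leq\; \|s_\ell\|_\infty^2 \|u\|_g^2 \;\leq\; c^2 n\,\|s_\ell\|_\infty^2,$$
contributing $c\sqrt n\,\|s_\ell\|_\infty$. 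Adding the two contributions yields the first asserted inequality; the second follows by invoking $\|s_\ell\|_\infty \lesssim \|\ell\|_g$ (Lemma~\ref{lem:infwithgnorm}), which is absorbed into the $\lesssim$ notation.

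I do not expect a genuine obstacle here: each summand is a pair of nested quadratic forms to which first-order self-concordance in $\|\cdot\|_{x,\infty}$ applies in a black-box manner, and the factor $\sqrt n$ enters exactly once, through $\|v\|_g$ in the $u$ bound. The only bookkeeping point worth flagging is to measure $v$ in $\|\cdot\|_g$ but $\ell$ in $\|\cdot\|_{x,\infty}$ in the second summand; this asymmetry is what produces the mixed bound $c\sqrt n\|s_\ell\|_\infty$ rather than the weaker $c^2\sqrt n \|\ell\|_g$, and it will later be important when this lemma feeds into the $R_3$ estimate, where $\ell = \zeta(t)$ has small infinity norm along a nice Hamiltonian curve by Lemma~\ref{lem:parallelinfnorm}.
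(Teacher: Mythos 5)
Your proposal is correct and follows essentially the same route as the paper: the same Riemann-tensor expansion \eqref{eq:riemann}, triangle inequality, and repeated use of the first-order bound $g^{-1/2}\D g(u)g^{-1/2} \preccurlyeq \|s_u\|_\infty I$, yielding $\|s_v\|_\infty^2\|\ell\|_g$ for the first summand and $\|s_v\|_\infty\|s_\ell\|_\infty\|v\|_g$ for the second. Your two-layer splitting via $u = g^{-1}\D g(v)v$ is only a cosmetic reorganization of the paper's single nested quadratic form, and your closing remark about measuring $\ell$ in $\|\cdot\|_{x,\infty}$ to feed the $R_3$ estimate matches how the lemma is used in Lemma~\ref{lem:R3bound}.
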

\begin{proof}
Similar to Lemma~\ref{lem:Rfrobeniusbound}, using the form of Riemann expansion in Equation~\eqref{eq:riemann}:
\begin{align*}
    \|R(\ell, v)v\|_g &\leq (\ell^\top \D g(v)g^{-1}\D g(v)g^{-1}\D g(v)g^{-1}\D g(v)\ell)^{1/2} \\
    & + (v^\top\D g(v)g^{-1}\D g(\ell)g^{-1}\D g(\ell)g^{-1}\D g(v)v)^{1/2}\\
    & \leq \|v\|_\infty^2 (\ell^\top g\ell)^{1/2} + \|v\|_\infty\|\ell\|_\infty\|v\|_g \\
    & \leq c^2\|\ell\|_g + c\sqrt n \|\ell\|_\infty. 
\end{align*}
\end{proof}

Next, we state a similar mix norm bound for operator $M$.

\begin{lemma}[Operator norm of $M$]\label{lem:Moperatorbound}
we have
\begin{align*}
    \|M(x)\ell\|_g \leq \|\ell\|_g + (1 + \alpha\sqrt{\alpha_0}) \sqrt{n}  \|s_\ell\|_\infty.
\end{align*} 
\end{lemma}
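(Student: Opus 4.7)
The plan is to use duality: $\|M(x)\ell\|_g = \sup_{\|v\|_g=1}|\langle M(x)\ell, v\rangle_g|$, and decompose the inner product via Lemma~\ref{lem:mterms} into four pieces coming from the two halves of the bias $\mu = \nabla(\alpha\phi) + \tfrac{1}{2}\xi$: (i) $v^\top \D g(\nabla(\alpha\phi))\ell$, (ii) $v^\top \D^2(\alpha\phi)\ell$, (iii) $\tfrac{1}{2} v^\top \D g(\xi)\ell$, and (iv) $\tfrac{1}{2} v^\top \D(g\xi)\ell$. Each will be bounded by a multiple of $\|v\|_g$ so that the supremum gives the claim.

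For (i), I exploit the Hessian-manifold symmetry $\partial_i g_{jk}=\partial_j g_{ik}$ to rewrite $v^\top \D g(\nabla(\alpha\phi))\ell = (\nabla(\alpha\phi))^\top \D g(\ell) v$, and then combine Cauchy-Schwarz with the first-order infinity-norm self-concordance $\D g(\ell)g^{-1}\D g(\ell)\preccurlyeq \|s_\ell\|_\infty^2\, g$ from Lemma~\ref{lem:thirdorderself} to obtain $|(i)| \leq \|\nabla(\alpha\phi)\|_g \|s_\ell\|_\infty \|v\|_g \lesssim \alpha\sqrt{n\alpha_0}\,\|s_\ell\|_\infty \|v\|_g$, invoking Lemma~\ref{lem:philemma}. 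The identical scheme handles (iii), now with $\|\xi\|_g \leq \sqrt n$ from Lemma~\ref{lem:boundonxi}, giving $|(iii)| \lesssim \sqrt n\, \|s_\ell\|_\infty \|v\|_g$. Piece (ii) is immediate from $\D^2\phi = g$: it equals $\alpha v^\top g\ell = \alpha\langle v,\ell\rangle_g \leq \alpha\|\ell\|_g \|v\|_g$, contributing the $\|\ell\|_g$ term.

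The delicate piece is (iv). Lemma~\ref{lem:mterms} rewrites it as $-\tr(g^{-1}\D g(\ell)g^{-1}\D g(v)) + \tr(g^{-1}\D^2 g(\ell,v))$. A direct Cauchy-Schwarz in Frobenius/nuclear norms gives only $n\|s_\ell\|_\infty \|s_v\|_\infty$, which is too weak to convert into $\|v\|_g$-dependence. Instead, I represent one of the $g^{-1}$ factors in each trace as $\mathbb{E}_{v'\sim \mathcal{N}(0,g^{-1})}[v'v'^\top]$, turning the trace into an expectation of a scalar, and then use the total symmetry of $\partial^3\phi$ and $\partial^4\phi$ on the Hessian manifold to route both copies of $v'$ into the derivative slots: the first trace becomes $\mathbb{E}[\ell^\top \D g(v') g^{-1}\D g(v')\, v]$, and the second becomes $\mathbb{E}[\ell^\top \D^2 g(v',v')\, v]$. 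Cauchy-Schwarz in the $g$-norm together with first- and second-order infinity-norm self-concordance gives $\lesssim \mathbb{E} \|s_{v'}\|_\infty^2\, \|\ell\|_g \|v\|_g$, and since $\mathbb{E}_{v'}\|s_{v'}\|_\infty^2$ is polylogarithmic, both traces are $\lesssim \|\ell\|_g \|v\|_g$.

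Summing the four contributions and taking the sup over $\|v\|_g = 1$ yields the stated bound $\|M(x)\ell\|_g \lesssim \|\ell\|_g + (1 + \alpha\sqrt{\alpha_0})\sqrt n\, \|s_\ell\|_\infty$. The hard part is precisely the last step of (iv): without exploiting the full symmetry of $\partial^4\phi$ to move both Gaussian vectors $v'$ into the same slots as the metric derivatives, a Frobenius-nuclear estimate costs an extra factor of $\sqrt n$ on the $v$-side and the target bound fails. The other pieces reduce, by Hessian symmetry, to the already-established a priori estimates on $\|\nabla(\alpha\phi)\|_g$ and $\|\xi\|_g$.
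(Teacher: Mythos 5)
Your proposal is correct and follows essentially the same route as the paper's proof: the same decomposition of $\langle M(x)\ell, v\rangle$ via Lemma~\ref{lem:mterms}, the same Hessian-symmetry swap to reduce the $\D g(\nabla(\alpha\phi))$ and $\D g(\xi)$ pieces to $\|\nabla(\alpha\phi)\|_g$ and $\|\xi\|_g$ via infinity-norm self-concordance, and the same trick of writing one $g^{-1}$ in each trace as $\mathbb{E}_{v'\sim\mathcal N(0,g^{-1})}[v'v'^\top]$ and rerouting both copies of $v'$ into the derivative slots so that the trace terms cost only $\mathbb{E}\|s_{v'}\|_\infty^2\,\|\ell\|_g\|v\|_g$. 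The only cosmetic difference is your duality framing (sup over unit $v$) versus the paper's direct operator-norm bounds, which are equivalent.
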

\begin{proof}
Recall from Lemma~\eqref{lem:mterms}:
\begin{align*}
    \langle M(x)v_1, v_2\rangle = \langle \nabla_{v_1}(\nabla \phi) + \frac{1}{2}\nabla_{v_1}(g^{-1}\tr(g^{-1}\D g)), v_2\rangle.
\end{align*}
Starting from the first part of the term $\langle \nabla_{v_1}(\nabla \phi), v_2\rangle$:
\begin{align*}
    \|g^{-1}\D g(\nabla \phi)\ell\|_g & = 
    \tr^{1/2}(\ell^\top \D g(\nabla \phi)g^{-1}\D g(\nabla \phi) \ell)\\
    & = \tr^{1/2}((\nabla \phi)^\top \D g(\ell)g^{-1}\D g(\ell)\nabla \phi)\\
    & \leq \|s_\ell\|_\infty \|\nabla \phi\|_g \leq \alpha \sqrt{n\alpha_0}\|s_\ell\|_{\infty}.
\end{align*}
Note that for the second part, $\D^2 \phi = g$, hence the corresponding operator is the identity and has operator norm one.
\\
Next, we move on to the second term of $M$ in~\eqref{eq:secondpartt}. For the first part of it from Equation~\eqref{eq:nonphipart}, we have:
\begin{align*}
    \|g^{-1}\D g(\xi)\ell\|_g &= \sqrt{\ell^\top \D g(\xi)g^{-1}\D g(\xi)\ell}\\
    & = \sqrt{\xi^\top \D g(\ell)g^{-1}Dg(\ell)\xi}\\
    & \leq \|s_\ell\|_\infty\sqrt{\xi^\top g \xi} = \|s_\ell\|_\infty \sqrt n.
\end{align*}
where we used Lemma~\ref{lem:zetabound}.
For the second part, note that from Equation~\eqref{eq:Mlastterm}:
\begin{align}
    v_2^\top \D(g\xi) v_1 = \tr(g^{-1}\D g(v_1)g^{-1}\D g(v_2)) + \tr(g^{-1}\D g(v_1, v_2)).\label{eq:tmp}
\end{align}
Starting from the first part, now we rewrite this term in a better way as

\begin{align*}
\tr(g^{-1}\D g(v_1)g^{-1}\D g(v_2)) = \mathbb E_v \tr(vv^\top\D g(v_1)g^{-1}\D g(v_2))
= \mathbb E_v v^\top \D g(v_1)g^{-1}\D g(v_2)v = \mathbb E_v v_1^\top \D g(v)g^{-1}\D g(v)v_2.
\end{align*}
Now due to Lemma~\ref{lem:operatornormbound1} the norm of the corresponding operator is one:
\begin{align}
    \mathbb E\|g^{-1}\D g(v)g^{-1}\D g(v)\ell\|_g 
     \leq \mathbb E_v \|s_v\|_\infty^2 \|\ell\|_g \lesssim \|\ell\|_g.\label{eq:opnormxi}
\end{align}
For the second part in~\eqref{eq:tmp}, we write it as
\begin{align*}
    \tr(g^{-1}\D g(v_1, v_2)) = \mathbb E_v tr(vv^\top \D g(v_1, v_2))
    = \mathbb E v_1\D g(v,v)v_2.
\end{align*}
Hence, the operator norm is bounded as
\begin{align*}
    \mathbb E\|g^{-1}\D g(v,v)\ell\|_g \leq \mathbb E \|s_v\|_\infty^2\|\ell\|_g \lesssim\|\ell\|_g.
\end{align*}
\end{proof}


Next, we show a bound on the derivative of the infinity norm of the parallel transported vector $\zeta$ given that we know the infinity norm of $\gamma'$ is constant (randomness + stability).
\begin{lemma}[Infinity norm of the parallel transport]\label{lem:parallelinfnorm}
Given $\delta \leq \frac{1}{c}$ and a $(c, \delta)$-nice Hamiltonian curve $\gamma$, we have for $t \leq \delta$:
\begin{align*}
    \|s_{\zeta(t)}\|_\infty \leq \delta c \sqrt n,
\end{align*}
where $\zeta$ is the parallel transport of $\gamma'(0)$ along the curve.
\end{lemma}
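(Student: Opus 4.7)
The plan is to derive a linear differential inequality for $u(t):=\|s_{x,\zeta(t)}\|_\infty$ along the Hamiltonian curve and close with Gronwall's inequality. Because $(\mathcal M,g)$ is a Hessian manifold, the third-order derivatives $\partial_k g_{ij}$ are fully symmetric in all three indices, so $\Gamma^k_{ij}=\tfrac12 g^{kl}\partial_l g_{ij}$ and the parallel transport equation $\nabla_{\gamma'(t)}\zeta(t)=0$ reduces to the explicit linear ODE
\[
\frac{d\zeta}{dt}=-\tfrac12\,g(\gamma(t))^{-1}\,\D g(\gamma'(t))\,\zeta(t).
\]
Differentiating $s_{x,\zeta}=\A\zeta$ along $\gamma$ and using $\frac{d}{dt}\A=-\diag{s_{x,\gamma'(t)}}\,\A$, the product rule yields
\[
\frac{d}{dt}s_{x,\zeta(t)} \;=\; -\diag{s_{x,\gamma'(t)}}\,s_{x,\zeta(t)} \;-\; \tfrac12\,\A\,g^{-1}\,\D g(\gamma'(t))\,\zeta(t).
\]

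Next I would estimate each term in $\ell_\infty$. The first satisfies $\|\diag{s_{x,\gamma'}}\,s_{x,\zeta}\|_\infty\le\|s_{x,\gamma'}\|_\infty\,u(t)\le c\,u(t)$ by the $(c,\delta)$-niceness of $\gamma$. For the second, set $w:=g^{-1}\D g(\gamma')\zeta$; then $\|\A w\|_\infty=\|s_{x,w}\|_\infty=\|w\|_{x,\infty}\lesssim\|w\|_g$ by Lemma~\ref{lem:infwithgnorm}. The first-order infinity-norm self-concordance of Lemma~\ref{lem:thirdorderself} gives $-\|\gamma'\|_{x,\infty}\,g\preccurlyeq\D g(\gamma')\preccurlyeq\|\gamma'\|_{x,\infty}\,g$, so the $g$-operator norm of $g^{-1}\D g(\gamma')$ is at most $\|s_{x,\gamma'}\|_\infty\le c$. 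Since parallel transport preserves the $g$-inner product, $\|\zeta(t)\|_g=\|\gamma'(0)\|_g\le c\sqrt n$ by niceness of $\gamma$, and hence $\|w\|_g\le c\cdot c\sqrt n=c^2\sqrt n$. Combining, $u$ satisfies the differential inequality $u'(t)\le c\,u(t)+O(c^2\sqrt n)$ with initial datum $u(0)=\|s_{x,\gamma'(0)}\|_\infty\le c$.

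Applying Gronwall's inequality on $[0,\delta]$ with $\delta\le 1/c$ (so the integrating factor $e^{ct}$ is $O(1)$) yields
\[
u(t)\;\le\; O\!\bigl(c+c^2\sqrt n\,t\bigr)\;=\;O(c\sqrt n\,\delta)
\]
in the regime $\delta\gtrsim 1/\sqrt n$ relevant to the mixing-time application, where the $c$ initial term is absorbed into $c\sqrt n\,\delta$. This is exactly the bound $\|s_{x,\zeta(t)}\|_\infty\le\delta c\sqrt n$ claimed.

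The main obstacle I anticipate is the comparison $\|\A w\|_\infty\lesssim\|w\|_g$: one must check that the constant in Lemma~\ref{lem:infwithgnorm} is independent of $p$ and of the normalizing factor $\alpha_0$ appearing in the hybrid metric. A fall-back explicit route uses $g\succeq\alpha_0 g_2$ together with $\|s_{x,v}\|_\infty\le\|s_{x,v}\|_2=\sqrt{m/n}\,\|v\|_{g_2}$; unfolding the definition of $\alpha_0=(m/n)^{(2/p)/(1+2/p)}$ shows that $\|v\|_{x,\infty}\lesssim\|v\|_g$ with a controllable constant, so the Gronwall bound closes cleanly. A minor secondary point is to justify the Christoffel computation on the open polytope, which is immediate because the metric is a global Hessian of a smooth convex function there.
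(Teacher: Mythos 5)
Your proposal is correct and follows essentially the same route as the paper: open the parallel-transport equation on the Hessian manifold, differentiate $\A\zeta$ to get the two terms $-\diag{s_{x,\gamma'}}s_{x,\zeta}$ and $-\tfrac12 \A g^{-1}\D g(\gamma')\zeta$, bound the first by niceness and the second via Lemma~\ref{lem:infwithgnorm} together with first-order infinity-norm self-concordance and norm preservation of parallel transport, then integrate the resulting linear ODE over $[0,\delta]$ with $\delta\le 1/c$. The only (minor) divergence is bookkeeping in $c$: you use $\|\zeta\|_g\le c\sqrt n$ from the niceness definition and hence land on $c^2\sqrt n\,\delta$ before absorbing it into $c\sqrt n\,\delta$, whereas the paper uses $\|\zeta(0)\|_g\le\sqrt n$ at that step; this factor-of-$c$ (polylogarithmic) looseness, like the absorption of the initial value $\|s_{\zeta(0)}\|_\infty\le c$, is shared with the paper's own proof.
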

\begin{proof}
As $\zeta$ is the parallel transport vector, from opening up the covariant derivative being zero:
\begin{align*}
    \frac{d}{dt}(A\zeta) &= A\zeta' - (A\zeta)\odot(A\gamma')\\
    &= -\frac{1}{2}Ag^{-1}\D g(\gamma')\zeta - (A\zeta)\odot (A\gamma'),
\end{align*}
which implies using Lemma~\ref{lem:infwithgnorm}:
\begin{align*}
   \|\frac{d}{dt}(A\zeta)\|_\infty & \lesssim \|Ag^{-1}\D g(\gamma')\zeta\|_\infty + \|s_{\gamma'}\|_\infty \|s_{\zeta}\|_\infty\\
   & \lesssim \|g^{-1}\D g(\gamma')\zeta\|_g + \|s_{\gamma'}\|_\infty \|s_{\zeta}\|_\infty\\
   & \lesssim c\|\zeta\|_g + c\|s_{\zeta}\|_\infty \leq c\sqrt n + c\|s_\zeta\|_\infty,
\end{align*}

where we used $\|s_{\gamma'}\|_\infty \lesssim c$ from the definition of niceness and the fact that parallel transport preserves the norm of $\zeta$ and $\|\zeta(0)\|_g = \|\gamma'(0)\|_g \leq \sqrt n$.
This ODE implies to avoid blow up we should pick $\delta \lesssim \frac{1}{c}$. Under this condition, we further get
\begin{align*}
    \|s_\zeta\|_\infty \lesssim \delta c \sqrt n,
\end{align*}
which completes the proof.
\end{proof}

In the next section, we show the stability of the infinity norm and the manifold norm of $\gamma'$ along the curve $c_t(s)$ for $s = 0$ to time $\frac{1}{n^{1/3}}$, where $c_t(s) = \gamma_s(t)$ is defined for a fixed time $t$. 

\section{Stability of Hamiltonian curves}\label{sec:stability}
In this section, we show that the niceness property holds for Hamiltonian curves with high probability, and is stable in a family of Hamiltonian curves.

\subsection{Stability of the niceness property}
Here we show that niceness property of Hamiltonian curves is stable. 
\begin{lemma}[Stability of norms]
For a family of Hamiltonian curves $\gamma_s(t)$, given that $\gamma_0$ is $(c,\delta)$-nice, then $\gamma_s(t)$ is  also $(O(c), \delta)$-nice for all $0 \leq s \leq \delta$. In other words, given that for all $0 \leq t \leq \delta$ we have $\|s_{\gamma'_0(t)}\|_\infty \leq c$ and $\|\gamma'_0(t)\|_g \leq \sqrt n$, then for all $0 \leq t \leq \delta$ and $0\leq s\leq \delta$ under the condition 
\begin{align*}
    \delta^2 \lesssim \frac{1}{(c^2 + \alpha 
    \sqrt \alpha_0)\sqrt n}.
\end{align*}  
we have:
\begin{align*}
    &\|s_{\gamma'_s(t)}\|_\infty \leq 2c,\\
    &\|\gamma'_s(t)\|_g \leq 2\sqrt n.
\end{align*}
\end{lemma}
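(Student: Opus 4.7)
The plan is a bootstrap (continuity) argument in the family parameter $s$. Define $s^{\star} \in [0,\delta]$ to be the supremum of $s'$ such that $\|s_{\gamma'_s(t)}\|_\infty \le 2c$ and $\|\gamma'_s(t)\|_g \le 2\sqrt n$ hold for every $s \in [0,s']$ and $t \in [0,\delta]$. By hypothesis these bounds hold at $s=0$, so by continuity $s^{\star} > 0$; I would show $s^{\star} = \delta$ by ruling out the scenario $s^{\star} < \delta$ (at which one of the constants would be tight).

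For $s \in [0,s^{\star}]$, each $\gamma_s$ is $(2c,\delta)$-nice, so Theorem~\ref{thm:parameterbounds} yields $(R_1,R_2,R_3)$-normality with $R_1 \lesssim (c^2 + \alpha\sqrt{\alpha_0})\sqrt n$. To measure how $\gamma'_s(t)$ deviates from $\gamma'_0(t)$, I would pass through the Hamiltonian field $\tilde J_{s}(t) = \partial_s \gamma_s(t)$, using the commuting-derivatives identity $D_s \gamma'_s(t) = D_t \tilde J_s(t)$ together with its defining ODE $D_t^2 \tilde J_s = \Phi_s(t)\tilde J_s$ from Lemma~\ref{lem:hmcfieldODE}. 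Because $\|\Phi_s(t)\|_{\mathrm{op}} \le \|\Phi_s(t)\|_F \le R_1$, a Gronwall estimate on this second-order ODE gives
\[
\|D_t \tilde J_s(t)\|_g \;\lesssim\; \|D_t \tilde J_s(0)\|_g + t R_1 \|\tilde J_s(0)\|_g,
\]
where $\|\tilde J_s(0)\|_g = 1$ since $\gamma_s(0)$ is a unit-speed geodesic in $s$. Integrating the inequality $\|D_s\gamma'_s(t)\|_g \lesssim 1 + \delta R_1$ over $s \in [0,s^{\star}]$ then bounds $\|\gamma'_{s^{\star}}(t)\|_g$ by $\|\gamma'_0(t)\|_g$ plus $\delta(1+\delta R_1) \ll \sqrt n$ under the hypothesis $\delta^2(c^2+\alpha\sqrt{\alpha_0})\sqrt n \lesssim 1$; this strictly beats $2\sqrt n$.

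The harder half is the infinity-norm estimate: a naive reduction $\|\cdot\|_\infty \le \|\cdot\|_g$ squanders a factor $\sqrt n$ and breaks the budget on $\delta$. Instead, I would track the quantity $s_{\gamma_s(t),\gamma'_s(t)} = A_{\gamma_s(t)} \gamma'_s(t)$ in $s$ directly, mirroring the ODE approach used for parallel transport in Lemma~\ref{lem:parallelinfnorm}. Differentiating entrywise produces a linear-in-$s$ ODE whose forcing terms are of the form $(A\tilde J_s)\odot s_{\gamma'_s}$ and $A\,D_s\gamma'_s(t)$; the infinity-norm third-order self-concordance of the hybrid barrier (Lemma~\ref{lem:thirdorderself}) bounds these by $O(c)\|s_{\gamma'_s}\|_\infty + O(\|A\tilde J_s\|_\infty)$, and an analogous ODE for $A\tilde J_s$ with $\|A\tilde J_s(0)\|_\infty \lesssim 1$ closes the system. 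The main obstacle is precisely this step---propagating the infinity norm without passing through $\|\cdot\|_g$---and it is exactly the reason the infinity-norm self-concordance was developed. Integrating in $s$ under $\delta \le 1/\sqrt{(c^2+\alpha\sqrt{\alpha_0})\sqrt n}$ gives a deviation $\le c$ in the infinity norm, strictly smaller than $c$. Combining both bounds yields strict inequalities at $s = s^{\star}$, contradicting maximality and completing the argument.
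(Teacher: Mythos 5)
Your overall skeleton (a continuity/bootstrap argument in $s$, feeding the $(2c,\delta)$-niceness into Theorem~\ref{thm:parameterbounds} to get the Frobenius bound $R_1$, then controlling $D_s\gamma_s'(t)$ through the Hamiltonian-field ODE and integrating in $s$) is the same as the paper's. But there is a genuine gap in the quantitative step. The family $\gamma_s(t)$ here is, by the paper's definition, a \emph{fixed-endpoint} family: all curves satisfy $\gamma_s(\delta)=y$, and the initial velocities $v_{\gamma_s}=\gamma_s'(0)$ are determined implicitly by that constraint. Consequently the field $\tilde J_s$ solves a two-point boundary value problem ($\tilde J_s(0)$ unit, $\tilde J_s(\delta)=0$), and the quantity $\|\D_t\tilde J_s(0)\|_g=\|D_s\gamma_s'(0)\|_g$ that your Gronwall bound takes as input is \emph{not} given and is not $O(1)$: to keep hitting the same target $y$ after moving the base point, the initial velocity must change at rate of order $1/\delta$ (this is exactly the content of Lemma~\ref{lem:derivativeoftangent} and of Lemma 23/32 in \cite{lee2018convergence}). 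Your proposal silently treats the problem as an initial value problem with bounded initial data, so the claimed estimate $\|D_s\gamma_s'(t)\|_g\lesssim 1+\delta R_1$ is unjustified; the correct bound, obtained in the paper by invoking the boundary-value lemma (Lemma 23 in \cite{lee2018convergence}, usable precisely because $\delta^2R_1\lesssim 1$), is $\|\nabla_{\partial_s\gamma}\gamma_s'(t)\|_g\lesssim 1/\delta$. Fortunately $1/\delta$ integrated over $s\in[0,\delta]$ still gives an $O(1)$ deviation, so the conclusion survives, but the step you label as a routine Gronwall estimate is where the real work lies, and your argument as written does not supply it.

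A secondary point: your worry that passing through $\|\cdot\|_\infty\le\|\cdot\|_g$ "squanders a factor $\sqrt n$" misdiagnoses the difficulty. The paper applies exactly that comparison (Lemma~\ref{lem:infwithgnorm}), but only to the \emph{derivative} term: it writes an ODE in $s$ for $A\gamma_s'$ and bounds
\begin{align*}
\Big\|\tfrac{d}{ds}(A\gamma_s')\Big\|_\infty \;\lesssim\; \|\nabla_{\partial_s\gamma}\gamma_s'\|_g+\|g^{-1}\D g(\gamma_s')\,\partial_s\gamma_s\|_g+\|s_{\partial_s\gamma}\|_\infty\|s_{\gamma_s'}\|_\infty \;\lesssim\; \tfrac{1}{\delta}+c,
\end{align*}
using only first-order infinity-norm self-concordance; integrating over $s\le\delta$ changes the infinity norm by $O(1+\delta c)\le c$, with no $\sqrt n$ loss. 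So the coupled system for $A\tilde J_s$ you propose is unnecessary (and would itself face the same boundary-value issue); once the $1/\delta$ bound on the covariant $s$-derivative is in hand, the crude norm comparison on that term closes both the $\|\cdot\|_\infty$ and $\|\cdot\|_g$ bootstrap bounds, which is how the paper concludes.
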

\begin{proof}
Suppose we denote the time until which we run the Hamiltonian curve by $\delta$, i.e. $0 \leq t \leq \delta$. 
Suppose the argument is not true, and consider the set $S$ to be the times $0\leq s\leq 1/n^{1/3}$ for which $f(s) = sup_{0 \leq t \leq \delta}\|s_{\gamma'(t,s)}\|_\infty < 2c$. Since $f(s)$ is continuous, the set $S$ is open. Hence, if we consider the infimum $s_0$ of times $s$ for which $f(s) \geq 1$, then the infimum is attained, i.e. $f(s_0) = 2c$, while $f(s) < 2c$ for every time $s < s_0$. 
Exactly the same way we can define the first time $s_1$ for which defining the function $f_2(s) = \sup_{0\leq t\leq \delta}\|\gamma'_s(t)\|_g$ we have $f_2(s_1) = 2\sqrt n$ while $f_2(s) < 2\sqrt n$ for $s < s_1$.
\\
First assume the case where $s_0 \leq s_1$.
Now again from the continuity of $f$ and the fact that $[0,\delta]$ is a compact set, its supremum is attained in some time $t_0$. This means
\begin{align}
    &\|s_{\gamma'_s(t_0)}\|_\infty < 2c,\\
    &\|\gamma'_s(t_0)\|_g < 2\sqrt n,\label{eq:inductivenormbounds}
\end{align}
for all $s < s_0$, while $\|s_{\gamma'_{s_0}(t_0)}\|_\infty = 2c$. 
But now using this infinity norm bound for times $s \leq s_0$ (for the fixed time $t_0$), we can obtain an Frobenius norm bound for $\Phi(t_0)(s)$ from Lemma in~\ref{lem:R1bound} as
\begin{align*}
    \|\Phi(t_0)(s)\|_F \lesssim R_1 = (c^2 + \alpha\sqrt{\alpha_0})\sqrt n.
\end{align*}
. Now we can apply Lemma 23 in~\cite{lee2018convergence} because condition $\delta^2 R_1 \lesssim 1$ is satisfied, so we get
\begin{align*}
    \|\nabla_{c'(s)}\gamma'(t_0)\|_g \leq 1/\delta,
\end{align*}
for every $s < s_0$, where we are using the fact that $\|c'(s)\|_g = 1$. 
But note that for $s < s_0$ we can write
\begin{align*}
    \|\frac{d}{ds}(A\gamma')\|_\infty & \leq \|A\nabla_{c'}\gamma' - Ag^{-1}\D g(c')\gamma'\|_\infty + \|s_{c'}\|_\infty \|s_{\gamma'}\|_\infty\\
   & \leq \|\nabla_{c'}\gamma'\|_g + \|g^{-1}Dg(c')\gamma'\|_g + \|s_{c'}\|_\infty \|s_{\gamma'}\|_\infty\\
   & = \|\nabla_{c'}\gamma'\|_g + \|g^{-1}\D g(\gamma')c'\|_g + \|s_{c'}\|_\infty \|s_{\gamma'}\|_\infty\\
   & \leq \|\nabla_{c'}\gamma'\|_g + \|s_\gamma'\|_\infty\|c'\|_g + \|c'\|_g \|s_{\gamma'}\|_\infty\\
   & \leq 1/\delta + c + c \lesssim \frac{1}{\delta} + c,
\end{align*}
where the first line follows from opening the definition of covariant derivative.
Finally, this ODE implies that $\|s_{\gamma'_s(t_0)}\|_\infty \lesssim s  (\frac{1}{\delta} + c) < 2c$ for all times $s < s_0$ (with the correrct choice of constants), which from continuity holds also for time $s_0$. But this contradicts $|s_{\gamma'_{s_0}(t_0)}\|_\infty = 2c$, which completes the proof for the case $s_0 \leq s_1$. Note that we the use of this condition in the above proof is that the $g$-norm condition does not fail until time $s_0$. 
\\
Next, we consider the latter case $s_1 < s_0$. Similar to the above argument, until time $s\leq s_1$ we have the Frobenius bound on $\Phi(t)$ from Lemma~\ref{lem:R1bound}, and again from Lemma 23 in~\cite{lee2018convergence} as $\delta^2 \leq \frac{1}{\sqrt n (c^2+\alpha \sqrt \alpha_0)} = \frac{1}{R_1}$, we have 
$$\|\nabla_{c'(s)} \gamma'(t_0)\| \leq 1/\delta,$$
for $s \leq s_1$.
Now we write an ODE to control the norm of $\|\gamma'_{s_1}(t_0)\|_g$ where $t_0$ is defined in the same way as the previous case, and get a contradiction:
\begin{align*}
    \frac{d}{ds}\|\gamma'\|_g^2 = 2\langle \nabla_{\frac{d}{ds}\gamma(t_0,s)} \gamma', \gamma'\rangle \leq 2\|\gamma'\|\|\nabla_{\frac{d}{ds}\gamma(t_0,s)} \gamma'\| \leq \frac{2}{\delta}\|\gamma'\|_g,
\end{align*}
which implies
\begin{align*}
    \frac{d}{ds}\|\gamma'\|_g \leq \frac{2}{\delta}.
\end{align*}
Therefore, at time $s = \delta/4$ the change in $\|\gamma'\|_g$ from its initial value is at most $1/2 < \sqrt n/2$, which means the value of $\|\gamma'\|_g$ should have remained below $2\sqrt n$. The contradiction completes the proof for the second case.
\end{proof}

Next, we show a helper lemma regarding the derivative of $\gamma'_s(t)$ in direction $\frac{d}{ds}\gamma_s(t)$:
\begin{lemma}
On a $(c, \delta)$-nice Hamiltonian curve with $\delta \leq \frac{1}{n^{1/4}c}$, We have:
\begin{align*}
    \|\frac{d}{ds}\gamma'(t)(s)\| \leq 1/\delta. 
\end{align*}
\end{lemma}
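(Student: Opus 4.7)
The plan is to recognize $\tfrac{d}{ds}\gamma'(t)(s)$ as the covariant derivative $D_s \partial_t \gamma_s(t)$, which by the torsion-free property of the Levi--Civita connection equals $D_t \partial_s \gamma_s(t) = \tilde J'(t)$, where $\tilde J(t) = \partial_s \gamma_s(t)$ is the Hamiltonian variation field. By Lemma~\ref{lem:hmcfieldODE}, $\tilde J$ satisfies the second-order ODE $\tilde J''(t) = \Phi(t)\tilde J(t)$ along $\gamma$, with initial conditions $\tilde J(0) = c'(s)$ (the geodesic tangent in $s$, of unit $g$-norm) and $\tilde J'(0) = D_s\gamma'_s(0)$ (the rate of change of the initial velocity as $s$ varies, which in the context where we apply this lemma is controlled by $O(1)$).

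First I would verify the hypotheses of Lemma 23 in~\cite{lee2018convergence}, the standard matrix Grönwall estimate for the Hamiltonian field ODE. Since $\gamma$ is $(c,\delta)$-nice, Lemma~\ref{lem:R1bound} gives the Frobenius bound $\|\Phi(t)\|_F \leq R_1 = (c^2 + \alpha\sqrt{\alpha_0})\sqrt n$ for every $t \in [0,\delta]$. Under the assumed step size $\delta \leq 1/(n^{1/4}c)$, and under the running regime of our mixing analysis in which the $\alpha\sqrt{\alpha_0}$ contribution is absorbed into $c^2$, we have $\delta^2 R_1 \lesssim 1$, which is the smallness condition that allows a Grönwall-type propagation of the initial data.

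Second, I would invoke Lemma 23 of~\cite{lee2018convergence} to obtain joint control of $\|\tilde J(t)\|_g$ and $\|\tilde J'(t)\|_g$ across $[0,\delta]$: the smallness $\delta^2 R_1 \lesssim 1$ implies that both $\|\tilde J(t)\|_g$ and $\delta\|\tilde J'(t)\|_g$ stay within a constant factor of their initial values. Since $\|\tilde J(0)\|_g = 1$, this yields $\|\tilde J'(t)\|_g \lesssim 1/\delta$, which is exactly $\|D_s \gamma'_s(t)\|_g \leq 1/\delta$.

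The main obstacle is a bookkeeping one rather than a conceptual one: one must be careful that the cited matrix ODE estimate treats the full joint system $(\tilde J,\tilde J')$ and that the initial velocity variation $\tilde J'(0)$ is not blown up by the setup of the family $\gamma_s$. In our usage this is automatic because $c$ is unit-speed and the family of Hamiltonian curves is parameterized so that $\tilde J'(0)$ is of order $1/\delta$ or smaller; no additional self-concordance input beyond what Theorem~\ref{thm:parameterbounds} has already supplied through $R_1$ is required.
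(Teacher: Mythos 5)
Your first step --- invoking the $(c,\delta)$-niceness, the Frobenius bound $R_1$ from Lemma~\ref{lem:R1bound}, and Lemma 23 of~\cite{lee2018convergence} to conclude $\|\nabla_{\frac{d}{ds}\gamma}\gamma'(t)\|_g \lesssim 1/\delta$ --- is exactly the first half of the paper's argument. But there is a genuine gap after that: you identify $\frac{d}{ds}\gamma'(t)(s)$ with the covariant derivative $\D_s\partial_t\gamma_s(t)$ and stop there, whereas the quantity the lemma (and its later uses, e.g.\ bounding $\|\D v(z)\|_g$ in the Ricci-change estimates via Lemma~\ref{lem:vderivative}) refers to is the \emph{ordinary} derivative in the Euclidean chart. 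These differ by a Christoffel term: in the chart,
\begin{align*}
  \nabla_{z}\gamma' = \frac{d}{ds}\gamma' + \tfrac{1}{2}g^{-1}\D g(\gamma')\,z,
  \qquad z = \tfrac{d}{ds}\gamma_s(t),
\end{align*}
so one must additionally bound $\|g^{-1}\D g(\gamma')z\|_g$. This is Lemma~\ref{lem:covarianttonormal}, and it is precisely where the infinity-norm self-concordance enters: the correction is at most $\|s_{\gamma'}\|_\infty\|z\|_g \le c$, using the niceness bound on $\|s_{\gamma'_s(t)}\|_\infty$ --- which for $s>0$ must itself be propagated along the family via Lemma~\ref{lem:stabilityinfnorm}, another ingredient your proposal omits --- and then absorbed into $1/\delta$ because the hypothesis $\delta \le \frac{1}{n^{1/4}c}$ gives $c \le 1/\delta$. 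Without this conversion your argument proves a bound only on the covariant derivative, not on the stated quantity.

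Two smaller remarks. Your discussion of $\tilde J'(0)$ being ``controlled by $O(1)$'' is not needed and is slightly off: in the fixed-endpoint family one does not assume smallness of $\tilde J'(0)$; Lemma 23 of~\cite{lee2018convergence} delivers the $1/\delta$ bound directly from the boundary data $\|\tilde J(0)\|_g=1$, $\tilde J(\delta)=0$ under $\delta^2 R_1 \lesssim 1$. And your hedge about absorbing the $\alpha\sqrt{\alpha_0}$ contribution into $c^2$ mirrors a looseness already present in the paper (which in Lemma~\ref{lem:vderivative} simply writes $R_1 \le \sqrt n$), so that is not the issue; the missing covariant-to-ordinary conversion is.
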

\begin{proof}
Note that from Lemma~\ref{lem:stabilityinfnorm} we have $\|s_{\gamma'_s(t_0)}\|_\infty \leq c$. Hence, from Lemma~\ref{lem:R1bound}, we can apply Lemma 23 in~\cite{lee2018convergence} to obtain
\begin{align}
    \|\nabla_{\frac{d}{ds}\gamma}\gamma'(t)\|_g \lesssim \frac{1}{\delta}.\label{eq:tmp3}
\end{align}
But now from Lemma~\ref{lem:covarianttonormal}, setting $v = \gamma'(t,s)$ and $z = \frac{d}{ds}\gamma(t,s)$:
\begin{align*}
    \|\frac{d}{ds}\gamma'(t,s)\|_{g} \leq \|\gamma'\|_\infty\|\frac{d}{ds}\gamma\|_{g} + \|\nabla_{\frac{d}{ds}\gamma}\gamma'(t)\|_{g}.
\end{align*}
From Lemma~\ref{lem:stabilityinfnorm}, we have $\|\gamma'\|_\infty \leq c$ and note that from our assumption on the $s$ parameterization, $\|\frac{d}{ds}\gamma\|_g = 1$, which combined with Equation~\eqref{eq:tmp3} finishes the proof.  
\end{proof}

\subsection{High probability bound on norms along the Hamiltonian curve}
First, we show a norm bound for the $g$ norm along the Hamiltonian curve, given a bound at initial time.

Recall the ODE related to the RHMC for curve $\gamma$ is
\begin{align*}
    D_t^2 \gamma(t) = \mu(\gamma).
\end{align*}
Opening this up
\begin{align}
    \gamma''(t) + \frac{1}{2}g^{-1}Dg(\gamma')\gamma' = \mu.\label{eq:hmcode}
\end{align}
First, we show a non-random bound on the norm $\|\gamma'\|_g$ given a bound at time zero.

\begin{lemma}[Boundedness of manifold norm along the Hamiltonian curve]\label{lem:gammaprimenorm}
Suppose $\|\gamma'(0)\|_g\leq \sqrt n$. 
Then for time $t \leq 1$ we have
\begin{align*}
    \|\gamma'(t)\|_g \leq \sqrt n.
\end{align*}
\end{lemma}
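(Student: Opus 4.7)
The plan is to exploit conservation of the Hamiltonian $H(x,v)=f(x)+\tfrac12 v^\top g^{-1}(x)v+\tfrac12\log\det g(x)$ along the flow. In the manifold parameterization of Lemma~\ref{lem:ham_curve}, the momentum satisfies $v=g(x)\gamma'$, so the kinetic term is $\tfrac12 v^\top g^{-1}(x)v=\tfrac12\|\gamma'(t)\|_g^2$. Since Hamilton's equations preserve $H$ and $H$ has no explicit $t$-dependence, equating $H(\gamma(t),v(t))=H(\gamma(0),v(0))$ and rearranging yields
\[
\|\gamma'(t)\|_g^2 \;=\; \|\gamma'(0)\|_g^2 \;-\; 2\bigl(f(\gamma(t))-f(\gamma(0))\bigr) \;-\; \bigl(\log\det g(\gamma(t))-\log\det g(\gamma(0))\bigr).
\]
So the task reduces to controlling the two correction terms along the curve.

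For the $f=\alpha\phi$ term I would write $f(\gamma(t))-f(\gamma(0))=\int_0^t\langle\nabla f,\gamma'(s)\rangle_g\,ds$ and apply Cauchy--Schwarz together with Lemma~\ref{lem:philemma}, which gives $\|\nabla(\alpha\phi)\|_g\le \alpha\sqrt{n\alpha_0}$. For the $\log\det g$ term, a direct computation on Hessian manifolds using the symmetry $\partial_k g_{ij}=\partial_i g_{kj}$ shows
\[
\frac{d}{dt}\log\det g(\gamma(t))=\operatorname{tr}\!\bigl(g^{-1}\D g(\gamma'(t))\bigr)=\langle\xi(\gamma(t)),\gamma'(t)\rangle_g,
\]
where $\xi=g^{-1}\operatorname{tr}(g^{-1}\D g)$ is the same vector field appearing in the drift $\mu$. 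Lemma~\ref{lem:boundonxi} supplies $\|\xi\|_g\le\sqrt n$, so another Cauchy--Schwarz gives $|\tfrac{d}{dt}\log\det g(\gamma(t))|\le \sqrt n\,\|\gamma'(t)\|_g$.

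Combining these two estimates with the conservation identity yields
\[
\|\gamma'(t)\|_g^2 \;\le\; \|\gamma'(0)\|_g^2 \;+\; C\bigl(1+\alpha\sqrt{\alpha_0}\bigr)\sqrt n\int_0^t\|\gamma'(s)\|_g\,ds.
\]
Setting $u(t)=\|\gamma'(t)\|_g$, this is a Gronwall-type inequality of the form $u(t)^2\le u(0)^2+C'\sqrt n\int_0^t u(s)ds$. A standard bootstrapping (or direct integration of the corresponding Bernoulli inequality for $\sqrt{u(0)^2+C'\sqrt n \int u}$) gives $u(t)\le u(0)+O(\sqrt n\,t)$ on a short time interval, which under the hypothesis $u(0)\le\sqrt n$ and the stated time horizon $t\le 1$ (with the implicit constants appropriate to the regime of $\alpha,\alpha_0$) yields $\|\gamma'(t)\|_g\le\sqrt n$ up to a constant absorbed into the $\sqrt n$.

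The only mild subtlety is the Gronwall step: the inequality is quadratic in $u$ on the left and linear on the right, so one must either square-root-linearize or first derive a linear bound $u(t)\le 2u(0)$ on a tiny interval and bootstrap. No further geometric input is needed — all the heavy lifting sits in the two a priori bounds $\|\nabla(\alpha\phi)\|_g\lesssim\alpha\sqrt{n\alpha_0}$ and $\|\xi\|_g\lesssim\sqrt n$ that have already been established.
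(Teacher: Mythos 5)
Your proposal is correct, and it reaches exactly the bound the paper itself obtains, namely $\|\gamma'(t)\|_g\le(1+t)(1+\alpha\sqrt{\alpha_0})\sqrt n$ — the literal ``$\le\sqrt n$'' of the statement holds only up to the $(1+\alpha\sqrt{\alpha_0})$ factor in the paper's own proof as well, so that looseness is not a defect of your argument. The route is genuinely different in packaging, though the analytic content coincides. The paper works directly with the manifold form of the ODE: from $\nabla_{\gamma'}\gamma'=\mu$ and metric compatibility it gets $\frac{d}{dt}\|\gamma'\|_g^2=2\langle\mu,\gamma'\rangle_g\le 2\|\mu\|_g\|\gamma'\|_g$, invokes Lemma~\ref{lem:biasnorm} for $\|\mu\|_g\le(1+\alpha\sqrt{\alpha_0})\sqrt n$, and integrates a linear differential inequality — no Gronwall subtlety at all. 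You instead integrate first, via conservation of $H$, and bound the two potential terms separately using Lemma~\ref{lem:philemma} for $\|\nabla(\alpha\phi)\|_g$ and Lemma~\ref{lem:boundonxi} for $\|\xi\|_g$; these are precisely the two ingredients inside the proof of Lemma~\ref{lem:biasnorm}, and your identity $\frac{d}{dt}\log\det g(\gamma(t))=\langle\xi,\gamma'\rangle_g$ is correct. What your version buys is that it never touches the covariant derivative or metric compatibility — only the Euclidean-chart Hamilton equations and energy conservation — at the cost of the quadratic-versus-linear integral inequality, which you correctly flag and which closes by the standard Bihari step ($u^2\le a^2+b\int u$ gives $u\le a+bt/2$). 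Both proofs are sound; the paper's is a couple of lines shorter, yours is marginally more self-contained on the geometric side.
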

\begin{proof}
Note that
\begin{align*}
    \|g^{-1}Dg(\gamma')\gamma'\|_g \leq \|s_{\gamma'}\|_\infty \|\gamma'\|_g,
\end{align*}
hence, taking covariant derivative
\begin{align*}
    \frac{d}{dt}\|\gamma'(t)\|_g^2 &= 2\langle \nabla_{\gamma'}\gamma', \gamma'\rangle\\
    & \leq \|\gamma'\|\|\mu\|\leq (1+\alpha\sqrt{\alpha_0})\sqrt n \|\gamma'\|,
\end{align*}
where we used Lemma~\ref{lem:biasnorm} to bound $\|\mu\|$. This implies
\begin{align*}
    \frac{d}{dt}\|\gamma'(t)\|\leq (1+\alpha\sqrt{\alpha_0})\sqrt n .
\end{align*}
Solving this ODE, 
\begin{align}
    \|\gamma'(t)\|\leq (1+t)(1+\alpha\sqrt{\alpha_0})\sqrt n.\label{eq:gnormstability}
\end{align}
\end{proof}

\begin{lemma}[Stability bound on the infinity norm along the curve]\label{lem:infnormstabilityalongt}
For a hamiltonian curve with $\|\gamma'(0)\|_g \leq \sqrt n$, suppose for a fixed time $t_1$ we know $\|s_{\gamma'(t_1)}\|_\infty \lesssim c$. Then for all times $t \in (t_1 - \frac{1}{(1 + \alpha \sqrt \alpha_0)\sqrt n} , t_1 + \frac{1}{(1 + \alpha \sqrt{\alpha_0})\sqrt n})$ we have
\begin{align*}
    \|s_{\gamma'}(t)\|_\infty \lesssim c.
\end{align*}
\end{lemma}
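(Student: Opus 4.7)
The plan is to derive a differential inequality for $\|s_{\gamma'(t)}\|_\infty$ directly, using the reparameterization $s_{\gamma'} = A_x \gamma'$, and then to run a short-time bootstrap argument out of $t_1$ in both time directions.

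First, using $(A_x\gamma')_i = (a_i^\top \gamma')/(a_i^\top x - b_i)$ one computes
\begin{align*}
\tfrac{d}{dt}(A_x\gamma'(t)) \;=\; A_x\gamma''(t) - (A_x\gamma')^{\odot 2}
\;=\; A_x\mu(\gamma) - \tfrac{1}{2}A_x g^{-1}Dg(\gamma')\gamma' - s_{\gamma'}^{\odot 2},
\end{align*}
where the last equality uses the Hamiltonian ODE~\eqref{eq:hmcode}. Taking infinity norms and applying the elementary bound $\|A_x u\|_\infty \le \|u\|_g$ (Lemma~\ref{lem:infwithgnorm}), together with the first-order infinity-norm self-concordance $\|g^{-1}Dg(\gamma')\gamma'\|_g \lesssim \|s_{\gamma'}\|_\infty\|\gamma'\|_g$ (Lemma~\ref{lem:gfirstderivativebound}) and the bound $\|\mu\|_g \lesssim (1+\alpha\sqrt{\alpha_0})\sqrt n$ (Lemma~\ref{lem:biasnorm}), I obtain
\begin{align*}
\Big\|\tfrac{d}{dt}(A_x\gamma'(t))\Big\|_\infty \;\lesssim\; (1+\alpha\sqrt{\alpha_0})\sqrt n \;+\; \|s_{\gamma'(t)}\|_\infty\,\|\gamma'(t)\|_g \;+\; \|s_{\gamma'(t)}\|_\infty^{2}.
\end{align*}

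Next, combining this with the control $\|\gamma'(t)\|_g \lesssim (1+\alpha\sqrt{\alpha_0})\sqrt n$ from Lemma~\ref{lem:gammaprimenorm} (valid on the unit time scale, hence certainly on the interval we care about), and setting $f(t) = \|s_{\gamma'(t)}\|_\infty$, I get the differential inequality
\begin{align*}
|f'(t)| \;\lesssim\; (1+\alpha\sqrt{\alpha_0})\sqrt n\,(1 + f(t)) \;+\; f(t)^2 .
\end{align*}
Since the derivative of $f$ (at points of differentiability of the coordinate realizing the max) is at most the infinity norm of the coordinate-wise derivative, this is a legitimate pointwise-a.e.\ bound on $f'$, which is all I need.

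Finally, I run a standard continuity/bootstrap argument: fix a constant $C$ and let $t^*$ be the first time in either direction from $t_1$ at which $f(t^*) = 2Cc$. On $[\min(t_1,t^*),\max(t_1,t^*)]$ we have $f \le 2Cc$, so
\begin{align*}
Cc \;\le\; f(t^*) - f(t_1) \;\lesssim\; |t^*-t_1|\cdot\bigl[(1+\alpha\sqrt{\alpha_0})\sqrt n\,(1+c) + c^2\bigr].
\end{align*}
Absorbing the $\polylog$ factor $c$ into the $\lesssim$ (which is consistent with the hypothesis $\|s_{\gamma'(t_1)}\|_\infty \lesssim c$), the dominant term is $(1+\alpha\sqrt{\alpha_0})\sqrt n$, forcing $|t^*-t_1| \gtrsim 1/((1+\alpha\sqrt{\alpha_0})\sqrt n)$. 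This contradicts $t^*$ lying in the stated interval, so $f(t) \lesssim c$ throughout. The main (mild) obstacle is simply making sure that both the $f(t)\sqrt n$ and $f(t)^2$ feedback terms stay dominated by the affine driving term under the chosen time scale, which is exactly why the stated radius is $1/((1+\alpha\sqrt{\alpha_0})\sqrt n)$ rather than the na\"ive $1/(c\sqrt n)$.
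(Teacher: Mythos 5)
Your proposal is correct and takes essentially the same route as the paper's proof: the identical computation of $\frac{d}{dt}(A\gamma')$ from the Hamiltonian ODE, the same ingredients (Lemma~\ref{lem:infwithgnorm}, the first-order infinity-norm self-concordance, Lemma~\ref{lem:biasnorm}, and Lemma~\ref{lem:gammaprimenorm}), followed by integrating the resulting differential inequality over the stated short time window. The only difference is presentational: you spell out the continuity/bootstrap step that the paper compresses into its final sentence.
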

\begin{proof}
Consider the Hamiltonian ODE below:
\begin{align*}
    \gamma''(t) + \frac{1}{2}g^{-1}\D g(\gamma')\gamma' = \mu.
\end{align*}
which implies
\begin{align*}
    \frac{d}{dt}(A\gamma') = -\frac{1}{2} Ag^{-1}\D g(\gamma')\gamma' + A\mu - (S_{\gamma'})^{\odot 2}.
\end{align*}
Hence, using Lemma~\ref{lem:infwithgnorm}
\begin{align*}
    \|\frac{d}{dt}(A\gamma')\|_\infty &\leq
    \|g^{-1}\D g(\gamma')\gamma'\|_g + \|s_{\gamma'}\|_\infty^2 + \|A\mu\|_\infty\\
    & \leq \|s_{\gamma'}\|_\infty \|s_{\gamma'}\|_g +  \|s_{\gamma'}\|_\infty^2 + \|\mu\|_g.
\end{align*}
But using Lemma~\ref{lem:gammaprimenorm} having upper bound on the $g$-norm of $\gamma'$ at time zero implies a bound on the whole curve. Combining with Lemma~\ref{lem:biasnorm}:
\begin{align*}
    \|\frac{d}{dt}(A\gamma')\|_\infty \leq \sqrt n \|s_{\gamma'}\|_\infty + \|s_{\gamma'}\|_\infty^2 + (1+\alpha \sqrt{\alpha_0})\sqrt n.
\end{align*}
This ODE implies that if at a given point the infinity norm of $\|s_{\gamma'}\|_\infty$ is bounded by $c$, then for times within $t \pm \frac{1}{c(1 + \alpha \sqrt{\alpha_0}) \sqrt n}$ we have an $O(c)$ bound on the infinity norm, which completes the proof. 
\end{proof}

\begin{lemma}[Stability bound on the  $g$-norm along the curve]\label{lem:gnormstabilityalongt}
For a Hamiltonian curve with $\|\gamma'(0)\|_g \leq \sqrt n$, suppose for a fixed time $t_1$ we know $\|\gamma'(t_1)\|_g \lesssim c$. Then for all times $t \in (t_1 - \frac{1}{(1 + \alpha \sqrt \alpha_0)\sqrt n} , t_1 + \frac{1}{(1 + \alpha \sqrt{\alpha_0})\sqrt n})$ we have
\begin{align*}
    \|\gamma'(t)\|_g \lesssim c.
\end{align*}
\end{lemma}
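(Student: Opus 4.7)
The plan is to follow exactly the template used in the proof of Lemma~\ref{lem:gammaprimenorm} and in the companion statement Lemma~\ref{lem:infnormstabilityalongt}, but now tracking the metric norm $\|\gamma'(t)\|_g$ rather than the infinity norm, and localizing the argument to a small time window around $t_1$ rather than around $0$. The covariant form of the Hamiltonian ODE, $\nabla_{\gamma'(t)}\gamma'(t) = \mu(\gamma(t))$ from Lemma~\ref{lem:ham_curve}, together with metric compatibility of $\nabla$, immediately gives a clean scalar differential inequality for $\|\gamma'(t)\|_g$.

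Concretely, the first step is to differentiate $\|\gamma'(t)\|_g^2$ along the curve using the Hamiltonian ODE:
\begin{align*}
    \tfrac{d}{dt}\|\gamma'(t)\|_g^2 \;=\; 2\langle \nabla_{\gamma'(t)}\gamma'(t),\,\gamma'(t)\rangle_g \;=\; 2\langle \mu(\gamma(t)),\,\gamma'(t)\rangle_g,
\end{align*}
which by Cauchy--Schwarz is at most $2\|\mu(\gamma(t))\|_g\cdot\|\gamma'(t)\|_g$. Dividing through yields the scalar inequality
\begin{align*}
    \Big|\tfrac{d}{dt}\|\gamma'(t)\|_g\Big| \;\leq\; \|\mu(\gamma(t))\|_g.
\end{align*}
The second step is to insert the pointwise bound on the Hamiltonian bias from Lemma~\ref{lem:biasnorm}, namely $\|\mu(x)\|_g \lesssim (1+\alpha\sqrt{\alpha_0})\sqrt{n}$, which holds at every point of $\mathcal M$ (this is where the assumption $\|\gamma'(0)\|_g\le\sqrt n$ plays a role, keeping us inside the same regime as in Lemma~\ref{lem:gammaprimenorm}). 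The right-hand side is therefore uniformly bounded along the whole curve by $(1+\alpha\sqrt{\alpha_0})\sqrt{n}$.

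The final step is to integrate this differential inequality from $t_1$ outward. For any $t$ with $|t-t_1| \leq \tfrac{1}{(1+\alpha\sqrt{\alpha_0})\sqrt{n}}$, we obtain
\begin{align*}
    \|\gamma'(t)\|_g \;\leq\; \|\gamma'(t_1)\|_g + |t-t_1|\,(1+\alpha\sqrt{\alpha_0})\sqrt n \;\lesssim\; c + 1 \;\lesssim\; c,
\end{align*}
which is the conclusion. The width of the window is chosen exactly so that the drift accumulated by $\mu$ cannot grow $\|\gamma'\|_g$ by more than an additive $O(1)$ term, which is absorbed into the $\lesssim$ bound (in the regime $c\gtrsim 1$ that is implicit in all uses of this lemma within the niceness framework of Definition~\ref{def:nicedef}).

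There is no real obstacle here; the lemma is essentially a localized Gronwall argument, strictly easier than its infinity-norm analogue because we do not need to open up $g^{-1}\D g(\gamma')\gamma'$ term-by-term. The only small point worth being careful about is that $\nabla_{\gamma'}\gamma'$ (rather than the Euclidean second derivative) is what the metric-compatible computation produces, which is precisely why writing the ODE in the intrinsic form~\eqref{eq:hamiltonianmanifold} makes the proof so short; the Euclidean bias term $-\tfrac{1}{2}g^{-1}\D g(\gamma')\gamma'$ is already absorbed into the covariant derivative and never needs to be touched.
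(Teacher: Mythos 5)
Your proof is correct and follows essentially the same route as the paper: the paper simply invokes Lemma~\ref{lem:gammaprimenorm}, whose proof is exactly the differential inequality $\frac{d}{dt}\|\gamma'(t)\|_g \leq \|\mu\|_g \lesssim (1+\alpha\sqrt{\alpha_0})\sqrt{n}$ obtained from the covariant Hamiltonian ODE plus Cauchy--Schwarz and Lemma~\ref{lem:biasnorm}. Your localized integration from $t_1$ over a window of width $\frac{1}{(1+\alpha\sqrt{\alpha_0})\sqrt n}$ is the same Gronwall-type argument, spelled out slightly more explicitly than the paper's one-line citation.
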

\begin{proof}
    Directly from Lemma~\ref{lem:gammaprimenorm}.
\end{proof}

\begin{lemma}\label{lem:infbounduniform}
Suppose we pick $x$ random from $e^{-\alpha \phi(x)}$ then run a Hamiltonian curve starting from $x$ with initial vector $\gamma'(0)$ picked according to $\mathcal N(0, g^{-1})$. Then, for any time $t_1 \in (0,1)$, with probability at least $1 - poly(m)ce^{-\Theta(c^2)}$ we have 
\begin{align*}
    &\|s_{\gamma'(t_1)}\|_\infty \lesssim c,\\
    &\|\gamma'(t_1)\|_g \lesssim c\sqrt n.
\end{align*}
\end{lemma}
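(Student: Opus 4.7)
The plan is to reduce the two tail bounds at arbitrary time $t_1$ to Gaussian concentration bounds at the initial time by exploiting the reversibility (measure preservation) of the Hamiltonian flow. The Hamiltonian system~\eqref{eq:ham} preserves the joint phase-space measure $e^{-H(x,v)}\,dx\,dv$ by Liouville's theorem combined with energy conservation, and the explicit form of $H$ in~\eqref{eq:hmcode} shows that marginalising over $v$ returns $e^{-\alpha\phi(x)}\,dx$ up to normalisation, while the conditional law of $v$ given $x$ is $\mathcal{N}(0,g(x)^{-1})$ --- precisely the starting joint distribution in the hypothesis. Consequently the joint distribution of $(\gamma(t_1),\gamma'(t_1))$ coincides with that of $(\gamma(0),\gamma'(0))$ for every $t_1$, and it suffices to prove the two tail estimates at the initial time.

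For the manifold-norm bound, conditional on $x$ we have $\|v\|_g^2 = v^\top g(x) v$ equal in distribution to a chi-squared with $n$ degrees of freedom (since $g(x)^{1/2}v \sim \mathcal{N}(0,I_n)$), so the Laurent--Massart chi-squared tail gives $\|v\|_g \le c\sqrt n$ with failure probability at most $e^{-\Theta(c^2)}$ uniformly in $x$.

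For the infinity-norm bound, $s_{x,v} = A_x v$ is a centered Gaussian in $\mathbb{R}^m$ with covariance $\Sigma_x = A_x g(x)^{-1} A_x^\top$, so a coordinate-wise Gaussian tail combined with a union bound over $m$ coordinates reduces matters to a uniform variance estimate $\max_i (\Sigma_x)_{ii} \lesssim 1$. Here the hybrid construction is essential: combining $g \succeq \alpha_0 A_x^\top W A_x$ from Lemma~\ref{lem:lewisweightbarrier} with $g \succeq \alpha_0 (n/m) A_x^\top A_x$ gives $g \succeq \alpha_0 A_x^\top \bar W A_x$ where $\bar W := W + (n/m) I$, and hence $(\Sigma_x)_{ii} \le \bar P_{ii}/(\alpha_0 \bar w_i)$, where $\bar P$ denotes the leverage-score projection of $\bar W^{1/2}A_x$ and $\bar w_i := w_i + n/m$. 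The Lewis-weight identity $(P_x)_{ii} = w_i$ from Definition~\ref{def:projecrtion}, combined with the carefully tuned exponent in $\alpha_0 = (m/n)^{2/(p+2)}$, can then be used to bound $\bar P_{ii}/\bar w_i$ by $\alpha_0$ up to polylogarithmic factors, yielding $(\Sigma_x)_{ii} \lesssim 1$. Plugging this into the Gaussian union bound produces the claimed overall failure probability $\mathrm{poly}(m) \cdot c \cdot e^{-\Theta(c^2)}$, with the linear factor $c$ arising from a dyadic union bound over deviation scales.

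The main obstacle is the uniform variance estimate $\max_i (\Sigma_x)_{ii} \lesssim 1$, which is precisely what the hybrid barrier is engineered to deliver: the pure $p$-Lewis barrier would allow rows with small Lewis weight $w_i$ to blow up the coordinate variance like $1/w_i$, while the pure log barrier has too large a barrier parameter and hence too weak an isoperimetric constant. The $(n/m)\sum\log s_i$ regularisation forces every effective weight $\bar w_i$ to be at least $n/m$, and the exponent $2/(p+2)$ in $\alpha_0$ is tuned to balance this regularisation against the Lewis-weight term, producing polylog infinity-norm control of a typical tangent Gaussian --- exactly what is needed for the $(c,\delta)$-niceness framework downstream.
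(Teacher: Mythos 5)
Your proposal is correct and takes essentially the same route as the paper: both reduce to time $0$ via stationarity of the joint law of $(\gamma(t),\gamma'(t))$ under the Hamiltonian flow, then apply a per-coordinate Gaussian tail with a union bound over the $m$ facets, the key input being the variance bound $a_i^\top g^{-1}a_i\le 1$ (the paper's Lemma~\ref{ginversebound}, whose proof is exactly the balancing of $(m/n)^{2/p}w_i^{2/p}$ against $w_i^{-1}$ via the Lewis-weight fixed-point identity and the $n/m$ floor that you sketch), together with a chi-squared/subGaussian tail for $\|\gamma'(t_1)\|_g$. The only difference is that the paper's proof additionally runs a union bound over a cover of times combined with the stability Lemmas~\ref{lem:infnormstabilityalongt} and~\ref{lem:gnormstabilityalongt} (which is where its extra factor of $c$ in the failure probability actually comes from), because downstream it needs the bounds simultaneously for all $t\in[0,1]$; for the pointwise statement as written, your argument suffices.
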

\begin{proof}
From the property of the Hamiltonian curve, we know the joint density of $(\gamma(t), \gamma'(t))$ is $e^{-\alpha \phi(x)} \times \mathcal N(0, g^{-1}(x)) dxdv$. Focusing on the probability of $v_t = \gamma'(t)$, we see that for each $i$, $a_i^T v_t$ is a Gaussian distributed variable with variance 
\begin{align*}
    a_i^\top g^{-1} a_i \leq 1,
\end{align*}
where the inequality follows from Lemma~\ref{ginversebound}. Hence, from Gaussian tail bound, for a fixed time $t$:
\begin{align*}
    \mathbb P(\|s_{v_t}\|_\infty \geq c) \lesssim me^{-c^2/2},
\end{align*}
where note that $\|s_{v_t}\|_\infty$ is just the maximum of Gaussian random variables
and we applied a union bound over the entries of $s_{v_t}$. 
Moreover, note that $\|v_t\|_g$ is a subGaussian random variable with mean $O(\sqrt n)$ and subGaussian parameter $O(1)$. Hence
\begin{align*}
    \mathbb P(\|v_t\|_g \geq c) \lesssim e^{-\Theta(c^2)}.
\end{align*}

Next, consider a cover $\mathcal C = \{t_i\}_{i=1}^{c(1+\alpha\sqrt{\alpha_0})\sqrt n}$ of equally distant times of the Hamiltonian curve from $t = 0$ to $t = 1$. Apply the above argument for all the times in this cover with a union bound on top. This implies with probability at least $1 - \text{poly}(m)ce^{-\Theta(c^2)}$, we have $\|s_{v_t}\|_\infty \lesssim c$ for all $t \in \mathcal C$ and $\|v_t\|_g \lesssim c\sqrt n$, where we used the fact that $\alpha\sqrt{\alpha_0} = poly(m)$. Now combining this with Lemmas~\ref{lem:infnormstabilityalongt} and~\ref{lem:gnormstabilityalongt} completes the proof. 
\end{proof}

Next, we bring a Lemma which shows the existence of Nice sets, used in the Proof of Theorem~\ref{thm:mixing}.
\begin{lemma}\label{lem:highprobset}[Existence of Nice set]
There is a high probability region $ S \subset \mathcal M$ such that $\pi(S) \geq 1 - O(poly(m)e^{-c/2})$ (where recall $\pi(.)$ is the probability distribution of density $e^{-\phi}$ inside the polytope) and for every $x \in S$, there is a high probability region $Q_x$ in the tangent space of $x$, namely $\mathbb P(v_x \in Q_x) \geq 0.999$ 
such that for all $v_x \in Q_x$, the Hamiltonian curve starting from $x$ with initial vector $v_x$ is $(c, 1)$-nice, namely for all $0 \leq t \leq 1$:
\begin{align*}
    &\|\gamma'(t)\|_g \lesssim c\sqrt n,\\
    &\|s_{\gamma'(t)}\|_\infty \lesssim c.
\end{align*}
\end{lemma}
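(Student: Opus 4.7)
The plan is to disentangle the joint high-probability statement of Lemma~\ref{lem:infbounduniform} into a pointwise statement about $x$ via a standard Fubini/Markov argument. Lemma~\ref{lem:infbounduniform} already controls, for any fixed time $t_1\in(0,1)$, the probability (over the joint distribution of $(x,v_x)$ where $x\sim \pi$ and $v_x\mid x\sim \mathcal N(0,g(x)^{-1})$) that $\|s_{\gamma'(t_1)}\|_\infty\lesssim c$ and $\|\gamma'(t_1)\|_g\lesssim c\sqrt n$. Combining this with the stability Lemmas~\ref{lem:infnormstabilityalongt} and~\ref{lem:gnormstabilityalongt} over an $O(c(1+\alpha\sqrt{\alpha_0})\sqrt n)$-sized time net in $[0,1]$ yields that the curve is $(c,1)$-nice with joint probability at least $1-\mathrm{poly}(m)\, c\, e^{-\Theta(c^2)}$; this is precisely what the proof of Lemma~\ref{lem:infbounduniform} already establishes.

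Given this joint tail bound, I will define the conditional failure probability
\[
q(x)\;\triangleq\;\mathbb P_{v_x\sim\mathcal N(0,g(x)^{-1})}\!\left[\text{the curve }(\gamma(t))_{t\in[0,1]}\text{ from }(x,v_x)\text{ is not }(c,1)\text{-nice}\right],
\]
and set
\[
S \;\triangleq\; \bigl\{x\in\mathcal M : q(x)\le 0.001\bigr\},\qquad Q_x \;\triangleq\; \bigl\{v_x : \text{curve from }(x,v_x)\text{ is }(c,1)\text{-nice}\bigr\}.
\]
By construction $\mathbb P(v_x\in Q_x)\ge 0.999$ for every $x\in S$. Markov's inequality applied to $q(x)\ge 0$ gives
\[
\pi(S^c)\;\le\;\frac{\mathbb E_{x\sim\pi}\,q(x)}{0.001}\;=\;\frac{\mathbb P_{(x,v_x)}[\text{failure}]}{0.001}\;\le\;1000\cdot \mathrm{poly}(m)\,c\,e^{-\Theta(c^2)},
\]
which is dominated by $\mathrm{poly}(m)e^{-c/2}$ for all sufficiently large $c$, as required.

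The only nontrivial step is actually establishing the joint tail bound that feeds the Markov argument, and that is exactly the content of Lemma~\ref{lem:infbounduniform}; its proof uses the invariance of the joint density $e^{-\alpha\phi(x)}\mathcal N(0,g(x)^{-1})$ under the Hamiltonian flow to reduce the tail at time $t$ to the tail at time $0$, then applies a Gaussian tail bound on $a_i^\top v_t$ (with variance $\le 1$ by Lemma~\ref{ginversebound}) together with a union bound over facets and a time net, and finally extends from the net to all $t\in[0,1]$ via the stability Lemmas~\ref{lem:infnormstabilityalongt} and~\ref{lem:gnormstabilityalongt}. The main subtlety to watch is that the cover size $\mathrm{poly}(m)\cdot c(1+\alpha\sqrt{\alpha_0})\sqrt n$ is polynomial in $m$, so it can be absorbed into the $\mathrm{poly}(m)$ factor; the $e^{-\Theta(c^2)}$ from the Gaussian tail then dominates, and choosing the constants in the definition of $S$ (here $0.001$) loose enough makes the final bound $\mathrm{poly}(m)e^{-c/2}$ hold with room to spare.
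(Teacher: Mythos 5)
Your proposal is correct and follows essentially the same route as the paper: the paper also defines $Q_x$ as the set of initial velocities yielding a $(c,1)$-nice curve, takes $S$ to be the points where the conditional probability of $Q_x$ is large, and converts the joint tail bound of Lemma~\ref{lem:infbounduniform} (whose proof already contains the time-net plus stability argument you describe) into a bound on $\pi(S^c)$ — phrased there as a proof by contradiction, which is just your Markov-inequality step in different clothing.
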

\begin{proof}
    For every point $x\in \mathcal M$, define $Q_x$ to be the set of vectors in its tangent space such that the resulting curve is $c$-nice up to time $1$. Define region $S$ to be the the set of points $x$ on $\mathcal M$ such that $p_{v_x}(Q_x) \geq 1 - 0.0005$, where $p_{v_x}$ denotes the density of $\mathcal N(0, g^{-1})$ in the tangent space of $x$ (The constant $1 - 0.0005$ is motivated by the definition of nice sets). Now if it was the case that $\mathbb P(S^c) \geq \text{poly}(m)ce^{-\Theta(c^2)}$, then under the joint distribution on $(x, v)$, there is a region with probability at least $poly(m)ce^{-\Theta(c^2)}$ such that the Hamiltonian curve starting from $x$ with initial vector $v$ is not $c$-nice. But this contradicts Lemma~\ref{lem:infbounduniform}. 
\end{proof}

\section{Isoperimetry}\label{sec:isoperimetry}
In this section, we show an the isoperimetry constant corresponding to our barrier, stated in Theorem~\ref{thm:hybrid-iso}.

\begin{proof}[Proof of Theorem~\ref{thm:hybrid-iso}.]
From Lemma~\ref{lem:ellipsebysup} and the definition of $g$:
\begin{align*}
    \|v\|^2_g \leq  p\|v\|_{g''} (\frac{m}{n})^{\frac{2/p}{2/p + 1}}  \leq pn (\frac{m}{n})^{\frac{2/p}{2/p + 1}}\|s_{x,v}\|^2_\infty.
\end{align*}
This means that if we scale the ellipsoid $\{v| \ v^\top g v \leq 1\}$ by $\sqrt{pn} (\frac{m}{n})^{\frac{1/p}{2/p + 1}}$ then it includes the symmetrized polytope around $x$, whose unit ball is exactly $\{v|\ \|s_{x,v}\|_\infty \leq 1\}$, i.e.
\begin{align}
    \{v|\ \|s_{x,v}\|_\infty \leq 1\} \subseteq \{\sqrt{pn} (\frac{m}{n})^{\frac{1/p}{2/p + 1}} v| \ v^\top g v \leq 1\}.\label{eq:avali}
\end{align}
On the other hand, from Lemma~\ref{lem:infwithgnorm} we have
\begin{align*}
    \|s_{x,v}\|_\infty \leq \|v\|_g,
\end{align*}
which implies that the unit ball of the norm, or the Dikin ellipsoid, is contained in the symmetrized poltope around $x$, i.e.
\begin{align}
    \{v| \ v^\top g v \leq 1\} \subseteq \{v|\ \|s_{x,v}\|_\infty \leq 1\}.\label{eq:dovomi}
\end{align}
Combining the relations~\eqref{eq:avali} and~\eqref{eq:dovomi} implies that the symmetric self-concordance parameter $\bar \nu$ defined in~\cite{laddha2020convergence} is at most $\bar \nu \leq pn(\frac{m}{n})^{\frac{2/p}{2/p+1}}$, which in turn implies that the distribution $e^{-\alpha \phi}$ has isoperimetry with constant at least $\frac{1}{\sqrt \nu} \geq \frac{1}{\sqrt{pn}}(\frac{n}{m})^{\frac{1/p}{2/p+1}}$ with respect to metric $g$ as desired.

Furthermore, using the Brascamp-Lieb inequality, we know $e^{-\alpha\phi}$ has isoperimetry at least $\sqrt{\alpha}$ on a manifold whose metric is the Hessian of $\phi$~\cite{bakry2014analysis}. Combining these two facts completes the proof.
\end{proof}


We denote the $i$th row of the matrix $\A$ by $a_i$. Note that if we have a bound on the quantity $a_i^\top g^{-1} a_i$ for our metric $g$ enables us to control the infinity norm of $s_{x,v}$ via the following simple Cauchy Schwarz on the $i$th entry of $s_{x,v}$:
\begin{align*}
    |{s_{x,v}}_i| = |a_i^\top v| \leq \|v\|_g \sqrt{a_i^\top g^{-1}a_i}.
\end{align*}
However, while we have the following relation 
\begin{align}
a_i^\top (\A^\top\W^{1-2/p})\A)^{-1}a_i = {{w_x}_i}^{2/p} \leq 1,\label{eq:counterpart}
\end{align}
only considering the $g_2$ subpart of our metric $g$, the quantity $a_i^T {g_2}^{-1} a_i$ might be orders of magnitude larger than its counterpart $a_i^\top (\A\W^{1-2/p})\A)^{-1}a_i$ in Equation~\eqref{eq:counterpart}. This is because recall as we state in~\ref{lem:lsbarrierapprox}
\begin{align*}
   \A^T \W \A \preccurlyeq g_2 \preccurlyeq p\A^T \W \A,
\end{align*}
but we do not have such spectral bounds between matrices $\A^T W\A$ and $\A^T W^{1-2/p}\A$. In fact, authors in~\cite{lee2019solving} show $\A^T W\A$ and $\A^T W^{1-2/p}\A$ are up to log factors spectrally the same, as long as $p$ is polylogarithmically large, but here we are not able to work with such large $p$'s since our infinity norm estimates break for $p \geq 4$. Nonetheless, we show that adding the log barrier and appropriately rescaling the metric $g$ indeed enables us to bound $a_i^T g^{-1}a_i$. To prove a bound on $a_i^T g^{-1}a_i$, we start by comparing the matrix $g' \triangleq \A^T \W \A + \frac{n}{m}\A^\top \A$, which is proportional to the Hessian of the hybrid barrier before scaling by $\alpha_0$, with the matrix $\A^\top \W^{1-2/p}\A$, which then enables us to analyze the quantity $a_i^\top g^{-1} a_i$ via the closed form Equation~\eqref{eq:counterpart}. In the next Lemma, we compare these two matrices.
\begin{lemma}[L\"{o}wner comparison with different weighted matrices]\label{lem:matrixcomparison}
For the PSD matrix $g' = \A^\top\W\A + \frac{n}{m}\A^\top \A$ we have
\begin{align*}
    \A^\top \W^{1-2/p}\A \preccurlyeq (\frac{m}{n})^{2/p}g'.
\end{align*}
\end{lemma}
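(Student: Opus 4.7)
The plan is to reduce the matrix inequality to a coordinatewise scalar inequality and then apply weighted AM--GM to each diagonal entry.

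First I would write both sides as positive combinations of the rank-one outer products $a_i a_i^\top$. Since $\W = \diag{w_x}$ is diagonal,
\[
\A^\top \W^{1-2/p} \A = \sum_{i=1}^m w_i^{1-2/p}\, a_i a_i^\top,
\qquad
g' = \A^\top \W \A + \frac{n}{m}\A^\top \A = \sum_{i=1}^m \Bigl(w_i + \tfrac{n}{m}\Bigr)\, a_i a_i^\top.
\]
Consequently, to establish the L\"owner inequality it suffices to prove the scalar bound
\[
w_i^{1-2/p} \;\le\; \Bigl(\tfrac{m}{n}\Bigr)^{2/p}\Bigl(w_i + \tfrac{n}{m}\Bigr) \qquad \text{for every } i \in [m],
\]
since then each coefficient of $a_i a_i^\top$ on the left is dominated by the corresponding coefficient on the right, and summing preserves the PSD ordering.

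The remaining step is this scalar inequality, which is a direct application of weighted AM--GM with weights $1-2/p$ and $2/p$ (both positive since $p>2$, in particular for $p=4-1/\log m$): for any $t,c\ge 0$,
\[
t^{1-2/p}\, c^{2/p} \;\le\; \Bigl(1-\tfrac{2}{p}\Bigr) t + \tfrac{2}{p}\, c \;\le\; t+c.
\]
Applying this with $t = w_i$ and $c = n/m$ gives $w_i^{1-2/p}(n/m)^{2/p} \le w_i + n/m$, which rearranges precisely to the required inequality after dividing by $(n/m)^{2/p}$. Combining the entrywise bounds yields the claimed L\"owner ordering, completing the proof.

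There is no substantive obstacle here: the lemma is essentially a pointwise concavity statement about $t\mapsto t^{1-2/p}$ combined with the fact that the $\log$-barrier regularization contributes the additive $n/m$ term needed to dominate small Lewis weights $w_i \ll n/m$ (for which the naive bound $w_i^{1-2/p}\lesssim w_i$ fails). The only care needed is to track that $p<4$ ensures $1-2/p\in(0,1)$ so that AM--GM applies with positive weights.
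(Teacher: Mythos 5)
Your proof is correct and follows essentially the same route as the paper: both reduce the L\"{o}wner inequality to the entrywise scalar bound $w_i^{1-2/p} \le (\tfrac{m}{n})^{2/p}\bigl(w_i + \tfrac{n}{m}\bigr)$, which the paper verifies by a two-case balancing argument (splitting on whether $w_i^{1-2/p} \le \beta\tfrac{n}{m}$) while you verify it in one line via weighted AM--GM. One small nit: the positivity of the exponents requires $p>2$ (which indeed holds for $p=4-1/\log m$), not $p<4$ as your closing remark suggests --- the constraint $p<4$ plays no role in this particular lemma.
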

\begin{proof}
Suppose for a given coefficient $\beta$ we wish to have
\begin{align}
    {{w_x}_i}^{1-2/p} \leq \beta({{w_x}_i} + \frac{n}{m}).\label{eq:initialwbound}
\end{align}
The first thing we notice is that if $w_i^{1-2/p} \leq \beta \frac{n}{m}$, then the inequality is already satisfied. Hence, w.l.o.g we assume
\begin{align}
    {{w_x}_i} \geq (\beta \frac{n}{m})^{\frac{1}{1-2/p}}.\label{eq:initialwinequality}
\end{align}
in this regime of $w_i$ to pick a $\beta$ which satisfies Equation~\eqref{eq:initialwbound}, we need to have
\begin{align*}
    \beta {{w_x}_i}^{2/p} \geq 1.
\end{align*}
But using Equation~\eqref{eq:initialwinequality}, it is sufficient to have
\begin{align*}
    (\beta \frac{n}{m})^{\frac{2/p}{1-2/p}} \beta \geq 1,
\end{align*}
so we need to pick $\beta$ as large as
\begin{align*}
    \beta = (\frac{m}{n})^{2/p},
\end{align*}
which completes the proof. 
\end{proof}

\begin{lemma}[Taming the hybrid metric]\label{ginversebound}
For the metric of our hybrid barrier before scaling up by $\alpha_0$, i.e. for $g''$ defined as
\begin{align}
g''(x) \triangleq \nabla^2 \phi_p(x) + \frac{n}{m}\nabla^2 \phi_{\ell} = g_1 + \frac{n}{m}\A^\top\A,\label{eq:g''def}
\end{align}
 we have for every $i$:
\begin{align*}
    a_i^\top g''^{-1} a_i \leq (\frac{m}{ n})^{\frac{2/p}{1+2/p}}.
\end{align*}
In particular, for the metric $g(x)$ of the hybrid barrier we have
\begin{align*}
    a_i^\top g^{-1} a_i \leq 1.
\end{align*}
\end{lemma}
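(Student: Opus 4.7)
The plan is to obtain two upper bounds on $a_i^\top g''^{-1}a_i$, one coming from the Lewis-weight side of $g''$ and one coming from the log-barrier side, and then combine them by a simple case split on the size of the weight $w_i$. Recall that we want $a_i^\top g''^{-1}a_i\leq (m/n)^{2/(p+2)}=\alpha_0$; after scaling by $\alpha_0$, this immediately gives $a_i^\top g^{-1}a_i\leq 1$.

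Step 1 (Lewis-weight bound). By Lemma~\ref{lem:lewisweightbarrier}, $A_x^\top \mathbf{W}_x A_x\preccurlyeq g_1$, so $g''=g_1+(n/m)A_x^\top A_x\succeq A_x^\top \mathbf{W}_x A_x+(n/m)A_x^\top A_x=g'$. By Lemma~\ref{lem:matrixcomparison}, $A_x^\top \mathbf{W}_x^{1-2/p}A_x\preccurlyeq (m/n)^{2/p}g'\preccurlyeq (m/n)^{2/p}g''$, so
\[
a_i^\top g''^{-1}a_i\leq (m/n)^{2/p}\,a_i^\top (A_x^\top \mathbf{W}_x^{1-2/p}A_x)^{-1}a_i=(m/n)^{2/p}w_i^{2/p},
\]
where the last equality is the defining first-order optimality condition~\eqref{eq:counterpart} of the $p$-Lewis weights.

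Step 2 (leverage-score bound). From Step 1's observation that $g''\succeq A_x^\top(\mathbf{W}_x+(n/m)I)A_x$, the leverage score of row $i$ of the matrix $(\mathbf{W}_x+(n/m)I)^{1/2}A_x$ is $(w_i+n/m)\,a_i^\top(A_x^\top(\mathbf{W}_x+(n/m)I)A_x)^{-1}a_i\leq 1$. Hence
\[
a_i^\top g''^{-1}a_i\;\leq\; a_i^\top(A_x^\top(\mathbf{W}_x+(n/m)I)A_x)^{-1}a_i\;\leq\;\frac{1}{w_i+n/m}\;\leq\;\min\!\Big(\tfrac{1}{w_i},\,\tfrac{m}{n}\Big).
\]

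Step 3 (case split). Combining the two, $a_i^\top g''^{-1}a_i\leq\min\!\big((m/n)^{2/p}w_i^{2/p},\,1/w_i\big)$. If $w_i\geq (n/m)^{2/(p+2)}$, the second bound gives $1/w_i\leq (m/n)^{2/(p+2)}$. If $w_i\leq (n/m)^{2/(p+2)}$, the first gives $(m/n)^{2/p}w_i^{2/p}\leq (m/n)^{2/p}(n/m)^{(2/p)\cdot 2/(p+2)}=(m/n)^{2/p-4/(p(p+2))}=(m/n)^{2/(p+2)}$. In either case $a_i^\top g''^{-1}a_i\leq (m/n)^{2/(p+2)}=\alpha_0$, and dividing by $\alpha_0$ yields $a_i^\top g^{-1}a_i\leq 1$.

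There is no real obstacle here beyond remembering to use both lower bounds on $g''$ simultaneously; the only subtle point is that the Lewis-weight bound alone gives $(m/n)^{2/p}$ (too weak) and the log-barrier bound alone gives $m/n$ (also too weak), but their pointwise minimum is exactly balanced at the desired threshold $w_i=(n/m)^{2/(p+2)}$.
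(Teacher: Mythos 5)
Your proof is correct and follows essentially the same route as the paper: lower-bound $g''$ by $\A^\top\W\A+\frac{n}{m}\A^\top\A$, use Lemma~\ref{lem:matrixcomparison} together with the fixed-point identity $a_i^\top(\A^\top\W^{1-2/p}\A)^{-1}a_i=w_i^{2/p}$ for one bound and a leverage-score argument for the $1/w_i$-type bound, then balance the two at $w_i=(n/m)^{2/(p+2)}$. The only (harmless) cosmetic difference is that you take leverage scores of $(\W+\frac{n}{m}I)^{1/2}\A$ rather than of $\W^{1/2}\A$ as the paper does, which gives the slightly tighter $1/(w_i+n/m)$ but is not needed.
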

\begin{proof}
Note that using Lemma~\ref{lem:lsbarrierapprox}, we have
\begin{align*}
    g' = \A^\top \W \A+ \frac{n}{m}\A^\top\A \preccurlyeq g_1+ \frac{n}{m}\A^\top\A = g''.
\end{align*}
Hence, using Lemma~\ref{lem:matrixcomparison}:
\begin{align}
   a_i^\top g''^{-1} a_i \leq a_i^\top g'^{-1} a_i\leq (\frac{m}{ n})^{2/p} a_i^\top (\A^\top \W^{1-2/p}A)^{-1}a_i \leq (\frac{m}{n})^{2/p} {{w_x}_i}^{2/p}.\label{eq:aval}
\end{align}
On the other hand, 
\begin{align}
   a_i^\top g''^{-1} a_i \leq a_i^\top g'^{-1} a_i \leq a_i^\top (\A^\top \W A)^{-1} a_i = 
  {{w_x}_i}^{-1} {{w_x}_i}^{1/2}a_i^\top (A^\top \W \A)^{-1} a_i {{w_x}_i}^{1/2} \leq {w_i}^{-1}.\label{eq:dovom}
\end{align}
Balancing Equations~\eqref{eq:aval} and~\eqref{eq:dovom} implies
\begin{align}
    a_i^\top g''^{-1} a_i \leq (\frac{m}{ n})^{\frac{2/p}{1+ 2/p}}.\label{eq:infbyellipse}
\end{align}
Finally, noting the fact that 
\begin{align*}
    g(x) = \nabla^2 \phi(x) = (\frac{m}{n})^{\frac{2/p}{1+2/p}}\nabla^2 (\phi_p(x) + \frac{n}{m}\phi_\ell(x)) = (\frac{m}{n})^{\frac{2/p}{1+2/p}} g''(x),
\end{align*}
the proof is complete.
\end{proof}

Finally, using our estimate on $a_i^\top g''^{-1} a_i$ in Lemma~\ref{ginversebound}, we bound the $g''$ norm of an arbitrary vector $v$:
\begin{lemma}[Bounding the ellipsoid norm by the infinity norm]\label{lem:ellipsebysup}
We can bound the metric norm $g$ by the infinity norm $\|.\|_{x,\infty}$ as
\begin{align*}
    \|v\|_{g''} \leq 
    np \|s_{x,v}\|_{x,\infty}.
\end{align*}
\end{lemma}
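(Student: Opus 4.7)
The plan is to bound $v^{\top} g'' v$ by pieces, using the spectral comparison of the Lewis-weight metric with the weighted frame matrix plus the simple log-barrier contribution. Recall that by Lemma~\ref{lem:lewisweightbarrier} we have the two-sided bound
\[
\A^{\top} \W \A \preccurlyeq g_1 \preccurlyeq (1+p)\, \A^{\top} \W \A,
\]
and by definition $g'' = g_1 + \tfrac{n}{m}\A^{\top}\A$, so it suffices to control $v^{\top}\A^{\top}\W\A v$ and $\tfrac{n}{m} v^{\top}\A^{\top}\A v$ separately.

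First I would handle the Lewis-weight piece. Writing out the quadratic form in coordinates,
\[
v^{\top} \A^{\top} \W \A v \;=\; \sum_{i=1}^{m} w_{i}\, s_{x,v,i}^{2} \;\le\; \|s_{x,v}\|_{\infty}^{2} \sum_{i=1}^{m} w_{i} \;=\; n\, \|s_{x,v}\|_{x,\infty}^{2},
\]
where I used the well-known normalization $\sum_i w_i = n$ for the $p$-Lewis weights (this drops out of the first-order optimality conditions for \eqref{eq:lewisweightsdef}, since setting the derivative in $w_i$ to zero gives $w_i^{1-2/p} = a_i^{\top}(\A^{\top}\W^{1-2/p}\A)^{-1}a_i \cdot w_i^{-2/p}$, whose trace over $i$ equals $n$). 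Next I would bound the log-barrier piece trivially:
\[
\tfrac{n}{m}\, v^{\top} \A^{\top} \A v \;=\; \tfrac{n}{m} \sum_{i=1}^{m} s_{x,v,i}^{2} \;\le\; \tfrac{n}{m} \cdot m \cdot \|s_{x,v}\|_{\infty}^{2} \;=\; n\, \|s_{x,v}\|_{x,\infty}^{2}.
\]

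Combining these with the upper bound $g_1 \preccurlyeq (1+p)\A^{\top}\W\A$ gives
\[
v^{\top} g'' v \;\le\; (1+p)\, n\, \|s_{x,v}\|_{x,\infty}^{2} + n\, \|s_{x,v}\|_{x,\infty}^{2} \;\le\; (p+2)\, n\, \|s_{x,v}\|_{x,\infty}^{2},
\]
which after taking square roots yields the claimed inequality (up to the multiplicative constant; the stated bound $np$ is comfortably compatible with $\sqrt{(p+2)n}$ for the regime $p < 4$ that we work in).

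There is essentially no obstacle here — the proof is a routine spectral comparison once Lemma~\ref{lem:lewisweightbarrier} and the normalization $\indic^{\top} w = n$ are in hand. The only subtle point worth double-checking is the Lewis-weight normalization, which I would justify by differentiating \eqref{eq:lewisweightsdef} in $w$ and tracing.
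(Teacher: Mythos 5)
Your proof is correct and follows essentially the same route as the paper: split $g'' = g_1 + \tfrac{n}{m}\A^\top\A$, use the spectral comparison $g_1 \preccurlyeq (1+p)\A^\top\W\A$ from Lemma~\ref{lem:lsbarrierapprox}, and bound each quadratic form by $\|s_{x,v}\|_\infty^2$ times $\sum_i w_i = n$ (resp.\ $\tfrac{n}{m}\cdot m = n$), with the same benign looseness in the final constant that the paper itself has. One small nit: your displayed optimality relation should read $a_i^\top(\A^\top\W^{1-2/p}\A)^{-1}a_i = w_i^{2/p}$ (the fixed-point property $\sigma(\W^{1/2-1/p}\A)=\W$), from which $\sum_i w_i = n$ follows since leverage scores sum to $n$; the conclusion you use is correct.
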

\begin{proof}
Using Lemma~\ref{lem:lsbarrierapprox}, we have
\begin{align*}
    v^\top g_1v \leq p\sum_{i} {w_x}_i {s_v}_i^2
    \leq p\sum_{i} {w_x}_i \|s_v\|_\infty^2 = np \|s_v\|_\infty^2,
\end{align*}
and
\begin{align*}
    v^\top g_2 v = \frac{n}{m}\sum_i {s_v}_i^2 \leq n\|s_v\|_\infty^2.
\end{align*}
Noting the definition of $g''$ in Equation~\eqref{eq:g''def} completes the proof.
\end{proof}

\begin{lemma}[Bounding infinity norm by the ellipsoidal norm]\label{lem:infwithgnorm}
Given an arbitrary vector $z \in \mathbb R^n$, we have
\begin{align*}
    &\|s_{x,z}\|_\infty \leq \|z\|_g,\\
    &\|r_{x,z}\|_\infty \leq \frac{1}{4/p - 1} \|z\|_g.
\end{align*}
\end{lemma}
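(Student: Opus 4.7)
The plan is to reduce both inequalities to two facts already established earlier in the paper: the Cauchy--Schwarz estimate for rows of $A_x$ against the metric $g$, and the operator $\ell_\infty \to \ell_\infty$ bound on $\G^{-1}\W$ coming from the condition $p<4$.

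First I would attack $\|s_{x,z}\|_\infty \leq \|z\|_g$. By definition, $s_{x,z} = \A z$, so the $i$-th coordinate is $a_i^\top z$ where $a_i$ is the $i$-th row of $\A$. Applying Cauchy--Schwarz in the inner product induced by $g$,
\[
|a_i^\top z| = |\langle g^{-1/2}a_i,\ g^{1/2}z\rangle| \leq \sqrt{a_i^\top g^{-1}a_i}\,\|z\|_g.
\]
Lemma~\ref{ginversebound} states precisely that $a_i^\top g^{-1}a_i \leq 1$ for the normalized hybrid metric $g$, since the scaling factor $\alpha_0 = (m/n)^{\frac{2/p}{1+2/p}}$ was chosen exactly so that the bound~\eqref{eq:infbyellipse} on $g''^{-1}$ becomes a bound by $1$ for $g^{-1}$. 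Taking the maximum over $i$ yields $\|s_{x,z}\|_\infty \leq \|z\|_g$.

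Next I would handle $\|r_{x,z}\|_\infty \leq \tfrac{1}{4/p-1}\|z\|_g$. Recall $r_{x,z} = \G^{-1}\W s_{x,z}$. The operator infinity-norm bound proved earlier (Lemma~\ref{lem:ginfnorm}, requiring $p<4$) gives
\[
\|\G^{-1}\W s_{x,z}\|_\infty \leq \frac{1}{4/p-1}\|s_{x,z}\|_\infty.
\]
Chaining this with the first inequality immediately yields the desired bound on $\|r_{x,z}\|_\infty$.

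There is no real obstacle here: the lemma is a short corollary assembling ingredients that were already the hard work of the preceding sections (the tuning of $\alpha_0$ so that $a_i^\top g^{-1}a_i \leq 1$, and the $p<4$ contraction of $\G^{-1}\W$ in $\ell_\infty$). The only thing to be careful about is that the Cauchy--Schwarz step uses $g$ and not $g''$ or $g_1$, so we must invoke the last sentence of Lemma~\ref{ginversebound} rather than its intermediate estimate. Everything else is a one-line bookkeeping step.
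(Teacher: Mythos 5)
Your proposal is correct and follows essentially the same route as the paper: Cauchy--Schwarz in the $g$-inner product combined with the bound $a_i^\top g^{-1}a_i \leq 1$ from Lemma~\ref{ginversebound} (the paper writes out the $\alpha_0$ rescaling through $g''$ explicitly, but this is the same content as the ``in particular'' clause you cite), and then chaining Lemma~\ref{lem:ginfnorm} with the first inequality for the bound on $\|r_{x,z}\|_\infty$. No gaps.
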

\begin{proof}
For all $i$ we have using Lemma~\ref{ginversebound}:
\begin{align*}
    |a_i^\top z| \leq \sqrt{a_i^\top g^{-1} a_i} \sqrt{z^\top g z} \leq 
    (\frac{n}{m})^{\frac{1/p}{1+2/p}}\sqrt{a_i^\top {g''}^{-1} a_i} \sqrt{z^\top g z}
    \leq \|z\|_g.
\end{align*}
The second inequality follows from the fact that $\|r_{x,z}\|_\infty \leq \frac{1}{4/p - 1}\|s_{x,z}\|_\infty$ from Lemma~\ref{lem:ginfnorm}.
\end{proof}

\begin{lemma}[Infinity norm of random vectors]\label{lem:infnormbound}
For the metric $g$ of our hybrid barrier, given random vector $v \sim \mathcal N(0,g^{-1})$, with high probability we have
\begin{align*}
    &\|s_{x,v}\|_{\infty} \leq 1,\\
    &\|r_{x,v}\|_\infty \leq \frac{1}{4/p - 1}.
\end{align*}
\end{lemma}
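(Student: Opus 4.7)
The plan is to reduce both bounds to (i) the variance estimate $a_i^\top g^{-1} a_i \leq 1$ already established in Lemma~\ref{ginversebound}, and (ii) the operator inequality $\|\G^{-1}\W\|_{\infty \to \infty} \leq \tfrac{1}{4/p-1}$ proved in Lemma~\ref{lem:ginfnorm}. Neither step requires new geometric input; the whole lemma is a concentration statement on top of the self-concordance/hybrid-metric machinery that precedes it.

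First I would treat $s_{x,v}$. Since $s_{x,v} = \A v$, its $i$-th coordinate is $(s_{x,v})_i = a_i^\top v$, which under $v \sim \mathcal N(0,g^{-1})$ is a centered Gaussian with variance exactly $a_i^\top g^{-1} a_i$. By Lemma~\ref{ginversebound} this variance is bounded by $1$. Standard Gaussian tail bounds give
\[
\Pr\bigl(|a_i^\top v| \geq t\bigr) \leq 2 e^{-t^2/2}
\]
for every $i$, so a union bound over the $m$ facets shows that for $t = C\sqrt{\log m}$ we have $\|s_{x,v}\|_\infty \leq C\sqrt{\log m}$ with probability at least $1 - m^{-C^2/4}$. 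Since the paper's $\lesssim$ and $\leq$ notation throughout this section suppresses $\polylog(m)$ factors (the $\tilde O$ convention stated in Theorem~\ref{thm:mixing}), this is exactly the claim $\|s_{x,v}\|_\infty \leq 1$.

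The second bound is then essentially a one-line consequence of the first. Recall $r_{x,v} = \G^{-1}\W s_{x,v}$ from Equation~\eqref{eq:Rdefinition}. The operator infinity-norm bound (Lemma~\ref{lem:ginfnorm}, which crucially uses $p<4$) states that for any vector $s$,
\[
\|\G^{-1}\W s\|_\infty \leq \tfrac{1}{4/p-1}\|s\|_\infty.
\]
Applying this with $s = s_{x,v}$ and using the high-probability bound on $\|s_{x,v}\|_\infty$ just established gives $\|r_{x,v}\|_\infty \leq \tfrac{1}{4/p-1}$ on the same high-probability event.

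There is no real obstacle in this proof; the work is already done in Lemma~\ref{ginversebound} (where the hybrid regularization by the log barrier is essential for obtaining $a_i^\top g^{-1} a_i \leq 1$ uniformly in $i$) and in Lemma~\ref{lem:ginfnorm} (where the constraint $p<4$ enters). The only thing to watch is that the ``high probability'' event here must have failure probability polynomially small in $m$, so that the union bound from Lemma~\ref{lem:highprobset} over the $\polylog(m)$-size time cover along the Hamiltonian curve still leaves us with a nice set of total mass close to $1$; this is ensured by taking $C = \Theta(\sqrt{\log m})$ in the Gaussian tail step.
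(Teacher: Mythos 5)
Your proposal is correct and follows essentially the same route as the paper: the variance bound $a_i^\top g^{-1}a_i \leq 1$ from Lemma~\ref{ginversebound} controls each coordinate of $s_{x,v}$ (with Gaussian tails and a union bound over the $m$ facets supplying the high-probability statement, up to the suppressed $\polylog(m)$ factor), and the bound on $\|r_{x,v}\|_\infty$ then follows immediately from the $\|\cdot\|_{\infty\to\infty}$ estimate for $\G^{-1}\W$ in Lemma~\ref{lem:ginfnorm}. Your write-up is in fact slightly more explicit than the paper's about the tail-plus-union-bound step, but there is no substantive difference in approach.
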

\begin{proof}
Note that $g$ is just a scaled version of $g''$:
\begin{align*}
    g = (\frac{m}{n})^{\frac{2/p}{1+2/p}} g'',
\end{align*}
Now computing the variance of the $i$th entry of $s_{x,v}$, we observe using Lemma~\ref{ginversebound}
\begin{align*}
    a_i^\top g^{-1} a_i \leq 1.
\end{align*}
The bound on $\|r_{x,v}\|_\infty$ directly follows from the fact that $\|r_{x,v}\|_\infty  \leq \frac{1}{4/p - 1} \|s_{x,v}\|_\infty$ using Lemma~\ref{lem:ginfnorm}.
\end{proof}

\newpage
\bibliographystyle{plain}
\bibliography{main}

 \newpage
\appendix
\section{Riemannian Geometry}\label{sec:riemanniangeometry}
\subsection{Basic Manifold Definitions}
In this section, we go through some basic definitions in differential geometry that are essential to know in our proofs. A manifold is defined abstractly as a topological space which locally resembles $\mathbb R^n$. 

\begin{definition}
 A manifold $\mathcal M$ is a topological space such that for each point $p \in \mathcal M$, there exists an open set $U$ around $p$ such that $U$ is a homeomorphism to an open set of  $\mathbb R^n$. 
    \paragraph{Tangent Space.}For any point $p \in \mathcal M$, one can define the notion of tangent space for $p$, $T_p(\mathcal M)$, as the equivalence class of the set of curves $\gamma$ starting from $p$ ($\gamma(0)= p$), where we define two such curves $\gamma_0$ and $\gamma_1$ to be equivalent if for any function $f$ on the manifold:
    \begin{align*}
        \frac{d}{dt}f(\gamma_0(t))\big|_{t=0} = \frac{d}{dt}f(\gamma_1(t))\big|_{t=0}.
    \end{align*}
\end{definition}
    On can define a linear structure on $T_p(\mathcal M)$, hence it is a vector space. Now given a positive definite quadratic form $g(p)$ on the vector space $T_p(\mathcal M)$, one can equip the manifold $\mathcal M$ with metric $g$. While the definition of a general manifold is abstract, putting a metric on it allows us to measure length, areas, volumes, etc. on the manifold, and do calculus similar to Euclidean space.
Next, we define some basic notions regarding manifolds.

\paragraph{Differential.}
For a map $f: {\mathcal M} \rightarrow \mathcal N$ between two manifolds, the differential $df_p$ at some point $p \in \mathcal M$ is a linear map from $T_p(\mathcal M)$ to $T_{f(p)}(\mathcal N)$ with the property that for any curve $\gamma(t)$ on $\mathcal M$ with $\gamma(0) = p$, we have
\begin{align}
    df(\frac{d}{dt}\gamma(0)) = \frac{d}{dt}f(\gamma)(0).\label{eq:prop}
\end{align}
. As a special case, for a function $f$ over the manifold, the differential $df$ at some point $p \in \mathcal M$ is a linear functional over $T_p(\mathcal M)$, i.e. an element of $T^*_p(\mathcal M)$. Writing~\eqref{eq:prop} for curve $\gamma_i$ with $\frac{d}{dt}\gamma_i(0) = \partial x_i$, testing property~\eqref{eq:prop}, we see
\begin{align*}
    df(\partial x_i) = \frac{d}{dt} f(\gamma_i(t))\Big|_{t = 0} 
    = \frac{\partial f}{\partial x_i}(\gamma_i(0)).
\end{align*}
We can write $df = \sum_i \frac{\partial f}{\partial x_i}dx_i$.

\paragraph{Vector field.} 
A vector field $V$ is a smooth choice of a vector $V(p) \in T_p(\mathcal M)$ in the tangent space for all $p \in \mathcal M$. 

\paragraph{Metric and inner product.} 
A metric is a tensor  
on the manifold $\mathcal M$ which is simply a smooth choice of a symmetric bilinear map over $\mathcal M$. Alternatively, the metric or dot product $\langle ,\rangle$ can be seen as a bilinear map over the space of vector fields with the tensorization property, i.e. for vector fields $V, W, Z$ and scalar functions $\alpha, \beta$ over $\mathcal M$:
\begin{align}
    &\langle V+W, Z\rangle = \langle V , Z\rangle + \langle W, Z\rangle,\\
    &\langle \alpha V, \beta W\rangle = \alpha \beta \langle V, W\rangle.
\end{align}

\subsection{Manifold Derivatives, Geodesics, Parallel Transport}
\subsubsection{Covariant derivative}
Given two vector fields $V$ and $W$, the covariant derivative, also called the Levi-Civita connection $\nabla_{V}W$ is a bilinear operator with the following properties:
\begin{align*}
    \nabla_{\alpha_1 V_1 + \alpha_2V_2}W = \alpha_1\nabla_{V_1}W + \alpha_2\nabla_{V_2}W,\\
    \nabla_{V}(W_1 + W_2) = \nabla_V(W_1) + \nabla_V(W_2),\\
    \nabla_V(\alpha W_1) = \alpha \nabla_V(W_1) + V(\alpha)W_1
    \end{align*}
    where $V(\alpha)$ is the action of vector field $V$ on scalar function $\alpha$.
Importantly, the property that differentiates the covariant derivative from other kinds of derivaties over manifold is that the covariant derivative of the metric is zero, i.e., 
$\nabla_V g = 0$ for any vector field $V$. In other words, we have the following intuitive rule:
\begin{align*}
    \nabla_V \langle W_1, W_2 \rangle 
    = \langle \nabla_V W_1, W_2\rangle + \langle  W_1, \nabla_V W_2\rangle.
\end{align*}
Moreover, the covariant derivative has the property of being torsion free, meaning that for vector fields $W_1, W_2$:
\begin{align*}
    \nabla_{W_1}W_2 - \nabla_{W_2}W_1 = [W_1, W_2],
\end{align*}
where $[W_1, W_2]$ is the Lie bracket of $W_1, W_2$ defined as the unique vector field that satisfies
\begin{align*}
    [W_1, W_2]f = W_1(W_2(f)) - W_2(W_1(f))
\end{align*}
for every smooth function $f$.

In a local chart with variable $x$, if one represent $V = \sum V^i \partial x_i$, where $\partial x_i$ are the basis vector fields, and $W = \sum W^i \partial x_i$, the covariant derivative is given by
\begin{align*}
    \nabla_V W & = \sum_i V^i\nabla_i W =
    \sum_i V^i \sum_j \nabla_i (W^j \partial x_j) \\
    & = \sum_i V^i \sum_j \partial_i(W^j) \partial x_j) + \sum_i V^i \sum_j W^j \nabla_i \partial x_j)\\
    & =  \sum_j V(W^j) \partial x_j) + \sum_i \sum_j V^i W^j \sum_k \Gamma_{ij}^k \partial x_k)
    = \\
    & = \sum_k \big(V(W^k) + \sum_i \sum_j V^i W^j \Gamma_{ij}^k \big)\partial x_k.\label{eq:covariantexpansion}
\end{align*}
The Christoffel symbols $\Gamma_{ij}^k$ are the representations of the Levi-Cevita derivatives of the basis $\{\partial x_i\}$:
\begin{align*}
    \nabla_{\partial x_j} \partial x_i = \sum_{k} \Gamma_{ij}^k \partial x_k
\end{align*}
and are given by the following formula:
\begin{align*}
    \Gamma_{ij}^k = \frac{1}{2}\sum_{m} g^{km}(\partial_j g_{mi} + \partial_i g_{mj} - \partial_m g_{ij}).
\end{align*}
Above, $g^{ij}$ refers to the $(i,j)$ entry of the inverse of the metric. 
In the following Lemma, we calculate the Christoffel symbols on a Hessian manifold and $g = D^2\phi$ is the Hessian of a convex function.
\begin{lemma}\label{lem:cristoffelsymbols}
    On a Hessian manifold with metric $g$ we have
    \begin{align*}
        \Gamma_{ij}^k = \frac{1}{2} \sum_m g^{km}Dg_{mij}.
    \end{align*}
\end{lemma}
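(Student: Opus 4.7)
The plan is to start from the general coordinate formula for the Christoffel symbols of the Levi-Civita connection, which is stated immediately preceding the lemma:
\[
\Gamma_{ij}^k = \tfrac{1}{2}\sum_{m} g^{km}\bigl(\partial_j g_{mi} + \partial_i g_{mj} - \partial_m g_{ij}\bigr),
\]
and then exploit the extra structure that $g$ is the Hessian of a convex function $\phi$, i.e.\ $g_{ij} = \partial_i \partial_j \phi$. Substituting this into the three terms inside the parenthesis turns each of $\partial_j g_{mi}$, $\partial_i g_{mj}$, and $\partial_m g_{ij}$ into a third-order partial derivative of $\phi$, namely $\partial_j\partial_m\partial_i\phi$, $\partial_i\partial_m\partial_j\phi$, and $\partial_m\partial_i\partial_j\phi$ respectively.

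The key observation is that for a smooth function $\phi$ the partials commute (Schwarz's theorem), so all three third derivatives are equal to $\partial_i\partial_j\partial_m \phi$. Concretely, $\partial_j g_{mi} = \partial_i g_{mj} = \partial_m g_{ij}$ for every triple $(i,j,m)$. Hence the sum inside the parenthesis collapses:
\[
\partial_j g_{mi} + \partial_i g_{mj} - \partial_m g_{ij} \;=\; 2\,\partial_m g_{ij} - \partial_m g_{ij} \;=\; \partial_m g_{ij} \;=\; Dg_{mij},
\]
where we use the paper's convention that $Dg_{mij} = \partial_m g_{ij}$ denotes the directional derivative of the $(i,j)$ entry of $g$ in the $m$-th coordinate direction. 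Plugging back into the general formula yields exactly
\[
\Gamma_{ij}^k = \tfrac{1}{2}\sum_m g^{km} Dg_{mij},
\]
as desired.

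There is essentially no obstacle here beyond making sure the notation lines up: the only nontrivial input is the full symmetry of the third-order derivative tensor $\partial_i\partial_j\partial_m\phi$ in all three indices, which holds because $\phi$ is smooth (convex with nondegenerate Hessian by assumption, so in particular $C^3$). This is the same symmetry mentioned in Section~\ref{sec:R2} of the paper, $\partial_k g_{ij} = \partial_i g_{jk} = \partial_j g_{ik}$, which is the defining feature of Hessian manifolds. So the proof is really a short one-line verification, and I would present it as such without any additional apparatus.
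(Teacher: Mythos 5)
Your proof is correct and follows exactly the paper's argument: both invoke the symmetry $\partial_j g_{mi} = \partial_i g_{mj} = \partial_m g_{ij}$ of third derivatives of $\phi$ on a Hessian manifold to collapse the general Levi-Civita formula to $\frac{1}{2}\sum_m g^{km} Dg_{mij}$. No gaps; your version just spells out the cancellation slightly more explicitly than the paper does.
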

\begin{proof}
Since the manifold is Hessian, we have
\begin{align*}
    \partial_j g_{mi} = \partial_i g_{jm} = \partial_m g_{ij} = Dg_{ijm},
\end{align*}
where $Dg_{ijm}$ is just the notation that we use for Hessian manifolds.

\end{proof}

\subsubsection{Parallel Transport}
The notion of parallel transport of a vector $V$ along a curve $\gamma$ can be generalized from Euclidean space to a manifold. On a manifold, parallel transport is a vector field restricted to $\gamma$ such that $\nabla_{\gamma'}(V) = 0$. By this definition, for two parallel transport vector fields $V(t), W(t)$ we have that their dot product $\langle V(t), W(t)\rangle$ is preserved, i.e., $\frac{d}{dt} \langle V(t), W(t)\rangle = 0$. 

\subsubsection{Geodesic}
A geodesic is a curve $\gamma$ on $\mathcal M$ is a ``locally shortest path", i.e., the tangent to the curve is parallel transported along the curve: $\nabla_{\dot \gamma}\dot \gamma = 0$ ($\dot \gamma$ denotes the time derivative of the curve $\gamma$.) Writing this in a chart, one can see it is a second order nonlinear ODE 
which locally has a unique solution given initial location and speed. 

\begin{align}
    \frac{d^2\gamma_k}{dt^2}(t) = -\frac{1}{2}\sum_{i,j} \frac{d\gamma_i}{dt} \frac{d \gamma_j}{dt} \Gamma_{ij}^k, \ \ \forall k.
\end{align}

\subsubsection{Riemann Tensor}
The Riemann tensor is particular tensor on the manifold which arise from the covariant derivative. In particular, it is a linear mapping 
from $T_p(\mathcal M) \times T_p(\mathcal M) \times T_p(\mathcal M) \rightarrow T_p(\mathcal M)$ defined as
\begin{align*}
    R(X,Y)Z = \nabla_X \nabla_Y Z - \nabla_Y \nabla_X Z - \nabla_{[X,Y]}Z.
\end{align*}
The Riemann tensor can be calculated in a chart given the following formula:
\begin{align}
    R^i_{jkl} & = \frac{\partial \Gamma_{l j}^i}{\partial x_k} - \frac{\partial \Gamma_{kj}^i}{\partial x_l} + \sum_p (\Gamma_{kp}^i \Gamma_{l j}^p - \Gamma_{lp}^i \Gamma_{kj}^p).\label{eq:riemanntensordef}
\end{align}
In the following Lemma, we calculate the Riemann tensor on a Hessian manifold:
\begin{lemma}\label{lem:riemanntensor}
The Riemann tensor is given by
\begin{align*}
    R^i_{jkl}  = -\frac{1}{4} g^{i, \ell}\D g_{\ell,k,p} g^{p, \ell_2} \D g_{\ell_2,l,j} + \frac{1}{4} g^{i, \ell}\D g_{\ell,l,p} g^{p,\ell_2} \D g_{\ell_2, k,j}.
\end{align*}
\end{lemma}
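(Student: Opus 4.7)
The plan is to unwind the general formula \eqref{eq:riemanntensordef} for the Riemann tensor using the explicit expression for the Christoffel symbols on a Hessian manifold from Lemma~\ref{lem:cristoffelsymbols}, and then exploit the total symmetry of the tensor $Dg_{ijk} = \partial_i\partial_j\partial_k\phi$ (which holds because $g$ is a Hessian) to produce the claimed cancellations. Concretely, I would proceed as follows.

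First, I would write $\Gamma_{lj}^i = \tfrac{1}{2}\sum_m g^{im} Dg_{mlj}$ and differentiate with the product rule. Using the standard identity $\partial_k g^{im} = -\sum_{p,q} g^{ip}(\partial_k g_{pq})g^{qm} = -\sum_{p,q} g^{ip}Dg_{pkq}g^{qm}$, this gives
\begin{align*}
\partial_k \Gamma^i_{lj} \;=\; -\tfrac{1}{2}\sum_{p,q,m} g^{ip}Dg_{pkq}g^{qm}Dg_{mlj} \;+\; \tfrac{1}{2}\sum_m g^{im}D^2g_{k,m,l,j},
\end{align*}
and a symmetric formula for $\partial_l\Gamma^i_{kj}$ with $k$ and $l$ swapped. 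The crucial point is that $D^2g_{k,m,l,j}$ is the fourth partial derivative of $\phi$, so it is completely symmetric in all four indices; hence the two fourth-derivative terms are equal and cancel in $\partial_k\Gamma^i_{lj}-\partial_l\Gamma^i_{kj}$. This kills the only obstruction to a closed form in $Dg$ alone.

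Next, I would compute the quadratic Christoffel contributions,
\begin{align*}
\sum_p \Gamma^i_{kp}\Gamma^p_{lj} \;=\; \tfrac{1}{4}\sum_{m,p,q} g^{im}g^{pq}Dg_{mkp}Dg_{qlj},
\end{align*}
and the corresponding expression with $k\leftrightarrow l$. Adding the contributions from steps one and two gives
\begin{align*}
R^i_{jkl} \;=\; -\tfrac{1}{2}\sum g^{ip}Dg_{pkq}g^{qm}Dg_{mlj} + \tfrac{1}{2}\sum g^{ip}Dg_{plq}g^{qm}Dg_{mkj} + \tfrac{1}{4}\sum g^{im}g^{pq}Dg_{mkp}Dg_{qlj} - \tfrac{1}{4}\sum g^{im}g^{pq}Dg_{mlp}Dg_{qkj}.
\end{align*}
Because $Dg$ is totally symmetric, each pair of terms has the same index structure after a harmless relabeling of the dummy indices $m,p,q$, so the $-\tfrac{1}{2}$ and $+\tfrac{1}{4}$ contributions combine into $-\tfrac{1}{4}$, and the $+\tfrac{1}{2}$ and $-\tfrac{1}{4}$ contributions combine into $+\tfrac{1}{4}$. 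Relabeling $m\mapsto \ell$ and $q\mapsto \ell_2$ yields exactly the stated formula.

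There is no real obstacle here, beyond careful index bookkeeping; the only conceptual step is to notice that Hessian symmetry of $g$ is exactly what is needed to (a) kill the fourth-derivative terms via equality of mixed partials and (b) to freely permute the three indices of every $Dg$ factor so that the two pieces of the surviving expression merge with matching coefficients. The formula in the statement is therefore really a compact way of saying that on a Hessian manifold the curvature is built entirely out of the third derivatives of $\phi$ contracted with $g^{-1}$, which is consistent with the formulas for the Ricci tensor used later in Section~\ref{sec:smoothness}.
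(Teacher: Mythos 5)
Your proposal is correct and follows essentially the same route as the paper's proof: expand the Christoffel symbols via Lemma~\ref{lem:cristoffelsymbols}, differentiate using $\partial_k g^{im} = -\sum g^{ip}\D g_{pkq}g^{qm}$, cancel the fourth-derivative terms by symmetry of mixed partials of $\phi$, and combine the $-\tfrac12,+\tfrac12$ derivative contributions with the $+\tfrac14,-\tfrac14$ quadratic Christoffel contributions to get the $\mp\tfrac14$ coefficients. The only difference is that you make the cancellation of the $D^2g$ terms and the dummy-index relabeling explicit, which the paper leaves implicit.
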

\begin{proof}
    We consider the terms in Equation~\eqref{eq:riemanntensordef} one by one. For the first term
    \begin{align*}
        \frac{\partial \Gamma^i_{lj}}{\partial x_k} &= \partial_{x_k}(\frac{1}{2}\sum_{\ell_2} g^{i\ell_2}\D g_{\ell_2lj})\\
        &= \frac{1}{2} \sum_{\ell_2} \partial_{x_k}(g^{i\ell_2})\D g_{\ell_2lj} + \frac{1}{2} \sum_{\ell_2} g^{i\ell_2}\partial_{x_k}(\D g_{\ell_2 lj})\\
        &= -\frac{1}{2} \sum_{\ell_2} g^{i\ell}\D g_{\ell k p}g^{p\ell_2}\D g_{\ell_2 lj} + \frac{1}{2}\sum_{\ell_2} g^{i\ell_2}D^2 g_{k\ell_2 lj}.
    \end{align*}
    Similarly 
    \begin{align*}
        \frac{\partial \Gamma^i_{kj}}{\partial x_l} & =  -\frac{1}{2} \sum_{\ell_2} g^{i\ell}\D g_{\ell l p}g^{p\ell_2}\D g_{\ell_2 kj} + \frac{1}{2}\sum_{\ell_2} g^{i\ell_2}D^2 g_{l\ell_2 kj}.
    \end{align*}
    Hence
    \begin{align}
        \frac{\partial \Gamma^i_{lj}}{\partial x_k} - \frac{\partial \Gamma^i_{kj}}{\partial x_l} = 
        -\frac{1}{2} \sum_{\ell_2} g^{i\ell}\D g_{\ell k p}g^{p\ell_2}\D g_{\ell_2 lj} + \frac{1}{2} \sum_{\ell_2} g^{i\ell}\D g_{\ell l p}g^{p\ell_2}\D g_{\ell_2 kj}.\label{eq:komaki1} 
    \end{align}
    For the third and forth terms
    \begin{align}
     \sum_p \Gamma_{kp}^i \Gamma_{l j}^p =   \frac{1}{4} \sum_{\ell_2} g^{i\ell}\D g_{\ell k p}g^{p\ell_2}\D g_{\ell_2 lj},\\
     \sum_p \Gamma_{lp}^i \Gamma_{k j}^p =   \frac{1}{4} \sum_{\ell_2} g^{i\ell}\D g_{\ell l p}g^{p\ell_2}\D g_{\ell_2 kj}.\label{eq:komaki2}
    \end{align}
    Combining Equations~\eqref{eq:komaki1} and~\eqref{eq:komaki2} and plugging into~\eqref{eq:riemanntensordef} completes the proof.
\end{proof}

\subsubsection{Ricci tensor}
The Ricci tensor is just the trace of the Riemann tensor with respect to the second and third components or first and forth components, i.e. the trace of the operator $R(.,X)Y$:
\begin{align*}
    \Ricci(.,X)Y = \tr(R(.,X)Y).
\end{align*}
Equivalently, if $\{e_i\}$ is an orthogonal basis in the tangent space, we have
\begin{align}\label{Ricci2}
    \Ricci(X,Y) = \sum_i \langle Y, R(X,e_i)e_i\rangle.
\end{align}
\begin{lemma}[Form of the Ricci tensor on Hessian manifolds]\label{lem:riccitensor2}
On a Hessian manifold, the Ricci tensor is given by
\begin{align*}
    \Ricci(v_1,v_2) = -\frac{1}{4}\tr(g^{-1}\D g(v_1)g^{-1}\D g(v_2)) + \frac{1}{4}v_1^\top\D g(g^{-1}\tr(g^{-1}\D g))v_2.
\end{align*}
\end{lemma}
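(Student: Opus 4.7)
The plan is to obtain the Ricci tensor by specializing the Riemann tensor formula of Lemma~\ref{lem:riemanntensor} and performing the standard Ricci contraction. The key structural fact I will use repeatedly is that on a Hessian manifold $g_{ij}=\partial_i\partial_j\phi$, so $\D g_{ijk}=\partial_i\partial_j\partial_k\phi$ is \emph{totally symmetric} under permutations of $(i,j,k)$; this symmetry is what allows the ``bookkeeping'' sums to collapse into the coordinate-free operators $\D g(\xi)$ and $\tr(g^{-1}\D g(v_1)g^{-1}\D g(v_2))$.

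First, I would set $k=i$ in Lemma~\ref{lem:riemanntensor} and sum over $i$ to get the Ricci components
\[
\Ricci_{jl}=\sum_i R^i_{jil}
=-\tfrac14\!\!\sum_{i,\alpha,p,\beta}\!\! g^{i\alpha}\D g_{\alpha i p}\,g^{p\beta}\D g_{\beta l j}
+\tfrac14\!\!\sum_{i,\alpha,p,\beta}\!\! g^{i\alpha}\D g_{\alpha l p}\,g^{p\beta}\D g_{\beta i j}.
\]
Second, I would simplify each of the two pieces. In the first piece, the inner contraction
$\sum_{i,\alpha}g^{i\alpha}\D g_{\alpha i p}=\tr(g^{-1}\partial_p g)=[\tr(g^{-1}\D g)]_p$
isolates the $p$-th component of the vector $\tau:=\tr(g^{-1}\D g)$, and the remaining factor $\sum_{p,\beta}\tau_p g^{p\beta}\D g_{\beta l j}$ is by definition the $(l,j)$ entry of the matrix $\D g(g^{-1}\tau)=\D g(\xi)$ where $\xi=g^{-1}\tr(g^{-1}\D g)$. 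In the second piece, viewing $\D g(v)$ as the matrix with entries $\sum_p v^p\D g_{\cdot\cdot p}$, the sum is exactly
$\tr\!\big(g^{-1}\D g(e_l)\,g^{-1}\D g(e_j)\big)$; here total symmetry of $\D g$ is what lets me reorder $\D g_{\alpha l p}$ and $\D g_{\beta i j}$ so their indices match the natural slots of the two $g^{-1}$'s inside the trace.

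Third, I would reinterpret these contracted expressions as bilinear forms. Replacing $e_l\to v_1$, $e_j\to v_2$ by bilinearity, one piece contributes a multiple of $v_1^\top\D g(\xi)v_2$ and the other a multiple of $\tr(g^{-1}\D g(v_1)g^{-1}\D g(v_2))$, yielding exactly the two tensors appearing in the claim; both are manifestly symmetric in $(v_1,v_2)$, consistent with Ricci being symmetric. The coefficients $\pm\tfrac14$ simply propagate through.

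The only real subtlety is matching the sign convention: the definition in~\eqref{Ricci2} contracts via $\sum_i\langle Y,R(X,e_i)e_i\rangle$, which differs by index placements (and possibly a sign) from the ``upper-into-third'' contraction $\sum_i R^i_{jil}$ I have written above. The main obstacle, then, is not the computation itself but verifying that the contraction in~\eqref{Ricci2} produces precisely the signs $-\tfrac14$ on the trace term and $+\tfrac14$ on the $\D g(\xi)$ term; if a convention-induced sign flip appears, I would use the pair symmetry $R_{abcd}=R_{cdab}$ of the lowered Riemann tensor together with the symmetry $\D g_{abc}=\D g_{bac}=\D g_{acb}$ to reconcile the two forms before reading off the final identity.
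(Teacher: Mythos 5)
Your strategy is the same as the paper's: plug the Hessian-manifold Riemann formula of Lemma~\ref{lem:riemanntensor} into the Ricci contraction, use total symmetry of $\D g_{ijk}$ to recognize the two pieces as $\tr(g^{-1}\D g(v_1)g^{-1}\D g(v_2))$ and $v_1^\top \D g(\xi)v_2$ with $\xi = g^{-1}\tr(g^{-1}\D g)$, and extend bilinearly; those identifications are correct and match the paper verbatim. The one concrete issue is the contraction you chose: with Lemma~\ref{lem:riemanntensor}, your $\sum_i R^i_{jil}$ (setting $k=i$) evaluates to $-\tfrac14[\D g(\xi)]_{lj}+\tfrac14\tr\big(g^{-1}\D g(e_l)g^{-1}\D g(e_j)\big)$, i.e.\ the coefficients land on the \emph{opposite} terms from the statement, because the Hessian-manifold formula is antisymmetric under $k\leftrightarrow l$. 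The paper's proof instead contracts the upper index with the last lower index (it sets $i=l$ in $R^i_{jkl}$ and sums), which produces $-\tfrac14\tr(g^{-1}\D g_k g^{-1}\D g_j)+\tfrac14[\D g(\xi)]_{kj}$ directly, matching the lemma. So the ``convention-matching'' step you defer at the end is not a cosmetic check but exactly where the signs are decided; to complete your argument you must either adopt the paper's $(i=l)$ contraction (the one its definition of $\Ricci$ in the introduction corresponds to) or carry out the reconciliation via the pair symmetry $R_{abcd}=R_{cdab}$ that you sketch, rather than leave it as a possibility.
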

\begin{proof}
Using the form of Riemann tensor in~\eqref{eq:riemanntensordef} and the definition of Ricci tensor in~\eqref{Ricci2}
\begin{align*}
    \Ricci(\partial_j,\partial_k) &= \sum_{i=l=1}^n \big(\frac{\partial \Gamma_{l j}^i}{\partial x_k} - \frac{\partial \Gamma_{kj}^i}{\partial x_l} + \sum_p (\Gamma_{kp}^i \Gamma_{l j}^p - \Gamma_{lp}^i \Gamma_{kj}^p) \big)\\
    & = \sum_{i=l=1}^n -\frac{1}{4} g^{i, \ell}\D g_{\ell,k,p} g^{p, \ell_2} \D g_{\ell_2,l,j} + \frac{1}{4} g^{i, \ell}\D g_{\ell,l,p} g^{p,\ell_2} \D g_{\ell_2, k,j}\\
    & = -\frac{1}{4}\tr(g^{-1}\D g_k g^{-1}\D g_j) + \frac{1}{4}e_j^\top \D g(g^{-1}\tr(g^{-1}\D g))e_k.
\end{align*}
Therefore, for arbitrary vector $v_1$ and $v_2$
\begin{align*}
    \Ricci(v_1, v_2) &= \sum_{j,k} {v_1}_j {v_2}_k \big(-\frac{1}{4}Tr(g^{-1}\D g_k g^{-1}Dg_j) + \frac{1}{4}\D g(g^{-1}\tr(g^{-1}\D g)) \big)\\
    &= -\frac{1}{4}\tr(g^{-1}\D g(v_1)g^{-1}Dg(v_2)) + \frac{1}{4}v_1^\top \D g(g^{-1}\tr(g^{-1}\D g))v_2.
\end{align*}
\end{proof}

\subsubsection{Exponential Map}
The exponential $\exp_p(v)$ at point $p$ is a map from $T_p(\mathcal M)$ to $\mathcal M$, defined as the point obtained on a geodesic starting from $p$ with initial speed $v$, after time $1$. We use $\gamma_t(x)$ to denote the point after going on a geodesic starting from $x$ with initial velocity $\nabla F$, after time $t$. 

\begin{lemma}[Commuting derivatives]\label{lem:commutingderivatives}
    Given a family of curves $\gamma_s(t)$ for $s\in [0,s']$ and $t\in [0,t']$, we have
    \begin{align*}
        \D_s \partial_t \gamma_s(t) = \D_t \partial_s\gamma_s(t).
    \end{align*}
\end{lemma}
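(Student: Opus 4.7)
The statement is the standard ``symmetry of the mixed covariant derivative'' lemma in Riemannian geometry, and the plan is to deduce it from the torsion-free property of the Levi--Civita connection that was built into the definition of $\nabla$ in the preliminaries. View $\gamma_s(t)$ as a smooth map $\Gamma : [0,s'] \times [0,t'] \to \mathcal{M}$, $(s,t) \mapsto \gamma_s(t)$. Then $\partial_t \gamma_s(t) = d\Gamma(\partial/\partial t)$ and $\partial_s \gamma_s(t) = d\Gamma(\partial/\partial s)$ are the pushforwards of the coordinate vector fields on the flat parameter rectangle, whose Lie bracket vanishes. Since pushforward preserves Lie brackets, this forces $[\partial_t \gamma, \partial_s \gamma] = 0$ along the image of $\Gamma$, and the torsion-free identity $\nabla_X Y - \nabla_Y X = [X,Y]$ for the Levi--Civita connection then yields $\D_s \partial_t \gamma - \D_t \partial_s \gamma = 0$.

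Since the preceding appendix material emphasizes concrete local-chart computations via Christoffel symbols (Lemma~\ref{lem:cristoffelsymbols}), an even more direct and self-contained route is to verify the identity in a coordinate chart. Writing $\gamma_s(t) = (\gamma^1, \dots, \gamma^n)$, the expansion in~\eqref{eq:covariantexpansion} gives
\begin{align*}
\D_s \partial_t \gamma &= \sum_k \Bigl( \tfrac{\partial^2 \gamma^k}{\partial s\, \partial t} + \sum_{i,j} \Gamma_{ij}^k \, \tfrac{\partial \gamma^i}{\partial t}\,\tfrac{\partial \gamma^j}{\partial s} \Bigr) \partial_{x_k}, \\
\D_t \partial_s \gamma &= \sum_k \Bigl( \tfrac{\partial^2 \gamma^k}{\partial t\, \partial s} + \sum_{i,j} \Gamma_{ij}^k \, \tfrac{\partial \gamma^i}{\partial s}\,\tfrac{\partial \gamma^j}{\partial t} \Bigr) \partial_{x_k}.
\end{align*}
The two expressions coincide because (i) ordinary partial derivatives of the smooth functions $\gamma^k$ commute, and (ii) the Christoffel symbols of the Levi--Civita connection are symmetric in their lower indices, $\Gamma_{ij}^k = \Gamma_{ji}^k$, which is visible from the explicit formula in Lemma~\ref{lem:cristoffelsymbols}.

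There is essentially no obstacle here: the lemma is a genuine identity rather than an estimate, and the only non-trivial ingredient is the symmetry $\Gamma_{ij}^k = \Gamma_{ji}^k$, which was already recorded. The only mild care needed is to interpret $\D_s$ and $\D_t$ consistently as covariant derivatives of vector fields defined along the two-parameter map $\Gamma$ (rather than on an open subset of $\mathcal{M}$), which is the standard extension of $\nabla$ to vector fields along a smooth map and does not require any new structure.
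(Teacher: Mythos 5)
Your first argument is exactly the paper's proof: the paper also deduces the identity from torsion-freeness of the Levi--Civita connection together with the vanishing Lie bracket $[\partial_t,\partial_s]=0$ of the pushed-forward coordinate fields on the parameter rectangle. The additional coordinate-chart verification via the symmetry $\Gamma_{ij}^k=\Gamma_{ji}^k$ is a correct, more elementary alternative, but it is not needed beyond what the paper already does.
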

\begin{proof}
    Let $\partial_s$ and $\partial_t$ be the standard vector fields in the two dimensional $\R^2$ space $(t,s)$. Then, we know
    \begin{align*}
        \D_s \partial_t \gamma_s(t) - \D_t \partial_s \gamma_s(t) &= [\partial_s \gamma_s(t), \partial_t \gamma_s(t)]\\
        &=[\partial_t, \partial_s] = 0.
    \end{align*}
    where $[.,.]$ is the Lie bracket.
\end{proof}

\subsection{Hessian manifolds}
In this work we are working with a specific class of manifold whose metric is impoesd by the Hessian of our hybrid barrier. A nice property of Hessian manifolds is that the terms in the Riemann tensor which depends on the second derivative of the metric cancels out, and we end up just with the first derivative and the metric itself. Specifically, for a Hessian manifold recall from Lemmas~\ref{lem:cristoffelsymbols},~\ref{lem:riemanntensor}, and~\ref{lem:riccitensor} we have the following equations for Cristoffel symbols, the Riemann tensor, and the Ricci tensor:
\begin{align*}
    & \Gamma_{ij}^k = \frac{1}{2}(g^{-1}\D_kg)_{ij},\\
    & \Rr^i_{jkl}  = \frac{\partial \Gamma_{l j}^i}{\partial x_k} - \frac{\partial \Gamma_{kj}^i}{\partial x_l} + \sum_p (\Gamma_{kp}^i \Gamma_{l j}^p - \Gamma_{lp}^i \Gamma_{kj}^p)\\
    & \ \ \ \ \ \ \ = \sum_{\ell, \ell_2}-\frac{1}{4} g^{i, \ell}\D g_{\ell,k,p} g^{p, \ell_2} \D g_{\ell_2,l,j} + \frac{1}{4} g^{i, \ell}\D g_{\ell,l,p} g^{p,\ell_2} \D g_{\ell_2, k,j},
    \\ 
    & \Ricci(\partial_k,\partial_j)  = \sum_{\ell, \ell_2} -\frac{1}{4} g^{i, \ell}\D g_{\ell,k,p} g^{p, \ell_2} \D g_{\ell_2,i,j} + \frac{1}{4}  g^{i,\ell}\D g_{\ell,i,p} g^{p,\ell_2} \D g_{\ell_2, k,j}.
\end{align*}
As we mentioned, the change of the determinant of the Jacobian matrices $J^{v_{\gamma_s}}_y$ regarding the Hamiltonian family $(\gamma_s(t))$ between $x_0$ and $x_1$ is related to the rate of change of the Ricci tensor on the manifold. In Lemma~\ref{lem:riccitensor} below, we concretely calculate the Ricci tensor for a Hessian manifold in the Euclidean chart, based on the metric $g$ and its derivatives.

\begin{lemma}[Form of Ricci tensor on Hessian manifolds]\label{lem:riccitensor}
On a Hessian manifold, the Ricci tensor is given by
\begin{align}
    \Ricci(v_1,v_2) = -\frac{1}{4}tr(g^{-1}\D g(v_1)g^{-1}\D g(v_2)) + \frac{1}{4}v_1^T\D g(g^{-1}tr(g^{-1}\D g))v_2.\label{eq:riccitensor}
\end{align}
\end{lemma}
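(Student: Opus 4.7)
The statement is essentially a restatement of Lemma~\ref{lem:riccitensor2}, specialized to expose the formula in the Hessian-manifold section, so the plan is to reassemble it cleanly from the ingredients already set up in Appendix~\ref{sec:riemanniangeometry}. First, I would invoke the explicit expression for the Riemann tensor on a Hessian manifold proved in Lemma~\ref{lem:riemanntensor}:
\[
R^i_{jkl}=-\tfrac{1}{4}g^{i\ell}\D g_{\ell,k,p}g^{p\ell_2}\D g_{\ell_2,l,j}+\tfrac{1}{4}g^{i\ell}\D g_{\ell,l,p}g^{p\ell_2}\D g_{\ell_2,k,j}.
\]
The derivation of this formula is the content of Lemma~\ref{lem:riemanntensor} and uses two facts particular to Hessian manifolds: (i) the Christoffel symbols satisfy $\Gamma_{ij}^k=\tfrac{1}{2}(g^{-1}\D_k g)_{ij}$ by Lemma~\ref{lem:cristoffelsymbols}, and (ii) the three-index symmetry $\partial_j g_{mi}=\partial_i g_{mj}=\partial_m g_{ij}$, which makes the second-derivative terms $\partial_k\Gamma^i_{lj}-\partial_l\Gamma^i_{kj}$ cancel against each other and leaves only the Christoffel-squared portion.

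Next, I would take the Ricci trace in the standard way, using definition~(\ref{Ricci2}): $\Ricci(\partial_j,\partial_k)=\sum_i R^i_{j k i}$ (equivalently, contract the first upper index with the fourth lower index in $R^i_{jkl}$). This yields
\[
\Ricci(\partial_j,\partial_k)=\sum_{i,\ell,\ell_2}\Big(-\tfrac{1}{4}g^{i\ell}\D g_{\ell,k,p}g^{p\ell_2}\D g_{\ell_2,i,j}+\tfrac{1}{4}g^{i\ell}\D g_{\ell,i,p}g^{p\ell_2}\D g_{\ell_2,k,j}\Big).
\]
The first summand is the $(j,k)$ entry of $-\tfrac{1}{4}\tr(g^{-1}\D g_k\,g^{-1}\D g_j)$ by recognizing the contractions as matrix products, using $\D g_{\ell_2,i,j}=(\D g_j)_{\ell_2,i}$ and $\D g_{\ell,k,p}=(\D g_k)_{\ell,p}$.

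For the second summand, I would observe that the inner sum $\sum_{i,\ell}g^{i\ell}\D g_{\ell,i,p}$ is exactly $\tr(g^{-1}\D g_p)$, i.e., the $p$-th component of the vector $\tr(g^{-1}\D g)$. Contracting with $g^{p\ell_2}$ then gives the $\ell_2$-th component of $\xi\triangleq g^{-1}\tr(g^{-1}\D g)$, and the remaining sum over $\ell_2$ pairs this with $\D g_{\ell_2,k,j}$ to produce $[\D g(\xi)]_{k,j}$. Extending bilinearly in $v_1,v_2$ then gives
\[
\Ricci(v_1,v_2)=-\tfrac{1}{4}\tr\!\big(g^{-1}\D g(v_1)\,g^{-1}\D g(v_2)\big)+\tfrac{1}{4}v_1^{\top}\D g\!\big(g^{-1}\tr(g^{-1}\D g)\big)v_2,
\]
as claimed. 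There is no genuine obstacle here — all pieces are available from Lemmas~\ref{lem:cristoffelsymbols} and~\ref{lem:riemanntensor} and from the defining equation~(\ref{Ricci2}). The only care required is index bookkeeping: making sure that (a) the Hessian symmetry is applied so the second-order $\partial\Gamma$ terms cancel, and (b) the contraction $i=l$ in $R^i_{jkl}$ is done consistently with the convention so that the ``diagonal'' pattern of indices lines up to produce $\tr(g^{-1}\D g)$ in the second term rather than something involving $g^{-1}\D g\,g^{-1}$ twice. If anything is slightly delicate, it is ensuring the second term comes out as $\D g(\xi)$ (a symmetric bilinear form in $v_1,v_2$) rather than a non-symmetric expression; this follows automatically from the Hessian three-index symmetry of $\D g_{\ell_2,k,j}$.
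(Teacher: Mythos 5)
Your proposal is correct and follows essentially the same route as the paper: the paper proves this statement (as Lemma~\ref{lem:riccitensor2}) by plugging the Hessian-manifold Riemann tensor from Lemma~\ref{lem:riemanntensor} into the trace definition~\eqref{Ricci2} with $i=l$, identifying the two contractions as $-\tfrac{1}{4}\tr(g^{-1}\D g_k g^{-1}\D g_j)$ and $\tfrac{1}{4}e_j^\top \D g(g^{-1}\tr(g^{-1}\D g))e_k$, and extending bilinearly, exactly as you do. Your index bookkeeping, including the use of the Hessian symmetry to cancel the $\partial\Gamma$ terms and to identify the second term as $\D g(\xi)$, matches the paper's argument.
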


we use the formula of Ricci tensor on manifold in section~\ref{sec:R2} and bound its derivative to bound the rate of change of the pushforward density of RHMC going from $x_0$ to $x_1$ in section~\ref{sec:riccichangebound}. Note that we only need to have a multiplicative control over the change of density of a sampled Gaussian vector on the destination point on the manifold, as we move from $x_0$ to $x_1$. 

\newpage

\section{Hamiltonian Curves and Fields on Manifold}\label{sec:hmc}
Here we recall the formulation of the Hamiltonian curve based on covariant differentiation. Starting from the definition of the hamiltonian ODE for the potential $H(x,v) = f(x) + \frac{1}{2}\log((2\pi)^n \det g(x)) + \frac{1}{2}v^T g(x)^{-1} v$.
\begin{align*}
    &\frac{dx}{dt} = g(x)^{-1} v,\\
    &\frac{dv}{dt} = -\nabla f(x) -\frac{1}{2}\tr[g(x)^{-1}\D g(x)] + \frac{1}{2}\frac{dx}{dt}^\top \D g(x) \frac{dx}{dt}.
\end{align*}
Taking derivative with respect to $t$ from the first Equation and then using the second equation, we get
\begin{align*}
    \frac{d^2x}{dt^2} & = -g(x)^{-1}\D g(x)[\frac{dx}{dt}]g(x)^{-1}v + g(x)^{-1}\frac{dv}{dt}\\
    &= -g(x)^{-1}\D g(x)[\frac{dx}{dt}]\frac{dx}{dt} - g(x)^{-1}\nabla f(x) - \frac{1}{2}g^{-1}\tr[g^{-1}\D g(x)] + \frac{1}{2}g(x)^{-1}{\frac{dx}{dt}}^\top \D g(x) \frac{dx}{dt},\\
    &= -\frac{1}{2}g(x)^{-1}\D g(x)[\frac{dx}{dt}]\frac{dx}{dt} - g(x)^{-1}\nabla f(x) - \frac{1}{2}g^{-1}\tr[g^{-1}\D g(x)],
\end{align*}
which implies
\begin{align}
    \frac{d^2x}{dt^2} + \frac{1}{2}g(x)^{-1}\D g(x)[\frac{dx}{dt}]\frac{dx}{dt} = - g(x)^{-1}\nabla f(x) - \frac{1}{2}g^{-1}\tr[g^{-1}\D g(x)].\label{eq:cristoffelcomesout}
\end{align}
But the left hand side of Equation~\eqref{eq:cristoffelcomesout} is the definition of Christoffel symbols as in Lemma~\eqref{lem:cristoffelsymbols}. To see this, note that
\begin{align*}
    \left( g(x)^{-1}\D g(x)[\frac{dx}{dt}]\frac{dx} {dt} \right)_k = 
    \sum_{i,j} (\sum_m g^{km}\D g_{mij}) \frac{dx_i}{dt} \frac{dx_j}{dt},
\end{align*}
where $\left( g(x)^{-1}\D g(x)[\frac{dx}{dt}]\frac{dx} {dt} \right)_k$ is the $k$th entry of $g(x)^{-1}\D g(x)[\frac{dx}{dt}]\frac{dx} {dt}$.
Moreover
\begin{align*}
    (\frac{d^2 x}{dt^2})_k &= \frac{d}{dt}(\frac{dx_k}{dt}) = (\sum_{i=1}^n\frac{dx_i}{dt}\partial x_i)(\frac{dx_k}{dt}).
\end{align*}

Hence, from the definition of Cristoffel symbols and its expansion in Equation~\eqref{eq:covariantexpansion} we see

\begin{align*}
    \D_t \frac{dx}{dt} = -g(x)^{-1}\nabla f - \frac{1}{2}g(x)^{-1}\tr[g(x)^{-1}\D g(x)],
\end{align*}
where $D_t\frac{dx}{dt} = \nabla_{\frac{dx}{dt}} \frac{dx}{dt}$ is covariant differentiation and we look at $\frac{dx}{dt} = \sum_{i=1}^n \frac{dx_i}{dt}\partial x_i$ as a vector in the tangent space of $x$.
We define the right hand side of the above equation as the bias of Hamiltonian Monte Carlo:
\begin{align*}
    \mu(x) \triangleq -g(x)^{-1}\nabla f - \frac{1}{2}g(x)^{-1}\tr[g(x)^{-1}\D g(x)].
\end{align*}

\begin{proof}[Proof of Lemma~\ref{lem:hmcfieldODE}]
We start from the ODE of HMC:
\begin{align*}
    \gamma_s''(t) = \mu(\gamma_s(t)).
\end{align*}
Taking covariant derivative in direction $s$:
\begin{align*}
    \D_s \mu(\gamma_s(t))= \D_s \gamma_s''(t) &= \D_s \D_t \gamma_s''(t)\\
    &= \nabla_{\partial_s\gamma_s(t)}\nabla_{\partial_t\gamma_s(t)} \gamma_s'(t).
\end{align*}
Now we apply the definition of Riemann tensor. Namely for arbitrary vector fields $X, Y, Z$, we have
\begin{align*}
\nabla_X\nabla_Y Z - \nabla_Y \nabla_X Z = R(X,Y)Z + \nabla_{[X,Y]}Z. 
\end{align*}
Setting $X = \partial_s\gamma_s(t)$ and $Y = \partial_t\gamma_s(t)$, we first observe that $[\partial_s\gamma_s(t), \partial_t\gamma_s(t)]$ because they are just the application of the differential of $\gamma$ to the standard vectors $\partial_s$ and $\partial_t$ in $\R^2$. Applying this above
\begin{align}
    \D_s \mu(\gamma_s(t)) = \nabla_{\partial_t\gamma_s(t)} \nabla_{\partial_s\gamma_s(t)} \gamma_s'(t) - R(\partial_s \gamma_s(t), \partial_t \gamma_s(t)) \gamma_s'(t).\label{eq:tmp0}
\end{align}
But note that because $\partial_t \gamma_s(t)$ and $\partial_s \gamma_s(t)$ are the image of the differential of $\gamma_s(t)$ applied to $\partial_t$ and $\partial_t$, we have
\begin{align}
    \nabla_{\partial_s\gamma_s(t)} \gamma_s'(t) =   \nabla_{\partial_t\gamma_s(t)} \partial_s\gamma_s(t) = J'(t).\label{eq:tmp1}
\end{align}
Applying Equation~\eqref{eq:tmp1} to Equation~\eqref{eq:tmp0}:
\begin{align*}
     \D_s \mu(\gamma_s(t)) = J''(t) - R(J(t), \gamma'_s(t))\gamma'_s(t).
\end{align*}
Noting the definition of the operator $M$ completes the proof.
\end{proof}

\newpage
\section{Third order strong self-concordance of the metric}\label{sec:thirdorderselfconcordance}
The goal of this section is to prove the following lemma.
\begin{lemma}[Infinity norm Self-concordance for Lewis-p-weight barrier]\label{lem:thirdorderselfconcordance}
The Lewis-p-weights barrier, defined in~\eqref{eq:lewisbarrier}, is third-order strongly self-concordant with respect to the local norm $\|.\|_{x,\infty}$, i.e., at any point $x$ on the Hessian manifold with metric $g(x)$ given by the Hessian $\nabla^2 \phi_1$ of the Lewis-p-weights barrier $\phi_1$, we have
\begin{align*}
    &-\|v\|_{x,\infty}\|z\|_{x,\infty}\|u\|_{x,\infty} g \preccurlyeq \D^3g(v,z,u) \preccurlyeq \|v\|_{x,\infty}\|z\|_{x,\infty}\|u\|_{x,\infty} g.
\end{align*}
\end{lemma}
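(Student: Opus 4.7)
The plan is to follow the blueprint already set up for the first- and second-order cases in Lemmas~\ref{lem:gfirstderivativebound} and~\ref{lem:gsecondderivativebound}, namely: start from the expansion of $\D g_1(v)$ in Lemma~\ref{lem:gderivative}, differentiate twice more in directions $z$ and $u$, and bound each resulting subterm by a Cauchy--Schwarz-type splitting that isolates three infinity-norm factors on one side and a quadratic form in $g_1$ on the other. Concretely, for a quadratic form $\qq^\top \D^3 g_1(v,z,u) \qq$, every subterm will have the shape $\qq^\top \A^\top X \G^{-1} Y \A \qq$ (or its symmetrized analogue), where $X$ and $Y$ are products of Hadamard pieces of $\Px$ with diagonal matrices of the form $\Rxv, \Rxz, \Rxu, \Sxv, \Sxz, \Sxu$ together with $\Lambdax$ and derivatives of $\W$. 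Cauchy--Schwarz then reduces the bound to factors of the form $\qq^\top \A^\top \W \A \qq \lesssim \qq^\top g_1 \qq$ multiplied by $\bigl(\qq^\top \A^\top \Lambdax \G^{-1}\W\G^{-1}\Lambdax \A \qq\bigr)^{1/2} \lesssim \qq^\top g_1 \qq$, leaving the three scalar infinity-norm factors to be absorbed into $\|s_v\|_\infty \|s_z\|_\infty \|s_u\|_\infty$.

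First, I would upgrade the auxiliary stability lemmas used in Appendix~\ref{sec:derivativestability} one differentiation order higher. Specifically, the proof requires the third-order companions of Lemma~\ref{lem:wprimesecondderivative} and Lemma~\ref{lem:projderivative}: namely spectral bounds for $\D^2(\Wxv)(z,u)$, $\D^2(\D\G(v))(z,u)$, $\D^2(\D\Lambdax(v))(z,u)$, $\D^2(\Rxv)(z,u)$, and $\D^2(\Pxv)(z,u)$, each by $\|s_v\|_\infty\|s_z\|_\infty\|s_u\|_\infty$ times a benign PSD matrix ($\W$ or $\Px$). Each of these follows by direct differentiation of the corresponding second-order formulas, after which the operator-infinity-norm bound $\|\G^{-1}\W\|_{\infty\to\infty}\lesssim 1/(4/p-1)$ from Lemma~\ref{lem:ginfnorm} is applied iteratively to bound $\|\G^{-1}\W s_v\|_\infty$ by $\|s_v\|_\infty$ whenever a $\G^{-1}\W$ factor hits a direction vector; this is the step where $p<4$ is crucial.

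With these refined auxiliary bounds in hand, the bulk of the proof becomes a bookkeeping exercise. The already-established Lemma~\ref{lem:twoderivativesstar4} is the second derivative of the hardest term $(\trr 4)$; its third derivative (and the third derivatives of the other terms $(\trr 1)$ through $(\trr 7)$) splits into O(1) template patterns. For each pattern I would verify the Cauchy--Schwarz decomposition once, using the upgraded stability lemmas to certify that the ``auxiliary'' half of the split is $O(\qq^\top g_1 \qq)$ up to log factors, and the ``direction'' half contributes $\|s_v\|_\infty \|s_z\|_\infty \|s_u\|_\infty$. Terms where the outer $\G^{-1}$ (or the $\G$ inside $\D\G$) is differentiated are handled by the chain rule combined with Lemma~\ref{lem:Glambdabound} and Lemma~\ref{lem:Gsecondderivative} extended one order; terms where the Hadamard structure $\Px \odot (\cdot)$ is differentiated reduce to matrices treatable by Lemma~\ref{lem:hadamardlowner} and Lemma~\ref{lem:levelzerobounds}.

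The main obstacle will be the sheer combinatorial explosion: each of the seven terms in Lemma~\ref{lem:gderivative} produces many second-derivative terms (as partially exposed in Lemma~\ref{lem:star4}), and each of those produces yet more third-derivative terms. The way to keep this manageable is to observe that all such subterms fall into a short list of ``templates'' $\A^\top X \G^{-1} Y \A$, and to prove a single template lemma of the form: if $X, Y$ are built from Hadamard products of $\Px$ with diagonal matrices $D_1, D_2, D_3$ satisfying $\|D_i\|_{\mathrm{op}} \lesssim \|s_{v_i}\|_\infty$ (possibly after applying Lemma~\ref{lem:ginfnorm} to a $\G^{-1}\W$ prefactor), then the resulting quadratic form is bounded by $\|s_{v_1}\|_\infty\|s_{v_2}\|_\infty\|s_{v_3}\|_\infty \qq^\top g_1 \qq$. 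Once this template is in place, the third-order self-concordance of $g_1$ reduces to checking that every subterm fits the template, and the lemma follows. Combining with Lemma~\ref{lem:logbarrierself} (the analogous statement for the log barrier, which is straightforward) gives the third-order infinity-norm self-concordance of the full hybrid barrier, completing Lemma~\ref{lem:thirdorderself}.
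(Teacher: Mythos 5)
Your proposal is correct and follows essentially the same route as the paper's proof in Appendix~\ref{sec:thirdorderselfconcordance}: differentiate the subterms of Lemmas~\ref{lem:gderivative} and~\ref{lem:star4} once more, bound each resulting term by a Cauchy--Schwarz split into an $\A^\top\W\A$-type factor and a factor controlled by the third-order stability lemmas (the paper already supplies these in Appendix~\ref{sec:derivativestability}, e.g.\ Lemmas~\ref{lem:wprimesecondderivative},~\ref{lem:thirdderivativeGbound},~\ref{lem:thirdderivativep2},~\ref{lem:hadamardlowner}), with the $p<4$ bound of Lemma~\ref{lem:ginfnorm} pulling out the $\|\cdot\|_{x,\infty}$ factors. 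Your ``template lemma'' framing is just a cleaner organization of the same bookkeeping the paper carries out term by term.
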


 Now we first handle the derivatives in directions $z$ and $u$ of the $(\trr 4)$ term in Lemma~\ref{lem:gderivative}. We state the final result regarding the $(\star 4)$ term in the Lemma~\ref{lem:twoderivativesstar4}, which we prove below. 

\begin{proof}[Proof of Lemma~\ref{lem:twoderivativesstar4}]
The general style of the proof below is that $(\trrr)$ terms are referring to the subterms obtained from differentiating the $(\trr 4)$ term by $z$, which are stated in Lemma~\ref{lem:star4}. Note that the $(\trr 4)$ term itself is a subterm of the derivative of $g_1$ in direction $v$ which is stated in Lemma~\ref{lem:gderivative}.
\subsection*{$(\trr 4)$ terms}
The first subterm of the $(\trr 4)$ term that we consider is the $(\trrr 3)$ term as defined in Equation~\ref{lem:star4}.
\subsubsection*{$(\trrr 3)$ term}
\begin{align*}
   & \D(\A^\top \Pxz \odot \Pxv \G^{-1}\Lambdax \A)(u)\\
   & = \A^\top \Sxu \Pxz \odot \Pxv \G^{-1}\Lambdax \A &&(1)\\
   & + A^\top (\Rxu\Px\Rxz\Px + \Px\Rxu\Rxz\Px + \Px\Rxu\Px\Rxz\Px + \Px\Rxz\Px\Rxu)\odot (\Px\Rxv\Px) \G^{-1}\Lambdax \A  &&(2)\\
   & - \A^\top \Pxz \odot \Pxv \G^{-1} \D\G(u)\G^{-1}\Lambdax \A  &&(3)\\
   & + \A^\top \Pxz \odot \Pxv \G^{-1}\D\Lambdax(u)\A &&(4)\\
   & + \A^\top\Pxz \odot \Pxv \G^{-1}\Lambdax \Sxu \A. &&(5)\\
   & + \A^\top \tilde (\Px\D(r_{x,z})(u)\Px) \odot \Pxv \G^{-1}\Lambdax \A + \A^\top \Pxz \odot \Px D(r_v)(u)\Px \G^{-1}\Lambdax \A &&(7).
\end{align*}
For the first part (1), using Lemma~\ref{lem:hadamardlowner}:
\begin{align*}
    \qq^\top (1)\ell & \leq 
    \bracketss{\qq^\top \Sxu \Pxz \odot \Pxv \W^{-1} \Pxz\odot \Pxv \Sxu \qq} \bracketss{\qq^\top \A^\top \Lambdax \G^{-1}\W\G^{-1}\Lambdax \A\qq}\\
    &\leq \|s_{x,v}\|_\infty \|s_{x,z}\|_\infty \bracketss{\qq^\top \A^\top \Sxu \W \Sxu\A\qq} \bracketss{\qq^\top \A^\top \W \A\qq}\\
    &\leq \|s_{x,v}\|_\infty \|s_{x,z}\|_\infty \|s_{x,u}\|_\infty \qq^\top \A^\top \W \A \qq.
\end{align*}

For the second part (2), note that
\begin{align*}
    \qq^\top \A^\top \mathfrak D \G^{-1}\Lambdax \A\qq & \leq
    \bracketss{\qq^\top \A^\top \mathfrak D\W^{-1}\mathfrak D\A\qq}\bracketss{\qq^\top \A^\top \Lambdax \G^{-1}\W\G^{-1}\Lambdax \A\qq}\\
    & \lesssim \bracketss{\qq^\top \A^\top \W^{1/2}(\W^{-1/2}\mathfrak D\W^{-1/2})^2\W^{1/2}\A\qq}\bracketss{\qq^\top \A^\top \W \A\qq},
\end{align*}
where we are denoting the big chunk in the middle by $\mathfrak D$ for simplicity.
But combining Lemma~\ref{lem:lownerhelper} and~\ref{lem:hadamardlowner}
\begin{align*}
    \mathfrak D \leq \|\rxu\|_\infty \|\rxz\|_\infty \|\rxv\|_\infty \W,
\end{align*}
which implies
\begin{align*}
    \W^{-1/2}\mathfrak D\W^{-1/2} \leq \|\rxu\|_\infty \|\rxz\|_\infty \|\rxv\|_\infty I.
\end{align*}
Overall, we conclude
\begin{align*}
    \qq^\top(2)\qq \leq \|\rxu\|_\infty \|\rxv\|_\infty\|\rxz\|_\infty \qq^\top \W \qq.
\end{align*}
For (3):
\begin{align*}
    \qq^T (3) \qq &\leq \brackets{\qq^\top \A^\top \Pxz\odot \Pxv \G^{-1} \Pxz\odot \Pxv\A\qq}
    \bracketss{\qq^\top A^\top \Lambdax \G^{-1}\D\G(u)\G^{-1}\D\G(u)\G^{-1}\Lambdax \A\qq}\\
    &\leq \|\rxu\|_\infty \|\rxv\|_\infty\|\rxz\|_\infty \qq^\top \A^\top \W \A \qq.
\end{align*}
(4) and (5) are similar.
Term (7) is also similar to Equation $(\trr 4)(\trrr 3)$ after applying Lemma~\ref{lem:derivativeofr}. Next, we move on to $(\trrr 2)$ term.
\subsubsection*{$(\trrr 2)$ term}
\begin{align*}
    & \D(\A^\top \Big(\Rxz\Px + \Px\Rxz\Big) \odot \Pxv \G^{-1}\Lambdax \A)(u)\\
    & = \A^\top \Sxu (\Rxz\Px + \Px\Rxz)\odot \Pxv \G^{-1}\Lambdax \A  &&[1]\\
    & + \A^\top (\D(\Rxz)(u)\Px + \Px\D(\Rxz)(u)) \odot \Pxv \G^{-1}\Lambdax \A &&[2]\\
    & + \A^\top (\Rxz \Px + \Px\Rxz)\odot (\Rxu\Px\Rxv\Px \\
    &+ \Px\Rxu\Px\Rxv\Px+ \Px\Rxu\Rxv\Px+\Px\Rxv\Px\Rxu) \G^{-1}\Lambdax \A &&[3], [4], [5], [6]\\
    & + \A^\top \Big(\Rxz\Px + \Px\Rxz\Big) \odot (\Px\D(\Rxv)(u)\Px) \G^{-1}\Lambdax \A &&[7]\\
    & + \A^\top (\Rxz\Px+\Px\Rxz)\odot \Pxv \G^{-1}\D\G(u)\G^{-1}\Lambdax \A &&[8]\\
    & + \A^\top \Big(\Rxz\Px + \Px\Rxz\Big) \odot \Pxv \G^{-1}\D\Lambdax(u) \A &&[9]\\
    & + \A^\top \Big(\Rxz\Px + \Px\Rxu\Big) \odot \Pxv \G^{-1}\Lambdax \Sxu \A &&[10].
\end{align*}
Note that if $z$ differentiate any of the $\Rxu, \Rxv, \W,$ or $\W'$, then handling those terms is similar to Equation $(\trr 4) (\trrr 2)$.
\\
term [1] is similar to $(\trr 4) (\trrr 1)$ and$(\trr 4) (\trrr 2)$.
\\
term [2] is similar to Equation$(\trr 4) (\trrr 1)$ and $(\trr 4) (\trrr 2)$ after using Lemma~\eqref{lem:derivativeofr}.
\\
term [3] the first part is similar to Equation$(\trr 4) (\trrr 1)$. For the second part
\begin{align*}
    &\qq^\top \A^\top \Px\Rxz \odot (\Rxu\Px\Rxv\Px) \G^{-1}\Lambdax \A \qq\\ 
    &= \qq^\top \A^\top \Rxu \Px\odot (\Px\Rxv\Px) \Rxz \G^{-1}\Lambdax \A\qq,
\end{align*}
which similar to $(\trr 4) (\trrr 2)$ can be upper bounded by 
\begin{align}
&\|\Rxu s_{x,\qq}\|_{w_x}\|s_{x,\qq}\|_{w_x} \|\sxz\|_\infty \|\sxv\|_\infty \leq \|\rxu\|_\infty \|\sxz\|_\infty \|\sxv\|_\infty \|s_{x,\qq}\|_{w_x}^2 \\
&= \|\rxu\|_\infty \|\sxz\|_\infty \|\sxv\|_\infty \qq^\top \A^\top \W \A\qq.\label{eq:reductiontrick}    
\end{align}
as desired.
\\
term [4] the first part is similar to Equation$(\trr 4) (\trrr 7)$ combined with the trick in~\eqref{eq:reductiontrick}. For the second part:
\begin{align*}
    &\qq^\top \A^\top \Px\odot (\Px\Rxu\Px\Rxv\Px)\Rxz \G^{-1}\Lambdax \A\qq
    \\
    &\leq \bracketss{\qq^\top \A^\top \Px\odot (\Px\Rxu\Px\Rxv\Px)\W^{-1}\Px\odot(\Px\Rxu\Px\Rxv\Px)\A\qq}\\
    & \times \bracketss{\qq^\top \A^\top \Lambdax \G^{-1}\Rxz\W\Rxz\G^{-1}\Lambdax \A\qq}\\
    &\leq \|\rxu\|_\infty \|\rxv\|_\infty \|\rxz\|_\infty \|\qq\|_g^2.\numberthis\label{eq:likethis}
\end{align*}
term [5] the first part is similar to Equation $(\trr 4) (\trrr 1)$ and the second part is similar to~\eqref{eq:likethis}.
\\
term [6] part 1 is similar to [3] part 2, and part 2 is similar to term $(\trr 4) (\trrr 2)$ part 2.  
\\
term [7] is similar to$(\trr 4) (\trrr 2)$.
\\
term [8], the first part is similar to$(\trr 4) (\trrr 4)$ using the trick in~\eqref{eq:reductiontrick}. For term [8] second part
\begin{align*}
    \qq^\top [8] \qq & \leq
    \bracketss{\qq^\top A^\top (\Px\odot \Pxv) \W^{-1} (\Px\odot \Pxv) \A\qq}\\
    &\times \bracketss{\qq^\top \A^\top \Lambdax \G^{-1}\D \G(u)\G^{-1}\Sxz\W\Sxz \G^{-1}\D\G(u)\G^{-1}\Lambdax \A\qq} \\
    & \leq \|r_{x,v}\|_\infty\|r_{x,z}\|_\infty\|r_{x,u}\|_\infty \|\qq\|_g^2.
\end{align*}
term [9] is similar to what we did for [9]. term [10] first part similar to~$(\star 4) (\ast 6)$ using the trick in~\eqref{eq:reductiontrick}. for the second part:
\begin{align*}
    \qq^\top [10] \qq & \leq
    \bracketss{\qq^\top \A^\top (\Px\odot \Pxv) \W^{-1} (\Px\odot \Pxv) \A\qq}\bracketss{\qq^\top \A^\top\Sxu \Lambdax \G^{-1}\Sxz\W\Sxz\G^{-1}\Lambdax \Sxu\A\qq} \\
    & \leq \|r_{x,v}\|_\infty\|r_{x,z}\|_\infty\|r_{x,u}\|_\infty \|\qq\|_g^2.
\end{align*}

\subsubsection*{$(\trrr 4)$ term}
\begin{align*}
    &\D(\A^\top \Px\odot \Pxv \G^{-1} \D\G(z)\G^{-1}\Lambdax \A)(u) = \\
    &\A^\top \Sxu \Px\odot \Pxv \G^{-1} \D\G(z)\G^{-1}\Lambdax \A &&[1]\\
    &+\A^\top \Px^{u}\odot \Pxv \G^{-1} \D\G(z)\G^{-1}\Lambdax \A &&[2]\\
    &+\A^\top \Px\odot \big(\Px\D(\Rxv)(u)\Px\big) \G^{-1} \D\G(z)\G^{-1}\Lambdax \A &&[3]\\
    &+\A^\top \Px\odot \big(\Rxu\Px\Rxv\Px + \Px\Rxu\Rxv\Px+\Px\Rxv\Px\Rxu + \Px\Rxv\Px\Rxu\Px \\
    &+ \Px\Rxu\Px\Rxv\Px\big) \G^{-1} \D\G(z)\G^{-1}\Lambdax \A &&[4]\\
    &+\A^\top \Px\odot \Pxv \G^{-1}(\D\G(u)\G^{-1} \D\G(z) + \D\G(z)\G^{-1}\D\G(u))\G^{-1}\Lambdax \A&&[5]\\
    &+\A^\top \Px\odot \Pxv \G^{-1} \D\G(z,u)\G^{-1}\Lambdax \A &&[6]\\
    &+\A^\top \Px\odot \Pxv \G^{-1}\D\G(z)\G^{-1}\D\Lambdax(u)\A &&[7]\\
    &+\A^\top \Px\odot \Pxv \G^{-1}\D\G(z)\G^{-1}\Lambdax \Sxu \A. &&[8]
\end{align*}
term [1] is similar to $(\trr 4) (\trrr 2) [8]$. 
\\
term [2] is similar to $(\trr 4) (\trrr 3) [3]$.
\\
term [3] is handled by Lemma~\ref{lem:derivativeofr}. 
\\
term [4] first part is similar to $(\trr 4) (\trrr 2) [8]$ part 1. term [4] part 2 is similar to $(\trr 4) (\trrr 4)$. term [4] part 3 is similar to $(\trr 4) (\trrr 2) [8]$ part 2. For term [4] parts 4 and 5:
\begin{align*}
    \qq^\top (.) \qq \leq &\bracketss{\qq^\top \A^\top \Px\odot (\Px\Rxv\Px\Rxu\Px + \Px\Rxu\Px\Rxv\Px)\W^{-1}\Px\\
    &\odot (\Px\Rxv\Px\Rxu\Px + \Px\Rxu\Px\Rxv\Px) \A\qq} \\
    &\times \bracketss{\qq^\top \A^\top \Lambdax \G^{-1}\D\G(z)\G^{-1}\W\G^{-1}\D\G(z)\G^{-1}\Lambdax \A\qq}\\
    & \leq \|r_{x,v}\|_\infty\|r_{x,z}\|_\infty\|r_{x,u}\|_\infty \|\qq\|_w^2.
\end{align*}
\\
term [5] is similar to $(\trr 4) (\trrr 4)$.
\\
term [6]:
\begin{align*}
    \qq^\top (.)\qq  \leq &\bracketss{\qq^\top \A^\top \Px\odot \Pxv\G^{-1}\W\G^{-1}\Px\odot \Pxv \A \qq}\\
    &\times \bracketss{\qq^\top \A^\top \Lambdax \G^{-1}\D\G(z,u)\G^{-1}\W\G^{-1}\D\G(z,u)\G^{-1}\Lambdax \A\qq}\\
    & \leq \|r_{x,v}\|_\infty\|r_{x,z}\|_\infty\|r_{x,u}\|_\infty \|\qq\|_w^2.
\end{align*}
term [7]: similar to [6].
\\
term [8]:
\begin{align*}
    \qq^\top (.)\qq & \leq \bracketss{\qq^\top \A^\top \Px\odot \Pxv\G^{-1}\W\G^{-1}\Px\odot \Pxv \A \qq}\\
    &\times \bracketss{\qq^\top \A^\top\Sxu \Lambdax \G^{-1}\D\G(z)\G^{-1}\W\G^{-1}\D\G(z)\G^{-1}\Lambdax \Sxu \A\qq}\\
    & \leq \|r_{x,v}\|_\infty\|r_{x,z}\|_\infty\|s_{x,u}\|_\infty \|\qq\|_w^2.
\end{align*}

\subsubsection*{$(\trrr 5)$ term}
\begin{align*}
    &\D(\A^\top \Px \odot \Pxv \G^{-1} \D\Lambdax(z) \A)(u) =\\
    &\A^\top \Sxu \Px \odot \Pxv \G^{-1} \D\Lambdax(z) \A\\
    & +\A^\top \Px \odot \Pxv \G^{-1} \D\Lambdax(z) \A\\
    &\dots
\end{align*}
These terms are similar to $(\trr 4) (\trrr 4)$.

\subsubsection*{$(\trrr 6)$ term}
\begin{align*}
    \D(\A^\top \Px\odot \Pxv \G^{-1} \Lambdax \Sxz \A)(u)  
\end{align*}
these terms are similar to $(\trr 4) (\trrr 4)$.

\subsubsection*{$(\trrr 7)$ term}
\begin{align*}
    &\D(\A^\top \Px\odot \big(\Rxu\Px\Rxv\Px + \Px\Rxu\Rxv\Px+\Px\Rxv\Px\Rxu \\
    &+ \Px\Rxv\Px\Rxu\Px + \Px\Rxu\Px\Rxv\Px\big) \G^{-1}\Lambdax \A)(u)=\\
    &\A^\top \Sxu \Px\odot \big(\Rxu\Px\Rxv\Px + \Px\Rxu\Rxv\Px+\Px\Rxv\Px\Rxu \\
    &+ \Px\Rxv\Px\Rxu\Px + \Px\Rxu\Px\Rxv\Px\big) \G^{-1}\Lambdax \A &&[1]\\
    &+\A^\top \Px\Sxu\Px \odot (\Rxz\Px\Rxv\Px + \Px\Rxz\Px\Rxv\Px+ \Px\Rxz\Rxv\Px+\Px\Rxv\Px\Rxz) \G^{-1}\Lambdax \A &&[2]\\
    &+\A^\top \Px\odot \big(\sum \Px\Sxv\Px\Sxz\Px\Sxu\Px + \sum \Sxv\Px\Sxz\Px\Sxu \\
    &+ \sum \Px\Sxv\Px\Sxu\Sxz + \sum \Sxu \Sxv \Px \Sxz\Px + \sum \Sxu \Px\Sxv\Sxz\Px \\
    &+ \sum \Px\Sxv\Sxz\Px\Sxu \big)\G^{-1}\Lambdax \A &&[3:1], \dots\\
    & - \A^\top \Px\odot \big(\Rxu\Px\Rxv\Px + \Px\Rxu\Rxv\Px+\Px\Rxv\Px\Rxu \\
    &+ \Px\Rxv\Px\Rxu\Px + \Px\Rxu\Px\Rxv\Px\big) \G^{-1}\D\G(u)\G^{-1}\Lambdax \A  &&[4:1]\dots\\
    & + \A^\top \Px\odot \big(\text{same as above}\big) \G^{-1}\D\Lambdax(u)\A &&[5]\\
    & + \A^\top \Px\odot \big(\text{same as above}\big) \G^{-1}\Lambdax \Sxu \A &&[6].
\end{align*}
\\
where for simplicity, we have used the $\sum$ notation indicating all possible symmetric combinations of that term with respect to $v$, $w$, and $u$.
\\
term [1]: considering the quadratic form $\qq^\top(.)\qq$ on this term, note that on the left we get $s_{x,\qq}^\top \Sxu \dots s_{x,\qq}$. Now we can just reduce this term to $(\trr 4) (\trrr 7)$ to conclude
\begin{align*}
   s_{x,\qq}^\top \Sxu \dots s_{x,\qq} \leq \|s_{x,v}\|_\infty \|s_{x,v}\|_\infty \|s_{x,\qq} \Sxu\|_w \|s_{x,\qq}\|_w \leq \|s_{x,\qq}\|_w^2\|s_{x,u}\|_\infty\|s_{x,v}\|_\infty\|s_{x,z}\|_\infty. 
\end{align*}
term [2]: similar to $(\trr 4)(\trrr 2)$.
\\
term [3:1]: Noting the fact that 
$$
- \|s_{x,u}\|_\infty \|s_{x,v}\|_\infty \|s_{x,z}\|_\infty \Px \leq \Px\Sxv\Px\Sxz\Px\Sxu\Px \leq \|s_{x,u}\|_\infty \|s_{x,v}\|_\infty \|s_{x,z}\|_\infty \Px,
$$
and using Lemma~\ref{lem:hadamardlowner} this term is similar to~\eqref{star4ast3}.
\\
term [3:2]: note that this term is equal to
\begin{align*}
    \A^\top \Sxv \Px\odot (\Px\Sxz\Px) \Sxu\G^{-1}\Lambdax \A 
\end{align*}
which is similar to $(\trr 4) (\trrr 2) [1]$.
\\
term [3:3], [3:4], [3:5], [3:6]: similar to $(\trr 4) (\trrr 7)$.
\\
term [4:1] is similar to $(\trr 4) (\trrr 2) [7]$.
\\
term [4:2], [4:4], [4:5] similar to $(\trr 4) (\trrr 3) [3]$.
\\
term [4:3] similar to $(\trr 4) (\trrr 2) [8]$
\\
term [5] is similar to [4].
\\
term [6] is also similar to $(\trr 4) (\trrr 3) [5]$ and $(\trr 4) (\trrr 2) [10]$.

\subsubsection*{$(\trrr 8)$ term}
\begin{align*}
    \D(\A^\top \Px \odot \Px \D(\Rxv)(z)\Px \G^{-1}\Lambdax \A)(u)
\end{align*}
This term is similar to $\D(\trr 4)(z)$ as detailed in Lemma~\ref{lem:star4}.

\subsubsection*{$(\trrr 1)$ term}
\begin{align*}
    &\D(\A^\top \Sxz \Px\odot \Pxv \G^{-1} \Lambdax \A)(u)
\end{align*}
We have handled this term with regards to the differentiation of any term with respect to $u$, we can instead first take that derivation with respect to $u$ and then take the derivative of $A$ which respect to $z$ which spits out the $\Sxz$.

Now based on the form of the metric written in Lemma~\ref{lem:metricsecondform}, we first focus on the last term $\A^\top \Ptwo\G^{-1}\Ptwo\A$.
Note that above in handling all the derivatives in directions $z$ and $u$ of the $(\trr 4)$ term, we have bounded all the 3rd order derivative terms of $\D^3g(u,v,z)$ that has at least one derivative regarding the $\Ptwo$ terms in $\A^\top\Ptwo\G^{-1}\Ptwo\A$. Hence, regarding this term, it remains to take derivative with only with respect to $\G^{-1}$ and the $\A$'s which we do next. Again, the sums mean we are considering all the terms corresponding to all the permutations of $u, v,z$ regarding the current term.
\begin{align*}
    &\D(\A^\top \Ptwo\G^{-1}\Ptwo\A)(u,v,z) \rightarrow \sum \A^\top\Ptwo\G^{-1}\D\G(u)\G{-1}\D\G(v)\G^{-1}\D\G(z)\G^{-1}\Ptwo\A\\
    &+\sum \A^\top\Ptwo\G^{-1}\D\G(u)\G^{-1}\D\G(v)\G^{-1}\D\G(z)\G^{-1}\Ptwo\A\\
    &+\sum \A^\top \Ptwo\G^{-1}\D\G(u,v)\G^{-1}\D\G(z)\G^{-1}\Ptwo\A\\
    &+\sum \A^\top \Ptwo\G^{-1}\D\G(u,v,w)\G^{-1}\Ptwo\A\\
    &+\sum \A^\top \Sxu \Ptwo \Ptwo\big(\G^{-1}\D\G(v,z)\G^{-1} + \G^{-1}\D\G(v)\G^{-1}\D\G(z)\G^{-1}\big)\Ptwo\A\\
    &+\sum \A^\top \Ptwo \Ptwo\big(\G^{-1}\D\G(v,z)\G^{-1} + \G^{-1}\D\G(v)\G^{-1}\D\G(z)\G^{-1}\big)\Ptwo\Sxu\A.
\end{align*}
$(\trr 3)$ is handled in a similar way as $(\trr 4)$.

To handle the rest of the derivatives more conveniently at this point, we consider the second form of metric in Equation~\ref{lem:metricsecondform}. First we aim to handle all the possible derivatives in three directions which differentiate the $\Ptwo$ terms at least once.  Taking one time derivative in direction $v$ from the $\Ptwo$ term results in term $(\trr 4)$ and $(\trr 3)$ in Lemma~\eqref{lem:gderivative}.

But using Lemmas~\ref{lem:thirdderivativeGbound} and~\ref{lem:Gsecondderivative} and similar technique as we did, these terms are bounded by plus and minus of two constants times the matrix $\|s_u\|_\infty \|s_v\|_\infty \|s_z\|_\infty \A^\top \W \A$.
\\
Next, we move on to the other terms in the formulation of $g_2$ in~\ref{lem:metricsecondform}, namely
$\A^\top \W \A$, $\A^\top \Lambdax \A$, $\A^\top \G \A$, and $\A^\top \Ptwo\A$. third order self concordance of $\A^\top \W \A$ is a direct consequence of Lemma~\ref{lem:wprimesecondderivative}. Term $\A^\top \Lambdax \A$ and $\A^\top \G \A$ are handled by Lemma~\ref{lem:thirdderivativeGbound}, and $\A^\top \Ptwo\A$ is handled by Lemma~\ref{lem:thirdderivativep2}.
\end{proof}

\newpage
\section{Derivative Stability Lemmas}\label{sec:derivativestability}
\subsection{Infinity norm comparisons}\label{sec:infnormcomparison}
Here we show a control over the infinity to infinity norm, i.e. $\|.\|_{\infty \rightarrow \infty}$ of the matrix $G^{-1}W$, which is a crucial property that we use all over the proof to derive our derivative estimates with respect to the $\|.\|_{x,\infty}$ norm.
\begin{lemma}\label{lem:ginfnorm}
For $\gv = \G^{-1}\W\sv$, given any vector $\sv$ and $p < 4$, we have
\begin{align*}
    \|\gv\|_\infty \leq \frac{1}{4/p - 1}\|\sv\|_\infty.
\end{align*}
\end{lemma}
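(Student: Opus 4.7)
\textbf{Proof proposal for Lemma~\ref{lem:ginfnorm}.} The plan is to exploit the explicit decomposition of $\G$ together with the row-sum structure of the Hadamard square $\Ptwo$. Unpacking the definitions in~\eqref{eq:GLambdadefs}, one has $\Lambdax = \W - \Ptwo$, hence
\begin{equation*}
\G \;=\; \W - (1 - 2/p)(\W - \Ptwo) \;=\; \tfrac{2}{p}\W + (1-2/p)\Ptwo.
\end{equation*}
Writing the defining equation $\G\gv = \W\sv$ componentwise at an arbitrary index $i$ gives
\begin{equation*}
\tfrac{2}{p}\, w_i\, y_i \;+\; (1-2/p)\sum_{j} P^{(2)}_{ij}\, y_j \;=\; w_i\, s_i.
\end{equation*}

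The next step is to record the key combinatorial fact about $\Ptwo$: since $\Px$ is an orthogonal projection, $\Px^2 = \Px$, and by the optimality condition for Lewis weights, the diagonal of $\Px$ equals $w_x$; therefore
\begin{equation*}
\sum_{j} P^{(2)}_{ij} \;=\; \sum_j P_{ij}^2 \;=\; (\Px^2)_{ii} \;=\; (\Px)_{ii} \;=\; w_i,
\end{equation*}
and in particular $P^{(2)}_{ij} \geq 0$ so that $\W^{-1}\Ptwo$ is row-stochastic.

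Now I would choose $i^\star \in \arg\max_i |y_i|$ and set $M = \|\gv\|_\infty = |y_{i^\star}|$. The row-stochastic bound yields $|\sum_j P^{(2)}_{i^\star j} y_j| \leq M w_{i^\star}$. Substituting at $i = i^\star$ and applying the triangle inequality (using $p > 2$, so $1 - 2/p > 0$, which is the regime relevant to the paper's choice $p = 4 - 1/\log m$):
\begin{equation*}
\tfrac{2}{p} w_{i^\star} M \;\leq\; w_{i^\star} \|\sv\|_\infty + (1-2/p)\, M\, w_{i^\star}.
\end{equation*}
Dividing by $w_{i^\star}$ and rearranging collects the $M$ terms as $(\tfrac{2}{p} - (1-\tfrac{2}{p}))M = (\tfrac{4}{p} - 1)M$, which gives the claimed bound
\begin{equation*}
M \;\leq\; \frac{1}{4/p - 1}\,\|\sv\|_\infty.
\end{equation*}

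The role of the hypothesis $p < 4$ is now transparent: it is precisely what makes the coefficient $\tfrac{4}{p}-1$ positive, so the rearrangement does not flip sign and the bound is finite. There is no substantive obstacle here; the only subtlety is recognizing that the combination of (i) the convex decomposition of $\G$ with weights $2/p$ and $1-2/p$ and (ii) the row-sum identity $\Ptwo \mathbf{1} = w_x$ makes $\W^{-1}\G$ behave like a weighted average of the identity and a stochastic matrix, so the maximum principle applies. This $\ell_\infty$ contractivity is exactly what lets us translate bounds on $\sv$ in $\|\cdot\|_{x,\infty}$ into bounds on $\rxv = \G^{-1}\W\sv$, and hence drives the infinity-norm self-concordance estimates used throughout Section~\ref{sec:secondorderselfconcordance}.
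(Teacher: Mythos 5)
Your proof is correct and is essentially the paper's argument: both rest on the decomposition $\G = \tfrac{2}{p}\W + (1-2/p)\Ptwo$, the row-sum identity $\Ptwo\indic = w_x$ (from the projection property and the Lewis-weight fixed point), and a maximum-principle bound at the index attaining $\|\gv\|_\infty$; the paper merely phrases the final step as a contradiction rather than your direct rearrangement. Note that, like the paper, your triangle-inequality step implicitly needs $1-2/p\ge 0$ (i.e.\ $p\ge 2$), which you correctly flag as the regime of interest, so there is no gap relative to the paper's own proof.
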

\begin{proof}
Set $\|\sv\|_\infty = \ell$. then
\begin{align*}
    \W\sv = \G\gv = \frac{2}{p}\W\gv + (1-\frac{2}{p})\Ptwo\gv.
\end{align*}
Now suppose $\|\gv\|_\infty \geq \frac{1}{4/p - 1}\ell$, which implies that for the maximizing index $i$ we have
\begin{align*}
    |\gv_i| \geq \frac{1}{4/p-1}\ell.
\end{align*}
But note that
\begin{align*}
    |\gv^\top{{\Ptwo}_{i,}}| \leq w_i \|\gv\|_\infty = w_i \gv_i,
\end{align*}
hence
\begin{align*}
    {\gv}^\top{\G}_i \geq \frac{2}{p}w_i\gv_i - (1-\frac{2}{p}) w_i \gv_i = (\frac{4}{p} - 1) w_i \gv_i > w_i \ell.
\end{align*}
On the other hand
\begin{align*}
    \gv^\top{\G}_i = w_i \sv_i \leq w_i \ell.
\end{align*}
The contradiction finishes the proof.
\end{proof}

\subsection{Lowner Inequalities}\label{sec:lownerineq}
In this section, we drive important estimates on the derivatives of fundamental matrix quantities that we arrive at such as $\G,\Lambdax,\Rxv,\Sxv$ that we defined, and use them in our proof for strong self-concordance.
\begin{lemma}\label{lem:ptildebound}
We have
\begin{align*}
    -\frac{1}{4/p - 1} \|s_{x,v}\|_\infty \W \preccurlyeq \Px \odot \Pxv \preccurlyeq \frac{1}{4/p - 1} \|s_{x,v}\|_\infty \W.
\end{align*}
\end{lemma}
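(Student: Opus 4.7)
}
The plan is to reduce the bound on $\Px \odot \Pxv = \Px \odot (\Px \Rxv \Px)$ to two ingredients already available: the sup-norm control $\|r_{x,v}\|_\infty \lesssim \|s_{x,v}\|_\infty$ of Lemma~\ref{lem:ginfnorm}, and the Schur product theorem, which says Hadamard products of PSD matrices are PSD. The only geometric input beyond these is the standard fact that the Hadamard square of the Lewis projection is dominated by the Lewis weights: $\Px \odot \Px = \Ptwo \preccurlyeq \W$.

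First I would note that by Lemma~\ref{lem:ginfnorm}, $\Rxv = \diagtwo{r_{x,v}}$ satisfies
\[
-\tfrac{1}{4/p-1}\|s_{x,v}\|_\infty\, I \;\preccurlyeq\; \Rxv \;\preccurlyeq\; \tfrac{1}{4/p-1}\|s_{x,v}\|_\infty\, I,
\]
so conjugating by $\Px$ (which is PSD) preserves the L\"owner ordering and yields
\[
-\tfrac{1}{4/p-1}\|s_{x,v}\|_\infty\, \Px \;\preccurlyeq\; \Px\Rxv\Px \;\preccurlyeq\; \tfrac{1}{4/p-1}\|s_{x,v}\|_\infty\, \Px.
\]
Next, since $\Px\succeq 0$ and the two matrices
$\tfrac{1}{4/p-1}\|s_{x,v}\|_\infty \Px \pm \Px\Rxv\Px$
are PSD, the Schur product theorem gives that their Hadamard products with $\Px$ are PSD as well:
\[
-\tfrac{1}{4/p-1}\|s_{x,v}\|_\infty\, (\Px\odot\Px) \;\preccurlyeq\; \Px\odot\Pxv \;\preccurlyeq\; \tfrac{1}{4/p-1}\|s_{x,v}\|_\infty\, (\Px\odot\Px).
\]
Finally I would invoke $\Px\odot\Px=\Ptwo\preccurlyeq \W$. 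This is the only non-trivial step and is standard: using $(\Px)_{ii}=w_i$ (which follows from the Lewis optimality condition $a_i^\top(\A^\top\W^{1-2/p}\A)^{-1}a_i = w_i^{2/p-1}$), one has for any vector $q$
\[
q^\top\Ptwo q \;=\; \tr(\Px D_q \Px D_q) \;\leq\; \tr(D_q \Px D_q) \;=\; \sum_i (\Px)_{ii} q_i^2 \;=\; q^\top \W q,
\]
where the inequality uses $\Px\preccurlyeq I$ together with positivity of $D_q\Px D_q$. Concatenating these three steps produces the claimed L\"owner sandwich, with constant $\tfrac{1}{4/p-1}$.

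I do not anticipate any real obstacle: the only $p<4$ input enters through Lemma~\ref{lem:ginfnorm}, and everything else is a two-line PSD manipulation. The one point to be careful about is using the Schur product theorem in the correct direction (conjugation by $\Px$ rather than generic multiplication), since in general $A\preccurlyeq B$ does not imply $A\odot C\preccurlyeq B\odot C$ unless $C\succeq 0$ and the difference $B-A$ is itself PSD; both conditions are met here.
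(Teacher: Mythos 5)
Your proof is correct and follows essentially the same route as the paper: bound $\Pxv=\Px\Rxv\Px$ by $\pm\|r_{x,v}\|_\infty\Px$, take Hadamard products with the PSD matrix $\Px$, use $\|r_{x,v}\|_\infty\le\frac{1}{4/p-1}\|s_{x,v}\|_\infty$ (Lemma~\ref{lem:ginfnorm}), and finish with $\Ptwo\preccurlyeq\W$ (the paper cites Lemma~\ref{lem:levelzerobounds}, which uses the row-sum/Laplacian argument; your trace argument is a fine alternative). One small correction: the Lewis fixed-point identity is $a_i^\top(\A^\top\W^{1-2/p}\A)^{-1}a_i=w_i^{2/p}$, as in \eqref{eq:counterpart}, not $w_i^{2/p-1}$; with this exponent your computation $(\Px)_{ii}=w_i^{1-2/p}\cdot w_i^{2/p}=w_i$ indeed goes through.
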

\begin{proof}
For the matrix $\Pxv$ we have
\begin{align*}
    \Px \odot \Pxv & \preccurlyeq \|r_{x,v}\|_\infty \Px \odot \Px\\
    & \preccurlyeq \frac{1}{4/p - 1} \|s_{x,v}\|_\infty \Ptwo \preccurlyeq \frac{1}{4/p - 1} \|s_{x,v}\|_\infty \W,
\end{align*}
and similarly
\begin{align*}
     -\frac{1}{4/p - 1} \|s_{x,v}\|_\infty W \preccurlyeq \Px \odot \Pxv.
\end{align*}
\end{proof}

\begin{lemma}\label{lem:levelzerobounds}
    We have
    \begin{align*}
        &\Ptwo \preccurlyeq \W,\\
        &\Lambdax \preccurlyeq \W,\\
        &\frac{2}{p}\W \preccurlyeq \G \preccurlyeq \W.
    \end{align*}
\end{lemma}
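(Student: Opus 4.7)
}

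The plan is to establish the three inequalities in order, since each subsequent bound will follow essentially for free from the preceding one together with the algebraic identity $\G = (2/p)\W + (1-2/p)\Ptwo$, which comes directly from the definitions $\Lambdax = \W - \Ptwo$ and $\G = \W - (1-2/p)\Lambdax$. Once we have $0 \preccurlyeq \Ptwo \preccurlyeq \W$, the bounds on $\Lambdax$ and $\G$ are immediate convex combinations (using $p \geq 2$, which is fine since we work with $p$ close to $4$). Hence the whole statement reduces to proving $0 \preccurlyeq \Ptwo \preccurlyeq \W$.

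The lower bound $0 \preccurlyeq \Ptwo$ is just the Schur product theorem: $\Px$ is PSD (in fact an orthogonal projection when conjugated appropriately), so its Hadamard square $\Px \odot \Px = \Ptwo$ is PSD. The main content is therefore the upper bound $\Ptwo \preccurlyeq \W$. The key input is the Lewis-weight identity $(\Px)_{ii} = w_i$: indeed,
\[
(\Px)_{ii} \;=\; w_i^{1-2/p}\, a_i^\top (\A^\top\W^{1-2/p}\A)^{-1} a_i \;=\; w_i^{1-2/p}\cdot w_i^{2/p} \;=\; w_i,
\]
by the optimality condition (stationarity) of the program~\eqref{eq:lewisweightsdef} defining the Lewis weights.

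With this identity in hand, for arbitrary $x \in \R^m$ let $D_x = \diag{x}$ and compute
\[
x^\top \Ptwo x \;=\; \sum_{i,j} x_i x_j (\Px)_{ij}^2 \;=\; \tr\bigl((D_x\Px)(D_x\Px)\bigr).
\]
The elementary inequality $\tr(M^2) \le \tr(MM^\top)$, valid for any real square matrix $M$ (since $\tr(MM^\top) - \tr(M^2) = \tfrac{1}{2}\sum_{ij}(M_{ij}-M_{ji})^2 \ge 0$), applied to $M = D_x\Px$ gives
\[
\tr((D_x\Px)^2) \;\le\; \tr(D_x \Px \Px^\top D_x) \;=\; \tr(D_x\Px^2 D_x) \;=\; \tr(D_x\Px D_x) \;=\; \sum_i x_i^2 (\Px)_{ii} \;=\; x^\top \W x,
\]
where we used that $\Px$ is symmetric and idempotent (as an orthogonal projection) and the Lewis-weight identity $(\Px)_{ii} = w_i$. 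Since $x$ was arbitrary, $\Ptwo \preccurlyeq \W$.

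Given this, $\Lambdax = \W - \Ptwo$ satisfies $0 \preccurlyeq \Lambdax \preccurlyeq \W$ automatically. For $\G$, rewrite $\G = \W - (1-2/p)(\W - \Ptwo) = (2/p)\W + (1-2/p)\Ptwo$; since $p \in [2,4)$ we have $1-2/p \in [0,1)$, so $\G$ is a nonnegative combination of $\W$ and $\Ptwo$, and sandwiching with $0 \preccurlyeq \Ptwo \preccurlyeq \W$ yields the claimed $\tfrac{2}{p}\W \preccurlyeq \G \preccurlyeq \W$. The only real obstacle is the $\Ptwo \preccurlyeq \W$ bound, and the trace trick above handles it cleanly; everything else is bookkeeping.
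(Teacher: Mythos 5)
Your proof is correct, and its overall architecture matches the paper's: both reduce everything to $0 \preccurlyeq \Ptwo \preccurlyeq \W$ via the identity $\G = (2/p)\W + (1-2/p)\Ptwo$, and both ultimately rest on the two facts $\Px^2 = \Px$ and $(\Px)_{ii} = w_i$ (the Lewis-weight fixed-point condition, stated in the paper's appendix). The only real difference is how the key bound $\Ptwo \preccurlyeq \W$ is established: you use the trace inequality $\tr(M^2) \le \tr(MM^\top)$ with $M = D_x\Px$, while the paper observes that the $i$th row sum of $\Ptwo$ is $\sum_j (\Px)_{ij}^2 = (\Px^2)_{ii} = w_i$, so $\W - \Ptwo$ has nonpositive off-diagonal entries and zero row sums, i.e.\ it is a weighted graph Laplacian and hence PSD. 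These are two equally short elementary arguments for the same step; the Laplacian observation is perhaps slightly more structural (it exposes $\W - \Ptwo = \Lambdax$ as a Laplacian, which is occasionally useful elsewhere), whereas your trace argument needs no structural interpretation and generalizes verbatim to any symmetric idempotent matrix. Either way the bookkeeping for $\Lambdax$ and $\G$ is identical, so your proposal is a valid substitute for the paper's proof.
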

\begin{proof}
    For the first inequality, note that the sum of entries of the $i$th row of matrix $\Ptwo$ is equal to ${\W}_{ii}$. Hence, the matrix $\W - \Ptwo$ is a Laplacian so it is positive semi-definite. The second inequality follows from the fact that $\Ptwo$ is PSD. The third inequality, using the fact that $\Ptwo \preccurlyeq \W$:
    \begin{align*}
        \frac{2}{p}\W \preccurlyeq \frac{2}{p}\W + (1 - \frac{2}{p})\Ptwo \preccurlyeq \W.
    \end{align*}
\end{proof}

\begin{lemma}\label{lem:Glambdabound}
For the derivatives of $\G$ and $\Lambdax$ at some point $x$ we have
\begin{align*}
   -\|s_{x,z}\|_\infty \W \preccurlyeq \D\G(z) \preccurlyeq \|s_{x,z}\|_\infty  \W,\\
-\|s_{x,z}\|_\infty  \W \preccurlyeq \D\Lambdax(z) \preccurlyeq \|s_{x,z}\|_\infty  \W.
\end{align*}
\end{lemma}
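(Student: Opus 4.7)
Recall that $\G = (2/p)\W + (1-2/p)\Ptwo$ and $\Lambdax = \W - \Ptwo$, so it suffices to bound $\D\W(z)$ and $\D(\Ptwo)(z)$ spectrally by $\|s_{x,z}\|_\infty\W$; the two claims of the lemma then follow by taking appropriate linear combinations (with $p$-dependent constants absorbed into $\preccurlyeq$). The plan is therefore to reduce everything to three ingredients already available in the excerpt: the formula $\D\W(z) = -2\diag{\Lambdax r_{x,z}}$, the projection derivative formula of Lemma~\ref{lem:projderivative}, and the two control lemmas $\|r_{x,z}\|_\infty\lesssim \|s_{x,z}\|_\infty$ (Lemma~\ref{lem:ginfnorm}) and $\Px\odot \Pxz \preccurlyeq \|s_{x,z}\|_\infty \W$ (Lemma~\ref{lem:ptildebound}).

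First I would handle $\D\W(z)$. Its $i$-th diagonal entry is $-2(\Lambdax r_{x,z})_i = -2\bigl(w_i r_{x,z,i} - (\Ptwo r_{x,z})_i\bigr)$ using $\Lambdax = \W-\Ptwo$. Since the $i$-th row of $\Ptwo$ sums to $w_i$ (this is the defining identity of the Lewis weights that also underlies Lemma~\ref{lem:levelzerobounds}), $|(\Ptwo r_{x,z})_i|\le w_i\|r_{x,z}\|_\infty$, and likewise $|w_i r_{x,z,i}|\le w_i\|r_{x,z}\|_\infty$. Thus each diagonal entry of $\D\W(z)$ is bounded in absolute value by $4w_i\|r_{x,z}\|_\infty$, so as a diagonal matrix $-\|r_{x,z}\|_\infty\W \preccurlyeq \D\W(z)\preccurlyeq \|r_{x,z}\|_\infty\W$, and applying Lemma~\ref{lem:ginfnorm} gives the desired $\|s_{x,z}\|_\infty$ bound.

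Next I would expand $\D(\Ptwo)(z)=2\Px\odot\D\Px(z)=-2(\Px\odot\Rxz\Px)-2(\Px\odot\Px\Rxz)+4(\Px\odot\Pxz)$. Since $\Rxz$ is diagonal, a one-line entrywise check gives $\Px\odot\Rxz\Px = \Rxz\Ptwo$ and $\Px\odot\Px\Rxz = \Ptwo\Rxz$, so
\[
\D(\Ptwo)(z) = -2(\Rxz\Ptwo + \Ptwo\Rxz) + 4(\Px\odot\Pxz).
\]
The last term is controlled directly by Lemma~\ref{lem:ptildebound}. The step I expect to be the main obstacle is the cross term $\Rxz\Ptwo+\Ptwo\Rxz$, since it is not PSD and a naive operator-norm bound would cost a factor of $\sqrt{n}$. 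The idea is to compute its quadratic form entrywise: for any $x$,
\[
x^\top(\Rxz\Ptwo+\Ptwo\Rxz)x = 2\sum_{i,j}r_{x,z,i}\Ptwo_{ij}x_ix_j,
\]
and then pull out $\|r_{x,z}\|_\infty$ and pass to $|x|$:
\[
\bigl|x^\top(\Rxz\Ptwo+\Ptwo\Rxz)x\bigr| \le 2\|r_{x,z}\|_\infty \sum_{i,j}\Ptwo_{ij}|x_i||x_j| = 2\|r_{x,z}\|_\infty\,|x|^\top\Ptwo|x| \le 2\|r_{x,z}\|_\infty\, x^\top\W x,
\]
where in the final step I use $\Ptwo\preccurlyeq\W$ from Lemma~\ref{lem:levelzerobounds} together with the fact that $\W$ is diagonal (so $|x|^\top\W|x|=x^\top\W x$). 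This gives $-\|r_{x,z}\|_\infty\W\preccurlyeq \Rxz\Ptwo+\Ptwo\Rxz\preccurlyeq \|r_{x,z}\|_\infty\W$, and Lemma~\ref{lem:ginfnorm} again converts this to $\|s_{x,z}\|_\infty$.

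Combining the three pieces yields $-\|s_{x,z}\|_\infty\W \preccurlyeq \D(\Ptwo)(z)\preccurlyeq \|s_{x,z}\|_\infty\W$, and then the identities $\D\G(z) = (2/p)\D\W(z)+(1-2/p)\D(\Ptwo)(z)$ and $\D\Lambdax(z)=\D\W(z)-\D(\Ptwo)(z)$ finish the proof, with all $p$-dependent prefactors (of size $O(1/(4/p-1))$, polylogarithmic for our choice $p=4-1/\log m$) absorbed into $\preccurlyeq$.
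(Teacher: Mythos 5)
Your proof is correct and takes essentially the same route as the paper, which, after writing $\G$ and $\Lambdax$ as linear combinations of $\W$ and $\Ptwo$, simply invokes the bounds on $\D\W(z)$ (Lemma~\ref{lem:lem:wbound}, whose proof you reproduce via $\|r_{x,z}\|_\infty \lesssim \|s_{x,z}\|_\infty$ and the row sums of $\Ptwo$) and on $\Px\odot\Pxz$ (Lemma~\ref{lem:ptildebound}). Your explicit handling of the cross term $\Rxz\Ptwo+\Ptwo\Rxz$ — which the paper's one-line citation glosses over and elsewhere treats via the Cauchy--Schwarz argument of Lemma~\ref{lem:lownerhelper} rather than your entrywise nonnegativity argument — fills in a detail rather than changing the approach.
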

\begin{proof}
Directly from Lemmas~\ref{lem:ptildebound} and~\ref{lem:lem:wbound}.
\end{proof}

\begin{lemma}
\begin{align*}
    \|\D^2(\Rxv)(z,u)\| \leq \|s_{x,v}\|_\infty \|s_{x,u}\|_\infty \|s_{x,z}\|_\infty.
\end{align*}
\end{lemma}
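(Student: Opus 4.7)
\medskip

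\noindent\textbf{Proof plan.} Since $\Rxv = \diag{r_{x,v}}$ is diagonal, its operator norm equals $\|r_{x,v}\|_\infty$, so it suffices to bound the infinity norm of the vector $\D^2(r_{x,v})(z,u)$. The strategy is to compute this second derivative by two applications of the product rule to
\[
r_{x,v} = \G^{-1}\W s_{x,v},
\]
and then bound each resulting summand using the estimates already developed in this appendix, with the key leverage coming from the operator $\|\cdot\|_{\infty\to\infty}$ bound on $\G^{-1}\W$ in Lemma~\ref{lem:ginfnorm}. The plan is that every factor of $\G^{-1}$ generated by the product rule will appear paired with a factor of $\W$, $\D\W$, $\D\G$, or $\Lambdax$, all of which are spectrally comparable to $\W$ by Lemmas~\ref{lem:levelzerobounds} and \ref{lem:Glambdabound}; so each $\G^{-1}(\cdots)\,s_{x,\bullet}$ block is really of the form $\G^{-1}\W\,[\text{diag}]\,s_{x,\bullet}$, and Lemma~\ref{lem:ginfnorm} converts it into an $\|\cdot\|_\infty$ bound.

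\medskip

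First I would record the three basic derivative formulas that feed into everything:
$\D\G(z)$ and $\D\Lambdax(z)$ are both L\"{o}wner-bounded by $\|s_{x,z}\|_\infty \W$ (Lemma~\ref{lem:Glambdabound}); $\D\W(z) = -2\diag{\Lambdax r_{x,z}}$ is entrywise bounded via $\|r_{x,z}\|_\infty \le \tfrac{1}{4/p-1}\|s_{x,z}\|_\infty$ (Lemma~\ref{lem:ginfnorm}); and the direction-vector derivative satisfies $\D(s_{x,v})(z) = -\Sxz s_{x,v}$ since $s_{x,v} = \A v$ and $\D\A(z) = -\Sxz \A$, so $\|\D(s_{x,v})(z)\|_\infty \le \|s_{x,z}\|_\infty \|s_{x,v}\|_\infty$. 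Differentiating $r_{x,v}$ once then gives the three-term expansion
\[
\D(r_{x,v})(z) \;=\; -\G^{-1}\D\G(z)\,\G^{-1}\W s_{x,v} \;+\; \G^{-1}\D\W(z)\,s_{x,v} \;+\; \G^{-1}\W\,\D(s_{x,v})(z),
\]
each summand of which is already of the form $\G^{-1}(\text{matrix }\preccurlyeq \|s_{x,z}\|_\infty \W)\,(\text{vector with }\|\cdot\|_\infty \le \|s_{x,v}\|_\infty)$, and hence is bounded in $\|\cdot\|_\infty$ by $\tfrac{1}{4/p-1}\|s_{x,z}\|_\infty\|s_{x,v}\|_\infty$ via Lemma~\ref{lem:ginfnorm}.

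\medskip

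The second differentiation in direction $u$ produces, by the product rule, roughly nine groups of summands: one for each way of applying $\D(\cdot)(u)$ to one of the three factors in each of the three terms above. Each group either (i) produces a new $\D\G(u)$, $\D\W(u)$ or $\Sxu$ between $\G^{-1}$ and $s_{x,v}$, which introduces a further $\|s_{x,u}\|_\infty$ factor by the same reasoning as the first-order calculation, or (ii) hits $\G^{-1}$ again, producing a nested $\G^{-1}\D\G(u)\G^{-1}$, which by Lemma~\ref{lem:Glambdabound} and the sandwich argument $\G^{-1}\D\G(u)\G^{-1} \preccurlyeq \|s_{x,u}\|_\infty \G^{-1}$ is still controlled, or (iii) hits a derivative factor again, in which case we invoke the second-order bounds $\D^2\W(z,u) \preccurlyeq \|s_{x,z}\|_\infty\|s_{x,u}\|_\infty \W$ (Lemma~\ref{lem:wprimesecondderivative}), $\D^2\G(z,u),\;\D^2\Lambdax(z,u) \preccurlyeq \|s_{x,z}\|_\infty\|s_{x,u}\|_\infty\W$ (Lemma~\ref{lem:Gsecondderivative}), and $\D^2(s_{x,v})(z,u) = 2\,s_{x,z}\odot s_{x,u}\odot s_{x,v}$, whose $\|\cdot\|_\infty$ is trivially $\le 2\|s_{x,z}\|_\infty\|s_{x,u}\|_\infty\|s_{x,v}\|_\infty$. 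In each case, wrapping the remaining $\G^{-1}\cdot \W$ with Lemma~\ref{lem:ginfnorm} yields an $\|\cdot\|_\infty$ bound of the desired form.

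\medskip

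Summing the finitely many contributions (each of size $O(\|s_{x,z}\|_\infty\|s_{x,u}\|_\infty\|s_{x,v}\|_\infty)$ with constants depending only on $1/(4/p-1)$) gives $\|\D^2(r_{x,v})(z,u)\|_\infty \lesssim \|s_{x,z}\|_\infty\|s_{x,u}\|_\infty\|s_{x,v}\|_\infty$, and hence the claimed operator-norm bound on $\D^2(\Rxv)(z,u)$. The main obstacle is purely combinatorial bookkeeping: tracking all nine product-rule branches and checking, for each, that the $\G^{-1}$'s always appear adjacent to a factor one can absorb via Lemma~\ref{lem:ginfnorm}; the analytic content is already in place from the preceding lemmas, so no new estimate is required.
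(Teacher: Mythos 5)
Your overall route is the same as the paper's: reduce to bounding the vector $\D^2(r_{x,v})(z,u)$ in $\|\cdot\|_\infty$ (since $\Rxv$ is diagonal), write $r_{x,v}=\G^{-1}\W s_{x,v}$, apply the product rule twice (the paper organizes the first derivative through Lemma~\ref{lem:derivativeofr} and then differentiates each of its three terms in direction $u$), use $\D(s_{x,v})(z)=-\Sxz s_{x,v}$, and absorb every $\G^{-1}$ through the $\|\cdot\|_{\infty\to\infty}$ machinery of Lemma~\ref{lem:ginfnorm}. That is exactly the paper's proof strategy for this lemma.

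However, one inference in your plan does not work as stated. You argue that because $\D\G$, $\D\Lambdax$ (and implicitly $\D^2\G$) are \emph{spectrally} comparable to $\W$, ``each $\G^{-1}(\cdots)s_{x,\bullet}$ block is really of the form $\G^{-1}\W[\text{diag}]s_{x,\bullet}$'' and hence Lemma~\ref{lem:ginfnorm} applies. That is only literally true for the diagonal factors ($\W$, $\D\W(z)$, $\D^2\W(z,u)$, $\Sxu$). The factors $\D\G(z)$, $\D\Lambdax(z)$ and $\D^2\G(z,u)$ contain the non-diagonal pieces $\Px\odot(\cdots)$, and a L\"{o}wner bound $-c\,\W\preccurlyeq M\preccurlyeq c\,\W$ does not by itself yield the row-wise control $|e_i^\top M y|\lesssim c\,w_i\|y\|_\infty$ that the Perron-type argument behind Lemma~\ref{lem:ginfnorm} requires, so those blocks cannot be rewritten as $\G^{-1}\W[\text{diag}]$. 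The paper closes precisely this point with the entrywise estimates and their $\G^{-1}$ versions, namely Lemmas~\ref{lem:infnorm1} and~\ref{lem:infnorm2} at first order and Lemmas~\ref{lem:infnormhelper1} and~\ref{lem:infnormhelper2} at second order, and its proof of the present lemma cites Lemmas~\ref{lem:ginfnorm},~\ref{lem:infnorm2} and~\ref{lem:lem:wbound} rather than the L\"{o}wner bounds of Lemmas~\ref{lem:Glambdabound} and~\ref{lem:Gsecondderivative} alone. Your argument becomes correct once ``spectrally comparable, hence absorbable'' is replaced by those infinity-norm lemmas. A further minor point: the bound you attribute to Lemma~\ref{lem:wprimesecondderivative} (for $\D^2\W(z,u)$) is actually Lemma~\ref{lem:derivativeofwprime}; Lemma~\ref{lem:wprimesecondderivative} concerns the third derivative $\D^2(\Wxv)(z,u)$.
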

\begin{proof}
We use the terms of the derivative of $R_v$ in direction $z$ (according to Lemma~\ref{lem:derivativeofr}) and differentiate them one by one with respect to $u$:
\begin{align*}
    \D(\G^{-1}\D\G(z)\G^{-1}\Wxz)(u) \rightarrow
    &-\G^{-1}\D\G(u)\G^{-1}\D\G(z)\G^{-1}\W s_{x,z}\\
    &+\G^{-1}\D\G(z,u)\G^{-1}\W s_{x,z}\\
    &+\G^{-1}\D\G(z)\G^{-1}\D\G(u)\G^{-1}\W s_{x,z}\\
    &+\G^{-1}\D\G(z)\G^{-1}\D(\W)(u)s_{x,z}\\
    &+\G^{-1}\D\G(z)\G^{-1}\W\Sxu s_{x,z}.
\end{align*}
Now from Lemmas~\ref{lem:ginfnorm} and~\ref{lem:infnorm2} and~\ref{lem:lem:wbound} 
we have
\begin{align*}
    \|\G^{-1}\D\G(u)\G^{-1}\D\G(z)\G^{-1}\W s_{x,z}\|_\infty & \leq \|s_{x,u}\|_\infty \|\G^{-1}\D\G(z)\G^{-1}\W s_{x,v}\|_\infty \\
    & \leq \|s_{x,u}\|_\infty \|s_{x,z}\|_\infty \|\G^{-1}\W s_{x,v}\|_\infty \leq \|s_{x,u}\|_\infty \|s_{x,z}\|_\infty \|s_{x,v}\|_\infty,
\end{align*}
\begin{align*}
    \|\G^{-1}\D\G(z,u)\G^{-1}\W s_{x,z}\|_\infty &\leq \|s_{x,z}\|_\infty \|s_{x,u}\|_\infty \|\G^{-1}\W s_{x,z}\|_\infty \leq \|s_{x,z}\|_\infty \|s_{x,u}\|_\infty \|s_v\|_\infty,
\end{align*}
the third and forth terms are similar to the first and second terms resp., for the fifth term
\begin{align*}
    \|\G^{-1}\D\G(v)\G^{-1}\D(\W)(u)s_{x,z}\|_\infty &\leq \|s_{x,z}|_\infty \|s_v\|_\infty \|s_u\|_\infty.
\end{align*}
the derivatives of the other terms are handled in a similar way.
\end{proof}

\begin{lemma}\label{lem:lownerhelper}
For a symmetric matrix $D$ with $-\W \leq D \leq \W$, we have
\begin{align*}
    -\W\|r_v\|_\infty \leq \Rxv D + D\Rxv \leq \W\|r_{x,v}\|_\infty.
\end{align*}
\end{lemma}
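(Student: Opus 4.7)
\medskip

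The plan is to reduce the bound to a statement about an operator with $\|\cdot\|_{\mathrm{op}} \le 1$, using the fact that $R_v$ and $W$ are both diagonal (hence commute) while only $D$ is genuinely non-diagonal. Since $W \succ 0$ is diagonal, I can define $\tilde D \triangleq W^{-1/2} D W^{-1/2}$, and the hypothesis $-W \preceq D \preceq W$ becomes exactly $\|\tilde D\|_{\mathrm{op}} \le 1$. Writing $D = W^{1/2}\tilde D W^{1/2}$ and using that $R_v$ commutes with $W^{1/2}$ (both diagonal), I factor
\begin{equation*}
R_v D + D R_v \;=\; W^{1/2}\bigl(R_v \tilde D + \tilde D R_v\bigr) W^{1/2}.
\end{equation*}
So it suffices to prove $-2\|r_v\|_\infty I \preceq R_v\tilde D + \tilde D R_v \preceq 2\|r_v\|_\infty I$; conjugating by $W^{1/2}$ then gives the claimed $W\|r_v\|_\infty$ bound (the factor $2$ is absorbed into $\preccurlyeq$).

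For the inner bound, I would use the elementary symmetric-operator inequality $AB + BA \preceq \alpha^2 A^2 + \alpha^{-2} B^2$ (valid for any symmetric $A,B$ and $\alpha>0$, coming from $(\alpha A - \alpha^{-1}B)^2 \succeq 0$). Applied to $A = R_v$, $B = \tilde D$, and choosing $\alpha^2 = 1/\|r_v\|_\infty$, this yields
\begin{equation*}
R_v\tilde D + \tilde D R_v \;\preceq\; \|r_v\|_\infty^{-1} R_v^2 + \|r_v\|_\infty \tilde D^2 \;\preceq\; \|r_v\|_\infty I + \|r_v\|_\infty I \;=\; 2\|r_v\|_\infty I,
\end{equation*}
using $R_v^2 \preceq \|r_v\|_\infty^2 I$ and $\tilde D^2 \preceq I$. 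The lower bound follows by replacing $\tilde D$ with $-\tilde D$ (which still satisfies $\|\cdot\|_{\mathrm{op}} \le 1$) in the same inequality.

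There is no serious obstacle here: the diagonal commutativity of $R_v$ with $W$ makes the conjugation trivial, and the rest is one application of AM--GM for operators. The only thing to keep consistent is the convention that $\preccurlyeq$ ignores universal constants, which is what allows the factor of $2$ from the symmetrization to disappear.
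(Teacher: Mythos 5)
Your proof is correct, and it takes a slightly different route from the paper's. The paper bounds the quadratic form directly: it applies the generalized Cauchy--Schwarz inequality $|\qq_1^\top D \qq_2| \le \sqrt{\qq_1^\top \W \qq_1}\,\sqrt{\qq_2^\top \W \qq_2}$ (which encodes exactly your fact that $\W^{-1/2}D\W^{-1/2}$ has operator norm at most $1$) with $\qq_1 = \Rxv \qq$, $\qq_2 = \qq$, and then uses $\Rxv \W \Rxv \preceq \|r_{x,v}\|_\infty^2 \W$ to get $\qq^\top \Rxv D \qq \le \|r_{x,v}\|_\infty\, \qq^\top \W \qq$ with no extra constant and no conjugation step. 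You instead conjugate by $\W^{1/2}$ (using that $\Rxv$ and $\W$ are diagonal and commute), reduce to the clean operator statement $-2\|r_{x,v}\|_\infty I \preceq \Rxv \tilde D + \tilde D \Rxv \preceq 2\|r_{x,v}\|_\infty I$ for $\|\tilde D\|_{\mathrm{op}}\le 1$, and prove it by the matrix AM--GM inequality $(\alpha A - \alpha^{-1}B)^2 \succeq 0$. Both arguments rest on the same two facts, so the difference is mostly one of packaging: the paper's version is marginally tighter (no factor $2$) and shorter, while yours isolates a reusable operator-norm bound for the anticommutator that does not reference $\W$ at all. Two small points to tidy: the choice $\alpha^2 = 1/\|r_{x,v}\|_\infty$ needs the trivial caveat $r_{x,v} \ne 0$ (otherwise the statement is vacuous), and since the paper's $\preccurlyeq$ already absorbs universal constants, your factor of $2$ is indeed harmless.
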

\begin{proof}
For arbitrary vectors $\qq_1,\qq_2$, using the inequality $\qq_1^\top D \qq_2 \leq \sqrt{\qq_1^\top \W \qq_1} \sqrt{\qq_2^\top \W \qq_2}$ with $\qq_1 = \Rxv \qq$ and $\qq_2 = \qq$:
\begin{align*}
    \qq^\top \Rxv D\qq \leq \sqrt{\qq^\top \Rxv\W\Rxv\qq}\sqrt{\qq^\top \W \qq} \leq \|r_{x,v}\|_\infty \qq^\top \W \qq.
\end{align*}
\end{proof}

\begin{lemma}\label{lem:hadamardlowner}
For diagonal matrices $D_1, D_2, D_3$(not necessarily positive) we have
\begin{align*}
    &-\|D_1\|\|D_2\|P^{(2)}\leq \Px D_1\Px \odot \Px D_2\Px \leq \|D_1\|\|D_2\|\Ptwo,\\
    &-\|D_1\|\|D_2\|\|D_3\|\Ptwo\leq \Px(D_2\Px D_3 + D_3\Px D_2)\Px \odot \Px D_1\Px \leq \|D_1\|\|D_2\|\|D_3\|\Ptwo.
\end{align*}
\end{lemma}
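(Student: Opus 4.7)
}
The plan is to reduce both L\"{o}wner inequalities to Frobenius/Cauchy-Schwarz estimates via the standard identity
\[
x^{\top}(A \odot B)x = \tr(XAXB) \quad \text{where } X \triangleq \diagtwo{x},
\]
which holds whenever $B$ is symmetric. Applied to our setting with $A = \Px D_1 \Px$ and $B = \Px D_2 \Px$, and using that $\Px$ is an orthogonal projection so $\Px^{2} = \Px$, testing against an arbitrary $x$ and cycling the trace gives
\[
x^{\top}\bigl(\Px D_1 \Px \odot \Px D_2 \Px\bigr)x = \tr(D_2 M D_1 M),
\qquad M \triangleq \Px X \Px.
\]
Since $M$ is symmetric, a direct index expansion yields
\[
\tr(D_2 M D_1 M) = \sum_{i,j}(D_1)_{jj}(D_2)_{ii}\, M_{ij}^{2} \;\le\; \|D_1\|\,\|D_2\|\,\|M\|_{F}^{2},
\]
and the same bound with the opposite sign holds by reversing the sign argument. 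It remains to identify $\|M\|_F^{2}$ with $x^{\top}\Ptwo x$, which again uses $\Px^{2} = \Px$:
\[
\|M\|_{F}^{2} = \tr(\Px X \Px X \Px) = \tr(X\Px X \Px) = x^{\top}\Ptwo x.
\]
This establishes the first inequality.

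For the second inequality I would run the same reduction. The identity gives
\[
x^{\top}\bigl(\Px(D_2 \Px D_3 + D_3 \Px D_2)\Px \odot \Px D_1 \Px\bigr)x
\;=\; 2\tr(D_1 M D_2 \Px D_3 M),
\]
after using symmetry in $(D_2, D_3)$ and cycling the trace. I would then apply the trace Cauchy-Schwarz $|\tr(AB)| \le \|A\|_{F}\|B\|_{F}$ with $A = D_1 M$ and $B = D_2 \Px D_3 M$. The first factor is controlled as $\|D_1 M\|_{F}^{2} = \tr(M D_1^{2} M) \le \|D_1\|^{2}\|M\|_{F}^{2}$. For the second factor I would use the fact that $\Px \preceq I$ in the L\"owner order, so that $D_3 \Px D_3 \preceq D_3^{2}$, giving
\[
\|D_2 \Px D_3 M\|_{F}^{2} = \tr(M D_3 \Px D_2^{2} \Px D_3 M) \;\le\; \|D_2\|^{2}\,\tr(M D_3 \Px D_3 M) \;\le\; \|D_2\|^{2}\|D_3\|^{2}\|M\|_{F}^{2}.
\]
Combining with $\|M\|_{F}^{2} = x^{\top}\Ptwo x$ yields the claimed bound with an implicit constant $2$, which is absorbed into the $\preccurlyeq$ convention of the paper.

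There is no real obstacle here beyond bookkeeping: the substantive point is the two structural facts that $\Px$ is a projection (so $\Px^{2} = \Px$ and $\Px \preceq I$) and the trace identity for Hadamard products. The only mild subtlety is that the diagonal matrices $D_i$ are allowed to be sign-indefinite, which is why I pass through $\|D_i\|$ via the pointwise estimate $|(D_i)_{ii}| \le \|D_i\|$ in the first inequality and via $D_i^{2} \preceq \|D_i\|^{2} I$ in the second. Both arguments are insensitive to the signs of the diagonal entries, so the lower L\"owner bounds follow by negating the upper ones.
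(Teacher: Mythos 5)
Your proof is correct, but it takes a different route from the paper's. The paper proves both inequalities by writing $\Px$ via a Cholesky-type rank-one decomposition $\Px = \sum_i u_i u_i^\top$ and expanding the Hadamard product against one factor, so that $\qq^\top(\,\cdot\odot\Px D_1\Px\,)\qq$ becomes a sum of terms of the form $(\qq\odot u_i)^\top(\cdot)(\qq\odot u_i)$, each controlled by a Schur-product/Löwner argument (with the second inequality first reducing $D_2\Px D_3 + D_3\Px D_2$ to $\|D_2\|\|D_3\|I$ via $\Px\preceq I$). You instead use the trace identity $x^\top(A\odot B)x=\tr(XAXB)$ to reduce everything to the single matrix $M=\Px X\Px$, then finish with an index expansion in the first case and trace Cauchy--Schwarz in Frobenius norm in the second, identifying $\|M\|_F^2=x^\top\Ptwo x$ via idempotence of $\Px$. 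Both arguments rest on the same two structural facts ($\Px^2=\Px$, $\Px\preceq I$), but your reduction is arguably cleaner: it is sign-insensitive by construction and avoids the paper's somewhat loosely stated intermediate decomposition $\Px D_1\Px=\sum_i(u_i^\top D_1u_i)u_iu_i^\top$, which as written omits cross terms and needs the full double-sum expansion to be literally correct. The factor $2$ you pick up in the second inequality is also present (and silently absorbed) in the paper's own proof, so it is consistent with the paper's convention that these Löwner bounds hold up to universal constants.
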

\begin{proof}
    Consider the Choleskey decomposition of $\Px$:
    \begin{align*}
        \Px = \sum_{i=1}^n u_i u_i^\top,
    \end{align*}
    Then for the first inequality, note that we can write $\Px D_1\Px$ as
    \begin{align}
      \Px D_1\Px = \sum_{i=1}^n (u_i^\top D_1 u_i)u_i u_i^\top.\label{eq:choleskey} 
    \end{align}
    Hence, for arbitrary vector $\ell$:
    \begin{align*}
        \Big|\qq^\top (\Px D_1\Px \odot \Px D_2 \Px)\qq \Big|&\leq \sum_{i=1}^n |u_i^\top D_1 u_i| |(\ell \odot u_i)^\top (\Px D_2\Px) (\qq \odot u_i)|\\
        &\leq \sum_{i=1}^n \|D_1\| |(\ell \odot u_i)^\top (\Px D_2\Px) (\qq \odot u_i)|\\
        &\leq \sum_i \|D_1\| \|D_2\| (\qq \odot u_i)^\top \Px (\qq \odot u_i)\\
        &= \|D_1\|\|D_2\|\qq^\top \Ptwo \qq.
    \end{align*}
    For the second inequality, note that
    \begin{align*}
        \qq^\top(D_2 \Px D_3 + D_3 \Px D_2)\qq \leq 
        2(D_2\qq)^\top \Px(D_3 \qq) \leq 2\|D_2 \qq\|_2 \|D_3 \qq\|_2 \leq 2\|D_2\|\|D_3\|\|\qq\|_2^2,
    \end{align*}
    which implies
    \begin{align*}
       -\|D_2\|\|D_3\| I \preccurlyeq D_2\Px D_3 + D_3\Px D_2 \preccurlyeq \|D_2\|\|D_3\| I.
    \end{align*}
    Therefore
    \begin{align*}
       -\|D_2\|\|D_3\|\Px \preccurlyeq \Px(D_2\Px D_3 + D_3\Px D_2)\Px \preccurlyeq \|D_2\|\|D_3\| \Px.
    \end{align*}
    Now again using Equation~\eqref{eq:choleskey}:
    \begin{align*}
        &\Big|\qq^\top(  \Px(D_2\Px D_3 + D_3\Px D_2)\Px)\odot (\Px D_1\Px)\qq\Big|\\
        &\leq \sum_{i=1}^n \big|u_i^\top D_1u_i\big| |(\qq\odot u_i)^\top (\Px D_2\Px D_3\Px + \Px D_3\Px D_2\Px)(\qq\odot u_i)| \\
        &\leq \|D_1\|D_2\|\|D_3\| (\qq\odot u_i)^\top \Px (\qq\odot u_i)\\
        &= \|D_1\|\|D_2\|\|D_3\|\qq^\top \Ptwo\qq.
    \end{align*}
\end{proof}

\begin{lemma}\label{lem:lownerhelper2}
Given a matrix $-\W \leq D \leq \W$ and arbitrary diagonal matrices $V_1$ and $V_2$ and arbitrary vector $\ell$:
\begin{align*}
    \qq^\top V_1 D V_2 \qq \leq \|V_1\|_{op}\|V_2\|_{op} \|\qq\|_w^2.
\end{align*}
\end{lemma}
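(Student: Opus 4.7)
The plan is to reduce the bilinear form directly to a Cauchy--Schwarz inequality under the $W$-inner product. I would first rewrite
\[
\qq^\top V_1 D V_2 \qq = (V_1 \qq)^\top D (V_2 \qq),
\]
treating the diagonal matrices $V_1, V_2$ as acting on the test vector. Since $-\W \preccurlyeq D \preccurlyeq \W$ (interpreted in the symmetric sense consistent with its usage in Lemma~\ref{lem:lownerhelper}), the matrix $\W^{-1/2} D \W^{-1/2}$ has operator norm at most $1$, so the generalized Cauchy--Schwarz inequality
\[
\qq_1^\top D \qq_2 \;\leq\; \sqrt{\qq_1^\top \W \qq_1}\,\sqrt{\qq_2^\top \W \qq_2}
\]
holds for any vectors $\qq_1, \qq_2$. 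Applying this with $\qq_1 = V_1 \qq$ and $\qq_2 = V_2 \qq$ reduces the problem to bounding $\qq^\top V_i \W V_i \qq$ for $i=1,2$.

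For the second step, I would use that $\W$ is diagonal and $V_1, V_2$ are diagonal, so $V_i \W V_i$ is diagonal with $i$th entry $(V_i)_{ii}^2 \W_{ii}$. This gives the entrywise (hence Löwner) comparison
\[
V_i \W V_i \;\preccurlyeq\; \|V_i\|_{\mathrm{op}}^{2}\, \W,
\]
and therefore $\sqrt{\qq^\top V_i \W V_i \qq} \leq \|V_i\|_{\mathrm{op}} \, \|\qq\|_{w}$. Combining with the Cauchy--Schwarz step yields the desired bound
\[
\qq^\top V_1 D V_2 \qq \;\leq\; \|V_1\|_{\mathrm{op}}\,\|V_2\|_{\mathrm{op}}\,\|\qq\|_{w}^{2}.
\]

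There is no real obstacle here: the lemma is a routine sandwich estimate of the same flavor as Lemma~\ref{lem:lownerhelper}. The only subtle point is the symmetry convention on $D$, which is inherited from the Löwner order $-\W \preccurlyeq D \preccurlyeq \W$; if one wishes to allow a nonsymmetric $D$ with $-\W \preccurlyeq \tfrac{1}{2}(D + D^\top) \preccurlyeq \W$, then the exact same argument applies after replacing $D$ by its symmetric part, since the bilinear form $\qq_1^\top D \qq_2 + \qq_2^\top D \qq_1$ only sees the symmetrization.
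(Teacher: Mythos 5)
Your proof is correct and follows essentially the same route as the paper: the paper also bounds $\qq^\top V_1 D V_2 \qq$ by $\|V_1\qq\|_w\|V_2\qq\|_w$ via the $W$-weighted Cauchy--Schwarz inequality implied by $-\W \preccurlyeq D \preccurlyeq \W$, and then uses the diagonality of $V_1,V_2$ to pull out their operator norms. Your extra remark on the symmetry convention for $D$ is a harmless clarification, not a deviation.
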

\begin{proof}
simply by Cauchy Schwarz:
\begin{align*}
    LHS
    \leq \|V_1\qq\|_w \|V_2\qq\|_w \leq \|V_1\| \|V_2\| \|\qq\|_w^2.
\end{align*}
\end{proof}

\begin{lemma}\label{lem:Gsecondderivative}
For matrices $G$ and $\Lambdax$ we have
\begin{gather*}
    -\|s_{x,v}\|_\infty \|s_{x,z}\|_\infty \W \leq \D^2\G(v,z) \leq \|s_{x,v}\|_\infty \|s_{x,z}\|_\infty \W,\\
    -\|s_{x,v}\|_\infty \|s_{x,z}\|_\infty \W \leq \D\Lambdax(v,z) \leq \|s_{x,v}\|_\infty \|s_{x,z}\|_\infty \W.
\end{gather*}
\end{lemma}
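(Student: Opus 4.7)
\smallskip
\noindent\textbf{Proof plan.} The starting point is the linear decomposition $\Lambdax = \W - \Ptwo$ together with $\G = (2/p)\W + (1-2/p)\Ptwo$, which follows from the definitions in~\eqref{eq:GLambdadefs}. This reduces both claimed bounds to the single task of showing
\[
    -\|s_{x,v}\|_\infty\|s_{x,z}\|_\infty \W \;\preccurlyeq\; \D^2\W(v,z),\;\D^2\Ptwo(v,z) \;\preccurlyeq\; \|s_{x,v}\|_\infty\|s_{x,z}\|_\infty \W.
\]
I will carry out the two estimates separately, in each case by differentiating the first-derivative formulas from Section~\ref{sec:prelim} one more time and then pulling out one power of $\|s_{x,z}\|_\infty$ from each new $z$-dependent factor exactly the way the first-derivative bounds (Lemma~\ref{lem:Glambdabound}, Lemma~\ref{lem:lem:wbound}, Lemma~\ref{lem:ptildebound}) pulled out $\|s_{x,v}\|_\infty$ the first time.

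\smallskip
\noindent\textbf{Second derivative of $\W$.} Since $\D\W(v) = -2\diag{\Lambdax \rxv}$ and $\D^2\W(v,z)$ is again diagonal, I will differentiate this expression in direction $z$ to obtain
\[
    \D^2\W(v,z) \;=\; -2\diag{\D\Lambdax(z)\,\rxv \;+\; \Lambdax\,\D\rxv(z)},
\]
so the task is to show the vector inside the $\diag$ is bounded entry-wise by a constant multiple of $\|s_{x,v}\|_\infty\|s_{x,z}\|_\infty w_i$. For the first summand I will split $\D\Lambdax(z) = \D\W(z) - \D\Ptwo(z)$ and use: (i) the entry-wise bound $|(\D\W(z))_{ii}| = 2|(\Lambdax \rxz)_i| \lesssim \|s_{x,z}\|_\infty w_i$ that is already buried inside Lemma~\ref{lem:lem:wbound}, combined with $\|\rxv\|_\infty \lesssim \|s_{x,v}\|_\infty$ from Lemma~\ref{lem:ginfnorm}; and (ii) a row-sum estimate on $\Px\odot \D\Px(z)$ obtained by applying Lemma~\ref{lem:hadamardlowner} to the expansion $\D\Px(z) = -\Rxz\Px - \Px\Rxz + 2\Pxz$ followed by $\|\rxz\|_\infty \lesssim \|s_{x,z}\|_\infty$. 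For the second summand $\Lambdax \D\rxv(z)$ I will expand
\[
    \D\rxv(z) \;=\; -\G^{-1}\D\G(z)\rxv \;+\; \G^{-1}\D\W(z)s_{x,v} \;-\; \G^{-1}\W \Sxz\, s_{x,v},
\]
and in each piece factor $\G^{-1}\W$ through one more time so that the crucial operator-norm bound $\|\G^{-1}\W u\|_\infty \leq \tfrac{1}{4/p-1}\|u\|_\infty$ from Lemma~\ref{lem:ginfnorm} can be applied: this converts each resulting vector into one of the form $\G^{-1}\W(\text{diagonal})(\text{vector})$ whose $\infty$-norm is controlled by $\|s_{x,v}\|_\infty\|s_{x,z}\|_\infty$. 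Multiplying on the left by $\Lambdax$ then costs only an $O(w_i)$ factor per entry because $-\W \preccurlyeq \Lambdax \preccurlyeq \W$ (Lemma~\ref{lem:levelzerobounds}) combined with $\Lambdax$ being diagonal-dominated by $\W$ in $\infty\to\infty$ sense as well.

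\smallskip
\noindent\textbf{Second derivative of $\Ptwo$ and main obstacle.} Using $\D\Ptwo(v) = 2\Px \odot \D\Px(v)$, differentiating once more gives
\[
    \D^2\Ptwo(v,z) \;=\; 2\,\D\Px(z)\odot \D\Px(v) \;+\; 2\,\Px\odot \D^2\Px(v,z).
\]
For the first Hadamard cross-term I will substitute the explicit formula for $\D\Px(v)$ and $\D\Px(z)$ and apply Lemma~\ref{lem:hadamardlowner} (which turns Hadamard products of the form $\Px D_1\Px \odot \Px D_2\Px$ into $\|D_1\|\|D_2\|\,\Ptwo \preccurlyeq \|D_1\|\|D_2\|\W$) with $D_1 = \Rxv$ and $D_2 = \Rxz$; this uses $\|\Rxv\|_{\mathrm{op}} = \|\rxv\|_\infty \lesssim \|s_{x,v}\|_\infty$. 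The second term requires computing $\D^2\Px(v,z)$; differentiating Lemma~\ref{lem:projderivative} yields a finite list of monomials involving $\Rxv, \Rxz, \D(\Rxv)(z)$, and projected versions thereof, and Lemma~\ref{lem:derivativeofr} (which already gave $\|\D\Rxv(z)\|_\infty \lesssim \|s_{x,v}\|_\infty\|s_{x,z}\|_\infty$) lets me handle $\D(\Rxv)(z)$ just like a product of two $R$'s. Each monomial lies between $\pm \|s_{x,v}\|_\infty\|s_{x,z}\|_\infty \Px$ as a PSD matrix, and Lemma~\ref{lem:hadamardlowner} (second inequality, with three $D_i$'s) takes the Hadamard product with $\Px$ into $\|s_{x,v}\|_\infty\|s_{x,z}\|_\infty \Ptwo \preccurlyeq \|s_{x,v}\|_\infty\|s_{x,z}\|_\infty \W$. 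The main obstacle I anticipate is the term $\Px\odot (\Px\Rxz\Rxv\Px)$ arising from $\D^2\Px(v,z)$: it is not immediately of the form covered by Lemma~\ref{lem:hadamardlowner} because the two diagonal matrices $\Rxz, \Rxv$ sit side by side rather than sandwiching a $\Px$. I plan to handle it by the symmetrization identity $\Rxz\Rxv = \tfrac12(\Rxv\Rxz + \Rxz\Rxv)$ which is automatic since both are diagonal, followed by $\|\Rxv\Rxz\|_{\mathrm{op}}\leq \|\rxv\|_\infty\|\rxz\|_\infty$, reducing it to a single-$D$ Hadamard estimate $\Px\odot \Px D \Px \preccurlyeq \|D\|\Ptwo$ that follows from the Cholesky expansion $\Px = \sum u_iu_i^\top$ used in Lemma~\ref{lem:hadamardlowner}. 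Finally, combining the $\W$- and $\Ptwo$-bounds through the linear relations at the top yields the lemma.
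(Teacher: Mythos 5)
Your plan is essentially the paper's own proof: the paper likewise expands $\D^2\G(v,z)$ into monomials in $\Px$, $\Rxv$, $\Rxz$, $\D(\Rxv)(z)$ and $\D\W$ (and disposes of $\Lambdax$ by noting it is, like $\G$, a linear combination of $\W$ and $\Ptwo$), bounding each monomial via Lemma~\ref{lem:hadamardlowner}, the pull-the-diagonal-out/symmetrization step of Lemma~\ref{lem:lownerhelper}, the bound $\|\rxv\|_\infty \lesssim \|s_{x,v}\|_\infty$ from Lemma~\ref{lem:ginfnorm}, the same Schur-product treatment of $\Px\Rxz\Rxv\Px$, and Lemma~\ref{lem:derivativeofwprime} for the $\D\Wxv(z)$ piece. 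One correction to your $\W$-part: a L\"{o}wner bound from Lemma~\ref{lem:hadamardlowner} does not by itself give the entrywise estimate $|((\Px\odot\D\Px(z))\,\rxv)_i| \lesssim \|s_{x,v}\|_\infty\|s_{x,z}\|_\infty w_i$ (a symmetric $M$ with $-c\W \preccurlyeq M \preccurlyeq c\W$ only satisfies $|M_{ij}|\lesssim c\sqrt{w_iw_j}$, so $|(M\rxv)_i|$ can exceed $c\|\rxv\|_\infty w_i$ by a factor as large as $\sqrt{mn/w_i}$), so for that step you must use the row-wise Cauchy--Schwarz estimates of Lemmas~\ref{lem:infnorm1} and~\ref{lem:infnorm2}, which is exactly how the paper proves Lemma~\ref{lem:derivativeofwprime}; with that substitution your argument coincides with the paper's.
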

\begin{proof}
Note that
\begin{align*}
    & \D\G(v,z) = \frac{2}{p}\D\Wxv(z) \\
    & + (1-\frac{2}{p})\Px\odot \Px \D(\Rxv)(z)\Px \\
    &+
    (1-\frac{2}{p})\Px\odot (\Rxz\Px\Rxv\Px+\Px\Rxz\Rxv\Px + \Px\Rxv\Px\Rxz + \Px\Rxv\Px\Rxz\Px + \Px\Rxz\Px\Rxv\Px) \\
    & + (1-\frac{2}{p})(\Px\Rxz\Px)\odot (\Px\Rxv\Px).
\end{align*}
But using Lemma~\ref{lem:hadamardlowner}:
\begin{align*}
    -\|r_{x,z}\|_\infty\|r_{x,v}\|_\infty \W \leq (\Px\Rxz\Px)\odot (\Px\Rxv\Px) \leq \|r_{x,z}\|_\infty\|r_{x,v}\|_\infty \W.
\end{align*}
On the other hand, note that
\begin{align*}
    \Rxz\Px\Rxv + \Rxv\Px\Rxz \leq \|r_{x,v}\|_\infty \|r_{x,z}\|_\infty I,
\end{align*}
simply by checking the operator norm of LHS. Hence
\begin{align*}
    \Px\Rxz\Px\Rxv\Px + \Px\Rxv\Px\Rxz\Px \leq \|r_{x,v}\|_\infty \|r_{x,z}\|_\infty \Px. 
\end{align*}
On the other hand, 
\begin{align*}
    \Px\Rxz \Rxv\Px \leq \|r_{x,v}\|_\infty \|r_{x,z}\|_\infty \Px.
\end{align*}
Therefore, by Schur product theorem
\begin{align*}
    \Px\odot (\Px\Rxz\Rxv\Px + \Px\Rxv\Px\Rxz\Px + \Px\Rxz\Px\Rxv\Px) &\leq \|r_{x,v}\|_\infty \|r_{x,z}\|_\infty \Ptwo\\
    & \|r_{x,v}\|_\infty \|r_{x,z}\|_\infty \W.
\end{align*}
Moreover, 
\begin{align*}
\Px\odot (\Rxz\Px\Rxv\Px + \Px\Rxv\Px\Rxz) = 
\Rxz(\Px\odot (\Px\Rxv\Px)) + (\Px\odot (\Px\Rxv\Px))\Rxz,
\end{align*}
and note that
\begin{align*}
    \Px\odot \Px\Rxv\Px \leq \|r_{x,v}\|_\infty \W.
\end{align*}
Hence, by Lemma~\ref{lem:lownerhelper}
\begin{align*}
    \Rxz(\Px\odot (\Px\Rxv\Px)) + (\Px\odot (\Px\Rxv\Px))\Rxz \leq \W\|r_{x,v}\|_\infty \|r_{x,z}\|_\infty.
\end{align*}
Finally, note that from Lemma~\ref{lem:derivativeofwprime}:
\begin{align*}
    \D\Wxv(z) \leq \|r_{x,v}\|_\infty \|r_{x,z}\|_\infty \W.
\end{align*}
All the inequalities that we wrote also hold in the other direction with a negative sign. Combining all the inequalities concludes the proof for $\G$. As $\Lambdax$ is also a linear combination of $W$ and $\Ptwo$, using the exact same bounds we can obtain the conclusion for $\Lambdax$ as well.
\end{proof}

\begin{lemma}\label{lem:thirdderivativep2}
We have
\begin{align*}
    -\|u\|_\infty \|v\|_\infty \|z\|_\infty \W \leq \D^3\Ptwo(u,v,z) \leq \|u\|_\infty \|v\|_\infty \|z\|_\infty W.
\end{align*}
\end{lemma}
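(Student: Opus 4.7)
} The plan is to expand $P^{(2)}=\Px\odot\Px$ by Leibniz's rule and then reduce every resulting term to an application of Lemma~\ref{lem:hadamardlowner}. Differentiating three times in the directions $u,v,z$ and using that the Hadamard product is bilinear, we obtain
\begin{align*}
\D^3 P^{(2)}(u,v,z)
&= 2\,\Px\odot \D^3\Px(u,v,z) \\
&\quad + 2\bigl[\D\Px(u)\odot \D^2\Px(v,z) + \D\Px(v)\odot \D^2\Px(u,z) + \D\Px(z)\odot \D^2\Px(u,v)\bigr].
\end{align*}
Hence it suffices to prove the L\"owner inequality
$-\|u\|_\infty\|v\|_\infty\|z\|_\infty\W \preccurlyeq T \preccurlyeq \|u\|_\infty\|v\|_\infty\|z\|_\infty\W$
for each term $T$ above.

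Next I would expand $\D^k \Px$ for $k=1,2,3$ using Lemma~\ref{lem:projderivative} and successive differentiation. The first derivative is $\D\Px(v)=-\Px\Rxv-\Rxv\Px+2\Px\Rxv\Px$, and iterating produces $\D^2\Px(v,z)$ and $\D^3\Px(u,v,z)$ as sums of monomials of the form
\begin{equation*}
M_1 \Px M_2 \Px \cdots \Px M_k,
\end{equation*}
where each $M_i$ is either the identity, a diagonal $\Rxw$ (with $w\in\{u,v,z\}$), or a diagonal derivative $\D\Rxw(\cdot)$, $\D^2\Rxw(\cdot,\cdot)$ arising from differentiating $\Rxv$. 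The crucial point is that every such monomial is a product of \emph{diagonal} matrices alternating with copies of $\Px$, and the outermost factors on the left and right can always be arranged to be $\Px$ (after moving diagonals in and absorbing them into neighbouring $M_i$). Consequently, once Hadamard-multiplied against either $\Px$ or another monomial of the same shape, the result falls exactly into the template handled by Lemma~\ref{lem:hadamardlowner}: something of the form $\Px D \Px \odot \Px D' \Px$ or $\Px(D_2 \Px D_3+D_3\Px D_2)\Px\odot \Px D_1\Px$, each bounded in L\"owner order by $\|D\|\|D'\|\cdot P^{(2)}$ or $\|D_1\|\|D_2\|\|D_3\|\cdot P^{(2)}$, and $P^{(2)}\preccurlyeq \W$ by Lemma~\ref{lem:levelzerobounds}.

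What remains is to check that the operator norms of the diagonal matrices appearing in the $M_i$'s have the right dependence on $\|u\|_\infty,\|v\|_\infty,\|z\|_\infty$. For this I would invoke Lemma~\ref{lem:ginfnorm} in the form $\|\Rxw\|_{\mathrm{op}}=\|r_{x,w}\|_\infty\le\frac{1}{4/p-1}\|s_{x,w}\|_\infty$, together with the analogous infinity-norm bounds on $\D\Rxw$ and $\D^2\Rxw$ derived in Lemma~\ref{lem:derivativeofr} and the second-derivative computation in Appendix~\ref{sec:derivativestability}. In every monomial arising from $\D^3\Px$ and its Hadamard partner, the set of $R$-type diagonals carries exactly one ``slot'' for each of the directions $u,v,z$ (counted with the correct multiplicity by the order of differentiation), so multiplying the bounds yields precisely $\|u\|_\infty\|v\|_\infty\|z\|_\infty$ up to the $p$-dependent constant which is absorbed into $\preccurlyeq$.

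The main obstacle is purely bookkeeping: $\D^3 \Px(u,v,z)$ expands into a large number of monomials (on the order of several dozen once all the product-rule branches are taken), and each of the four Hadamard-product terms in the Leibniz expansion above spawns its own collection. I would organize the argument by grouping monomials according to the ``skeleton'' $\Px D_1 \Px D_2 \Px \cdots \Px$ (ignoring which $R$-matrix sits in which diagonal slot) and then invoke Lemma~\ref{lem:hadamardlowner} uniformly on each skeleton class. No genuinely new L\"owner inequality is required beyond the ones already established in Section~\ref{sec:secondorderselfconcordance} and the derivative-stability lemmas of this appendix; the argument is a direct structural extension of the second-derivative bound on $P^{(2)}$ used implicitly in Lemma~\ref{lem:Gsecondderivative}.
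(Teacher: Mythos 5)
Your proposal is correct and takes essentially the same approach as the paper: both expand $\D^3\Ptwo(u,v,z)$ into a sum of Hadamard-product monomials built from $\Px$ and diagonal matrices whose operator norms multiply to $\|s_{x,u}\|_\infty\|s_{x,v}\|_\infty\|s_{x,z}\|_\infty$, and both bound every such term via Lemma~\ref{lem:hadamardlowner} together with Lemma~\ref{lem:lownerhelper} and $\Ptwo \preccurlyeq \W$ (Lemma~\ref{lem:levelzerobounds}), with outer diagonal factors pulled out of the Hadamard product rather than "absorbed" inside. The only difference is bookkeeping: the paper writes the expansion directly with slack diagonals $\Sxu,\Sxv,\Sxz$ and reduces to the generic form $D_1(\Px D_2\Px)\odot(\Px D_3\Px)D_4$, whereas you organize it via Leibniz on $\Px\odot\Px$ and the $\Rxv$-based projection-derivative formula, which additionally invokes the bounds on $\D\Rxv$ (Lemma~\ref{lem:derivativeofr}) and on $\D^2(\Rxv)$ from Appendix~\ref{sec:derivativestability}; both routes go through.
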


\begin{proof}
We have
\begin{align*}
    &\D^3\Ptwo(u,v,z) \\
    &= \Px\odot \Big(
    \sum \Px\Sxu\Sxz\Px\Sxv\Px + \sum \Px\Sxu\Px\Sxv\Sxz\Px \\
    &+ \sum \Sxu\Px\Sxv\Px\Sxz + \sum \Sxu \Sxv\Px\Sxz\Px\\
    &+ \sum \Px\Sxz\Px\Sxu\Sxv + \sum \Sxu\Px\Sxv\Sxz \Px\\
    &+ \sum \Px\Sxu\Sxv\Px\Sxz + \sum \Px\Sxu\Sxv\Sxz \\&+ \sum \Sxu\Sxv\Sxz\Px\Big)\\
    & + \sum (\Px\Sxu\Px + \Sxu\Px + \Px\Sxu) \odot (\Px\Sxv\Sxz\Px + \Px\Sxv\Px\Sxz\Px + \Sxv\Px\Sxz\Px \\
    &+ \Px\Sxv\Px\Sxz + \Px\Sxv\Px\Sxu\Px).\\ 
\end{align*}
Note that from Lemma~\ref{lem:lownerhelper}, a generic term in the above is of the form 
\begin{align*}
    D_1 (\Px D_2\Px)\odot (\Px D_3\Px) D_4
\end{align*}
for diagonal matrices $D_1$ and $D_4$, such that 
\begin{align*}
    \|D_1\|\|D_2\|\|D_3\|\|D_4\|\leq \|s_{x,v}\|_\infty\|s_{x,u}\|_\infty\|s_{x,z}\|_\infty.
\end{align*}
Hence, combining Lemmas~\ref{lem:lownerhelper} and~\ref{lem:hadamardlowner}, we get
\begin{align*}
    \D^3\Ptwo(u,v,z) \leq \|s_{x,u}\|_\infty \|s_{x,v}\|_\infty \|s_{x,z}\|_\infty \W.
\end{align*}
Similarly, we can show
\begin{align*}
   -\|s_{x,u}\|_\infty \|s_{x,v}\|_\infty \|s_{x,z}\|_\infty \W \leq \D^3\Ptwo(u,v,z).
\end{align*}
\end{proof}
\begin{lemma}\label{lem:thirdderivativeGbound}
We have
\begin{align*}
    -\|s_{x,u}\|_\infty \|s_{x,v}\|_\infty \|s_{x,z}\|_\infty \G \leq \D^3\G(u,v,z) \leq \|s_{x,u}\|_\infty \|s_{x,v}\|_\infty \|s_{x,z}\|_\infty \G,\\
    -\|s_{x,u}\|_\infty \|s_{x,v}\|_\infty \|s_{x,z}\|_\infty \Lambdax \leq \D^3\Lambdax(u,v,z) \leq \|s_{x,u}\|_\infty \|s_{x,v}\|_\infty \|s_{x,z}\|_\infty \Lambdax.
\end{align*}
\end{lemma}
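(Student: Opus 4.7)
The plan is to reduce the third derivative of $\G$ and $\Lambdax$ to the already-controlled third derivatives of $\W$ and $\Ptwo$, both of which live on the RHS of the desired bound. From the defining equations~\eqref{eq:GLambdadefs} I have the identities
\[
\Lambdax \;=\; \W - \Ptwo, \qquad \G \;=\; \tfrac{2}{p}\W + \bigl(1-\tfrac{2}{p}\bigr)\Ptwo,
\]
so by linearity of the directional derivative,
\[
\D^3\Lambdax(u,v,z) \;=\; \D^3\W(u,v,z) - \D^3\Ptwo(u,v,z),
\]
\[
\D^3\G(u,v,z) \;=\; \tfrac{2}{p}\,\D^3\W(u,v,z) + \bigl(1-\tfrac{2}{p}\bigr)\D^3\Ptwo(u,v,z).
\]

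Next, I would plug in the two available third-order estimates. For $\Ptwo$, Lemma~\ref{lem:thirdderivativep2} gives $\D^3\Ptwo(u,v,z) \preccurlyeq \|s_{x,u}\|_\infty \|s_{x,v}\|_\infty \|s_{x,z}\|_\infty \W$ on both sides. For $\W$, I invoke the ``Third derivative bound for Lewis weights'' lemma stated just before Lemma~\ref{lem:gfirstderivativebound} (proved as Lemma~\ref{lem:wprimesecondderivative} in Appendix~\ref{sec:derivativestability}): rewriting $\D^3\W(u,v,z)$ as $\D^2(\mathbf{W}'_{x,u})(v,z)$ puts it exactly into the form covered by that lemma, yielding the same spectral inequality with $\W$ on the right. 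Substituting both bounds into the two decompositions above gives
\[
-c(p)\,\|s_{x,u}\|_\infty\|s_{x,v}\|_\infty\|s_{x,z}\|_\infty\,\W \;\preccurlyeq\; \D^3\G(u,v,z),\ \D^3\Lambdax(u,v,z) \;\preccurlyeq\; c(p)\,\|s_{x,u}\|_\infty\|s_{x,v}\|_\infty\|s_{x,z}\|_\infty\,\W,
\]
where $c(p)$ depends only on $p$ and hence is at most polylogarithmic under the choice $p = 4-1/\log m$.

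The last step is to replace $\W$ on the right-hand side with $\G$ and $\Lambdax$ as in the lemma statement. For $\G$ this is immediate from Lemma~\ref{lem:levelzerobounds}, which gives $\W \preccurlyeq \tfrac{p}{2}\G$, and the factor $p/2$ is absorbed into our $\preccurlyeq$ (which suppresses polylogarithmic constants). For $\Lambdax$ the same cone inclusion is used in the sense that the bound is derived in terms of $\W$ and $\Lambdax \preccurlyeq \W$, so the claimed inequality is read as a statement in the common Löwner cone; no conceptual new work is required. The only bookkeeping is to keep track of the prefactor $1/(4/p-1)$ appearing throughout the derivative estimates in Section~\ref{sec:derivativestability}, which remains bounded since $p$ is kept strictly below $4$. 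I expect no substantive obstacle beyond that, since the two third-derivative inputs have already been established.
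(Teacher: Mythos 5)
Your proposal matches the paper's proof, which is exactly the one-line observation that the claim follows directly from Lemma~\ref{lem:wprimesecondderivative} (third derivative of $\W$) and Lemma~\ref{lem:thirdderivativep2} (third derivative of $\Ptwo$), since $\G$ and $\Lambdax$ are fixed linear combinations of $\W$ and $\Ptwo$. The only wrinkle --- that the derivation naturally yields $\W$ on the right-hand side, which upgrades to $\G$ via $\tfrac{2}{p}\W \preccurlyeq \G$ but not literally to $\Lambdax$ because $\Lambdax \preccurlyeq \W$ points the wrong way --- is equally present in the paper's own one-line proof, and the downstream applications only need the $\W$-bound, so your treatment is at the same level of rigor as the original.
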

\begin{proof}
Directly from Lemmas~\ref{lem:wprimesecondderivative} and~\ref{lem:thirdderivativep2}.
\end{proof}

\begin{lemma}\label{lem:wprimesecondderivative}
We have
\begin{align*}
    -\|s_{x,z}\|_\infty \|s_{x,u}\|_\infty \|s_{x,v}\|_\infty \W \leq \D^2(\Wxv)(z,u) \leq \|s_{x,z}\|_\infty \|s_{x,u}\|_\infty \|s_{x,v}\|_\infty \W.
\end{align*}
\end{lemma}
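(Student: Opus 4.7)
My strategy is to reduce the claimed L\"owner inequality to an entry-wise scalar bound and then to organize the expansion by the Leibniz rule. Since $\Wxv = -2\diag{\Lambdax r_{x,v}}$ is diagonal and differentiation in $x$ preserves diagonality (each diagonal entry is a scalar function of $x$), $\D^2\Wxv(z,u)$ is diagonal as well, and the desired L\"owner inequality is equivalent to the scalar estimate
\begin{align*}
|(\D^2\Wxv(z,u))_{ii}| \;\lesssim\; \|s_{x,z}\|_\infty\,\|s_{x,u}\|_\infty\,\|s_{x,v}\|_\infty\, w_i\qquad\text{for every }i\in[m].
\end{align*}
Writing $\Lambdax=\W-\Ptwo$ and using $P_{ii}=w_i$ together with the projection identity $\sum_j P_{ij}^2 = (\Px^2)_{ii} = w_i$, I would expand $(\Wxv)_{ii} = -2 w_i(r_{x,v})_i + 2\sum_j P_{ij}^2 (r_{x,v})_j$ and apply the product rule twice. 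This produces $O(1)$ monomials, each a product of factors from three families: (a) $w_i$ or $\D^k w_i$, (b) row-sums $\sum_j P_{ij}^2 y_j$ or their derivatives, and (c) $(r_{x,v})_j$ or $\D^k (r_{x,v})_j$.

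For the atomic bounds, family (a) is controlled by an induction whose base is Lemma~\ref{lem:lem:wbound} (entry-wise $|\D w_i(v)|\lesssim\|s_v\|_\infty w_i$), with the second-derivative step being the present lemma itself. Family (c) uses the infinity-norm bounds $\|r_{x,v}\|_\infty\lesssim\|s_v\|_\infty$ (Lemma~\ref{lem:ginfnorm}), $\|\D r_{x,v}(z)\|_\infty\lesssim\|s_z\|_\infty\|s_v\|_\infty$ (Lemma~\ref{lem:derivativeofr}), and the unnumbered lemma right before the present statement giving $\|\D^2 r_{x,v}(z,u)\|_\infty\lesssim\|s_z\|_\infty\|s_u\|_\infty\|s_v\|_\infty$. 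The heart of the proof is family (b): an entry-wise estimate $|(\D^k\Ptwo(\ldots)\,y)_i|\lesssim(\prod\|s_\ast\|_\infty)\|y\|_\infty\,w_i$. Differentiating via Lemma~\ref{lem:projderivative} gives
\begin{align*}
\D\Ptwo(z)=-2\Px\odot(\Rxz\Px+\Px\Rxz-2\,\Px\Rxz\Px),
\end{align*}
and the identities $(\Px\odot\Rxz\Px)_{ij}=P_{ij}^2(r_z)_j$, $(\Px\odot\Px\Rxz)_{ij}=(r_z)_i P_{ij}^2$ together with the row-sum identity immediately yield $|(\Px\odot(\Rxz\Px+\Px\Rxz))y|_i\lesssim\|r_z\|_\infty\|y\|_\infty w_i$. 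For the subtler term involving $\Pxz=\Px\Rxz\Px$, Cauchy--Schwarz gives
\begin{align*}
|((\Px\odot\Pxz)y)_i|^2 \leq \Big(\sum_j P_{ij}^2\Big)\Big(\sum_j (\Pxz)_{ij}^2\, y_j^2\Big)\leq w_i\,\|y\|_\infty^2\,(\Pxz^2)_{ii},
\end{align*}
and since $\Pxz$ is symmetric and $\|\Px\|_{op}\leq 1$, we have $(\Pxz^2)_{ii}=\|\Rxz\Px e_i\|_2^2\leq\|r_z\|_\infty^2\, e_i^\top\Px e_i=\|r_z\|_\infty^2 w_i$, hence $|((\Px\odot\Pxz)y)_i|\lesssim\|s_z\|_\infty\|y\|_\infty w_i$ after invoking Lemma~\ref{lem:ginfnorm}.

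The second-order analogue $|(\D^2\Ptwo(z,u)\,y)_i|\lesssim\|s_z\|_\infty\|s_u\|_\infty\|y\|_\infty\, w_i$ would be obtained by differentiating the above expansion once more via Lemma~\ref{lem:projderivative} and iterating the same Cauchy--Schwarz and projection-identity steps. Assembling everything, each Leibniz monomial in $\D^2((\Wxv)_{ii})(z,u)$ produces exactly one factor of $w_i$ (from a surviving $w_i$, from a $\Ptwo$-row sum, or from the Cauchy--Schwarz step) together with three $\|s_\ast\|_\infty$ factors distributed among the derivatives, giving the required entry-wise bound in both signs. The main obstacle I anticipate is precisely this type-(b) analysis: the earlier lemmas (\ref{lem:Glambdabound}, \ref{lem:Gsecondderivative}) give only spectral L\"owner bounds of the form $\D^k\Ptwo\preccurlyeq(\prod\|s_\ast\|_\infty)\W$, which are insufficient here because we genuinely need \emph{diagonal} entry-wise control against $w_i$; handling the second-order Cauchy--Schwarz estimate for $\D^2\Ptwo$, with $P_{ij}$ not sign-definite, requires care to avoid losing the crucial single factor of $w_i$ to triangle-inequality slack.
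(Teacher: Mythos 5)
Your reduction is sound and is, at bottom, the same argument the paper makes: since $\D^2(\Wxv)(z,u)$ and $\W$ are both diagonal, the L\"owner statement is equivalent to entry-wise bounds against $w_i$, and the paper's own proof likewise bounds $\|\W^{-1}\D^2(\Lambdax\G^{-1}\W s_{x,v})(z,u)\|_\infty$ by differentiating, term by term, the four-term expansion of $\W^{-1}\D(\Lambdax\G^{-1}\W s_{x,v})(z)$ from the proof of Lemma~\ref{lem:derivativeofwprime} and invoking the infinity-norm helper lemmas (Lemmas~\ref{lem:ginfnorm}, \ref{lem:infnorm2}, \ref{lem:infnormhelper2}). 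Your bookkeeping differs only in the grouping: you split $\Lambdax=\W-\Ptwo$ and treat $r_{x,v}$ as an atomic vector controlled by Lemma~\ref{lem:derivativeofr} and the unnumbered second-derivative bound on $\Rxv$, whereas the paper keeps the composite $\Lambdax\G^{-1}\W s_{x,v}$ and pushes the chain rule through $\G^{-1}$ explicitly; both routes bottom out in the same entry-wise projection estimates. Your first-order Cauchy--Schwarz bound for $\Px\odot\Pxz$ is exactly the paper's Lemma~\ref{lem:infnorm1}-type estimate (the swap of the formulas for $(\Px\odot\Rxz\Px)_{ij}$ and $(\Px\odot\Px\Rxz)_{ij}$ is immaterial), and the second-order estimate you single out as the main obstacle is supplied ready-made by Lemma~\ref{lem:infnormhelper1}, which is proved via Lemmas~\ref{lem:infnorm1}, \ref{lem:infnorm4} and \ref{lem:infnorm3} by precisely the iterated Cauchy--Schwarz and row-sum argument you sketch, so no new idea is required there.

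One citation needs fixing: for the Leibniz term in which both derivative directions hit $w_i$, you need the entry-wise second-derivative bound $|\D^2 w_i(z,u)|\lesssim\|s_{x,z}\|_\infty\|s_{x,u}\|_\infty\, w_i$, and you attribute this to ``the present lemma itself,'' which as written would be circular. What you actually need is Lemma~\ref{lem:derivativeofwprime}, whose proof is independent of the present statement; the present lemma is the third derivative of the Lewis weights, one order higher, and never reappears inside your expansion because at most two of the three derivative slots can land on $w_i$. With that reference corrected, your outline assembles into a complete proof.
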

\begin{proof}
For the first term of the first derivative in~\eqref{eq:tmp4}, further taking derivative. with respect to $u$:
\begin{align*}
   &\W^{-1}\D(\D\Lambdax(z)\G^{-1}\W s_z)(u)\\
   &=\W^{-1}\D^2\Lambdax(z,u)\G^{-1}\W s_{x,v}\\ &+ \W^{-1}\D\Lambdax(z)\G^{-1}\D\G(u)\G^{-1}\W s_{x,v}\\& + \W^{-1}\D\Lambdax(z)\G^{-1}\D\W(u)s_{x,v} \\
   &+ \W^{-1}\D\Lambdax(z)\G^{-1}\D \W(u)\Sxu s_{x,v}\\
   &\leq \|s_{x,z}\|_\infty \|s_{x,u}\|_\infty \|s_{x,v}\|_\infty I,
\end{align*}
where we used Lemmas~\ref{lem:infnorm2} and~\ref{lem:infnormhelper2}.
For the second term in~\eqref{eq:tmp4}:
\begin{align*}
    &\W^{-1}\D(\Lambdax \G^{-1}\D\G(z)\G^{-1}\W s_{x,v})(u)\\
    &= \W^{-1}\D(\Lambdax)(z)\G^{-1}\D\G(z)\G^{-1}\W s_{x,v}\\
    &+\W^{-1}\Lambdax \G^{-1}(\D\G(u)\G^{-1}\D\G(z) + \D\G(z)\G^{-1}\D\G(u))\G^{-1}\W s_{x,v},\\
    &+\W^{-1}\Lambdax \G^{-1}\D^2\G(z,u)\G^{-1}\W s_{x,v},\\
    &+\W^{-1}\Lambdax \G^{-1}\D\G(z)\G^{-1}\D(\W)(u)s_{x,v},\\
    &+\W^{-1}\Lambdax \G^{-1}\D\G(z)\G^{-1}\W\Sxu s_{x,v}\\
    & \leq  \|s_{x,z}\|_\infty \|s_{x,u}\|_\infty \|s_{x,v}\|_\infty I.
\end{align*}
For the third term:
\begin{align*}
    &\W^{-1}\D(\Lambdax \G^{-1}\W'^{z}s_{x,v})(u)\\
    &=\W^{-1}\D(\Lambdax)(u)\G^{-1}\W'^{z}s_{x,v}\\
    &-\W^{-1}\Lambdax \G^{-1}\D\G(u)\G^{-1}\W'^{z}s_{x,v}\\
    &+\W^{-1}\Lambdax \G^{-1}\D(\Wxz)(u)s_{x,v}\\
    &+W^{-1}\Lambdax \G^{-1}\W'^{z}\Sxu s_{x,v}\\
    &\leq  \|s_{x,z}\|_\infty \|s_{x,u}\|_\infty,
\end{align*}
where for this term we also used Lemma~\ref{lem:derivativeofwprime}.
\\
Finally the last term $\W^{-1}\D(\Lambdax \G^{-1}\W\Sxz s_{x,v})(u)$ is exactly similar to the proof of Lemma~\ref{lem:derivativeofwprime} for handling $\W^{-1}\D(\Wxv)(z)$.
\end{proof}

\begin{lemma}\label{lem:lem:wbound}
We have
\begin{align*}
   -\frac{1}{\frac{4}{p} - 1}\|s_{x,v}\|_\infty \W \preccurlyeq  \Wxv \preccurlyeq \frac{1}{\frac{4}{p} - 1}\|s_{x,v}\|_\infty \W.
\end{align*}
In particular, for random $s_v$ we have with high probability
\begin{align*}
    -\frac{1}{\frac{4}{p} - 1} \W \preccurlyeq \Wxv \preccurlyeq \frac{1}{\frac{4}{p} - 1} \W.
\end{align*}
Moreover
\begin{align}
    -\frac{1}{\frac{4}{p}-1} \|v\|_g \W \preccurlyeq \Wxv \preccurlyeq \frac{1}{\frac{4}{p}-1} \|v\|_g \W.
\end{align}
\end{lemma}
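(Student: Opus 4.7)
The plan is to exploit the fact that $\Wxv=D\W(v)=-2\diag{\Lambdax r_{x,v}}$ is a diagonal matrix, so the spectral L\"owner bound $-c\W \preccurlyeq \Wxv \preccurlyeq c\W$ reduces to the entrywise scalar inequality $|(\Wxv)_{ii}|\le c\,w_i$ for every $i$. Everything then reduces to bounding one scalar, namely $(\Lambdax r_{x,v})_i$, where $r_{x,v}=\G^{-1}\W s_{x,v}$.

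First I would decompose using $\Lambdax=\W-\Ptwo$ to write
\[
(\Lambdax r_{x,v})_i \;=\; w_i\,(r_{x,v})_i \;-\; (\Ptwo\, r_{x,v})_i.
\]
The first piece is controlled directly by Lemma~\ref{lem:ginfnorm}, which gives the crucial $\|\cdot\|_{\infty\to\infty}$ bound
\[
\|r_{x,v}\|_\infty \;\le\; \tfrac{1}{4/p-1}\,\|s_{x,v}\|_\infty,
\]
and this is exactly where the condition $p<4$ is used. For the second piece I would use the identity $\sum_{j}(\Ptwo)_{ij}=w_i$ (which follows from $\Px$ being a projection onto the column space of $\W^{1/2-1/p}\A$ reweighted so that its diagonal equals $w_i$), so that by the triangle inequality
\[
|(\Ptwo\, r_{x,v})_i| \;\le\; \sum_j (\Ptwo)_{ij}\,|(r_{x,v})_j| \;\le\; w_i\,\|r_{x,v}\|_\infty.
\]
Combining these two bounds gives $|(\Lambdax r_{x,v})_i|\le 2 w_i \|r_{x,v}\|_\infty$, hence $|(\Wxv)_{ii}|\le \frac{4}{4/p-1}\,w_i\,\|s_{x,v}\|_\infty$, which is the desired inequality up to the universal constant absorbed in $\preccurlyeq$.

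The two additional statements are immediate corollaries. The high-probability bound for $v\sim\mathcal N(0,g^{-1})$ follows from Lemma~\ref{lem:infnormbound}, which gives $\|s_{x,v}\|_\infty \lesssim 1$ with high probability. The version with $\|v\|_g$ follows from the deterministic comparison $\|s_{x,v}\|_\infty \le \|v\|_g$ proved in Lemma~\ref{lem:infwithgnorm}. The only nontrivial input in this whole argument is Lemma~\ref{lem:ginfnorm}; everything else is row-sum bookkeeping and the observation that diagonal matrices are compared spectrally by their diagonal entries, so there is no serious obstacle once that lemma is in hand.
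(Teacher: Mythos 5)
Your proposal is correct and follows essentially the same route as the paper: both reduce the L\"owner bound to entrywise bounds on the diagonal matrix $\Wxv=-2\diag{\Lambdax r_{x,v}}$, split via $\Lambdax=\W-\Ptwo$ with the row-sum identity $\sum_j(\Ptwo)_{ij}=w_i$, and invoke Lemma~\ref{lem:ginfnorm} for $\|r_{x,v}\|_\infty\le\frac{1}{4/p-1}\|s_{x,v}\|_\infty$, then Lemmas~\ref{lem:infnormbound} and~\ref{lem:infwithgnorm} for the two corollary statements. No gaps.
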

\begin{proof}
Note that $\Wxv = -2 \diag(\Lambdax r_{x,v})$.
Using Lemma~\ref{lem:ginfnorm}, we have $\|r_{x,v}\|_\infty \leq \frac{1}{4/p - 1} \|s_{x,v}\|_\infty$. Hence, for every $i$:
\begin{align*}
    |{\Lambdax}_{i,} r_{x,v}| \leq w_i {r_{x,v}}_i + {\Ptwo}_i r_{x,v} \lesssim w_i \|r_{x,v}\|_\infty,
\end{align*}
which completes the proof. For random $s_{x,v}$, just note that $$\|s_{x,v}\|_\infty \lesssim 1,$$
For $g$-norm also use Lemma~\ref{lem:infwithgnorm} to upper bound infinity norm with $w$-norm.
\end{proof}

\begin{lemma}\label{lem:derivativeofr}
For the derivative of $R_v$ in direction $z$ we have
\begin{align*}
    \|\D(r_{x,z})(v)\|_\infty \leq \|s_{x,v}\|_\infty \|r_{x,z}\|_\infty.
\end{align*}
\end{lemma}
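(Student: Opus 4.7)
The plan is to differentiate the defining equation $\G r_{x,z} = \W s_{x,z}$ in direction $v$ and then adapt the max-coordinate argument from the proof of Lemma~\ref{lem:ginfnorm}. Using the chain rule together with $\D(s_{x,z})(v) = \D(\A z)(v) = -\Sxv \A z = -\Sxv s_{x,z}$ (since $\D(\A)(v) = -\Sxv \A$), the vector $u \triangleq \D(r_{x,z})(v)$ satisfies
\[
\G u \;=\; \Wxv\, s_{x,z} \;-\; \W \Sxv s_{x,z} \;-\; \D\G(v)\, r_{x,z}.
\]

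Second, I would pick a coordinate $i$ where $|u_i|$ is maximized; WLOG $u_i>0$. Expanding $\G = (2/p)\W + (1-2/p)\Ptwo$ in row $i$, and using that $\Ptwo$ has nonnegative entries with row-sum $W_{ii}$ (so $|(\Ptwo u)_i| \le W_{ii}\|u\|_\infty = W_{ii} u_i$), I obtain the key lower bound
\[
(\G u)_i \;\ge\; \bigl(4/p - 1\bigr)\, W_{ii}\, u_i,
\]
exactly mirroring the argument in Lemma~\ref{lem:ginfnorm}. The remainder of the proof is to upper bound $|(\G u)_i|/W_{ii}$ by $\|s_{x,v}\|_\infty \|r_{x,z}\|_\infty$ (up to a $p$-dependent constant), and then divide by $(4/p-1)$.

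Third, I would handle the three terms on the RHS at entry $i$ one by one. Lemma~\ref{lem:lem:wbound} gives $|(\Wxv)_{ii}| \le \tfrac{1}{4/p-1}\|s_{x,v}\|_\infty W_{ii}$, so the first term contributes at most $\tfrac{1}{4/p-1}\|s_{x,v}\|_\infty \|s_{x,z}\|_\infty$, and the second contributes $|(s_{x,v})_i (s_{x,z})_i| \le \|s_{x,v}\|_\infty\|s_{x,z}\|_\infty$ trivially. For $\D\G(v) = (2/p)\Wxv + (1-2/p)\D(\Ptwo)(v)$, the $\Wxv$ piece is handled as above but with $r_{x,z}$ in place of $s_{x,z}$, producing a $\|s_{x,v}\|_\infty\|r_{x,z}\|_\infty$ term. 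To convert any remaining $\|s_{x,z}\|_\infty$ into $\|r_{x,z}\|_\infty$ (or vice versa) I use that the matrix $\W^{-1}\G$ has row-sum $1$, so $\|s_{x,z}\|_\infty = \|\W^{-1}\G r_{x,z}\|_\infty \lesssim \|r_{x,z}\|_\infty$.

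The main obstacle is the $\D(\Ptwo)(v) r_{x,z}$ piece, which expands via Lemma~\ref{lem:projderivative} as
\[
\D(\Ptwo)(v) \;=\; -2\Rxv\Ptwo - 2\Ptwo \Rxv + 4\,\Px\odot(\Px \Rxv \Px).
\]
For the first two summands, entry-wise bounds are immediate from $\|\Rxv\|_{\ell_\infty\to\ell_\infty} = \|r_{x,v}\|_\infty$ together with $\Ptwo$ being row-stochastic up to the scaling $W_{ii}$, giving contributions of order $\|r_{x,v}\|_\infty\|r_{x,z}\|_\infty W_{ii}$. The Hadamard term is the delicate one: I would bound $|(\Px\odot\Px\Rxv\Px)_{ij}|$ by writing it as $\sum_k (r_{x,v})_k P_{ij}P_{ik}P_{kj}$ and applying Cauchy--Schwarz with the projection identity $\sum_k P_{ij}^2 = W_{jj}$, $\sum_j P_{ij}^2 = W_{ii}$, which yields row-infinity control by $\|r_{x,v}\|_\infty W_{ii}$. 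Finally, using $\|r_{x,v}\|_\infty \leq \frac{1}{4/p-1}\|s_{x,v}\|_\infty$ from Lemma~\ref{lem:ginfnorm}, collecting all contributions and dividing by $(4/p-1)W_{ii}$ gives the claimed $\|\D(r_{x,z})(v)\|_\infty \lesssim \|s_{x,v}\|_\infty\|r_{x,z}\|_\infty$ (with an implicit $p$-dependent constant absorbed in $\lesssim$, as is the convention throughout Appendix~\ref{sec:derivativestability}).
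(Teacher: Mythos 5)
Your proof is correct, and it reaches the paper's estimate by a somewhat different, more self-contained route. The paper's own proof is a three-line computation: it expands $\D(r_{x,z})(v)$ via the chain rule into the same three terms you obtain (equivalently $-\G^{-1}\D\G(v)r_{x,z} + \G^{-1}\Wxv s_{x,z} - \G^{-1}\W\Sxv s_{x,z}$) and then simply quotes the $\ell_\infty\to\ell_\infty$ bounds already established for $\G^{-1}\W$ (Lemma~\ref{lem:ginfnorm}) and for $\G^{-1}\D\G(\cdot)$ and $\G^{-1}\Wxv$ (Lemma~\ref{lem:infnorm2}, which rests on Lemma~\ref{lem:infnorm1}). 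You instead differentiate the implicit relation $\G r_{x,z} = \W s_{x,z}$, keep $\G$ on the left, and run the maximal-coordinate argument of Lemma~\ref{lem:ginfnorm} once on $u=\D(r_{x,z})(v)$, bounding the right-hand side entrywise; in doing so you essentially re-derive the content of Lemmas~\ref{lem:infnorm1} and~\ref{lem:infnorm2} (the Cauchy--Schwarz bound on $\Px\odot(\Px\Rxv\Px)$ using $\sum_j (\Px)_{ij}^2 = {w_x}_i$, and the row-stochasticity of $\W^{-1}\G$ to trade $\|s_{x,z}\|_\infty$ for $\|r_{x,z}\|_\infty$, which is valid since $\|s_{x,z}\|_\infty \le \|r_{x,z}\|_\infty$). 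Both arguments hinge on exactly the same $p<4$ mechanism --- domination of the $\Ptwo$ part of $\G$ by $\W$, producing the factor $4/p-1$ --- so they are equivalent in strength; the paper's version is shorter because it reuses the helper lemmas as black boxes, while yours is more elementary and makes the hidden $\tfrac{1}{4/p-1}$-dependent constants explicit, consistent with the up-to-constants convention of Appendix~\ref{sec:derivativestability}.
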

\begin{proof}
We can write
\begin{align*}
    \D(R_{x,z})(v) = \diag{\G^{-1}\D\G(z)\G^{-1}\W s_{x,v}} + \diag{\G^{-1}\Wxv s_{x,v}} + \diag{\G^{-1}\W\Sxz s_{x,v}}.
\end{align*}
But note that from Lemma~\ref{lem:ginfnorm} we have $\|\G^{-1}\W s_{x,v}\|_\infty \leq \|s_{x,v}\|_\infty$ and from Lemma~\ref{lem:infnorm2} we have $$\|\G^{-1}\Wxv s_{x,v}\|_\infty, \|\G^{-1}\D\G(z)s_{x,v}\|_\infty \leq \|s_{x,v}\|_\infty \|s_{x,z}\|_\infty,$$
which completes the proof.
\end{proof}

\begin{lemma}\label{lem:derivativeofwprime}
We have
\begin{align*}
    \|\D(w)(v,z)/w\|_\infty \leq \|s_{x,v}\|_\infty \|s_{x,z}\|_\infty.
\end{align*}
\end{lemma}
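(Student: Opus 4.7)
The plan is to compute $\D^2 w(v,z)$ explicitly from the first-order formula $\D w(v)=-2\Lambda r_{x,v}$ (in vector form, equivalent to $\D W(v)=-2\diag(\Lambda r_{x,v})$), and then to bound each piece in the $\|\cdot/w\|_\infty$ norm. Differentiating in direction $z$ with $v$ held fixed gives
\[
\D^2 w(v,z)\;=\;-2\bigl[\D\Lambdax(z)\,r_{x,v}\;+\;\Lambdax\,\D(r_{x,v})(z)\bigr].
\]
The workhorse is the elementary identity: for any vector $q\in\R^m$,
\[
\bigl\|\Lambdax q/w\bigr\|_\infty \;\leq\; 2\|q\|_\infty,
\]
which follows from $(\Lambdax q)_i=w_i q_i-\sum_j P_{ij}^2 q_j$ together with the projection identity $\sum_j P_{ij}^2=P_{ii}=w_i$.

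Applied to $q=\D(r_{x,v})(z)$, Lemma~\ref{lem:derivativeofr} (whose symmetric version gives $\|\D(r_{x,v})(z)\|_\infty\lesssim \|s_{x,z}\|_\infty\|r_{x,v}\|_\infty$) combined with Lemma~\ref{lem:ginfnorm} ($\|r_{x,v}\|_\infty\lesssim\|s_{x,v}\|_\infty$) handles the second summand. For the first summand I split $\D\Lambdax(z)=\D W(z)-\D P^{(2)}(z)$. The $\D W(z)$ piece gives $(\D W(z)\,r_v)_i/w_i=-2(\Lambda r_z)_i (r_v)_i/w_i$, which is bounded by $4\|r_z\|_\infty\|r_v\|_\infty\lesssim\|s_{x,z}\|_\infty\|s_{x,v}\|_\infty$ directly from the first-order bound (Lemma~\ref{lem:lem:wbound}).

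The main obstacle is the Hadamard term $\D P^{(2)}(z)=2 P\odot\D P(z)$, since I need a pointwise bound on $(P\odot \D P(z))r_v/w$ rather than a spectral one. I will use Cauchy--Schwarz in the $j$-index:
\[
\bigl|(\D P^{(2)}(z)r_v)_i\bigr|^{2} \;\leq\; 4\Bigl(\sum_j P_{ij}^{2}\Bigr)\Bigl(\sum_j (\D P(z))_{ij}^{2}(r_v)_j^{2}\Bigr) \;\leq\; 4\,w_i\,\|r_v\|_\infty^{2}\sum_j (\D P(z))_{ij}^{2}.
\]
Then I expand $\D P(z)=-PR_{x,z}-R_{x,z}P+2PR_{x,z}P$ (Lemma~\ref{lem:projderivative}) and compute $\sum_j(\D P(z))_{ij}^{2}=(\D P(z)\D P(z)^{\top})_{ii}$. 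The cross-terms reduce via $P^{2}=P$ and $\sum_j P_{ij}^{2}=w_i$ to pointwise constants times $\|r_{x,z}\|_\infty^{2}w_i$; the only delicate piece is $(PR_{x,z}PR_{x,z}P)_{ii}$, which I handle with the L\"owner bound $R_{x,z}PR_{x,z}\preceq \|r_{x,z}\|_\infty^{2}I$ (since $P\preceq I$), giving $PR_{x,z}PR_{x,z}P\preceq \|r_{x,z}\|_\infty^{2}P$ and hence $(PR_{x,z}PR_{x,z}P)_{ii}\leq \|r_{x,z}\|_\infty^{2}w_i$.

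Combining, $\sum_j(\D P(z))_{ij}^{2}\lesssim \|r_{x,z}\|_\infty^{2}w_i$, so $\bigl|(\D P^{(2)}(z)r_v)_i\bigr|\lesssim w_i\|r_{x,v}\|_\infty\|r_{x,z}\|_\infty$, and dividing by $w_i$ and invoking Lemma~\ref{lem:ginfnorm} for $p<4$ finishes the bound by $\|s_{x,v}\|_\infty\|s_{x,z}\|_\infty$. Summing the three contributions yields the desired $\|\D^{2}w(v,z)/w\|_\infty\lesssim \|s_{x,v}\|_\infty\|s_{x,z}\|_\infty$.
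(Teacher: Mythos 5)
Your proof is correct, and it uses the same basic toolkit as the paper (entrywise infinity-norm bounds, the Lewis-weight fixed point $\sum_j P_{ij}^2 = w_i$, and the $p<4$ bound $\|r_{x,\cdot}\|_\infty \lesssim \|s_{x,\cdot}\|_\infty$ from Lemma~\ref{lem:ginfnorm}), but it is organized differently. The paper differentiates straight through $w' = -2\Lambdax\G^{-1}\W s_{x,v}$, producing four terms (derivative hitting $\Lambdax$, $\G^{-1}$, $\W$, and $s_{x,v}$), each of which is dispatched by the helper Lemmas~\ref{lem:infnorm1},~\ref{lem:infnorm2}, and~\ref{lem:ginfnorm}. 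You instead apply the product rule only once, $\D^2 w(v,z) = -2[\D\Lambdax(z)r_{x,v} + \Lambdax\,\D(r_{x,v})(z)]$, absorb the paper's last three terms into the single black-box call to Lemma~\ref{lem:derivativeofr} via the elementary bound $\|\Lambdax q/w\|_\infty \le 2\|q\|_\infty$, and then handle $\D\Lambdax(z)r_{x,v}$ by an explicit row-wise Cauchy--Schwarz on $P\odot\D P(z)$ together with a diagonal bound on $(\D P(z))^2$. This buys you a self-contained treatment of the Hadamard term without invoking Lemma~\ref{lem:infnorm1}, at the cost of the extra expansion of $(\D P(z))^2$; the paper's grouping avoids ever expanding $\D P(z)$ squared but leans more heavily on its stock of infinity-norm lemmas. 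One small imprecision, not a gap: in $(\D P(z))^2 = PR^2P + RPR - PRPR - RPRP$ the term $PR_{x,z}PR_{x,z}P$ actually cancels, and the genuinely non-symmetric diagonals $(PR_{x,z}PR_{x,z})_{ii}$ and $(R_{x,z}PR_{x,z}P)_{ii}$ are the ones needing the entrywise computation $(PR_{x,z}P)_{ii} = \sum_j P_{ij}^2 (r_{x,z})_j \le \|r_{x,z}\|_\infty w_i$; this is exactly the reduction you describe, so all pieces are indeed bounded by $O(\|r_{x,z}\|_\infty^2 w_i)$ and your conclusion stands.
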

\begin{proof}
We consider $Diag(D(w')(z)/w)$:
\begin{align}
    \text{LHS} = & \W^{-1}\D(\Lambdax \G^{-1}\W s_{x,v})(z) = \W^{-1}\D\Lambdax(z)\G^{-1}\W s_{x,v} + \W^{-1}\Lambdax
    \G^{-1}\D \G(z)\G^{-1}\W s_{x,v} \\
    &+ \W^{-1}\Lambdax \G^{-1}\Wxz s_{x,v} + {\W}^{-1}\Lambdax \G^{-1}\W \Sxz s_{x,v}.\label{eq:tmp4}
\end{align}
Now from Lemmas~\ref{lem:infnorm1} and~\ref{lem:ginfnorm}:
\begin{align*}
    &\|\W^{-1}\D\Lambdax(z)\G^{-1}\W s_{x,v}\|_\infty \leq \|s_{x,z}\|_\infty \|\G^{-1}\W s_{x,v}\|_\infty \leq \|s_{x,v}\|_\infty \|s_{x,z}\|_\infty,\\
    &\|\W^{-1}\Lambdax \G^{-1}\D\G(z)\G^{-1}\W s_{x,v}\|_\infty \leq \|\G^{-1}\D \G(z)\G^{-1}\W s_{x,v}\|_\infty \\
    &\leq
    \|s_{x,z}\|_\infty\|\G^{-1}\W s_{x,v}\|_\infty \leq \|s_{x,z}\|_\infty\|s_{x,v}\|_\infty,\\
    &\|\W^{-1}\Lambdax \G^{-1}\W'xz s_{x,v}\|_\infty \leq \|\G^{-1}\Wxz s_{x,v}\|_\infty \leq \|s_{x,z}\|_\infty\|s_{x,v}\|_\infty,\\
    &\|\W^{-1}\Lambdax \G^{-1}\W \Sxz s_{x,v}\|_\infty \leq \|\G^{-1}\W \Sxz \sxv\|_\infty \leq \|s_{x,z}\|_\infty \|s_{x,v}\|_\infty.  
\end{align*}
\end{proof}

\begin{lemma}\label{lem:infnormhelper1}
We have
\begin{align*}
    \|\W^{-1}\D^2(\Ptwo)(z,u) s_{x,v}\|_\infty \leq \|s_{x,z}\|_\infty\|s_{x,u}\|_\infty\|s_{x,v}\|_\infty.
\end{align*}
\end{lemma}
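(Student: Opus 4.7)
The plan is to expand $\D^2(\Ptwo)(z,u)$ via the product rule for Hadamard products and then bound each resulting piece row by row. Since $\Ptwo=\Px\odot\Px$, we have
\begin{align*}
\D^2(\Ptwo)(z,u) \;=\; 2\bigl(\D\Px(u)\odot\D\Px(z)\bigr)\;+\;2\bigl(\Px\odot\D^2\Px(z,u)\bigr).
\end{align*}
The goal will be to show that for every row $i$,
\begin{align*}
\bigl|[\D^2(\Ptwo)(z,u)\,s_{x,v}]_i\bigr|\;\leq\; w_i\,\|s_{x,v}\|_\infty\|s_{x,z}\|_\infty\|s_{x,u}\|_\infty,
\end{align*}
up to a constant depending only on $1/(4/p-1)$, which is how $\leq$ is used throughout Appendix~\ref{sec:derivativestability}. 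Dividing by $w_i=[\Ptwo]_{ii}$ and taking $\max_i$ then gives the claimed $\|\W^{-1}\cdot\|_\infty$ bound.

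For the cross term $\D\Px(u)\odot\D\Px(z)$, I would apply Lemma~\ref{lem:projderivative} to each factor to write $\D\Px(v)=-\Px\Rxv-\Rxv\Px+2\Pxv$, so the Hadamard product expands into nine contributions of the schematic form $A\odot B$, where $A$ and $B$ are three-factor products built from $\Px$ and the diagonal matrices $\Rxu,\Rxz$. The $i$-th coordinate of $(A\odot B)s_{x,v}$ is $\sum_j A_{ij}B_{ij}(s_{x,v})_j$; pulling out $\|s_{x,v}\|_\infty$ reduces everything to bounding $\sum_j|A_{ij}B_{ij}|$. Each of the nine resulting sums can be rewritten, after regrouping (using $(\Px\Rxv\Px)_{ij}=\sum_k P_{ik}(r_{x,v})_kP_{kj}$ when needed), as a $P_{ij}^2$-weighted average; Cauchy--Schwarz against the identity $\sum_j P_{ij}^2=w_i$ extracts a factor $w_i$, and the diagonal factors contribute $\|r_{x,u}\|_\infty\|r_{x,z}\|_\infty$, which by Lemma~\ref{lem:ginfnorm} is bounded by $(4/p-1)^{-2}\|s_{x,u}\|_\infty\|s_{x,z}\|_\infty$.

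For the second piece $\Px\odot\D^2\Px(z,u)$, I would differentiate the formula in Lemma~\ref{lem:projderivative} once more. The terms it generates split into two types: (i) those involving two first-order differentiations of the projection structure, which are of exactly the same schematic form as the cross term and are handled identically; and (ii) a single term containing the second-order derivative $\D(\Rxu)(z)$ of the diagonal reweighting. For type (ii), Lemma~\ref{lem:derivativeofr} supplies $\|\D(\Rxu)(z)\|_\infty\leq\|s_{x,u}\|_\infty\|s_{x,z}\|_\infty$, and the same row-sum identity $w_i=\sum_jP_{ij}^2$ completes the bound.

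The main obstacle is the bookkeeping across the dozen or so resulting terms and the regrouping needed so that Cauchy--Schwarz against $\{P_{ij}^2\}_j$ applies cleanly; the awkward cases are the asymmetric mixed pieces such as $(\Px\Rxu\Px)_{ij}(\Rxz\Px)_{ij}$, which one must rewrite as $(r_{x,z})_i\sum_kP_{ik}(r_{x,u})_kP_{kj}P_{ij}$ before all three infinity-norms can be peeled off in sequence. It is in this step that the passage from $r$-norms to $s$-norms via Lemma~\ref{lem:ginfnorm} introduces the $p$-dependent factor, and since $p<4$ is fixed throughout the paper this factor is absolute, matching the author's use of $\leq$ in analogous lemmas such as Lemma~\ref{lem:derivativeofr} and Lemma~\ref{lem:derivativeofwprime}.
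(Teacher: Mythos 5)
Your proposal is correct and is essentially the paper's own argument: the paper likewise expands $\D^2(\Ptwo)(z,u)$ into Hadamard products of chains of $\Px$ with diagonal matrices and bounds each term row by row, routing the work through Lemmas~\ref{lem:infnorm1}, \ref{lem:infnorm4}, and \ref{lem:infnorm3}, whose proofs are exactly your Cauchy--Schwarz against $\sum_j P_{ij}^2 = w_i$ with the diagonal factors' operator norms peeled off (and $\|r\|_\infty\lesssim\|s\|_\infty$ via Lemma~\ref{lem:ginfnorm}). Your treatment of the $\Rxu,\Rxz$ diagonals coming from Lemma~\ref{lem:projderivative} and of the $\D(\Rxu)(z)$ term via Lemma~\ref{lem:derivativeofr} is, if anything, a slightly more explicit bookkeeping of the same bound the paper invokes by noting its helper lemmas hold for arbitrary diagonal matrices.
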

\begin{proof}
We use the $\sum$ notation below to consider all the permutations among $u$, $v$, and $z$. 
\begin{align*}
    \D^2(\Ptwo)(z,u)s_{x,v} & \rightarrow \sum \Px\odot (\Px\Sxz\Px\Sxu\Px) + (\Px\Sxz\Px)\odot (\Px\Sxu\Px) + \Sxu \Ptwo \Sxv \\
    & +\Sxu \Sxv \Ptwo + \Ptwo\Sxu \Sxv + \Sxv(\Px\odot \Px \Sxu\Px) + (\Px\odot \Px\Sxu\Px) \Sxv.
\end{align*}
But note that in general for diagonal matrices $D_1, D_2$ we have from Lemmas~\ref{lem:infnorm1},~\ref{lem:infnorm4}, and~\ref{lem:infnorm3}:
\begin{align*}
    \|\W^{-1}(\Px\odot(\Px\D_1\Px))s_{x,v}\|_\infty \leq \|s_{x,v}\|_\infty \|D_1\|_{op},\\
    \|\W^{-1}((\Px D_2\Px))\odot(\Px D_1\Px)\|_\infty \leq \|s_{x,v}\|_\infty \|D_1\|_{op}\|D_2\|_{op},\\
    \|\W^{-1}(\Px\odot(\Px D_1\Px D_2\Px))s_{x,v}\|_\infty \leq \|D_3\|_{op},
\end{align*}
as the proof of Lemma~\ref{lem:infnorm4} can be generalized to arbitrary diagonal matrices $D_1$ and $D_2$ in place of $R_z$ and $R_u$. The proof is complete.
\end{proof}

\begin{lemma}\label{lem:infnormhelper2}
We have
\begin{align*}
    \|\W^{-1}\D^2\G(z,u)s_{x,v}\|_\infty, \|\W^{-1}\D^2\Lambdax (z,u)s_v\|_\infty \leq \|s_{x,z}\|_\infty\|s_{x,u}\|_\infty\|s_{x,v}\|_\infty.
\end{align*}
\end{lemma}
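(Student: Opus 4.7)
The plan is to reduce everything to the two infinity-norm bounds already proved immediately above: Lemma~\ref{lem:derivativeofwprime} (for second derivatives of the scalar Lewis weights) and Lemma~\ref{lem:infnormhelper1} (for $\D^2 \Ptwo$). The reduction is immediate from the algebraic identities
\[
\G \;=\; \tfrac{2}{p}\W + \bigl(1-\tfrac{2}{p}\bigr)\Ptwo, \qquad \Lambdax \;=\; \W - \Ptwo,
\]
which follow from the definitions $\Lambdax = \W - \Ptwo$ and $\G = \W - (1-2/p)\Lambdax$ stated in~\eqref{eq:GLambdadefs}. Differentiating twice in the directions $z,u$ therefore gives
\[
\D^2\G(z,u) \;=\; \tfrac{2}{p}\D^2\W(z,u) + \bigl(1-\tfrac{2}{p}\bigr)\D^2\Ptwo(z,u),
\]
with the analogous decomposition $\D^2\Lambdax(z,u) = \D^2\W(z,u) - \D^2\Ptwo(z,u)$. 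By the triangle inequality it then suffices to bound $\|\W^{-1}\D^2\W(z,u)s_{x,v}\|_\infty$ and $\|\W^{-1}\D^2\Ptwo(z,u)s_{x,v}\|_\infty$ separately by $\|s_{x,z}\|_\infty\|s_{x,u}\|_\infty\|s_{x,v}\|_\infty$.

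The bound on the $\Ptwo$-piece is exactly the content of the preceding Lemma~\ref{lem:infnormhelper1}, so nothing new is needed there. For the $\W$-piece, I would observe that the second Euclidean derivative of a diagonal matrix is diagonal, so $\W^{-1}\D^2\W(z,u)$ is diagonal with $i$th entry $\bigl(\D^2 w_i(z,u)\bigr)/w_i$. Lemma~\ref{lem:derivativeofwprime}, whose proof actually computes the Euclidean second derivative $\D^2 w(v,z) = \D(\D w(v))(z)$ by differentiating $\diag\{\D\W(v)\} = -2\Lambdax r_{x,v}$ one more time, provides the entrywise estimate
\[
\Bigl\|\tfrac{\D^2 w(z,u)}{w}\Bigr\|_\infty \;\leq\; \|s_{x,z}\|_\infty\|s_{x,u}\|_\infty.
\]
Applying the resulting diagonal operator to $s_{x,v}$ immediately yields $\|\W^{-1}\D^2\W(z,u)s_{x,v}\|_\infty \leq \|s_{x,z}\|_\infty\|s_{x,u}\|_\infty\|s_{x,v}\|_\infty$, and combining with the $\Ptwo$ bound finishes both inequalities in the lemma.

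Since no new calculation is required beyond assembling these ingredients, I do not expect a genuine technical obstacle: all the heavy lifting — in particular the crucial use of $p<4$ through the operator-infinity-norm bound $\|\G^{-1}\W\,\cdot\|_\infty \leq (4/p-1)^{-1}\|\cdot\|_\infty$ from Lemma~\ref{lem:ginfnorm} — has already been carried out in Lemmas~\ref{lem:derivativeofwprime} and~\ref{lem:infnormhelper1}. The only mild subtlety is notational: one must recognize that the quantity written $\D(w)(v,z)$ in Lemma~\ref{lem:derivativeofwprime} denotes the Euclidean second derivative $\D^2\W(v,z)$ needed here, which is clear from the computation in its proof.
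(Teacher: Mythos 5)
Your proposal is correct and is essentially the paper's own argument: the paper likewise disposes of the lemma by noting that $\G$ and $\Lambdax$ are linear combinations of $\W$ and $\Ptwo$ and invoking Lemma~\ref{lem:infnormhelper1}. You are merely a bit more explicit than the paper in spelling out the $\W$-piece via the entrywise bound from Lemma~\ref{lem:derivativeofwprime} (the paper leaves that step implicit), which is a fair and harmless addition.
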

\begin{proof}
Directly from Lemma~\ref{lem:infnormhelper1}, noting the fact that both $G$ and $\Lambda$ are linear combinations of $W$ and $P^{(2)}$.
\end{proof}

\begin{lemma}\label{lem:infnorm1}
We have
\begin{align*}
    \|\W^{-1}\Pxv \odot \tilde \Px s_{x,v}\|_\infty \leq \|s_{x,v}\|_\infty \|s_{x,z}\|_\infty.
\end{align*}
\end{lemma}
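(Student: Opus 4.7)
The plan is to reduce the claim to an entrywise Cauchy--Schwarz estimate that exploits two elementary properties of the Lewis-weighted projection matrix $\Px$: the diagonal/row-sum identity $\sum_j P_{ij}^2 = (\Ptwo)_{ii} = P_{ii} = w_i$, and the contraction $\Px \preceq I$. Throughout I will read $\tilde\Px$ as an instance of $\Pxz = \Px\Rxz\Px$, which matches the older notation where $\tilde P$ denoted $\Px\Rxv\Px$ but with direction $z$; correspondingly I treat the LHS as $\|\W^{-1}(\Px\odot\Pxz)\,s_{x,v}\|_\infty$.

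First I would write out the $i$-th coordinate of $\W^{-1}(\Px\odot\Pxz)\,s_{x,v}$ explicitly as $w_i^{-1}\sum_j P_{ij}(\Pxz)_{ij}\,(s_{x,v})_j$, pull the factor $\|s_{x,v}\|_\infty$ out of the sum, and then apply Cauchy--Schwarz to get
\[
\Big|w_i^{-1}\sum_j P_{ij}(\Pxz)_{ij}(s_{x,v})_j\Big|
\;\le\; w_i^{-1}\,\|s_{x,v}\|_\infty \cdot \Big(\textstyle\sum_j P_{ij}^2\Big)^{1/2}\Big(\textstyle\sum_j (\Pxz)_{ij}^2\Big)^{1/2}.
\]
The first square root is exactly $\sqrt{w_i}$ by the projection identity. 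For the second square root, I would identify $\sum_j (\Pxz)_{ij}^2 = (\Pxz^{\,2})_{ii} = (\Px\Rxz\Px\Rxz\Px)_{ii}$, and then use $\Px \preceq I$ to upper bound this by $(\Px\Rxz^{\,2}\Px)_{ii} = \sum_k P_{ik}^2\,r_{x,z,k}^2 \le \|r_{x,z}\|_\infty^2\, w_i$.

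Multiplying the two estimates gives $w_i\,\|r_{x,z}\|_\infty$, so after dividing by $w_i$ the $i$-th coordinate is bounded by $\|s_{x,v}\|_\infty\,\|r_{x,z}\|_\infty$. Taking the maximum over $i$ and invoking Lemma~\ref{lem:ginfnorm} to convert $\|r_{x,z}\|_\infty$ into $\|s_{x,z}\|_\infty$ (up to the $\frac{1}{4/p-1}$ constant absorbed in $\lesssim$) then closes the bound. This is also where the regime $p < 4$ implicitly enters, since that is the hypothesis powering Lemma~\ref{lem:ginfnorm}.

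I do not expect a real obstacle. The only mildly nontrivial step is the recognition that the $\Px\odot$ structure lets one use the projection identity $\sum_j P_{ij}^2 = w_i$ cleanly, and that nested projections $\Px\Rxz\Px\Rxz\Px$ collapse under $\Px \preceq I$ to a purely diagonal bound controlled by $\|r_{x,z}\|_\infty$. Everything else is a one-line Cauchy--Schwarz, and Lemma~\ref{lem:ginfnorm} is already available as a black box.
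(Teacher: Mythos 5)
Your proof is correct, and it is of the same elementary flavor as the paper's: a single Cauchy--Schwarz step combined with the projection identity $\sum_j (\Px)_{ij}^2 = w_i$ and the contraction $\Px \preccurlyeq I$. Two differences are worth flagging. First, the lemma statement's notation is garbled; in the paper's own proof (and where the lemma is invoked, e.g.\ Lemmas~\ref{lem:infnorm2} and~\ref{lem:infnormhelper1}) the middle diagonal factor is the slack matrix $\Sxz$, so the quantity bounded is $e_i^\top\big(\Px\odot(\Px\Sxz\Px)\big)s_{x,v}$, whereas you read it as $\Rxz$. Second, the mechanics differ: the paper rewrites the $i$-th entry as the bilinear form $(s_{x,z}\odot p_i)^\top \Px\, (s_{x,v}\odot p_i)$, where $p_i$ denotes the $i$-th row of $\Px$, and applies Cauchy--Schwarz through $\Px\preccurlyeq I$, pulling out both sup norms at once and never touching $r_{x,z}$; you instead apply Cauchy--Schwarz across the $j$-sum and control the row norm of $\Pxz$ via $\Pxz^{2}=\Px\Rxz\Px\Rxz\Px\preccurlyeq\Px\Rxz^{2}\Px$, then convert $\|r_{x,z}\|_\infty$ to $\|s_{x,z}\|_\infty$ with Lemma~\ref{lem:ginfnorm}. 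Your route is sound (the L\"{o}wner steps hold even though $\Rxz$ may have negative entries), and it in fact yields the bound $\|s_{x,v}\|_\infty\|D\|_{op}$ for an arbitrary diagonal middle factor $D$, which is precisely the generic form needed in Lemma~\ref{lem:infnormhelper1}; the only costs are the extra factor $\frac{1}{4/p-1}$ (harmless, since the paper's $\lesssim$ absorbs it) and the induced reliance on $p<4$, both of which the paper's version with $\Sxz$ avoids entirely.
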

\begin{proof}
\begin{align*}
e_i^\top (\Px \odot \Px\Sxz\Px)s_{x,v} & \leq
(s_{x,z} \odot {\Px}_{i,})^\top \Px (s_{x,v} \odot {\Px}_{i,})\\
& \leq \|s_{x,z} \odot {\Px}_{i,}\|_2 \|s_{x,v} \odot {\Px}_{i,}\|_2 \leq \|s_{x,z}\|_\infty \|s_{x,v}\|_\infty w_i.
\end{align*}
\end{proof}

\begin{lemma}\label{lem:infnorm4}
We have
\begin{align*}
    \|\W^{-1}\big((\Px \Sxz\Px) \odot (\Px\Sxu\Px)\big)s_{x,v}\|_\infty \leq  \|s_{x,z}\|_\infty \|s_{x,u}\|_\infty \|s_{x,v}\|_\infty.
\end{align*}
\end{lemma}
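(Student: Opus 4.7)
The plan is to bound this $\infty$-norm entrywise. Look at the $i$-th coordinate of the vector: expanding Hadamard and matrix products, it equals
\[
\sum_j (\Px\Sxz\Px)_{ij}(\Px\Sxu\Px)_{ij}(s_{x,v})_j.
\]
Pulling out $\|s_{x,v}\|_\infty$ and applying Cauchy--Schwarz in $j$ reduces the problem to bounding the two row-norms $\sqrt{\sum_j (\Px\Sxz\Px)_{ij}^2}$ and $\sqrt{\sum_j (\Px\Sxu\Px)_{ij}^2}$ individually.

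Each such row-norm squared is a diagonal entry of a symmetric product: $\sum_j (\Px\Sxz\Px)_{ij}^2 = e_i^\top \Px\Sxz\Px\Px\Sxz\Px e_i = e_i^\top \Px\Sxz\Px\Sxz\Px e_i$, using idempotency $\Px^2=\Px$. The key spectral step is the L\"owner bound $\Px\Sxz\Px\Sxz\Px \preccurlyeq \|s_{x,z}\|_\infty^2 \Px$, which I would establish from $\Sxz\Px\Sxz \preccurlyeq \|s_{x,z}\|_\infty^2 I$ (since $\|\Px\|_{op}\le 1$) sandwiched by $\Px$ on both sides. Combined with $(\Px)_{ii}=w_i$ (a standard identity for the Lewis-weighted projection, used throughout the paper), this gives
\[
\sqrt{\sum_j (\Px\Sxz\Px)_{ij}^2}\ \leq\ \|s_{x,z}\|_\infty \sqrt{w_i},
\]
and analogously with $u$ in place of $z$. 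Multiplying the two square roots yields the factor $w_i$ on the right-hand side, which is exactly what is needed to cancel the $\W^{-1}$ on the outside.

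Putting the pieces together, the $i$-th coordinate of $\W^{-1}\bigl((\Px\Sxz\Px)\odot(\Px\Sxu\Px)\bigr)s_{x,v}$ is bounded by $w_i^{-1}\cdot w_i\cdot \|s_{x,z}\|_\infty\|s_{x,u}\|_\infty\|s_{x,v}\|_\infty$, which is the claimed bound after taking the maximum over $i$. There is no real obstacle here; the only thing to be careful about is choosing the order of the two Cauchy--Schwarz inequalities so that the $(s_{x,v})_j$ factor is absorbed by $\|s_{x,v}\|_\infty$ before splitting the two Hadamard factors, and invoking idempotency of $\Px$ at the right moment. This mirrors the pattern used in the neighboring Lemmas~\ref{lem:infnorm1} and~\ref{lem:infnorm3}, and will plug directly into Lemma~\ref{lem:infnormhelper1} as one of the constituent estimates.
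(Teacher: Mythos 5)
Your proof is correct and follows essentially the same route as the paper: a per-coordinate bound that applies Cauchy--Schwarz and reduces everything to the row-norm estimate $\|\Px\Sxz\Px e_i\|_2\le\|\Sxz\Px e_i\|_2\le\|s_{x,z}\|_\infty\sqrt{w_i}$ together with the identity $\sum_j(\Px)_{ij}^2=(\Px)_{ii}=w_i$, which cancels the outer $\W^{-1}$. Your L\"owner sandwich $\Px\Sxz\Px\Sxz\Px\preccurlyeq\|s_{x,z}\|_\infty^2\Px$ is just a repackaging of that same row-norm bound, so there is nothing further to reconcile.
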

\begin{proof}
Observe that the 2-norm of the $i$th row of the matrix $\Px\Rxz\Px$ is at most $\|s_{x,z}\|_\infty \sqrt{w_i}$. This is because
\begin{align*}
    \|\Px\Sxz\Px e_i\|^2 \leq \|\Sxz\Px e_i\|^2 = \sqrt{\sum_j {s^2_{x,z}}_j {\Px}_{i,j}^2} \leq 
    \|s_{x,z}\|_\infty \sqrt{w_i}.
\end{align*}
Now note that
\begin{align*}
    e_i^\top \big((\Px\Sxz\Px) \odot (\Px\Sxu\Px)\big)s_{x,v} 
    & = \big(e_i^\top (\Px\Sxz \Px)\odot e_i^\top (\Px\Sxu\Px)\big)^\top s_{x,v}\\
    & = (e_i^\top (\Px\Sxu\Px))^\top((\Px\Sxz \Px e_i) \odot s_{x,v})
    \\
    & =  ({\Px}_{i,}\odot s_{x,u}) \Px ((\Px\Sxz\Px e_i)\odot s_{x,v}) \\
    & \leq \|{\Px}_{i,}\odot s_{x,u}\|_2 \|(\Px\Sxz\Px e_i)\odot s_{x,v}\|_2 \\
    & \leq \|s_{x,u}\|_\infty \|s_{x,v}\|_\infty \|{\Px}_{i,}\|\|(\Px\Sxz\Px e_i)\| \\
    & =  \|s_{x,u}\|_\infty \|s_{x,v}\|_\infty \|{\Px}_{i,}\|\|\Px (s_{x,z}\odot {\Px}_{i,})\| \\
    & \leq \|s_{x,u}\|_\infty \|s_{x,v}\|_\infty \|{\Px}_{i,}\|\| s_{x,z}\odot {\Px}_{i,}\| \\
    & \leq \|s_{x,u}\|_\infty \|s_{x,v}\|_\infty \|{\Px}_{i,}\|\|s_{x,z}\|_\infty\|{\Px}_{i,}\| \\
    & = w_i \|s_{x,u}\|_\infty \|s_{x,v}\|_\infty\|s_{x,z}\|_\infty.
\end{align*}

\end{proof}

\begin{lemma}\label{lem:infnorm3}
We have
\begin{align*}
    \|\W^{-1} (\Px \odot (\Px\Rxu\Px\Rxv\Px)) s_{x,v}\|_\infty \leq \|s_{x,z}\|_\infty \|s_{x,u}\|_\infty \|s_{x,v}\|_\infty.
\end{align*}
\end{lemma}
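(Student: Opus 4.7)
The approach mirrors Lemmas~\ref{lem:infnorm1} and~\ref{lem:infnorm4}: fix a row index $i$, bound the $i$-th entry of the vector inside the infinity norm, then divide by $w_i$. The key identity I would use is the standard rewrite of a Hadamard product against a vector: for any matrix $B$,
\begin{equation*}
e_i^\top(\Px \odot B) s_{x,v} \;=\; \sum_j P_{ij}\, B_{ij}\, (s_{x,v})_j \;=\; e_i^\top B\, b_i, \qquad b_i \triangleq \Px e_i \odot s_{x,v}.
\end{equation*}
This converts the Hadamard structure into an inner product and isolates a vector $b_i$ of controlled $\ell_2$-norm: since $\|\Px e_i\|_2^2 = P_{ii} = w_i$, we have $\|b_i\|_2 \le \|s_{x,v}\|_\infty \sqrt{w_i}$.

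Applying this with $B = \Px\Rxu\Px\Rxv\Px$ (the presence of a third direction $z$ in the stated bound comes from the way the lemma is invoked; if needed I would read the statement as involving three independent directions $z,u,v$, replacing the middle chain by $\Px\Rxz\Px\Rxu\Px$), Cauchy--Schwarz gives
\begin{equation*}
\bigl|e_i^\top(\Px\odot \Px\Rxu\Px\Rxv\Px)\, s_{x,v}\bigr| \;=\; \bigl|\langle \Px e_i,\; \Rxu\Px\Rxv\Px\, b_i\rangle\bigr| \;\le\; \|\Px e_i\|_2\,\|\Rxu\Px\Rxv\Px\, b_i\|_2.
\end{equation*}
On the right I peel off the diagonal factors $\Rxu,\Rxv$ using $\|\Rxu\|_{\mathrm{op}} \le \|r_{x,u}\|_\infty$ and $\|\Rxv\|_{\mathrm{op}} \le \|r_{x,v}\|_\infty$, and use $\|\Px\|_{\mathrm{op}} \le 1$ for the intermediate projections. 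Combining with $\|b_i\|_2 \le \|s_{x,v}\|_\infty \sqrt{w_i}$ yields
\begin{equation*}
\bigl|e_i^\top(\Px\odot \Px\Rxu\Px\Rxv\Px)\, s_{x,v}\bigr| \;\le\; w_i\, \|r_{x,u}\|_\infty\,\|r_{x,v}\|_\infty\,\|s_{x,v}\|_\infty .
\end{equation*}
Dividing by $w_i$ and applying Lemma~\ref{lem:ginfnorm} to pass from $\|r_{x,\cdot}\|_\infty$ to $\|s_{x,\cdot}\|_\infty$ (absorbing the polylogarithmic factor $1/(4/p-1)$ into $\lesssim$) completes the bound.

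The main (minor) obstacle is choosing the correct order of Cauchy--Schwarz: the naive route of bounding $\|\Px\Rxu\Px\Rxv\Px\|_{\mathrm{op}}$ directly and then pairing with $\|s_{x,v}\|_2$ loses a $\sqrt{m}$ factor, so it is essential to absorb one factor of $\sqrt{w_i}$ into $\Px e_i$ on the left and the other into $b_i = \Px e_i \odot s_{x,v}$ on the right, exactly as in the proofs of Lemmas~\ref{lem:infnorm1} and~\ref{lem:infnorm4}. With that pairing the argument reduces to successive applications of operator norms and the projection identity $\|\Px e_i\|_2^2 = w_i$.
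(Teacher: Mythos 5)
Your proposal is correct and follows essentially the same route as the paper's proof: a row-wise Cauchy--Schwarz exploiting the Hadamard structure, extracting $\sqrt{w_i}$ from $\|\Px e_i\|_2$ on each side, peeling off the diagonal factors $\Rxu,\Rxv$ by operator norm together with $\|\Px\|_{\mathrm{op}}\le 1$, and finishing with Lemma~\ref{lem:ginfnorm} to trade $\|r_{x,\cdot}\|_\infty$ for $\|s_{x,\cdot}\|_\infty$ up to the absorbed polylogarithmic factor. Your reading of the mismatched direction ($z$ versus $v$) in the statement is also consistent with how the paper itself handles it.
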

\begin{proof}
Note that by Cauchy Schwarz
\begin{align*}
    e_i^\top (\Px \odot (\Px\Rxu\Px\Rxv\Px))s_{x,v} & \leq \|\Px\Rxu\Px\Rxv\Px e_i\|_2 \|{\Px}_{i,}\odot s_{x,v}\|_2 \\
    &\leq \|\Rxu\Px\Rxv\Px e_i\|_2\sqrt{w_i}\|s_{x,v}\|_\infty\\
    &\leq \sqrt{w_i}\|r_{x,u}\|_\infty \|\Px\Rxz\Px e_i\|_2 \|s_{x,v}\|_\infty\\
    & \leq \sqrt{w_i}\|r_{x,u}\|_\infty \|\Rxz \Px e_i\|_2 \|s_{x,v}\|_\infty\\
    & \leq \sqrt{w_i}\|r_{x,u}\|_\infty \|\Px e_i\|_2 \|s_{x,v}\|_\infty\|s_{x,z}\|_\infty\\
    & = w_i \|r_{x,u}\|_\infty\|s_{x,v}\|_\infty\|s_{x,z}\|_\infty.
\end{align*}
\end{proof}

\begin{lemma}\label{lem:infnorm2}
We have
\begin{align*}
    &\|\G^{-1}\D\G(z)s_{x,v}\|_\infty \leq \|s_{x,v}\|_\infty\|s_{x,z}\|_\infty,\\
    & \|\G^{-1}\D\Lambdax(z)s_{x,v}\|_\infty \leq \|s_{x,v}\|_\infty\|s_{x,z}\|_\infty,\\
    &\|\G^{-1}\Wxv s_v\| \leq \|s_v\|_\infty.
\end{align*}
\end{lemma}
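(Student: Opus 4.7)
The plan is to mimic the contradiction-by-maximum-coordinate argument from Lemma~\ref{lem:ginfnorm}, but applied to the vector $y = \G^{-1} M s_{x,v}$ for $M \in \{\D\G(z),\, \D\Lambdax(z),\, \Wxv\}$. Writing $\G = \tfrac{2}{p}\W + (1-\tfrac{2}{p})\Ptwo$ and choosing an index $i$ that attains $\|y\|_\infty$, the row-sum identity $\sum_j (\Px)_{ij}^2 = w_i$ yields $|(\Ptwo y)_i| \le w_i \|y\|_\infty = w_i |y_i|$, so the same calculation as in Lemma~\ref{lem:ginfnorm} gives
\[
|(\G y)_i| \;\ge\; \left(\tfrac{4}{p}-1\right) w_i\, |y_i|.
\]
Therefore the whole lemma reduces to the \emph{entrywise} estimate $|M s_{x,v}|_i \lesssim w_i\, \|s_{x,z}\|_\infty \|s_{x,v}\|_\infty$ for the three choices of $M$; once that is in hand, the displayed inequality proves all three bounds.

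For $\D\W(z) s_{x,v} = -2(\Lambdax r_{x,z}) \odot s_{x,v}$, the bound follows by splitting $\Lambdax = \W - \Ptwo$: both $|(\W r_{x,z})_i|$ and $|(\Ptwo r_{x,z})_i|$ are at most $w_i \|r_{x,z}\|_\infty$, and Lemma~\ref{lem:ginfnorm} converts $\|r_{x,z}\|_\infty$ into $\|s_{x,z}\|_\infty$. Since $\D\Lambdax(z) = \D\W(z) - \D\Ptwo(z)$, it suffices to handle $\D\Ptwo(z) s_{x,v}$. Using $\D\Ptwo(z) = -2\,\Px\odot(\Px\Rxz + \Rxz\Px) + 4\,\Px\odot(\Px\Rxz\Px)$, the first two terms are easy: $(\Px\odot\Px\Rxz)_{ij} = (\Px)_{ij}^2 (r_{x,z})_j$ and $(\Px\odot\Rxz\Px)_{ij} = (\Px)_{ij}^2 (r_{x,z})_i$, each bounded in $\ell_\infty$-on-rows by $w_i \|r_{x,z}\|_\infty \|s_{x,v}\|_\infty$.

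The main obstacle is the last term, $\Px\odot(\Px\Rxz\Px)$, whose entries are quadratic in $\Px$. The key observation I will use is the identity
\[
\bigl((\Px\odot \Px\Rxz\Px)\,s_{x,v}\bigr)_i \;=\; (\Px\Rxz\Px\,\Sxv\,\Px)_{ii} \;=\; (\Px e_i)^{\!\top} \Rxz\, \Px\, \Sxv\, (\Px e_i),
\]
obtained by unfolding Hadamard and matrix products. A single Cauchy--Schwarz then gives $|(\Px e_i)^{\!\top} \Rxz \Px \Sxv (\Px e_i)| \le \|\Rxz\|_{\mathrm{op}} \|\Sxv\|_{\mathrm{op}} \|\Px e_i\|_2^2 = w_i \|r_{x,z}\|_\infty \|s_{x,v}\|_\infty$, where the identity $\|\Px e_i\|_2^2 = (\Px^2)_{ii} = w_i$ is the crucial simplification.

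Combining these entrywise bounds proves $|\D\G(z) s_{x,v}|_i, |\D\Lambdax(z) s_{x,v}|_i \lesssim w_i \|s_{x,z}\|_\infty \|s_{x,v}\|_\infty$ (with an $O(1/(4/p-1))$ constant), which together with the coordinate-maximum lower bound on $|(\G y)_i|$ gives the first two claims. The third follows by specializing $z=v$: $\Wxv = \D\W(v) = -2\diag{\Lambdax r_{x,v}}$, so the same row-by-row argument yields $|\Wxv s_{x,v}|_i \lesssim w_i \|s_{x,v}\|_\infty$ (a single factor, after absorbing one $\|s_{x,v}\|_\infty$ via $\|r_{x,v}\|_\infty \lesssim \|s_{x,v}\|_\infty$ and treating the remaining $\|s_{x,v}\|_\infty$ as $O(1)$ in the regime of interest).
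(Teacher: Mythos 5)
Your proposal is correct and takes essentially the same route as the paper's proof: entrywise bounds of the form $|(M s_{x,v})_i| \lesssim w_i \|s_{x,z}\|_\infty \|s_{x,v}\|_\infty$ (your quadratic-form identity and Cauchy--Schwarz for the $\Px\odot(\Px\Rxz\Px)$ term is the same computation the paper delegates to Lemma~\ref{lem:infnorm1}), combined with the maximum-coordinate argument of Lemma~\ref{lem:ginfnorm} to absorb $\G^{-1}$. You merely make explicit the reduction the paper summarizes as ``similar to Lemma~\ref{lem:ginfnorm},'' and your handling of the third claim (absorbing one factor of $\|s_{x,v}\|_\infty$) matches the paper's implicit convention.
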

\begin{proof}
Note that 
\begin{align*}
    \D\G(z) = \Wxv  + \Px\odot (\Px\Sxz\Px).
\end{align*}
Now from Lemma~\ref{lem:infnorm1}, we know $$|(\Px\odot (\Px\Sxz\Px) s_{x,v})_i| \leq \|s_{x,v}\|_\infty\|s_{x,z}\|_\infty w_i.$$
Now similar to Lemma~\ref{lem:ginfnorm}, we can show
\begin{align*}
    \|\G^{-1} \Px\odot \Pxv s_{x,v}\|_\infty \leq \|s_{x,v}\|_\infty. 
\end{align*}
On the other hand, note that 
\begin{align*}
    |(\Wxv s_{x,z})_i| \leq w_i \|s_{x,z}\|_\infty
\end{align*}
so similarly we can argue
\begin{align*}
    \|\G^{-1}\Wxv s_{x,z}\|_\infty \leq \|s_{x,z}\|_\infty.
\end{align*}
Finally, as both $\G$ and $\Lambdax$ are a combination of $\W$ and $\Ptwo$ matrices, this completes the proof.
\end{proof}

\begin{lemma}\label{lem:wsecondderivativeinf}
We have
\begin{align*}
    \|\W^{-1}\D\Wxv(u)\|_\infty \leq \|s_{x,v}\|_\infty \|s_{x,u}\|_\infty. 
\end{align*}
\end{lemma}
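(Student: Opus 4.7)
The plan is to recognize that Lemma~\ref{lem:wsecondderivativeinf} is essentially a repackaging of Lemma~\ref{lem:derivativeofwprime}, and then either invoke it directly or repeat the four-term expansion. Since $\Wxv = -2\diag{\Lambdax\G^{-1}\W s_{x,v}}$ is a diagonal matrix, so is its directional derivative $\D\Wxv(u)$, and its diagonal entries are (twice) the components of the vector $\D^2 w(v,u)$ that appears in Lemma~\ref{lem:derivativeofwprime}. Since left-multiplication by $\W^{-1}$ simply divides the $i$-th diagonal entry by $w_i$, the spectral norm $\|\W^{-1}\D\Wxv(u)\|_\infty$ of this diagonal matrix equals $2\,\|\D^2 w(v,u)/w\|_\infty$, which is bounded by $\|s_{x,v}\|_\infty\|s_{x,u}\|_\infty$ directly from Lemma~\ref{lem:derivativeofwprime}.

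For concreteness, the self-contained version of the proof would carry out the same Leibniz expansion used there. Writing $\Wxv=-2\diag{\Lambdax\G^{-1}\W s_{x,v}}$ and differentiating in direction $u$ by the product rule (remembering that $\D(\G^{-1})(u) = -\G^{-1}\D\G(u)\G^{-1}$ and that $\D s_{x,v}(u) = -\Sxu s_{x,v}$ since $\A = S_x^{-1}\mathrm{A}$), one obtains four diagonal terms whose infinity norms can be read off from the corresponding entries of the vectors
\[
\W^{-1}\D\Lambdax(u)\G^{-1}\W s_{x,v},\qquad \W^{-1}\Lambdax\G^{-1}\D\G(u)\G^{-1}\W s_{x,v},
\]
\[
\W^{-1}\Lambdax\G^{-1}\Wxu\, s_{x,v},\qquad \W^{-1}\Lambdax\G^{-1}\W\Sxu\, s_{x,v}.
\]
Each factor $\W^{-1}\Lambdax$, $\W^{-1}\D\Lambdax(u)$ is bounded in $\|\cdot\|_{\infty\to\infty}$ (by Lemma~\ref{lem:levelzerobounds} and Lemma~\ref{lem:Glambdabound}, respectively), $\Lambdax\G^{-1}$ inherits an $\|\cdot\|_{\infty\to\infty}$ bound from the key Lemma~\ref{lem:ginfnorm} (this is where $p<4$ enters), and Lemma~\ref{lem:infnorm2} controls $\|\G^{-1}\D\G(u) s_{x,v}\|_\infty$ and $\|\G^{-1}\Wxu s_{x,v}\|_\infty$ by $\|s_{x,u}\|_\infty\|s_{x,v}\|_\infty$. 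The fourth term is the most transparent: $\Sxu s_{x,v}$ has infinity norm at most $\|s_{x,u}\|_\infty\|s_{x,v}\|_\infty$, after which Lemma~\ref{lem:ginfnorm} applied to $\G^{-1}\W$ finishes it off.

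There is no real obstacle here: the lemma is a consequence of estimates already established. The only subtle point, common to this whole appendix, is ensuring that every intermediate map of the form $\G^{-1}\W$ is viewed as a bounded operator in the $\|\cdot\|_{\infty\to\infty}$ norm, which requires $p < 4$ and is exactly the content of Lemma~\ref{lem:ginfnorm}. I would therefore present the proof in two lines: reduce to Lemma~\ref{lem:derivativeofwprime} via the diagonal identification above, and remark that each of the four terms in the Leibniz expansion has already been estimated there.
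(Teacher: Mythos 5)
Your proposal is correct and follows the paper's route: the paper gives no separate proof of this lemma because, via the diagonal identification you describe, it is exactly the matrix form of Lemma~\ref{lem:derivativeofwprime}, whose proof is the same four-term Leibniz expansion bounded using Lemmas~\ref{lem:ginfnorm} and~\ref{lem:infnorm2}. One small correction to your self-contained sketch: the factors $\W^{-1}\Lambdax$ and $\W^{-1}\D\Lambdax(u)$ should be controlled by row-wise $\|\cdot\|_{\infty\rightarrow\infty}$ estimates (the row sums of $\Ptwo$, and Lemma~\ref{lem:infnorm1}), since the L\"{o}wner bounds of Lemmas~\ref{lem:levelzerobounds} and~\ref{lem:Glambdabound} that you cite do not by themselves yield an $\infty\rightarrow\infty$ operator-norm bound.
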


\subsection{Norm of the bias}
\begin{lemma}\label{lem:biasnorm}
We have
\begin{align*}
    \|\mu\|_g \leq (1 + \alpha \sqrt{\alpha_0})\sqrt n.
\end{align*}
\end{lemma}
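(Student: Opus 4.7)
The plan is to directly combine two estimates that have already been established earlier in the paper. Recall from Equation~\eqref{eq:biasdefinition} that the Hamiltonian drift is
\begin{align*}
\mu(x) = g^{-1}\D(\alpha\phi)(x) - \frac{1}{2} g^{-1}\tr\bigl(g^{-1}\D g(x)\bigr) = g^{-1}\D(\alpha\phi)(x) - \frac{1}{2}\xi(x),
\end{align*}
where $\xi = g^{-1}\tr(g^{-1}\D g)$ is the auxiliary vector field introduced just before Lemma~\ref{lem:boundonxi}. By the triangle inequality for the $g$-norm,
\begin{align*}
\|\mu\|_g \;\leq\; \|g^{-1}\D(\alpha\phi)\|_g + \tfrac{1}{2}\|\xi\|_g \;=\; \|\D(\alpha\phi)\|_{g^{-1}} + \tfrac{1}{2}\|\xi\|_g.
\end{align*}

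I would then bound each of the two summands by quoting the estimates already in the paper. The first term is controlled by Lemma~\ref{lem:philemma}, which says $\|\D(\alpha\phi)\|_{g^{-1}} \leq \alpha\sqrt{n\alpha_0}$; the proof there decomposes $\phi = \phi_1 + \phi_2$ into the $p$-Lewis weight part and the log-barrier part, using the Lewis weight identity $\A^\top\W\A \preccurlyeq g$ on the first and the cruder bound $g_2 \preccurlyeq g$ on the second. The second term is controlled by Lemma~\ref{lem:boundonxi}, which gives $\|\xi\|_g \leq \sqrt n$ via the Cauchy--Schwarz trick of writing the trace as $\mathbb E_{v,v'\sim\mathcal N(0,g^{-1})}\,v^\top\D g(v)g^{-1}\D g(v')v'$ and then invoking the infinity-norm first-order self-concordance $\D g(v)\preccurlyeq \|s_{x,v}\|_\infty\,g$ together with the Gaussian moment bound in Lemma~\ref{lem:momentbound1}.

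Adding these two bounds yields $\|\mu\|_g \leq \alpha\sqrt{n\alpha_0} + \tfrac{1}{2}\sqrt n$, which is at most $(1+\alpha\sqrt{\alpha_0})\sqrt n$ up to the irrelevant factor of $\tfrac12$ that can be absorbed in the stated constant. There is no real obstacle here: both ingredients exist and the argument is purely a triangle-inequality assembly. The only thing worth being careful about is that $\alpha_0 \geq 1$ so that the $\sqrt n$ term from $\xi$ is dominated by the combined prefactor $(1+\alpha\sqrt{\alpha_0})$, which is automatic from the definition $\alpha_0 = (m/n)^{(2/p)/(1+2/p)}$ with $m \geq n$.
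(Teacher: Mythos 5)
Your proposal is correct and matches the paper's own argument: the paper likewise splits $\mu$ into the $g^{-1}\D(\alpha\phi)$ part, bounded via Lemma~\ref{lem:philemma}, and the $\xi = g^{-1}\tr(g^{-1}\D g)$ part, which it bounds by exactly the Gaussian-expectation/Cauchy--Schwarz computation of Lemma~\ref{lem:boundonxi} (reproved inline rather than cited), before combining with the triangle inequality. Citing Lemma~\ref{lem:boundonxi} directly, as you do, is a harmless streamlining of the same proof.
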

\begin{proof}
For the first part
\begin{align*}
    \|\nabla \phi\|_g = \|\D\phi\|_{g^{-1}} \leq \alpha \sqrt{n\alpha_0}
\end{align*}
from Lemma~\ref{lem:philemma}. For the second part, writing $tr(g^{-1}Dg)$ as an expectation
\begin{align*}
    \tr(g^{-1}\D g) = \mathbb E_{v\sim \mathcal N(0,g^{-1})} \D g(v)v,
\end{align*}
we have for independent $v, v' \sim \mathcal N(0,g^{-1})$:
\begin{align*}
    \|g^{-1}\tr(g^{-1}\D g)\|_g^2 & = \mathbb E_{v,v'}v^\top \D g(v)g^{-1}\D g(v')v'\\
    & \leq \mathbb E_{v} v^\top \D g(v)g^{-1}\D g(v)v\\
    & \leq \mathbb E_{v} \|s_v\|_\infty^2 v^\top g v\lesssim n,
\end{align*}
where we used Lemma~\ref{lem:momentbound1}. This completes the proof.
\end{proof}

\subsection{Comparison between leverage scores}
\begin{lemma}\label{lem:mcomparisonlemma}
Let 
\begin{align*}
    \tilde \sigma_i = (\W^{1/2}\A g^{-1}\A^\top \W^{1/2})_{i,i}.
\end{align*}
Then
\begin{align*}
    \tilde \sigma_i/ w_i \leq (\frac{m}{n})^{2/p} w_i^{2/p},
\end{align*}
which implies
\begin{align*}
    \tilde{\sigma}_i/w_i \leq (\frac{m}{n})^{\frac{2/p}{1+2/p}}.
\end{align*}
\end{lemma}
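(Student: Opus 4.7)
The plan is to unpack $\tilde\sigma_i$ and reduce the statement to the scalar quadratic form $a_i^\top g^{-1} a_i$, then bring in two upper bounds on this quadratic form whose balance gives the claimed exponent. First I would observe that
\[
\tilde\sigma_i \;=\; (\W^{1/2}\A\, g^{-1}\A^\top \W^{1/2})_{ii} \;=\; w_i\, a_i^\top g^{-1} a_i,
\]
so the bound we want is purely a bound on $a_i^\top g^{-1} a_i$, which is exactly the quantity analyzed in Lemma~\ref{ginversebound} (treating $g$ as the unscaled hybrid metric $g''$; any global $\alpha_0$ rescaling only makes the bound stronger).

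For the first displayed inequality, I would invoke the Lewis-weight optimality condition
\[
a_i^\top (\A^\top \W^{1-2/p}\A)^{-1} a_i \;=\; w_i^{2/p},
\]
together with the L\"owner comparison $\A^\top \W^{1-2/p}\A \preccurlyeq (m/n)^{2/p}(\A^\top \W \A + \tfrac{n}{m}\A^\top \A)$ from Lemma~\ref{lem:matrixcomparison} and the fact that $\A^\top\W\A\preccurlyeq g_1\preccurlyeq g$ (Lemma~\ref{lem:lsbarrierapprox}). Inverting these L\"owner orderings yields
\[
a_i^\top g^{-1} a_i \;\le\; (m/n)^{2/p}\, a_i^\top(\A^\top \W^{1-2/p}\A)^{-1} a_i \;=\; (m/n)^{2/p}\, w_i^{2/p},
\]
which multiplied through is the first claim $\tilde\sigma_i / w_i \le (m/n)^{2/p} w_i^{2/p}$. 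This is essentially a repackaging of Equation~\eqref{eq:aval} inside Lemma~\ref{ginversebound}.

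For the implication to $\tilde\sigma_i/w_i \le (m/n)^{\frac{2/p}{1+2/p}}$, I would pair the previous bound with the complementary bound
\[
a_i^\top g^{-1} a_i \;\le\; a_i^\top (\A^\top \W\A)^{-1} a_i \;\le\; w_i^{-1},
\]
obtained from Equation~\eqref{eq:dovom} via $\A^\top\W\A\preccurlyeq g$ and the standard leverage-score identity $w_i^{1/2} a_i^\top(\A^\top\W\A)^{-1}a_i\, w_i^{1/2}\le 1$. Taking the minimum of the two upper bounds,
\[
\tilde\sigma_i/w_i \;\le\; \min\!\bigl((m/n)^{2/p} w_i^{2/p},\; w_i^{-1}\bigr),
\]
the worst case over $w_i$ occurs when the two expressions are equal, i.e.\ at $w_i=(n/m)^{\frac{2/p}{1+2/p}}$, which produces the advertised exponent $\frac{2/p}{1+2/p}$.

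There is no real obstacle here beyond bookkeeping: the work has already been done inside Lemma~\ref{ginversebound}, and the mild thing to be careful about is the scaling convention—making sure we apply the bound to whichever of $g$ or $g''$ is intended in the statement, and that the $\alpha_0$ factor does not double-count. Once that is fixed, both assertions follow immediately from Lemmas~\ref{lem:matrixcomparison}, \ref{lem:lsbarrierapprox}, and the Lewis-weight optimality identity.
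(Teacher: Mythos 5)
Your proposal is correct and follows essentially the same route as the paper: the paper's own (very terse) proof also rests on the comparison $g \succcurlyeq (\tfrac{n}{m})^{2/p}\A^\top\W^{1-2/p}\A$ from Lemma~\ref{lem:matrixcomparison} plus the Lewis-weight identity $a_i^\top(\A^\top\W^{1-2/p}\A)^{-1}a_i = w_i^{2/p}$, with the final exponent coming from balancing against $a_i^\top g^{-1}a_i \le w_i^{-1}$ exactly as in Lemma~\ref{ginversebound}. Your write-up merely makes explicit the balancing step that the paper leaves implicit in the ``which implies,'' which is fine.
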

\begin{proof}
Simply note that $g \geq (\frac{n}{m})^{2/p}\A^\top \W^{1-2/p}\A$, which implies
\begin{align*}
    (\W^{1/2}\A (\A^\top \W^{1-2/p}\A)^{-1}\A^\top \W^{1/2})_{i,i} \leq w_i^{2/p}
\end{align*}
\end{proof}

\subsection{Norm comparison between covariant and normal derivatives}
\begin{lemma}\label{lem:vderivative}
Given a family of Hamiltonian curves $\gamma_s$ in the interval $(0,\delta)$ where $\gamma_0$ is $(\delta, c)-$nice, with $v = v_s = \gamma'_s(t)$, we have
\begin{align*}
    \|\frac{d}{ds}v_s\|_g\Big|_{s=0} \leq c + 1/\delta.
\end{align*}
\end{lemma}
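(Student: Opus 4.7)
The plan is to relate the Euclidean derivative $\frac{d}{ds}v_s$ to its covariant counterpart $D_s v_s$, and bound the two pieces separately. On a Hessian manifold, combining the Christoffel-symbol formula of Lemma~\ref{lem:cristoffelsymbols} with the total symmetry of $Dg_{mij}$, the decomposition reads
\[
\frac{d}{ds}v_s \;=\; D_s v_s \;-\; \tfrac12\, g^{-1}\D g(z_s)\, v_s, \qquad z_s \triangleq \tfrac{d}{ds}\gamma_s(t),
\]
so it suffices to bound $\|D_s v_s\|_g$ and the Christoffel term $\tfrac12\|g^{-1}\D g(z_s)v_s\|_g$.

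For the first piece I would invoke the commuting-derivatives identity of Lemma~\ref{lem:commutingderivatives}: since $v_s = \partial_t\gamma_s(t)$,
\[
D_s v_s \;=\; D_s\partial_t\gamma_s(t) \;=\; D_t\partial_s\gamma_s(t) \;=\; J'(t),
\]
where $J(t) = \partial_s\gamma_s(t)|_{s=0}$ is the Hamiltonian field along $\gamma_0$, satisfying the ODE $J''(t) = \Phi(t)J(t)$ of Lemma~\ref{lem:hmcfieldODE} with boundary values $\|J(0)\|_g = 1$ and $J(\delta) = 0$ (since all curves in the family pass through $y$ at time $\delta$). Niceness of $\gamma_0$ together with Lemma~\ref{lem:R1bound} gives $\|\Phi(t)\|_F \le R_1$ with $\delta^2 R_1 \lesssim 1$ in the regime we care about, so Lemma~23 of~\cite{lee2018convergence} applied to the Hamiltonian-field ODE yields $\|J'(t)\|_g \lesssim 1/\delta$ and, as a byproduct, $\|J(t)\|_g \lesssim 1$ for all $0\le t\le \delta$. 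This is exactly the estimate $\|\nabla_{c'(s)}\gamma'(t_0)\|_g \lesssim 1/\delta$ that was already used in the proof of Lemma~\ref{lem:stabilityinfnorm}, so no new ODE analysis is needed.

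For the Christoffel piece I would use the symmetry $\D g(z_s) v_s = \D g(v_s) z_s$ on Hessian manifolds and the first-order infinity-norm self-concordance of Lemma~\ref{lem:gfirstderivativebound}, which implies $\D g(v_s)\, g^{-1}\, \D g(v_s) \preccurlyeq \|s_{x,v_s}\|_\infty^2\, g$. Consequently
\[
\|g^{-1}\D g(z_s)v_s\|_g^2 \;=\; z_s^\top \D g(v_s)\, g^{-1}\, \D g(v_s)\, z_s \;\le\; \|s_{x,v_s}\|_\infty^2\,\|z_s\|_g^2 \;\lesssim\; c^2,
\]
using $\|s_{x,v_s}\|_\infty \le c$ from niceness of $\gamma_0$ and the bound $\|z_s\|_g = \|J(t)\|_g \lesssim 1$ from the previous paragraph. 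Adding the two contributions gives $\|\frac{d}{ds}v_s\|_g \lesssim 1/\delta + c$, matching the claim.

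The main obstacle is really only the $\|J(t)\|_g \lesssim 1$ control needed in the Christoffel bound: $J$ is specified by two boundary conditions $J(0)=z$ and $J(\delta)=0$ rather than by initial data, so one has to argue that in the perturbative regime $\delta^2 R_1 \lesssim 1$ the solution of the linear ODE $J'' = \Phi J$ stays $O(1)$ between these endpoints, which is a standard consequence of the same Lemma~23 applied once to extract $\|J'(0)\|_g \lesssim 1/\delta$ from the two boundary values and once more to propagate $J$ forward. Every other step reduces to self-concordance estimates and tensor-symmetry identities already established in the paper.
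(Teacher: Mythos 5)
Your proposal is correct and follows essentially the same route as the paper's proof: decompose the Euclidean derivative of $v_s$ into the covariant derivative plus the Christoffel correction (this is exactly Lemma~\ref{lem:covarianttonormal}), bound the covariant piece by $1/\delta$ via Lemma~23 of~\cite{lee2018convergence} using the $(c,\delta)$-niceness and the $R_1$ bound of Lemma~\ref{lem:R1bound}, and bound the correction term by $c$ using first-order infinity-norm self-concordance together with the symmetry $\D g(z)v=\D g(v)z$ on Hessian manifolds. The only divergence is that you justify $\|\partial_s\gamma_s(t)\|_g\lesssim 1$ at intermediate $t$ by treating the Hamiltonian field as a boundary-value problem for $J''=\Phi J$, whereas the paper simply invokes the unit-norm $s$-parameterization; your extra step is harmless and in fact addresses a point the paper glosses over.
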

\begin{proof}
    From Lemma~\ref{lem:R1bound} we have $R_1 \leq \sqrt n$ along the curve, so by Lemma 23 in~\cite{lee2018convergence} (note that the condition $\delta^2 \lesssim 1/R_1$ is satisfied) we get
    \begin{align*}
        \|\nabla_{\frac{d}{ds}\gamma_s(t)} v_s\|_g \leq \frac{1}{\delta}. 
    \end{align*}
    But now from Lemma~\ref{lem:covarianttonormal}
    \begin{align*}
       \|\frac{d}{ds}v_s\|_g \leq \|s_v\|_\infty \|\frac{d}{ds}\gamma_s(t)\|_g + \|\nabla_{\frac{d}{ds}\gamma_s(t)} v_s\|_g, 
    \end{align*}
    As always, our parameterization in $s$ is always unit norm, so $\|\frac{d}{ds}\gamma_s(t)\|_g $, and from niceness of the curve $\|s_v\|_\infty \lesssim c$, which completes the proof.
\end{proof}

\begin{lemma}\label{lem:covarianttonormal}
For a vector field $v$ and arbitrary vector $z$ at a point $x$, denoting $\D v(z)$ by $v'$, we have
\begin{align*}
    \|v'\|_{g} \leq \|s_v\|_\infty\|z\|_{g} + \|\nabla_z(v)\|_{g}.
\end{align*}
\end{lemma}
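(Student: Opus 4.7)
The plan is to directly compare the Euclidean directional derivative $v' = \D v(z)$ with the covariant derivative $\nabla_z v$ via the Christoffel symbols of the Hessian metric, and then apply first-order infinity-norm self-concordance to control the discrepancy. On a Hessian manifold, Lemma~\ref{lem:cristoffelsymbols} gives $\Gamma^k_{ij} = \tfrac{1}{2}\sum_m g^{km}\D g_{mij}$, with $\D g_{mij} = \partial_m\partial_i\partial_j\phi$ symmetric in all three indices. I would first write, in a Euclidean chart,
\begin{align*}
\nabla_z v \;=\; \D v(z) \;+\; \sum_k \Big(\sum_{i,j}\Gamma^k_{ij}\,z^i v^j\Big)\partial_k,
\end{align*}
and then use the three-fold symmetry of $\D g_{mij}$ to recognize $\sum_{i,j}\Gamma^k_{ij}z^i v^j = \tfrac{1}{2}(g^{-1}\D g(z)v)^k = \tfrac{1}{2}(g^{-1}\D g(v)z)^k$. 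This rearranges to
\begin{align*}
v' \;=\; \nabla_z v \;-\; \tfrac{1}{2}\,g^{-1}\D g(v)\,z.
\end{align*}

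The remaining step is to bound $\|g^{-1}\D g(v)z\|_g$. I would invoke the first-order infinity-norm self-concordance from Lemma~\ref{lem:gfirstderivativebound} (equivalently, the first line of Lemma~\ref{lem:thirdorderself}), which gives $-\|s_v\|_\infty\,g \preccurlyeq \D g(v) \preccurlyeq \|s_v\|_\infty\,g$. Setting $A \triangleq g^{-1/2}\D g(v)g^{-1/2}$ this reads $-\|s_v\|_\infty I \preccurlyeq A \preccurlyeq \|s_v\|_\infty I$, so $A^2 \preccurlyeq \|s_v\|_\infty^2 I$, i.e.\ $\D g(v)\,g^{-1}\,\D g(v) \preccurlyeq \|s_v\|_\infty^2\,g$. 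Therefore
\begin{align*}
\|g^{-1}\D g(v)\,z\|_g^2 \;=\; z^{\top}\D g(v)\,g^{-1}\,\D g(v)\,z \;\leq\; \|s_v\|_\infty^2\,\|z\|_g^2.
\end{align*}
Plugging this into the triangle inequality applied to the decomposition above yields $\|v'\|_g \leq \|\nabla_z v\|_g + \tfrac{1}{2}\|s_v\|_\infty\|z\|_g \leq \|\nabla_z v\|_g + \|s_v\|_\infty\|z\|_g$, which is the claim.

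I do not expect any real obstacle: the argument is a one-line manipulation of the relation $\nabla = \D + \Gamma$ on a Hessian manifold, plus a single application of first-order self-concordance. The one subtlety worth flagging is the Hessian symmetry that rewrites the Christoffel correction as $\tfrac{1}{2}g^{-1}\D g(v)z$ rather than $\tfrac{1}{2}g^{-1}\D g(z)v$; this is what makes the infinity norm appear on the $v$ side, which is exactly the form needed downstream in Lemma~\ref{lem:vderivative} and Lemma~\ref{lem:parallelinfnorm}.
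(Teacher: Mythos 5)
Your proposal is correct and follows essentially the same route as the paper: the paper simply asserts the identity $\nabla_z(v) = v' + \tfrac{1}{2}g^{-1}\D g(v)z$ and then bounds $\|g^{-1}\D g(v)z\|_g \leq \|s_v\|_\infty\|z\|_g$ exactly as you do, so your Christoffel-symbol derivation (with the Hessian symmetry placing the infinity norm on the $v$ side) just makes explicit what the paper states in one line.
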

\begin{proof}
We have
\begin{align*}
    \nabla_z(v) = v' + \frac{1}{2}g^{-1}\D g(v)z,
\end{align*}
so
\begin{align*}
    \|v'\|_{g} \leq \|\nabla_z(v)\|_{g} + \|g^{-1}\D g(v)z\|_{g} \leq \|\nabla_z(v)\|_{g} + \|s_v\|_\infty \|z\|_{g}.
\end{align*}
\end{proof}

\subsection{Log barrier infinity self-concordance}\label{sec:logbarrierself}

\begin{proof}[Proof of Lemma~\ref{lem:logbarrierself}]
    The log barrier metric is
    \begin{align*}
        g_2 = \nabla^2 \phi_\ell(x) = \A^\top \A.
    \end{align*}
     Its directional derivative is given by
    \begin{align*}
        \D g_2(v) = -2\A^\top \Sxv \A,
    \end{align*}
    which can be bounded as
    \begin{align*}
        -\|s_{x,v}\|_\infty \A^\top\A \preccurlyeq \A^\top \Sxv \A \preccurlyeq \|s_{x,v}\|_\infty \A^\top\A.
    \end{align*}
    Similarly, the second and third directional derivatives of $g_2$ are given by
    \begin{align*}
        &\D^2 g_2(v,z) = 6\A^\top \Sxv \Sxz \A,\\
        &\D^2 g_2(v,z,u) = -24\A^\top \Sxv \Sxz \Sxu \A,
    \end{align*}
    which can be bounded as
    \begin{gather*}
        -\|s_{x,v}\|_\infty\|s_{x,z}\|_\infty \A^\top \A \preccurlyeq \A^\top \Sxv \Sxz \A \preccurlyeq \|s_{x,v}\|_\infty \|s_{x,z}\|_\infty \A^\top \A,\\
        -\|s_{x,v}\|_\infty\|s_{x,z}\|_\infty\|s_{x,u}\|_\infty \A^\top \A \preccurlyeq \A^\top \Sxv \Sxz \Sxu\A  \preccurlyeq \|s_{x,v}\|_\infty \|s_{x,z}\|_\infty\|s_{x,u}\|_\infty \A^\top \A.
    \end{gather*}
    This completes the proof.
\end{proof}

\newpage
\subsection{Other helper Lemmas}
\begin{lemma}\label{lem:momentbound1}
    For vector $v \sim \mathcal N(0,g^{-1})$, we have with high probability
    \begin{align*}
    \mathbb \|s_{x,v}\|_{\infty} v^\top g v = O(n).
    \end{align*}
\end{lemma}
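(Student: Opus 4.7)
The plan is to reduce this to two standard Gaussian concentration facts: concentration of the quadratic form $v^\top g v$ around its mean $n$, and a uniform bound on $\|s_{x,v}\|_\infty$ coming from the fact that every coordinate of $s_{x,v}$ is a mean-zero Gaussian with variance at most $1$.

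First I would observe that because $v\sim\mathcal N(0,g^{-1})$, the vector $g^{1/2}v$ is a standard Gaussian in $\R^n$, so
\[
v^\top g v \;=\; \|g^{1/2}v\|_2^2 \;\sim\; \chi_n^2.
\]
By the standard Laurent--Massart bound for chi-squared random variables,
\[
\Pr\!\Bigl[\,v^\top g v \;\ge\; n + 2\sqrt{n t} + 2t\,\Bigr] \le e^{-t},
\]
so taking $t=\Theta(\log m)$ gives $v^\top g v = O(n)$ with probability at least $1-\operatorname{poly}(m)^{-1}$.

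Next, for the infinity norm, each entry $(s_{x,v})_i = a_i^\top v$ is a centered Gaussian of variance $a_i^\top g^{-1} a_i$, and Lemma~\ref{ginversebound} yields $a_i^\top g^{-1} a_i \le 1$. A union bound over $i\in[m]$ combined with the standard Gaussian tail then shows that for $c=\Theta(\sqrt{\log m})$,
\[
\Pr\!\bigl[\|s_{x,v}\|_\infty \ge c\bigr] \;\le\; 2m\,e^{-c^2/2} \;\le\; \operatorname{poly}(m)^{-1}.
\]
This is essentially the content of Lemma~\ref{lem:infnormbound}; in fact one can cite that lemma directly to conclude $\|s_{x,v}\|_\infty \lesssim 1$ (ignoring polylogarithmic factors, as the paper does throughout its use of $\lesssim$ and $\tilde O$).

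Taking the intersection of the two high-probability events via a union bound gives simultaneously $v^\top g v=O(n)$ and $\|s_{x,v}\|_\infty = \tilde O(1)$, whence $\|s_{x,v}\|_\infty\,v^\top g v = \tilde O(n)$ with probability $1-\operatorname{poly}(m)^{-1}$. There is no real obstacle here: the hard work is already packaged into Lemma~\ref{ginversebound} (which is what makes the variance of $a_i^\top v$ bounded by $1$ and is specific to the hybrid barrier with its $n/m$ log barrier regularizer); once that bound is in hand, everything else is a two-line Gaussian concentration argument, and the only mild subtlety is absorbing the $\sqrt{\log m}$ factor into the $\tilde O$ notation the rest of the paper uses.
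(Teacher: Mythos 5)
Your proof is correct and is essentially a fleshed-out version of the paper's argument, which simply says the claim follows directly from Gaussian moment/tail bounds together with the variance bound $a_i^\top g^{-1}a_i\le 1$ of Lemma~\ref{ginversebound} (the same ingredient packaged in Lemma~\ref{lem:infnormbound}). The chi-squared concentration plus union-bound-over-facets route you take is the intended one, with the $\sqrt{\log m}$ factor absorbed into the paper's $\lesssim$ notation.
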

\begin{proof}
    Directly from Gaussian moment bounds.
\end{proof}

\begin{lemma}\label{lem:lsbarriergradient}
For the $p$-Lewis weights barrier $\phi_p = \log\det \A^\top \W^{1-2/p}\A$, we have
\begin{align*}
    \D\phi_p(x) =  \A^\top w.
\end{align*}    
\end{lemma}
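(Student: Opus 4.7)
The plan is to compute the gradient directly by the chain rule, treating both the slack rescaling $\A = S_x^{-1}\mathrm{A}$ and the Lewis weights $w$ as functions of $x$, and then argue that the dependence through $w$ produces no contribution by exploiting the optimality condition of the Lewis weights. The cleanest way to see this is via the envelope theorem applied to the variational formulation \eqref{eq:lewisweightdefinition}, which says $\phi_p(x)$ (up to sign conventions) is the value of an optimization program in $w$ whose maximizer is $w_x$; therefore, for the purpose of differentiating in $x$, I can treat $w$ as frozen at $w_x$.

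First I would compute the direct $x$-dependence of the matrix $M(x) \triangleq \A^\top \W^{1-2/p}\A$. Since $S_x = \diagtwo{a_i^\top x - b_i}$, we have $\D \A(v) = -\Sxv \A$, and holding $w$ fixed,
\[
\D M(v)\big|_{w\text{ fixed}} = -\A^\top \Sxv \W^{1-2/p}\A - \A^\top \W^{1-2/p}\Sxv \A = -2\A^\top \Sxv \W^{1-2/p}\A,
\]
using that $\Sxv$ commutes with $\W^{1-2/p}$. Applying the standard formula $\D \log\det M(v) = \tr(M^{-1}\D M(v))$ and cycling the trace,
\[
\D \phi_p(v)\big|_{w\text{ fixed}} = -2\tr\!\big(\W^{1/2-1/p}\A M^{-1}\A^\top \W^{1/2-1/p}\,\Sxv\big) = -2\tr(\Px \Sxv).
\]
Now the Lewis weight optimality condition gives the key identity $(\Px)_{ii} = w_i$, so $\tr(\Px \Sxv) = \sum_i w_i (s_{x,v})_i = w^\top \A v = v^\top \A^\top w$. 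Hence the direct contribution is a constant multiple of $(\A^\top w)^\top v$, as claimed.

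The remaining step is to verify that the indirect dependence through $w = w_x$ contributes nothing. Taking derivatives through $\W$ produces an extra term proportional to $\tr(M^{-1}\A^\top \W^{-2/p}\D\W(v)\A)$, which rewrites as $\tr\!\big(\W^{-1/2-1/p}\Px\W^{-1/2+1/p}\D\W(v)\big)$. Because $\D\W(v)$ is diagonal, only the diagonal entries of the surrounding matrix contribute, and the Lewis weight condition forces those diagonal entries to be $w_i^{-1/2-1/p}\cdot w_i\cdot w_i^{-1/2+1/p} = 1$. Thus the indirect term equals $\indic^\top \D w(v)$, which vanishes because $\indic^\top w_x = \tr(\Px) = n$ is constant in $x$. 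Equivalently, this is the envelope theorem: first-order perturbations of $w$ about the optimum yield no first-order change in $\phi_p$.

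The main (minor) obstacle is the bookkeeping around the Lewis weights identity $(\Px)_{ii} = w_i$ and the constant-sum property $\indic^\top w = n$; both are standard consequences of the optimality condition in \eqref{eq:lewisweightdefinition} and are already used throughout Appendix~\ref{sec:derivativestability}, so I would simply invoke them. Up to the scalar normalization between the two equivalent definitions of $\phi_p$ appearing in \eqref{eq:lewisbarrier} and \eqref{eq:lewisweightdefinition}, this gives $\D\phi_p(x) = \A^\top w$.
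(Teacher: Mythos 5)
Your proposal is correct, and it is essentially the argument the paper itself uses for the duplicate statement in the appendix (Lemma~\ref{lem:gradientoflewisbarrier}): differentiate in $x$ with $w$ frozen and kill the indirect term by the optimality of $w_x$ in the variational formulation~\eqref{eq:lewisweightdefinition}. For the lemma as asked, the paper's ``proof'' is only a citation to~\cite{lee2019solving}, so you are not diverging from the paper so much as supplying the details it omits: you actually evaluate the direct term, using the fixed-point identity $(\Px)_{ii}=w_i$ to get $\tr(\Px\Sxv)=w^\top\A v$, and you verify the indirect contribution is $\propto \indic^\top \D w(v)=0$ because $\indic^\top w_x=\tr(\Px)=n$ is constant --- a step the paper's appendix proof glosses over by invoking optimality alone. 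That extra care is a genuine improvement in completeness.

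One point worth stating explicitly rather than hiding behind ``up to scalar normalization'': with $\phi_p=\log\det(\A^\top\W^{1-2/p}\A)$ and slacks $s_i=a_i^\top x-b_i$, your computation gives $\D\phi_p(x)=-2\A^\top w$, while the variational definition~\eqref{eq:lewisweightdefinition} gives $+2\A^\top w$; neither is literally $\A^\top w$. This is an inconsistency in the paper's own sign conventions (the two definitions of $\phi_p$ in~\eqref{eq:lewisbarrier} and~\eqref{eq:lewisweightdefinition} differ by a sign, and constants are ignored throughout), and it is harmless where the lemma is used --- e.g.\ Lemma~\ref{lem:philemma} only needs $\|\D\phi_1\|_{g^{-1}}\lesssim \alpha\alpha_0\sqrt{w_x^\top\A g^{-1}\A^\top w_x}$ --- but if you present this as a standalone proof you should record the sign and the factor $2$ rather than waving them off.
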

\begin{proof}
    Proof is done in~\cite{lee2019solving}.
\end{proof}

\begin{lemma}\label{lem:exponentiallownerinequality}
    For any positive integer $n$, vector $v$, and matrix $\tilde g \preccurlyeq g$ we have
    \begin{align*}
    g^{1/2}(g^{-1/2}\tilde gg^{-1/2})^{n}g^{1/2} \preccurlyeq g.    
    \end{align*}
    \begin{proof}
        Directly from the fact that if $A \preccurlyeq B$, then for any matrix $C$ we have $C^\top A C \preccurlyeq C^\top B C$.
    \end{proof}
\end{lemma}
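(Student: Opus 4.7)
The plan is to reduce the statement to a purely spectral inequality. Conjugation by the invertible matrix $g^{1/2}$ is order-preserving (this is precisely the $C^\top A C \preccurlyeq C^\top B C$ fact for $C = g^{\pm 1/2}$), so setting $M \triangleq g^{-1/2}\tilde g g^{-1/2}$ the target inequality $g^{1/2} M^n g^{1/2} \preccurlyeq g$ is equivalent to $M^n \preccurlyeq I$, and the hypothesis $\tilde g \preccurlyeq g$ becomes $M \preccurlyeq I$. So the whole content is: for a symmetric matrix $M$ with $M \preccurlyeq I$, the $n$-th power satisfies $M^n \preccurlyeq I$.

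Before writing anything, I would pause to check that the hypothesis as literally stated is enough. It is not: for $\tilde g = -2g$ we have $\tilde g \preccurlyeq g$ but $M = -2I$ gives $M^2 = 4I \not\preccurlyeq I$. Looking at how the lemma is invoked (it is applied to matrices like $g^{-1/2}\D g(v)g^{-1/2}$ scaled by $\|s_v\|_\infty$, or to truncations of the metric), the intended hypothesis is the two-sided one $0 \preccurlyeq \tilde g \preccurlyeq g$, equivalently $0 \preccurlyeq M \preccurlyeq I$. I would state this explicitly and proceed under that assumption.

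Once $M$ is symmetric PSD with spectrum in $[0,1]$, the cleanest argument is spectral: diagonalize $M = U\Lambda U^\top$ with $\Lambda = \diagtwo{\lambda_1,\dots,\lambda_n}$ and $\lambda_i \in [0,1]$; then $M^n = U\Lambda^n U^\top$ with eigenvalues $\lambda_i^n \in [0,1]$, so $M^n \preccurlyeq I$. If one prefers a proof that uses only the monotonicity hint (so as to stay purely in the Löwner calculus), it goes by induction on $n$: the base case $n=1$ is the hypothesis, and for the inductive step take $C = M^{1/2}$ (well-defined since $M \succcurlyeq 0$), $A = M^{n-1}$, $B = I$ in the monotonicity fact, giving
\[
M^n \;=\; M^{1/2} M^{n-1} M^{1/2} \;\preccurlyeq\; M^{1/2} I M^{1/2} \;=\; M \;\preccurlyeq\; I.
\]
Conjugating back by $g^{1/2}$ finishes the proof.

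There is no technical obstacle here — the only substantive step is recognizing that the statement is conjugation-invariant and therefore reduces to a one-line spectral fact about powers of a PSD contraction. The one genuine point of care, which I would flag in the write-up, is the implicit sign hypothesis on $\tilde g$ noted above.
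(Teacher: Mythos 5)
Your proof is correct, and its L\"{o}wner-calculus variant is essentially what the paper's one-line proof gestures at: conjugating by $g^{\pm 1/2}$ reduces the claim to $M^{n}\preccurlyeq I$ for $M=g^{-1/2}\tilde g g^{-1/2}$, and the cited fact $A\preccurlyeq B\Rightarrow C^\top AC\preccurlyeq C^\top BC$ is exactly what drives your induction; your spectral argument is an equivalent, even more elementary route. The genuinely valuable part of your write-up is the point the paper glosses over: with only the one-sided hypothesis $\tilde g\preccurlyeq g$ the statement is false (your $\tilde g=-2g$, $M^2=4I$ example), and the conjugation fact alone cannot produce $M^{n}\preccurlyeq I$ for even $n$, so the lemma as stated has a gap that some lower bound on $\tilde g$ must fill. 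One refinement to your fix: in the places the lemma is actually invoked (e.g.\ $\tilde g=\D g(v)$, which satisfies $-\|s_{x,v}\|_\infty g\preccurlyeq \D g(v)\preccurlyeq \|s_{x,v}\|_\infty g$ but is not PSD), the hypothesis that matches the applications is the two-sided one $-g\preccurlyeq\tilde g\preccurlyeq g$, i.e.\ $\|M\|_{op}\leq 1$, rather than $0\preccurlyeq\tilde g\preccurlyeq g$. Your spectral proof covers this case verbatim (the eigenvalues of $M$ lie in $[-1,1]$, hence so do those of $M^{n}$), whereas the induction through $M^{1/2}$ requires $M\succcurlyeq 0$ and would have to be rerun on $M^{2}$; so I would state the lemma with the two-sided hypothesis and keep the diagonalization argument as the proof.
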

\begin{lemma}\label{lem:operatornormbound1}
    For operator $g^{-1}\D g(v)g^{-1}\D g(v)$, we have
    $\|g^{-1}\D g(v)g^{-1}\D g(v)\ell\|_g \leq \|s_{x,v}\|_\infty^2\|\ell\|_g$.
\end{lemma}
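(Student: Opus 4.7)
The plan is to reduce the claim to a single clean operator-norm computation using the first-order infinity-norm self-concordance of Lemma~\ref{lem:gfirstderivativebound}, which gives
\[
-\|s_{x,v}\|_\infty\,g \preccurlyeq Dg(v) \preccurlyeq \|s_{x,v}\|_\infty\,g.
\]
Equivalently, defining the conjugated matrix $A := g^{-1/2}Dg(v)g^{-1/2}$ (which is symmetric since $Dg(v)$ is a symmetric form on Hessian manifolds), this says $\|A\|_{op}\leq \|s_{x,v}\|_\infty$.

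The steps are then straightforward. First I would rewrite $Dg(v)=g^{1/2}Ag^{1/2}$ and compute
\[
\|g^{-1}Dg(v)g^{-1}Dg(v)\ell\|_g^2 \;=\; \ell^{\top}\,Dg(v)\,g^{-1}\,Dg(v)\,g^{-1}\,Dg(v)\,g^{-1}\,Dg(v)\,\ell \;=\; (g^{1/2}\ell)^{\top} A^{4}\,(g^{1/2}\ell).
\]
Next, applying the operator-norm bound on $A$, the right side is at most $\|A\|_{op}^{4}\|g^{1/2}\ell\|_2^{2}\leq \|s_{x,v}\|_\infty^{4}\,\|\ell\|_g^{2}$. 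Taking a square root yields the claimed bound $\|g^{-1}Dg(v)g^{-1}Dg(v)\ell\|_g\leq \|s_{x,v}\|_\infty^{2}\|\ell\|_g$.

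There is essentially no obstacle here; the only subtlety is to make the rewriting $Dg(v)=g^{1/2}Ag^{1/2}$ rigorous as a map of symmetric bilinear forms (so that the intermediate $g^{-1}$ factors collapse cleanly into powers of $A$), which is immediate on a Hessian manifold where $Dg(v)$ is symmetric. Thus the entire proof is a two-line consequence of Lemma~\ref{lem:gfirstderivativebound} together with the elementary fact that $\|A^k\|_{op}\leq \|A\|_{op}^k$ for symmetric $A$.
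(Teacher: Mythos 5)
Your proposal is correct and is essentially the paper's own argument: the paper likewise expands $\|g^{-1}\D g(v)g^{-1}\D g(v)\ell\|_g^2$ as $\ell^\top \D g(v)g^{-1}\D g(v)g^{-1}\D g(v)g^{-1}\D g(v)\ell$ and collapses it using the first-order infinity-norm self-concordance $-\|s_{x,v}\|_\infty g \preccurlyeq \D g(v) \preccurlyeq \|s_{x,v}\|_\infty g$ (the conjugation step you make explicit via $A=g^{-1/2}\D g(v)g^{-1/2}$ is exactly what the paper's Lemma~\ref{lem:exponentiallownerinequality} encodes). The only cosmetic point is that, as with the paper, the inequality holds up to the constant/log factors hidden in $\preccurlyeq$ and $\lesssim$.
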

\begin{proof}
We have
    \begin{align}
    \|g^{-1}\D g(v)g^{-1}\D g(v)\ell\|_g 
    &\leq\sqrt{\ell^\top \D g(v)g^{-1}\D g(v)g^{-1}\D g(v)g^{-1}\D g(v)\ell} \\
    &\leq \|s_{x,v}\|_\infty^2 \sqrt{\ell^\top g \ell} \lesssim \|s_{x,v}\|_\infty^2 \|\ell\|_g.
\end{align}
\end{proof}

\begin{lemma}\label{lem:factoredlemma}
    For vector field $w$ on manifold $\mathcal M$, we have
   \begin{align*}
         \tr(g^{-1}\D g(w)) \lesssim \|w\|_g.
    \end{align*}
\end{lemma}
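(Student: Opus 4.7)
The plan is to reduce this trace inequality to the bound on $\|\xi\|_g$ already established in Lemma~\ref{lem:boundonxi}. The key observation is that $\tr(g^{-1}\D g(w))$ can be viewed as a linear functional in $w$: writing $\D g(w) = \sum_i w_i \D_i g$, we have
\begin{align*}
\tr(g^{-1}\D g(w)) = \sum_i w_i \tr(g^{-1}\D_i g) = \langle w, \tr(g^{-1}\D g) \rangle,
\end{align*}
where $\tr(g^{-1}\D g)$ denotes the vector whose $i$th entry is $\tr(g^{-1}\D_i g)$. Rewriting this inner product in the metric $g$ via $\xi = g^{-1}\tr(g^{-1}\D g)$ gives $\tr(g^{-1}\D g(w)) = \langle w, \xi \rangle_g$.

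From here I would apply Cauchy--Schwarz in the metric $g$ to obtain $|\tr(g^{-1}\D g(w))| \leq \|w\|_g \|\xi\|_g$, and then invoke Lemma~\ref{lem:boundonxi}, which already controls $\|\xi\|_g \lesssim \sqrt n$ (the bound is derived there by writing $\|\xi\|_g^2$ as $\mathbb E_{v,v' \sim \mathcal N(0,g^{-1})} v^\top \D g(v) g^{-1} \D g(v') v'$, applying Cauchy--Schwarz to decouple $v$ and $v'$, and finally using first-order $\|.\|_{x,\infty}$ self-concordance together with the moment bound $\mathbb E_v \|s_{x,v}\|_\infty^2 \|v\|_g^2 \lesssim n$). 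Combining these yields $|\tr(g^{-1}\D g(w))| \lesssim \sqrt n \|w\|_g$, matching the way the lemma is used in Equation~\eqref{eq:avalii} (so the $\lesssim$ in the statement is implicitly absorbing the $\sqrt n$ factor, as evident from the usage).

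There is essentially no new obstacle here: the heavy lifting was done in Lemma~\ref{lem:boundonxi}, where the stability of $Dg$ in the $\|.\|_{x,\infty}$ norm was already used to tame $\|\xi\|_g$. The only step one should be careful about is the symmetry of $\D g$ as a 3-tensor on a Hessian manifold, which justifies interchanging the slot acted on by $w$ with the slot being traced; this is what makes $\xi$ the natural dual object to $w$. Thus the proof reduces to two lines: identify $\tr(g^{-1}\D g(w))$ with $\langle w,\xi\rangle_g$, and apply Cauchy--Schwarz with Lemma~\ref{lem:boundonxi}.
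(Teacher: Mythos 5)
Your proof is correct, and no circularity is introduced: Lemma~\ref{lem:boundonxi} is proved independently of Lemma~\ref{lem:factoredlemma}, so factoring the argument through $\xi$ is legitimate, and you correctly observe that both routes actually yield $\sqrt n\,\|w\|_g$ (the $\sqrt n$ is what the lemma is invoked with in~\eqref{eq:avalii}, even though the displayed statement omits it). Your route differs from the paper's in presentation: the paper proves the bound directly and self-containedly by writing $\tr(g^{-1}\D g(w)) = \mathbb E_{v'\sim\mathcal N(0,g^{-1})}\, {v'}^\top \D g(w)v'$, using the Hessian-manifold symmetry of the third derivative to rewrite this as ${v'}^\top \D g(v')w$, and then applying first-order $\|\cdot\|_{x,\infty}$ self-concordance together with the moment bound of Lemma~\ref{lem:momentbound1}; you instead note the purely linear identity $\tr(g^{-1}\D g(w)) = \langle w,\xi\rangle_g$ and apply Cauchy--Schwarz with $\|\xi\|_g\lesssim\sqrt n$. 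One small correction: no symmetry of $\D g$ is needed for your identification, since $w$ stays in the derivative slot throughout --- the symmetry is only used in the paper's direct argument (and inside the proof of Lemma~\ref{lem:boundonxi}, which you inherit). What each approach buys: yours is shorter and reuses existing machinery, at the cost of making this helper lemma depend on a lemma proved later in the main text; the paper's version keeps the helper self-contained (only Lemma~\ref{lem:momentbound1} and first-order self-concordance), which is why it can sit in the appendix and be cited freely, including in the proof of the bound on $\|\xi\|_g$-type quantities elsewhere. Substantively the two arguments are the same computation, since the proof of Lemma~\ref{lem:boundonxi} is exactly the Gaussian-expectation-plus-self-concordance estimate that the paper applies here directly.
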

\begin{proof}
We have
    \begin{align*}
    \tr(g^{-1}\D g(w)) 
    &=\mathbb E_{v'\sim \mathcal N(0,g^{-1})} {v'}^\top \D g(w)v'\\
    &=\mathbb E_{v'\sim \mathcal N(0,g^{-1})} {v'}^\top \D g(v')w\\
    &\leq \mathbb E_{v'}\|v'\|_\infty \sqrt{{v'}^\top g v'} \sqrt{{w}^\top g w}\\
    &\lesssim \sqrt n\|w\|_g,
\end{align*}
where in the last line we used Lemma~\ref{lem:momentbound1}.
\end{proof}

\begin{lemma}\label{lem:factoredlemma2}
    For arbitrary vector field $w$ on $\mathcal M$ we have
    \begin{align*}
        \big|\tr(g^{-1}\D g(z, w))\big| \leq \sqrt n\|w\|_g\|z\|_g.
    \end{align*}
\end{lemma}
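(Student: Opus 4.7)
The plan is to reduce the trace to a Gaussian expectation and then exploit the full symmetry of higher derivatives on a Hessian manifold. Concretely, I would first rewrite the inverse metric as the covariance of a Gaussian in the tangent space, i.e. $g^{-1} = \mathbb{E}_{v' \sim \mathcal{N}(0,g^{-1})}[v' v'^{\top}]$, which turns the trace into
\[
\tr(g^{-1}\D g(z,w)) = \mathbb{E}_{v'}\bigl[v'^{\top}\D^2 g(z,w)\,v'\bigr].
\]
This is the same bookkeeping trick used repeatedly in Section~\ref{sec:R2}, e.g.\ in the bound for $\|\xi\|_g$ and in Lemma~\ref{lem:factoredlemma}.

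Next, since we are on a Hessian manifold ($g_{ij} = \partial_i \partial_j \phi$), the four-tensor $\D^2 g$ corresponds to $\partial_i\partial_j\partial_k\partial_l \phi$, which is totally symmetric in all four indices. Therefore I can freely permute the four slots and rewrite the expectation as
\[
\mathbb{E}_{v'}\bigl[z^{\top}\D^2 g(v',v')\,w\bigr].
\]
Now a single Cauchy--Schwarz in the $g$-inner product peels off the factor $z$:
\[
\bigl|z^{\top}\D^2 g(v',v')\,w\bigr| \leq \|z\|_g \cdot \bigl\|\D^2 g(v',v')\,w\bigr\|_{g^{-1}}.
\]

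The main step is to control $\|\D^2 g(v',v')\,w\|_{g^{-1}}$ using the infinity-norm second-order self-concordance already established for the hybrid barrier in Lemma~\ref{lem:thirdorderself} (or Lemma~\ref{lem:gsecondderivativebound} for the Lewis-weight part combined with Lemma~\ref{lem:logbarrierself} for the log-barrier part). That estimate gives
\[
-\|s_{v'}\|_\infty^{2}\,g \;\preccurlyeq\; \D^2 g(v',v') \;\preccurlyeq\; \|s_{v'}\|_\infty^{2}\,g,
\]
so $\|g^{-1/2}\D^2 g(v',v')g^{-1/2}\|_{\mathrm{op}} \lesssim \|s_{v'}\|_\infty^{2}$, and hence $\|\D^2 g(v',v')\,w\|_{g^{-1}} \lesssim \|s_{v'}\|_\infty^{2}\,\|w\|_g$.

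Putting the pieces together and taking expectation, I invoke Lemma~\ref{lem:infnormbound} (or equivalently the standard subgaussian tail bound for the coordinates of $s_{v'}$), which gives $\mathbb{E}_{v'}[\|s_{v'}\|_\infty^{2}] = \tilde O(1)$. This yields
\[
\bigl|\tr(g^{-1}\D g(z,w))\bigr| \lesssim \|z\|_g\,\|w\|_g,
\]
which is in fact stronger than the claimed bound $\sqrt{n}\,\|z\|_g\,\|w\|_g$, and so the lemma follows. I do not expect a serious obstacle here: the only place where one needs to be slightly careful is to check that the Hessian-manifold symmetry really does let us swap the $v'$-pair with the $(z,w)$-pair in $\D^2 g$, but this is immediate from $g = \nabla^2 \phi$, so no non-trivial differential-geometric identities are required.
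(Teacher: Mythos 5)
Your proposal is correct, and it uses the same basic machinery as the paper (writing $g^{-1}=\mathbb{E}_{v'}[v'v'^{\top}]$ to turn the trace into a Gaussian quadratic form, exploiting the total symmetry of $\D^2 g = D^4\phi$ on a Hessian manifold, and then invoking the infinity-norm self-concordance estimates), but it executes the symmetry step differently and gets a stronger conclusion. The paper swaps only one slot, writing $v'^{\top}\D g(z,w)v' = v'^{\top}\D g(z,v')w$, and then bounds this by $\|s_{x,z}\|_\infty\|s_{x,v'}\|_\infty\|v'\|_g\|w\|_g$; after taking the Gaussian expectation via Lemma~\ref{lem:momentbound1} it pays $\mathbb{E}\big[\|s_{x,v'}\|_\infty\|v'\|_g\big]\lesssim\sqrt n$ and finishes with $\|z\|_{x,\infty}\le\|z\|_g$ (Lemma~\ref{lem:infwithgnorm}), which is exactly where the $\sqrt n$ in the statement comes from. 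You instead move both copies of $v'$ into the derivative slots, pairing $(v',v')$ against $(z,w)$, so the only random quantity left is $\|s_{x,v'}\|_\infty^2$, whose expectation is $\tilde O(1)$ since each coordinate of $s_{x,v'}$ has variance $a_i^{\top}g^{-1}a_i\le 1$ by Lemma~\ref{ginversebound}; this yields $\big|\tr(g^{-1}\D g(z,w))\big|\lesssim\|z\|_g\|w\|_g$ up to logarithmic factors, which indeed implies the stated bound. The only mild bookkeeping point is that Lemma~\ref{lem:infnormbound} is stated as a high-probability bound rather than a moment bound, so you should integrate the Gaussian tail (or cite the max-of-$m$-Gaussians moment bound directly) to justify $\mathbb{E}\big[\|s_{x,v'}\|_\infty^2\big]=\tilde O(1)$; with that noted, your argument is complete and in fact sharper than what the lemma asserts.
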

\begin{proof}
    We can write
    \begin{align*}
    &\big|\tr(g^{-1}\D g(z, w))\big|\\
    & =  \mathbb E_{v' \sim \mathcal N(0,g^{-1})} {v'}^\top \D g(z,w)v' \\
    & =
     \mathbb E_{v' \sim \mathcal N(0,g^{-1})} {v'}^\top \D g(z, v')w\\
    & \leq \mathbb E_{v' \sim \mathcal N(0,g^{-1})} \|z\|_\infty \|v'\|_\infty \sqrt{{v'}^\top g v'} \sqrt{{w}^\top g w}\\ 
    & \leq \|w\|_g \sqrt n\|z\|_\infty\\
     & \leq \|w\|_g \sqrt n\|z\|_g.
\end{align*}
\begin{lemma}\label{lem:factoredlemma3}
    For vector field $w$ we have
    \begin{align*}
        |\tr(g^{-1}\D g(z)g^{-1}\D g(w))| \leq \sqrt n \|z\|_g\|w\|_g.
    \end{align*}
\end{lemma}
\begin{align*}
    |\tr(g^{-1}\D g(z)g^{-1}\D g(w))| &=
    |\mathbb E_{v' \sim \mathcal N(0,g^{-1})} v'^\top \D g(z)g^{-1}\D g(w) v'|\\
    &=\mathbb E_{v'} |v'^\top \D g(z)g^{-1}\D g(v') w|\\
    &\leq\mathbb E_{v'} \sqrt{v'^\top \D g(z)g^{-1}\D g(v')g^{-1}\D g(v')g^{-1}\D g(z)v'}\|w\|_g\\
    &=\mathbb E_{v'} \sqrt{v'^\top \D g(z)g^{-1/2}(g^{-1/2}\D g(v')g^{-1/2})^2g^{-1/2}\D g(z)v'}\|w\|_g.
\end{align*}
But note that
\begin{align*}
    g^{-1/2}\D g(v')g^{-1/2} \leq \|v'\|_\infty I.
\end{align*}
Hence
\begin{align*}
    |\tr(g^{-1}\D g(z)g^{-1}\D g(w))| &\leq \mathbb E_{v'} \sqrt{v'^\top g^{1/2}(g^{-1/2}\D g(z)g^{-1/2})^2 g^{1/2}v'}\|w\|_g\\
    &\leq \mathbb E_{v'}\|s_z\|_\infty\|v'\|_g \|w\|_g \\
    &\leq \mathbb E_{v'}\|z\|_g\|v'\|_g \|w\|_g \\
    &\leq \sqrt n \|z\|_g\|w\|_g.
\end{align*}

where we used Lemma~\ref{lem:momentbound1} and Lemma~\ref{lem:infwithgnorm}.
\end{proof}

\newpage
\section{Remaining Proofs}\label{sec:remainingproofs}
\subsection{Proof of Theorem~\ref{prop:conductance}}\label{app:sconductance}

Consider a subset $S \subseteq \mathcal S$ with $0.5 \geq \pi(S) = s' \geq s \geq 2\rho$. Then, to show a lower bound for $s$-conductance, we need to lower bound
\begin{align*}
    P(S,S^c)/P(S),
\end{align*}
where $P(.,.) = \int_{x \in S}\mathcal T_x(S^c)\pi(x)dx$ is the probability that we are in set $S$ and the next step of the Markov chain we escape $S$ and $P$ is the probability measure corresponding to $\pi$. Recall that $\mathcal T_x(.)$ is the Markov kernel, specifying the distribution of the next step given we are at point $x$. Now assume that the conductance bound does not hold, i.e. there exists such $S$ with 
\begin{align*}
    P(S,S^c)/P(S) = O(\Delta \psi_M).
\end{align*}
Note that because the chain is reversible, we have
\begin{align*}
    P(S,S^c) = P(S^c,S),
\end{align*}
and because $\pi(S) \leq 0.5$, we have
\begin{align}
    P(S^c,S)/P(S^c) \leq P(S,S^c)/P(S) = O(\Delta \psi_M).\label{eq:contradictassumption}
\end{align}
Next, define the set $\tilde S \subseteq S$ to be the points $x$ from which our chance of escaping $S$ is at least $0.01$. Now if $\pi(S) \geq \Delta \psi_{\mathcal M}\pi(S)/2$, then given that we are in $S$, we have at least $\Delta \psi_{\mathcal M}$ chance of escaping $S$ which contradicts~\eqref{eq:contradictassumption}. This means
\begin{align}
    \pi(\tilde S) \leq \Delta \psi_{\mathcal M}/2. \pi(S).\label{eq:stmp1}
\end{align}
On the other hand, note that for point $x_1$ with $d(x_1,x_0) \leq \Delta$ for $x_0 \in S - \tilde S$, we have
\begin{align}
    TV(P_{x_0}, P_{x_1})\leq 0.9,\label{eq:TVbound1}
\end{align}
which means $x_1$ cannot be in $S - \tilde S$, hence it should be in $S^c$. Therefore, defining the set $S^{+\Delta}$ as the set of points outside $\tilde S$ which are $\Delta$ close to a point in $S - \tilde S - \mathcal M'^c$, we have
\begin{align}
    S^{+\Delta}\subseteq S^c \cup \mathcal M'^c.\label{eq:subsetdelta}
\end{align}
On the other hand, from isoperimetry (because $\pi(S) \leq \frac{1}{2}$) and the fact that $\Delta \psi_{\mathcal M} \leq 1/2$ we have
\begin{align*}
    \pi(S^{+\Delta}) \geq \Delta \psi_{\mathcal M} (\pi(S - \tilde S) - \pi(\mathcal M'^c))\geq \Delta \psi_{\mathcal M}(\pi(S)/2 -\rho) \geq \Delta \psi_{\mathcal M}(\pi(S)/4).
\end{align*}
Therefore, from the assumption $s \geq \rho/(8\Delta \psi_{\mathcal M})$:
\begin{align*}
    \pi(S^{+\Delta} - \mathcal M'^c) \geq \Delta \psi_{\mathcal M}(\pi(S)/4 - \pi(S)/8) \geq \Delta \psi_{\mathcal M}\pi(S)/8,
\end{align*}
which implies from Equations~\eqref{eq:TVbound1} and~\eqref{eq:subsetdelta}:
\begin{align*}
    P(S,S^c) \geq P(S, S^{+\Delta} - \mathcal M'^c) = P(S^{+\Delta} - \mathcal M'^c, S) \geq \Delta \psi_{\mathcal M}(\pi(S)/8) \times 0.99 \geq \Delta \psi_{\mathcal M} \psi(S)/16,
\end{align*}
which proves that the conducance is lower bounded by $\Omega(\Delta \psi_{\mathcal M})$.

\subsection{Properties of Lewis weights}
In this section, we recall some properties of Lewis weights which we use in the proof.

\begin{lemma}[Fixed point property of Lewis weights]
The Lewis weights of the matrix $A_x$ is the unique vector $w$ in $\R^m_{\geq 0}$ with $W = \diag{w}$ such that
\begin{align*}
    \sigma(W^{1/2-1/p}\A) = W,
\end{align*}
where $\sigma(.)$ denotes the leverage scores of the matrix.
\end{lemma}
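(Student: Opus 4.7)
The plan is to derive the fixed-point identity directly from the first-order optimality condition of the concave program that defines the Lewis weights in~\eqref{eq:lewisweightsdef}, namely the maximization of
\[
F(w) \;=\; -\log\det\!\big(\A^\top W^{1-2/p}\A\big) + (1-2/p)\,\indic^\top w
\]
over $w\in \R^m_{\ge 0}$. I would first compute $\partial F/\partial w_i$ using the standard matrix-calculus identity $\partial_M \log\det M = M^{-1}$ together with the chain rule through the power $W^{1-2/p}$. Since $\partial W^{1-2/p}/\partial w_i = (1-2/p)\, w_i^{-2/p}\, e_i e_i^\top$, one gets
\[
\frac{\partial F}{\partial w_i} \;=\; -(1-2/p)\, w_i^{-2/p}\, a_i^\top(\A^\top W^{1-2/p}\A)^{-1} a_i \;+\; (1-2/p),
\]
where $a_i$ is the $i$-th row of $\A$. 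Setting this to zero at the maximizer and dividing through by the common factor $1-2/p$ (nonzero for $p\ne 2$) yields the KKT identity
\[
a_i^\top(\A^\top W^{1-2/p}\A)^{-1} a_i \;=\; w_i^{2/p} \qquad (i=1,\dots,m).
\]

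The second step translates this into the claimed leverage-score equation. By Definition~\ref{def:projecrtion}, the $i$-th leverage score of $W^{1/2-1/p}\A$ is
\[
\sigma\!\big(W^{1/2-1/p}\A\big)_i \;=\; (\Px)_{ii} \;=\; w_i^{1-2/p}\, a_i^\top(\A^\top W^{1-2/p}\A)^{-1} a_i.
\]
Substituting the KKT identity gives $\sigma(W^{1/2-1/p}\A)_i = w_i^{1-2/p}\cdot w_i^{2/p} = w_i$, which is precisely the fixed-point equation $\sigma(W^{1/2-1/p}\A)=W$.

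The only nonroutine ingredient is \emph{uniqueness}, which I would argue by the standard reparameterization $u_i = w_i^{1-2/p}$: under this substitution, $-\log\det(\A^\top \mathrm{diag}(u)\A)$ is strictly convex in $u$ on its domain (a textbook property of the $\log\det$ barrier), while the linear term $(1-2/p)\indic^\top w = (1-2/p)\sum_i u_i^{p/(p-2)}$ is strictly convex in $u$ for the range of $p$ we care about ($p<4$, equivalently $p/(p-2)>1$ when $p>2$, with the analogous check in the other regime). Hence the negative objective is strictly convex in $u$, its minimizer is unique, and pulling back through the monotone map $w\mapsto u$ gives uniqueness of $w$. Alternatively, one can invoke the classical contraction-iteration argument of Cohen--Peng, which shows that the map $w\mapsto \sigma(W^{1/2-1/p}\A)$ has a unique positive fixed point for $0<p\ne 2$. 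The main (mild) obstacle is this uniqueness step; the first-order calculation and the leverage-score rewriting are both direct.
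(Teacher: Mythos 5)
Your proposal follows essentially the same route as the paper: differentiate the optimization objective defining the Lewis weights, set the gradient to zero, and rewrite the resulting stationarity condition $a_i^\top(\A^\top W^{1-2/p}\A)^{-1}a_i = w_i^{2/p}$ as $\sigma_i(W^{1/2-1/p}\A) = w_i$; your computation is in fact more explicit than the paper's one-line sketch (which even contains a small typo in the stated first-order condition). You additionally sketch a uniqueness argument via the change of variables $u_i = w_i^{1-2/p}$ and strict convexity, a point the paper asserts in the statement but never proves.
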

\begin{proof}
Recall the definition of Lewis weights as the optimum of the objective in Equation~\eqref{eq:lewisweightdefinition}. 
Taking derivative with respect to $W$, we get
\begin{align*}
    -(1-2/p)\sigma/w + (1-2/p)\indic^\top w = 0, 
\end{align*}
where $\sigma \triangleq(W^{1/2 - 1/p}A_x)$ is the vector of leverage scores defined as
\begin{align*}
    \sigma = \diagtwo{W^{1/2-1/p}\A (\A^\top W^{1-2/p}\A)^{-1} {A_x}^\top W^{1/2-1/p}}.
\end{align*}
\end{proof}

\begin{proof}[Proof of Lemma~\ref{lem:metricsecondform}]
    The first form of the Lewis weight metric $g_1$ directly follows from Equation 5.5 in Lemma 31. in~\cite{lee2019solving}. To see why the second form in Equation~\eqref{eq:metricotherform} holds, note that
    \begin{align*}
        \G = \frac{2}{p}\W + (1-\frac{2}{p})\Ptwo.
    \end{align*}
    Hence
    \begin{align*}
        \G - \frac{2}{p}\Lambdax = \Ptwo,
    \end{align*}
    which implies
    \begin{align}
        \Lambdax = \frac{p}{2}(\G - \Ptwo).\label{eq:LambdatoGP2}
    \end{align}
    Plugging Equation~\eqref{eq:LambdatoGP2} into the first form in Equation~\eqref{lem:initialform} completes the proof.
\end{proof}

\begin{proof}[Proof of Lemma~\ref{lem:metricsecondform}]\label{app:metricform}
The first formulation follows from~\cite{lee2014path}. To show the second formulation, recall the definition of $\Lambdax$:
\begin{align*}
    \Lambdax = \frac{p}{2}(\G - \Ptwo),
\end{align*}
Plugging the above into the first formulation results in the second formulation.
\end{proof}

\begin{proof}[Proof of Lemma~\ref{lem:lewisweightbarrier}]
    Directly from Lemma 31 in~\cite{lee2019solving}.
\end{proof}
\begin{lemma}[Gradient of the $p$ Lewis weights barrier]\label{lem:gradientoflewisbarrier}
    The gradient of the $p$ Lewis weights barrier $\phi_p$ is given by
    \begin{align*}
        \D\phi_p(x) = \A^\top w_x.
    \end{align*}
\end{lemma}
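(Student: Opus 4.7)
My plan is to derive this gradient formula by combining the envelope theorem (Danskin's theorem) with the Lewis-weight fixed-point identity $\diag(P_x) = w_x$, which is the first-order optimality condition for the variational problem~\eqref{eq:lewisweightdefinition}. Since $\phi_p(x)$ is defined as the maximum over $w$ of a strictly concave smooth objective attained at the unique interior maximizer $w_x$, Danskin's theorem will let me compute $\D\phi_p(x)$ by differentiating the inner expression in $x$ while holding $w$ fixed at $w_x$. This is the essential simplification: the implicit $x$-dependence of $w_x$ contributes nothing to first order, so I only need to differentiate the $-\log\det$ part.

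Next I would perform the resulting calculation in an arbitrary direction $v \in \R^n$. Because $\A = S_x^{-1}A$, a direct computation gives $\D_v \A = -\Sxv \A$, where $\Sxv = \diag(s_{x,v})$ and $s_{x,v}=\A v$. Consequently, since all the weight matrices are diagonal and commute,
\begin{align*}
\D_v(\A^\top W_x^{1-2/p}\A) = -2\,\A^\top W_x^{1-2/p}\Sxv\A.
\end{align*}
Applying Jacobi's formula $\D\log\det(M) = \tr(M^{-1}\D M)$ and rebracketing using the definition of the projection matrix from Definition~\ref{def:projecrtion}, this yields $\D_v\big[\log\det(\A^\top W_x^{1-2/p}\A)\big] = -2\,\tr(\Sxv P_x)$. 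The key step is now to invoke the Lewis-weight fixed-point identity $(P_x)_{ii} = (w_x)_i$ (the optimality condition for~\eqref{eq:lewisweightdefinition}), which collapses the trace to an inner product: $\tr(\Sxv P_x) = \sum_i (s_{x,v})_i (w_x)_i = w_x^\top \A v = (\A^\top w_x)^\top v$. Reconciling with the overall normalization/sign convention under which $\phi_p$ is presented in the statement then gives $\D \phi_p(x) = \A^\top w_x$.

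The only real subtlety, and likely the main obstacle, is the rigorous application of the envelope theorem: one needs the maximizer $w_x$ to be unique, interior, and to depend smoothly (at least $C^1$) on $x$. Strict concavity of the objective in $w$ and the stability of Lewis weights established in~\cite{lee2019solving} supply all of these, so once that is noted the rest is a mechanical matrix-calculus computation plus the fixed-point identity. Everything else in the plan (the chain rule for $\A$, Jacobi's formula, and the trace rearrangement) is routine and bookkeeping-level, so I would not grind through the details of those steps in the final write-up.
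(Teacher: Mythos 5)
Your proposal is correct and follows essentially the same route as the paper: the paper's proof also applies the envelope argument (the term involving $\D(w_x)$ vanishes because $w_x$ maximizes the inner objective) and then identifies the remaining partial derivative via the Lewis-weight optimality/fixed-point identity. The only difference is bookkeeping — the paper silently drops the factor of $2$ and the sign discrepancy between~\eqref{eq:lewisbarrier} and~\eqref{eq:lewisweightdefinition}, exactly the normalization issue you flag and reconcile at the end.
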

\begin{proof}
    Taking directional derivative in direction $v$, using the chain rule
    \begin{align*}
       \D\phi_p(x)[v] &= 2\tr\big((\A^\top \W^{1-2/p}\A)^{-1} (\A^\top\Sxv \W^{1-2/p} \A\big) \\
       &+ \D(w_x)^\top \frac{\partial \big(-\logdet{\A^\top \W^{1-2/p}\A} + (1-2/p)\indic^T w\big)}{\partial w_x},
    \end{align*}
    But because $w_x$ is the maximizer of $\big(-\logdet{\A^\top W^{1-2/p}\A} + (1-2/p)\indic^\top w\big)$, the second term is zero and the proof is complete.
\end{proof}

\subsubsection{Proof of Lemma~\ref{lem:gderivative}}
To differentiate $g_1$ in direction $v$, we differentiate each of the matrices in the product regarding the formula of $g_1$ one by one. Starting from $\A$, we use the first formulation in Equation~\eqref{lem:initialform} and we get $(\trr 1)$ term. Next, differentiating $W$ and $2\Lambdax$ in $\A^\top (\W + 2\Lambdax)\A$ we get
\begin{align*}
    \A^\top \D(\W + 2\Lambdax)(v) \A
    &= \A^\top (3\D\W(v) - 2\D\Lambdax(v))\A\\
    &= 3\A^\top \Wxv \A - 4\A^\top (\Px\odot \Big(\Px \odot (-\Rxv\Px - \Px\Rxv + 2 \Pxv) \Big))\A,
\end{align*}
which is the $(\star 2)$ term. Furthermore, differentiating $\A^\top \Lambdax \G^{-1}\Lambdax \A$ with respect to $\Lambda$, we get $(\star 3)$ and $(\star 4)$ terms. Finally note that the derivative of $G^{-1}$ is:
\begin{align*}
    \D(\G^{-1})(v) = -\G^{-1}\D\G(v)\G^{-1}
    = \G^{-1}\Big(\frac{2}{p}\Wxv + 2\Big(1-\frac{2}{p}\Big(\Px\odot (-\Rxv\Px - \Px\Rxv + 2\Pxv)\Big)\Big)\Big)\G^{-1}.
\end{align*}
Therefore, differentiating the $G^{-1}$ part in $A^T \Lambda G^{-1}\Lambda A$ we get the $(\trr 5)'$, $(\trr 6)$, and $(\trr 7)$ terms.

\subsection{Derivative of }
\begin{lemma}\label{lem:star5terms}
 The derivative of the term $(\star 5)$ defined in Lemma~\ref{lem:gderivative}, ignoring the constants is equal to
\begin{align*}
    D(\trr 5)(z) & \rightarrow\\
    &\A^\top \Sxz \W' \G^{-1}\Lambdax \A &&(\trrr 1)\\
    &+ \A^\top \D(\W')(z)\G^{-1}\Lambdax \A &&(\trrr 2)\\
    &+\A^\top \W' \G^{-1}\D\G(z)\G^{-1}\Lambdax \A &&(\trrr 3)\\
    &+\A^\top \W' \G^{-1}\D\Lambdax(z)\A &&(\trrr 4)\\
    &+\A^\top \W' \G^{-1}\Lambdax \Sxz \A &&(\trrr 5).
\end{align*}
\end{lemma}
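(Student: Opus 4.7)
The plan is to apply the Leibniz (product) rule directly to the five–factor expression
\[
(\trr 5) \;\propto\; \A^\top\,\Wxv\,\G^{-1}\,\Lambdax\,\A,
\]
treating $v$ as fixed while differentiating in direction $z$. Because the lemma's statement announces that constants and signs are ignored, the only work is to enumerate which factor the derivative hits and to identify the resulting tensor with one of $(\trrr 1)$–$(\trrr 5)$.

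First I would recall the routine identities used throughout Section~\ref{sec:secondorderselfconcordance}: (i) $\D\A(z) = -\Sxz\A$, obtained by differentiating $\A = S_x^{-1}\mathrm A$ and noting that $\D(a_i^\top x-b_i)(z) = a_i^\top z$ so that $\D(S_x^{-1})(z) = -S_x^{-1}\diag{s_{x,z}}S_x^{-1}\cdot S_x = -S_x^{-1}\Sxz$; (ii) the standard matrix-inverse identity $\D(\G^{-1})(z) = -\G^{-1}\,\D\G(z)\,\G^{-1}$; and (iii) that $\Wxv = \D\W(v)$ is a smooth matrix field in $x$, so its directional derivative in $z$ is simply $\D(\Wxv)(z)$ (no interchange subtleties since $v$ is held fixed).

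Next I would apply Leibniz to distribute $\D(\cdot)(z)$ across the product. The five resulting summands correspond to hitting each factor in turn:
\begin{itemize}
\item Differentiating the leading $\A^\top$ yields $\A^\top \Sxz\,\Wxv\,\G^{-1}\Lambdax\,\A$, which up to sign is $(\trrr 1)$.
\item Differentiating $\Wxv$ yields $\A^\top \D(\Wxv)(z)\,\G^{-1}\Lambdax\,\A$, which is $(\trrr 2)$.
\item Differentiating $\G^{-1}$ yields $\A^\top \Wxv\,\G^{-1}\D\G(z)\G^{-1}\Lambdax\,\A$, matching $(\trrr 3)$ up to sign.
\item Differentiating $\Lambdax$ yields $\A^\top \Wxv\,\G^{-1}\D\Lambdax(z)\,\A$, matching $(\trrr 4)$.
\item Differentiating the trailing $\A$ yields $\A^\top \Wxv\,\G^{-1}\Lambdax\,\Sxz\,\A$, matching $(\trrr 5)$ up to sign.
\end{itemize}
Dropping the overall prefactor $2(1-2/p)$ and the internal signs (as the statement permits) gives exactly the displayed decomposition.

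There is no real obstacle here: the proof is pure bookkeeping with the product rule, and every ingredient has already been recorded in the preceding lemmas (the identities for $\D\A$, $\D\G$, $\D\Lambdax$, and $\D\W$). The only mild care needed is to remember that $\Wxv$ is itself a first derivative, so $\D(\Wxv)(z)$ is the second directional derivative of $\W$ — an object controlled later by Lemma~\ref{lem:wprimesecondderivative} — but for the purpose of \emph{writing down} the decomposition we simply leave it as $\D(\Wxv)(z)$. This completes the proof.
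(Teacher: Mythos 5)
Your proposal is correct and matches the paper's (implicit) argument: the decomposition is obtained exactly by the product rule applied to $\A^\top\Wxv\G^{-1}\Lambdax\A$ with $v$ fixed, using $\D\A(z)=-\Sxz\A$ and $\D(\G^{-1})(z)=-\G^{-1}\D\G(z)\G^{-1}$, with signs and the $2(1-2/p)$ prefactor dropped as the statement permits. The paper treats this as routine (as with Lemma~\ref{lem:star4}, whose proof is just "ordinary differentiation"), so your bookkeeping is all that is needed.
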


\section{Self-concordance Parameter of $\phi$}
Here we provide a bound for the self-concordance parameter of $\phi$.
\begin{lemma}[Self-concordance parameter of $\phi$]\label{lem:barrierparameter}
For our hybrid barrier $\phi$, the self-concordance parameter is defined as
\begin{align*}
    \nu = \sup_{x\in \mathcal P} \D\phi(x)^\top (\D^2 \phi(x))^{-1} \D\phi(x),
\end{align*}
is bounded by $\alpha_0 n$.
\end{lemma}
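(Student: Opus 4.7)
The plan is to observe that the self-concordance parameter is exactly the squared $g^{-1}$-norm of the gradient: unpacking the definition,
\[
\nu \;=\; \sup_{x\in\mathcal{P}} \D\phi(x)^\top g(x)^{-1} \D\phi(x) \;=\; \sup_{x\in\mathcal P}\|\D\phi(x)\|_{g^{-1}}^2,
\]
since $g = \nabla^2\phi$ by construction. So I just need a uniform bound on $\|\D\phi\|_{g^{-1}}$ over the interior of $\mathcal P$, and then square it.

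Fortunately, Lemma~\ref{lem:philemma} already does exactly this computation for the more general quantity $\|\D(\alpha\phi)\|_{g^{-1}}$, showing it is at most $\alpha\sqrt{n\alpha_0}$. Setting $\alpha=1$ immediately gives $\|\D\phi(x)\|_{g^{-1}}^2 \leq \alpha_0 n$ at every $x\in\mathcal P$, hence $\nu\leq \alpha_0 n$.

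For completeness, the two ingredients driving Lemma~\ref{lem:philemma} are worth recalling, since the only thing I am really doing here is re-invoking them. Write $\phi = \alpha_0(\phi_p^{(\log\det)} + (n/m)\phi_\ell)$ where $\phi_p^{(\log\det)} = \log\det(\A^\top\W^{1-2/p}\A)$. For the Lewis-weight component, Lemma~\ref{lem:lsbarriergradient} gives $\D\phi_p^{(\log\det)}(x) = \A^\top w_x$, and using $\A^\top\W\A \preccurlyeq g_1 \preccurlyeq g/\alpha_0$ together with $\|w_x^{1/2}\|_2^2 = \mathbf{1}^\top w_x = n$ yields $\|\alpha_0\D\phi_p^{(\log\det)}\|_{g^{-1}}^2 \lesssim \alpha_0 n$. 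For the log-barrier component, $\D\phi_\ell = \A^\top\mathbf{1}$ and $\frac{n}{m}\A^\top\A \preccurlyeq g/\alpha_0$, so $\|\alpha_0(n/m)\D\phi_\ell\|_{g^{-1}}^2 \lesssim \alpha_0 n$ as well. Adding the two contributions (after a triangle inequality in the $g^{-1}$-norm) and taking the supremum over $x$ concludes $\nu \leq \alpha_0 n$.

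There is no real obstacle: the heavy lifting was done inside Lemma~\ref{lem:philemma} so that it could be reused both in the drift/bias bounds of Section~\ref{sec:smoothness} and here. The only minor care point is to make sure the $\alpha$ in Lemma~\ref{lem:philemma} corresponds to the coefficient in front of $\phi$ itself (so $\alpha=1$), not the sampling parameter $\alpha$ used for the target density $e^{-\alpha\phi}$.
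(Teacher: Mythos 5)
Your proposal is correct and essentially the same as the paper's argument: the paper's proof of Lemma~\ref{lem:barrierparameter} simply redoes the decomposition $\phi = \alpha_0(\phi_p + \tfrac{n}{m}\phi_\ell)$ with a triangle inequality in the $g^{-1}$-norm, bounding the Lewis-weight part via $\D\phi_p = \A^\top w_x$, $\A^\top\W\A \preccurlyeq \nabla^2\phi_p$, $\mathbf{1}^\top w_x = n$, and the log-barrier part by $m$ — exactly the computation encapsulated in Lemma~\ref{lem:philemma} that you invoke with $\alpha=1$. Your observation about which $\alpha$ is meant is the right care point, and nothing further is needed.
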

\begin{proof}
    Note that for the Lewis weights and log barrier parts of the barrier $\phi = \alpha_0 \phi_p + \alpha_0\frac{n}{m}\phi_\ell$ we can bound the barrier parameter separately as 
    \begin{align*}
        \sqrt{\D\phi(x)^\top (\D^2 \phi(x))^{-1} \D\phi(x)} &\leq \alpha_0\sqrt{\D\phi_p(x)^\top (\D^2 \phi(x))^{-1} \D\phi_p(x)} + \alpha_0\frac{n}{m}\sqrt{\D\phi_\ell(x)^\top (\D^2 \phi(x))^{-1} \D\phi_\ell(x)}\\
        &\leq \sqrt{\alpha_0}\sqrt{\D\phi_p(x)^\top (\D^2 \phi_p(x))^{-1} \D\phi_p(x)} + \sqrt{\alpha_0\frac{n}{m}}\sqrt{\D\phi_\ell(x)^\top (\D^2 \phi_\ell(x))^{-1} \D\phi_\ell(x)}.
    \end{align*}
    Now for the log barrier part, we have
    \begin{align}
        \D\phi_\ell(x)^\top (\D^2 \phi_\ell(x))^{-1} \D\phi_\ell(x) = \indic^T \A (\A\top \A)^{-1}\A^T \indic \leq m,\label{eq:firstt}
    \end{align}
    and for the $p$ Lewis weight barrier part, from Lemmas~\ref{lem:gradientoflewisbarrier} and~\ref{lem:lsbarrierapprox}:
    \begin{align}
        \D\phi_p(x)^\top (\D^2 \phi_p(x))^{-1} \D\phi_p(x) \leq w_x^{\top} \A (\A^\top \W\A)^{-1} \A^T w_x \leq \|w_x\|_2^2 = n.\label{eq:secondd}
    \end{align}
    Combining Equations~\eqref{eq:firstt} and~\eqref{eq:secondd} completes the proof.
\end{proof}
\subsection{Iteration complexity of Gaussian Cooling}
\begin{proof}[Proof of Corollary~\ref{cor:warmstart}]
    First, note that from Lemma~\ref{lem:barrierparameter}, $\phi$ is self-concordant with self-concordant parameter $\nu = \alpha_0 n$.
    The Gaussian cooling schedule introduce by authors in~\cite{lee2018convergence} can be used to relax the requirement of a warm start for our sampling algorithm, hence obtain an efficient volume algorithm. The idea is that sampling from Gibbs distributions $e^{-\alpha\phi(x)}$ with smaller variance or larger $\alpha$ is easier, so one can start from sampling a large temperature $\alpha$ and gradually decrease it. The Gaussian cooling of~\cite{lee2018convergence} evolves in phases where in the $i$th phase it generates $k_i$ approximate samples from the density proportional to $e^{-\phi(x)/\sigma_i^2}$ inside the polytope, where
    \begin{align*}
        k_i &= \Theta(\frac{\sqrt n}{\epsilon^2}\log(\frac{\sqrt n}{\epsilon})) && \text{if} \ \sigma_i^2 \leq \frac{\nu}{n}\\
        k_i &= \Theta((\frac{\sqrt \nu}{\sigma} + 1)\epsilon^2\log(\frac{n}{\epsilon}), && \text{O.W.}
    \end{align*}
    and the update rule for $\sigma_i$ is 
    \begin{align*}
        &\sigma_{i+1}^2 = \sigma_i^2 (1 + \frac{1}{\sqrt n}) && \text{if} \ \sigma_i^2 \leq \frac{\nu}{n}\\
        &\sigma_i^2 = (1 + \min\{\frac{\sigma_i}{\sqrt \nu}, \frac{1}{2}\}). &&\text{O.W.}
    \end{align*}
    starting from $\sigma_0^2 = \Theta(\epsilon^2 n^{-3}\log^{-3}(n/\epsilon))$ until $\sigma$ goes above $\Theta(\frac{\nu}{\epsilon} \log(\frac{n\nu}{\epsilon}))$. Note that the temperature parameter is given by $\alpha = 1/\sigma^2$.
    Now at each phase $i$ going from temperature $\sigma_i^2$ to $\sigma_{i+1}^2$ we have a an approximate samples from $e^{-\phi(x)/\sigma_i^2}$ which can be used as warm starts for sampling from $e^{-\phi(x)/\sigma_{i+1}^2}$, specially as $k_{i+1} \leq k_i$. Hence, our main Theorem~\ref{thm:mixing} implies that the mixing time of sampling at each phase is of order 
    \begin{align*}
    \tilde O\big(\min\{ \alpha^{-1}n^{2/3} + \alpha^{-1/3}n^{5/9}m^{1/9} + n^{1/3}m^{1/6}, \  m^{1/3}n^{4/3} \}\big) = \tilde O(\alpha^{-1}n^{2/3} + \alpha^{-1/3}n^{5/9}m^{1/9} + n^{1/3}m^{1/6}).    
    \end{align*}
    Now in the first case when $\sigma_i^2 \leq \frac{\nu}{n} = \alpha_0$, we have $\alpha \geq \frac{1}{\alpha_0}$. On the other hand, due to the update rule of $\sigma_i$ in this case, it takes $\sqrt n$ phase to double $\sigma$ and in each phase we take samples $k_i = \tilde{\Theta}(\frac{\sqrt n}{\epsilon^2})$. Hence, the total number of RHMC steps to double $\sigma$ in this case is bounded by
    \begin{align*}
        \tilde O\big((\alpha^{-1}n^{2/3} + \alpha^{-1/3}n^{5/9}m^{1/9} + n^{1/3}m^{1/6})\times \frac{\sqrt n}{\epsilon^2} \times \sqrt n\big)
        &=
        \tilde O\big((\alpha_0 n^{2/3} + \alpha_0^{1/3}n^{5/9}m^{1/9} + n^{1/3}m^{1/6}) \frac{n}{\epsilon^2}\big)
        \\
        &= \tilde O(\frac{n^{4/3}m^{1/3}}{\epsilon^2}).
    \end{align*}
    In the other case when $\sigma_i^2 \geq \frac{\nu}{n} = \alpha_0$, we have $\alpha \leq \frac{1}{\alpha_0}$. Then, the total RHMC steps to double $\sigma$ in this case can be upper bounded after substituting $\nu = n\alpha_0$ as
    \begin{align*}
        &\tilde O\big((\alpha^{-1}n^{2/3} + \alpha^{-1/3}n^{5/9}m^{1/9} + n^{1/3}m^{1/6}) \times \frac{1}{\epsilon^2}(\frac{\sqrt \nu}{\sigma} + 1) \times (\frac{\sqrt \nu}{\sigma} + 1)\big) = \tilde O(\frac{n^{4/3}m^{1/3}}{\epsilon^2}).
    \end{align*}
    This means we can calculate the integral of $e^{-\alpha \phi(x)}$ for any $\alpha$ using $\tilde O(\frac{n^{4/3}m^{1/3}}{\epsilon^2})$ steps of RHMC up to $1 \pm \epsilon$. Moreover, if we just want to sample from $e^{-\alpha \phi(x)}$ in the polytope, we don't require to take $k_i$ number of samples at phase $i$ but only need one sample, so the $\epsilon^2$ in the complexity is omitted and we end up with the complexity $\tilde O(n^{4/3}m^{1/3})$ for sampling without warm start.
\end{proof}

\end{document}